
\documentclass[12pt,beltcrest]{ociamthesis} 
\pdfoutput=1

\usepackage{mathtools} 
\usepackage{amssymb} 
\usepackage{bbold} 
\usepackage{amsthm} 
\usepackage{stmaryrd} 

\usepackage{relsize} 
\usepackage{microtype} 
\usepackage{csquotes} 
\usepackage{enumerate}
\usepackage{cancel}


\usepackage{graphicx} 
\usepackage[usenames,dvipsnames]{xcolor} 
\usepackage{tikz} 
\usepackage{circuitikz} 
\usetikzlibrary{
	arrows,
	shapes,
	decorations,
	intersections,
	backgrounds,
	positioning,
	circuits.ee.IEC
	}



	
		\newcounter{theorem_c} 
		\numberwithin{theorem_c}{chapter} 
		\numberwithin{equation}{chapter} 
		\newcommand\numberthis{\addtocounter{equation}{1}\tag{\theequation}} 

		\theoremstyle{plain} 
		\newtheorem{theorem}[theorem_c]{Theorem}
		
		\newtheorem{lemma}[theorem_c]{Lemma}
		\newtheorem{corollary}[theorem_c]{Corollary}
		
		\newtheoremstyle{exampstyle}
		  {2mm} 
		  {2mm} 
		  {\itshape} 
		  {} 
		  {\bfseries} 
		  {.} 
		  {.5em} 
		  {} 

		\theoremstyle{exampstyle}
		\newtheorem{definition}[theorem_c]{Definition}
		\newtheorem{example}[theorem_c]{Example}
		\newtheorem{remark}[theorem_c]{Remark}

	\newcommand{\emptyArg}{\,\underline{\hspace{6px}}\,} 
	\newcommand{\inlineQuote}[1]{\textquotedblleft #1\textquotedblright} 
	\newcommand{\goodchi}{\protect\raisebox{2pt}{$\chi$}} 
	\newcommand{\goodrho}{\protect\raisebox{2pt}{$\rho$}} 
	\newcommand{\goodvdots}{\protect\raisebox{7pt}{\vdots}} 
	\newcommand{\verteq}{\rotatebox{90}{$\,=$}}

	\newcommand{\Powerset}[1]{\mathcal{P}(#1)} 
	\newcommand{\naturals}{\mathbb{N}} 
	\newcommand{\integers}{\mathbb{Z}} 
	\newcommand{\circleGroup}{\mathbb{T}} 
	\newcommand{\torusGroup}[1]{\circleGroup^{#1}} 
	\newcommand{\rationals}{\mathbb{Q}} 
	\newcommand{\reals}{\mathbb{R}} 
	\newcommand{\complexs}{\mathbb{C}} 
	\newcommand{\integersMod}[1]{\mathbb{Z}_{#1}} 
	\newcommand{\splitComplexs}{\mathbb{C}[\sqrt{1}]} 
	\newcommand{\finiteField}[1]{\mathbb{F}_{\!\!#1}} 
	\newcommand{\modclass}[2]{#1 \; (\text{mod } #2)} 
	\newcommand{\restrict}[2]{\left. #1 \right\vert_{#2}} 
	\newcommand{\domain}[1]{\operatorname{dom}#1} 
	\newcommand{\support}[1]{\operatorname{supp}#1} 
	\newcommand{\inject}{\hookrightarrow} 
	\newcommand{\nonstd}[1]{\,^\star #1}
	\newcommand{\starNaturals}{\nonstd{\naturals}} 
	\newcommand{\starIntegers}{\nonstd{\integers}} 
	\newcommand{\starRationals}{\nonstd{\rationals}} 
	\newcommand{\starComplexs}{\nonstd{\complexs}} 
	\newcommand{\starReals}{\nonstd{\reals}} 
	
	\newcommand{\normalSubgroup}{\trianglelefteq}

	\newcommand{\iffdef}{\stackrel{def}{\iff}} 
	\newcommand{\suchthat}[2]{\left\{#1 \: \middle\vert \: #2\right\}} 


	\newcommand{\stdpartSym}{\operatorname{st}}
	\newcommand{\stdpart}[1]{\stdpartSym(#1)}
	\newcommand{\indexSet}[1]{\{1,...,\dim{#1}\}}
	\newcommand{\truncate}[1]{\bar{#1}}
	\newcommand{\liftSym}[1]{\operatorname{lift}_{#1}}
	\newcommand{\lift}[2]{\liftSym{#2}[#1]}

	\newcommand{\starIntegersMod}[1]{{\nonstd{\integersMod{#1}}}}
	\newcommand{\starIntegersModPow}[2]{{\nonstd{\integersMod{#1}^{#2}}}}


		\newcommand{\ket}[1]{\vert #1 \rangle} 
		\newcommand{\bra}[1]{\langle #1 \vert} 
		\newcommand{\braket}[2]{\langle #1 \vert #2 \rangle} 
		\newcommand{\innerprod}[2]{\left( #1 , #2 \right)}
		\newcommand{\Trace}[1]{\operatorname{Tr}\,#1} 
		\newcommand{\Norm}[2]{|| #2 ||_{#1}} 
		\newcommand{\decohSym}{\operatorname{dec}} 
		\newcommand{\decoh}[1]{\decohSym_{#1}} 
		\newcommand{\CPMdoubled}[1]{\textbf{double}\left[#1\right]}

		\newcommand{\im}[1]{\operatorname{im}#1} 
		\newcommand{\Annihil}[1]{\operatorname{Ann}[#1]} 

		\newcommand{\Bounded}[1]{\operatorname{B}\left[#1\right]} 
		\newcommand{\LtwoSym}{\operatorname{L}^2} 
		\newcommand{\Ltwo}[1]{\LtwoSym[#1]} 
		\newcommand{\ltwoSym}{\ell^2} 
		\newcommand{\ltwo}[1]{\ltwoSym[#1]} 

		\newcommand{\SpaceH}{\mathcal{H}} 
		\newcommand{\SpaceG}{\mathcal{G}}
		\newcommand{\SpaceK}{\mathcal{K}}

		\newcommand{\GroupG}{\mathbb{G}}
		\newcommand{\GroupH}{\mathbb{H}}


		\newcommand{\isom}{\cong} 
		\newcommand{\id}[1]{id_{#1}} 

		\newcommand{\tensor}{\otimes} 
		\newcommand{\tensorUnit}{I} 

		\newcommand{\Hom}[3]{\operatorname{Hom}_{\,#1}\left[#2,#3\right]} 
		\newcommand{\Automs}[2]{\operatorname{Aut}_{\,#1}\left[#2\right]} 
		\newcommand{\UnitaryOps}[1]{U\left[#1\right]} 
		\newcommand{\Isoms}[3]{\operatorname{Iso}_{\,#1}\left[#2,#3\right]} 

		\newcommand{\SetCategory}{\operatorname{Set}} 

		\newcommand{\GrpCategory}{\operatorname{Grp}} 
		\newcommand{\AbCategory}{\operatorname{Ab}} 
		\newcommand{\fAbGrpCategory}{\operatorname{fAbGrp}} 
		\newcommand{\CMonCategory}{\operatorname{CMon}} 
		\newcommand{\RMatCategory}[1]{#1\operatorname{-Mat}} 
		\newcommand{\fRfreeModCategory}[1]{\RMatCategory{#1}} 
		\newcommand{\VectCategory}[1]{#1\operatorname{-Vect}} 
		\newcommand{\HilbCategory}{\operatorname{Hilb}} 
		\newcommand{\sHilbCategory}{\operatorname{sHilb}} 
		\newcommand{\fHilbCategory}{\operatorname{fHilb}} 
		\newcommand{\fdHilbCategory}{\fHilbCategory} 
		\newcommand{\RelCategory}{\operatorname{Rel}} 
		\newcommand{\fRelCategory}{\operatorname{fRel}} 
		\newcommand{\fPFunCategory}{\operatorname{fPFun}} 
		\newcommand{\fSetCategory}{\operatorname{fSet}} 
		\newcommand{\fStochCategory}{\operatorname{fStoch}} 
		\newcommand{\starHilbCategory}{^\star\!\HilbCategory} 
		\newcommand{\starHilbCategoryNearStd}{\starHilbCategory^{(std)}}

		\newcommand{\CategoryC}{\mathcal{C}}
		\newcommand{\CategoryD}{\mathcal{D}}

		\newcommand{\obj}[1]{\operatorname{obj} \, #1} 
		\newcommand{\OpCategory}[1]{#1^{\operatorname{op}}} 
		\newcommand{\CPMCategory}[1]{\operatorname{CPM}[#1]} 
		\newcommand{\CPStarCategory}[1]{\operatorname{CP}^\ast[#1]} 
		\newcommand{\CausalSubCategory}[1]{#1_{\hbox{\input{symbols/smallTraceSym.tex}}\!}}
		\newcommand{\DoubledCategory}[1]{\operatorname{Double}\left[#1\right]}
		\newcommand{\KaroubiEnvelope}[1]{\operatorname{Split}\left[#1\right]} 
		\newcommand{\DaggerKaroubiEnvelope}[1]{\operatorname{Split}^{\dagger}\left[#1\right]} 
		\newcommand{\CoherentGroupsCategory}[1]{\operatorname{QG}\left[#1\right]} 
		\newcommand{\WpAbQuantumGroupsCategory}[1]{\operatorname{wpAbQG}\left[#1\right]} 
		\newcommand{\underlyingGroup}[1]{\left\llbracket#1\right\rrbracket}
		\newcommand{\dualGroupWRTMonoid}[2]{#1^{\wedge_{#2}}}
		\newcommand{\CoherentGroupRepMonad}[1]{\emptyArg \otimes #1}
		\newcommand{\RepCategory}[1]{\operatorname{Rep}\left[#1\right]}
		\newcommand{\UnitaryRepCategory}[1]{\operatorname{Rep}^\dagger\left[#1\right]}
		\newcommand{\classicalSubcategory}[1]{#1_{K}} 

	\newcommand{\classicalStates}[1]{K(#1)} 
	\newcommand{\phaseGroup}[1]{P(#1)} 
	\newcommand{\phasegate}[1]{P_{#1}}
	
	\newcommand{\CPstates}[1]{\mathcal{L}[#1]}







	\newcommand{\sheafOfEventsSym}{\mathcal{E}} 
	\newcommand{\sheafOfEvents}[1]{\sheafOfEventsSym[#1]} 
	\newcommand{\restrictionMap}[2]{\operatorname{res}^{#1}_{#2}} 
	\newcommand{\distributionFunctorSym}[1]{\mathcal{D}_{#1}} 
	\newcommand{\distributionFunctor}[2]{\distributionFunctorSym{#1}[#2]} 
	\newcommand{\presheafOfDistributionsSym}[1]{\distributionFunctorSym{#1}\sheafOfEventsSym} 
	\newcommand{\presheafOfDistributions}[2]{\presheafOfDistributionsSym{#1}[#2]} 
	\newcommand{\supportSubpresheafSym}{\mathbb{S}} 
	\newcommand{\supportSubpresheaf}[1]{\supportSubpresheafSym[#1]} 


	\newcommand{\Xcolour}{Red}
	
	\newcommand{\Zcolour}{YellowGreen}

	\newcommand{\Xaltcolour}{Purple}
	
	
	\newcommand{\Zaltcolour}{Cyan}
	

	\newcommand{\Dcolour}{black!80}


	\newcommand{\Xbwcolour}{black!80}

	\newcommand{\Zbwcolour}{white}

	\newcommand{\Ybwcolour}{black!15}

	\newcommand{\Wbwcolour}{black!50}

	\newcommand{\hbox{\input{symbols/cotraceSym.tex}}\!}{\hbox{\input{symbols/cotraceSym.tex}}\!} 
	\newcommand{\trace}[1]{\hbox{\input{symbols/traceSym.tex}}\!_{#1}} 

	\newcommand{\hbox{\input{symbols/mapSym.tex}}\!\!}{\hbox{\input{symbols/mapSym.tex}}\!\!} 
	\newcommand{\hbox{\input{symbols/mapconjSym.tex}}\!\!}{\hbox{\input{symbols/mapconjSym.tex}}\!\!} 


	\tikzset{
	  rectangle with rounded corners north west/.initial=4pt,
	  rectangle with rounded corners south west/.initial=4pt,
	  rectangle with rounded corners north east/.initial=4pt,
	  rectangle with rounded corners south east/.initial=4pt,
	}
	\makeatletter
	\pgfdeclareshape{rectangle with rounded corners}{
	  \inheritsavedanchors[from=rectangle] 
	  \inheritanchorborder[from=rectangle]
	  \inheritanchor[from=rectangle]{center}
	  \inheritanchor[from=rectangle]{north}
	  \inheritanchor[from=rectangle]{south}
	  \inheritanchor[from=rectangle]{west}
	  \inheritanchor[from=rectangle]{east}
	  \inheritanchor[from=rectangle]{north east}
	  \inheritanchor[from=rectangle]{south east}
	  \inheritanchor[from=rectangle]{north west}
	  \inheritanchor[from=rectangle]{south west}
	  \backgroundpath{
	    \southwest \pgf@xa=\pgf@x \pgf@ya=\pgf@y
	    \northeast \pgf@xb=\pgf@x \pgf@yb=\pgf@y
	    \pgfkeysgetvalue{/tikz/rectangle with rounded corners north west}{\pgf@rectc}
	    \pgfsetcornersarced{\pgfpoint{\pgf@rectc}{\pgf@rectc}}
	    \pgfpathmoveto{\pgfpoint{\pgf@xa}{\pgf@ya}}
	    \pgfpathlineto{\pgfpoint{\pgf@xa}{\pgf@yb}}
	    \pgfkeysgetvalue{/tikz/rectangle with rounded corners north east}{\pgf@rectc}
	    \pgfsetcornersarced{\pgfpoint{\pgf@rectc}{\pgf@rectc}}
	    \pgfpathlineto{\pgfpoint{\pgf@xb}{\pgf@yb}}
	    \pgfkeysgetvalue{/tikz/rectangle with rounded corners south east}{\pgf@rectc}
	    \pgfsetcornersarced{\pgfpoint{\pgf@rectc}{\pgf@rectc}}
	    \pgfpathlineto{\pgfpoint{\pgf@xb}{\pgf@ya}}
	    \pgfkeysgetvalue{/tikz/rectangle with rounded corners south west}{\pgf@rectc}
	    \pgfsetcornersarced{\pgfpoint{\pgf@rectc}{\pgf@rectc}}
	    \pgfpathclose
	 }
	}
	\makeatother

	\tikzset{->-/.style={decoration={markings,mark=at position #1 with {\arrow{>}}},postaction={decorate}}}
	\tikzset{-<-/.style={decoration={markings,mark=at position #1 with {\arrow{<}}},postaction={decorate}}}

	\tikzstyle{every picture}=[baseline=-0.25em,scale=0.5]
	\pgfdeclarelayer{edgelayer}
	\pgfdeclarelayer{nodelayer}
	\pgfsetlayers{edgelayer,nodelayer,main}

	\tikzstyle{box} = [draw,shape=rectangle,inner sep=2pt,minimum height=6mm,minimum width=6mm,fill=white] 
	\tikzstyle{boxlarge} = [draw,shape=rectangle,inner sep=2pt,minimum height=1.5cm,minimum width=8mm,fill=white] 
	\tikzstyle{boxLarge} = [draw,shape=rectangle,inner sep=2pt,minimum height=2cm,minimum width=10mm,fill=white] 
	\tikzstyle{boxsmall} = [draw,shape=rectangle,inner sep=2pt,minimum height=3mm,minimum width=3mm,fill=white] 
	\tikzstyle{dot} = [inner sep=0mm,minimum width=3mm,minimum height=3mm,draw,shape=circle,text depth=-0.1mm]
	\tikzstyle{Zbwdot} = [dot, fill=\Zbwcolour]
	\tikzstyle{Xbwdot} = [dot, fill=\Xbwcolour]
	\tikzstyle{Ybwdot} = [dot, fill=\Ybwcolour]
	\tikzstyle{Wbwdot} = [dot, fill=\Wbwcolour]
	\tikzstyle{antipode} = [boxsmall] 

	\tikzstyle{state} = [draw, rectangle with rounded corners,
	  rectangle with rounded corners north west=8pt,
	  rectangle with rounded corners south west=8pt,
	  rectangle with rounded corners north east=0pt,
	  rectangle with rounded corners south east=0pt,
	,inner sep=2pt,minimum height=6mm,minimum width=6mm,fill=white]
	\tikzstyle{statelarge} = [draw, rectangle with rounded corners,
	  rectangle with rounded corners north west=8pt,
	  rectangle with rounded corners south west=8pt,
	  rectangle with rounded corners north east=0pt,
	  rectangle with rounded corners south east=0pt,
	,inner sep=2pt,minimum height=1.5cm,minimum width=8mm,fill=white]
	\tikzstyle{stateLarge} = [draw, rectangle with rounded corners,
	  rectangle with rounded corners north west=8pt,
	  rectangle with rounded corners south west=8pt,
	  rectangle with rounded corners north east=0pt,
	  rectangle with rounded corners south east=0pt,
	,inner sep=2pt,minimum height=2cm,minimum width=8mm,fill=white]
	\tikzstyle{effect} = [draw, rectangle with rounded corners,
	  rectangle with rounded corners north west=0pt,
	  rectangle with rounded corners south west=0pt,
	  rectangle with rounded corners north east=8pt,
	  rectangle with rounded corners south east=8pt,
	,inner sep=2pt,minimum height=6mm,minimum width=6mm,fill=white]
	\tikzstyle{scalar}=[diamond,draw,inner sep=1pt,font=\small,fill=white]

	\tikzstyle{cdnode}=[fill=white]
	\tikzstyle{labelnode}=[fill=white]
	\tikzstyle{tightlabelnode}=[fill=white,inner sep = 0.1mm]
	\tikzstyle{none}=[inner sep=0pt]
	\tikzstyle{whiteline}=[-, line width=4pt, draw=white]

	\tikzstyle{trace}=[circuit ee IEC,thick,ground,scale=2.5]
	\tikzstyle{cotrace}=[circuit ee IEC,thick,ground,rotate=180,scale=2.5]
	\tikzstyle{upground}=[circuit ee IEC,thick,ground,rotate=90,scale=2.5]
	\tikzstyle{downground}=[circuit ee IEC,thick,ground,rotate=-90,scale=2.5]

	\tikzstyle{doubled} = [line width=1.8pt] 
	
	\tikzstyle{empty diagram}=[draw=gray!40!white,dashed,shape=rectangle,minimum width=1cm,minimum height=1cm]

\usepackage{silence}
\WarningFilter{hyperref}{destination with the same identifier}

\title{Categorical Quantum Dynamics}
\author{Stefano Gogioso}
\college{Trinity College}  	  
\degree{Doctor of Philosophy} 
\degreedate{Michaelmas 2016}  

\begin{document}

\baselineskip=18pt plus1pt

\setcounter{secnumdepth}{3}
\setcounter{tocdepth}{2} 	 

\maketitle

\begin{center}
Per Aspera Ad Astra
\end{center}

\begin{romanpages}           
\tableofcontents             
\end{romanpages}             







\chapter{Introduction} 
\label{section_Introduction}

\vspace{-0.7cm}
\section{Summary of this work}

Since their original introduction, strongly complementary observables have been a fundamental ingredient of the ZX calculus, one of the most successful fragments of Categorical Quantum Mechanics (CQM). In this thesis, we show that strong complementarity plays a vastly greater role in quantum theory.

Firstly, we use strong complementarity to introduce dynamics and symmetries within the framework of CQM, which we also extend to infinite-dimensional separable Hilbert spaces: these were long-missing features, which open the way to a wealth of new applications. The coherent treatment presented in this work also provides a variety of novel insights into the dynamics and symmetries of quantum systems: examples include the extremely simple characterisation of symmetry-observable duality, the connection of strong complementarity with the Weyl Canonical Commutation Relations, the generalisations of Feynman's clock construction, the existence of time observables and the emergence of quantum clocks.

Secondly, we show that strong complementarity is a key resource for quantum algorithms and protocols. We provide the first fully diagrammatic, theory-independent proof of correctness for the quantum algorithm solving the Hidden Subgroup Problem, and show that strong complementarity is the feature providing the quantum advantage. In quantum foundations, we use strong complementarity to derive the exact conditions relating non-locality to the structure of phase groups, within the context of Mermin-type non-locality arguments. Our non-locality results find further application to quantum cryptography, where we use them to define a quantum-classical secret sharing scheme with provable device-independent security guarantees.

All in all, we argue that strong complementarity is a truly powerful and versatile building block for quantum theory and its applications, and one that should draw a lot more attention in the future.

\newpage
\section{Background literature}

\subsection{Categorical Quantum Mechanics}

This work takes its roots in the framework of \textbf{categorical quantum mechanics} (CQM) \cite{Abramsky2009,Coecke2015}, which its extends and refines in a number of aspects. The general motivation behind the application of category-theoretic tools lies in the intuition that the features distinguishing quantum theory from classical physics can be understood in terms of the way quantum processes compose, sequentially and in parallel (we refer to this as an \textbf{operational}, or \textbf{process-theoretic}, description of quantum theory). The framework of symmetric monoidal categories is particularly suited for operational descriptions, and comes with a natural diagrammatic formalism \cite{Joyal1991,Baez2010,Coecke2015,Kissinger2012,Selinger2007,Coecke2012d}, combining the rigour of the category theory with the versatility of graphical manipulation. The diagrammatic formalism has quickly become a major selling point of CQM, and receives plenty of interest on its own \cite{Coecke2011,Backens2014,Hadzihasanovic2015,Kissinger2012,Kissinger2015}.

The operational and process-theoretic description of quantum theory given by CQM \cite{Abramsky2012,Coecke2015} is in a certain sense antipodal to the traditional Hilbert space formulation, which heavily relies on explicit complex-linear structure. Several intermediate approaches exist, such as quantum circuits and generalised probabilistic theories \cite{Hardy2001,Barrett2007}, and have been applied to a diversity of topics in quantum information and foundations. The framework of \textbf{operational probabilistic theories} (OPTs) is perhaps the closest, in spirit, to CQM: it was developed in \cite{Chiribella2010,Chiribella2011,Chiribella2015} with the aim of obtaining an informational derivation of quantum theory, it comes with a graphical calculus and it has a strong operational and process-theoretic flavour to it; work to connect OPTs to CQM is currently being undertaken \cite{Tull2016,Gogioso2016f}. Despite the diagrammatic and operational similarities, the approach to quantum theory of OPTs is quite different from that of CQM: the former relies on explicit probabilistic structure, and is mainly axiomatic in nature, while the latter focuses on categorical structures (with no explicit summations or probabilities), and is mostly constructive. The processes which OPTs are concerned with are physical processes\footnote{Such as quantum channels, aka completely positive trace-preserving maps.}, or processes appearing in their convex decompositions\footnote{Such as sub-normalised pure states, appearing in the convex decomposition of density matrices.}. On the contrary, CQM is concerned with the study of arbitrary processes with interesting categorical properties, which are used as building blocks of physical processes. In a nutshell, OPTs take physical processes and impose axioms on the way they can be probabilistically decomposed, while CQM studies how certain building blocks compose and interact to form larger processes of physical interest.

\subsection{Dagger compact structure}

The categorical environment of choice for pure-state quantum theory in CQM is that of dagger compact symmetric monoidal categories: the dagger structure is the categorical abstraction of operator adjunction\footnote{And of bra/ket duality as a special case.} in the Hilbert space formalism, while the compact structure corresponds to operator-state duality. Within this environment, the most common building blocks used by CQM are certainly $\dagger$-Frobenius algebras, which provide the coherent/pure versions of the classical operations of copy, delete and match, and are a key component in the treatment of classicality \cite{Coecke2008,Coecke2010}. Special commutative $\dagger$-Frobenius algebras correspond to orthonormal bases \cite{Coecke2013b}, or equivalently to non-degenerate observables in the Hilbert space formalism. 

The $\dagger$-compact structure can furthermore be used to provide a categorical environment for the operational treatment of mixed-state quantum theory: the CPM construction \cite{Selinger2007,Coecke2012env} represents quantum channels as a diagrammatic version of their Kraus decomposition in the Hilbert space formalism\footnote{A similarity which becomes even more apparent in the more general CP construction \cite{Coecke2012d}.}, and a partial trace naturally arises. In the CPM formalism, $\dagger$-Frobenius algebras can be used to define decoherence maps for all orthonormal bases, and the ensuing mixed quantum-classical formalism can be given categorical dignity via the CP* construction \cite{Coecke2014a,Cunningham2015}, which connects the operational picture to the study of C*-algebras and algebraic quantum theory.

\subsection{Some application of CQM}

Throughout the years, the categorical and diagrammatic formalisms have yielded many novel characterisations of quantum structures, with applications to quantum foundations, information and computation. In quantum foundations, the CQM framework has been applied to study features of causality \cite{Coecke2013b,Coecke2015,Coecke2016} and non-locality \cite{Coecke2012c,Gogioso2015c}, both operationally and in connection with the sheaf-theoretic framework for contextuality of \cite{Abramsky2011,Abramsky2012a,Abramsky2015} (which is also applicable to OPTs \cite{Chiribella2016}). In quantum information and computation, the framework has been applied to the study of quantum algorithms and protocols \cite{Coecke2011,Vicary2012a,Zeng2014,Vicary2012,Verdon2016,Coecke2016a,Zamdzhiev2012}, measurement-based and cluster-state quantum computing \cite{Duncan2015,Horsman2011}, complementarity \cite{Musto2015,Duncan2016,Zeng2014}, and the information theoretic characterisation of quantum theory \cite{Heunen2016}. Relational models for non-deterministic classical computation have also been explored using tools from CQM, both as toy models for quantum theory \cite{Pavlovic2009,Abramsky2013,Gogioso2015b,Marsden,Coecke2016,Coecke2012a,Backens2015} and in their own right as models of computation \cite{Pavlovic2009,Bar2014}.

\subsection{The ZX calculus}

One of the most successful and intriguing fragments of CQM is the \textbf{ZX calculus}. First introduced in \cite{Coecke2011}, the ZX calculus is a diagrammatic graphical calculus for multi-qubit systems, designed to reason formally about quantum algorithms and protocols, as well as to derive rigorous results on quantum foundations and information. Its simple but rigorous presentation makes the ZX calculus ideal for diagrammatic reasoning and automated proof-checking \cite{Kissinger2012,Kissinger2015}. The ZX calculus has been shown to be \textit{universal} and \textit{sound} for pure qubit quantum mechanics \cite{Coecke2011}, as well as \textit{complete} for pure qubit \textit{stabiliser} quantum mechanics \cite{Backens2014,Backens2015}. However, the ZX calculus is known to be incomplete for pure qubit quantum mechanics \cite{DeWitt2014}.

Since its introduction, the ZX calculus has found plenty of applications in quantum information: it was used to describe the fundamental gates of quantum circuits \cite{Coecke2011,Duncan2015}, to formalise a number of quantum protocols \cite{Coecke2011,Zamdzhiev2012,Coecke2016a}, to describe the logical structure of information flow in topological cluster-state quantum computing \cite{Horsman2011}, and to provide a categorical model of Spekkens's toy theory \cite{Coecke2012a,Backens2015}.

\subsection{Applications of strong complementarity}

In several applications of the ZX calculus, key steps are performed by a specific set of rules relating the Z and X observables, namely the \textbf{bialgebra law} and \textbf{coherence laws}. It is possible to show that classical structures satisfying these rules also satisfy the Hopf law \cite{Duncan2016,Kissinger2012}, and that the associated observables are \textbf{complementary} (or \textbf{mutually unbiased}): as a consequence, the property defined by the the bialgebra and coherence laws is often referred to as \textbf{strong complementarity}. 

Complementarity plays an important role in the correctness and security of certain quantum protocols, but its classification in arbitrary dimensions has proven to be remarkably tricky; strongly complementary pairs of non-degenerate observables on finite-dimensional Hilbert spaces, on the other hand, are completely classified by finite abelian groups \cite{Kissinger2012}, and hence much easier to work with. 

A number of applications of CQM rely on strong complementarity as their active ingredient: most relevant are its appearance as an abstract version of the quantum Fourier transform in group-theoretic quantum algorithms \cite{Vicary2012a,Zeng2014,Zeng2015,Gogioso2015d}, and the role it plays in connecting Mermin-type non-locality scenarios to the structure of phase groups in abstract process theories \cite{Coecke2012c,Coecke2010a,Gogioso2015}.

\newpage
\section{Brief synopsis of this work}

\subsection{A primer of CQM}
In Chapter \ref{chapter_CQM}, we provide a primer of the framework of Categorical Quantum Mechanics. We begin by introducing symmetric monoidal categories as a general model for process theories, and we characterise dagger compact structure in relation to inner products and operator-state duality, thus capturing what we believe to be the essential operational and structural features of pure-state quantum theory. 

In order to model mixed-state behaviour, we proceed to present environment structure and the CPM construction. We introduce $\dagger$-Frobenius algebras in relation to observables, classicality and the coherent manipulation of data\footnote{
	When talking about \inlineQuote{coherent data}, or the \inlineQuote{coherent encoding/manipulation} of classical data, we will be talking about classical data which has been encoded into quantum systems via orthonormal bases, and is manipulated using $\complexs$-linear extensions of classical deterministic functions. We will argue in Chapter \ref{chapter_CQM} that coherent data plays a huge role in quantum computing and in the foundations of quantum theory: it is used to construct oracles, entangled states, as well as many other building blocks used in the design of quantum protocols. Coherent data enjoys all those features of $\complexs$-linearity which coalesce to provide quantum advantage and non-classical behaviour, while at the same time allowing for a good deal of classical intuition to play a positive role in designing quantum algorithms. 
}, and we use them to construct measurements, preparations and decoherence maps in the context of the CP* construction. Finally, we provide a recap of the sheaf-theoretic framework for non-locality and contextuality, and connect it to our portrayal of process theories.

\subsection{Coherent dynamics and symmetries}
In Chapter \ref{chapter_CQD} we use strong complementarity to introduce symmetries and dynamics within the framework of CQM. Our approach relies on a coherent treatment of group theory and representation theory, similar in spirit to the way in which $\dagger$-Frobenius algebras provide a coherent treatment of classical data manipulation. 

We define a new notion of \inlineQuote{coherent group}, based on strongly complementary pairs of quasi-special $\dagger$-Frobenius algebras and modelling an abstract coherent counterpart of classical groups. By analogy with the relatable case of finite periodic lattices, we prove general results about symmetry-observable duality, the Weyl canonical commutation relations and a weak form of the uncertainty principle, and we establish that coherent groups show the same fundamental structural and operational features that would be expected of position/momentum pairs. 

We consider representations of coherent groups as a model of coherent symmetric systems, in analogy with the traditional identification of classical symmetric systems with classical group representations, and we characterise them as the Eilenberg-Moore algebras of a certain monad, with equivariant maps arising as the Eilenberg-Moore morphisms between them. 

We present a new framework for the treatment of infinite-dimensional separable Hilbert spaces in CQM, and explicitly construct a coherent group corresponding to the position/momentum pair for the textbook case of wavefunctions on a 1-dimensional continuous space with periodic boundary conditions (i.e. with translation group isomorphic to the circle group $S^1$). 

Finally, we apply all the techniques developed in the remainder of the Chapter to the study of quantum dynamics, with an explicit treatment of continuous periodic, discrete and discrete periodic dynamics of finite-dimensional and separable quantum systems. We talk about quantum dynamics, Hamiltonians and Schr\"{o}dinger's Equation. We use the symmetry/observable duality properties of our coherent framework to provide simple diagrammatic proofs of Stone's Theorem on 1-parameter unitary groups and von Neumann's Mean Ergodic Theorem, and we give an abstract proof of correctness for the the Feynman clock construction. We conclude the Section by turning our attention to the description of synchronised dynamical systems, the existence of time observables, and the emergence of quantum clocks.

\subsection{Strong complementarity in quantum algorithms}

In Chapter \ref{chapter_algos} we move away from quantum foundations, and we present two applications of strong complementarity to quantum algorithms. 

Firstly, we put to work the connection between strong complementarity and the quantum Fourier transform in a fully diagrammatic, theory-independent proof of correctness for the quantum subroutine of the algorithm solving the Hidden Subgroup Problem (HSP). In doing so, we definitively prove that strong complementarity is the structural feature providing the quantum advantage in the HSP. As an immediate application of our theory-independent approach, we conclude that Simon's problem can be efficiently solved in real quantum theory.

Secondly, we investigate the relationship between strong complementarity and phase groups, and we formulate a broad generalisation of Mermin's non-locality argument for GHZ states. Our results provide an exact characterisation of the relationship between phase groups and non-locality, bringing the research programme of \cite{Coecke2012c,Coecke2010a} to a close. We relate our findings to the framework of All-vs-Nothing arguments, and we put them to work in the definition of a device-independent quantum-classical secret sharing scheme, extending the classical scheme of Hillery, Bu\v{z}ek and Berthiaume \cite{Hillery1999}.







\newcommand{\morphismName}[1]{\left\lfloor#1\right\rfloor}

\chapter{Categorical Quantum Mechanics}
\label{chapter_CQM}

\section{Symmetric monoidal categories}
\label{section_SMCs}

\subsection{Objects as physical systems}

Categorical quantum mechanics is first and foremost a theory of systems and processes, composing sequentially and in parallel. \textbf{Symmetric monoidal categories} (henceforth SMCs) provide a very general framework to model such systems and processes, and come with a natural graphical/diagrammatic language \cite{Joyal1991,Joyal1996}; because of this, they are often referred to as \textbf{process theories} in the literature. 

The objects of the SMC are taken to correspond to physical systems, and are diagrammatically denoted by wires:
\begin{equation}\label{eqn_identityMorphism}
\begin{multlined}
\begin{tikzpicture}
	\begin{pgfonlayer}{nodelayer}
		\node [style=cdnode] (0) at (-2, 0) {$A$};
		\node [style=cdnode] (1) at (2, 0) {$A$};
	\end{pgfonlayer}
	\begin{pgfonlayer}{edgelayer}
		\draw [style=-] (0) to (1);
	\end{pgfonlayer}
\end{tikzpicture}
\end{multlined}
\end{equation}

\noindent The identity $\id{A}$ on an object/system $A$ is associated to the process of \inlineQuote{doing nothing} to the system, and is also denoted by the same undecorated wire which represents the system in Diagram~\ref{eqn_identityMorphism}. This free confusion between objects and identity morphisms is fairly common in category theory, and has an interesting physical interpretation: possession of a static system is the same thing as the process of continuing to do nothing with it.

\subsection{Sequential composition of processes}

The morphisms of the SMC are taken to correspond to processes connecting physical systems: a morphism $A \rightarrow B$ embodies a process taking a state in system $A$ as its input and returning a state in system $B$ as its output. The \textbf{sequential composition} of processes (with compatible intermediate systems) is embodied by composition of morphisms in the category, and assumed to be associative. Diagrammatically, a morphism/process $f:A \rightarrow B$ is denoted by a labelled box:
\begin{equation}\label{eqn_morphism}
\begin{multlined}
\begin{tikzpicture}
	\begin{pgfonlayer}{nodelayer}
		\node [style=box] (0) at (0, 0) {$f$};
		\node [style=cdnode] (1) at (-3, 0) {$A$};
		\node [style=cdnode] (2) at (3, 0) {$B$};
	\end{pgfonlayer}
	\begin{pgfonlayer}{edgelayer}
		\draw [style=-] (1) to (0);
		\draw [style=-] (0) to (2);
	\end{pgfonlayer}
\end{tikzpicture}
\end{multlined}
\end{equation}
Composition $g \circ f: A \rightarrow C$ of two morphisms $f: A \rightarrow B$ and $g: B \rightarrow C$ is denoted by composition of boxes along the intermediate wire:
\begin{equation}\label{eqn_morphismsComposition}
\begin{multlined}
\input{pictures/chapter2/morphismsComposition.tikz}
\end{multlined}
\end{equation}
Composition of multiple processes is obtained by repeated pairwise composition. Also, the labels for systems (input, output or intermediate) might be omitted when clear from context and/or when not relevant.

\subsection{Parallel composition of processes}

The description of parallel composition of processes also requires the description of parallel composition of systems: given two processes $f: A \rightarrow B$ and $g: C \rightarrow D$, their \textbf{parallel composition}, denoted $f \otimes g$, transforms the joint system formed by $A$ and $C$, denoted $A \otimes C$, into the joint system formed by $B$ and $D$, denoted $B \otimes D$. A minimal notion of parallel composition and joint systems is captured by \textit{monoidal categories}, and $\otimes$ is called the \textbf{tensor product}. Diagrammatically, a joint system $A \otimes C$ is denoted by parallel wires for $A$ and $C$, and parallel composition of processes is denoted by parallel boxes:
\begin{equation}\label{eqn_morphismsParallelComposition}
\begin{multlined}
\input{pictures/chapter2/morphismsParallelComposition.tikz}
\end{multlined}
\end{equation}
For a monoidal category $\CategoryC$, the tensor product is a functor $\otimes:\CategoryC \times \CategoryC \rightarrow \CategoryC$, and hence parallel composition and sequential composition distribute over each other:
\begin{equation}\label{eqn_morphismsParallelSequentialComposition}
\begin{multlined}
\input{pictures/chapter2/morphismsParallelSequentialComposition.tikz}
\end{multlined}
\end{equation}
While the diagrammatic formalism for categories was 1-dimensional, with processes represented by a line of wires and boxes, the diagrammatic formalism for monoidal categories is 2-dimensional, with processes represented by parallel wires and boxes on them. Boxes need not be confined to single wires, but can connect multiple wires together: there could be processes $h: B \otimes D \rightarrow E \otimes F$ which cannot be obtained as parallel composition of processes $B \rightarrow D$ and $E \rightarrow F$, and which need to be represented by boxes spanning multiple wires. 
\begin{equation}\label{eqn_morphismsParallelCompositionExample}
\begin{multlined}
\input{pictures/chapter2/morphismsParallelCompositionExample.tikz}
\end{multlined}
\end{equation}
Morphisms which arise from parallel composition are said to be \textbf{separable}, and may (but need not) be represented by parallel boxes on parallel wires, as done in Equation~\ref{eqn_morphismsParallelComposition}. Conversely, processes represented by parallel boxes on parallel wires are necessarily separable.

Adequately capturing the notion of joint system turns out to be slightly problematic: the axioms of monoidal categories do not guarantee that $A \otimes B$ and $B \otimes A$ will be the same system, or even isomorphic: the linear order which systems are parallely composed in is relevant. To obviate this problem, one might introduce a natural isomorphism $\sigma_{A,B}$, called the \textbf{symmetry isomorphism}, which swaps $A$ and $B$, so that $A \otimes B$ and $B \otimes A$ are effectively the same system, and the linear order becomes irrelevant; this leads to the definition of \textit{symmetric monoidal categories}. In the graphical presentation, the symmetry isomorphism $\sigma_{A,B}$ is represented as a crossing of the wires and the naturality condition guarantees that processes can be made to \inlineQuote{slide} through the wire crossing:
\begin{equation}\label{eqn_braidingNaturality}
\begin{multlined}
\input{pictures/chapter2/braidingNaturality.tikz}
\end{multlined}
\end{equation}
The symmetry isomorphism comes with the additional requirement that $\sigma_{B,A} = \sigma_{A,B}^{-1}$:
\begin{equation}\label{eqn_braidingTwice}
\begin{multlined}
\input{pictures/chapter2/braidingTwice.tikz}
\end{multlined}
\end{equation}

\subsection{States, effects and scalars}

The tensor product is associative, and comes with a \textbf{tensor unit} $\tensorUnit$: categorically, this means that there are natural isomorphisms $\alpha_{A,B,C}: (A \otimes B) \otimes C \rightarrow A \otimes (B \otimes C)$ (the \textbf{associator}), $\rho_{A}: A \otimes \tensorUnit \rightarrow A$ (the \textbf{right unitor}), and $\lambda_{A}: \tensorUnit \otimes A \rightarrow A$ (the \textbf{left unitor}). The tensor unit models the trivial system, and processes from/to the tensor unit have a special status. Processes $\tensorUnit \rightarrow A$ are called the \textbf{states} of $A$ and processes $A \rightarrow \tensorUnit$ are called the \textbf{effects} of $A$. Processes $\tensorUnit \rightarrow \tensorUnit$ are called the \textbf{scalars} of the SMC, and they form a commutative monoid under composition\footnote{Both sequential and parallel: for scalars $x,y : \tensorUnit \rightarrow \tensorUnit$, sequential composition ($y \circ x$) coincides with parallel composition followed by a unitor ($\lambda_\tensorUnit (x \otimes y)$ and/or $\rho_\tensorUnit (x \otimes y)$).}, with the identity $\id{\tensorUnit}$ as its unit. Diagrammatically, the wire for the tensor unit is almost always omitted\footnote{One can always use unitors and their inverses to insert/remove tensor units as needed to correctly match input/output systems for processes in the diagrams. In this work, unitors and associators are always omitted, as their application is obvious from context.}. Hence, states are processes with no input wires, effects are processes with no output wires, and scalars are processes with no wires attached at all:
\begin{equation}\label{eqn_statesEffectsScalars}
\begin{multlined}
\input{pictures/chapter2/statesEffectsScalars.tikz}
\end{multlined}
\end{equation}
For any systems $A$ and $B$, the processes $A \rightarrow B$ come with a canonical monoid action of the scalars on them, given by the tensor product:
\begin{equation}\label{eqn_scalarsAction}
\begin{multlined}
\input{pictures/chapter2/scalarsAction.tikz}
\end{multlined}
\end{equation}
Diagrammatically, the scalar $1$ (our alternative notation for $\id{\tensorUnit}$ from now on) is a special case: being the identity of the trivial system and acting as the identity on processes, it is usually not represented at all, or equivalently it is represented by an empty diagram.

States, effects and scalars provide a point of connection between SMCs and set-theoretic formulations. A process $f: A \rightarrow B$ in a SMC can be seen in three different ways as a set function: (i)  mapping states $\psi : \tensorUnit \rightarrow A$ of $A$ to states $f \circ \psi : \tensorUnit \rightarrow B$ of $B$; (ii) mapping effects $b : B \rightarrow \tensorUnit$ of $B$ to effects $b \circ f : A \rightarrow \tensorUnit$ of $A$; (iii)  mapping pairs $(\psi,b)$ of a state on $A$ and an effect on $B$ to scalars $b \circ f \circ \psi$.
We say that a SMC has \textbf{enough states} if any two processes $f,g:A \rightarrow B$ are equal whenever they are equal as functions on the states of $A$ . Dually, we say that a SMC has \textbf{enough effects} if any two processes $f,g:A \rightarrow B$ are equal whenever they are equal as functions on the effects of $B$. We say that a SMC has \textbf{enough states and effects} if any two processes $f,g:A \rightarrow B$ are equal whenever they are equal as functions mapping pairs of a state $A$ and an effect of $B$ to scalars. 

\subsection{Examples of symmetric monoidal categories}
\label{subsection_ExamplesSMCs}

\paragraph{The category of sets.} The category $\SetCategory$ having sets as objects and functions as morphisms is symmetric monoidal, with function composition as sequential composition of processes. The tensor product on objects is given by the Cartesian product $A \times B$ of sets, with the singleton set $\mathbb{1}$ as tensor unit, while the parallel composition of morphisms is given by product of functions $f \times g := \big(a,b\big) \mapsto \big(f(a),g(b)\big)$. States of an object $A$ in $\SetCategory$ are exactly the elements of set $A$. Because $\mathbb{1}$ is terminal, there is a unique effect $A \rightarrow \mathbb{1}$ for each $A$, and hence a unique scalar $\id{\mathbb{1}}$; hence $\SetCategory$ has enough states, but not enough effects. Every category with finite products and terminal object similarly forms a symmetric monoidal category (but need not have enough states, e.g. the category of groups). 

\paragraph{The category of Hilbert spaces.} The category $\HilbCategory$ having Hilbert spaces as objects and bounded linear maps as morphisms is symmetric monoidal, with function composition as sequential composition of processes. The tensor product on objects is that of Hilbert spaces $\ltwo{A} \otimes \ltwo{B} \isom \ltwo{A \times B}$, with $\complexs$ as tensor unit. The tensor product on morphisms is the Kronecker product. The states of a Hilbert space $\ltwo{A}$ are exactly the vectors in $\ltwo{A}$, where vector $\ket{\psi}$ is seen as the map $\complexs \rightarrow \ltwo{A}$ sending $\xi \mapsto \xi \ket{\psi}$; the effects of a Hilbert space $\ltwo{A}$ are exactly the continuous linear functionals on it; the scalars are the complex numbers $\complexs$, with $\otimes$ as multiplication. The category $\HilbCategory$ has enough states and effects.

\paragraph{Categories of matrices over semirings.} The category of finite-dimensional Hilbert spaces $\fdHilbCategory$ (a full subcategory of $\HilbCategory$) presents no issues of continuity, and its construction can be straightforwardly extended from $\complexs$ to an arbitrary commutative semiring $R$. The category $\fRfreeModCategory{R}$ of free, finite-dimensional semimodules over $R$ has objects in the form $R^X$, where $X$ is a finite set, and morphisms $R^X \rightarrow R^Y$ are the $Y \times X$ matrices with values in $R$. Sequential composition is matrix composition\footnote{A semiring has the minimal requirements for matrix composition, namely an addition and a multiplication with appropriate distributivity.}, and parallel composition is Kronecker product of matrices. The states of $R^X$ are $X$-indexed, $R$-valued vectors; the effects of $R^X$ are $R$-linear functionals $R^X \rightarrow R$; the scalars form the semiring $R$, with both $\circ$ and $\otimes$ coinciding with the semiring multiplication. The category $\fRfreeModCategory{R}$ always has enough states and effects. This is a large family of categories, which includes a number of examples of interest for CQM and related disciplines.
\begin{enumerate}
	\item[(i)] The category $\fdHilbCategory$ of finite-dimensional Hilbert spaces\footnote{The same as the category $\VectCategory{\complexs}$ of finite-dimensional complex vector spaces.}, for $R = \complexs$. This is the traditional arena of pure-state quantum theory.
	\item[(ii)] The category $\VectCategory{\reals}$ of finite-dimensional real vector spaces, for $R = \reals$. This is the arena of pure-state real quantum theory.
	\item[(iii)] The category $\fRelCategory$ of finite sets and relations between them, for $R = \mathbb{B}$, the booleans. This is the arena of non-deterministic computation, and provides a toy model for pure-state quantum theory.
	\item[(iv)] The category of finite-dimensional convex cones over simplices, for $R = \reals^+$. This is the arena of classical probabilistic systems.
	\item[(v)] The category of multi-sets and \inlineQuote{multi-relations}, for $R= \naturals$. 
\end{enumerate}
In computer science, semirings are often used to model resources in computation.
\begin{enumerate}
	\item[(vi)] The \textit{boolean semiring} $(\mathbb{B},\vee,\wedge)$ is used to model non-deterministic computation (related to the complexity class NP).
	\item[(vii)] The \textit{probability semiring} $(\reals^{+},+,\times)$ is used to model probabilistic computation (related to the complexity classes BPP and PP).
	\item[(viii)] The \textit{natural numbers} $(\naturals,+,\times)$ is used to model counting problems (related to the complexity class $\#$P).
	\item[(ix)] The \textit{tropical semirings} $(M, \min , + )$---where $(M,+,0)$ is a totally ordered monoid with an absorbing maximal element $\infty$---is used in the Floyd-Warshall algorithm \cite{floyd1962shortest} for shortest-path finding in directed graphs (and a number of other optimisation problems solvable within the framework of \textit{tropical geometry} \cite{simon1988tropical,pin1998tropical,mikhalkin2004amoebas,speyer2009tropical}).
	\item[(x)] The \textit{Viterbi semiring} $([0,1], \max, \times)$ is used by the Viterbi algorithm \cite{viterbi1967decoding} to find the most likely sequence of hidden states in a Hidden Markov Model.
	\item[(xi)] Locales (related to intuitionistic logic) and commutative unital quantales (related to linear logic \cite{Yetter1990} and generalised metrics for topological spaces \cite{Kopperman1988}) both find a number of applications in programming semantics and other areas of computer science \cite{Abramsky1987,Abramsky1993,Flagg1997}.
\end{enumerate}
Because of these examples, the category $\RMatCategory{R}$ can be interpreted as modelling  classical computation with resources encoded by a semiring $R$. However, the $R=\reals^+,\mathbb{B},\reals$ cases also suggest an interpretation of $\RMatCategory{R}$ as modelling classical non-deterministic systems, with the semiring $R$ encoding a notion of non-determinism: when using this interpretation, we will refer to $\RMatCategory{R}$ as the category of \textbf{classical $R$-probabilistic systems}. It should be noted that both the category $\fPFunCategory$ of finite sets and partial functions (modelling partial deterministic classical computation) and the category $\fSetCategory$ of sets and total functions (modelling deterministic classical computation) are subcategories of $\RMatCategory{R}$ for all choices of semiring $R$.

\section{Dagger-compact categories}
\label{section_DaggerCompactSMCs}

\subsection{Dagger, isometries and unitaries}

Some of the most iconic features of quantum theory, such as unitaries and the bra-ket notation, depend on a notion of inner product on states. Because composition of states and effects already produces scalars, a categorical way to introduce an inner product in a SMC is to guarantee state-effect duality, in a way compatible with parallel and sequential composition of processes. This approach leads to \textbf{dagger symmetric monoidal categories} (henceforth $\dagger$-SMC) \cite{Selinger2007}, symmetric monoidal categories $\CategoryC$ equipped with an involutive functor $\dagger: \CategoryC \rightarrow \OpCategory{\CategoryC}$ of SMCs, the \textbf{dagger}, which is furthermore the identity on objects. Concretely, to each process $f: A \rightarrow B$ in a $\dagger$-SMC $\CategoryC$ is associated a process $f^{\dagger}: B \rightarrow A$ in $\CategoryC$, called the \textbf{adjoint} of $f$, in a way such that $(f^{\dagger})^{\dagger}= f$. Being a functor of SMCs further implies that $\id{A}^\dagger = \id{A}$, that $(f \circ g)^\dagger = g^\dagger \circ f^\dagger$, and that $(f \otimes g)^\dagger = f^\dagger \otimes g^\dagger$. There is a further requirement that $\sigma_{A,B}^\dagger = \sigma_{A,B}^{-1}$. For an operational characterisation of the Hermitian adjoint in quantum theory, see \cite{Selby2016}. 

Although there is no mention of linearity, a $\dagger$-SMC comes with an \textbf{inner product}, where two states $\phi$ and $\psi$ are sent to the the scalar $\phi^\dagger \circ \psi$. We will say that a process $f: A \rightarrow B$ is an \textbf{isometry} if $f^\dagger  f = \id{A}$, and that it is a \textbf{unitary} if it is furthermore invertible (equivalently, if $f^\dagger$ is also an isometry, i.e. $f f^\dagger = \id{B}$). Isometries preserve the inner product of states:
\begin{equation}\label{eqn_IsometriesInnerProd}
\Big(\phi^\dagger f^\dagger\Big) \Big(f \psi\Big) = \phi^\dagger \Big( f^\dagger f \Big) \psi = \phi^\dagger   \id{A} \psi = \phi^\dagger \psi
\end{equation}

The category $\HilbCategory$ is a $\dagger$-SMC, with dagger given by the Hermitian adjoints. More generally, any choice of involutive semiring isomorphism $^\dagger: R \rightarrow R$ endows the category $\RMatCategory{R}$ with the structure of a $\dagger$-SMC.

\subsection{Dagger compact structure}

The dagger abstractly captures state-effect duality, but says nothing about operator-state duality, the other important correspondence in quantum theory. The latter requires \textit{compact closed} structure \cite{Doplicher1989,Kelly1980a,Baez1995}, and is a much more restrictive property. Dagger structure taken together with compact closed structure leads to the definition of dagger compact categories \cite{Abramsky2004,Abramsky2005}. Here, we will (loosely) follow the presentation of compact closed categories given by \cite{Selinger2010,Selinger2009}.

We say that a SMC is \textbf{compact closed} if every object $A$ comes with a \textbf{dual object} $A^\ast$, a \textbf{cup} $\eta_{A}: \tensorUnit \rightarrow A^\ast \otimes A$ and a \textbf{cap} $\varepsilon_A: A \otimes A^\ast \rightarrow \tensorUnit$
\begin{equation}\label{eqn_cupCap}
\begin{multlined}
\input{pictures/chapter2/cupCap.tikz}
\end{multlined}
\end{equation}
which satisfy the following \textbf{yanking equations} (where the right hand sides are just the identity morphisms $\id{A}$ and $\id{A^\ast}$):
\begin{equation}\label{eqn_yankingEquations}
\begin{multlined}
\input{pictures/chapter2/yankingEquations.tikz}
\end{multlined}
\end{equation}
Duals are required to distribute (contravariantly) over the tensor product $(A \otimes B)^\ast = A^\ast \otimes B^\ast$, and to respect the tensor unit $\tensorUnit^\ast = \tensorUnit$ (technically, they are canonically isomorphic rather than equal, but the distinction can be safely ignored here). 

These definitions actually hold in a generic monoidal category. In a SMC, the symmetry isomorphism gives one more cup and one more cap, themselves satisfying yanking equations:
\begin{equation}\label{eqn_cupCapSwap}
\begin{multlined}
\input{pictures/chapter2/cupCapSwap.tikz}
\end{multlined}
\end{equation}

In a $\dagger$-SMC, one more cup should arise in principle as $\varepsilon_{A}^\dagger$, and one more cap as $\eta_{A}^\dagger$: however, one imposes additional coherence conditions ensuring that these new cup and cap coincide with those of Equation~\ref{eqn_cupCapSwap}, i.e. that $\varepsilon_{A}^\dagger = \sigma_{A^\ast\!,A} \eta_{A}$ and $\eta_{A}^\dagger = \varepsilon_{A} \sigma_{A,A^\ast}^\dagger$. A compact closed $\dagger$-SMC satisfying these two additional requirements is called a \textbf{dagger compact category} \cite{Abramsky2004,Abramsky2005}. Note that the little arrow markings flip upside-down when taking the adjoint:
\begin{equation}\label{eqn_cupCapDagger}
\begin{multlined}
\input{pictures/chapter2/cupCapDagger.tikz}
\end{multlined}
\end{equation}

Because of the yanking equations, compact closed structure provides a form of \textbf{operator-state duality}, a bijection\footnote{When the SMC is enriched, this always is an isomorphism in the appropriate category.} between processes $A \rightarrow B$ and states $\tensorUnit \rightarrow A^\ast \otimes B$ given as follows:
\begin{equation}\label{eqn_operatorStateDuality}
\begin{multlined}
\input{pictures/chapter2/operatorStateDuality.tikz}
\end{multlined}
\end{equation}
In a dagger compact category, the adjoint induces an inner product on operators via operator-state duality, which we will refer to as the \textbf{Hilbert-Schmidt inner product} (because that is what it is called in $\fdHilbCategory$).

The dagger structure induces an involutive symmetry on processes, sending a process $f: A \rightarrow B$ to its adjoint $B \rightarrow A$. Similarly, the compact closed structure induces another involutive symmetry, sending $f: A \rightarrow B$ to its \textbf{transpose} $f^T: B^\ast \rightarrow A^\ast$ defined as follows, and vice versa:
\begin{equation}\label{eqn_transpose}
\begin{multlined}
\input{pictures/chapter2/transpose.tikz}
\end{multlined}
\end{equation}
Correspondence~\ref{eqn_operatorStateDuality} in particular bijects the states of the dual object $A^\ast$ with the effects of $A$: as a consequence, the transpose $f^T$ can be seen as a process acting on effects rather than states, sending an effect $b : B \rightarrow \tensorUnit$ on $B$ to the effect $b \circ f : A \rightarrow \tensorUnit$ on $A$ obtained by pre-composition with $f: A \rightarrow B$. 

The adjoint and transpose symmetries commute, and together they generate a symmetry group on processes isomorphic to $\integersMod{2} \times \integersMod{2}$, with the following \textbf{conjugate} symmetry $f \mapsto f^\ast := (f^\dagger)^T = (f^T)^\dagger$ as the fourth group element:
\begin{equation}\label{eqn_conjugate}
\begin{multlined}
\input{pictures/chapter2/conjugate.tikz}
\end{multlined}
\end{equation}
Like the transpose, the conjugate can also be seen as a process acting on effects. 

A \textbf{self-duality structure} on a compact closed SMC is a family of isomorphisms $h_A : A \rightarrow A^\ast$ satisfying four coherence conditions laid out in \cite{Selinger2010}. With these, we can define a new pair of \textbf{symmetric cup} and \textbf{symmetric cap}:
\begin{equation}\label{eqn_cupCapSD}
\begin{multlined}
\input{pictures/chapter2/cupCapSD.tikz}
\end{multlined}
\end{equation}
The first two conditions on $h_A$ concern the relationship of duals with the tensor product: $h_\tensorUnit = \id{\tensorUnit}$ and $h_{A \otimes B}$ = $\sigma_{A^\ast,B^\ast} \big(h_{A} \otimes h_{B} \big)$, where the symmetry isomorphism is required since $h_{A \otimes B} : A \otimes B \rightarrow B^\ast \otimes A^\ast$. The third coherence condition is symmetry:
\begin{equation}\label{eqn_symmetryEquationsSD}
\begin{multlined}
\input{pictures/chapter2/symmetryEquationsSD.tikz}
\end{multlined}
\end{equation}
The last condition relates the symmetric cup/cap on an object $A$ to the symmetric cup/cap on the dual object $A^\ast$:
\begin{equation}\label{eqn_dualSystemsCupsSD}
\begin{multlined}
\input{pictures/chapter2/dualSystemsCupsSD.tikz}
\end{multlined}
\end{equation}
In a dagger compact category, Equation~\ref{eqn_cupCapDagger} can be seen to state that the two caps are the adjoints of the two cups; for a self-duality structure in a compact closed category, on the other hand, Equation~\ref{eqn_symmetryEquationsSD} states that there is only one symmetric cup and one symmetric cap. It is then natural to require that self-duality structures in a dagger compact category lead to a symmetric cap which is adjoint to the symmetric cup\footnote{Which also validates the graphical notation we've chosen.}: this is equivalent to unitarity of the $h_A$ isomorphisms, which is imposed as an additional coherence condition on self-duality structures in dagger compact categories.

\subsection{Examples of dagger compact categories}

\paragraph{Finite-dimensional Hilbert spaces.} The category $\fdHilbCategory$ of finite-dimensional Hilbert spaces is dagger compact, but the larger category $\HilbCategory$ of Hilbert spaces is not (it has dual objects, and state-effect duality, but it cannot have a cup or cap). For a finite-dimensional Hilbert space $A$, the dual object is defined to be its dual space $A^\ast$, the Hilbert space of linear functionals $A \rightarrow \complexs$. The cap $\varepsilon_{A}$ is obtained from the inner product, as the linear map $A \otimes A^\ast$ which sends a state $\ket{\psi}$ of $A$ and an effect $\bra{a}$ of $A$ (i.e. a state of $A^\ast$) to the complex number $\braket{a}{\psi}$; the other cap is obtained by first applying a symmetry isomorphism $\sigma_{A,A^\ast}^{-1}$, and the two cups are obtained by taking adjoints. In terms of an orthonormal basis $\ket{e_j}_{j=1}^{\dim{A}}$ (any basis), the cup and cap can be written as follows, where states of $A^\ast$ are represented by effects of $A$, and effects of $A^\ast$ by states of $A$:
\begin{equation}\label{eqn_cupCapBasis}
\begin{multlined}
\input{pictures/chapter2/cupCapBasis.tikz}
\end{multlined}
\end{equation}


Choice of basis $\ket{e^{(a)}_j}_{j}$ induces a self-duality structure $h_A$, and the symmetric cup and cap given by Equation~\ref{eqn_cupCapSD} for this self-duality structure take the following form:
\begin{equation}\label{eqn_cupCapSDBasis}
\begin{multlined}
\input{pictures/chapter2/cupCapSDBasis.tikz}
\end{multlined}
\end{equation}

\noindent Because they satisfy the yanking equations, substituting the symmetric cup and cap in Equations~\ref{eqn_transpose} and~\ref{eqn_conjugate} induce another transpose symmetry and another conjugate symmetry on processes (this time sending $f: A \rightarrow B$ to $f^{T_h}: B \rightarrow A$ and $f^{\ast_h}: A \rightarrow B$). Contrary to the transpose and conjugate symmetries induced by the dagger compact structure, these symmetries are basis-dependent. 
Similarly, the symmetric cup and cap of Equation~\ref{eqn_cupCapSD} (and related transpose and conjugate) are basis-dependent, while the cup and cap of Equation~\ref{eqn_cupCap} (and related transpose and conjugate) are basis-independent.

\paragraph{Categories of matrices over semirings.} Now consider the $\dagger$-SMC $\fRfreeModCategory{R}$, where $R$ is a commutative semiring with involution. Because the objects are (essentially) in the form $R^n$ for all natural numbers $n$, one can choose the canonical orthonormal basis $\ket{e_j^{(n)}} := 1 \mapsto (\delta_{ij})_{i=1}^n$, and Equation~\ref{eqn_cupCapSDBasis} readily give a family of symmetric cups and caps (they only involve the scalars $0$ and $1$, which behave the same way in any semiring). The dual space to the free $R$-semimodule $R^n$ is the free $R$-semimodule of $R$-linear maps $R^n \rightarrow R$, itself isomorphic to $R^n$. Self-duality structures and symmetric cups/caps can be defined from choices of basis exactly like in the complex case.
It should be noted that the existence of a dagger compact structure is related to the inner product, but has nothing to do with its non-degeneracy: the latter depends on the specific choice of involution, while the former exists for all choices of involution.

\subsection{The matrix algebra}

We have seen in Equation~\ref{eqn_operatorStateDuality} that compact closed structure implements operator-state duality: it is natural to ask whether composition of operators can always be internalised by a process acting on the corresponding states, and this question leads to the definition of the \textit{matrix algebra}. 

If $f: A \rightarrow B$ is a process in a compact closed category, we will denote the corresponding state by $\morphismName{f}: \tensorUnit \rightarrow A^\ast \otimes B$. The state $\morphismName{g \circ f}: \tensorUnit \rightarrow A^\ast \otimes C$ corresponding to the composition of two processes $f: A \rightarrow B$ and $g: B \rightarrow C$ can be obtained as follows in terms of the corresponding states $\morphismName{f}$ and $\morphismName{g}$:
\begin{equation}\label{eqn_morphismNameComposition}
\begin{multlined}
\input{pictures/chapter2/morphismNameComposition.tikz}
\end{multlined}
\end{equation}
For processes $A \rightarrow A$, composition is a monoid operation, with the identity $\id{A}$ as its unit. There is an internal monoid on object $A^\ast \otimes A$ corresponding to composition: 
\begin{equation}\label{eqn_matrixAlgebra}
\begin{multlined}
\input{pictures/chapter2/matrixAlgebra.tikz}
\end{multlined}
\end{equation}
Similarly we can use the other cup and cap to construct a comultiplication and counit. The multiplication, unit, comultiplication and counit in a SMC form a Frobenius algebra (more on Frobenius algebras later), known as the \textbf{matrix algebra}. In a $\dagger$-SMC, they form a symmetric $\dagger$-Frobenius algebra (which in $\fdHilbCategory$ corresponds to the C*-algebra $\Bounded{A}$ of bounded operators on Hilbert space $A$).


\newcommand{\discardingmap}{discarding map}
\newcommand{\Discardingmap}{Discarding map}
\newcommand{\discardingmaps}{discarding maps}
\newcommand{\Discardingmaps}{discarding maps}
\newcommand{\normalised}{normalised}
\newcommand{\Normalised}{Normalised}
\section{Environments, causality and purification}
\label{section_Environments}

\subsection{Environment structures}

Dagger compact categories capture a number of structures typical of pure-state quantum theory, such as inner product, state-effect duality and operator-state duality. However, one additional component is necessary to make the leap from pure-state to mixed-state, and that component is the \textit{environment structure}.

An \textbf{environment structure} for a SMC consists of a family of processes $\trace{A}: A \rightarrow \tensorUnit$ for all objects, the \textbf{discarding maps}, such that:
\begin{equation}\label{eqn_traces}
\begin{multlined}
\input{pictures/chapter2/traces.tikz}
\end{multlined}
\end{equation} 
The empty diagram on the RHS of the right equation is the scalar 1. Given an environment structure, a process $f: A \rightarrow B$ is said to be \textbf{normalised} if performing the process and then discarding the output is the same as discarding the input:
\begin{equation}\label{eqn_normalised}
\begin{multlined}
\input{pictures/chapter2/normalised.tikz}
\end{multlined}
\end{equation}
Discarding maps are also called \textbf{traces} by those of the Hilbert-space persuasion, in which case normalised processes should be referred to as \textbf{trace-preserving}. In the effectus community, the discarding map $\trace{X}$ is also known as the \textit{truth} (or \textit{ground}) on $X$, and normalised processes are known as \textit{total}.

Thanks to Equation~\ref{eqn_traces}, normalised processes in a SMC $\CategoryC$ with environment structure $\hbox{\input{symbols/traceSym.tex}}\!$ are guaranteed to form a sub-SMC $\CausalSubCategory{\CategoryC}$, the \textbf{normalised subcategory}. In the normalised subcategory, the tensor unit $\tensorUnit$ is terminal: there is a unique effect, namely $\trace{X}$, on any system $X$, and there is a unique scalar $\id{\tensorUnit}$. This means that the discarding maps truly lose all information about the system, and thus define a sensible notion of discarding.\footnote{This statement can be given categorical dignity by saying that $\hbox{\input{symbols/traceSym.tex}}\!$ is a monoidal natural transformation from the identity functor to the terminal endofunctor (of SMCs), the one sending all objects to $\tensorUnit$ and all processes to the scalar $\id{\tensorUnit}$.}

\subsection{Discarding maps in categories of matrices}


The simplest example of environment structures is given by the category $\SetCategory$ of sets and total functions, where the tensor unit (the singleton set $\mathbb{1}$) is already terminal: the discarding map $\trace{X}$ on a set $X$ is the unique total function $X \rightarrow \mathbb{1}$, and $\SetCategory = \CausalSubCategory{\SetCategory}$. The category $\fSetCategory$ of finite sets and total functions is a subcategory of $\RMatCategory{R}$ for any semiring $R$, and endows the latter with its environment structure. If $\ket{\psi}: \mathbb{1} \rightarrow R^X$ is a state in $\RMatCategory{R}$, then $\trace{X} \ket{\psi}$ is the sum $\sum_{x \in X} \braket{x}{\psi}$ of all entries of column vector $\ket{\psi}$, and normalised states are exactly those with entries summing to the multiplicative unit $1$ of semiring $R$. Similarly, the normalised processes $R^{X} \rightarrow R^{Y}$ are those with $Y \times X$ matrix having normalised vectors as columns. The following examples are of interest in the applications we will consider here:
\begin{enumerate}
\item[(i)] in $\RMatCategory{\reals^+}$, the normalised subcategory is the category $\fStochCategory$ of finite sets and stochastic maps between them (with probability distributions as states);
\item[(iia)] in $\RMatCategory{\mathbb{B}}$, the normalised states of $\mathbb{B}^X$ are the non-empty subsets of $X$, and the normalised processes $\mathbb{B}^X \rightarrow \mathbb{B}^Y$ are the relations $f \subseteq X \times Y$ which are total, i.e. such that $\domain{f} = X$;
\item[(iib)] in the subcategory $\fPFunCategory \leq \RMatCategory{\mathbb{B}}$ of finite sets and partial functions, the normalised processes are the total functions, yielding $\CausalSubCategory{\fPFunCategory} = \fSetCategory$;
\item[(iii)] in $\RMatCategory{\naturals}$, the normalised subcategory is the category $\fSetCategory$ of finite sets and functions.
\end{enumerate}

\subsection{The CPM construction}

We have seen that the discarding maps inherited from $\fSetCategory$ give the desired notion of normalised states and processes in the case of $\RMatCategory{\reals^+}$, the category modelling classical probabilistic systems. In the case of quantum theory, on the other hand, discarding is done by inner products and traces, and we need a different construction.

In the traditional formulation of quantum mechanics, the transition from pure-state to mixed-state sees the Hilbert space formalism replaced by the operator formalism: the states on a Hilbert space $A$ are taken to be the positive self-adjoint operators $\rho: A \rightarrow A$, forming the $\reals^+$-semimodule (aka convex cone) $\CPstates{A}$, and more general processes $\Phi: \CPstates{A} \rightarrow \CPstates{B}$ are taken to be \textit{completely positive (CP) maps}, sending positive self-adjoint operators $\rho : A \rightarrow A$ to positive self-adjoint operators $\Phi(\rho): B \rightarrow B$. By \textbf{Kraus' Theorem}, the \textbf{CP maps} $\CPstates{A} \rightarrow \CPstates{B}$ are exactly those in the following form, known as \textbf{Kraus decomposition}:
\begin{equation}\label{eqn_KrausDecomposition}
\Phi(\rho) = \sum_{i=1}^{\dim{A}} \sum_{j=1}^{\dim{B}} B_{ij} \, \rho \, B_{ij}^\dagger
\end{equation}
where the linear maps $B_{ij}: A \rightarrow B$ are known as the \textbf{Kraus operators}\footnote{Note that the Kraus decomposition is only unique up to unitary transformations of the Kraus operators $B_{ij} \mapsto B'_{kl} :=  \sum_i\sum_j u_{klij}B_{ij}$.}. As a special case of Kraus' Theorem, one recovers a decomposition of positive self-adjoint operators in terms of pure states (not necessarily normalised):
\begin{equation}\label{eqn_KrausDecompositionStates}
\rho = \sum_{j=1}^{\dim{B}} \ket{\psi_j}\bra{\psi_j}
\end{equation}

In the operator formalism, the discarding map on system $\CPstates{A}$ is given by the \textit{trace}, sending state $\rho \in \CPstates{A}$ to $\Trace{\rho} \in \reals^+$. \textit{Normalised states} are exactly those in the form of Equation \ref{eqn_KrausDecompositionStates} with $\sum_j \braket{\psi_j}{\psi_j} = 1$; traditionally, the pure states $\ket{\psi_j}$ are chosen to be orthogonal (appealing to the spectral theorem) and then renormalised to yield a convex mixture of orthonormal states $\rho = \sum_j p_j \, \ket{\phi_j}\bra{\phi_j}$, where $p_j \in \reals^{+}$ sum to $1$ and are interpreted as probabilities. Normalised CP maps are called \textit{trace-preserving}, and by Kraus' Theorem they are exactly those satisfying $\sum_i \sum_j B_{ij}^\dagger B_{ij} = \id{A}$.

Categories of completely positive maps, also known as \textbf{CPM categories} \cite{Selinger2007,Coecke2012env}, can be constructed for all dagger compact categories, in a process which mimics the way in which the operator model of mixed-state quantum mechanics (corresponding to the CPM category $\CPMCategory{\fdHilbCategory}$) is constructed from the Hilbert space model of pure-state quantum mechanics (corresponding to the dagger compact category $\fdHilbCategory$). Given a dagger compact category $\CategoryC$, the corresponding CPM category $\CPMCategory{\CategoryC}$ is defined as the subcategory of $\CategoryC$ having objects in the form $A^\ast \otimes A$ and morphisms in the following form: 
\begin{equation}\begin{multlined}\label{diagram_CPMmorphismUnboxed}
\input{pictures/chapter2/CPMmorphismUnboxed.tikz}
\end{multlined}\end{equation}
where $f: A \rightarrow E \otimes B $ is a morphism of $\CategoryC$, and $f^\ast: A^\ast \rightarrow B^\ast \otimes E^\ast$ is its conjugate (obtained via the dagger compact structure). In $\CPMCategory{\CategoryC}$, the object $A^\ast \otimes A$ is simply written $A$, so that the process depicted in Diagram \ref{diagram_CPMmorphismUnboxed} is considered a process $A \rightarrow B$ in $\CPMCategory{\CategoryC}$. Processes in the CPM category are called \textbf{completely positive (CP) maps}. The system $E$ in Diagram \ref{diagram_CPMmorphismUnboxed} is often interpreted as an \textbf{environment} which is operationally inaccessible, and hence must be \inlineQuote{discarded} after the process has taken place. 
In the case of $\CPMCategory{\fdHilbCategory}$, i.e. in the operator model of mixed-state quantum mechanics, Diagram \ref{diagram_CPMmorphismUnboxed} can be seen as an alternative formulation of Kraus decomposition. 

Diagrammatic reasoning about categories of completely positive maps often involves two distinct SMCs: the original category $\CategoryC$ and the CPM category $\CPMCategory{\CategoryC}$. As a consequence, a stylistic convention is adopted where systems and processes of the CPM category $\CPMCategory{\CategoryC}$ are denoted by thicker wires, boxes and decorations. For example, the \inlineQuote{doubled} version $f^\ast \otimes f$ of a process $f: A \rightarrow B \otimes E$ will be denoted as $f$ with thicker wires and box:  
\begin{equation}\begin{multlined}\label{eqn_CPMdoubledNotation}
\input{pictures/chapter2/CPMdoubledNotation.tikz}
\end{multlined}\end{equation}
The caps from compact closed structure play a particularly important role in the definition of the CPM category, and are given their own decoration:
\begin{equation}\begin{multlined}\label{eqn_CPMdiscardingMaps}
\input{pictures/chapter2/CPMdiscardingMaps.tikz}
\end{multlined}\end{equation}
The CP map $\CPMdoubled{f}$ defined by Equation \ref{eqn_CPMdoubledNotation} is called the \textbf{double} of process $f$, while the CP map $\trace{A}$ defined by Equation \ref{eqn_CPMdiscardingMaps} is called the \textbf{discarding map} on system $A$. The discarding maps $\trace{A}$ form a canonical environment structure for $\CPMCategory{\CategoryC}$. In mixed-state quantum mechanics, the double of a linear map $f$ is the CP map $\CPMdoubled{f}:= \rho \mapsto f \circ \rho \circ f^\dagger$, while the discarding map $\trace{A}$ sends a positive state $\rho \in \CPstates{A}$ to its trace $\Trace{\rho} \in \CPstates{\complexs} \isom \reals^+$.

CPM categories $\CPMCategory{\CategoryC}$ are dagger compact, and the rules of diagrammatic reasoning for dagger compact categories apply to them. The compact structure for $\CPMCategory{\CategoryC}$ is given by the doubles of the cups and caps of $\CategoryC$, while the adjoint of a process in the form of Diagram \ref{diagram_CPMmorphismUnboxed} is given by first taking the adjoint in $\CategoryC$, and then using the following equation for the adjoint of the discarding map:
\begin{equation}\begin{multlined}\label{eqn_CPMdiscardingMapsAdjoints}
\input{pictures/chapter2/CPMdiscardingMapsAdjoints.tikz}
\end{multlined}
\end{equation}
Because the doubled processes $\CPMdoubled{f}$ and the discarding maps $\trace{A}$ are well-defined CP maps, it is legitimate to rephrase the very definition of the CPM category by saying that its processes are exactly those in the following form:
\begin{equation}\begin{multlined}\label{eqn_CPMmorphism}
\input{pictures/chapter2/CPMmorphism.tikz}
\end{multlined}\end{equation}
This means that doubled processes and discarding maps are enough to \textit{express} all CP maps. However, in order to \textit{prove results about} CP maps, we need a graphical axiom relating a generic CPM category $\CPMCategory{\CategoryC}$ to the corresponding original category $\CategoryC$. The required relationship is encoded by the following \textbf{CPM axiom}, which characterises the action of discarding maps in $\CPMCategory{\CategoryC}$ in terms of the dagger structure of $\CategoryC$:
\begin{equation}\begin{multlined}\label{eqn_CPMaxiom}
\input{pictures/chapter2/CPMaxiom.tikz}
\end{multlined}\end{equation}
It is possible to state an exact correspondence between CPM categories and dagger compact categories satisfying the CPM axiom above.
\begin{theorem}[\textbf{CPM Categories} \cite{Coecke2008a,Coecke2012env,Coecke2012d}]\hfill \\
Let $\CategoryC$ be a dagger compact category with an environment structure $(\trace{A})_{a \in \obj{\CategoryC}}$ satisfying Equation \ref{eqn_CPMdiscardingMapsAdjoints}. Let $\CategoryD$ be another dagger compact category, together with a functor $\Phi: \CategoryD \rightarrow \CategoryC$ of dagger compact categories which is a bijection on objects (so that the compact closed structure of $\CategoryC$ is exactly the image under $\Phi$ of the compact closed structure of $\CategoryD$). Assume that all morphisms in $\CategoryC$ take the form of Equation \ref{eqn_CPMmorphism} for some morphism $f$ in the image of $\Phi$, and that the CPM Axiom is satisfied. Then there is a (necessarily unique) isomorphism of dagger compact categories $F: \CategoryC \rightarrow \CPMCategory{\CategoryD}$ such that $F \big(\Phi(f)\big) = \CPMdoubled{f}$ for all morphisms $f$ of $\CategoryD$ and $F(\trace{\Phi(A)}) = \trace{A}$ for all objects $A$ of $\CategoryD$.
\end{theorem}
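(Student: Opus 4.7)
The plan is to construct $F$ on generators and then verify that it is a well-defined isomorphism of dagger compact categories. Since $\Phi$ is a bijection on objects, define $F$ on objects by $F(\Phi(A)) := A$ (viewed as an object of $\CPMCategory{\CategoryD}$, i.e.\ as $A^\ast \otimes A$ internally in $\CategoryD$). On morphisms, the hypothesis tells us that every $h:\Phi(A) \rightarrow \Phi(B)$ in $\CategoryC$ can be written as $(\id{\Phi(B)} \otimes \trace{\Phi(E)}) \circ \Phi(\tilde{f})$ for some $\tilde{f}: A \rightarrow B \otimes E$ in $\CategoryD$ (in the form of Equation~\ref{eqn_CPMmorphism}, using that $\Phi$ preserves the compact closed structure and is a bijection on objects). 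We set $F(h) := (\id{B} \otimes \trace{E}) \circ \CPMdoubled{\tilde{f}}$, which is a CP map $A \rightarrow B$ in $\CPMCategory{\CategoryD}$. This forces $F(\Phi(f)) = \CPMdoubled{f}$ (taking trivial environment) and $F(\trace{\Phi(A)}) = \trace{A}$ (taking $\tilde{f} = \id{A}$ with $B$ trivial), fixing the two required equalities.

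The main obstacle is well-definedness: two choices $\tilde{f}_1, \tilde{f}_2$ with (possibly different) environments may yield the same morphism $h$ in $\CategoryC$, and we need the resulting CP maps in $\CPMCategory{\CategoryD}$ to coincide. This is precisely the content of the CPM axiom~\ref{eqn_CPMaxiom}. The strategy is to bend the output wire $B$ down using the cup/cap of the compact closed structure (preserved by $\Phi$), reducing the equality of two generic CP-map representations to an equality of the form $\trace{} \circ \CPMdoubled{f_1'} = \trace{} \circ \CPMdoubled{f_2'}$ from an input system to the trivial system. Applying the CPM axiom in $\CategoryC$ then converts this into $f_1'^\dagger \circ f_1' = f_2'^\dagger \circ f_2'$ in $\CategoryC$; since $\Phi$ is a functor of dagger compact categories, this pulls back to the same equality in $\CategoryD$, and applying the axiom again (now inside $\CPMCategory{\CategoryD}$, which satisfies the CPM axiom by construction) gives the desired equality of CP maps. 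Straightening the bent wire completes the argument.

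Given well-definedness, functoriality is routine: identities correspond to doubled identities with trivial environment, and a composite $h' \circ h$ in $\CategoryC$ can be represented by the (tensor) composite of the underlying $\tilde{f}$'s with discarding of the combined environment, which matches composition in $\CPMCategory{\CategoryD}$ by the standard CPM composition rule. Preservation of the tensor product, the symmetry isomorphism, and the cups/caps follows because $\Phi$ preserves all of these and the compact structure of $\CPMCategory{\CategoryD}$ is defined precisely as the doubles of that of $\CategoryD$. Preservation of the dagger reduces, via Equation~\ref{eqn_CPMmorphism}, to checking it on the two kinds of generators: it holds on doubled morphisms because $\CPMdoubled{f^\dagger} = \CPMdoubled{f}^\dagger$ (the dagger of $\CategoryD$ being preserved by $\Phi$), and it holds on discarding maps by Equation~\ref{eqn_CPMdiscardingMapsAdjoints}, which is assumed in $\CategoryC$ and holds by construction in $\CPMCategory{\CategoryD}$. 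Since $F$ is a bijection on objects and is surjective on morphisms (every CP map in $\CPMCategory{\CategoryD}$ is in the image by Equation~\ref{eqn_CPMmorphism}) and injective on morphisms (again by the CPM axiom, run in the reverse direction), it is an isomorphism.

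Uniqueness is immediate: any dagger compact functor $F': \CategoryC \rightarrow \CPMCategory{\CategoryD}$ satisfying the two required equalities must agree with $F$ on the generators $\Phi(f)$ and $\trace{\Phi(A)}$, and since every morphism of $\CategoryC$ is built from these using sequential composition, tensor product, and the (preserved) compact closed structure, $F' = F$ on all morphisms.
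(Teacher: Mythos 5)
The paper itself gives no proof of this theorem (it is imported from \cite{Coecke2008a,Coecke2012env,Coecke2012d}), so your proposal has to stand on its own; it does follow the standard construction from those references---define $F$ on the generators $\Phi(f)$ and $\trace{\Phi(A)}$, establish well-definedness by wire-bending plus the CPM axiom, then obtain functoriality, dagger compactness, bijectivity and uniqueness by generation---and the overall architecture is sound. In particular the bending step is correctly conceived: because the caps of $\CategoryC$ are themselves $\Phi$-images, bending the $B$ output of $(\id{\Phi(B)} \otimes \trace{\Phi(E_i)}) \circ \Phi(f_i)$ absorbs the cap into the pure part, leaving an equation of the form $\trace{\Phi(E_1)} \circ \Phi(g_1) = \trace{\Phi(E_2)} \circ \Phi(g_2)$ between effects, and since bending is invertible the whole reduction is an equivalence, which also yields the injectivity of $F$ on hom-sets that you claim. (Writing $\CPMdoubled{f_1'}$ for these intermediate morphisms of $\CategoryC$ is a notational slip---inside $\CategoryC$ the role of doubling is played by $\Phi$---but harmless.)

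The one step that fails as written is ``since $\Phi$ is a functor of dagger compact categories, this pulls back to the same equality in $\CategoryD$''. Functors preserve equations, they do not reflect them, and $\Phi$ is not assumed faithful; from $\Phi(g_1)^\dagger \circ \Phi(g_1) = \Phi(g_2)^\dagger \circ \Phi(g_2)$ in $\CategoryC$ you cannot infer $g_1^\dagger \circ g_1 = g_2^\dagger \circ g_2$ in $\CategoryD$. The repair is to locate the right-hand side of the CPM axiom correctly: in the theorem, $\CategoryD$ plays the role that the base category plays in Equation \ref{eqn_CPMaxiom}, so ``the CPM axiom is satisfied'' means precisely that $\trace{\Phi(E_1)} \circ \Phi(g_1) = \trace{\Phi(E_2)} \circ \Phi(g_2)$ holds in $\CategoryC$ if and only if $g_1^\dagger \circ g_1 = g_2^\dagger \circ g_2$ holds in $\CategoryD$. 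Read this way, the axiom hands you the equation in $\CategoryD$ directly, your ``pull back'' step is unnecessary, and the second application of the axiom (in $\CPMCategory{\CategoryD}$, where it holds by construction) completes well-definedness exactly as you intended; this reading is also what makes $F(\Phi(f)) := \CPMdoubled{f}$ well-posed without assuming $\Phi$ injective on morphisms. With that correction the remaining steps---compositionality via $\trace{\Phi(E')} \otimes \trace{\Phi(E)} = \trace{\Phi(E' \otimes E)}$, dagger preservation via Equation \ref{eqn_CPMdiscardingMapsAdjoints}, surjectivity from Equation \ref{eqn_CPMmorphism}, and uniqueness by generation---go through as you describe.
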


\subsection{Purification}

Observe that CP maps in $\CPMCategory{\CategoryC}$ arising as doubles of morphisms in $\CategoryC$ form a dagger compact subcategory $\DoubledCategory{\CategoryC}$: they are closed under parallel and sequential composition and dagger, and the cups and caps of $\CPMCategory{\CategoryC}$ arise themselves as doubled processes. In the case of mixed-state quantum mechanics, the \textbf{doubled category} $\DoubledCategory{\fdHilbCategory}$ corresponds to linear maps of Hilbert spaces \textit{up to global phase}: it is in fact this category, rather than $\fdHilbCategory$, that models pure-state quantum mechanics. More generally, we want to see $\DoubledCategory{\CategoryC}$ as a sub-category of pure processes within a larger category $\CPMCategory{\CategoryC}$ of mixed processes: for this to be meaningful, we need the doubled category to satisfy some core operational characteristics of purity in mixed-state quantum theory. 

Purity is a feature arising at the interface between quantum theory and thermodynamics~\cite{Chiribella2010,Chiribella2015}: pure processes can broadly be interpreted as not involving any probabilistic mixing due to non-trivial interactions with a discarded environment. Former work on purity has taken the following approach: purity is defined as a property, and the \textit{purification axiom} is formulated as the requirement that all processes be expressible as a combination of pure processes and discarding maps. Because CPM categories already come with the guarantee that all processes are expressible as combinations of doubled processes and discarding maps, we will turn things the other way around. We will say that a CPM category satisfies the \textbf{Purification axiom} if doubled processes (which we want to interpret as pure) satisfy the following abstract version of purity:
\begin{equation}\begin{multlined}\label{eqn_CPMpurificationAxiom}
\begin{multlined}
\input{pictures/chapter2/CPMpurificationAxiom.tikz}
\end{multlined}
\end{multlined}\end{equation}
with the additional requirement that $\trace{E} \circ \CPMdoubled{\psi} = 1$ (i.e. $\psi$ is a normalised state), or equivalently that $\braket{\psi}{\psi} = 1$ (by the CPM axiom). Operationally, the Purification axiom can be given the following interpretation: the only way a process $(\id{B} \otimes \trace{E}) \circ \CPMdoubled{f}$ involving the discarding of an environment $E$ can result in a pure process~$g$ is if the environment being discarded is disconnected altogether from the pure process (i.e. the interaction with the environment is \textbf{trivial}). In a category which satisfies the purification axiom, we will also refer to the doubled processes as \textbf{pure processes}, and to the doubled subcategory as \textbf{pure subcategory}.

The Purification axiom is a non-trivial requirement for CPM categories, and there are many inequivalent examples of CPM categories violating it. 
\begin{theorem}[\textbf{CPM categories violating the Purification axiom}]\hfill\\
Let $R$ be a commutative semiring with involution in which the multiplicative unit $1$ is additively idempotent\footnote{Examples include all locales (e.g. the booleans $\mathbb{B}$), all tropical semirings (e.g. the tropical semiring $(\reals,\min,+)$ or the Viterbi semiring $([0,1],\max,\times) \isom (\reals^+,\min,+)$) and all commutative unital quantales.}. Then $\CPMCategory{\RMatCategory{R}}$ violates the Purification Axiom. 
\end{theorem}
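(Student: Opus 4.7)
The plan is to exhibit an explicit counterexample to the Purification axiom in $\CPMCategory{\RMatCategory{R}}$, using the idempotency $1+1=1$ (which immediately upgrades to $r+r=r$ for every $r \in R$, since $r+r = r(1+1) = r\cdot 1 = r$) to build a non-product state whose CPM partial trace nevertheless coincides with a pure CP map.

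I would take $A = \tensorUnit$, $B := R^{\{0,1\}}$, $E := R^{\{a,b\}}$, and consider the candidate states
\[ g := \ket{0} + \ket{1} \in B, \qquad f := \ket{0,a} + \ket{0,b} + \ket{1,b} \in B \otimes E. \]
Note that $f$ is deliberately close to, but distinct from, the product $g \otimes (\ket{a} + \ket{b})$: the asymmetric zero $f(1,a) = 0$ is what will prevent factorisation, while the remaining entries will conspire via $1+1=1$ to produce a pure partial trace.

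The first step is to verify $(\id{B} \otimes \trace{E})\big(\CPMdoubled{f}\big) = \CPMdoubled{g}$ by direct computation. The entries of the left-hand matrix are $\sum_{e \in \{a,b\}} f(b_1,e) \cdot f(b_2,e)^\dagger$; using $1^\dagger = 1$ (automatic, since any semiring involution preserves the multiplicative unit), the $(0,1)$, $(1,0)$, and $(1,1)$ entries all evaluate to $1$ from a single non-zero summand. The decisive case is the $(0,0)$-entry, which is $1 + 1$, and this is where the idempotency hypothesis becomes essential: without it, the partial trace would fall outside the image of doubling. All four entries equal $1$, so the result is the all-ones $2\times 2$ matrix, which is exactly $\CPMdoubled{g}$.

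The second step is to show that $f$ is not of the form $g' \otimes \psi$ for any $g' \in B$ and $\psi \in E$. A factorisation would impose $g'(0)\psi(a) = g'(0)\psi(b) = g'(1)\psi(b) = 1$ and $g'(1)\psi(a) = 0$; the first equation forces $\psi(a)$ to be a unit of $R$, the third then forces $g'(1) = 0$, and the fourth contradicts this by giving $0=1$ (using only that $R$ is non-trivial, which is implicit in the statement). Since $f$ is not even a product state, it is \emph{a fortiori} not $g \otimes \psi$ for any normalised $\psi$, and the Purification axiom fails. The only genuinely creative part of the argument is the design of $f$: once the right pattern of $0$s and $1$s has been found so that the critical $(0,0)$-entry picks up the idempotent collapse while the asymmetric zero blocks factorisation, everything else reduces to short algebraic bookkeeping.
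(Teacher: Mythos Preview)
Your proof is correct and takes a more elementary route than the paper. Both arguments exploit the idempotency $1+1=1$ to make the partial trace of a doubled $\{0,1\}$-valued morphism collapse to something pure, but the specific counterexamples differ. The paper works with $A=B=R^X$ (for $|X|\geq 2$), takes $g=\id{R^X}$, and uses the environment $E=R^{\mathcal{P}^+(X)}$ indexed by non-empty subsets of $X$; the map $f$ sends $\ket{x}$ to $\ket{x}\otimes\sum_{U\ni x}\ket{U}$, and the identity $\sum_{U\ni x,y}1=1$ (guaranteed by idempotency since at least one such $U$ exists) yields $\CPMdoubled{\id{R^X}}$ after tracing. Your construction achieves the same effect with the trivial input $A=\tensorUnit$ and a two-dimensional environment, which makes the bookkeeping shorter and the counterexample essentially minimal. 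What the paper's version buys is a more structural picture (the powerset indexing shows the phenomenon scales uniformly in $|X|$), while yours isolates the bare combinatorial core: one asymmetric zero to block factorisation, and one double-counted entry to absorb via $1+1=1$.

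One small slip in your write-up: in the non-factorisation argument the labels ``third'' and ``fourth'' are swapped. From the first equation you correctly get that $\psi(a)$ is a unit; it is then the \emph{fourth} equation $g'(1)\psi(a)=0$ that forces $g'(1)=0$ (cancel the unit $\psi(a)$), and the \emph{third} equation $g'(1)\psi(b)=1$ that delivers the contradiction $0=1$. The logic is sound once the labels are corrected.
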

\begin{proof}
Let $X$ be any finite set with at least two elements, and let $\mathcal{P}^+(X)$ be the set of non-empty subsets of $X$. To obtain a counterexample to the Purification axiom \ref{eqn_CPMpurificationAxiom}, we construct an $f: R^X \rightarrow R^X \otimes R^{\mathcal{P}^+(X)}$ such that $(\id{R^X} \otimes \trace{R^{\mathcal{P}^+(X)}}) \circ \CPMdoubled{f} = \CPMdoubled{\id{R^X}}$ but $f$ does not decompose as $\id{R^X} \otimes \psi$ for any state $\psi: R \rightarrow R^{\mathcal{P}^+(X)}$. Indeed, consider the map $f$ defined as follows:
\begin{equation}
f := \sum_{U \in \mathcal{P}^+(X)} \sum_{x \in U} \ket{x} \otimes \ket{U} \otimes \bra{x}
\end{equation}
Note that this map cannot decompose as $\id{R^X} \otimes \psi$ for any state $\psi$:
\begin{equation}
f \ket{x} = \ket{x} \otimes \Big(\sum_{U \in \mathcal{P}^+(X)} \delta_{x \in U} \ket{U} \Big)
\end{equation}
Now consider its doubled version $\CPMdoubled{f}$, and discard the system $R^{\mathcal{P}^+(X)}$ to obtain the following morphism $R^X \rightarrow R^X$ of $\CPMCategory{\RMatCategory{R}}$, which we write down explicitly as a morphism $R^X \otimes R^X \rightarrow R^X \otimes R^X$ of $\RMatCategory{R}$ by using the expression $\trace{R^{\mathcal{P}^+(X)}} = \sum_{U \in \mathcal{P}^+(X)} \CPMdoubled{\bra{U}}$ for the discarding map on $R^{\mathcal{P}^+(X)}$:
\begin{align}
&\sum_{x,y \in X} \sum_{U \in \mathcal{P}^+(X)} \delta_{x,y \in U} \big(\ket{x} \otimes \ket{y}\big) \otimes \big(\bra{x} \otimes \bra{y}\big) \nonumber \\
= &\sum_{x,y \in X} \big(\sum_{U \in \mathcal{P}^+(X)} \delta_{x,y \in U}\big) \big(\ket{x} \otimes \ket{y}\big) \otimes \big(\bra{x} \otimes \bra{y}\big) \nonumber\\
= &\sum_{x,y \in X} \big(\ket{x} \otimes \ket{y}\big) \otimes \big(\bra{x} \otimes \bra{y}\big)  = \CPMdoubled{\id{R^X}}
\end{align}
where the second equality follows from the fact that $\sum_{U \in \mathcal{P}^+(X)} \delta_{x,y \in U} = 1$, since $1$ is additively idempotent and there is at least one $U$ such that $x,y \in U$ (furthermore, we have used the fact that $0^\dagger 0 = 0$ and $1^\dagger 1 = 1$, so that $\delta_{x,y \in U}^\dagger \delta_{x,y \in U} = \delta_{x,y \in U}$). This concludes our proof.
\end{proof}



\section{Coherent data manipulation}
\label{section_CoherentDataManipulation}

\subsection{Dagger Frobenius algebras}
Frobenius algebras are a fundamental ingredient of quantum theory, where they are intimately related to the notion of observable. Consider a monoid $(A,\!\hbox{\input{symbols/ZbwmultSym.tex}}\!\!,\!\hbox{\input{symbols/ZbwunitSym.tex}}\!\!)$ on an object $A$ of a $\dagger$-SMC $\CategoryC$: a binary operation $\!\hbox{\input{symbols/ZbwmultSym.tex}}\!\!: A \otimes A \rightarrow A$ (the \textbf{multiplication}) which is associative and has the state $\!\hbox{\input{symbols/ZbwunitSym.tex}}\!\!: \tensorUnit \rightarrow A$ (the \textbf{unit}) as its bilateral unit. We will refer to this fact by saying that $\!\hbox{\input{symbols/ZbwmultSym.tex}}\!\!$ and $\!\hbox{\input{symbols/ZbwunitSym.tex}}\!\!$ satisfy \textbf{associativity}, or the \textbf{associative law}, and the \textbf{left/right unit laws}. Then the adjoints $\!\hbox{\input{symbols/ZbwcomultSym.tex}}\!\! := (\!\hbox{\input{symbols/ZbwmultSym.tex}}\!\!)^\dagger: A \rightarrow A \otimes A$ (the \textbf{comultiplication}) and $\!\hbox{\input{symbols/ZbwcounitSym.tex}}\!\! := (\!\hbox{\input{symbols/ZbwunitSym.tex}}\!\!)^\dagger : A \rightarrow \tensorUnit$ (the \textbf{counit}) automatically form a comonoid on $A$ in $\CategoryC$ (i.e. they satisfy \textbf{coassociativity} and the \textbf{left/right counit laws}). A \textbf{$\dagger$-Frobenius algebra} on an object $A$ in $\CategoryC$ is one such pair of monoid and comonoid on $A$, which are furthermore related by the following \textbf{Frobenius law}:
\begin{equation}\begin{multlined}\label{eqn_FrobeniusLaw}
\input{pictures/chapter2/FrobeniusLaw.tikz}
\end{multlined}\end{equation}
A $\dagger$-Frobenius algebra is said to be \textbf{special} if the comultiplication $\!\hbox{\input{symbols/ZbwcomultSym.tex}}\!\!$ is an isometry, and \textbf{commutative} if the monoid and comonoid are commutative: 
\begin{equation}\begin{multlined}\label{eqn_specialCommutativeLaws}
\input{pictures/chapter2/specialCommutativeLaws.tikz}
\end{multlined}\end{equation}
More generally, a \textbf{quasi-special} $\dagger$-Frobenius algebra is one with comultiplication $\!\hbox{\input{symbols/ZbwcomultSym.tex}}\!\!$ which is an isometry up to a \textbf{normalisation factor} $N$, where $N$ is in the form $n^\dagger n$ for some invertible scalar $n$:
\begin{equation}\begin{multlined}\label{eqn_quasiSpecialLaw}
\input{pictures/chapter2/quasiSpecialLaw.tikz}
\end{multlined}\end{equation}
Speciality corresponds to the particular case of quasi-speciality in which $N=1$. By normalisation, any quasi-special Frobenius algebra corresponds to a unique special Frobenius algebra: as such, quasi-speciality can be used in place of speciality to simplify some definitions and results, without causing any issue of interpretation.

Because several combinations of these properties will play a role in this work, we introduce a number of short-hands for $\dagger$-Frobenius algebras:
\begin{center}
\begin{tabular}{c | c | c}
\textbf{$\dagger$-Frobenius algebras} 	& commutative 		& arbitrary 		\\ \hline
special    								& $\dagger$-SCFA 	& $\dagger$-SFA		\\ \hline
quasi-special  							& $\dagger$-qSCFA	& $\dagger$-qSFA	\\ \hline
arbitrary 								& $\dagger$-CFA		& $\dagger$-FA		\\ \hline
\end{tabular}
\end{center}

\subsection{Quantum observables}

The importance of $\dagger$-SCFAs in the foundations of quantum mechanics comes from the fact that they correspond to orthonormal bases, i.e. non-degenerate quantum observables. Key to this correspondence is the notion of \textbf{classical states} for a $\dagger$-FA~$\hbox{\input{symbols/ZbwdotSym.tex}}\!\!$, those states $\psi$ which are copied/adjoined/deleted by $\hbox{\input{symbols/ZbwdotSym.tex}}\!\!$ in the following sense:
\begin{equation}\begin{multlined}\label{eqn_classicalState}
\resizebox{\textwidth}{!}{\input{pictures/chapter2/classicalState.tikz}}
\end{multlined}\end{equation}
Note that the RHS of the delete condition is the scalar 1.
\begin{theorem}[\cite{Coecke2013b}]
In $\fdHilbCategory$, the classical states for a $\dagger$-SCFA $\hbox{\input{symbols/ZbwdotSym.tex}}\!\!$ always form an orthonormal basis\footnote{It should also be noted that the copy and delete conditions are sufficient to characterise classical states in the case of $\fdHilbCategory$.}. Furthermore, any orthonormal basis arises this way for a unique $\dagger$-SCFA. More generally, if $\hbox{\input{symbols/ZbwdotSym.tex}}\!\!$ is a $\dagger$-qSCFA, with normalisation factor $N$, then the classical states for $\hbox{\input{symbols/ZbwdotSym.tex}}\!\!$ form an orthogonal basis, each state having norm $\sqrt{N}$. Furthermore, any orthogonal basis where all states have the same norm $\sqrt{N}$ arises this way for a unique $\dagger$-qSCFA.
\end{theorem}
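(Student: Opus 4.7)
\textbf{The plan is} to treat the two directions of the equivalence separately. For the easy direction (any orthogonal basis with common norm $\sqrt{N}$ arises as the classical states of a unique $\dagger$-qSCFA with normalisation factor $N$), I would take the basis $\{|e_k\rangle\}_k$ as input, declare $\ZbwmultSym(|e_i\rangle\otimes|e_j\rangle) := \delta_{ij}\,N\,|e_i\rangle$ and $\ZbwunitSym := \tfrac{1}{N}\sum_k |e_k\rangle$, and let the dagger fix $\ZbwcomultSym$ and $\ZbwcounitSym$. Commutativity, associativity, the unit laws and the Frobenius law then reduce to bookkeeping on the basis, and a direct computation gives $\ZbwcomultSym|e_k\rangle = |e_k\rangle\otimes|e_k\rangle$ and $\ZbwmultSym\circ\ZbwcomultSym = N\,\id{A}$, delivering quasi-speciality with factor $N$ in the sense of Equation~\ref{eqn_quasiSpecialLaw}. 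By construction the $|e_k\rangle$ satisfy the copy, adjoin, and delete conditions of Equation~\ref{eqn_classicalState}. Uniqueness is immediate: in a commutative unital algebra the multiplication table is determined once one knows the classical states, and the unit is in turn determined as the appropriate rescaling of their sum (read off from the copy plus quasi-speciality conditions).

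\textbf{Orthogonality and the norm condition.} For the converse, let $\psi$ and $\phi$ be classical states of a $\dagger$-qSCFA $\ZbwdotSym$ with factor $N$. The idea is to insert the identity $\id{A} = \tfrac{1}{N}\,\ZbwmultSym\,\ZbwcomultSym$ between $\psi^\dagger$ and $\phi$, then slide $\ZbwmultSym$ upward onto $\psi^\dagger$ using the adjoint of copy ($\psi^\dagger\ZbwmultSym = \psi^\dagger\otimes\psi^\dagger$) and $\ZbwcomultSym$ downward onto $\phi$ using copy itself, arriving at the scalar identity
\begin{equation*}
\psi^\dagger \phi \;=\; \tfrac{1}{N}\,\bigl(\psi^\dagger \phi\bigr)^2.
\end{equation*}
Hence $\psi^\dagger\phi\in\{0,N\}$. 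Setting $\psi=\phi$ together with $\psi\ne 0$ (guaranteed by $\ZbwcounitSym\psi=1$) forces $\|\psi\|^2 = N$; for $\psi\ne\phi$ the value $\psi^\dagger\phi = N$ would saturate Cauchy--Schwarz, making $\psi$ and $\phi$ proportional, and equal norms then force $\psi=\phi$. So the classical states form an orthogonal family, each of norm $\sqrt{N}$.

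\textbf{Completeness.} The substantive step is spanning. For each $v\in A$ define the left-multiplication operator $L_v := \ZbwmultSym\circ(v\otimes\id{A}):A\to A$. Commutativity of $\ZbwmultSym$ gives $[L_v,L_w]=0$ for all $v,w$, while the $\dagger$-Frobenius relations express $L_v^\dagger$ as $L_{v'}$ for the vector $v'$ obtained by bending $v^\dagger$ round through the cup built from $\ZbwdotSym$; thus $\{L_v\}_{v\in A}$ is a commuting family of normal operators on the finite-dimensional Hilbert space $A$. Simultaneous diagonalisation yields a common orthogonal eigenbasis $\{|f_k\rangle\}$ with characters $\chi_k:A\to\mathbb{C}$ such that $L_v|f_k\rangle=\chi_k(v)|f_k\rangle$. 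Unwinding this eigenvalue identity diagrammatically is equivalent to $\ZbwcomultSym|f_k\rangle$ being proportional to $|f_k\rangle\otimes|f_k\rangle$; a rescaling pinned down by the delete condition then promotes each $|f_k\rangle$ to a genuine classical state. Since the $|f_k\rangle$ already span $A$, so do the classical states.

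\textbf{Main obstacle.} The orthogonality and norm calculation is three lines of diagrammatic algebra; the honest work is the completeness step. The crux there is translating between the abstract Frobenius picture and the concrete operator-algebra picture cleanly enough to invoke simultaneous diagonalisation: one must verify that $L_v^\dagger$ really is again a left-multiplication $L_{v'}$ for some $v'\in A$, which is what turns the $L_v$ from merely mutually commuting into \emph{normal} operators. A secondary subtlety is matching the normalisations that come out of the spectral theorem with the scales demanded by the copy, adjoin and delete conditions, so that the common eigenvectors really do become classical states on the nose rather than only after a phase-and-scale adjustment.
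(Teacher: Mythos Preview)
The paper does not give its own proof of this theorem: it is stated with a citation to \cite{Coecke2013b} (Coecke--Pavlovic--Vicary) and used as a black box. So there is no paper-side proof to compare against directly; the relevant comparison is with the original CPV argument.

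Your proposal is correct in outline and in fact tracks the CPV proof closely. The orthogonality/norm computation via $\psi^\dagger\phi=\tfrac{1}{N}(\psi^\dagger\phi)^2$ is the standard trick, and your completeness step---embedding $A$ into $\mathrm{End}(A)$ via $v\mapsto L_v$, showing this image is a commuting $\ast$-closed family of operators, and simultaneously diagonalising---is exactly the CPV strategy (they phrase it as exhibiting a finite-dimensional commutative C*-algebra and invoking Gelfand duality, which amounts to the same thing). The one point where you should be slightly more explicit is the verification that $(L_v)^\dagger$ lies in the image of $v\mapsto L_v$: the diagrammatic identity you want is that, for a \emph{commutative} $\dagger$-FA, $(v^\dagger\otimes\id{})\ZbwcomultSym=\ZbwmultSym(v'\otimes\id{})$ with $v'=(v^\dagger\otimes\id{})(\ZbwcomultSym\ZbwunitSym)$, which follows from one application of the Frobenius law together with commutativity. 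Once that is pinned down, normality is immediate and the spectral step goes through. The rescaling bookkeeping you flag as a ``secondary subtlety'' is genuinely routine: from $\ZbwmultSym(f_k\otimes f_k)=\chi_k(f_k)f_k$ and the counit condition one reads off the unique scaling making each common eigenvector a classical state. In the Cauchy--Schwarz step, note that $\psi^\dagger\phi=N$ is real and positive, so saturation gives $\phi=\psi$ on the nose, not merely up to phase.
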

\noindent The concept of classical states forming a basis is generalised to arbitrary $\dagger$-SMC by the notion of enough classical states. A $\dagger$-FA $\hbox{\input{symbols/ZbwdotSym.tex}}\!\!$ on an object $A$ is said to have \textbf{enough classical states} if its classical states separate morphisms from $A$ (i.e. any two morphisms $f,g: A \rightarrow B$ are equal whenever $f \circ \psi = g \circ \psi$ for all classical states $\psi$ of $\hbox{\input{symbols/ZbwdotSym.tex}}\!\!$). Because of the copy condition, a $\dagger$-FA with enough classical states is always necessarily commutative\footnote{To see this, just compose both sides of the commutativity equation with an arbitrary classical state and copy the state through, obtaining the same result on both sides.}. 

This algebraic characterisation of quantum observables is not limited to the non-degenerate case of orthonormal bases, but can be extended to the more general case of complete families of orthogonal projectors (i.e. to possibly degenerate quantum observables). To do so, one considers \textbf{symmetric} $\dagger$-Frobenius algebras\footnote{Commutative $\dagger$-FAs are a special case of symmetric $\dagger$-FAs.}, i.e. those satisfying the following equation:
\begin{equation}\begin{multlined}\label{eqn_balancedSymmetric}
\input{pictures/chapter2/balancedSymmetric.tikz}
\end{multlined}\end{equation}
Independently of their relevance to Theorem \ref{thm_FrobeniusCStar} below, symmetric $\dagger$-Frobenius algebras have the desirable feature that the inner product structure (the cup and cap) that they define is symmetric. As a consequence, the adjoin condition holds both in the formulation of Equation \ref{eqn_classicalState} and in the \inlineQuote{conjugate} formulation, the one having the state on the other side of the symmetric cap.

\begin{theorem}[\cite{Vicary2011}]\label{thm_FrobeniusCStar}
In $\fdHilbCategory$, symmetric $\dagger$-SFAs are in bijective correspondence with C*-algebras, and hence with complete families of orthogonal projectors.
\end{theorem}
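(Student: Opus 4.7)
The plan is to exhibit two constructions, one in each direction, and then argue they are mutually inverse; the bijection with complete families of orthogonal projectors will follow from the Artin–Wedderburn classification of finite-dimensional C*-algebras.

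For the forward direction, I would start with a finite-dimensional C*-algebra $\mathcal{A}$. Any such algebra admits a faithful tracial state (e.g. the sum of normalised traces on each Wedderburn block), and I use this to equip the underlying vector space with the Hilbert-Schmidt-style inner product $\langle a,b\rangle := \mathrm{tr}(a^\ast b)$. The multiplication $m:\mathcal{A}\otimes\mathcal{A}\to\mathcal{A}$ and unit $u:\complexs\to\mathcal{A}$ are then morphisms in $\fdHilbCategory$, and defining $\ZbwcomultSym := m^\dagger$ and $\ZbwcounitSym := u^\dagger$ produces a $\dagger$-FA. The Frobenius law follows from the \emph{tracial} property $\mathrm{tr}(abc)=\mathrm{tr}(cab)$, which is exactly the content of the snake diagram in Equation~\ref{eqn_FrobeniusLaw} when both sides are paired with arbitrary states under $\langle \cdot,\cdot\rangle$; the symmetry equation~\ref{eqn_balancedSymmetric} unpacks to $\mathrm{tr}(ab)=\mathrm{tr}(ba)$, which holds for the same reason; speciality (up to rescaling the trace) is a straightforward dimension count per block.

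For the reverse direction, I would start with a symmetric $\dagger$-SFA $(A,\ZbwmultSym,\ZbwunitSym)$ on a finite-dimensional Hilbert space $A$. The key is to define an involution. Using the symmetric cup/cap induced by the $\dagger$-compact structure together with the Frobenius data, I define $a^\ast$ on states $a:\tensorUnit\to A$ by composing $a^\dagger:A\to\tensorUnit$ with the symmetric cup and a unitor, obtaining a new state $A^\ast\!\cong A\to\tensorUnit\to A$. Associativity, the unit laws, and involutivity follow from the monoid laws and $(\cdot)^{\dagger\dagger}=\mathrm{id}$. To obtain a C*-algebra I consider the left regular representation $L:A\to B(A)$, $L_a(b):=\ZbwmultSym(a\otimes b)$. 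The Frobenius law combined with the symmetry equation yields the adjunction identity $\langle L_a b, c\rangle = \langle b, L_{a^\ast} c\rangle$, so $L_{a^\ast}=(L_a)^\dagger$ in $B(A)$; hence $L$ is a $\ast$-homomorphism, and injectivity is forced by the unit (if $L_a=0$ then $a=L_a(u)=0$). The image is a $\ast$-subalgebra of $B(A)$, hence a C*-algebra, and the C*-norm on $A$ is the one pulled back along $L$.

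The main obstacle is verifying that these two constructions are genuinely mutually inverse rather than merely producing equivalent data: starting from a C*-algebra $\mathcal{A}$, applying the forward construction and then the reverse gives back $\mathcal{A}$ with its algebra multiplication, but one must check that the involution recovered via the Frobenius structure coincides with the original $\ast$-operation. This comes down to showing that the symmetric cap arising from the trace pairing is precisely $(a,b)\mapsto \mathrm{tr}(ab)$, so that the induced conjugation agrees with the C*-involution. In the opposite order, one must check that the C*-norm and inner product reconstructed on the nose return the same $\ZbwmultSym,\ZbwunitSym$; uniqueness of the C*-norm on a finite-dimensional $\ast$-algebra handles this, provided one first verifies that the tracial functional $\ZbwcounitSym\circ\ZbwmultSym$ is actually positive and faithful on $A$, which is where speciality of the Frobenius algebra is essential. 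Finally, the promised bijection with complete families of orthogonal projectors is obtained by Artin–Wedderburn: any finite-dimensional C*-algebra decomposes uniquely as $\bigoplus_i M_{n_i}(\complexs)$, and such a decomposition is equivalent to a complete family of pairwise orthogonal projectors (the minimal central idempotents, together with their internal multiplicities), giving the final link in the chain.
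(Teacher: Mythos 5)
Your proposal is correct and takes essentially the same route as the paper and its cited source: the paper's core observation, the monoid homomorphism of Diagram \ref{diagram_FrobeniusToCStar} into the matrix algebra, is precisely your left regular representation $a \mapsto L_a$ internalised via operator--state duality, and the passage to complete families of orthogonal projectors is likewise via Artin--Wedderburn. The only point to tighten is the round trip starting from a symmetric $\dagger$-SFA: uniqueness of the C*-norm does not by itself recover the inner product (which is data beyond the norm); what pins it down is that speciality and symmetry force the counit to be the weighted trace $a \mapsto \sum_i n_i \mathrm{Tr}(a_i)$ on the Wedderburn blocks, whereas positivity and faithfulness of the counit are automatic from the ambient Hilbert-space structure rather than consequences of speciality.
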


\noindent The core observation in the proof of Theorem \ref{thm_FrobeniusCStar} is that the following map $A \rightarrow A^\ast \otimes A$ is in fact a monoid homomorphism from the algebra $(A,\!\hbox{\input{symbols/ZbwmultSym.tex}}\!\!,\!\hbox{\input{symbols/ZbwunitSym.tex}}\!\!)$ to the matrix algebra on $A^\ast \otimes A$:
\begin{equation}\begin{multlined}\label{diagram_FrobeniusToCStar}
\input{pictures/chapter2/FrobeniusToCStar.tikz}
\end{multlined}\end{equation}
Elements of the algebra $(A,\!\hbox{\input{symbols/ZbwmultSym.tex}}\!\!,\!\hbox{\input{symbols/ZbwunitSym.tex}}\!\!)$ are sent to operators $p: A \rightarrow A$ (or, to be precise, to the corresponding states $\morphismName{p} : \tensorUnit \rightarrow A^\ast \otimes A$ under compact closure): in fact, as Theorem \ref{thm_FrobeniusCStar} shows, they are sent to a C*-algebra of operators $A \rightarrow A$. Elements which are central for the algebra get sent to operators which commute with all other operators in the image, elements which are self-adjoint get sent to self-adjoint operators, and elements which are idempotent get sent to idempotent operators:
\begin{equation}\begin{multlined}\label{eqn_centralSelfadjointProjections}
\resizebox{\textwidth}{!}{\input{pictures/chapter2/centralSelfadjointProjections.tikz}}
\end{multlined}\end{equation}
In particular, the central self-adjoint idempotents of the algebra  $(A,\!\hbox{\input{symbols/ZbwmultSym.tex}}\!\!,\!\hbox{\input{symbols/ZbwunitSym.tex}}\!\!)$ are mapped to the central projectors of the image C*-algebra. The non-zero minimal\footnote{Under the partial order defined by letting $p \preceq q$ if and only if $p = q + s$ for some $s$ which is mapped to a positive self-adjoint operator.} central self-adjoint idempotents are mapped to the unique complete family of orthogonal projectors $(p_j)_j$ corresponding to the C*-algebra (i.e. the one given by the \textit{Artin–--Wedderburn theorem for finite-dimensional C*-algebras}).

\subsection{A brief digression on observables}
\label{subsection_briefDigressionObservables}

In the traditional presentation of pure-state quantum mechanics, observables are identified with self-adjoint operators. This is mainly because the latter have two extremely useful features: (i) the eigenspaces of a self-adjoint operator form a complete family of orthogonal subspaces; (ii) the eingenvalues of a self-adjoint operator are real, and automatically possess the linear structure necessary to treat them as the values of a random variable. As a consequence of these features, it is always possible to see a self-adjoint operator as defining a unique measurement with real-valued outcomes, with each eigenspace corresponding to a definite outcome for the measurement.

This identification is slick and full of useful consequences\footnote{For example, it leads to the identification of $\bra{\psi}H\ket{\psi}$ as the expected value of the measurement corresponding to $H$ on a pure state $\ket{\psi}$.}, but at the end of the day it is no more than a trick, or a fortuitous coincidence. To explain why one should not take the identification of observables and self-adjoint operators too seriously, we lay down the following issues.
\begin{itemize}
	\item Self-adjoint operators in quantum mechanics correspond to measurements with real-valued outcomes. While many examples of naturally real-valued observables exist in the physical literature\footnote{For example, position/momentum of unbounded particles, energy, number, etc.}, this is in no way general: to fit other measurements within this framework, a potentially unnatural identification of measurement outcomes with real values will be needed. 
	\item Even vector-valued measurements cannot be accommodated directly by self-adjoint operators: instead, another slick trick is needed, where families of self-adjoint operators are considered, each operator associated to an individual vector coordinate.
	\item While measurement of states in probabilistic theories always result in probability distributions over the classical outcomes, they need not yield a real-valued random variable, and a notion of expectation might not be well defined for them. On the other hand, self-adjoint operators always yield a notion of expected value on the real values associated to the measurement outcomes, which might be spurious at best and misleading at worst.
	\item The identification is not a defining property of self-adjoint operators, instead relying on the Spectral Theorem (a heavyweight result in linear algebra) for its entire justification. The identification of observables as the self-adjoint generators of unitary symmetries requires Stone's Theorem (another heavyweight result).
\end{itemize}

\noindent As an example of a situation in which self-adjoint operators are unsuitable, we consider the position observable for a periodic lattice, which we can think of as valued in the translation group $G = \prod_{d=1}^D\integersMod{n_d}$ (just like the usual position observable for a 1-dim wavefunction is valued in the translation group $\reals$). Because there is no group homomorphism $\prod_{d=1}^D\integersMod{n_d} \rightarrow \reals^D$, there is no natural way to identify the position observable with a family $(x_d)_{d=1}^D$ of self-adjoint operators.

We could try to extract one such identification by considering the family of self-adjoint operators $(\textbf{x}_d)_{d=1}^D$ which exponentiates to the unitary representation $(V_{\goodchi_p})_{p \in G}$ of the boost symmetry in the following way:
\begin{equation}
(V_{\goodchi_p}\big)_d = e^{2\pi i \frac{p_d \textbf{x}_d}{n_d}}
\end{equation}
This attempt, which in the traditional case shows that the position observable generates the boost symmetry, cannot work here: there are infinitely many equivalent families $(\textbf{x}_d)_{d=1}^D$ which satisfy the equation above, corresponding to the infinitely many possible choices of representatives in $\integers$ (which is a subset of the reals) for the congruence classes modulo $n_d$. Furthermore, the periodic nature of positions means that there is no well-defined notion of expected value, and the one arising from any given choice of representatives in $\reals$ is misleading. 

The objections above will turn out to be extremely pertinent to our work, and we will therefore reject the identification of observables and self-adjoint operators. When talking about observables, we will be referring to the CQM notion of observables as (special) $\dagger$-Frobenius algebras.

\subsection{Coherent data manipulation}

Further to their correspondence with quantum observables, $\dagger$-Frobenius algebras find concrete use as fundamental building blocks of quantum algorithms and protocols \cite{Vicary2012a,Boixo2012,Coecke2016a,Gogioso2016d}. When designing quantum protocols, classical data is often encoded into quantum systems using orthonormal bases. In this context, the four processes forming a special $\dagger$-SCFA can be seen as the abstract, \inlineQuote{coherent} versions of some basic data manipulation primitives:
\begin{enumerate}
	\item[(a)] the comultiplication $\!\!\hbox{\input{symbols/ZbwcomultSym.tex}}\!\!\!= \ket{\psi_x} \mapsto \ket{\psi_x} \otimes \ket{\psi_x}$ acts as coherent copy;
	\item[(b)] the counit $\!\!\hbox{\input{symbols/ZbwcounitSym.tex}}\!\!\! = \ket{\psi_x} \mapsto 1$ acts as coherent deletion;
	\item[(c)] the multiplication $\!\!\hbox{\input{symbols/ZbwmultSym.tex}}\!\!\! = \ket{\psi_x} \otimes \ket{\psi_y} \mapsto \delta_{xy}\ket{\psi_x}$ acts as a coherent matching;
	\item[(d)] the unit $\!\!\hbox{\input{symbols/ZbwunitSym.tex}}\!\!\! = \sum_x \ket{\psi_x}$ acts as coherent superposition (up to normalisation).
\end{enumerate} 
These \inlineQuote{coherent} operations seldom appear alone, but are instead composed amongst themselves and with other primitives to form unitaries and CP maps appearing in quantum algorithms and protocols (as shown in the next Chapters). 

When talking about \textbf{coherent data}, we will be thinking of classical data, valued in some finite set $X$, which has been \textbf{coherently encoded} into an orthonormal basis $\ket{x}_{x \in X}$ of some finite-dimensional Hilbert space $\SpaceH$. Having fixed coherent encodings of its input and output data into orthonormal bases $\ket{x}_{x \in X}$ and $\ket{y}_{y \in Y}$ of Hilbert spaces $\SpaceH$ and $\SpaceK$, when talking about the \textbf{coherent} encoding of a classical function $F: X \rightarrow Y$ we will be thinking of its $\complexs$-linear map extension to $\SpaceH \rightarrow \SpaceK$:
\begin{equation}\label{coherentFunction}
f := \sum_{x \in X} \sum_{y \in Y} \ket{F(x)}\bra{x}
\end{equation}
The term \textbf{coherent} is chosen as the opposite of \textbf{decohered}, a term which we will use to denote classical data and processes (which we see as arising from quantum operations by appropriate decoherence). This choice of nomenclature is self-consistent: the original classical function $F$ (the decohered version) is obtained back by decohering the $\complexs$-linear map $f$ from Equation \ref{coherentFunction} (the coherent version) in the orthonormal bases that were used to encode the classical input and output data.

Coherent encodings of classical functions play a huge role in quantum computing and in the foundations of quantum theory: they are used to construct oracles (e.g. in the Deutsch-Josza algorithm, in Grover's algorithm and in the algorithm solving the abelian HSP), Frobenius algebras (see above), groups algebras (see below), entangled states (e.g. unnormalised Bell states, GHZ states and W states), and many other building blocks used to design quantum protocols. This is because coherent data enjoys all those features of $\complexs$-linearity (such as superposition, interference and non-commutative observables) which coalesce to provide quantum advantage and non-classical behaviour, while at the same time allowing for a good deal of classical intuition to play a positive role in designing quantum algorithms. A significant part of the development of CQM has been devoted to capturing the defining algebraic and diagrammatic properties of coherent manipulation of classical data, with the aim of designing quantum protocols and reasoning about them without explicit reference to the $\complexs$-linear structure itself. 

The defining properties of $\dagger$-SCFAs---one of the most fundamental structures in CQM---can indeed be seen as characterising the topological flow of classical information. As such, $\dagger$-SCFAs in arbitrary $\dagger$-SMCs are often interpreted as modelling coherent copy/deletion/matching operations on data which has been coherently encoded using their classical states, and are known as \textbf{classical structures} in the literature (a characterisation which becomes exact when the $\dagger$-SCFA has enough classical states). Because non-degenerate observables in quantum mechanics correspond exactly to all the possible ways that classical data can be coherently encoded into a quantum system, $\dagger$-qSCFAs are chosen in CQM to be the generalisation of non-degenerate observables and commutative C*-algebras from finite-dimensional quantum systems to arbitrary $\dagger$-SMCs. As a natural consequence, symmetric $\dagger$-qSFAs are chosen to be the generalisation of generic observables and C*-algebras.\footnote{The generalisation further extends to quasi-special $\dagger$-Frobenius algebras, which are are interpreted as the unnormalised versions of observables.}

If classical states for a symmetric $\dagger$-qSFA are interpreted as coherently encoding classical data in an object of a $\dagger$-SMC, it becomes interesting to understand which processes can be interpreted as coherently encoding classical functions on that data. Classical states are defined by three conditions: they are coherently copied, adjoined and deleted. Similarly, we should expect the coherent versions of classical functions to respect the same coherent copy, adjoin and delete operations that define the states, so that they will end up mapping classical states to classical states. As a consequence, we will say that a process $f: A \rightarrow B$ is \textbf{$\hbox{\input{symbols/ZbwdotSym.tex}}\!\!$-to-$\hbox{\input{symbols/YbwdotSym.tex}}\!\!$ classical}, where $\hbox{\input{symbols/ZbwdotSym.tex}}\!\!$ and $\hbox{\input{symbols/YbwdotSym.tex}}\!\!$ are symmetric $\dagger$-qSFAs on $A$ and $B$ respectively, if it satisfies the following three conditions:
\begin{equation}\begin{multlined}\label{eqn_classicalMap}
\resizebox{\textwidth}{!}{\input{pictures/chapter2/classicalMap.tikz}}
\end{multlined}\end{equation}
A $\hbox{\input{symbols/ZbwdotSym.tex}}\!\!$-to-$\hbox{\input{symbols/YbwdotSym.tex}}\!\!$ classical process always sends classical states of $\hbox{\input{symbols/ZbwdotSym.tex}}\!\!$ to classical states of $\hbox{\input{symbols/YbwdotSym.tex}}\!\!$. In $\fdHilbCategory$, the $\hbox{\input{symbols/ZbwdotSym.tex}}\!\!$-to-$\hbox{\input{symbols/YbwdotSym.tex}}\!\!$ classical process between $\dagger$-qSCFAs are exactly those taking the form of Equation \ref{coherentFunction}, where $\ket{x}_{x \in X}$ and $\ket{y}_{y \in Y}$ are the orthonormal bases associated with $\hbox{\input{symbols/ZbwdotSym.tex}}\!\!$ and $\hbox{\input{symbols/YbwdotSym.tex}}\!\!$ respectively.

\subsection{Bell states and effects}

The adjoin condition for classical processes involves a setup which is strongly reminiscent of the operator-state duality induced by the compact closed structure:
\begin{equation}\begin{multlined}\label{diagram_FrobeniusOperatorStateDuality}
\input{pictures/chapter2/FrobeniusOperatorStateDuality.tikz}
\end{multlined}\end{equation}
This is not a coincidence. Indeed, any symmetric $\dagger$-FA on a system $A$ in a $\dagger$-SMC induces a \textbf{symmetric cup} (also known as a \textbf{Bell state}) and a \textbf{symmetric cap} (also known as a \textbf{Bell effect}) on $A$:
\begin{equation}\begin{multlined}\label{diagram_FrobeniusCupCap}
\input{pictures/chapter2/FrobeniusCupCap.tikz}
\end{multlined}\end{equation}
Just like the symmetric cup and cap for a self-duality structure, the ones above satisfy the yanking equations (because of Frobenius law and unit laws), and Equations \ref{eqn_symmetryEquationsSD} (because of symmetry). 
When the category is dagger compact, the symmetric cup and cap defined by a $\dagger$-FA on a system $A$ also satisfy Equations \ref{eqn_cupCapSD} for the \textbf{self-duality isomorphism} $h_A : A \rightarrow A^\ast$ defined as follows:
\begin{equation}\begin{multlined}\label{diagram_FrobeniusSelfDualityIsom}
\input{pictures/chapter2/FrobeniusSelfDualityIsom.tikz}
\end{multlined}\end{equation}

\subsection{Observables in classical physics}

From the discussion above, it might sound like $\dagger$-FAs are restricted to modelling quantum observables, and are unsuitable to model classical observables: luckily, this couldn't be further from the truth. 

An observable on a classical system $X$ can be though to be a partition $X = \sqcup_{i \in I} X_i$ of its set of \textbf{microstates} $X$ into non-empty disjoint subsets $(X_i)_{i \in I}$, the \textbf{macrostates}, indexed by some classical outcome set $I$. However, a complete picture should also take into account the fact that different microstates $x \in X_i$ within the same macrostate $X_i$ are connected by certain internal symmetries, which can be modelled by a connected groupoid \cite{Bar2014}. Overall, we can think of an observable on a classical system $X$ as a groupoid\footnote{In this subsection, and only in this subsection, the symbol $\oplus$ is used to denote the disjoint union of groupoids, which is the co-product in the category of groupoids. Every groupoid is expressible in a unique way as the co-product of its connected components.} $\oplus_{i \in I} G_i$, where $G_i$ are connected groupoids and the macrostates are taken to be the underlying sets $X_i := |G_i|$ (i.e. are obtained by forgetting the internal symmetries).  As it turns out, this picture of classical observables coincides with that of (symmetric) $\dagger$-SFAs in the dagger compact category $\RelCategory$ of sets and relations.

\begin{theorem}[\textbf{Observables for classical systems \cite{Pavlovic2009,Coecke2014a}}]\hfill\\
The $\dagger$-SFAs $\hbox{\input{symbols/ZbwdotSym.tex}}\!\!$ on an object $X$ of $\RelCategory$ are the groupoids $G := \oplus_{i \in I} G_i$ on the set $X$: 
\begin{enumerate} 
	\item[(a)] the algebra multiplication $\!\hbox{\input{symbols/ZbwmultSym.tex}}\!\!$ is a partial function $X \times X \rightharpoonup X$, corresponding to the groupoid multiplication: 
	\begin{equation}
		\!\hbox{\input{symbols/ZbwmultSym.tex}}\!\! \circ (\ket{x} \otimes \ket{y}) = 
		\begin{cases}
			x \cdot y \text{ in } G_i &\text{ if both } x,y \in G_i \text{ for some $i$,}\\
			0 &\text{ otherwise};
		\end{cases}
	\end{equation} 
	\item[(b)] the algebra unit is the union $\!\hbox{\input{symbols/ZbwunitSym.tex}}\!\! = \bigcup_{i \in I} \sum_{u \text{ unit of }G_i} \ket{u}$ of all the units\footnote{Recall that elements in a groupoid can have distinct left and right units, so that even connected groupoids can have more than one unit.} for all the connected groupoids $(G_i)_{i \in I}$.  
\end{enumerate}
The $\hbox{\input{symbols/ZbwdotSym.tex}}\!\!$-classical states take the form $\ket{X_i} := \bigcup_{x \in G_i} \ket{x}$. Also, $\hbox{\input{symbols/ZbwdotSym.tex}}\!\!$ is necessarily symmetric.
\end{theorem}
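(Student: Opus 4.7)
In $\RelCategory$ the dagger is relational converse, so a $\dagger$-SFA on $X$ reduces to a relation $\mu \subseteq (X\times X)\times X$ (the multiplication) together with a subset $U \subseteq X$ (the unit), with comultiplication $\mu^\dagger$ and counit $U$ automatic. The plan is to show that the four axioms (associativity, unit laws, speciality, Frobenius) collapse into the data of a small groupoid whose morphism set is $X$ and whose set of identity arrows is $U$ (and which decomposes as $\oplus_{i \in I} G_i$ into connected components), and conversely to build such an SFA from any such groupoid.

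The cornerstone is that \emph{speciality forces $\mu$ to be a surjective single-valued partial function}. Unpacking $\mu \circ \mu^\dagger = \id_X$ in $\RelCategory$: the pair $(x,y)$ lies in $\mu \circ \mu^\dagger$ iff some $(a,b)$ satisfies both $(a,b)\,\mu\,x$ and $(a,b)\,\mu\,y$, so equality with the diagonal relation simultaneously forces every $x$ to be in the image of $\mu$ (surjectivity) and no pair to have two distinct $\mu$-outputs (functionality). Writing $a\cdot b$ for the unique value of $\mu$ on $(a,b)$ when defined, a similar unpacking of the Frobenius law $(\id\otimes\mu)\circ(\mu^\dagger\otimes\id) = \mu^\dagger\circ\mu$ yields the pointwise equivalence
\begin{equation}\label{eqn_frobUnpack}
\big[\,\exists b \in X: p\cdot b = x \text{ and } b\cdot y = q\,\big] \;\iff\; \big[\,x\cdot y \text{ and } p\cdot q \text{ are defined and equal}\,\big].
\end{equation}
Combined with the unit laws, which supply for each $a \in X$ elements $t(a), s(a) \in U$ with $t(a)\cdot a = a = a\cdot s(a)$, Equation~\ref{eqn_frobUnpack} immediately yields inverses: applying it to the identity $t(a)\cdot a = a = a\cdot s(a)$ with $x := t(a)$, $y := a$, $p := a$, $q := s(a)$ extracts an element $b \in X$ with $a\cdot b = t(a)$ and $b\cdot a = s(a)$. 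After verifying uniqueness of $s(a)$ and $t(a)$ (which follows from functionality of $\mu$ after a second application of Frobenius), associativity of $\mu$ endows $X$ with a small groupoid structure: source/target $s,t$, composition $\mu$, identities $U$, and inverses as above. The converse direction builds $\mu$ from groupoid composition; Equation~\ref{eqn_frobUnpack} then reduces to the observation that the witness $b$ must equal $p^{-1}\cdot x = q\cdot y^{-1}$, which exists precisely when $x\cdot y = p\cdot q$.

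For automatic symmetry (Equation~\ref{eqn_balancedSymmetric}), the $\RelCategory$-interpretation of both sides is the relation $\{(a,b) \in X\times X : a\cdot b \in U\}$, which in the groupoid becomes $\{(a,a^{-1}) : a \in X\}$ --- manifestly invariant under the swap via the involution $a \mapsto a^{-1}$, hence symmetric. For the classical-state identification, unpacking the copy, adjoin, and delete conditions for a subset $\psi \subseteq X$ (using now that $\mu$ is groupoid composition) forces $\psi$ to be a nonempty sub-groupoid closed under composition and inverses and containing its relevant identities, i.e.\ exactly the underlying set $|G_i|$ of a single connected component. The main technical obstacle will be the careful bookkeeping required to establish uniqueness of the source/target assignments and the coordinated use of Equation~\ref{eqn_frobUnpack} together with functionality of $\mu$; once these are in place, all other pieces follow cleanly.
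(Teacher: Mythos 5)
Your reconstruction of the $\dagger$-SFA/groupoid correspondence itself is sound: the unpacking of speciality (surjectivity plus single-valuedness of $\mu$), the pointwise Frobenius equivalence, and the extraction of inverses from the unit laws via that equivalence are exactly the standard argument. Be aware, though, that the paper does not reprove this part at all --- it simply cites \cite{Pavlovic2009,Coecke2014a} for the correspondence --- and only supplies proofs of the two supplementary claims: symmetry and the characterisation of classical states. Your symmetry argument coincides with the paper's (both reduce to: $a\cdot b$ is a unit iff $b=a^{-1}$ iff $b\cdot a$ is a unit), so the real point of comparison is the classical-state part, and that is where your sketch has a genuine gap.

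You assert that copy/adjoin/delete force $\psi\subseteq X$ to be ``a nonempty sub-groupoid closed under composition and inverses and containing its relevant identities, i.e.\ exactly the underlying set $|G_i|$ of a single connected component.'' The final ``i.e.'' is a non sequitur: those closure properties are also enjoyed by any proper subgroup of a vertex group (together with its identity), or by the set of units of a component, none of which have the form $|G_i|$. What your sketch drops is that the copy condition is an \emph{equality} of relations, $\psi\times\psi=\{(x,y): x\cdot y \text{ is defined and } x\cdot y\in\psi\}$, and both inclusions do essential work. The direction you omit --- if $x\cdot y\in\psi$ then $x,y\in\psi$ --- is precisely what rules out proper substructures: given $g\in\psi$ and any $x$ composable on the appropriate side, the factorisation $g=x\cdot(x^{-1}\cdot g)$ forces $x\in\psi$, which is how $\psi$ gets pushed up to a whole component. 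The other direction (every pair of elements of $\psi$ is composable) also bites: taking $x=y$ shows every element of a classical state must be an endomorphism, so the analysis has to confront which components can carry classical states at all --- the clean identification $\psi=|G_i|$ is immediate only when the relevant component is a one-object groupoid, i.e.\ a group, and this wrinkle is invisible in your formulation. The paper's own proof, terse as it is, records the full biconditional (``$x,y\in\psi$ if and only if $x\cdot y$ is defined and $x\cdot y\in\psi$'') before drawing its conclusion; without it, your route cannot reach the stated characterisation.
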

\begin{proof}
The correspondence between $\dagger$-SFAs and groupoids is a result of \cite{Pavlovic2009,Coecke2014a}. In particular, $\hbox{\input{symbols/ZbwdotSym.tex}}\!\!$ is necessarily symmetric: $x \cdot y = u$ for some unit $u$ implies that $x,y,u \in G_i$ for some $i \in I$, and hence that $y \cdot x = v$ for some other unit $v \in G_i$. Now consider a $\hbox{\input{symbols/ZbwdotSym.tex}}\!\!$-classical state $\ket{\psi}$, with $\psi \subseteq X$. The copy condition is the requirement that $x,y \in \psi$ if and only if $x \cdot y$ is defined and $x \cdot y \in \psi$; the delete condition is the requirement that $u \in \psi$ for at least one unit $u$; the adjoin condition is the requirement that $x \in \psi$ if and only if $x^{-1} \in \psi$. Hence the classical states are exactly those in the form $\ket{X_i}$ for $X_i := |G_i|$.
\end{proof}

\subsection{Canonical Frobenius algebras}

Finally, a brief remark about the relationship between $\dagger$-Frobenius algebras and the CPM construction. In this Section, we have characterised $\dagger$-Frobenius algebras in $\fdHilbCategory$ as quantum observables, while in the previous Section we have noted that the real category modelling pure-state quantum theory is the pure subcategory of $\CPMCategory{\fdHilbCategory}$, rather than $\fdHilbCategory$. So a question arises: how does our characterisation of $\dagger$-Frobenius algebras in $\fdHilbCategory$ affect the pure subcategory of $\CPMCategory{\fdHilbCategory}$? The answer turns out to be rather simple. 

Because of their diagrammatic definition, the $\dagger$-FAs $(A,\!\hbox{\input{symbols/ZbwmultSym.tex}}\!\!,\!\hbox{\input{symbols/ZbwunitSym.tex}}\!\!,\!\hbox{\input{symbols/ZbwcomultSym.tex}}\!\!,\!\hbox{\input{symbols/ZbwcounitSym.tex}}\!\!)$ from a dagger compact $\CategoryC$ give rise to $\dagger$-FAs in the doubled subcategory of $\CPMCategory{\CategoryC}$: 
\begin{equation}
\big(A,\CPMdoubled{\!\hbox{\input{symbols/ZbwmultSym.tex}}\!\!},\CPMdoubled{\!\hbox{\input{symbols/ZbwunitSym.tex}}\!\!},\CPMdoubled{\!\hbox{\input{symbols/ZbwcomultSym.tex}}\!\!},\CPMdoubled{\!\hbox{\input{symbols/ZbwcounitSym.tex}}\!\!}\big)
\end{equation}
The $\dagger$-FAs in $\CPMCategory{\CategoryC}$ that arise this way are said to be \textbf{canonical}. In this work, we will only consider canonical $\dagger$-FAs when working with CPM categories.

\section{Measurements, decoherence and classicality}

\subsection{Probabilistic theories}

We have seen before that a generalised notion of finite classical systems, where probabilities are replaced by a generic involutive\footnote{The involution can simply be the identity $\id{R}: R \rightarrow R$. When talking about the probabilistic case $R = \reals^+$, we will always implicitly assume that the involution is the identity.} semiring $R$, can be modelled by the dagger compact category $\RMatCategory{R}$. We will refer to these as \textbf{classical $R$-probabilistic systems}, or simply \textbf{classical probabilistic systems} in the case $R = \reals^+$. 

The category $\fSetCategory$ of finite sets and functions, modelling finite deterministic classical systems, is always a sub-SMC of $\RMatCategory{R}$, which it endows with the following environment structure:
\begin{equation}
\trace{R^X} := \Big(\sum_{x \in X} p_x \ket{x}\Big) \mapsto \sum_{x \in X} p_x
\end{equation}
The normalised states in $\RMatCategory{R}$ are the \textbf{$R$-distributions}, the states $\sum_{x \in X} p_x \ket{x}$ such that $\sum_{x \in X} p_x = 1$, and the normalised processes are the \textbf{$R$-stochastic maps}, the linear maps $R^X \rightarrow R^Y$ which send $R$-distributions on $X$ to $R$-distributions on $Y$. In the case of classical probabilistic systems, normalised states are probability distributions on finite sets, and normalised processes are stochastic maps.

The category $\RMatCategory{R}$ is enriched in the category of commutative monoids (or $\CMonCategory$-enriched), by which we mean that the processes $R^X \rightarrow R^Y$ between fixed systems $R^X$ and $R^Y$ form a commutative monoid, and that composition, tensor product and dagger all respect the commutative monoid structure. Specifically, the addition $f+g: R^X \rightarrow R^Y$ between morphism $f,g: R^X \rightarrow R^Y$ in $\RMatCategory{R}$ is given by addition of matrices, and the zero element $0: R^X \rightarrow R^Y$ is given by the zero matrix. Furthermore, the tensor product is linear, i.e. it distributes over the addition and respects the zero element:
\begin{equation}
\begin{cases}
f \otimes (g+h) &= f\otimes g + f \otimes h \\ 
(g+h) \otimes f &= g\otimes f + h \otimes f
\end{cases} 
\hspace{3cm} 
\begin{cases}
f \otimes 0 &= 0 \\
0 \otimes f &= 0
\end{cases}
\end{equation}
We will use \textbf{distributively $\CMonCategory$-enriched} to refer to a SMC which is $\CMonCategory$-enriched with linear tensor product. When talking about distributively $\CMonCategory$-enriched $\dagger$-SMCs, we will furthermore require that the dagger be linear.

It is important to note that the scalars of a distributively $\CMonCategory$-enriched SMC always form a commutative semiring, which in a distributively $\CMonCategory$-enriched $\dagger$-SMC further comes with an involution given by the dagger. Using enrichment, the discarding maps $\trace{R^X}$ can be expressed as follows:
\begin{equation}\begin{multlined}\label{eqn_discardingMapsClassical}
\input{pictures/chapter2/discardingMapsClassical.tikz}
\end{multlined}\end{equation}

When working in the foundations of quantum theory, the existence and behaviour of classical systems are often entirely taken for granted, and manifest themselves in a variety of ways across the different frameworks and formalisms. In a purely process-oriented framework, such as the one underlying this work, classical systems and processes should be explicitly modelled by the physical theory under investigation, together with their interface with other systems. On these lines, we distil four requirements that any such theory should respect when modelled by a SMC $\CategoryC$:
\begin{enumerate}
	\item[1.] the SMC $\CategoryC$ has $\RMatCategory{R}$ (or a category equivalent to it) as a full sub-SMC, where $R$ is the semiring encoding the desired notion of non-determinism (we will refer to this as the \textbf{classical subcategory}, and to its objects as the \textbf{classical systems});
	\item[2.] the SMC $\CategoryC$ is distributively $\CMonCategory$-enriched, with scalars forming the semiring $R$, and the enrichment of $\CategoryC$ extends the enrichment defined above for $\RMatCategory{R}$;
	\item[3.] the SMC $\CategoryC$ comes with a choice of environment structure, extending the environment structure defined above for $\RMatCategory{R}$.
\end{enumerate}
We will refer to a SMC satisfying the requirements above as a \textbf{$R$-probabilistic theory}, or simply as a \textbf{probabilistic theory} in the case $R=\reals^+$. 
$R$-probabilistic theories automatically come with a number of handy features built in, amongst which: marginalisation, conditioning, classical control, and convex combination. 
In this work, we will restrict ourselves to the special case of $R$-probabilistic CP* categories, but a full discussion of $R$-probabilistic theories --- in connection to the framework of Operational Probabilistic Theories \cite{Chiribella2010,Chiribella2011} --- has appeared in \cite{Gogioso2016f}.

\section{The CP* construction}

\subsection{The CP* construction and quantum theory}


We now wish to construct a $R$-probabilistic theory from scratch, using Frobenius algebras to define decoherence maps. Let $\CPMCategory{\CategoryC}$ be a CPM category, and let $\hbox{\input{symbols/ZbwdotSym.tex}}\!\!$ be a canonical symmetric $\dagger$-SFA (i.e. a symmetric $\dagger$-SFA in $\CategoryC$) on some system $A$. The \textbf{decoherence map} $\decoh{\hbox{\input{symbols/ZbwdotSym.tex}}\!\!}$ associated to $\hbox{\input{symbols/ZbwdotSym.tex}}\!\!$ is the process $A \rightarrow A$ in $\CPMCategory{\CategoryC}$ defined as follows:
\begin{equation}\begin{multlined}\label{diagram_decoherence}
\input{pictures/chapter2/decoherence.tikz}
\end{multlined}\end{equation}
Decoherence maps associated to symmetric $\dagger$-SFAs are always normalised (because of speciality), idempotent (because of associativity and speciality), and self-adjoint (because of Frobenius law, unit laws and symmetry). In the quantum case of $\CPMCategory{\fdHilbCategory}$, decoherence maps take the following form, in terms of the complete family of orthogonal projectors $(p_x)_{x \in X}$ associated to $\hbox{\input{symbols/ZbwdotSym.tex}}\!\!$:
\begin{equation}\begin{multlined}\label{eqn_decoherenceQM}
\input{pictures/chapter2/decoherenceQM.tikz}
\end{multlined}\end{equation}
That is, the decoherence maps defined by symmetric canonical $\dagger$-SFAs in $\CPMCategory{\fdHilbCategory}$ are exactly the decoherence maps that are traditionally associated with quantum observables (seen as complete families of orthogonal projectors).

The \textbf{Karoubi envelope}\footnote{Also known as \textit{idempotent completion}, or \textit{Cauchy completion}.} of a SMC $\CategoryD$, which we denote by $\KaroubiEnvelope{\CategoryD}$, is the SMC defined as follows:
\begin{enumerate}
	\item[(i)] the objects of $\KaroubiEnvelope{\CategoryD}$ are the pairs $(A,e)$ of an object $A$ of $\CategoryD$ and an idempotent morphism $e: A \rightarrow A$;
	\item[(ii)] the morphisms $(A,e) \rightarrow (B,e')$ in $\KaroubiEnvelope{\CategoryD}$ are the morphisms $f: A \rightarrow B$ in $\CategoryD$ which satisfy $e' \circ f \circ e = f$, i.e. which are invariant under pre-composition with $e$ and under post-composition with $e'$;
	\item[(iii)] composition is inherited from $\CategoryD$, while the identity on object $(A,e)$ is the morphism $e: (A,e) \rightarrow (A,e)$.
\end{enumerate}
Because $\CategoryD$ is a SMC, the Karoubi envelope $\KaroubiEnvelope{\CategoryD}$ is also a SMC, and contains $\CategoryD$ as the full sub-SMC spanned by the objects in the form $(A,\id{A})$.

We now consider the Karoubi envelope $\KaroubiEnvelope{\CPMCategory{\CategoryC}}$ and restrict our attention the the full sub-SMC\footnote{Because decoherence maps obtained from symmetric $\dagger$-SFAs are self-adjoint, the subcategory is in fact a $\dagger$-SMC.} having objects in the form $(A,\decoh{\hbox{\input{symbols/ZbwdotSym.tex}}\!\!})$, where $\hbox{\input{symbols/ZbwdotSym.tex}}\!\!$ is a canonical symmetric $\dagger$-SFA $\hbox{\input{symbols/ZbwdotSym.tex}}\!\!$ on $A$. The processes $(A,\decoh{\hbox{\input{symbols/ZbwdotSym.tex}}\!\!}) \rightarrow (B,\decoh{\hbox{\input{symbols/YbwdotSym.tex}}\!\!})$ are exactly those satisfying the following condition: 
\begin{equation}\begin{multlined}\label{eqn_CPStarMorphism}
\input{pictures/chapter2/CPStarMorphism.tikz}
\end{multlined}\end{equation}

\noindent The full sub-SMC of $\KaroubiEnvelope{\CPMCategory{\CategoryC}}$ defined above is known in the literature as a \textbf{CP* category} \cite{Coecke2014a,Cunningham2015}, and denoted by $\CPStarCategory{\CategoryC}$. Traditionally, CP* categories are constructed as a generalisation of the category of finite-dimensional C*-algebras, using the correspondence with symmetric $\dagger$-SFAs on $\fdHilbCategory$ proven by \cite{Vicary2011}: this is known as the \textit{CP* construction}, and the resulting category is exactly the same as the one we constructed above using decoherence maps and the Karoubi envelope. In the quantum case, these two equivalent ways of constructing $\CPStarCategory{\fdHilbCategory}$ reflect different perspectives on the quantum-classical interface:
\begin{enumerate}
	\item[(a)] the decoherence construction we presented gives an operational perspective, showing that $\CPStarCategory{\fdHilbCategory}$ is the category of super-selected finite-dimensional quantum systems\footnote{This can be seen from Equation \ref{eqn_decoherenceQM}, which expresses the decoherence in terms of the complete family of projectors associated with a quantum observable. The projectors determine the super-selection sectors associated with the observable, and the morphisms between different super-selected quantum systems are exactly the CP maps that respect the super-selection sectors.};
	\item[(b)] the CP* construction gives an algebraic/logical perspective, showing that $\CPStarCategory{\fdHilbCategory}$ is the category of finite-dimensional C*-algebras.
\end{enumerate}

Amongst the many objects of the CP* category $\CPStarCategory{\fdHilbCategory}$, two particular groups stand out: (a) the objects associated with $\dagger$-SCFAs (or commutative C*-algebras), corresponding to quantum system with 1-dimensional super-selection sectors; (b) the objects associated with the matrix algebras, corresponding to quantum systems with a single super-selection sector (i.e. which are trivially super-selected). The objects associated with the matrix algebras span a full sub-SMC which is equivalent to $\CPMCategory{\fdHilbCategory}$, and as a consequence $\CPStarCategory{\fdHilbCategory}$ is interpreted as an extension of mixed-state quantum theory. The objects associated with $\dagger$-SCFAs, on the other hand, span a full sub-SMC which is equivalent to $\RMatCategory{\reals^+}$: indeed, if we denote by $\classicalStates{\hbox{\input{symbols/ZbwdotSym.tex}}\!\!}$ the set of classical states of a $\dagger$-SCFA $\hbox{\input{symbols/ZbwdotSym.tex}}\!\!$, then in $\CPMCategory{\fdHilbCategory}$ the decoherence map $\decoh{\hbox{\input{symbols/ZbwdotSym.tex}}\!\!}$ takes the following form:
\begin{equation}
\decoh{\hbox{\input{symbols/ZbwdotSym.tex}}\!\!} = \rho \mapsto \sum_{x \in \classicalStates{\hbox{\input{symbols/ZbwdotSym.tex}}\!\!}} \ket{x}\bra{x} \, \rho \, \ket{x} \bra{x}
\end{equation}
Furthermore, the CPM category $\CPMCategory{\fdHilbCategory}$ is distributively $\CMonCategory$-enriched, and both the enrichment and the discarding maps transfer to $\CPStarCategory{\fdHilbCategory}$: as a consequence, $\CPStarCategory{\fdHilbCategory}$ is a probabilistic theory, with classical systems given by objects associated with $\dagger$-SCFAs.

\subsection{The CP* construction and R-probabilistic theories}

We now go back to $\KaroubiEnvelope{\CPMCategory{\CategoryC}}$ for a generic dagger compact category $\CategoryC$. First of all, we tackle a technical issue with the CP* construction: in the quantum case, the CP* category includes the CPM category as the full subcategory given by the quantum systems with trivial super-selection (those associated with the matrix algebras). In general, however, there is no guarantee that the objects associated with the matrix algebras will span a full subcategory of $\CPStarCategory{\CategoryC}$ isomorphic to $\CPMCategory{\CategoryC}$. Here is where our construction and the CP* construction diverge: the latter aims to study a generalisation of the category of finite-dimensional C*-algebras, while our aim is to construct an $R$-probabilistic theory which \textbf{extends} a given physical theory modelled by $\CPMCategory{\CategoryC}$. As a consequence, \underline{we modify the definition of the CP* category}.

From now on, we \underline{redefine} $\CPStarCategory{\CategoryC}$ to be the full sub-category of $\KaroubiEnvelope{\CPMCategory{\CategoryC}}$ spanned by objects in the form $(A,\id{A})$ and objects in the form $(A,\hbox{\input{symbols/ZbwdotSym.tex}}\!\!)$, with $\hbox{\input{symbols/ZbwdotSym.tex}}\!\!$ a canonical symmetric $\dagger$-SFA on $A$. We refer to objects in the form $(A,\id{A})$ as the \textbf{CPM systems}, and to objects in the form $(A,\decoh{\hbox{\input{symbols/ZbwdotSym.tex}}\!\!})$ as the \textbf{decohered systems}. Following established conventions, we denote CPM systems $(A,\id{A})$ as $A$, and decohered systems $(A,\decoh{\hbox{\input{symbols/ZbwdotSym.tex}}\!\!})$ as $(A,\hbox{\input{symbols/ZbwdotSym.tex}}\!\!)$. The CPM systems always span a full subcategory isomorphic to $\CPMCategory{\CategoryC}$, and we use the doubled notation from CPM categories to denote them and the morphisms between them. Again following established conventions, we use single wires for decohered systems, and single borders for morphisms solely involving decohered systems, while we retain doubled notation for morphisms involving both decohered systems and CPM systems. 

If $\hbox{\input{symbols/ZbwdotSym.tex}}\!\!$ is a canonical symmetric $\dagger$-SFA on an object $A$, the decoherence map $\decoh{\hbox{\input{symbols/ZbwdotSym.tex}}\!\!}$ is always a process $\decoh{\hbox{\input{symbols/ZbwdotSym.tex}}\!\!}: A \rightarrow A$ in $\CPStarCategory{\CategoryC}$. Because of idempotence, however, it is also a process $A \rightarrow (A,\hbox{\input{symbols/ZbwdotSym.tex}}\!\!)$ and a process $(A,\hbox{\input{symbols/ZbwdotSym.tex}}\!\!) \rightarrow A$: we will refer to the former as the \textbf{measurement} in $\hbox{\input{symbols/ZbwdotSym.tex}}\!\!$, and the latter as the \textbf{preparation} in $\hbox{\input{symbols/ZbwdotSym.tex}}\!\!$. The single and doubled notation distinguish between the different cases:
\begin{equation}\begin{multlined}\label{diagram_preparationMeasurementCPStar}
\input{pictures/chapter2/preparationMeasurementCPStar.tikz}
\end{multlined}\end{equation}
Because of idempotence, the measurement and preparation for $\hbox{\input{symbols/ZbwdotSym.tex}}\!\!$ satisfy the abstract properties defining measurement-preparation pairs in $R$-probabilistic theories \cite{Gogioso2016f}, save for the fact that we don't yet have an $R$-probabilistic theory in our hands (and, even if we did, $(A,\hbox{\input{symbols/ZbwdotSym.tex}}\!\!)$ need not always be a classical system). However, we have already seen that $\CPStarCategory{\fdHilbCategory}$ is a probabilistic theory, and we can begin by checking that measurements and preparations as defined above yield the usual notions in the case of quantum theory, when $(A,\hbox{\input{symbols/ZbwdotSym.tex}}\!\!)$ is a classical system:
\begin{equation}\begin{multlined}\label{eqns_preparationMeasurementQM}
\input{pictures/chapter2/preparationMeasurementQM.tikz}
\end{multlined}\end{equation}

When saying that $\CPStarCategory{\CategoryC}$ is an \textbf{$R$-probabilistic CP* category}, we will mean that it satisfies the following requirements:
\begin{enumerate}
	\item[(i)] the category $\CPStarCategory{\CategoryC}$ is distributively $\CMonCategory$-enriched, with $R$ as its involutive semiring of scalars\footnote{Equivalently, we can ask for $\CPMCategory{\CategoryC}$ to be enriched, as the two categories mutually inherit enrichment and discarding maps.}.
	\item[(ii)] the \textbf{classical systems} are defined to be those decohered systems $(A,\hbox{\input{symbols/ZbwdotSym.tex}}\!\!)$ where $\hbox{\input{symbols/ZbwdotSym.tex}}\!\!$ is a $\dagger$-SCFA with enough classical states, and such that the $\hbox{\input{symbols/ZbwdotSym.tex}}\!\!$-classical states are orthonormal and form a finite set;
	\item[(iii)] for each $n \in \naturals$, there is some classical system $(A,\hbox{\input{symbols/ZbwdotSym.tex}}\!\!)$ such that the $\dagger$-SCFA $\hbox{\input{symbols/ZbwdotSym.tex}}\!\!$ has exactly $n$ classical states.
\end{enumerate}
Indeed, processes $(A,\hbox{\input{symbols/ZbwdotSym.tex}}\!\!) \rightarrow (B,\hbox{\input{symbols/YbwdotSym.tex}}\!\!)$ between two classical systems in the CP* category form an $R$-module which is isomorphic to the $R$-module of processes $R^{\classicalStates{\hbox{\input{symbols/YbwdotSym.tex}}\!\!}} \rightarrow R^{\classicalStates{\hbox{\input{symbols/ZbwdotSym.tex}}\!\!}}$ in the category $\RMatCategory{R}$ of classical $R$-probabilistic systems: 
\begin{enumerate}
\item[(i)] firstly, every process $f: (A,\hbox{\input{symbols/ZbwdotSym.tex}}\!\!) \rightarrow (B,\hbox{\input{symbols/YbwdotSym.tex}}\!\!)$ is determined by the $R$-valued matrix $\big(\bra{y}f\ket{x}\big)_{x \in \classicalStates{\hbox{\input{symbols/ZbwdotSym.tex}}\!\!}}^{y \in \classicalStates{\hbox{\input{symbols/YbwdotSym.tex}}\!\!}}$ obtained by testing against classical states of the two $\dagger$-SCFAs:
	\begin{equation}\begin{multlined}\label{eqns_CPStarProcessesRModule1}
	\hspace{-1.25cm}\input{pictures/chapter2/CPStarProcessesRModule1.tikz}
	\end{multlined}\end{equation}
\item[(ii)] secondly, for every matrix $\big(F_x^y\big)_{x \in \classicalStates{\hbox{\input{symbols/ZbwdotSym.tex}}\!\!}}^{y \in \classicalStates{\hbox{\input{symbols/YbwdotSym.tex}}\!\!}}$ there is a unique process $(A,\hbox{\input{symbols/ZbwdotSym.tex}}\!\!) \rightarrow (B,\hbox{\input{symbols/YbwdotSym.tex}}\!\!)$ corresponding to it:
	\begin{equation}\begin{multlined}\label{eqns_CPStarProcessesRModule2}
	\input{pictures/chapter2/CPStarProcessesRModule2.tikz}
	\end{multlined}\end{equation}
\end{enumerate}
Hence the full sub-SMC of an $R$-probabilistic CP* category spanned by the classical systems is equivalent to $\RMatCategory{R}$, and our definition of classical systems for a CP* category is consistent with the nomenclature used in $R$-probabilistic theories.

\section{Non-locality and contextuality}

While generally of interest to understand the background of this work, this section is only directly relevant to the proof of non-locality for generalised Mermin-type arguments, and to the proof of device-independent security for the related quantum-classical secret sharing protocols.

\subsection{The sheaf-theoretic framework} 

In the context of this work, \textbf{contextuality} and \textbf{non-locality} will be used interchangeably, and will be understood in the sense of the sheaf-theoretic framework of \cite{Abramsky2011}. Consider the abstract setup of a Bell test:
\begin{enumerate} 
\item[(i)] $N$ parties are given devices $B_1,...,B_N$ which might share some global state $\rho$;
\item[(ii)] each device $B_j$ takes an input, the \textbf{measurement choice}, freely chosen by party $j$ from some finite set $M_j$;
\item[(iii)] upon receiving input $m_j \in M_j$, the device $B_j$ produces some output $o_j$, the \textbf{measurement outcome}, in some finite set $O_j$;
\item[(iv)] no signalling is possible between the devices from before the first input is given to after the last outputs has been produced. 
\end{enumerate}
The sheaf-theoretic framework characterises the distribution on joint outputs conditional on joint inputs from the point of view of sheaf theory, showing that non-locality and contextuality are related to the (non-) existence of global sections for a particular presheaf. The framework doesn't rely on any concrete description of the state $\rho$ or the devices $B_1,...,B_N$, focusing instead on the distributional properties of joint measurement outcomes $\underline{o} := (o_1,...,o_N)$ conditional to the joint measurement choice $\underline{m} := (m_1,...,m_N)$.

The framework begins by identifying a finite set $\mathcal{X}$ of inputs, which in the Bell test setup above (the one used in this work) would be $\mathcal{X} = \sqcup_{j=1}^N M_j$. The disjoint union preserves information about which party each measurement is associated to, so we will adopt the notation $m_j$ for generic elements of $\mathcal{X}$, where $m$ is the measurement and $j$ is the party. For each subset $U \subseteq \mathcal{X}$, the family of all potential\footnote{Not all subsets of measurements need be compatible in each concrete scenario: see below for the definition of measurement contexts.} \textbf{joint outcomes} takes the following form:
\begin{equation}
	\sheafOfEvents{U} := \prod_{m_j \in U} O_j
\end{equation}
The powerset $\Powerset{\mathcal{X}}$ is a poset (hence a poset category) under inclusion $V \subseteq U$ of subsets. We can define a functor $\sheafOfEventsSym: \OpCategory{\Powerset{\mathcal{X}}} \rightarrow \SetCategory$, i.e. a \textbf{presheaf}, by setting:
\begin{enumerate}
	\item[(i)] if $U \in  \Powerset{\mathcal{X}} $, then we define $ \sheafOfEvents{U} := \prod_{m_j \in U} O_j$ as above;
	\item[(ii)] if $V \subseteq U $, then we define $\sheafOfEvents{V \subseteq U}:=\restrictionMap{U}{V}$ to be the following \textbf{restriction map} $U \stackrel{\SetCategory}{\longrightarrow} V$:
	\begin{equation}
		\restrictionMap{U}{V} = s \mapsto \restrict{s}{V}
	\end{equation}
\end{enumerate}
A \textbf{section $s$ over $U$} (or \textbf{$U$-section}) is one in the following form:
\begin{equation}
	s = \suchthat{(m_j,s(m_j))}{m_j \in U} \in  \prod_{m_j \in U} O_j
\end{equation}
The restriction map sends a section $s$ over $U$ to its restriction $\restrict{s}{V}$ over $V$:
\begin{equation}
	\restrict{s}{V} = \suchthat{(m_j,s(m_j))}{m_j \in V} \in  \prod_{m_j \in V} O_j
\end{equation}

The definition of the set of possible joint inputs requires further consideration: it is a fundamental feature of quantum mechanics that not all measurements on a system are compatible, and we shouldn't expect different measurement choices in each $M_j$ to have a consistent assignment of outcomes. Instead, the framework requires us to specify a set $\mathcal{M}$ of \textbf{measurement contexts}, subsets $C \subseteq \mathcal{X}$ of measurements which are mutually compatible (and therefore have a well-defined notion of joint outcome). Even though more general setups are allowed, we will assume that our measurement contexts all take the form $C = \{m_1,...,m_N\}$ for $m_j \in M_j$, which we will denote by $\underline{m}$: each party chooses exactly one input for their device, but we allow the possibility that not all combinations of inputs might be allowed/interesting. The only requirement is that $\cup_{C \in \mathcal{M}} C = \mathcal{X}$, i.e. that $\mathcal{M}$ be a \textbf{global cover} of $\mathcal{X}$ (each measurement choice for each player appears in at least one measurement context), where we consider $\mathcal{X}$ to be endowed with the discrete topology. One can also define the \textbf{local covers} for any $U \subseteq \mathcal{X}$ as the families $(U_i)_{i \in I}$ such that $\cup_{i \in I} U_i = U$.

The choice of the discrete topology on $\mathcal{X}$ makes $\Powerset{\mathcal{X}}$ its locale of open subsets, and one can define a notion of \textbf{sheaf} on it. Because it is defined in terms of sections\footnote{Compatibility of local sections amounts to compatibility over the intersection of the domains, and hence compatible local sections can always be glued together.}, the presheaf $\sheafOfEventsSym$ is in fact a sheaf on the locale $\Powerset{\mathcal{X}}$, and we shall refer to it as the \textbf{sheaf of events}. The sheaf of events $\sheafOfEventsSym$ and the measurement cover $\mathcal{M}$ are the two ingredients required to define a \textbf{measurement scenario} $(\sheafOfEventsSym,\mathcal{M})$: the former gives the joint measurement outcomes conditional on all possible measurement choices, while the latter specifies the compatible joint measurement choices.

The next step in the framework sees the introduction of generalised notions of probabilities and distributions. In quantum mechanics, probabilities can be seen as taking values in the commutative semiring $\reals^+ := (\reals^+,+,0,\cdot,1)$ of non-negative reals (in fact they fall within the interval $[0,1]$, a consequence in the semiring $\reals^+$ of the normalisation condition requiring that probabilities add up to $1$). In other circumstances, one may be interested in the \textbf{possibilities} associated with events, living in the commutative semiring $\mathbb{B} = (\{0,1\},\vee,0,\wedge,1)$ of the booleans. In the sheaf-theoretic treatment of contextuality, one works with an arbitrary commutative semiring $R = (|R|,+,0,\cdot,1)$. 

Given a set $X$, an \textbf{$R$-distribution} on $X$ is a function $d:X \rightarrow R$ which has finite \textbf{support} $\support{d} := \suchthat{s \in X}{d(s) \neq 0}$ and such that:
\begin{equation}
	\sum_{s \in \support{d}} d(s) = 1
\end{equation}  
One can then define a functor $\distributionFunctorSym{R}: \SetCategory \rightarrow \SetCategory$ as follows:
\begin{enumerate}
	\item[(i)] for any set $X$, define $\distributionFunctor{R}{X}$ to be the set of $R$-distributions over $X$;
	\item[(ii)] for any function $f: X \rightarrow Y$, define $\distributionFunctor{R}{f}: \distributionFunctor{R}{X} \rightarrow \distributionFunctor{R}{Y}$ to be the following function:
	\begin{equation}
		\distributionFunctor{R}{f} = d \mapsto \left[ t \mapsto \sum_{f(s) = t} d(s) \right]
	\end{equation}
\end{enumerate}
Composing this functor with the sheaf of events yields the \textbf{presheaf of distributions} $\presheafOfDistributionsSym{R}:\OpCategory{\Powerset{\mathcal{X}}} \rightarrow \SetCategory$, which captures the structure of $R$-distributions on joint measurement outcomes under marginalisation. The presheaf sends each set $U$ of measurements (the objects of the presheaf category $\Powerset{\mathcal{X}}$) to the set $\presheafOfDistributions{R}{U}$ of \textbf{$R$-distributions on $U$-sections}, and sends any inclusion $V \subseteq U$ (the morphisms of the presheaf category $\Powerset{\mathcal{X}}$) to the corresponding marginalisation of distributions:
\begin{equation}
	\presheafOfDistributions{R}{V \subseteq U} = d \mapsto \restrict{d}{V} := \left[ t \mapsto \sum_{\restrict{s}{V} = t} d(s)\right]
\end{equation}
We will refer to $\restrict{d}{V}$ as the \textbf{marginal} of $d$. Indeed, $\restrict{d}{V}$ can be manipulated into taking the following, familiar form:
\begin{equation}
	t \in \sheafOfEvents{V} \implies \restrict{d}{V}(t) := \hspace{-0.75cm}\sum_{s \in \sheafOfEvents{V} \text{ s.t. } \restrict{s}{V} = t} \hspace{-0.75cm} d(s)
\end{equation}

In quantum mechanics, if $C$ is a set of compatible measurements on some state $\rho$, then there is a probability distribution $d \in \presheafOfDistributions{\reals^+}{C}$ on the joint outcomes of the measurements, and the typical contextuality argument involves showing that the probability distributions on different contexts cannot be obtained, in a no-signalling scenario, as marginals of some non-contextual hidden variable. In the sheaf-theoretic framework, a \textbf{(no-signalling) empirical model} is defined to be a compatible family of distributions $(\zeta_C)_{C \in \mathcal{M}}$ for the global cover $\mathcal{M}$ of measurement contexts\footnote{A compatible family of distributions $(a_C)_{C \in \mathcal{M}}$ is one such that $\restrict{a_C}{C \cap C'} = \restrict{a_{C'}}{C \cap C'}$ for all possible pairs of measurement contexts $C,C' \in \mathcal{M}$.}; the usual no-signalling property is shown in \cite{Abramsky2011} to be a special case of the compatibility condition. In other literature (usually treating probabilistic models), empirical models for Bell tests are usually given explicitly as conditional (probability) distributions, in a format akin to the following:
\begin{equation}
\zeta_{\,\underline{m}}\big(\underline{o}\big) := \mathbb{P}\big[\,\underline{o}\,\big\vert\, \underline{m}\,\big]
\end{equation}
where $\underline{m} = (m_1,...,m_N) \in \mathcal{M}$ are the measurement contexts used by the scenario and $\underline{o} \in \prod_{j} O_j$ are the joint outcomes. This is the format we will use in the last section of this work. In the probabilistic case, empirical models for a fixed scenario form a polytope. However, this need not be the same as the no-signalling polytope which is traditionally studied in quantum information theory, because the set of measurement contexts need not include all possible combinations of all possible measurements for each party (i.e. it need not always be the case that $\mathcal{M} = \prod_j M_j$, although it is necessarily the case that $\mathcal{M} \subseteq \prod_j M_j$). 

A \textbf{global section} for an empirical model\footnote{From now on, no-signalling is implicitly assumed.} $(\zeta_C)_{C \in \mathcal{M}}$ is a distribution $d \in \presheafOfDistributions{R}{\mathcal{X}}$ over the joint outcomes of all measurements which marginalises to the distributions specified by the empirical model:
\begin{equation}
	\restrict{d}{C}=\zeta_C \text{ for all } C \in \mathcal{M}
\end{equation} 
\noindent The fundamental observation behind the sheaf-theoretic framework is that the existence of a global section for an empirical model is equivalent to the existence of a \textbf{non-contextual hidden variable model} (also known as a \textbf{local hidden variable model}). Concretely, the existence of a global section $d$ means that there is a finite set $\Lambda$, an $R$-distribution $q(\lambda): \Lambda \rightarrow R$ and a family of functions $f_j^{\lambda}: M_j \rightarrow O_j$ such that:
\begin{equation}
\zeta_{\,\underline{m}}\big(\underline{o}\big) = \sum_{\lambda \in \Lambda} q(\lambda) \prod_{j} \delta_{f_j^{\lambda}(m_j) = o_j}
\end{equation}
We will say that an empirical model $(\zeta_C)_{C \in \mathcal{M}}$ is \textbf{contextual} (or \textbf{non-local}) if it does not admit a global section. 

Contextuality of probabilistic models is interesting in itself, but more refined notions can be obtained by relating $\reals^+$ to two other semirings: the reals, modelling signed probabilities, and the booleans, modelling possibilities. Observe that the construction $\distributionFunctorSym{R}$ is functorial in $R$, so that for any morphism of semirings $r: R \rightarrow R'$ we can define the following:
\begin{equation}
	\distributionFunctor{r}{U} := \left(d:U \rightarrow R\right) \mapsto \left(r \circ d: U \rightarrow R'\right)
\end{equation}
\noindent In particular, there is an injective morphism of semirings $i^+: \reals^+ \inject R$ sending $x \in \reals^+$ to $x \in \reals$, and a surjective morphism of semirings $p: \reals^+ \rightarrow \mathbb{B}$ sending $0 \mapsto 0$ and $x \mapsto 1$ for all $x > 0$ (the latter mapping is well defined for all positive semirings).  

If $(\zeta_C)_{C \in \mathcal{M}}$ is a probabilistic empirical model, i.e. one in the semiring $\reals^+$, then $(\zeta_C)_{C \in \mathcal{M}}$ can be seen as an empirical model $(i^+ \circ \zeta_C)_{C \in \mathcal{M}}$ in the semiring $\reals$: regardless of whether $(\zeta_C)_{C \in \mathcal{M}}$ was contextual or not over $\reals^+$, it can be shown \cite{Abramsky2011} that over the reals $\reals$ it always admits a global section. On the other hand, any probabilistic empirical model $(\zeta_C)_{C \in \mathcal{M}}$ can be assigned a corresponding possibilistic empirical model $(p\circ \zeta_C)_{C \in \mathcal{M}}$ in the semiring $\mathbb{B}$ of the booleans (and each boolean function $p \circ \zeta_C$ can equivalently be seen as the characteristic function of the subset $\support{\zeta_C} \subseteq \sheafOfEvents{C}$). 

Note that contextuality is a contravariant property with respect to change of semiring: if $(\zeta_C)_{C \in \mathcal{M}}$ is an empirical model in a semiring $R$ and $r: R \rightarrow R'$ is a morphism of semirings, then contextuality of $(r \circ \zeta_C)_{C \in \mathcal{M}}$ implies contextuality of $(\zeta_C)_{C \in \mathcal{M}}$ (because a global section $d$ of the latter is mapped to a global section $r \circ d$ of the former). We will say that a probabilistic empirical model $(\zeta_C)_{C \in \mathcal{M}}$ is \textbf{possibilistically contextual} if the corresponding possibilistic model $(p \circ \zeta_C)_{C \in \mathcal{M}}$ is contextual (as opposed to \textbf{probabilistically contextual}, which we use to say that $(\zeta_C)_{C \in \mathcal{M}}$ is contextual over $\reals^+$). Because of contravariance, possibilistic contextuality implies probabilistic contextuality, but the opposite is not true: the Bell model given in \cite{Abramsky2011} is probabilistically contextual but not possibilistically contextual.

Seeing distributions $d \in \presheafOfDistributions{\mathbb{B}}{U}$ as indicator functions of the subsets $\support{d} \subseteq \sheafOfEvents{U}$ endows them with a partial order:
\begin{equation}
	\label{eqn_SCimpliesC}
	d' \preceq d \text{ if and only if } \support{d'} \subseteq \support{d}
\end{equation}
\noindent The existence of a global section $d \in \presheafOfDistributions{\mathbb{B}}{U}$ for a possibilistic empirical model $(\zeta_C)_{C \in \mathcal{M}}$ implies that:
\begin{equation}
	\label{eqn_StrongContextualityCondition}
	\restrict{d}{C} \preceq \zeta_C \text{ for all } C \in \mathcal{M}
\end{equation}
We say that a possibilistic empirical model $(\zeta_C)_{C \in \mathcal{M}}$ is \textbf{strongly contextual} if there is no distribution $d \in \presheafOfDistributions{\mathbb{B}}{\mathcal{X}}$ such that Equation \ref{eqn_StrongContextualityCondition} holds. In particular, the GHZ model given in \cite{Abramsky2011}, corresponding to Mermin's original non-locality argument, is strongly contextual. Because of Equation \ref{eqn_SCimpliesC}, strong contextuality implies contextuality, but the opposite is not true: the possibilistic Hardy model give in \cite{Abramsky2011} is possibilistically contextual, but not strongly contextual. We will say that a probabilistic empirical model is strongly contextual if the associated possibilistic empirical model is strongly contextual, yielding the following hierarchy of notions of contextuality for probabilistic empirical models:
\begin{equation}
	\text{probabilistically contextual } \Leftarrow \text{ possibilistically contextual } \Leftarrow \text{ strongly contextual}
\end{equation}

\subsection{Contextuality in R-probabilistic theories}

The relevance of the sheaf-theoretic framework to this work stems from the following result: in any $R$-probabilistic theory, Bell tests give rise to no-signalling empirical models (with $R$-distributions) in the sheaf-theoretic framework, which can be used to prove contextuality/non-locality of the tests independently of the specific theory. As mentioned before, we will be interested in the CP* case, but the definitions below straightforwardly extend to arbitrary $R$-probabilistic theories.

\begin{definition}
A \textbf{Bell test} in an $R$-probabilistic CP* category is a process in the following form, for some normalised state $\rho$ and some normalised processes $B_1,...,B_N$:
\begin{equation}\label{eqn_BellTest}
	\input{pictures/chapter2/BellTest.tikz}
\end{equation}
We have denoted by $M_j$ the set of classical states for each classical input system $(\SpaceH_j,\hbox{\input{symbols/YbwdotSym.tex}}\!\!_j)$, and by $O_j$ the set of classical states for each classical output system $(\SpaceK_j,\hbox{\input{symbols/DdotSym.tex}}\!\!_j)$.
\end{definition}
\begin{theorem}[\textbf{Bell tests and sheaf-theoretic non-locality \cite{Gogioso2016f}}]\hfill\\ \label{thm_BellTestEM} Bell tests in $R$-probabilistic CP* categories give rise to $R$-valued no-signalling empirical models $(\zeta_{\,\underline{m}})_{\underline{m} \in \mathcal{M}}$ as follows, for any measurement cover $\mathcal{M}$:
\begin{equation}\label{eqn_BellTestEM}
	\input{pictures/chapter2/BellTestEM.tikz}
\end{equation}\end{theorem}

\begin{proof}
All we need to show is that the states in Equation \ref{eqn_BellTestEM} (indexed by the measurement contexts $\underline{m} \in \mathcal{M}$) satisfy no-signalling and are normalised. Marginalising over party $j$ yields the following state, which we want to prove independent of $m_j$:
\begin{equation}\label{diagram_BellTestEMproof1}
	\input{pictures/chapter2/BellTestEMproof1.tikz}
\end{equation}
The discarding map on the classical output systems $(\SpaceK,\hbox{\input{symbols/DdotSym.tex}}\!\!_j)$ can be written as $\trace{\,(\SpaceK_j,\hbox{\input{symbols/DdotSym.tex}}\!\!_j)} = \sum_{o_j} \bra{o_j}$ in terms of the classical states $(\ket{o_j})_{o_j \in O_j}$ of $\hbox{\input{symbols/DdotSym.tex}}\!\!_j$. Hence the marginalised state in Diagram \ref{diagram_BellTestEMproof1} can be rewritten as follows:
\begin{equation}\label{diagram_BellTestEMproof2}
	\input{pictures/chapter2/BellTestEMproof2.tikz}
\end{equation}
Because the $\hbox{\input{symbols/DdotSym.tex}}\!\!_j$ measurement, the process $B_j$ and the $\hbox{\input{symbols/YbwdotSym.tex}}\!\!_j$ preparation are all normalised, we conclude that the state of Diagram \ref{diagram_BellTestEMproof1} is independent of $m_j$:
\begin{equation}\label{eqn_BellTestEMproof3}
	\resizebox{\textwidth}{!}{\input{pictures/chapter2/BellTestEMproof3.tikz}}
\end{equation}
Marginalising over all outputs leaves us with $\trace{}\circ \rho$, which equals $1$ (independently of the measurement context $\underline{m}$) since $\rho$ is normalised. Hence the state of Diagram \ref{diagram_BellTestEMproof2} is also an $R$-distribution, completing our proof that Bell tests always give rise to no-signalling empirical models.
\end{proof}

\begin{theorem}[\textbf{Locality of $R$-probabilistic theories over fields}] \hfill\\
\label{thm_localityFields}
If $R$ is a field, then all $R$-probabilistic CP* categories are local. 
\end{theorem}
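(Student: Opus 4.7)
The plan is to reduce the claim to a classical fact about the sheaf-theoretic framework: any no-signalling empirical model valued in a field admits a global section. By Theorem~\ref{thm_BellTestEM}, every Bell test in an $R$-probabilistic CP* category gives rise to an $R$-valued no-signalling empirical model $(\zeta_{\underline{m}})_{\underline{m} \in \mathcal{M}}$, so locality will follow once I produce a distribution $d \in \presheafOfDistributions{R}{\mathcal{X}}$ with $\restrict{d}{\underline{m}} = \zeta_{\underline{m}}$ for every measurement context $\underline{m} \in \mathcal{M}$.

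The construction of $d$ is a linear-algebra problem over $R$. The marginalisation-to-contexts maps assemble into an $R$-linear map
\[
\Phi: R^{\sheafOfEvents{\mathcal{X}}} \longrightarrow \bigoplus_{\underline{m} \in \mathcal{M}} R^{\sheafOfEvents{\underline{m}}},
\]
and the key lemma to prove is that the image of $\Phi$ coincides with the $R$-linear subspace of no-signalling families (this is the generalisation to an arbitrary field of the fact cited from \cite{Abramsky2011} for the case $R = \reals$). I would establish the lemma by induction on $|\mathcal{M}|$: enumerating contexts $\underline{m}_1, \ldots, \underline{m}_K$ and extending the preimage one context at a time, the linear system to be solved at each stage is consistent precisely because of the agreement of marginals on overlapping measurements, and consistent linear systems over a field always admit solutions. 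Normalisation of $d$ then comes for free: summing $\restrict{d}{\underline{m}} = \zeta_{\underline{m}}$ over all joint outcomes yields $\sum_s d(s) = \sum_{\underline{o}} \zeta_{\underline{m}}(\underline{o}) = 1$ from any single context $\underline{m}$, using normalisation of $\zeta_{\underline{m}}$.

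The hard part will be setting up the induction so that it correctly accommodates the combinatorics of overlaps between contexts in a general cover $\mathcal{M}$, rather than the full product $\prod_j M_j$. This is routine bookkeeping rather than a deep obstacle, and the field hypothesis is what allows us to solve each local linear system---an impossibility over $\reals^+$, which is exactly why the genuinely probabilistic case continues to admit non-locality.
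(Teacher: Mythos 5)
You follow the same route as the paper at the top level: Theorem \ref{thm_BellTestEM} turns a Bell test into an $R$-valued no-signalling empirical model, so locality reduces to showing that over a field every such model admits a global section. The difference is in how that target is discharged. The paper cites Theorem 5.4 of \cite{Abramsky2011} (every no-signalling model over $\reals$ admits a signed-probability local hidden variable model) and observes that the proof survives over any field; it then closes with a remark you omit, namely that the resulting $R$-valued hidden variable model is realisable \emph{inside} the theory, because the classical subcategory contains all $R$-distributions and $R$-stochastic maps --- a point that matters only if ``local'' is read as internal realisability rather than mere existence of a global section, so its omission is minor. You instead propose to prove the key lemma directly by induction on contexts, which is a legitimate and more self-contained route; however, as written the crux is asserted rather than proved. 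The sentence claiming that ``the linear system at each stage is consistent precisely because of the agreement of marginals on overlapping measurements, and consistent linear systems over a field always admit solutions'' is circular: consistency \emph{is} the content of the lemma, not bookkeeping, and it does not follow merely from the overlap conditions being necessary.

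The step can, in fact, be closed --- and doing so shows the field hypothesis is stronger than needed (additive inverses suffice, which is what really powers the $\reals$ case in \cite{Abramsky2011}). Suppose $d_k$ satisfies $\restrict{d_k}{C_i} = \zeta_{C_i}$ for all $i \le k$, and set $\varepsilon := \zeta_{C_{k+1}} - \restrict{d_k}{C_{k+1}}$. Compatibility of the family gives $\restrict{\varepsilon}{C_{k+1} \cap C_i} = \restrict{\zeta_{C_{k+1}}}{C_{k+1} \cap C_i} - \restrict{\zeta_{C_i}}{C_{k+1} \cap C_i} = 0$ for every $i \le k$. Choose a point mass $u_m$ on $O_j$ for each measurement $m = m_j \in \mathcal{X} \setminus C_{k+1}$ and define $\delta(s) := \varepsilon(\restrict{s}{C_{k+1}}) \prod_{m \in \mathcal{X} \setminus C_{k+1}} u_m(s(m))$. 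A direct computation gives $\restrict{\delta}{C_{k+1}} = \varepsilon$, while for $i \le k$ the marginal $\restrict{\delta}{C_i}$ factors through $\restrict{\varepsilon}{C_i \cap C_{k+1}} = 0$ and hence vanishes; so $d_{k+1} := d_k + \delta$ continues the induction, with base case $d_1(s) := \zeta_{C_1}(\restrict{s}{C_1}) \prod_{m \in \mathcal{X} \setminus C_1} u_m(s(m))$. Your normalisation observation then finishes the proof, and the combinatorics of a general cover $\mathcal{M} \subseteq \prod_j M_j$ causes no trouble. With this step filled in, your argument is correct and slightly more general than the citation route taken by the paper.
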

\begin{proof}
Theorem 5.4 from Ref. \cite{Abramsky2011} states that all no-signalling empirical models over the signed-probability field $\reals$ admit a local hidden variable model in terms of signed probabilities. Although the original result was proven for $\reals$, close inspection reveals that it holds for no-signalling empirical models over any field $R$: as a consequence, Bell-type measurement scenarios in $R$-probabilistic theories where $R$ is a field give rise to no-signalling empirical models admitting local hidden variable models. Finally, $R$-probabilistic theories have a sub-SMC of finite $R$-probabilistic classical systems, with all $R$-distributions as normalised states and all $R$-stochastic maps as normalised processes: as a consequence, all local hidden variable models valued in $R$ can be realised in any and all of them.
\end{proof}

\section{Some toy models of quantum theory}

In this Section, we present a number of toy models of quantum theory constructed using the framework for $R$-probabilistic CP* categories presented above. These examples are taken from the very recent \cite{Gogioso2017FQT}. 

\subsection{Theories of wavefunctions over semirings}

Note that two different linear structures intervene in the definition of quantum theory: the $\complexs$-linear structure of wavefunctions, modelling superposition, interference and phases, and the $\reals^+$-linear structure of probability distributions over classical systems. We have already seen that the framework of $R$-probabilistic theories replaces the probability semiring $\reals^+$ with a more general commutative semiring $R$ as a model of classical non-determinism. In this Section, we construct a large class of toy models of quantum theory by considering theories of wavefunctions with amplitudes valued in some commutative semiring with involution $S$, generalising the field with involution $\complexs$ traditionally used in quantum mechanics. To do so, we consider the dagger compact category $\RMatCategory{S}$ (dagger and compact closed structure will be defined using the involution), and we require classical non-determinism to arise via the Born rule, as embodied by the CP* construction. The corresponding quantum-classical theory will therefore be modelled by $\CPStarCategory{\RMatCategory{S}}$, and the main result of this Section (Theorem \ref{thm_RprobabilisticCPStarCategories}) will show that this is an $R$-probabilistic theory (where $R$ the sub-semiring of positive elements of $S$, see Definition \ref{def_semiringPositiveElements} below).

The category $\RMatCategory{S}$ for a commutative semiring with involution $S$ is defined as in the previous Section, but it comes with additional structure. Indeed, we can define dagger and compact closed structures on $\RMatCategory{S}$ exactly as done in $\fdHilbCategory$ (which is $\RMatCategory{\complexs}$), with conjugation taken using the involution $^\ast$ of $S$ in place of complex conjugation. Each object $S^X$ in $\RMatCategory{S}$ comes with at least one orthonormal basis $\ket{x}_{x \in X}$, as well as an associated special commutative $\dagger$-Frobenius algebra $\hbox{\input{symbols/ZbwdotSym.tex}}\!\!_X$:
\begin{equation}
\input{pictures/chapter2/classicalStructureFQT.tikz}
\end{equation}
For any group structure $G = (X,\cdot,1)$ on any finite set $X$, one also obtains an associated $\dagger$-Frobenius algebra $\hbox{\input{symbols/DdotSym.tex}}\!\!_G$ on $S^X$ by linearly extending the group multiplication and unit:
\begin{equation}
\input{pictures/chapter2/groupStructureFQT.tikz}
\end{equation}
The $\dagger$-Frobenius algebra is commutative if and only if the group is, and it always satisfies the following:
\begin{equation}
\input{pictures/chapter2/almostQuasiSpecialityFQT.tikz}
\end{equation}
Unfortunately, $\hbox{\input{symbols/DdotSym.tex}}\!\!_G$ is not quasi-special (a.k.a. normalisable) unless the scalar $|G|$ takes the form $z_G^\ast z_G$ for some $z_G \in S$ which is multiplicatively invertible: when this is the case, however, we have a legitimate strongly complementary pair $(\hbox{\input{symbols/ZbwdotSym.tex}}\!\!_X,\hbox{\input{symbols/DdotSym.tex}}\!\!_G)$ in $\RMatCategory{S}$ corresponding to the finite group $G$ (see next Chapter). When $G$ is abelian these strongly complementary pairs can be used (under additional conditions) to implement quantum protocols such as the algorithm to solve the abelian Hidden Subgroup Problem \cite{Vicary2012a,Gogioso2016d} or generalised Mermin-type arguments \cite{Gogioso2015,Gogioso2016e} (see Chapter \ref{chapter_algos}). This also means that certain objects in $\RMatCategory{S}$ support fragments of the ZX calculus\footnote{To be precise, they always support the spider rules (but cups/caps for the two algebras are generally distinct), the bialgebra rules, the Hopf laws (with non-trivial antipode), the copy rules, and a generalised version of the $\pi$-copy rules (see \cite{Backens2014}). A Hadamard unitary can be defined if and only if the $S$-valued unitary multiplicative characters for $G$ form an orthonormal basis for $S^X$, and in this case the colour-change rules are also supported (taking care, where relevant, to consider the adjoint of the Hadamard in place of the Hadamard itself).} \cite{Coecke2011,Backens2014}, opening the way to the application of well-established diagrammatic techniques to reason in these categories.

In quantum theory, the probabilistic semiring $\reals^+$ arises as a sub-semiring of $\complexs$ fixed by complex conjugation, namely the sub-semiring of those elements $z \in \complexs$ taking the form $z = x^\ast x$: this is, essentially, a hallmark of the Born rule. In general commutative semirings with involution, elements in the form $x^\ast x$ need not be closed under addition, but it is true their closure under addition always form a semiring.
\begin{definition}\label{def_semiringPositiveElements}
Let $S$ be a commutative semiring with involution. Then we define its \textbf{sub-semiring $R$ of positive elements} in $S$ to be the closure under addition in $S$ of the subset $\suchthat{x^\ast x}{ x \in S}$.
\end{definition}

\noindent When classical non-determinism is introduced via the Born rule, quantum theory naturally gives rise to a probabilistic theory. Similarly, it is possible to prove that any theory of wavefunctions valued in a commutative semirings with involution $S$ gives rise to an $R$-probabilistic theory, where $R$ is the corresponding sub-semiring of positive elements. 
\begin{theorem}\label{thm_RprobabilisticCPStarCategories}
Let $S$ be a commutative semiring with involution, and let $R$ be its sub-semiring of positive elements. Then $\CPStarCategory{\RMatCategory{S}}$ is $R$-probabilistic under the $\CMonCategory$-enrichment inherited from $\RMatCategory{S}$.
\end{theorem}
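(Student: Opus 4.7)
The plan is to verify each of the three requirements in the definition of an $R$-probabilistic CP* category. First, I would show that $\CPStarCategory{\RMatCategory{S}}$ inherits distributive $\CMonCategory$-enrichment from $\RMatCategory{S}$. The base category $\RMatCategory{S}$ is distributively $\CMonCategory$-enriched by matrix addition, with linear dagger induced by the involution of $S$. This enrichment transfers in a routine way to $\CPMCategory{\RMatCategory{S}}$: sums of CPM morphisms in the form of Equation \ref{eqn_CPMmorphism} are obtained by merging environments into a direct sum, the zero morphism is $\CPMdoubled{0}$, and all the CPM structural operations respect these sums because they do so in $\RMatCategory{S}$. It then passes to the full subcategory $\CPStarCategory{\RMatCategory{S}}$ of the Karoubi envelope, since the Karoubi idempotence condition $e' \circ f \circ e = f$ is preserved under $\CMonCategory$-combinations.

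Second, I would verify that the scalars form precisely $R$. Scalars in $\CPStarCategory{\RMatCategory{S}}$ coincide with scalars in $\CPMCategory{\RMatCategory{S}}$, since the tensor unit $\tensorUnit$ is a CPM system carrying only the trivial decoherence $\id{\tensorUnit}$. By Equation \ref{eqn_CPMmorphism}, every scalar takes the form $\trace{E} \circ \CPMdoubled{f}$ for some $f: \tensorUnit \to E$ in $\RMatCategory{S}$; unfolding and using $\trace{E} = \sum_{e \in E} \CPMdoubled{\bra{e}}$, this evaluates to $\sum_{e \in E} f_e^\ast f_e \in S$, where $f_e$ are the components of the vector $f$. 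As $E$ and $f$ vary, these values range exactly over the additive closure of $\suchthat{z^\ast z}{z \in S}$ in $S$, which is by definition the sub-semiring $R$ of positive elements; the involution inherited from the dagger restricts the involution of $S$ to $R$.

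Third, I would exhibit classical systems of each finite cardinality. For each $n \in \naturals$, consider the object $S^{[n]}$ of $\RMatCategory{S}$ with $[n] := \{1, \ldots, n\}$, equipped with the canonical $\dagger$-SCFA $\ZbwdotSym_{[n]}$ that coherently copies the canonical basis $\ket{x}_{x \in [n]}$. Then $(S^{[n]},\ZbwdotSym_{[n]})$ is a decohered system in $\CPStarCategory{\RMatCategory{S}}$; a direct computation shows the $n$ basis vectors are classical states of $\ZbwdotSym_{[n]}$, are orthonormal in the inner product induced by the dagger, and separate morphisms out of $S^{[n]}$---so $\ZbwdotSym_{[n]}$ has enough classical states in the abstract sense defined earlier.

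The main obstacle lies in the third step: one must verify that the classical states of $\ZbwdotSym_{[n]}$ are \emph{exactly} the $n$ canonical basis vectors, rather than a larger and potentially non-orthonormal family arising from exotic idempotents of $S$. Expanding the copy, delete and adjoin conditions on $\psi = \sum_x \psi_x \ket{x}$ yields the constraints $\psi_x \psi_y = \delta_{x,y} \psi_x$, $\sum_x \psi_x = 1$, and $\psi_x^\ast = \psi_x$ in $S$, i.e.\ a complete system of orthogonal self-adjoint idempotents indexed by $[n]$. For the semirings of physical interest---where the only self-adjoint idempotents are $0$ and $1$---this forces exactly one $\psi_x = 1$ and the rest zero, yielding the $n$ basis vectors and no more; this observation closes the verification.
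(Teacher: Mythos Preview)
Your verification of the enrichment and the identification of the scalars with $R$ matches the paper almost verbatim: the paper also argues that a generic scalar in $\CPStarCategory{\RMatCategory{S}}$ is $\trace{S^D}\circ\CPMdoubled{\ket{\psi}} = \sum_d p_d^\ast p_d$, hence ranges exactly over the additive closure of $\{z^\ast z\}$ in $S$.

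Where you diverge is in establishing the classical subcategory. You try to check condition~(iii) of the $R$-probabilistic CP* definition by showing that $(S^{[n]},\ZbwdotSym_{[n]})$ has \emph{exactly} $n$ classical states, and you correctly identify that this fails for semirings $S$ with non-trivial self-adjoint idempotents (e.g.\ product semirings), forcing you to add an extra hypothesis. The paper sidesteps this entirely: rather than counting classical states, it works with the general $R$-probabilistic theory requirement that there be a full sub-SMC equivalent to $\RMatCategory{R}$. It takes the full subcategory spanned by the objects $(S^X,\decoh{\ZbwdotSym_X})$ and computes the hom-sets directly, showing that every morphism $(S^X,\decoh{\ZbwdotSym_X}) \to (S^Y,\decoh{\ZbwdotSym_Y})$ is uniquely of the form $\sum_{x,y} \CPMdoubled{\ket{y}}\, f_{xy}\, \CPMdoubled{\bra{x}}$ with $f_{xy}\in R$. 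This yields the equivalence with $\RMatCategory{R}$ for arbitrary $S$, with no idempotent hypothesis needed. Your route is more faithful to the CP*-specific definition but only closes under an additional assumption; the paper's route is coarser but works in full generality.
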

\begin{proof}

In order for $\CPStarCategory{\RMatCategory{S}}$ to be $R$-probabilistic under the $\CMonCategory$-enrichment of $\RMatCategory{S}$, we need to show that it satisfies the following three conditions:
\begin{enumerate}
	\item[(i)] there is a full sub-SMC $\classicalSubcategory{\CPStarCategory{\RMatCategory{S}}}$ which is equivalent to $\RMatCategory{R}$;
	\item[(ii)] the $\CMonCategory$-enrichment of $\RMatCategory{S}$ must restrict to a well-defined enrichment for $\CPStarCategory{\RMatCategory{S}}$, coinciding on $\classicalSubcategory{\CPStarCategory{\RMatCategory{S}}}$ with the enrichment of $\RMatCategory{R}$;
	\item[(iii)] the SMC $\CPStarCategory{\RMatCategory{S}}$ comes with an environment structure which restricts to the the canonical one from $\RMatCategory{R}$ on the full subcategory $\classicalSubcategory{\CPStarCategory{\RMatCategory{S}}}$.
\end{enumerate}

\noindent Firstly, we show that the $\CMonCategory$-enrichment of $\RMatCategory{S}$ restricts to a well-defined $\CMonCategory$-enrichment for $\CPStarCategory{\RMatCategory{S}}$. Because $\RMatCategory{S}$ is a category of matrices, this is in turn true if and only if the scalars of $\CPStarCategory{\RMatCategory{S}}$ are closed under addition in $\RMatCategory{S}$, i.e. if and only if they form a sub-semiring of $S$ (they are always necessarily closed under multiplication). To see that this is true, it suffices to show that the scalars of $\CPStarCategory{\RMatCategory{S}}$ form exactly the sub-semiring $R$ of positive elements of $S$ (we have to show it anyway, if we want our theory to be $R$-probabilistic!). Indeed, the generic scalar of $\CPStarCategory{\RMatCategory{S}}$ takes the form $\trace{\,S^D} \circ \CPMdoubled{\ket{\psi}} = \sum_{d=1}^D p_d^\ast p_d$ for a generic state $\ket{\psi} := \sum_{d=1}^D  p_d \ket{d}$ of $\RMatCategory{S}$.

For condition (i), consider the full-subcategory $\classicalSubcategory{\CPStarCategory{\RMatCategory{S}}}$ of $\CPStarCategory{\RMatCategory{S}}$ spanned by those objects in the form $(S^X,\decoh{\hbox{\input{symbols/ZbwdotSym.tex}}\!\!_X})$, where $X$ is a finite set, $\hbox{\input{symbols/ZbwdotSym.tex}}\!\!_X$ is the special commutative $\dagger$-Frobenius algebra on $S^X$ associated with the orthonormal basis $\ket{x}_{x \in X}$, and $\decoh{\hbox{\input{symbols/ZbwdotSym.tex}}\!\!_X}: S^X \rightarrow S^X$ is the decoherence map for $\hbox{\input{symbols/ZbwdotSym.tex}}\!\!_X$, which is a self-adjoint idempotent normalised CP map. Morphisms $(S^X,\decoh{\hbox{\input{symbols/ZbwdotSym.tex}}\!\!_X}) \rightarrow (S^Y,\decoh{\hbox{\input{symbols/ZbwdotSym.tex}}\!\!_Y})$ are exactly those in the following form, where $(f_{xy})_{x \in X, y \in Y}$ is an arbitrary matrix of scalars (i.e. elements of $R$):
\begin{equation}
\sum_{y \in Y}\sum_{x \in X} \CPMdoubled{\ket{y}}\; f_{xy}\; \CPMdoubled{\bra{x}}
\end{equation}
As a consequence, $\classicalSubcategory{\CPStarCategory{\RMatCategory{S}}}$ is equivalent to $\RMatCategory{R}$. Condition (ii) is satisfied as well. For condition (iii), it suffices to consider the canonical environment structure given by the CP* construction. Because decoherence maps are normalised, this environment structure restricts to the canonical one on $\classicalSubcategory{\CPStarCategory{\RMatCategory{S}}}$.
\end{proof}

Note that the scalars of $\CPStarCategory{\RMatCategory{S}}$ are the elements of $R$, and that the pure scalars are those in the form $\xi^\ast \xi$ for some $\xi \in S$: as a consequence, not all scalars of $\CPStarCategory{\RMatCategory{S}}$ need be pure (in contrast to what happens with ordinary quantum theory). In what follows, we will try as much as possible to construct theories where all scalars are pure, but there are examples (such as the case of $p$-adic quantum theory) where this cannot be achieved. When all scalars are pure, the requirement that $|G| = z_G^\ast z_G$ is always automatically satisfied for all finite groups $G$, and we only need to care about $|G|$ being invertible as a scalar in $S$ (a fact which always holds true whenever $S$ is a semi-field/field and $|G|$ is non-zero in $S$). We will now proceed to construct a number of toy models within this framework.

\subsection{Real quantum theory}

The simplest non-conventional example is given by the ring $\reals$ of signed reals (with the trivial involution), which yields the \textbf{probability semiring} $\reals^+$ as its sub-semiring of positive elements; in particular, all positive elements are pure scalars. The corresponding probabilistic theory $\CPStarCategory{\RMatCategory{\reals}}$ is known as \textbf{real quantum theory} \cite{Jordan1934,Baez2012,Belenchia2012,Wilce2016}: it is arguably the most well-studied of the quantum-like theories, and the closest to ordinary quantum theory. Thus said, real quantum theory can be distinguished from ordinary quantum theory because it fails to be \textit{locally tomographic} \cite{Araki1980,Wootters1990,Chiribella2010}, i.e. bipartite (mixed) states in real quantum theory cannot in general be distinguished by product measurements alone. Equivalently, one can check that the CP maps $\CPMdoubled{\sigma_x} + \CPMdoubled{\sigma_z} - \CPMdoubled{\id{\reals^2}}$ and $\CPMdoubled{\sigma_y}$ on $\reals^2$ in $\CPMCategory{\RMatCategory{\reals}}$ cannot be distinguished by applications to mixed states of $\reals^2$ alone, because the latter are described by density matrices which are always real symmetric\footnote{By $\sigma_x$, $\sigma_y$ and $\sigma_z$ we have denoted the complex qubit Pauli matrices, which give rise to real CP maps on $\reals^2$ when doubled.}.

The group of phases in $\reals$ is $\{\pm 1\} \isom \integersMod{2}$, and non-trivial interference is possible in real quantum theory. For example, each of the Pauli $X$ eigenstates $\ket{\pm} := \frac{1}{\sqrt{2}}(\ket{0}\pm\ket{1})$ of the qubit $\reals[\integersMod{2}]$ in real quantum theory yields the uniform distribution when measured in the Pauli $Z$ basis $\ket{0},\ket{1}$, but their superposition $\frac{1}{\sqrt{2}}(\ket{+}+\ket{-})$ yields the outcome corresponding to $\ket{0}$ with certainty.

Finally, Bell's theorem goes through in real quantum theory (as it only involves states and measurements on the $ZX$ greater circle of the Bloch sphere), and the latter is a non-local probabilistic theory (because the states and processes of real quantum theory are a subset of those of quantum theory).

\subsection{Relational quantum theory}

Examples of an entirely different nature are given by considering distributive lattices $\Omega$ (with the trivial involution), which yield themselves back as their sub-semirings of positive elements (because of multiplicative idempotence); in particular, all positive elements are pure scalars. Distributive lattices seem to be almost as far as one can go from the probabilistic semiring $\reals^+$, but the category $\RMatCategory{\Omega}$ has been studied extensively as a toy model of quantum theory (especially in the boolean case $\Omega = \mathbb{B}$) \cite{Pavlovic2009,Abramsky2012,Evans2009,Zeng2015,Coecke2012a}, and the corresponding CPM category has also received some attention on its own \cite{Marsden,Gogioso2015b}. We refer to the corresponding $\Omega$-probabilistic (or \textbf{possibilistic}) theory as \textbf{relational quantum theory}.

The group of phases in $\Omega$ is the singleton $\{1\}$, and no interference is possible in relational quantum theory. Relational quantum theories also feature very few quantum-to-classical transitions: there is a unique basis on each system, namely the one given by the elements of the underlying set. They are local tomographic on pure states, but they fail to be tomographic altogether on mixed states: for example, the pure state $\ket{\psi}\bra{\psi}$ for $\ket{\psi} := \ket{0}+\ket{1}$ and the mixed state $\ket{0}\bra{0} + \ket{1}\bra{1}$ are distinct, but cannot be distinguished by measurement. In fact, a characteristic trait of relational quantum theories is exactly that superposition and mixing are essentially indistinguishable (because of idempotence) \cite{Abramsky2012,Marsden,Gogioso2015b}, and this can be used to show that relational quantum theories are entirely local \cite{Abramsky2012,Gogioso2015b}.

\subsection{Hyperbolic quantum theory}

Turning our attention back to real algebras, we can consider the commutative ring of \textbf{split complex numbers} $\splitComplexs := \reals[X]/(X^2-1)$, a two-dimensional real algebra. Split complex numbers take the form $(x+j y)$, where $x,y \in \reals$ and $j^2 = 1$; in particular, they have non-trivial zero-divisors in the form $a(1\pm j)$, because $(1+j)(1-j)=1-j^2 = 0$. They come with the involution $(x + j y)^\ast := x - jy$, which yields the \textbf{signed-probability ring} $\reals$ as sub-semiring of positive elements; in particular, all positive elements are pure scalars. We refer to the corresponding $\reals$-probabilistic (or \textbf{quasi-probabilistic}) theory $\CPStarCategory{\RMatCategory{\splitComplexs}}$ as \textbf{hyperbolic quantum theory}\footnote{Clifford referred to functions of split complex numbers as \inlineQuote{functions of a motor variable} \cite{Clifford1871}, so we could say that hyperbolic quantum theory is the theory of \textbf{wavefunctions of a motor variable} (how does \textbf{motor quantum theory} sound?).} \cite{Khrennikov2003,Khrennikov2010,Nyman2011}. Because scalars form a field, Theorem \ref{thm_localityFields} (and the original Theorem 5.4 from \cite{Abramsky2011}) implies that hyperbolic quantum theory is a local theory. 

The group of phases in $\splitComplexs$ consists of the elements with square norm $1$, i.e. the elements in the form $x+jy$ which lie on the following unit hyperbola of the real plane:
\begin{equation}
1=(x+jy)^\ast(x+jy)=x^2-y^2
\end{equation} 
In fact, the natural geometry for the split complex numbers is that of the real plane endowed with the Lorentzian metric $-dy^2+dx^2$, i.e. that of the Minkowski plane. Just like multiplication by phases in $\complexs$ forms the circle group $U(1)$ of rotations around the origin for the Euclidean plane, multiplication by phases in $\splitComplexs$ forms the group $SO(1,1)$ of orthochronous homogeneous Lorentz transformations for the Minkowski plane, and we have the isomorphism of Lie groups $\integersMod{2} \times \reals \isom SO(1,1)$ given by $(s,\theta) \mapsto s (\cosh(\theta)+j\sinh(\theta))$.

hyperbolic quantum theory is a local theory, in the sense that every empirical model arising in hyperbolic quantum theory admits a local hidden variable model in terms of signed probabilities (the notion of classical non-determinism for hyperbolic quantum theory) \cite{Abramsky2011}. While signed probabilities might at first sound unphysical, an operational interpretation exists in terms of unsigned probabilities on signed events \cite{Abramsky2014}\footnote{Where the sign of the events themselves cannot be observed, yielding an epistemic restriction which is not too far removed from the one which originally motivated Spekkens's toy model \cite{Spekkens2007,Catani2017}}.

\subsection{Parity quantum theory}

A simple variation on relational quantum theory (over the booleans) is given by using symmetric difference of sets, instead of union, as the superposition operation. This leads us to consider the finite field with two elements $\integersMod{2}:=(\{0,1\},+,0,\times,1)$, with trivial involution, in place of the booleans $\mathbb{B}:=(\{0,1\},\vee,0,\times,1)$, also with trivial involution. The multiplication is the same, but addition is now non-idempotent, and superposition is no longer the same as mixing. The \textbf{parity semiring} $\integersMod{2}$ yields itself back as its sub-semiring of positive elements (in particular, all positive elements are pure scalars), and we refer to the corresponding $\integersMod{2}$-probabilistic theory $\CPStarCategory{\RMatCategory{\integersMod{2}}}$ as \textbf{parity quantum theory}. 

\begin{remark}
Because the involution is trivial, parity quantum theory as defined here pretty much coincides with the $\integersMod{2}$ case of modal quantum theory \cite{Schumacher2012,Schumacher2016}, but it should be noted that the philosophical interpretation of $\integersMod{2}$-valued probabilities is significantly different. In modal quantum theory, the interest is in generating possibilistic tables by using finite fields, subsequently interpreting all zero values as the boolean $0$ and all non-zero values as the boolean $1$. In parity quantum theory, the non-determinism itself is interpreted to be natively $\integersMod{2}$-valued, and no attempt is made to translate the resulting empirical models into possibilistic ones. Indeed, such an interpretation would not be natural within our semiring-oriented framework, as no semiring homomorphism can exists from any finite field into the booleans. 
\end{remark}

The group of phases in $\integersMod{2}$ is the singleton $\{1\}$, but interference is still possible in parity quantum theory: this somewhat counter-intuitive situation is made possible by the fact that $1$ is its own additive inverse in $\integersMod{2}$, so that triviality of the group of phases is slightly deceptive. Indeed, consider the four two-qubit states below, which form an orthonormal basis for $\integersMod{2}^2$:
\begin{align}
\ket{\psi_{012}} := \ket{00}+\ket{01}+\ket{10} \hspace{2cm} \ket{\psi_{123}} := \ket{01}+\ket{10}+\ket{11} \nonumber \\
\ket{\psi_{230}} := \ket{10}+\ket{11}+\ket{00} \hspace{2cm} \ket{\psi_{301}} := \ket{11}+\ket{00}+\ket{01}
\end{align}
For example, we have $\ket{10} = \ket{\psi_{012}}+\ket{\psi_{123}}+\ket{\psi_{230}}$. When measured in the computational basis $\ket{00}, \ket{01}, \ket{10}, \ket{11}$, the normalised states $\ket{01}$, $\ket{10}$ and $\ket{\psi_{012}}$ all have non-zero $\integersMod{2}$-probability of yielding an outcome in the set $\{01,10\}$, but their superposition $\ket{01}+\ket{10}+\ket{\psi_{012}} = \ket{00}$ (also a normalised state) has zero $\integersMod{2}$-probability of yielding an outcome in that set. 

$R$-probabilistic theories can be similarly constructed for modal quantum theory  over any other finite field $\finiteField{p^n}$ \cite{Schumacher2012,Schumacher2016}, by taking $S := \finiteField{p^n}$ with the trivial involution. However, these theories have a lot of non-pure scalars---namely the $(p^n-1)/2$ non-squares in $\finiteField{p^n}$---and their phases are close to trivial---namely they are $\{\pm 1\}$ if $p>2$ and $\{1\}$ if $p=2$. Instead, we will consider a more sophisticated construction based on quadratic extensions of finite fields, which we call \inlineQuote{finite-field quantum theory}. 
 
What will make finite-field quantum theory extremely attractive for CQM is the fact that it is a local theory (by Theorem \ref{thm_localityFields}), in which it is nonetheless possible to formulate non-trivial quantum algorithms (such as the one solving the abelian Hidden Subgroup Problem), as well as non-trivial \inlineQuote{non-locality} arguments (such as generalised Mermin-type arguments). This is in stark contrast with the more traditional toy models based on relational quantum theory, in which the quantum Fourier transform cannot be performed for non-trivial groups \cite{Gogioso2015d},precluding the implementation of algorithms based on it, and in which all Mermin-type arguments are necessarily trivial \cite{Coecke2012c,Gogioso2015} (see Chapter \ref{chapter_algos}).

\subsection{Finite-field quantum theory}

Consider a finite field $\finiteField{p^n}$ (with $p$ odd), and let $\epsilon$ be a generator for the cyclic group $\finiteField{p^n}^\times$ of invertible (aka non-zero) elements in $\finiteField{p^n}$ (i.e. a primitive element for $\finiteField{p^n}$). We consider the ring $\finiteField{p^n}[\sqrt{\epsilon}] := \finiteField{p^n}[X^2-\epsilon]$, equipped with the involution $(x+y\sqrt{\epsilon})^\ast := (x-y\sqrt{\epsilon})$: because $\epsilon$ is a primitive element, $\finiteField{p^n}(\sqrt{\epsilon}) \isom \finiteField{p^{2n}}$ is a field. We are in fact working with the quadratic extension of fields $\finiteField{p^n}(\sqrt{\epsilon}) / \finiteField{p^n}$, equipped with the usual involution and (squared) norm:
\begin{equation}
\big|x+y\sqrt{\epsilon}\big|^2 = (x-y\sqrt{\epsilon})(x+y\sqrt{\epsilon}) = x^2 - \epsilon y^2
\end{equation}
The sub-field $\finiteField{p^n}$ (given by the elements in the form $x+0\sqrt{\epsilon}$) is the sub-semiring of positive elements (and we will shortly see that all positive elements are pure scalars). 

The phases in $\finiteField{p^{n}}(\sqrt{\epsilon})$ are the points $(x,y)$ of the $\finiteField{p^n}^2$ plane lying on the unit hyperbola $x^2 - \epsilon y^2 = 1$, which does not factor as a product of two lines because $\epsilon$ is a primitive element. The following iconic result of Galois theory due to Hilbert can be used to characterise them (see e.g. \cite{Hilbert1998} for a proof). 
\begin{theorem}[\textbf{Hilbert's Theorem 90}]\hfill\\
Let $L/K$ be a finite cyclic field extension, and let $\sigma: L \rightarrow L$ be a generator for its cyclic Galois group. Then the multiplicative group of elements $\xi \in L$ of unit relative norm $N_{L/K}(\xi)=1$ is isomorphic to the quotient group $L^\times / K^\times$.
\end{theorem}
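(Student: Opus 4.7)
The plan is to set up the natural map explicitly and reduce the isomorphism to the classical proof via Dedekind's linear independence of characters. Let $n = [L:K]$ and let $G = \langle \sigma \rangle$ be the cyclic Galois group of order $n$, so that $N_{L/K}(\xi) = \prod_{i=0}^{n-1} \sigma^i(\xi)$. I would first define the map
\begin{equation*}
\phi: L^\times \longrightarrow \{\xi \in L^\times \mid N_{L/K}(\xi) = 1\}, \qquad \beta \longmapsto \beta/\sigma(\beta),
\end{equation*}
and verify the basic structural properties: (i) $\phi$ is a group homomorphism, since $\sigma$ is multiplicative; (ii) $\phi$ lands in the norm-one subgroup, by the telescoping computation $\prod_i \sigma^i(\beta)/\sigma^{i+1}(\beta) = \beta/\sigma^n(\beta) = 1$ (using $\sigma^n = \mathrm{id}$); (iii) the kernel of $\phi$ is exactly the fixed field's multiplicative group, namely $\ker \phi = \{\beta \in L^\times \mid \sigma(\beta) = \beta\} = K^\times$, so by the first isomorphism theorem $\phi$ factors through an injection $L^\times/K^\times \hookrightarrow \{\xi : N(\xi) = 1\}$. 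All of these steps are routine.

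The main obstacle, and the content of the theorem, is surjectivity: given $\xi \in L^\times$ with $N_{L/K}(\xi) = 1$, one must produce $\beta \in L^\times$ such that $\xi\sigma(\beta) = \beta$. The plan is to use Dedekind's lemma on the linear independence of the distinct field automorphisms $1, \sigma, \ldots, \sigma^{n-1}: L \to L$ as $L$-valued functions on $L$. Define the operator
\begin{equation*}
T = \mathrm{id} + \xi\, \sigma + \xi\sigma(\xi)\, \sigma^2 + \cdots + \xi\sigma(\xi)\sigma^2(\xi)\cdots\sigma^{n-2}(\xi)\, \sigma^{n-1},
\end{equation*}
viewed as a nonzero $L$-linear combination of the $\sigma^i$ (the coefficients are units in $L^\times$). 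By Dedekind independence $T \not\equiv 0$, so one can choose $\alpha \in L$ with $\beta := T(\alpha) \neq 0$.

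The last step is a direct computation showing that $\xi \sigma(\beta) = \beta$. Applying $\sigma$ to each summand of $T(\alpha)$ and then multiplying by $\xi$ shifts the indexing by one, producing a sum whose first $n-1$ terms coincide with the last $n-1$ terms of $\beta$; the remaining end-term in $\xi\sigma(\beta)$ has coefficient $\xi\sigma(\xi)\cdots\sigma^{n-1}(\xi) = N_{L/K}(\xi) = 1$ multiplying $\sigma^n(\alpha) = \alpha$, which is precisely the first term of $\beta$. Hence $\xi = \beta/\sigma(\beta)$, so $\xi \in \mathrm{im}\,\phi$, completing surjectivity and therefore the isomorphism $L^\times/K^\times \cong \{\xi \in L^\times : N_{L/K}(\xi) = 1\}$. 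The only genuinely nontrivial ingredient is Dedekind independence; everything else is algebraic bookkeeping.
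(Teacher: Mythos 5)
Your proof is correct: the homomorphism $\beta \mapsto \beta/\sigma(\beta)$ has image in the norm-one subgroup by the telescoping product, kernel $K^\times$ because the fixed field of $\langle\sigma\rangle$ is $K$, and your Dedekind-independence construction of $\beta = T(\alpha) \neq 0$ together with the shift computation $\xi\sigma(\beta) = \beta$ (using $N_{L/K}(\xi)=1$ and $\sigma^n = \mathrm{id}$) establishes surjectivity, so the first isomorphism theorem gives $L^\times/K^\times \cong \{\xi : N_{L/K}(\xi)=1\}$. The paper itself does not prove this theorem — it cites a standard reference — and your argument is precisely the classical proof such references give, so there is nothing to reconcile.
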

\begin{corollary}\label{cor_phasesFFQT}
The phases in $\finiteField{p^{n}}(\sqrt{\epsilon})$ form the cyclic group $\finiteField{p^{2n}}^\times/\finiteField{p^n}^\times \isom \integersMod{p^n+1}$.
\end{corollary}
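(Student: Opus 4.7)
The plan is to apply Hilbert's Theorem 90 directly to the quadratic extension $L/K$ where $L := \finiteField{p^n}(\sqrt{\epsilon}) \cong \finiteField{p^{2n}}$ and $K := \finiteField{p^n}$. First I would verify that this extension is indeed finite cyclic (trivially so, since any finite extension of finite fields is cyclic), with Galois group $\integersMod{2}$ generated by the Frobenius $\sigma: \xi \mapsto \xi^{p^n}$. The next step is to identify this generator with the involution $(x+y\sqrt{\epsilon})^\ast := x - y\sqrt{\epsilon}$ used above, by observing that $\sigma$ fixes $\finiteField{p^n}$ pointwise and sends $\sqrt{\epsilon}$ to $-\sqrt{\epsilon}$ (since $\sqrt{\epsilon}$ is a root of $X^2 - \epsilon$, whose other root is $-\sqrt{\epsilon}$). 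This identification lets me rewrite the relative norm as
\[
N_{L/K}(\xi) = \xi \cdot \sigma(\xi) = \xi^\ast \xi = |\xi|^2,
\]
so that the phases of $\finiteField{p^n}(\sqrt{\epsilon})$ (i.e.\ the elements of unit squared norm) are exactly the elements of unit relative norm.

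Applying Hilbert's Theorem 90 then gives the group isomorphism between the phases and $L^\times/K^\times = \finiteField{p^{2n}}^\times/\finiteField{p^n}^\times$. To finish, I would carry out the two routine checks that pin down the isomorphism type. The order is computed by Lagrange:
\[
\big|\finiteField{p^{2n}}^\times/\finiteField{p^n}^\times\big| = \frac{p^{2n}-1}{p^n-1} = p^n+1.
\]
Cyclicity follows from the fact that $\finiteField{p^{2n}}^\times$ is cyclic (the multiplicative group of any finite field is cyclic), and any quotient of a cyclic group is cyclic. A group of order $p^n+1$ which is cyclic is necessarily isomorphic to $\integersMod{p^n+1}$, yielding the desired conclusion.

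I do not anticipate any genuine obstacle: the content is entirely a bookkeeping exercise built on top of Hilbert 90, which is quoted in the paper. The only point demanding minor care is confirming that the paper's involution $^\ast$ coincides with the nontrivial Galois automorphism of $L/K$; once this is noted, the squared norm $|\xi|^2$ appearing in the definition of phases literally is the relative norm $N_{L/K}(\xi)$, and the corollary reduces to the cited theorem plus an elementary counting argument.
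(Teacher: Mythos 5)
Your proposal is correct and follows essentially the same route as the paper: apply Hilbert's Theorem 90 to the quadratic extension $\finiteField{p^n}(\sqrt{\epsilon})/\finiteField{p^n}$ with the involution as the generating Galois automorphism, so the phases are the elements of unit relative norm and hence isomorphic to $\finiteField{p^{2n}}^\times/\finiteField{p^n}^\times$, which is cyclic of order $(p^{2n}-1)/(p^n-1)=p^n+1$. Your extra remark identifying $^\ast$ with the Frobenius $\xi \mapsto \xi^{p^n}$ is a useful bit of added explicitness, but it does not change the argument.
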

\begin{proof}
We have a quadratic extension $\finiteField{p^{n}}(\sqrt{\epsilon})/\finiteField{p^n}$, with 2-element Galois group generated by the involution $\sigma~:=~\xi~\mapsto~\xi^\ast$, and corresponding field norm $N_{\finiteField{p^{n}}(\sqrt{\epsilon}) / \finiteField{p^n}}(\xi) := \xi^\ast \xi$. By Hilbert's~Theorem~90, the multiplicative group of those $\xi \in \finiteField{p^{2n}}$ such that $\xi^\ast \xi = 1$ is isomorphic to the quotient group $\finiteField{p^{n}}(\sqrt{\epsilon})^\times/\finiteField{p^n}^\times$. But $\finiteField{p^{n}}(\sqrt{\epsilon})^\times \isom \finiteField{p^{2n}}^\times$ is cyclic with $p^{2n}-1$ elements, and $\finiteField{p^n}^\times$ has $p^n-1$ elements: hence the quotient is cyclic with $(p^{2n}-1)/(p^n-1) = p^n+1$ elements, i.e. it is $\integersMod{p^n+1}$. 
\end{proof}
\noindent Another interesting consequence of Hilbert's Theorem 90 is the fact that the positive elements in finite-field quantum theory are all pure scalars.
\begin{lemma}\label{lem_purescalarsFFQT}
All scalars in $\CPStarCategory{\RMatCategory{\finiteField{p^{n}}(\sqrt{\epsilon})}}$ are pure.
\end{lemma}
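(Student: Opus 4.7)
The plan is to exploit the counting argument that already proved Corollary \ref{cor_phasesFFQT}. Recall that the pure scalars of $\CPStarCategory{\RMatCategory{\finiteField{p^n}(\sqrt{\epsilon})}}$ are by definition those of the form $\xi^\ast \xi$ for some $\xi \in \finiteField{p^n}(\sqrt{\epsilon})$, while the scalars in general are the elements of the sub-semiring $R$ of positive elements, namely the closure under addition of the pure scalars. Since $R = \finiteField{p^n}$ has already been identified, it suffices to show that every element of $\finiteField{p^n}$ is already of the form $\xi^\ast \xi$, without needing to take sums.

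First I would observe that the norm map $N : \finiteField{p^{2n}}^\times \to \finiteField{p^n}^\times$ defined by $\xi \mapsto \xi^\ast \xi$ is a group homomorphism (both groups are abelian and the involution is a field automorphism of order two). By Corollary~\ref{cor_phasesFFQT}, its kernel is the group of phases, which is cyclic of order $p^n+1$. A simple cardinality count then yields
\begin{equation}
|\mathrm{Im}(N)| \;=\; \frac{|\finiteField{p^{2n}}^\times|}{|\ker(N)|} \;=\; \frac{p^{2n}-1}{p^n+1} \;=\; p^n - 1 \;=\; |\finiteField{p^n}^\times|,
\end{equation}
so $N$ is surjective onto $\finiteField{p^n}^\times$.

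Finally, since $0 = 0^\ast \cdot 0$ is trivially pure, every element of $\finiteField{p^n} = \finiteField{p^n}^\times \cup \{0\}$ is already expressible as $\xi^\ast \xi$ for some $\xi \in \finiteField{p^n}(\sqrt{\epsilon})$. Hence the set of pure scalars coincides with the sub-semiring of positive elements $R = \finiteField{p^n}$, i.e.\ all scalars in $\CPStarCategory{\RMatCategory{\finiteField{p^n}(\sqrt{\epsilon})}}$ are pure. There is no real obstacle here: the entire content of the lemma is packaged in the surjectivity of the relative norm for the quadratic extension $\finiteField{p^{2n}}/\finiteField{p^n}$, which comes for free from the phase count already established.
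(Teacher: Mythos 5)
Your proof is correct and is essentially the paper's own argument: both rely on the phase count of Corollary~\ref{cor_phasesFFQT} and the computation $(p^{2n}-1)/(p^n+1) = p^n-1$ to conclude that every non-zero element of $\finiteField{p^n}$ is of the form $\xi^\ast\xi$, with the zero scalar handled trivially. The only cosmetic difference is that you package the count as surjectivity of the norm homomorphism via its kernel, whereas the paper counts equivalence classes of elements up to phase — the same argument in different clothing.
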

\begin{proof}
Because $\finiteField{p^{n}}(\sqrt{\epsilon})$ is a field, we have that $a^\ast a = b^\ast b$ if and only if $a = \xi b$ for some $\xi$ such that $\xi^\ast \xi = 1$, i.e. for some phase $\xi$. Equality up to phase is an equivalence relation on elements of $\finiteField{p^{n}}(\sqrt{\epsilon})$ (because phases form a group under multiplication), and there are exactly $p^n+1$ phases by Corollary \ref{cor_phasesFFQT}: as a consequence, there are exactly $(p^{2n}-1)/(p^n+1)= p^n-1$ non-zero pure scalars in $\CPStarCategory{\RMatCategory{\finiteField{p^{n}}(\sqrt{\epsilon})}}$, i.e. all the scalars are in fact pure (since the zero scalar always is).
\end{proof}

While finite-field quantum theory and parity quantum theory might not have as direct a physical interpretation as hyperbolic quantum theory and relational quantum theory, they have the major advantage of dealing with wavefunction valued over a field, so that objects are finite-dimensional vector spaces (equipped with a non-standard inner product, in the case of finite-field quantum theory). This opens the door for a systematic study of quantum systems in these theories using standard tools from finite geometry. Further investigation in this direction is left to future work.

\newcommand{\ord}[1]{{\operatorname{ord} #1}}
\newcommand{\sgn}[2]{{\operatorname{sgn}_{#1} #2}}
\subsection{p-adic quantum theory}

We now look at the construction of \textbf{$p$-adic quantum mechanics} \cite{Vladimirov1989,Ruelle1989,Khrennikov1991,Khrennikov1993,Palmer2016,Palmer2016a}, where $R:=Q_p$ is the field of $p$-adic numbers, and $S$ is some quadratic extension. In this Section, we will use the notation $Q_p$ to denote the $p$-adic numbers, and $Z_p$ to denote the $p$-adic integers, to distinguish them from the finite field $\integersMod{p}$ of integers modulo $p$; note that this convention is different from the one used in many texts on $p$-adic arithmetic, where $\integersMod{p}$ is used for the $p$-adic integers (and $\rationals_p$ for the $p$-adic numbers). 

When $p > 2$, the $p$-adic numbers $x:= p^\ord{x}\sum_{i=0}^{+\infty} x_i p^i$ fall within four distinct quadratic classes---jointly labelled by the parity of the order $\ord{x} \in \integers$ and by the quadratic class of the first non-zero digit $x_0 \in \integersMod{p}^\times$---and there are three corresponding inequivalent quadratic extensions. This means that there is no way to obtain all positive elements as pure scalars by a single quadratic extension. This would seem to indicate that mixed states play a necessary role in the emergence of $p$-adic probabilities, which cannot all be obtained from pure states alone: the potential physical significance of this observation might become the topic of future work.

We consider the quadratic extension $S:=Q_p(\sqrt{\epsilon})$, where $p \geq 3$ and $\epsilon$ is a primitive element in the field $\integersMod{p}$ of integers modulo $p$, and we follow the presentation of \cite{Ruelle1989}. A generic element of $Q_p(\sqrt{\epsilon})$ takes the form $c + s \sqrt{\epsilon}$, for $c,s \in Q_p$, and its square norm is $|c+s\sqrt{\epsilon}|^2 = (c-s\sqrt{\epsilon})(c+s\sqrt{\epsilon}) = c^2 - \epsilon s^2$. Whether an element $x \in Q_p$ can be written in this form, i.e. whether it is a pure scalar in $\CPStarCategory{\RMatCategory{Q_p(\sqrt{\epsilon})}}$, is determined by the \textit{sign function} $\sgn{\epsilon}{x}$, which takes the value $+1$ if $x = c^2 - \epsilon s^2$ for some $c,s \in Q_p$, and the value $-1$ otherwise. An explicit form for the sign function (in the $p \neq 2$ case) is given by Equation (2.34) of \cite{Ruelle1989}, which specialised to our case ($\tau = \epsilon$ and $\ord{\tau} = 0$) reads $\sgn{\epsilon}{x} = (-1)^{\ord{x}}$. Hence the pure scalars in $\CPStarCategory{\RMatCategory{Q_p(\sqrt{\epsilon})}}$ are exactly the $p$-adic numbers $x$ with even order $\ord{x}$; closure of this set under addition yields $R := Q_p$ as sub-semiring (field, in fact) of positive elements in $S:= Q_p(\sqrt{\epsilon})$.

The phases in $p$-adic quantum theory are those $\xi := (c + s \sqrt{\epsilon}) \in Q_p(\sqrt{\epsilon})$ such that $\xi^\ast \xi = c^2 - \epsilon s^2 = 1$. In \cite{Ruelle1989} (Equation (4.35) of Section IV.C, and Equation (C12b) of Appendix C.3) it is shown that phases form a multiplicative group $C_\epsilon$ isomorphic to the additive group $\integersMod{p+1} \times p Z_p$---here $(\integersMod{p+1},+,0)$ are the integers modulo $p+1$, while $(p Z_p, +, 0)$ is the additive subgroup of $Z_p$ formed by those $p$-adic integers which are divisible by $p$. In particular, $C_\epsilon$ is an infinite group with the cardinality of the continuum, and each \inlineQuote{sheet} $pZ_p$ is a profinite\footnote{And hence both compact and totally disconnected.} torsion-free group, which is best understood by looking at the descending normal series $p Z_p \triangleright p^2 Z_p \triangleright ... \triangleright p^m Z_p \triangleright ...$ and considering the finite cyclic quotients  $p^n Z_p / p^m Z_p \isom \integersMod{p^{m-n}}$.

\begin{remark}
Similar considerations apply to the the construction of $p$-adic quantum theory for the other two quadratic extensions $Q_p(\sqrt{p})$ and $Q_p(\sqrt{p\epsilon})$ available in the case of $p \geq 3$ (although the cases $p=3$ and $p \geq 5$ have to be treated separately), as well as the seven quadratic extensions available in the case of $p=2$. The phase groups take a similar (but not identical) form to the one presented here, and the full details can be found in  \cite{Ruelle1989} (Section IV.C and Appendices C.3, C.4).
\end{remark}

\subsection{Tropical quantum theory}

Relational quantum theory involves semirings which are both additively and multiplicatively idempotent, parity quantum theory involves a semiring which is only multiplicatively idempotent, and ordinary quantum theory involves a semiring which is neither additively nor multiplicatively idempotent. We now give examples of theories with wavefunctions based in semirings which are additively idempotent but not multiplicatively idempotent, namely the tropical semirings  \cite{simon1988tropical,Maslov1987,Simon1994,pin1998tropical,mikhalkin2004amoebas,speyer2009tropical}. 
\begin{definition}
A \textbf{tropical semiring} is the commutative semiring $(M,\min,\infty,+,0)$ obtained from a totally ordered commutative monoid $(M,+,0,\leq)$ having an absorbing element $\infty$ which is larger than all elements in the monoid. In the tropical semiring, $\min$ is the addition, $\infty$ is the additive unit, $+$ is the multiplication and $0$ is the multiplicative unit. The nomenclature is extended to semirings isomorphic to the explicitly min-plus semirings used above (e.g. max-plus formulations, or the Viterbi semiring).
\end{definition}
\noindent Examples of tropical semirings appearing in the literature  include the tropical reals $(\reals\sqcup\{\infty\},\min,\infty,+,0)$, the tropical integers $(\integers\sqcup\{\infty\},\min,\infty,+,0)$, the tropical naturals $(\naturals\sqcup\{\infty\},\min,\infty,+,0)$, and the Viterbi semiring $([0,1],\max,0,\cdot,1)$ (which is a tropical semiring because it is isomorphic to the explicitly min-plus semiring $(\reals^+\sqcup\{\infty\},\min,\infty,+,0)$ via the semiring homomorphism $x \mapsto -\log x$). In fact, there is an easy characterisation of which commutative semirings arise as tropical semirings (the proof is omitted as it is a straightforward check).
\begin{lemma}
A commutative semiring $(S,+,0,\cdot,1)$ is a tropical semiring if and only if for all $a,b \in S$ we have $a = a+b$ or $b = a+b$ (in which case we can set $\min(a,b) = a+b$).
\end{lemma}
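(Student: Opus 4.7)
The plan is to verify both directions of the biconditional, with the reverse direction carrying essentially all the content. The forward direction is immediate from the definition of a tropical semiring: if the semiring addition is $\min$ for some total order, then for any $a,b$ either $\min(a,b)=a$ (so $a=a+b$) or $\min(a,b)=b$ (so $b=a+b$).

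For the reverse direction, assume the stated dichotomy. The strategy is to recover the totally ordered commutative monoid $(M,+_M,0_M,\leq)$ from the semiring structure by taking $M:=S$ as a set, the monoid operation $+_M$ to be the semiring multiplication $\cdot$, the monoid unit $0_M$ to be the multiplicative unit $1$, and the absorbing maximum $\infty$ to be the additive unit $0_S$. I would first define the relation $a\preceq b \iff a+b=a$ and verify it is a total order: reflexivity comes from applying the hypothesis to the pair $(a,a)$, which forces $+$ to be idempotent; totality is precisely the hypothesis combined with commutativity of $+$; antisymmetry follows because $a+b=a$ and $b+a=b$ force $a=b$; transitivity uses associativity, since $a+b=a$ and $b+c=b$ give $a+c=(a+b)+c=a+(b+c)=a+b=a$.

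Having established the order, a short case analysis using the hypothesis shows $a+b=\min_\preceq(a,b)$, and $0_S$ is seen to be the $\preceq$-maximum from the identity $0_S+a=a$, which reads $a\preceq 0_S$. The remaining point is that $(S,\cdot,1,\preceq)$ is a totally ordered monoid, i.e.\ the order is compatible with the monoid operation: this is where distributivity plays its role. If $a\preceq b$ so that $a+b=a$, then multiplying by $c$ and distributing gives $ac+bc=ac$, i.e.\ $ac\preceq bc$; absorption of $\infty=0_S$ under $\cdot$ is a standard semiring axiom.

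I do not anticipate any real obstacle — the whole argument is bookkeeping. The only delicate point is being consistent about which operation plays which role under the identification (the semiring $+$ becomes the tropical $\min$ and the absorbing additive unit $0_S$ becomes the maximum $\infty$, while the semiring $\cdot$ becomes the underlying monoid operation and the multiplicative unit $1$ becomes the monoid unit $0_M$), and remembering that compatibility of the order with multiplication is what distributivity is for.
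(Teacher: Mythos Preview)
Your argument is correct and is exactly the natural verification one would expect. The paper in fact omits the proof of this lemma, remarking only that it is a straightforward check, so there is no alternative approach to compare against; your reconstruction of the total order via $a\preceq b \iff a+b=a$, together with the use of distributivity to establish compatibility of $\preceq$ with the multiplication, is the standard route.
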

\noindent From now on, we will revert back to usual semiring notation, and we will rely on the result above to connect with the min-plus notation typical of tropical geometry \cite{speyer2009tropical}. We will, however, remember that tropical semirings come with a total order respected by the multiplication, and we will occasionally use $\min$, $\max$ and $\leq$.

\begin{lemma}\label{lem_tropicalSemiringPositiveEls}
The only involution possible on a tropical semiring $(S,+,0,\cdot,1)$ is the trivial one, with sub-semiring of positive elements $(\suchthat{x^2}{x\in S},+,0,\cdot,1)$.
\end{lemma}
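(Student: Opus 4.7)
The plan is to exploit the defining property that tropical addition behaves like $\min$ with respect to a total order, and to show that any semiring involution must preserve this order, which forces it to be the identity.

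First, I would recall that, by the preceding characterisation lemma, $S$ admits a total order defined by $a \leq b \iff a + b = a$, with respect to which $a + b = \min(a, b)$. Any semiring involution $\dagger : S \to S$ is required to preserve addition, so it preserves the operation $\min$; equivalently, $a \leq b \iff a^\dagger \leq b^\dagger$, i.e.\ $\dagger$ is an order automorphism of $(S, \leq)$.

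Next, the key observation: an order-preserving involution on a totally ordered set must be the identity. Indeed, suppose $x^\dagger > x$ for some $x \in S$; applying the order-preserving map $\dagger$ yields $(x^\dagger)^\dagger > x^\dagger$, i.e.\ $x > x^\dagger$, contradicting $x^\dagger > x$ by totality. The symmetric case $x^\dagger < x$ is ruled out identically. Hence $x^\dagger = x$ for all $x$, which proves the first claim.

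For the second claim, the sub-semiring of positive elements is by definition the closure under addition of $\{x^\dagger x \mid x \in S\} = \{x^2 \mid x \in S\}$, so it suffices to show this set is already closed under addition. Here I would use the fact that distributivity forces multiplication to be monotone in each argument: from $a(b+c) = ab + ac$ with $a+b \in \{a,b\}$ one deduces $a \leq b \Rightarrow a \cdot c \leq b \cdot c$. Therefore, for any $x, y \in S$, if $x \leq y$ then $x^2 \leq y^2$, giving $x^2 + y^2 = \min(x^2, y^2) = (\min(x,y))^2$, which lies in $\{z^2 \mid z \in S\}$. The only step that requires any care is the extraction of multiplicative monotonicity from distributivity together with the tropical dichotomy $a + b \in \{a, b\}$, but this is a short direct computation rather than a genuine obstacle.
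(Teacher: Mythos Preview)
Your proposal is correct and follows essentially the same route as the paper: both arguments show that the involution is order-preserving and then use totality plus the involutive property to force it to be the identity, and both conclude closure of squares under addition via the ``Freshman's dream'' identity $x^2 + y^2 = (x+y)^2$ (your $(\min(x,y))^2$). The only cosmetic difference is that you explicitly derive multiplicative monotonicity from distributivity before invoking the identity, whereas the paper simply states it.
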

\begin{proof}
Let $^\ast$ be an involution for the tropical semiring $S$: $x \leq y$ implies that $x = x+y$, so that $x^\ast = x^\ast+y^\ast$ and $x^\ast \leq y^\ast$. But then $x \leq x^\ast$ implies $x^\ast \leq (x^\ast)^\ast = x$ (and similarly for $x^\ast \leq x$), so that $x^\ast = x$ is the trivial involution. Now consider the tropical semiring with trivial involution, so that the positive elements are exactly those in the form $x^2$ for some $x \in S$. But in a tropical semiring we have that $x^2+y^2 = (x+y)^2$ (as Speyer and Sturmfels put it, \inlineQuote{the Freshman's dream holds in tropical arithmetic.} \cite{speyer2009tropical}): hence the squares are closed under addition $+$, and form a sub-semiring.
\end{proof}
\noindent If $S$ is a tropical semiring and $R:=(\suchthat{x^2}{x \in S},+,0,\cdot 1)$ is its sub-semiring of positive elements, we refer to the $R$-probabilistic theory $\CPStarCategory{\RMatCategory{S}}$ as \textbf{tropical quantum theory}. 

Just as in the case of relational quantum theory, the group of phases in a tropical semiring $S$ is always trivial (because $x^2 = 1$ implies $x=1$ in any totally ordered monoid $(S,\cdot,1,\leq)$), and there is no interference. Similarly, there is a unique orthonormal basis on each system, the only unitaries/invertible maps are permutations, and superposition cannot be distinguished from mixing by measurements alone. Tropical quantum theory does not admit any implementation of the algorithm for the abelian Hidden Subgroup Problem, nor does it admit any generalised Mermin-type non-locality arguments.

The parallelisms with relational quantum theory become less surprising when one realises that tropical quantum theory is another generalisation of quantum theory over the booleans, which form a totally ordered distributive lattice, and hence are a particular case of tropical semiring. (Proof of the following result is omitted, as it is a straightforward check.)
\begin{lemma} 
Any totally ordered distributive lattice $(\Omega,\vee,\bot,\wedge,\top)$ is a tropical semiring $(\Omega,\wedge,\top,\vee,\bot)$; conversely, every tropical semiring $(S,+,0,\cdot,1)$ which has $1$ as least element and such that $x^2 = x$ for all $x \in S$ is a totally ordered distributive lattice $(S, \cdot,1,+,0)$.
\end{lemma}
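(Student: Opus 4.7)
The plan is to prove the two implications separately, in each case identifying the correct order on the carrier and then verifying that the lattice operations coincide with the semiring operations (or vice versa). I expect the forward direction to be essentially bookkeeping; the hard work lies in showing, for the converse, that the semiring multiplication is forced to be the join.

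For the forward direction, suppose $(\Omega,\vee,\bot,\wedge,\top)$ is a totally ordered distributive lattice, with total order $a\leq b \iff a\wedge b = a$. The structure $(\Omega,\vee,\bot,\leq)$ is then a totally ordered commutative monoid: $\vee$ is commutative, associative, has $\bot$ as unit, and is monotone because in a total order $a\leq b$ forces $a\vee c \leq b\vee c$. The top $\top$ is the maximum and is absorbing for $\vee$. Hence by the definition recalled in the excerpt, the associated tropical semiring is $(\Omega,\min_{\leq},\top,\vee,\bot)$, and since in a totally ordered lattice $\min_{\leq}$ agrees with $\wedge$, this is exactly $(\Omega,\wedge,\top,\vee,\bot)$. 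Distributivity of $\vee$ over $\wedge$ is either read off from the distributive-lattice hypothesis or, equivalently, from totality of the order.

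For the converse, let $(S,+,0,\cdot,1)$ be a tropical semiring with $1$ the least element and $x^2=x$ for every $x$. From the tropical description, $+$ is $\min$ for the underlying total order, so $a+b=a \iff a\leq b$, with $0$ the maximum. I plan to establish the central identity $a\cdot b = \max(a,b)$ in two steps. First, because $1\leq a$ and $1\leq b$, monotonicity of the monoid operation $\cdot$ gives $b = 1\cdot b \leq a\cdot b$ and $a = a\cdot 1\leq a\cdot b$, so $\max(a,b)\leq a\cdot b$. This is the easy half.

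The hard step, and the real obstacle, is the reverse inequality $a\cdot b\leq \max(a,b)$. Here I would exploit $x^2=x$ together with distributivity: assuming without loss of generality $a\leq b$, so $a+b=a$, compute
\begin{equation}
b\cdot(a+b) \;=\; b\cdot a + b\cdot b \;=\; a\cdot b + b,
\end{equation}
while on the other hand $b\cdot(a+b)=b\cdot a = a\cdot b$ because $a+b=a$. Thus $a\cdot b = a\cdot b + b$, which is precisely $a\cdot b \leq b = \max(a,b)$. Combining the two halves gives $a\cdot b = \max(a,b)$, hence $\cdot$ is the join and $+$ is the meet for the total order, with $1=\bot$ and $0=\top$. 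This makes $(S,\cdot,1,+,0)$ a totally ordered lattice, and distributivity is automatic in any totally ordered lattice, completing the proof.
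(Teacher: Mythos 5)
Your proof is correct, and since the paper omits its own proof as a ``straightforward check,'' your argument is exactly the kind of verification intended: the forward direction is bookkeeping, and the converse hinges on the identity $b\cdot(a+b)=a\cdot b+b$ evaluated two ways (using $b^2=b$ and $a+b=a$) to get $a\cdot b\leq\max(a,b)$, with $1$ being least giving the reverse inequality. The only cosmetic remark is that the monotonicity of $\cdot$ you invoke is already a consequence of semiring distributivity ($a\leq b$ gives $c\cdot a=c\cdot(a+b)=c\cdot a+c\cdot b$, i.e.\ $c\cdot a\leq c\cdot b$), so the converse needs nothing beyond the semiring axioms, idempotence, and $1$ being the least element.
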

\noindent In the light of the result above, we expect tropical quantum theory to be local, exactly like relational quantum theory, but further investigation of this question is left to future work.







\chapter{Categorical Quantum Dynamics}
\label{chapter_CQD}

\section{Introduction}
\label{section_CQDintro}

\subsection{A coherent approach to quantum symmetries}

We have seen in the previous Chapter that the importance of Frobenius algebras to quantum foundations and quantum algorithms stems from their connection to the coherent manipulation of classical data: in quantum foundations, coherent operations precede classical operations, as the latter can be obtained from the former by decoherence; in quantum algorithms, coherence is one of the resources involved in providing quantum advantage (e.g. see the next Chapter). 

We claim that quantum symmetries and dynamics should similarly be understood by studying the coherent versions of primitive notions from classical symmetries and dynamics. We will see how strong complementarity, an algebraic notion born to capture the special relation between a vector basis and its corresponding Fourier basis, embodies the coherent counterpart of finite-dimensional group theory. Thanks to our coherent approach, we are able to obtain clear and intuitive proofs for a number of known results, as well as a wealth of new insights. The simple and accessible case of periodic lattice symmetry (i.e. finite abelian group symmetry) will be used in the first half of the chapter to showcase some of the salient features of our approach, but the results proven will always be as general as possible. 

From the point of view of a mathematician, a symmetry of a system is simply the action of a group on it: a set of reversible transformations of the system into itself, closed under composition and inversion. From the point of view of a physicist, however, symmetries often have different standings depending on their specific origin: there are intrinsic symmetries of a system, such as the $U(n)$ symmetry of an $n$-dimensional quantum system, and there are symmetries induced by the presence of some underlying structure, such as the symmetry induced by space-time on quantum fields.  We will take the mathematician's point of view, and define a symmetry to be a group action on a system (representations, when linear structure is present). However, we will pay respect to the physicist's point of view by investigating the physical significance of the mathematical constructs we introduce.

In the context of this work, quantum dynamics will be treated as the special case of quantum symmetries generated by a group which is 1-dimensional in some suitable sense, an approach similar, in spirit, to the one behind Noether's theorem. Depending on the context, by quantum dynamics we might mean discrete periodic dynamics (corresponding to a $\integersMod{n}$ symmetry), continuous periodic dynamics (corresponding to a $S^1$ symmetry), discrete dynamics (corresponding to a $\integers$ symmetry) or continuous dynamics (corresponding to a $\reals$ symmetry). The results we will prove hold for any notion of dynamics which can be modelled by a doubly well-pointed coherent group, and in particular they will be immediately applicable to discrete periodic, discrete and continuous periodic dynamics of finite-dimensional and separable quantum systems. Unfortunately, the continuous case of $\reals$ is not going to be treated explicitly in this work: the necessary techniques were only developed recently, and there has been no time to accommodate them in this Thesis. However, we are certain that the results derived here will straightforwardly extend to the continuous case, with minimum modifications necessary.  

In fact, a significant number of concrete examples in our treatment of dynamics will focus on discrete periodic dynamics of finite-dimensional quantum systems: this is a simple, accessible family of examples, which nonetheless offers the full spectrum of features traditionally associated with dynamics (in fact, comparing and contrasting the discrete periodic case with the traditional continuous case yields some interesting new insights on the latter). Moreover, the discrete periodic case relates well to the problem of time observables, an interesting open question in the philosophy of quantum theory. Thus said, the same examples can be readily generalised to the discrete, continuous periodic and continuous cases.

\subsection{Synopsis of this Chapter}

\subsubsection{Coherent Groups}

In Section \ref{section_quantumGroups} we investigate the basic structures and properties recurring throughout our coherent treatment of group symmetries. We begin our investigation with the concrete case of wavefunctions on periodic lattices, where the momentum observable arises naturally from a coherent treatment of the lattice translation symmetry. We identify strong complementarity as the relevant algebraic property relating the position and momentum observables, and we define a notion of coherent groups to capture the basic abstract structures intervening in the coherent approach to group theory.

Just like group algebras can be used to \inlineQuote{embed} groups into categories of vector spaces, coherent groups can be used to embed groups into arbitrary $\dagger$-SMCs, and this generalisation is related to non-commutative geometry and algebraic quantum theory. In the finite-dimensional quantum case of $\fdHilbCategory$, coherent groups generalise group algebras by using an arbitrary quantum observable (a symmetric $\dagger$-qSFA) as the point structure, instead of the non-degenerate quantum observable (a $\dagger$-SCFA) used in the case of a group algebra. This corresponds to a possibly non-commutative C*-algebra, as opposed to the commutative C*-algebra associated with the standard basis, and coherent groups in $\fdHilbCategory$ reduce to a special case of compact quantum groups.

There are three main advantages to working with an abstract, diagrammatic, theory-independent formulation of group algebras such as coherent groups. Firstly, the abstract character of our definitions allows us to focus on the core structural and operational features of group algebras which play a role in quantum foundations, quantum algorithms and non-locality arguments, without getting distracted or waylaid by the rich structure of Hilbert space quantum mechanics. Secondly, the diagrammatic formulation makes important physical concepts such as position/momentum duality, quantum symmetries, Hamiltonians and dynamics available within the framework of CQM, and in turn allows methods from CQM to be applied to a much wider variety of physically interesting problems. Finally, the theory-independent approach means that our results (in both foundations and protocols) are not limited to conventional quantum mechanics, but are instead immediately applicable to a vast landscape of quantum-like theories (comprising toy models, variations, and extensions of quantum mechanics).

That said, the joint aim of the results in this Section is to show that coherent groups provide a suitable generalisation of group algebras (and, more generally, certain finite-dimensional compact quantum groups) to arbitrary $\dagger$-symmetric monoidal categories. 
\begin{enumerate}
	\item[(i)] \textbf{Theorem \ref{thm_coherentGroupsCompactQuantumGroups} (p.\pageref{thm_coherentGroupsCompactQuantumGroups})} shows that coherent groups on finite-dimensional Hilbert spaces are a special, well-behaved case of compact quantum groups.
	\item[(ii)] \textbf{Theorems \ref{thm_QuantumGroupAreGroupsOnPoints} (p.\pageref{thm_QuantumGroupAreGroupsOnPoints}), \ref{thm_QuantumGroupHomsAreGroupHomsOnPoints} (p.\pageref{thm_QuantumGroupHomsAreGroupHomsOnPoints}) and \ref{thm_underlyingGroupFunctor} (p.\pageref{thm_underlyingGroupFunctor})} show how coherent groups can be used to encode groups into arbitrary $\dagger$-SMCs.
	\item[(iii)] \textbf{Theorem \ref{thm_WellPointedQuantumGroupAreGroupAlgebras} (p.\pageref{thm_WellPointedQuantumGroupAreGroupAlgebras})} shows that well-pointed coherent groups generalise group algebras on categories of finite-dimensional vector spaces (and certain more general categories of matrices over commutative semirings with involution).
\end{enumerate}

\subsubsection{Wavefunctions on a periodic lattice}

In Section \ref{section_wavefunctionsPeriodicLattice} we go back to wavefunctions on a periodic lattice, and we show that finite abelian coherent groups capture the salient abstract features of position/momentum observables and their relation to translation/boost symmetry. The narrative of this Section focuses on periodic lattices as a concrete and accessible example, but the results we obtain are fully general. Specifically, we will prove that symmetry-observable duality, the Weyl Canonical Commutation Relations and the weak uncertainty principle are results that hold for all coherent groups, not just for the ones we identify with periodic lattice symmetry.

The joint aim of the results in this Section is to show that the observable associated with the group structure in a coherent group can be suitable interpreted as a momentum observable. Namely, we expect to have symmetry-observable dualities for translation-momentum and boost-position, we expect the position/momentum pair to satisfy the Weyl Canonical Commutation Relations, and we expect some form of the Uncertainty Principle to hold. 
\begin{enumerate}
	\item[(i)] It is expected that the momentum eigenstates on a finite periodic lattice generate the lattice translation group $G$, and are invariant states under lattice translations. \textbf{Theorems \ref{thm_latticeTranslationSymmetryAction} (p.\pageref{thm_latticeTranslationSymmetryAction}) and \ref{thm_multiplicativeCharInvariance} (p.\pageref{thm_multiplicativeCharInvariance})} show that the classical states for the group structure of a coherent group generate lattice translations, and are invariant states for lattice translations.
	\item[(ii)] It is expected that the position eigenstates on a finite periodic lattice generate the momentum boost group $G^\wedge$, and are invariant under boosts. \textbf{Theorems \ref{thm_latticeBoostSymmetryAction} (p.\pageref{thm_latticeBoostSymmetryAction}) and \ref{thm_pointsBoostInvariance} (p.\pageref{thm_pointsBoostInvariance})} show that the position eigenstates generate a group symmetry $G^\wedge$ on the classical states of the group structure, and are invariant states for this symmetry.
	\item[(iii)] It is expected that the position/momentum pair on a finite periodic lattice satisfy the Weyl Canonical Commutation Relations. \textbf{Theorem \ref{thm_WeylCCRs} (p.\pageref{thm_WeylCCRs})} shows that the position observable and the observable associated with the groups structure satisfy the Weyl Canonical Commutation Relations.
	\item[(iv)] We argue that the full form of the Uncertainty Principle is too strong a requirement for a position momentum pair, as there are theories with reasonable notions of position and momentum observables which cannot satisfy it. \textbf{Theorem \ref{thm_hyperbolicUncertaintyPrincipleFails} (p.\pageref{thm_hyperbolicUncertaintyPrincipleFails})} provides a concrete example of failure of the Uncertainty Principle for a position/momentum pair in hyperbolic quantum theory, due to the fact that the former is a local theory.
	\item[(v)] We argue that position/momentum pairs should satisfy a weaker form of the Uncertainty Principle, postulating that states of complete knowledge about position have completely indeterminate momentum, and vice versa. \textbf{Theorem \ref{thm_uncertaintyPrinciple} (p.\pageref{thm_uncertaintyPrinciple})} shows that the position observable and the observable associated with the group structure satisfy this weaker form of the Uncertainty Principle.
\end{enumerate}

\subsubsection{Systems with symmetries}

In Section \ref{section_systemsWithSymmetries} we extend our coherent approach to general symmetric systems. Following the identification of classical symmetries with unitary representations of groups, we define coherent symmetries as unitary representations of coherent groups. We provide a categorical characterisation of symmetric systems as objects of the Eilenberg-Moore category for a certain monad, and we extend symmetry-observable duality results from coherent groups to their representations. We conclude the Section with a digression on Stone's Theorem, which we rephrase within our framework.

The joint aim of the results in this Section is to show that the unitary representations of coherent groups model systems equipped with coherent symmetries, satisfying a generalised version of symmetry-observable duality. 
\begin{enumerate}
	\item[(i)] \textbf{Theorem \ref{thm_classicalFromQuantumReps} (p.\pageref{thm_classicalFromQuantumReps})} relates unitary representations of coherent groups to unitary representations of the classical groups they encode.
	\item[(ii)] \textbf{Subsection \ref{subsection_EMCategory} (p.\pageref{subsection_EMCategory})} explains how systems with symmetry governed by a coherent group have a natural interpretation as the objects of the Eilenberg-Moore category for a certain strong commutative monad, with equivariant maps as morphisms between them.
	\item[(iii)] \textbf{Theorems \ref{thm_symmetryObservableDuality} (p.\pageref{thm_symmetryObservableDuality}), \ref{thm_symmetryInvariantDuality} (p.\pageref{thm_symmetryInvariantDuality}) and \ref{thm_invariantStates} (p.\pageref{thm_invariantStates})} prove symmetry-observable for general systems with symmetry governed by a coherent group. 
	\item[(iv)] \textbf{Subsection \ref{subsection_StoneTheoremRevisited} (p.\pageref{subsection_StoneTheoremRevisited})} reformulates Stone's theorem on 1-parameter unitary groups in terms of projection-valued measures, and connects it to the results on symmetry-observable duality for symmetric systems established in the rest of the Section. 
\end{enumerate}

\subsubsection{Infinite-dimensional CQM}

In Section \ref{section_compactAbelian} we introduce the framework of infinite-dimensional CQM to deal with the coherent treatment of certain infinite groups in quantum mechanics. We explicitly cover the textbook example of position/momentum observables for 1-dimensional wavefunctions with periodic boundary conditions (with translation symmetry group $S^1$), but our techniques naturally extend to other compact and discrete abelian groups (e.g. the case of toroidal translation symmetry groups $\torusGroup{d}$ or lattice translation symmetry groups $\integers^d$). Unfortunately, infinite-dimensional CQM is a very young field, and treatment of locally compact symmetry groups (e.g. the continuous time-translation symmetry group $\reals$ or the continuous space-translation symmetry groups $\reals^d$) is left to future work. This Section is taken from \cite{Gogioso2016b}.

The joint aim of the results in this Section is to obtain a categorical formulation of separable Hilbert spaces and (possibly unbounded) linear maps between them which features the algebraic ingredients (namely strongly complementary pairs of $\dagger$-qSFAs) necessary to talk about some infinite groups (such as $\integers^d$ and $\torusGroup{d}$) in the context of our framework.
\begin{enumerate}
	\item[(i)] \textbf{Theorems \ref{thm_starHilbSMC} (p.\pageref{thm_starHilbSMC} and \ref{thmNS_SeparableInStarHilb} (p.\pageref{thmNS_SeparableInStarHilb}))} define a new $\dagger$-SMC $\starHilbCategory$ of non-standard separable Hilbert spaces and (possibly unbounded) maps between them, and relate it to the category of separable Hilbert spaces and bounded maps between them.
	\item[(ii)] \textbf{Theorems \ref{thmNS_ClassicalStructures} (p.\pageref{thmNS_ClassicalStructures}) and \ref{thmNS_CompactClosed} (p.\pageref{thmNS_CompactClosed})} show that $\starHilbCategory$ is dagger compact, and has unital $\dagger$-SCFAs.
	\item[(iii)] \textbf{Subsection \ref{subsubsection_WavefunctionsPeriodic} (p.\pageref{subsubsection_WavefunctionsPeriodic})} explicitly constructs the non-standard model for wavefunctions in a box with periodic boundary conditions (we cover the $1$-dimensional case explicitly, but the treatment straightforwardly extends to boxes with arbitrarily many dimensions). In particular, \textbf{Theorem \ref{thmNS_StrongComplementarity} (p.\pageref{thmNS_StrongComplementarity})} shows that there is a doubly well-pointed coherent group corresponding to the position/momentum pair for the wavefunction. 
\end{enumerate}

\subsubsection{Categorical Quantum Dynamics}

In Section \ref{section_finiteCyclic} we apply the results from Sections \ref{section_wavefunctionsPeriodicLattice} and \ref{section_systemsWithSymmetries} to the special case of discrete periodic dynamics, which we see as symmetries governed by finite cyclic groups (1-dim periodic lattices): symmetry-observable duality yields the Hamiltonian, while the defining equation of Eilenberg-Moore algebras turns into Schr\"{o}dinger's equation. We cover some construction of specific interest in quantum dynamics, such as Feynman's clock and von Neumann's Mean Ergodic Theorem, and we also look at synchronisation, the emergence of quantum clocks, and the existence of time observables. A first version of this Section appeared in \cite{Gogioso2015a}.


\newpage
\section{Coherent Groups}
\label{section_quantumGroups}

\subsection{Wavefunctions on periodic lattices}

The simplest Hilbert space endowed with a finite abelian group symmetry is the group algebra $\complexs[G]$, together with the regular representation of $G$:
\begin{equation}
U(g) := \ket{h} \mapsto \ket{h \oplus g}
\end{equation}
We will use $\oplus$ to denote the addition operation of a generic abelian group, and $0$ for the corresponding unit.\footnote{There are four good reasons to use $\oplus$ for the addition in abelian groups, at least within the context of this work. Firstly, the notation is reminiscent of the XOR operation, which is the addition in the abelian group $\integersMod{2}^N$ of $N$-bit strings and is the most common abelian group operation appearing in quantum computing and protocols. Secondly, while the multiplicative notation would allow a uniform treatment of the abelian and non-abelian cases, it would also result very unfamiliar in the specific instances we are interested in, where group elements are treated as vectors (e.g. the translations of a periodic lattice). Thirdly, the notation can cause no confusion with the direct sum of Hilbert spaces, the most common meaning of $\oplus$ in the context of quantum theory, as direct sums will play no role whatsoever in this work. Finally, the other common additive symbol is $+$, which is already used for superposition in pure-state quantum theory and for convex combination in mixed-state quantum theory and more in general in the context of $R$-probabilistic theories.}
When we interpret a finite abelian group $G = \prod_{d=1}^{D} \integersMod{n_d}$ as the translation group for a periodic lattice $\Lambda$, the group algebra has the very concrete interpretation of a quantum system corresponding to a wavefunction on the lattice. Because of this, we can expect the position and momentum observables to play an important role in the structural characterisation of the group algebra (and indeed this will be the case). More in general, when talking about a quantum system with lattice symmetry we will mean a system $\SpaceH$ which comes equipped with a unitary representation of the translation group $\prod_{d=1}^{D} \integersMod{n_d}$. From an operational perspective, we can imagine the following scenario: we have a system $\SpaceH$, we can transform it reversibly by translation of a lattice $\Lambda$, but we know nothing more of its internal structure than what the transformations tell us.

The essence of a group algebra is embodied by the interplay between two kinds of structures: there is the classical data of the group, embedded into the system via a distinguished orthonormal basis (the \textbf{standard basis}), and there is the group structure on that classical data. We have already seen that, from a coherent perspective, the classical data embedded in the system is represented by some $\dagger$-SCFA~$\hbox{\input{symbols/ZbwdotSym.tex}}\!\!_G$, which corresponds to a non-degenerate quantum observable, with 1-dimensional projectors specified by the elements of the translation group. If the latter are identified with the lattice sites (e.g. by fixing a distinguished site), then $\hbox{\input{symbols/ZbwdotSym.tex}}\!\!_G$ corresponds exactly to the position observable for a wavefunction on the lattice:
\begin{equation}\label{coherentDataG}
\begin{multlined}
\input{pictures/chapter3/coherentDataG.tikz}
\end{multlined}
\end{equation}

What about the group structure? Do we gain anything by employing a coherent approach in this case? As it will turn out in the rest of this work, we do (and quite a lot). We being by considering the coherent versions of the group multiplication~$\oplus$ (also known as \textit{addition}) and group inverse~$\ominus$ (as well as the group unit~$0$, which is already embedded as a distinguished element of the standard basis):
\begin{equation}\label{coherentGroupOpsG}
\begin{multlined}
\input{pictures/chapter3/coherentGroupOps.tikz}
\end{multlined}
\end{equation}
Because of the group structure, these processes come with a number of interesting structural properties; to begin with, the group multiplication and unit form a monoid.

The coherent group inverse is an involution, and it satisfies the following equation known as \textbf{Hopf's law}:
\begin{equation}\label{HopfsLaw1}
\begin{multlined}
\input{pictures/chapter3/HopfsLaw.tikz}
\end{multlined}
\end{equation}
The relation of Hopf's law to the group inverse can be seen by applying both sides of each equation to any $\hbox{\input{symbols/ZbwdotSym.tex}}\!\!$-classical state $\ket{g}$: on the LHS of the first equation/RHS of the second equation, the state is copied, one copy is inverted, and both copies are added, yielding $\ket{(\ominus g) \oplus g}$ for the first equation and $\ket{g \oplus (\ominus g)}$ for the second equation; on the RHS of the first equation/LHS of the second equation, the state is coherently deleted, and replaced with the state $\ket{0}$; all in all, the equations read $(\ominus g) \oplus g = 0 = g \oplus (\ominus g)$, which is the very definition of group inverse. 

Finally, the coherent group unit is an element of the standard basis (i.e. $\!\hbox{\input{symbols/DunitSym.tex}}\!\!$ is a $\hbox{\input{symbols/ZbwdotSym.tex}}\!\!$-classical state), and application of the coherent multiplication to elements of the standard basis yields elements of the standard basis (i.e. $\!\hbox{\input{symbols/DmultSym.tex}}\!\!$ is a $\hbox{\input{symbols/ZbwdotSym.tex}}\!\!$-classical\footnote{Henceforth, we will simply use \textbf{$\hbox{\input{symbols/ZbwdotSym.tex}}\!\!$-classical} when referring to $\hbox{\input{symbols/ZbwdotSym.tex}}\!\!$-to-$\hbox{\input{symbols/ZbwdotSym.tex}}\!\!$ classical processes.} process):
\begin{equation}\label{strongComplementarity1}
\begin{multlined}
\resizebox{\textwidth}{!}{\input{pictures/chapter3/strongComplementarity.tikz}}
\end{multlined}
\end{equation}
The top row contains the three conditions (recall from the previous chapter: copy, adjoin and delete) for the coherent group unit to be a $\hbox{\input{symbols/ZbwdotSym.tex}}\!\!$-classical state, while the bottom row contains the three conditions (again: copy, adjoin and delete) for the coherent group multiplication to be a $\hbox{\input{symbols/ZbwdotSym.tex}}\!\!$-classical process (which, in particular, maps $\hbox{\input{symbols/ZbwdotSym.tex}}\!\!$-classical states to $\hbox{\input{symbols/ZbwdotSym.tex}}\!\!$-classical states).

Perhaps the most important property, however, is one which isn't directly inspired by classical groups, and is instead unique to the coherent version of the operations: $\!\hbox{\input{symbols/DmultSym.tex}}\!\!$ and $\!\hbox{\input{symbols/DunitSym.tex}}\!\!$ form the multiplicative fragment of a $\dagger$-Frobenius algebra:
\begin{equation}\label{FrobeniusLawCoherentGroupOps}
\begin{multlined}
\input{pictures/chapter3/FrobeniusLawCoherentGroupOps.tikz}
\end{multlined}
\end{equation}
To be precise, they form a $\dagger$-qSCFA (commutativity is equivalent to the group being abelian) with normalisation factor $|G|$ (the composite $(\!\hbox{\input{symbols/DmultSym.tex}}\!\! \circ \!\hbox{\input{symbols/DcomultSym.tex}}\!\!)$ sends $\ket{g}$ to $\sum_{h \oplus k = g} \ket{h \oplus k} = |G| \cdot \ket{g}$). 

The statement of Frobenius law does not involve the coherent group inverse, and one might therefore imagine that a (commutative) monoid would also give rise to a $\dagger$-FA on its algebra. On the contrary, it turns out that Frobenius law can only be satisfied when $G$ is a group: for any fixed $g,h \in G$, the sum $\sum_{a\oplus b = g \oplus h} \ket{a} \otimes \ket{b}$ contains a term for $\ket{0} \otimes \ket{g \oplus h}$, which means that the sum $\sum_{c\oplus b = h} \ket{g \oplus c} \otimes \ket{b}$ must also contain a term $\ket{g \oplus c} \otimes \ket{g \oplus h}$ with $g \oplus c = 0$, which in turn means that $g$ must be invertible. Quasi-speciality also depends partially on the fact that $G$ is a group, rather than a monoid: the last equality in the proof above requires each element $g$ to have the same number of pairs $(h,k)$ such that $h \oplus k$, something which is true when $G$ is a group (there are always exactly $|G|$ many such pairs), but need not be true for a general monoid. The apparent absence of the group inverse from the proof of Frobenius law becomes less surprising when we realise that the coherent group inverse, also known as the \textbf{antipode}, can always be constructed by only using the $\dagger$-SCFA $\hbox{\input{symbols/ZbwdotSym.tex}}\!\!$ and the $\dagger$-qSCFA $\hbox{\input{symbols/DdotSym.tex}}\!\!$:
\begin{equation}\label{antipode1}
\begin{multlined}
\input{pictures/chapter3/antipode.tikz}
\end{multlined}
\end{equation}
To see that the equalities above hold, apply the three processes to the the state $\ket{g}$ and test them against the effect $\bra{h}$, for all $g,h \in G$: the group inverse on the left yields the scalar 1 if $h = \ominus g$ and the scalar 0 otherwise; the process in the middle yields the scalar 1 if $h \oplus g = 0$ and the scalar 0 otherwise; the process on the right yields the scalar 1 if $g \oplus h = 0$ and the scalar 0 otherwise.

Because $\hbox{\input{symbols/DdotSym.tex}}\!\!$ is a $\dagger$-qSCFA, it is legitimate to ask whether it corresponds to some interesting quantum observable. To find the orthonormal basis associated to it, we want to study the $\hbox{\input{symbols/DdotSym.tex}}\!\!$-classical states. Equivalently, we can study their adjoints, which are exactly the effects satisfying the following three conditions\footnote{The adjoin condition has been equivalently rewritten in terms of the group inverse, multiplying both sides of the original condition by the symmetric cup corresponding to $\hbox{\input{symbols/ZbwdotSym.tex}}\!\!$.}: 
\begin{equation}\label{multiplicativeChar1}
\begin{multlined}
\input{pictures/chapter3/multiplicativeChar.tikz}
\end{multlined}
\end{equation}
Writing $\chi(g) := \braket{\chi}{g}$ for any such effect $\bra{\chi}$ and any state $\ket{g}$ of the standard basis, we see that the adjoints of the $\hbox{\input{symbols/DdotSym.tex}}\!\!$ classical states correspond to maps $\chi: G \rightarrow \complexs$ satisfying $\chi(g \oplus h) = \chi(g) \cdot \chi(h)$, $\chi(0) = 1$, and $\chi(\ominus g) = \chi(g)^\ast$, i.e. to the multiplicative characters of the finite abelian group $G$. 

In order to understand what this means concretely, observe that the elements $g \in \prod_{d=1}^{D} \integersMod{n_d}$ of the translation group  for the lattice $\Lambda$ can be written in terms of components $g = (g_d)_{d=1}^D$, where $g_d \in \integersMod{n_d}$ for all $d=1,...,D$. This makes them look like vectors, and that's not far from true: because they form an abelian group, the translations can always be understood as forming a $\integers$-module. When $G$ is an abelian group, the multiplicative characters always form an abelian group, known as the \textbf{Pontryagin dual} of $G$ and denoted $G^\wedge$, under pointwise multiplication:
\begin{equation}
\chi \cdot \chi' := g \mapsto \chi(g) \chi'(g)
\end{equation}
The group unit of $G^\wedge$ is the \textbf{trivial character} $1 := g \mapsto 1$.
When $G$ is a finite abelian group, it is always isomorphic to its Pontryagin dual, but not in a canonical way (i.e. there are, in general, many equally legitimate choices of isomorphism $G \isom G^\wedge$; more about this later). In the case of $\prod_{d=1}^{D} \integersMod{n_d}$, the multiplicative characters can always be written in the following way (here $h \mapsto \chi_h$ is our specific choice of isomorphism $G \isom G^\wedge$):
\begin{equation}
\chi_h := g \mapsto e^{ -2 \pi i\, g \cdot h}
\end{equation}
where the \inlineQuote{inner product} $g \cdot h$ in the $\integers$-module $\prod_{d=1}^{D} \integersMod{n_d}$ is defined as follows (multiplication $g_d \cdot h_d$ is done modulo $n_d$):
\begin{equation}
g \cdot h :=  \sum_{d=1}^D \frac{g_d \cdot h_d}{n_d} 
\end{equation}
With this explicit characterisation in hand, we are able to write down the orthogonal basis corresponding to the $\hbox{\input{symbols/DdotSym.tex}}\!\!$ observable:
\begin{equation}
\ket{\chi_h} = \sum_{g \in G}  e^{ -2 \pi i \, g \cdot h} \ket{g}
\end{equation} 
These are the plane waves on a periodic lattice, and hence the quantum observable corresponding\footnote{Implicitly taking into account the fact that the basis states are not normalised.} to $\hbox{\input{symbols/DdotSym.tex}}\!\!$ is exactly the momentum observable. 

The result above is an iconic example of what will happen again and again in this Chapter: we treat the classical group symmetry coherently and the observable associated with the corresponding invariant comes out of the framework for free. This is because, contrary to the classical perspective, the coherent perspective is not rigid, and allows us to look at the same primitive from different angles: when looked from the point of view of the position observable, the coherent group multiplication behaves exactly like the classical group multiplication, but if we switch point of view we can also see as part of an observable in its own right, namely the momentum observable. 

This kind of direct connection between the coherent treatment of a symmetry and the observables corresponding to its conserved quantity is not limited to lattices, or to quantum mechanics; instead, it will be a general result of the framework introduced in this work. The most surprising aspect of this framework will be how it manages to turn few simple ingredients into a whole array of traditional cornerstones of quantum mechanics: we will re-discover familiar results such as position/momentum duality, time/energy duality and the Weyl canonical commutation relations, as well as special cases of Stone's theorem and von Neumann's mean ergodic theorem. Their new formulation in abstract, diagrammatic terms gives us new structural and operational understanding of the reasons behind their validity in quantum theory. At the same time, the abstract and algebraic nature of our definitions and proofs will extend the validity of our results to a much larger spectrum of quantum-like theories, including all those theories based on semiring-valued wavefunctions that we presented at the end of the last Chapter (e.g. real quantum theory, relational quantum theory, modal quantum theory, $p$-adic quantum theory, etc).

\subsection{Complementarity and strong complementarity}

Hopf's law from \ref{HopfsLaw1} and the \textbf{bialgebra law} from \ref{strongComplementarity1} (the leftmost equation in the bottom row) are well known in quantum algebra, and make their first appearance in CQM as part of the ZX calculus \cite{Coecke2011}. The ZX calculus focuses on the algebraic relation between the single-qubit Pauli X, Y and Z observables. Both Pauli X and Pauli Y are \textit{complementary}, or \textit{mutually unbiased} to Pauli Z: their eigenstates lie on the equator of the Bloch sphere, an this property is captured by Hopf's law. Amongst the observables complementary to Pauli Z, however, Pauli X plays a special role: if the eigenstates of Pauli Z are taken as the computational basis, the eigenstates of Pauli X are uniquely characterised by the fact that they form the corresponding Fourier basis (they are determined by the multiplicative characters of the finite abelian group $\integersMod{2}$). This special relationship between Pauli Z and Pauli X is known as \textit{strong complementarity}, and is captured by the bialgebra law (and some subset of the equations in \ref{strongComplementarity1}, depending on the specific work \cite{Coecke2011,Backens2014,Duncan2016,Gogioso2016d}). 

In this Subsection, we define complementarity and strong complementarity in $\dagger$-SMCs, and prove their general relationship to being mutual unbiased, group structure and the Fourier transform. 

\begin{definition}\label{def_complementarity}
	Two symmetric $\dagger$-qSFAs $\hbox{\input{symbols/ZbwdotSym.tex}}\!\!$ and $\hbox{\input{symbols/DdotSym.tex}}\!\!$ on the same object $\SpaceH$ of a $\dagger$-SMC are said to be \textbf{complementary} (or a \textbf{complementary pair}) if they satisfy \textbf{Hopf's Law}:
	\begin{equation}\label{hopfsLaw}
		\resizebox{\textwidth}{!}{\input{pictures/chapter3/HopfsLaw.tikz}}
	\end{equation}
	where the \textbf{antipode} $\hbox{\input{symbols/antipodeSym.tex}}\!\!: \SpaceH \rightarrow \SpaceH$ is the unitary defined as follows, which we furthermore require to be self-adjoint (or equivalently self-inverse) as part of this definition:
	\begin{equation}\label{antipode}
		\input{pictures/chapter3/antipode.tikz}
	\end{equation}
\end{definition}
\begin{remark}
Because the $\dagger$-qSFAs are chosen to be symmetric, the antipode can furthermore be written in the following additional ways:
	\begin{equation}\label{antipodeConj}
		\input{pictures/chapter3/antipodeConj.tikz}
	\end{equation}
\end{remark}
\begin{definition}\label{def_unbiasedStates}
	Let $\hbox{\input{symbols/ZbwdotSym.tex}}\!\!$ be a $\dagger$-qSFAs in a dagger compact category $\CategoryC$. Then a state $\psi$ is a \textbf{$\hbox{\input{symbols/ZbwdotSym.tex}}\!\!$-unbiased} state if the following holds in $\CPStarCategory{\CategoryC}$:
	\begin{equation}\label{ZunbiasedState}
		\input{pictures/chapter3/ZunbiasedState.tikz}
	\end{equation}
	Equation \ref{ZunbiasedState} can be unfolded into the following more general definition, which holds in an arbitrary $\dagger$-SMC:
	\begin{equation}
		\input{pictures/chapter3/ZphaseStateDefExplained.tikz}
	\end{equation}
\end{definition}
\noindent In $\fdHilbCategory$, the $\hbox{\input{symbols/ZbwdotSym.tex}}\!\!$-unbiased states are exactly those which, upon normalisation, yield the uniform distribution when measured in the $\hbox{\input{symbols/ZbwdotSym.tex}}\!\!$ observable.

\begin{lemma}\label{lem_complementarityUnbiased}
	Consider a pair of symmetric $\dagger$-qSFAs $\hbox{\input{symbols/ZbwdotSym.tex}}\!\!$ and $\hbox{\input{symbols/DdotSym.tex}}\!\!$ in a $\dagger$-SMC. If $(\hbox{\input{symbols/ZbwdotSym.tex}}\!\!,\hbox{\input{symbols/DdotSym.tex}}\!\!)$ is a complementary pair, then the $\hbox{\input{symbols/DdotSym.tex}}\!\!$-classical states are $\hbox{\input{symbols/ZbwdotSym.tex}}\!\!$-unbiased, and the $\hbox{\input{symbols/ZbwdotSym.tex}}\!\!$-classical states are $\hbox{\input{symbols/DdotSym.tex}}\!\!$-unbiased.
\end{lemma}
\begin{proof} 
	We prove that a $\hbox{\input{symbols/DdotSym.tex}}\!\!$-classical state $\goodchi$ is $\hbox{\input{symbols/ZbwdotSym.tex}}\!\!$-unbiased: 
	\begin{equation}\label{HopfLawUnbiasedStatesProof}
	\resizebox{\textwidth}{!}{\input{pictures/chapter3/HopfLawUnbiasedStatesProof.tikz}}
	\end{equation}
	The first equality is by the delete condition for $\hbox{\input{symbols/DdotSym.tex}}\!\!$-classical states, the second equality is Hopf's law (together with the self-adjoint requirement for the antipode), the third equality is by the copy condition and adjoint conditions for $\hbox{\input{symbols/DdotSym.tex}}\!\!$-classical states, the last equality is by Frobenius law and unit law for $\hbox{\input{symbols/ZbwdotSym.tex}}\!\!$. The proof for $\hbox{\input{symbols/ZbwdotSym.tex}}\!\!$-classical states is the same, with colours swapped.
\end{proof}
\noindent Hence complementarity always results in mutual the observables involved being mutually unbiased, regardless of the specific theory under consideration. We can also prove the converse, that being mutual unbiased implies complementarity (in the sense of Hopf's law and self-adjointness of the antipode), as long as at least one of the two $\dagger$-qSFAs has enough classical states. 
\begin{lemma}\label{lem_complementarityUnbiasedConverse}
	Consider a pair of symmetric $\dagger$-qSFAs $\hbox{\input{symbols/ZbwdotSym.tex}}\!\!$ and $\hbox{\input{symbols/DdotSym.tex}}\!\!$ in a $\dagger$-SMC. If $\hbox{\input{symbols/ZbwdotSym.tex}}\!\!$ has enough classical states, and the $\hbox{\input{symbols/ZbwdotSym.tex}}\!\!$-classical states are $\hbox{\input{symbols/DdotSym.tex}}\!\!$-unbiased, then $(\hbox{\input{symbols/ZbwdotSym.tex}}\!\!,\hbox{\input{symbols/DdotSym.tex}}\!\!)$ is a complementary pair. Similarly, if $\hbox{\input{symbols/DdotSym.tex}}\!\!$ has enough classical states, and the $\hbox{\input{symbols/DdotSym.tex}}\!\!$-classical states are $\hbox{\input{symbols/ZbwdotSym.tex}}\!\!$-unbiased, then $(\hbox{\input{symbols/ZbwdotSym.tex}}\!\!,\hbox{\input{symbols/DdotSym.tex}}\!\!)$ is a complementary pair.
\end{lemma}
\begin{proof} 
	If $\hbox{\input{symbols/ZbwdotSym.tex}}\!\!$ has enough classical states, then each equation in \ref{HopfLawUnbiasedStatesProof} holds as long as it holds when tested against $\psi^\dagger$, where $\psi$ is an arbitrary $\hbox{\input{symbols/ZbwdotSym.tex}}\!\!$-classical state. The first equation in the chain is seen to hold by the delete conditions for $\hbox{\input{symbols/ZbwdotSym.tex}}\!\!$-classical states and $\hbox{\input{symbols/DdotSym.tex}}\!\!$-classical states. The second equation is seen to hold by applying the copy and adjoin conditions for $\hbox{\input{symbols/ZbwdotSym.tex}}\!\!$-classical states to the RHS, and then using the fact that $\hbox{\input{symbols/ZbwdotSym.tex}}\!\!$-classical states are $\hbox{\input{symbols/DdotSym.tex}}\!\!$-unbiased by hypothesis; an application of Frobenius law and unit laws for $\hbox{\input{symbols/DdotSym.tex}}\!\!$ is necessary to bring the diagram in the form required by the definition of $\hbox{\input{symbols/DdotSym.tex}}\!\!$-unbiased states (in the form of its adjoint, to be precise). The third equation is seen to hold by the copy and adjoin conditions for $\hbox{\input{symbols/DdotSym.tex}}\!\!$-classical states. The last equation is seen to hold by Frobenius law and unit laws for $\hbox{\input{symbols/ZbwdotSym.tex}}\!\!$.
\end{proof}
\noindent Variants of Lemmas \ref{lem_complementarityUnbiased} and \ref{lem_complementarityUnbiasedConverse} have previously appeared in the literature \cite{Coecke2016a}.

\begin{definition}\label{def_strongComplementarity}
	Two symmetric $\dagger$-qSFAs $\hbox{\input{symbols/ZbwdotSym.tex}}\!\!$ and $\hbox{\input{symbols/DdotSym.tex}}\!\!$ on the same object $\SpaceH$ of a $\dagger$-SMC are said to be \textbf{strongly complementary} (or a \textbf{strongly complementary pair}) if they are complementary and furthermore satisfy the following equations\footnote{The empty diagram on the RHS of the top right equation is the scalar 1.}:
	\begin{equation}\label{strongComplementarity}
	\resizebox{\textwidth}{!}{\input{pictures/chapter3/strongComplementarity.tikz}}
	\end{equation}
\end{definition}
\begin{remark}\label{rmrk_strongComplementarityColorSwappedEqns}
	The central equations of the top and bottom rows  of \ref{strongComplementarity} are a consequence of Hopf's law, self-adjointness of the antipode, and the other four equations:
	\begin{equation}
	\resizebox{\textwidth}{!}{\input{pictures/chapter3/HopfsLawClassicalAdjunctionsConsequence.tikz}}
	\end{equation}
	\begin{equation}
	\resizebox{\textwidth}{!}{\input{pictures/chapter3/HopfsLawClassicalAdjunctionsConsequence2.tikz}}
	\end{equation}
	Their corresponding colour-swapped versions are proven similarly. This is why previous presentations of strong complementarity often include only the remaining four equations (together with Hopf's law and either one of: (i) self-adjointness of the antipode, or (ii) the central equation of the top row together with its colour-swapped version). As a consequence, strong complementarity as a property is symmetric in $\hbox{\input{symbols/ZbwdotSym.tex}}\!\!$ and $\hbox{\input{symbols/DdotSym.tex}}\!\!$ (i.e. we could have equivalently stated it with the colour-swapped versions of the equations above).
\end{remark}

In the specific case of quantum mechanics, the relation between strong complementarity and the quantum Fourier transform is given by the following results, some bits of which already appeared in \cite{Coecke2012c,Kissinger2012} (for the abelian case only).
\begin{lemma}\label{lem_SCHilbGroupAlgebra}
	Let $\hbox{\input{symbols/ZbwdotSym.tex}}\!\!$ and $\hbox{\input{symbols/DdotSym.tex}}\!\!$ be a $\dagger$-SCFA and a $\dagger$-FA on the same finite-dimensional Hilbert space $\SpaceH$. Then $\hbox{\input{symbols/ZbwdotSym.tex}}\!\!$ and $\hbox{\input{symbols/DdotSym.tex}}\!\!$ are strongly complementary iff there exists a finite group $G$ such that $(\!\hbox{\input{symbols/DmultSym.tex}}\!\!,\!\hbox{\input{symbols/DunitSym.tex}}\!\!)$ endows the set of $\hbox{\input{symbols/ZbwdotSym.tex}}\!\!$-classical states with the group structure of $G$. Concretely, this means that we can label the $\hbox{\input{symbols/ZbwdotSym.tex}}\!\!$-classical states as $(\ket{g})_{g \in G}$ in a way such that:
	\begin{equation}\label{multG}
		\input{pictures/chapter3/multG.tikz}
	\end{equation}
	If $\hbox{\input{symbols/ZbwdotSym.tex}}\!\!$ and $\hbox{\input{symbols/DdotSym.tex}}\!\!$ are strongly complementary, then $\hbox{\input{symbols/DdotSym.tex}}\!\!$ is a $\dagger$-qSFA with normalisation factor $|G|$, and it is commutative if and only if the group $G$ is. 
\end{lemma}
\begin{proof}
By definition of strong complementarity, if $(\hbox{\input{symbols/ZbwdotSym.tex}}\!\!,\hbox{\input{symbols/DdotSym.tex}}\!\!)$ is a strongly complementary pair, then the set of $\hbox{\input{symbols/ZbwdotSym.tex}}\!\!$-classical states is always endowed with the structure of some group $G$ which is finite because any $\dagger$-SCFA $\hbox{\input{symbols/ZbwdotSym.tex}}\!\!$ can only have finitely many classical states in $\fdHilbCategory$. Conversely, if the the set of $\hbox{\input{symbols/ZbwdotSym.tex}}\!\!$-classical states is endowed by $(\!\hbox{\input{symbols/DmultSym.tex}}\!\!,\!\hbox{\input{symbols/DunitSym.tex}}\!\!)$ with the structure of some group $G$, then in particular $\!\hbox{\input{symbols/DmultSym.tex}}\!\!$ is a $\hbox{\input{symbols/ZbwdotSym.tex}}\!\!$-classical process, and $\!\hbox{\input{symbols/DunitSym.tex}}\!\!$ is a $\hbox{\input{symbols/ZbwdotSym.tex}}\!\!$-classical state: this means that the Equations \ref{strongComplementarity} always hold when applied to $\hbox{\input{symbols/ZbwdotSym.tex}}\!\!$-classical states, and hence they hold altogether because any $\dagger$-SCFA in $\fdHilbCategory$ always has enough classical states. 

It is immediate to see that $(\classicalStates{\hbox{\input{symbols/ZbwdotSym.tex}}\!\!},\!\hbox{\input{symbols/DmultSym.tex}}\!\!,\!\hbox{\input{symbols/DunitSym.tex}}\!\!)$ is abelian if and only if $\!\hbox{\input{symbols/DmultSym.tex}}\!\!$ is commutative. Furthermore, the composite $\!\hbox{\input{symbols/DmultSym.tex}}\!\! \circ \!\hbox{\input{symbols/DcomultSym.tex}}\!\!$ sends an $\hbox{\input{symbols/ZbwdotSym.tex}}\!\!$-classical state $\ket{g}$ to $\sum_{hk = g} \ket{hk} = |G| \cdot \ket{g}$: because $\hbox{\input{symbols/ZbwdotSym.tex}}\!\!$ has enough classical states, then this implies that $\hbox{\input{symbols/DdotSym.tex}}\!\!$ is quasi-special, with normalisation factor $|G|$.
\end{proof}

\begin{lemma}\label{lem_SCHilbQFT}
	Let $\hbox{\input{symbols/ZbwdotSym.tex}}\!\!$ and $\hbox{\input{symbols/DdotSym.tex}}\!\!$ be a strongly complementary pair of a $\dagger$-SCFA and a $\dagger$-qSFA on the same finite-dimensional Hilbert space $\SpaceH$. Then the $\hbox{\input{symbols/DdotSym.tex}}\!\!$-classical states are labelled by the multiplicative characters $\goodchi: G \rightarrow S^1$ of the group $G:=(\classicalStates{\hbox{\input{symbols/ZbwdotSym.tex}}\!\!},\!\hbox{\input{symbols/DmultSym.tex}}\!\!,\!\hbox{\input{symbols/DunitSym.tex}}\!\!)$, and take the following form in terms of the $\hbox{\input{symbols/ZbwdotSym.tex}}\!\!$-classical states:
	\begin{equation}\label{XclassicalStatesMultChars}
		\ket{\goodchi} := \sum_{g \in G} \goodchi(g) \ket{g} 
	\end{equation}
	If $\hbox{\input{symbols/DdotSym.tex}}\!\!$ is commutative, then it has enough classical states, and measurement in the $\hbox{\input{symbols/DdotSym.tex}}\!\!$ observable provides the quantum Fourier transform:
	\begin{equation}\label{XmeasurementQFT}
		\input{pictures/chapter3/XmeasurementQFT.tikz}
	\end{equation}
\end{lemma} 
\begin{proof}
The adjoints of the $\hbox{\input{symbols/DdotSym.tex}}\!\!$-classical states satisfy Equations \ref{multiplicativeChar1}: when restricted to the $\hbox{\input{symbols/ZbwdotSym.tex}}\!\!$-classical states, the equations are equivalent to those defining the multiplicative characters $\goodchi \in G^\wedge$ of the finite abelian group $(G,\oplus,0)$ induced by $(\!\hbox{\input{symbols/DmultSym.tex}}\!\!,\!\hbox{\input{symbols/DunitSym.tex}}\!\!)$ on the $\hbox{\input{symbols/ZbwdotSym.tex}}\!\!$-classical states. Hence, they take the desired form. If $\hbox{\input{symbols/DdotSym.tex}}\!\!$ is commutative, then $G$ is abelian, and the $\hbox{\input{symbols/DdotSym.tex}}\!\!$-classical states form an orthogonal basis.

When $\hbox{\input{symbols/DdotSym.tex}}\!\!$ has enough classical states, $(\SpaceH, \hbox{\input{symbols/DdotSym.tex}}\!\!)$ is a classical system in $\CPStarCategory{\fdHilbCategory}$. The fact that $\!\hbox{\input{symbols/DcomultSym.tex}}\!\! = \sum_{\goodchi \in G^\wedge} \ket{\goodchi} \otimes \ket{\goodchi} \otimes \bra{\goodchi}$, together with Equation \ref{XclassicalStatesMultChars}, yields the desired result.
\end{proof}

More results about complementarity and strong complementarity, and their relation to the group of phase gates, will be presented in the next Chapter, in regards to Mermin-type non-locality arguments and the abelian Hidden Subgroup Problem.

\subsection{Coherent Groups}

In the $\fdHilbCategory$ results we have seen above, strong complementarity captures exactly the concept of group algebra. In one direction, a group algebra $\complexs[G]$ always give rise to a strongly complementary pair, by considering the $\dagger$-SCFA associated with the standard basis and the $\dagger$-qSFA generated by the coherent group multiplication and unit. In the other direction, consider a strongly complementary pair of a $\dagger$-SCFA $\hbox{\input{symbols/ZbwdotSym.tex}}\!\!$ and a $\dagger$-qSFA $\hbox{\input{symbols/DdotSym.tex}}\!\!$ on a finite-dimensional Hilbert space $\SpaceH$: by Theorem \ref{lem_SCHilbGroupAlgebra}, there always is a finite group $G$ and a unique isomorphism $\SpaceH \isom \complexs[G]$ which sends the $\hbox{\input{symbols/ZbwdotSym.tex}}\!\!$-classical states to the elements of the standard basis of $\complexs[G]$, and that isomorphism turns $\!\hbox{\input{symbols/DmultSym.tex}}\!\!$ into the coherent group multiplication on $\complexs[G]$.

We take strongly complementary pairs as our starting point to generalise the coherent treatment of group-theoretic primitives away from the quantum case.  
\begin{definition}
By a \textbf{coherent group} in $\dagger$-SMC we will mean a strongly complementary pair $(\hbox{\input{symbols/ZbwdotSym.tex}}\!\!,\hbox{\input{symbols/DdotSym.tex}}\!\!)$ of two symmetric $\dagger$-qSFAs. We will refer to $\hbox{\input{symbols/ZbwdotSym.tex}}\!\!$ as the \textbf{point structure}, and to its classical states as the \textbf{points} of the coherent group. We will refer to $\hbox{\input{symbols/DdotSym.tex}}\!\!$ as the \textbf{group structure}.
\end{definition}
\noindent We will use (coherent) \textbf{copy} and (coherent) \textbf{deletion} to refer to the comultiplication and counit of the point structure, and (coherent) \textbf{multiplication} and \textbf{unit} to refer to the multiplication and unit of the group structure. 

Although the underlying definition of the strongly complementary pair $(\hbox{\input{symbols/ZbwdotSym.tex}}\!\!,\hbox{\input{symbols/DdotSym.tex}}\!\!)$ is symmetric in $\hbox{\input{symbols/ZbwdotSym.tex}}\!\!$ and $\hbox{\input{symbols/DdotSym.tex}}\!\!$, this symmetry is broken in our definition of the coherent group $(\hbox{\input{symbols/ZbwdotSym.tex}}\!\!,\hbox{\input{symbols/DdotSym.tex}}\!\!)$ by assigning distinct names to the two $\dagger$-qSFAs involved. However, the symmetry can be recovered by considering the coherent group $(\hbox{\input{symbols/DdotSym.tex}}\!\!,\hbox{\input{symbols/ZbwdotSym.tex}}\!\!)$, which we refer to as the \textbf{dual} of $(\hbox{\input{symbols/ZbwdotSym.tex}}\!\!,\hbox{\input{symbols/DdotSym.tex}}\!\!)$. Because the two structures play different roles, we will be interested in different properties for each. For example, we will say that a coherent group is \textbf{well-pointed} (or that it has \textbf{enough points}) if the \textit{point} structure has enough classical states (in which case it is also necessarily commutative), and we will say that a coherent group is \textbf{finite} if it is well-pointed with finitely many points. On the other hand, we will say that a coherent group is \textbf{commutative}, or \textbf{abelian}, if the \textit{group} structure is commutative.

We will take coherent groups on a given $\dagger$-SMC $\CategoryC$ to be the objects of a new SMC $\CoherentGroupsCategory{\CategoryC}$, which we define as follows:
\begin{enumerate}
	\item[(i)] the objects of $\CoherentGroupsCategory{\CategoryC}$ are the coherent groups on objects of $\CategoryC$;
	\item[(ii)] if $(\hbox{\input{symbols/ZbwdotSym.tex}}\!\!,\hbox{\input{symbols/DdotSym.tex}}\!\!)$ and $(\hbox{\input{symbols/YbwdotSym.tex}}\!\!,\hbox{\input{symbols/WbwdotSym.tex}}\!)$ are coherent groups on objects $\SpaceH$ and $\SpaceK$ respectively, then the morphisms $f:(\hbox{\input{symbols/ZbwdotSym.tex}}\!\!,\hbox{\input{symbols/DdotSym.tex}}\!\!) \rightarrow (\hbox{\input{symbols/YbwdotSym.tex}}\!\!,\hbox{\input{symbols/WbwdotSym.tex}}\!)$ of $\CoherentGroupsCategory{\CategoryC}$, which we will refer to as \textbf{coherent group homomorphisms}, are the morphisms $f: \SpaceH \rightarrow \SpaceK$ which are both $\hbox{\input{symbols/ZbwdotSym.tex}}\!\!$-to-$\hbox{\input{symbols/YbwdotSym.tex}}\!\!$ classical and monoid homomorphisms $(\SpaceH,\!\hbox{\input{symbols/DmultSym.tex}}\!\!,\!\hbox{\input{symbols/DunitSym.tex}}\!\!) \rightarrow (\SpaceK,\hbox{\input{symbols/WbwmultSym.tex}}\!,\hbox{\input{symbols/WbwunitSym.tex}}\!)$
	\item[(iii)] the tensor product on morphisms is inherited from $\CategoryC$, while the tensor product on objects is given by taking the tensor product of the two point structures as the new point structure, and the tensor product of the two group structures as the new group structure $(\hbox{\input{symbols/ZbwdotSym.tex}}\!\!,\hbox{\input{symbols/DdotSym.tex}}\!\!) \otimes (\hbox{\input{symbols/YbwdotSym.tex}}\!\!,\hbox{\input{symbols/WbwdotSym.tex}}\!) := (\hbox{\input{symbols/ZbwdotSym.tex}}\!\! \otimes \hbox{\input{symbols/YbwdotSym.tex}}\!\!,\hbox{\input{symbols/DdotSym.tex}}\!\! \otimes \hbox{\input{symbols/WbwdotSym.tex}}\!)$.
\end{enumerate}
Unfortunately, $\CoherentGroupsCategory{\CategoryC}$ does not come with a natural dagger. However, dual coherent groups and the dagger of $\CategoryC$ can be combined into a well-defined involutive automorphism $^{\wedge}: \CoherentGroupsCategory{\CategoryC} \rightarrow \OpCategory{\CoherentGroupsCategory{\CategoryC}}$, which sends a coherent group $(\hbox{\input{symbols/ZbwdotSym.tex}}\!\!,\hbox{\input{symbols/DdotSym.tex}}\!\!)$ to its dual $(\hbox{\input{symbols/ZbwdotSym.tex}}\!\!,\hbox{\input{symbols/DdotSym.tex}}\!\!)^\wedge := (\hbox{\input{symbols/DdotSym.tex}}\!\!,\hbox{\input{symbols/ZbwdotSym.tex}}\!\!)$, and a coherent group homomorphism $f$ to $f^\wedge := f^\dagger$. There is also a faithful functor of SMCs $\CoherentGroupsCategory{\CategoryC} \rightarrow \CategoryC$ which sends each coherent group to its underlying object in $\CategoryC$, and is the identity on morphisms.  

The name \textit{quantum group} is used in the literature to refer to a variety of inequivalent algebraic objects \cite{Drinfeld1987,Kassel1995,Majid2000,Street2007,Woronowicz1987,Woronowicz1998}, sharing a common conceptual flavour but differing in their actual mathematical implementation. The coherent groups used in this work are closely related to quantum groups: first and foremost, they all feature some form of Hopf's law and bialgebra law, and they take direct inspiration from group algebras and their internal structure. Group algebras are often used to treat group theory within the linear framework of vector spaces, over $\complexs$ or some other field $k$: some famous results of this linear treatment are (in order of increasing generality) Fourier theory, Pontryagin duality and Tannaka-Krein dualities for modular categories. Quantum groups (at least in some definitions) can be seen as a direct generalisation of group algebras: they keep the algebraic skeleton, but lose the underlying classical structure given by the standard basis (which is the part that makes group algebras \inlineQuote{undesirably} rigid). This is conceptually akin to the way in which non-commutative spaces generalise traditional spaces in Non-commutative Geometry \cite{Connes1994}, or the way in which non-commutative C*-algebras and spectral bundles generalise classical observables and configuration spaces in Algebraic Quantum Theory \cite{Heunen2009}. 

We will proceed to show that coherent groups in $\fdHilbCategory$ are closely related to finite-dimensional compact quantum groups \cite{Woronowicz1987,Woronowicz1998}. In general, a \textbf{compact quantum group} is a pair $(C,\Delta)$ of a unital separable C*-algebra $C$ and a co-associative unital C*-algebra homomorphism $\Delta: C \rightarrow C \otimes C$ satisfying the additional conditions that the sets $(C \otimes 1) \cdot \Delta(C)$ and $(1 \otimes C) \cdot \Delta(C)$ are dense in $C \otimes C$. When the C*-algebra is finite dimensional, it can be written in the form $C = (\SpaceH,\!\hbox{\input{symbols/ZbwmultSym.tex}}\!\!,\!\hbox{\input{symbols/ZbwunitSym.tex}}\!\!)$ for a symmetric $\dagger$-SFA $\hbox{\input{symbols/ZbwdotSym.tex}}\!\!$. As a consequence, a compact quantum group in $\fdHilbCategory$ can be seen as a pair $(\SpaceH,\hbox{\input{symbols/ZbwdotSym.tex}}\!\!,\!\hbox{\input{symbols/DcomultSym.tex}}\!\!)$ a symmetric $\dagger$-SFA $\hbox{\input{symbols/ZbwdotSym.tex}}\!\!$ (or, without loss of generality, a $\dagger$-qSFA)  and a co-associative $\!\hbox{\input{symbols/DcomultSym.tex}}\!\!: \SpaceH \rightarrow \SpaceH \otimes \SpaceH$, satisfying the following conditions: 
\begin{itemize}
	\item The comultiplication $\!\hbox{\input{symbols/DmultSym.tex}}\!\!$ is $\hbox{\input{symbols/ZbwdotSym.tex}}\!\!$-to-$(\hbox{\input{symbols/ZbwdotSym.tex}}\!\!\otimes\hbox{\input{symbols/ZbwdotSym.tex}}\!\!)$ classical (i.e. a C*-algebra homomorphism $C \rightarrow C \otimes C$);
	\item the following linear endomorphisms of $\SpaceH \otimes \SpaceH$ are invertible\footnote{In finite dimensions, this is the same as saying that their image is dense.}:
	\begin{equation}\label{compactQuantumGroupDensityCondition}
		\input{pictures/chapter3/compactQuantumGroupDensityCondition.tikz}
	\end{equation}
\end{itemize}
Thanks to this equivalent characterisation, we can prove that coherent groups in $\fdHilbCategory$ arise as a special, extremely well-behaved case of compact quantum groups on finite-dimensional C*-algebras.
\begin{theorem}[\textbf{Coherent groups are compact quantum groups}]\label{thm_coherentGroupsCompactQuantumGroups} \hfill\\
There is a bijective correspondence between coherent groups $(\hbox{\input{symbols/ZbwdotSym.tex}}\!\!,\hbox{\input{symbols/DdotSym.tex}}\!\!)$ in $\fdHilbCategory$ with $\hbox{\input{symbols/ZbwdotSym.tex}}\!\!$ special and compact quantum groups $(C,\!\hbox{\input{symbols/DcomultSym.tex}}\!\!)$ satisfying the following conditions:
\begin{enumerate}
	\item[(i)] $C$ is a finite-dimensional C*-algebra, corresponding to a symmetric $\dagger$-SFA $\hbox{\input{symbols/ZbwdotSym.tex}}\!\!$;
	\item[(ii)] the comultiplication $\!\hbox{\input{symbols/DcomultSym.tex}}\!\!$ is part of a symmetric $\dagger$-qSFA $\hbox{\input{symbols/DdotSym.tex}}\!\!$;
	\item[(iii)] the unit $\!\hbox{\input{symbols/DunitSym.tex}}\!\!: \SpaceH \rightarrow \complexs$ is $\hbox{\input{symbols/ZbwdotSym.tex}}\!\!$-to-$(\complexs,\cdot,1)$ classical;
	\item[(iv)] the two maps depicted in \ref{compactQuantumGroupDensityCondition} are unitaries;
	\item[(v)] the following map is self-inverse:
		\begin{equation}\label{compactQuantumGroupAntipode}
		\input{pictures/chapter3/compactQuantumGroupAntipode.tikz}		
		\end{equation}
\end{enumerate} 
\end{theorem}
\begin{proof}
In one direction, consider a coherent group $(\hbox{\input{symbols/ZbwdotSym.tex}}\!\!,\hbox{\input{symbols/DdotSym.tex}}\!\!)$. Then $\hbox{\input{symbols/ZbwdotSym.tex}}\!\!$ induces a C*-algebra $C = (\SpaceH,\!\hbox{\input{symbols/ZbwmultSym.tex}}\!\!\!,\!\hbox{\input{symbols/ZbwunitSym.tex}}\!\!\!)$, with $\hbox{\input{symbols/DdotSym.tex}}\!\!$ co-associative and $\hbox{\input{symbols/ZbwdotSym.tex}}\!\!$-to-$(\hbox{\input{symbols/ZbwdotSym.tex}}\!\!\otimes\hbox{\input{symbols/ZbwdotSym.tex}}\!\!)$ classical (second row of Equations \ref{strongComplementarity} in the definition of strong complementarity). The comultiplication $\!\hbox{\input{symbols/DcomultSym.tex}}\!\!$ is part of the symmetric $\dagger$-qSFA $\hbox{\input{symbols/DdotSym.tex}}\!\!$, and the unit $\!\hbox{\input{symbols/DunitSym.tex}}\!\!$ is $\hbox{\input{symbols/ZbwdotSym.tex}}\!\!$-to-$(\complexs,\cdot,1)$ classical (first row of Equations \ref{strongComplementarity} in the definition of strong complementarity). The map depicted in \ref{compactQuantumGroupAntipode} is the antipode of the coherent group $(\hbox{\input{symbols/ZbwdotSym.tex}}\!\!,\hbox{\input{symbols/DdotSym.tex}}\!\!)$, and it is self-inverse as part of the definition of complementarity. The two maps depicted in \ref{compactQuantumGroupDensityCondition} are unitaries because of complementarity (see Theorem 9 from \cite{Zeng2014}). In particular, $\hbox{\input{symbols/DdotSym.tex}}\!\!$ is a co-associative C*-algebra homomorphism $C \rightarrow (C \otimes C)$ and the two maps depicted in \ref{compactQuantumGroupDensityCondition} are invertible, so that $(C,\!\hbox{\input{symbols/DcomultSym.tex}}\!\!)$ is a compact quantum group.

In the other direction, consider a compact quantum group $(C,\!\hbox{\input{symbols/DcomultSym.tex}}\!\!)$ on a finite-dimensional C*-algebra $C$ (condition (i)), which always takes the form $C = (\SpaceH,\!\hbox{\input{symbols/ZbwmultSym.tex}}\!\!\!,\!\hbox{\input{symbols/ZbwunitSym.tex}}\!\!\!)$ for some symmetric $\dagger$-SFA $\hbox{\input{symbols/ZbwdotSym.tex}}\!\!$, and assume that the four conditions (ii)-(v) are satisfied. Then $\!\hbox{\input{symbols/DmultSym.tex}}\!\!$ is part of a symmetric $\dagger$-qSFA $\hbox{\input{symbols/DdotSym.tex}}\!\!$ (by condition (ii)) which satisfies Equations \ref{strongComplementarity} (because $\!\hbox{\input{symbols/DcomultSym.tex}}\!\!$ is a C* homomorphism $C \rightarrow C \otimes C$ and by condition (iii)). Furthermore, $(\hbox{\input{symbols/ZbwdotSym.tex}}\!\!,\hbox{\input{symbols/DdotSym.tex}}\!\!)$ are complementary: they satisfy Hopf's law by condition (iv) and Theorem 9 froms \cite{Zeng2014}, and the associated antipode is self-inverse by condition (v). Hence $(\hbox{\input{symbols/ZbwdotSym.tex}}\!\!,\hbox{\input{symbols/DdotSym.tex}}\!\!)$ is a coherent group, with $\hbox{\input{symbols/ZbwdotSym.tex}}\!\!$ special.  
\end{proof}

The following results give a precise meaning to the idea that coherent groups can be used to embed groups in arbitrary $\dagger$-SMCs. We show that coherent groups always behave as groups on their points, and that well-pointed coherent groups provide a sound generalisation of group algebras to arbitrary $\dagger$-SMCs.
\begin{theorem}[\textbf{Underlying Group}]\label{thm_QuantumGroupAreGroupsOnPoints}\hfill\\
	Let $(\hbox{\input{symbols/ZbwdotSym.tex}}\!\!,\hbox{\input{symbols/DdotSym.tex}}\!\!)$ be a coherent group in a $\dagger$-SMC $\CategoryC$. Then the multiplication $\!\hbox{\input{symbols/DmultSym.tex}}\!\!$ and the unit $\!\hbox{\input{symbols/DunitSym.tex}}\!\!$ endow the points with the structure of a group $(\classicalStates{\hbox{\input{symbols/ZbwdotSym.tex}}\!\!}, \!\hbox{\input{symbols/DmultSym.tex}}\!\!,\!\hbox{\input{symbols/DunitSym.tex}}\!\!)$.
\end{theorem}  
\begin{proof}
The laws of strong complementarity show that $\!\hbox{\input{symbols/DunitSym.tex}}\!\! \in \classicalStates{\hbox{\input{symbols/ZbwdotSym.tex}}\!\!}$, and that $\!\hbox{\input{symbols/DmultSym.tex}}\!\!$ yields a well-defined function $\classicalStates{\hbox{\input{symbols/ZbwdotSym.tex}}\!\!} \times \classicalStates{\hbox{\input{symbols/ZbwdotSym.tex}}\!\!} \rightarrow \classicalStates{\hbox{\input{symbols/ZbwdotSym.tex}}\!\!}$ when restricted to $\hbox{\input{symbols/ZbwdotSym.tex}}\!\!$-classical states. By associative law and unit laws for $\hbox{\input{symbols/DdotSym.tex}}\!\!$, we conclude that $(\classicalStates{\hbox{\input{symbols/ZbwdotSym.tex}}\!\!},\!\hbox{\input{symbols/DmultSym.tex}}\!\!,\!\hbox{\input{symbols/DunitSym.tex}}\!\!)$ is a monoid. Furthermore, the laws of strong complementarity show that if $g \in \classicalStates{\hbox{\input{symbols/ZbwdotSym.tex}}\!\!}$ then $\hbox{\input{symbols/antipodeSym.tex}}\! \circ g \in \classicalStates{\hbox{\input{symbols/ZbwdotSym.tex}}\!\!}$, and hence the antipode $\hbox{\input{symbols/antipodeSym.tex}}\!$ yields a well-defined function $\classicalStates{\hbox{\input{symbols/ZbwdotSym.tex}}\!\!} \rightarrow \classicalStates{\hbox{\input{symbols/ZbwdotSym.tex}}\!\!}$, which by is furthermore a self-inverse bijection. Finally, Hopf's law implies that for any $\hbox{\input{symbols/ZbwdotSym.tex}}\!\!$-classical state $g \in \classicalStates{\hbox{\input{symbols/ZbwdotSym.tex}}\!\!}$ the $\hbox{\input{symbols/ZbwdotSym.tex}}\!\!$-classical state $\hbox{\input{symbols/antipodeSym.tex}}\! \circ g$ is an inverse in the monoid $(\classicalStates{\hbox{\input{symbols/ZbwdotSym.tex}}\!\!},\!\hbox{\input{symbols/DmultSym.tex}}\!\!,\!\hbox{\input{symbols/DunitSym.tex}}\!\!)$, which is therefore a group.
\end{proof}

\begin{theorem}[\textbf{Underlying Homomorphisms}]\label{thm_QuantumGroupHomsAreGroupHomsOnPoints}\hfill\\
Let $\mathbb{G} := (\hbox{\input{symbols/ZbwdotSym.tex}}\!\!,\hbox{\input{symbols/DdotSym.tex}}\!\!)$ and $\mathbb{H} := (\hbox{\input{symbols/YbwdotSym.tex}}\!\!,\hbox{\input{symbols/WbwdotSym.tex}}\!)$ be coherent groups on objects $\SpaceH$ and $\SpaceK$ of a $\dagger$-SMC. A coherent group homomorphism $f:\mathbb{G} \rightarrow \mathbb{H}$ gives rise to a well-defined group homomorphism $f: (\classicalStates{\hbox{\input{symbols/ZbwdotSym.tex}}\!\!},\!\hbox{\input{symbols/DmultSym.tex}}\!\!,\!\hbox{\input{symbols/DunitSym.tex}}\!\!) \rightarrow (\classicalStates{\hbox{\input{symbols/YbwdotSym.tex}}\!\!},\hbox{\input{symbols/WbwmultSym.tex}}\!,\hbox{\input{symbols/WbwunitSym.tex}}\!)$ when restricted to the points of $\mathbb{G}$. Furthermore, when $\mathbb{G}$ is well-pointed, any morphism $f: \SpaceH \rightarrow \SpaceK$ which gives rise to a well-defined group homomorphism $f: (\classicalStates{\hbox{\input{symbols/ZbwdotSym.tex}}\!\!},\!\hbox{\input{symbols/DmultSym.tex}}\!\!,\!\hbox{\input{symbols/DunitSym.tex}}\!\!) \rightarrow (\classicalStates{\hbox{\input{symbols/YbwdotSym.tex}}\!\!},\hbox{\input{symbols/WbwmultSym.tex}}\!,\hbox{\input{symbols/WbwunitSym.tex}}\!)$ is a coherent group homomorphism $f:\mathbb{G} \rightarrow \mathbb{H}$.
\end{theorem}
\begin{proof}
For the first part, consider a coherent group homomorphism $f: \mathbb{G} \rightarrow \mathbb{H}$. The fact that $f$ gives rise to a well-defined function $f: \classicalStates{\hbox{\input{symbols/ZbwdotSym.tex}}\!\!} \rightarrow \classicalStates{\hbox{\input{symbols/YbwdotSym.tex}}\!\!}$ is due to the fact that $f$ is a $\hbox{\input{symbols/ZbwdotSym.tex}}\!\!$-to-$\hbox{\input{symbols/YbwdotSym.tex}}\!\!$ classical function as part of the definition of coherent group homomorphism. The fact that $f$ is a group homomorphism $f: (\classicalStates{\hbox{\input{symbols/ZbwdotSym.tex}}\!\!},\!\hbox{\input{symbols/DmultSym.tex}}\!\!,\!\hbox{\input{symbols/DunitSym.tex}}\!\!) \rightarrow (\classicalStates{\hbox{\input{symbols/YbwdotSym.tex}}\!\!},\hbox{\input{symbols/WbwmultSym.tex}}\!,\hbox{\input{symbols/WbwunitSym.tex}}\!)$ follows from the fact that it is a monoid homomorphism as part of the definition of coherent group homomorphism.
Conversely, assume that $\mathbb{G}$ is well-pointed, and consider a morphism $f: \SpaceH \rightarrow \SpaceK$ which gives rise to a well-defined group homomorphism $f: (\classicalStates{\hbox{\input{symbols/ZbwdotSym.tex}}\!\!},\!\hbox{\input{symbols/DmultSym.tex}}\!\!,\!\hbox{\input{symbols/DunitSym.tex}}\!\!) \rightarrow (\classicalStates{\hbox{\input{symbols/YbwdotSym.tex}}\!\!},\hbox{\input{symbols/WbwmultSym.tex}}\!,\hbox{\input{symbols/WbwunitSym.tex}}\!)$. Because it is a function from the $\hbox{\input{symbols/ZbwdotSym.tex}}\!\!$-classical states to the $\hbox{\input{symbols/YbwdotSym.tex}}\!\!$-classical states, by well-pointedness we conclude that $f$ is $\hbox{\input{symbols/ZbwdotSym.tex}}\!\!$-to-$\hbox{\input{symbols/YbwdotSym.tex}}\!\!$-classical. Because it is a group homomorphism from $(\classicalStates{\hbox{\input{symbols/ZbwdotSym.tex}}\!\!},\!\hbox{\input{symbols/DmultSym.tex}}\!\!,\!\hbox{\input{symbols/DunitSym.tex}}\!\!)$ to $(\classicalStates{\hbox{\input{symbols/YbwdotSym.tex}}\!\!},\hbox{\input{symbols/WbwmultSym.tex}}\!,\hbox{\input{symbols/WbwunitSym.tex}}\!)$, it is in particular a monoid homomorphism $(\!\hbox{\input{symbols/DmultSym.tex}}\!\!,\!\hbox{\input{symbols/DunitSym.tex}}\!\!)$ to $(\hbox{\input{symbols/WbwmultSym.tex}}\!,\hbox{\input{symbols/WbwunitSym.tex}}\!)$. Hence $f$ is a coherent group homomorphism $\mathbb{G} \rightarrow \mathbb{H}$.
\end{proof}

\begin{theorem}[\textbf{Underlying Group Functor}]\label{thm_underlyingGroupFunctor}\hfill\\
Let $\CategoryC$ be a $\dagger$-SMC. The following defines a monoidal functor $\underlyingGroup{\emptyArg} : \CoherentGroupsCategory{\CategoryC} \rightarrow \GrpCategory$ from coherent groups to groups:
\begin{align}
	\underlyingGroup{(\hbox{\input{symbols/ZbwdotSym.tex}}\!\!,\hbox{\input{symbols/DdotSym.tex}}\!\!)}&:=(\classicalStates{\hbox{\input{symbols/ZbwdotSym.tex}}\!\!},\!\hbox{\input{symbols/DmultSym.tex}}\!\!,\!\hbox{\input{symbols/DunitSym.tex}}\!\!)\nonumber \\
	\underlyingGroup{f: (\hbox{\input{symbols/ZbwdotSym.tex}}\!\!,\hbox{\input{symbols/DdotSym.tex}}\!\!) \rightarrow (\hbox{\input{symbols/YbwdotSym.tex}}\!\!,\hbox{\input{symbols/WbwdotSym.tex}}\!)} &:= f \circ \emptyArg : \classicalStates{\hbox{\input{symbols/ZbwdotSym.tex}}\!\!} \rightarrow \classicalStates{\hbox{\input{symbols/YbwdotSym.tex}}\!\!}
\end{align}
We refer to $\underlyingGroup{\emptyArg}$ as the \textbf{underlying group functor}. It is faithful when restricted to well-pointed coherent groups. 
\end{theorem}
\begin{proof}
By Theorems \ref{thm_QuantumGroupAreGroupsOnPoints} and \ref{thm_QuantumGroupHomsAreGroupHomsOnPoints} above, we already know that $\underlyingGroup{\emptyArg}$ is a well defined functor: all we really need to show is that it is monoidal. Given two coherent groups $(\hbox{\input{symbols/ZbwdotSym.tex}}\!\!,\hbox{\input{symbols/DdotSym.tex}}\!\!)$ on an object $\SpaceH$ and $(\hbox{\input{symbols/YbwdotSym.tex}}\!\!,\hbox{\input{symbols/WbwdotSym.tex}}\!)$ on an object $\SpaceK$, the product coherent group on object $\SpaceH \otimes \SpaceK$ is given by $(\hbox{\input{symbols/ZbwdotSym.tex}}\!\! \otimes \hbox{\input{symbols/YbwdotSym.tex}}\!\!, \hbox{\input{symbols/DdotSym.tex}}\!\! \otimes \hbox{\input{symbols/WbwdotSym.tex}}\!)$. Hence we get the following underlying group for the product coherent group:
\begin{equation}  
\underlyingGroup{(\hbox{\input{symbols/ZbwdotSym.tex}}\!\! \otimes \hbox{\input{symbols/YbwdotSym.tex}}\!\!, \hbox{\input{symbols/DdotSym.tex}}\!\! \otimes \hbox{\input{symbols/WbwdotSym.tex}}\!)} = (\classicalStates{\hbox{\input{symbols/ZbwdotSym.tex}}\!\! \otimes \hbox{\input{symbols/YbwdotSym.tex}}\!\!}, \!\hbox{\input{symbols/DmultSym.tex}}\!\! \otimes \hbox{\input{symbols/WbwmultSym.tex}}\!, \!\hbox{\input{symbols/DunitSym.tex}}\!\! \otimes \hbox{\input{symbols/WbwunitSym.tex}}\!)
\end{equation}
If we can show that all classical states for $\hbox{\input{symbols/ZbwdotSym.tex}}\!\! \otimes \hbox{\input{symbols/YbwdotSym.tex}}\!\!$ are in the form $\psi \otimes \phi$ fo a $\hbox{\input{symbols/ZbwdotSym.tex}}\!\!$-classical state $\psi$ and a $\hbox{\input{symbols/YbwdotSym.tex}}\!\!$-classical state $\phi$, then we get $\classicalStates{\hbox{\input{symbols/ZbwdotSym.tex}}\!\! \otimes \hbox{\input{symbols/YbwdotSym.tex}}\!\!} = \classicalStates{\hbox{\input{symbols/ZbwdotSym.tex}}\!\!} \times \classicalStates{\hbox{\input{symbols/YbwdotSym.tex}}\!\!}$ and the product is monoidal as desired. The fact that the classical states of a product $\dagger$-FA are the products of the individual classical states is straightforward consequence of the following observation:
\begin{equation}\label{classicalStatesProduct}
	\resizebox{\textwidth}{!}{\input{pictures/chapter3/classicalStatesProduct.tikz}}
\end{equation}
Finally, the functor is faithful when restricted to well-pointed coherent groups because a coherent group homomorphism from a well-pointed coherent group is entirely defined by its action on the underlying group.
\end{proof}

\begin{theorem}[\textbf{Group algebras are well-pointed coherent groups}]\label{thm_WellPointedQuantumGroupAreGroupAlgebras}\hfill\\
Let $(R,^\dagger)$ be a cancellative involutive semiring such that the $\dagger$-SMC $\RMatCategory{R}$ has non-degenerate inner product. If $G$ is any finite group such that $|G| = n^\dagger n$, for some invertible $n \in R$\footnote{Semirings in which this is true for all groups $G$ include the rational, real and complex numbers.}, then the group algebra $R[G]$ in $\RMatCategory{R}$ always gives rise to a well-pointed coherent group $(\hbox{\input{symbols/ZbwdotSym.tex}}\!\!,\hbox{\input{symbols/DdotSym.tex}}\!\!)$ (which we also denote by $R[G]$, when no confusion can arise): the point structure $\hbox{\input{symbols/ZbwdotSym.tex}}\!\!$ is the $\dagger$-SCFA determined by the standard basis of the group algebra $R[G]$, and the group structure is given by the coherent group multiplication and unit on $R[G]$. Conversely, if $(\hbox{\input{symbols/ZbwdotSym.tex}}\!\!,\hbox{\input{symbols/DdotSym.tex}}\!\!)$ is a well-pointed coherent group in $\RMatCategory{R}$, then there is a coherent group isomorphism $(\hbox{\input{symbols/ZbwdotSym.tex}}\!\!,\hbox{\input{symbols/DdotSym.tex}}\!\!) \isom R[G]$ for some $G$.
\end{theorem}
\begin{proof}
Consider a group algebra $R[G]$, with standard orthonormal basis $\big(\ket{g}\big)_{g \in G}$. A $\dagger$-SCFA with the standard basis as its set of classical states is given by the following $R$-linear maps, defined on the standard basis:
\begin{align}
\!\hbox{\input{symbols/ZbwcomultSym.tex}}\!\! &:= \ket{g} \mapsto \ket{g} \otimes \ket{g} \nonumber \\
\!\hbox{\input{symbols/ZbwcounitSym.tex}}\!\! &:= \ket{g} \mapsto 1
\end{align}
If we take $(\!\hbox{\input{symbols/DmultSym.tex}}\!\!,\!\hbox{\input{symbols/DunitSym.tex}}\!\!)$ to be the coherent group multiplication and unit for the group algebra $R[G]$, then Frobenius law, the quasi-speciality law, the laws of complementarity and strong complementarity for the pair $(\hbox{\input{symbols/ZbwdotSym.tex}}\!\!,\hbox{\input{symbols/DdotSym.tex}}\!\!)$ all hold when applied to $\hbox{\input{symbols/ZbwdotSym.tex}}\!\!$-classical states. Because $\hbox{\input{symbols/ZbwdotSym.tex}}\!\!$ has enough classical states, Frobenius law and the quasi-speciality law hold in generality, making $(\!\hbox{\input{symbols/DmultSym.tex}}\!\!,\!\hbox{\input{symbols/DunitSym.tex}}\!\!,\!\hbox{\input{symbols/DcomultSym.tex}}\!\!,\!\hbox{\input{symbols/DcounitSym.tex}}\!\!)$ a $\dagger$-qSFA, and so do the laws of complementarity and strong complementarity, making $(\hbox{\input{symbols/ZbwdotSym.tex}}\!\!,\hbox{\input{symbols/DdotSym.tex}}\!\!)$ a well-pointed coherent group. 
Conversely, take a well-pointed coherent group $(\hbox{\input{symbols/ZbwdotSym.tex}}\!\!,\hbox{\input{symbols/DdotSym.tex}}\!\!)$ on an object $A$ of $\RMatCategory{R}$, and let $G := (\classicalStates{\hbox{\input{symbols/ZbwdotSym.tex}}\!\!},\!\hbox{\input{symbols/DmultSym.tex}}\!\!,\!\hbox{\input{symbols/DunitSym.tex}}\!\!)$ by the group given by Theorem \ref{thm_QuantumGroupAreGroupsOnPoints}. Because the semiring is cancellative and the inner product is non-degenerate, the points of the coherent group form an orthogonal basis for $A$, and have all the same square norm $M$ (where $M=m^\dagger m$ is the normalisation factor of the point structure). In particular, there can only be finitely many points, so $R[G]$ is a well-defined group algebra in $\RMatCategory{R}$ and $\id{A} = \frac{1}{N}\sum_{g \in \classicalStates{\hbox{\input{symbols/ZbwdotSym.tex}}\!\!}} g \circ g^\dagger$. Define the following linear map (where $g^\dagger: A \rightarrow \tensorUnit$ and $\ket{g}: \tensorUnit \rightarrow R[G]$):
\begin{equation}
U := \frac{1}{n} \sum_{g \in \classicalStates{\hbox{\input{symbols/ZbwdotSym.tex}}\!\!}} \ket{g} \circ g^\dagger : A \rightarrow R[G]
\end{equation}
Because $\id{A} = \frac{1}{N}\sum_{g \in \classicalStates{\hbox{\input{symbols/ZbwdotSym.tex}}\!\!}} g \circ g^\dagger$ and $\id{R[G]} = \sum_{g \in \classicalStates{\hbox{\input{symbols/ZbwdotSym.tex}}\!\!}} \ket{g} \bra{g}$, the linear map $U$ is a unitary. Furthermore, it is immediate to see that $U$ is a well-defined group homomorphism when restricted to the points of $(\hbox{\input{symbols/ZbwdotSym.tex}}\!\!,\hbox{\input{symbols/DdotSym.tex}}\!\!)$,   and hence a coherent group homomorphism by Theorem \ref{thm_QuantumGroupHomsAreGroupHomsOnPoints}.  
\end{proof}

As a closing remark, it should be noted that in $\RMatCategory{R}$, e.g. for $R=\reals^+,\reals, \complexs, \rationals$, all finite groups arise from (well-pointed) coherent groups, but that this is not true in general $\dagger$-SMCs. It is furthermore true that any two well-pointed coherent groups in $\RMatCategory{R}$ corresponding to the same finite group are connected by a coherent group isomorphism, but again this need not hold in general $\dagger$-SMCs. 


\newpage
\section{Wavefunctions on a periodic lattice}
\label{section_wavefunctionsPeriodicLattice}

We turn our attention back to the treatment of wavefunctions on a periodic lattice, but this time from the abstract perspective of coherent groups on some object $\SpaceH$ of some $\dagger$-SMC $\CategoryC$. For the rest of this section, we will work with a well-pointed abelian coherent group $\mathbb{G} := (\hbox{\input{symbols/ZbwdotSym.tex}}\!\!,\hbox{\input{symbols/DdotSym.tex}}\!\!)$ having finitely many points. By Theorem \ref{thm_QuantumGroupAreGroupsOnPoints}, $ G:= (\classicalStates{\hbox{\input{symbols/ZbwdotSym.tex}}\!\!},\!\hbox{\input{symbols/DmultSym.tex}}\!\!,\!\hbox{\input{symbols/DunitSym.tex}}\!\!)$ is a finite abelian group, and we fix an isomorphism $G \isom \prod_{d=1}^D \integersMod{n_d}$ allowing us to interpret $G$ as the translation group of a $D$-dimensional lattice $\Lambda$.

\subsection{What we look for in a momentum observable}
\label{subsection_whatWeLookForInMomentum}

The identification of the position observable with the point structure $\hbox{\input{symbols/ZbwdotSym.tex}}\!\!$ shouldn't come as a surprise: if we fix a distinguished site $\lambda_0$ on the lattice $\Lambda$, then lattice sites are naturally identified with elements of $G$, i.e. points of the coherent group, by the bijection $g \mapsto g(\lambda_0)$. If $\CPStarCategory{\CategoryC}$ is $R$-probabilistic, then the following is the desired lattice \textbf{position measurement} (where $R^G$ is the space of $R$-distributions over $G$):
\begin{equation}\label{latticePositionMeasurement}
	\input{pictures/chapter3/latticePositionMeasurement.tikz}
\end{equation}
The multiplicative fragment $(\!\hbox{\input{symbols/DmultSym.tex}}\!\!,\!\hbox{\input{symbols/DunitSym.tex}}\!\!)$ of the group structure for $\mathbb{G}$ acts as the (coherent) lattice translation on the points/lattice sites, with the unit $\!\hbox{\input{symbols/DunitSym.tex}}\!\!$ corresponding to the distinguished site $\lambda_0$. The position measurement, together with the corresponding preparation, can then be used to obtain the classical translation group on the lattice\footnote{Or, to be more precise, its extension to the space of $R$-distributions over the lattice.} from the coherent one, confirming that the outcomes of the position measurement defined above are naturally endowed with lattice structure:
\begin{equation}\label{latticeTranslationGroup}
	\input{pictures/chapter3/latticeTranslationGroup.tikz}
\end{equation}

From our experience with $\fdHilbCategory$, we expect the momentum observable to arise as the group structure $\hbox{\input{symbols/DdotSym.tex}}\!\!$. But what does it mean to be the momentum observable in an abstract setting such as ours? What are the structural and operational features that would allow us to conclude, beyond reasonable doubt, that $\hbox{\input{symbols/DdotSym.tex}}\!\!$ behaves as the momentum observable? To understand this, we look at some of the defining characteristics of position and momentum observables in the traditional formulation of quantum mechanics.

Consider a wavefunctions on the periodic lattice $\Lambda$, living in the group algebra $\complexs[G]$. The translation symmetry on $\Lambda$ is encoded by a unitary group action $(U_g)_{g \in G}$ of $G$ on $\complexs[G]$, and the momentum eigenstates $\frac{1}{\sqrt{|G|}}\ket{\goodchi}$ are the states invariant under this action. Hence the momentum eigenstates generate the group action in following sense:
\begin{equation}\label{eqn_latticeTranslationGroupActionDiagonalised}
U_g := \frac{1}{|G|} \sum_{\goodchi} \goodchi(g) \ket{\goodchi} \bra{\goodchi}
\end{equation}
The momentum eigenstates come themselves with an abelian group symmetry, the \textbf{boost symmetry} on $\Lambda$, which is encoded by a unitary group action $(V_{\goodchi})_{\goodchi \in G^\wedge}$ of the Pontryagin dual $G^\wedge$ of the translation symmetry group. The position eigenstates turn out to be exactly the states invariant under boost symmetry, and generate its group action the following sense:
\begin{equation}\label{eqn_latticeBoostGroupActionDiagonalised}
V_{\goodchi} := \sum_{g \in G} \goodchi(g) \ket{g} \bra{g}
\end{equation}
We will take this as our first structural description of the relationship between the position and momentum observables: there is a symmetry-observable duality, with the momentum observable corresponding to the translation symmetry (the symmetry of position eigenstates), and the position observable corresponding to the boost symmetry (the symmetry of momentum eigenstates).

In the continuous case of wavefunctions over $\reals^{n}$ (with positions $x \in \reals^n$ and momenta indexed by $p \in \reals^n$ as $\goodchi_p := x \mapsto e^{i p x }$), the relationship between the translation and boost symmetries is fully captured by the \textbf{Weyl Canonical Commutation Relations}:
\begin{equation}\label{eqn_WeylCCRs}
  V_p U_x = e^{i \hbar p \cdot x} \; U_x V_p
\end{equation}
Note that we chose the Weyl CCRs, which refer to the translation and boost symmetries, instead of the more common Heisenberg CCRs $[\textbf{x},\textbf{p}] = i \hbar \;\id{\SpaceH}$, which refer to the infinitesimal generators $\textbf{x}$ and $\textbf{p}$ for the symmetries (usually referred to as the position and momentum observables). There are two reasons for this choice. Firstly, the Heisenberg CCRs are known not to hold in finite dimensions, for any choice of operators $\textbf{x}$ and $\textbf{p}$:
\begin{equation}
\Trace{(\textbf{xp}-\textbf{px})} = \Trace{(\textbf{xp})}-\Trace{(\textbf{px})} = 0 \neq i\hbar \dim{\SpaceH} = \Trace{(i \hbar \; \id{\SpaceH})}
\end{equation}
On the contrary, the Weyl CCRs are easily formulated in our finite-dimensional setting:
\begin{equation}\label{eqn_WeylCCRsLattice}
V_{\goodchi} U_g = \goodchi(g)\;  U_g V_{\goodchi} 
\end{equation}
Secondly, the Heisenberg CCRs focus on infinitesimal generators $\textbf{x}$ and $\textbf{p}$, which are mere mathematical constructs arising from Stone's Theorem, while the Weyl CCRs focus on the symmetries associated with position and momentum, which have direct physical significance. The issues with identifying self-adjoint operators and observables are fleshed out in full detail in Subsection \ref{subsection_briefDigressionObservables} below: the conclusion will be that it makes no sense to look for an analogue of $\textbf{x}$ and $\textbf{p}$ in our abstract framework, and as a consequence the Heisenberg form of the CCRs should not be used. We will take the Weyl CCRs, together with an appropriately revised formulation of the Stone-von Neumann Theorem, as our second structural description of the relationship between the position and momentum observables. Despite our departure from the Heisenberg CCRs, we will still be able to formulate a suitable version of Stone's Theorem (in Subsection \ref{subsection_StoneTheoremRevisited} below), further strengthening our claims.

The Uncertainty Principle is unarguably the most iconic operational feature of position and momentum in quantum mechanics. The most common formulation of the principle is due to Kennard and Weyl, and involves the standard deviations $\sigma_x$ and $\sigma_p$ for the position and momentum observables of a wavefunction on the continuous 1-dimensional space $\reals$:
\begin{equation}\label{eqn_KennardWeylUncertaintyPrinciple}
\sigma_x \sigma_p \geq \frac{\hbar}{2}
\end{equation}
There is also an entropic formulation of the principle \cite{Beckner1975,Bialynicki1975}, which involves the entropies $H_x$ and $H_p$ of the probability distributions on outcomes of position and momentum measurements of a same state:
\begin{equation}\label{eqn_entropicUncertaintyPrinciple}
H_x + H_p \geq \log(e/2)
\end{equation}
Unfortunately, neither form of the uncertainty principle is suitable for our purposes: while some of its implications truly characterise the abstract relationship between position and momentum observables, other implications, such as the necessarily ensuing non-locality, seem to pertain more to quantum mechanics in general rather than to position and momentum specifically. Indeed there are quantum-like theories which have sensible notions of position/momentum observables\footnote{By which we mean coherent groups with each observable having an orthonormal basis of classical states, so that the position/momentum measurements are well defined with non-deterministic classical outcomes.} while at the same time being entirely local: a stunning example is given by hyperbolic quantum theory, a quasi-probabilistic theory which admits non-trivial examples of position/momentum duality (e.g. for the finite periodic lattices $\integersMod{2}^N$ and for the infinite lattices $\integers^N$), but at the same time fails to satisfy either formulation of the uncertainty principle (not a surprise, since the theory is local). 

Consider a qubit in hyperbolic quantum theory. Let $\hbox{\input{symbols/ZbwdotSym.tex}}\!\!$ be the $\dagger$-SCFA associated with the Pauli Z orthonormal basis $\ket{0},\ket{1}$, and $\hbox{\input{symbols/DdotSym.tex}}\!\!$ be the $\dagger$-qSCFA associated with the Pauli X orthogonal basis $\ket{\pm} := \ket{0} \pm \ket{1}$. Then $(\hbox{\input{symbols/ZbwdotSym.tex}}\!\!,\hbox{\input{symbols/DdotSym.tex}}\!\!)$ is a coherent group corresponding to the position/momentum pair for a $1$-dimensional periodic lattice with points $\integersMod{2}$: the position eigenstates $\ket{0},\ket{1}$ are unbiased for the momentum measurement, and conversely the (normalised) momentum eigenstates $\frac{1}{\sqrt{2}}\ket{+}, \frac{1}{\sqrt{2}}\ket{-}$ are unbiased for the position measurement. However, both the Kennard-Weyl and the entropic uncertainty principles fail.
\begin{theorem}[\textbf{Hyperbolic quantum theory fails the uncertainty principle}]
\label{thm_hyperbolicUncertaintyPrincipleFails}
There is a mixed state $\rho$ of the qubit in hyperbolic quantum theory which gives outcome $\ket{0}$ with certainty when measured in the Pauli Z observable, and outcome $\ket{+}$ with certainty when measured in the Pauli X observable.
\end{theorem}
\begin{proof}
Let $a := \sqrt{2}$ and $b := \frac{1}{\sqrt{2}}+j \frac{\sqrt{3}}{\sqrt{2}}$, and consider the pure qubit state $\ket{\psi} := a \ket{0} + b \ket{1}$, which is normalised: 
\begin{equation}
\braket{\psi}{\psi} = |a|^2 + |b|^2 = 2 + (\frac{1}{2} - \frac{3}{2}) = 1
\end{equation} 
Now consider the normalised mixed state $\rho := \frac{1}{2} \ket{\psi}\bra{\psi} + \frac{1}{2}\ket{1}\bra{1}$. Upon measurement in the Pauli Z observable, the state $\rho$ results in the outcome $\ket{0}$ with certainty:
\begin{align}
	\bra{0} \rho \ket{0} &= \frac{1}{2} |\braket{0}{\psi}|^2 + \frac{1}{2} |\braket{0}{1}|^2 = \frac{1}{2} \cdot 2 + \frac{1}{2}\cdot 0 = 1 \nonumber\\
	\bra{1} \rho \ket{1} &= \frac{1}{2} |\braket{1}{\psi}|^2 + \frac{1}{2} |\braket{1}{1}|^2 = \frac{1}{2} \cdot (-1) + \frac{1}{2}\cdot 1 = 0 
\end{align}
Upon measurement in the Pauli X observable, the state $\rho$ also results in the outcome $\ket{+}$ with certainty:
\begin{align}
	\frac{1}{2} \bra{+} \rho \ket{+} &= \frac{1}{4} |\braket{+}{\psi}|^2 + \frac{1}{4} |\braket{+}{1}|^2 \nonumber\\
	&= \frac{1}{4} (|a|^2 + ab^\ast + a^\ast b + |b|^2) + \frac{1}{4}\cdot 1 = \frac{1}{4} \big((2 + 2\frac{\sqrt{2}}{\sqrt{2}} -1)+ 1\big) = 1 \nonumber \\
	\frac{1}{2} \bra{-} \rho \ket{-} &= \frac{1}{4} |\braket{-}{\psi}|^2 + \frac{1}{4} |\braket{-}{1}|^2 \nonumber\\
	&= \frac{1}{4} (|a|^2 - ab^\ast - a^\ast b + |b|^2) + \frac{1}{4}\cdot 1 = \frac{1}{4} \big((2 - 2\frac{\sqrt{2}}{\sqrt{2}}-1) + 1\big) = 0
\end{align}
Hence the state $\rho$ is sharp in both the Pauli Z (lattice position) and Pauli X (lattice momentum) observables, and the conventional form of the uncertainty principle necessarily fails. 
\end{proof}
The example of hyperbolic quantum theory is not isolated: there are many other theories admitting sensible notions of position and momentum observables (finite-field and $p$-adic quantum theories another examples), but which at the same time are local, and hence necessarily fail to satisfy either formulation of the uncertainty principle. As a consequence, we will choose to take a restricted version of the uncertainty principle as our third structural description of the relationship between the position and momentum observables: namely, that states of definite position have completely indeterminate momentum, and vice versa that states of definite momentum have completely indeterminate position.

\subsection{Momenta generate translation symmetry}
\label{subsection_momentumTranslationSymmetry}

In the first part of this Section, we have set out a number of structural and operational criteria that would help identifying $\hbox{\input{symbols/DdotSym.tex}}\!\!$ with the momentum observable: (i) the relationship between the position/momentum observables and the translation/boost symmetries; (ii) the Weyl canonical commutation relations; (iii) the (restricted version of the) uncertainty principle. In the remainder of this Section we will prove that the observable $\hbox{\input{symbols/DdotSym.tex}}\!\!$ satisfies all those criteria, and conclude that it is indeed the lattice momentum observable we were looking for.

We begin by observing that the unitary translation symmetry action $(U_g)_{g \in G}$ on $\SpaceH$ is obtained by evaluating $\!\hbox{\input{symbols/DmultSym.tex}}\!\!$ on the points of the coherent group.
\begin{theorem}[\textbf{Momenta generate translations}]\label{thm_latticeTranslationSymmetryAction}\hfill\\
	Let $\mathbb{G} := (\hbox{\input{symbols/ZbwdotSym.tex}}\!\!,\hbox{\input{symbols/DdotSym.tex}}\!\!)$ be a coherent group on an object $\SpaceH$ of a $\dagger$-SMC $\CategoryC$, and let $G:=(\classicalStates{\hbox{\input{symbols/ZbwdotSym.tex}}\!\!},\!\hbox{\input{symbols/DmultSym.tex}}\!\!,\!\hbox{\input{symbols/DunitSym.tex}}\!\!)$. Define a family $(U_g)_{g \in G}$ of endomorphisms of $\SpaceH$ as follows: 
	\begin{equation}\label{latticeTranslationSymmetryAction}
		\input{pictures/chapter3/latticeTranslationSymmetryAction.tikz}
	\end{equation}
	Then $(U_g)_{g \in G}$ gives a unitary action of the group $G$ on $\SpaceH$, restricting to the left regular action of the translation group $G$ on the points of $\mathbb{G}$.
\end{theorem}
\begin{proof}
	To show that this defines a unitary group action, we need to check that $U_{g\oplus h} = U_h U_g$, that $U_0 = \id{\SpaceH}$, and that $U_g^\dagger U_g= \id{\SpaceH} = U_g U_h^\dagger$. The first claim follows by the associative law for $\hbox{\input{symbols/DdotSym.tex}}\!\!$, together with the fact that it acts as the group $G$ on the points of the coherent group: 
	\begin{equation}\label{latticeTranslationSymmetryActionProof1}
		\input{pictures/chapter3/latticeTranslationSymmetryActionProof1.tikz}
	\end{equation}
	The second claim follows from the unit law for $\hbox{\input{symbols/DdotSym.tex}}\!\!$:
	\begin{equation}\label{latticeTranslationSymmetryActionProof2}
		\input{pictures/chapter3/latticeTranslationSymmetryActionProof2.tikz}
	\end{equation}
	The third claim has a slightly more complicated proof, which involves Frobenius and unit laws for $\hbox{\input{symbols/DdotSym.tex}}\!\!$, the adjoin condition for $\hbox{\input{symbols/ZbwdotSym.tex}}\!\!$-classical states, and Hopf's law:
	\begin{equation}\label{latticeTranslationSymmetryActionProof3}
		\resizebox{\textwidth}{!}{\input{pictures/chapter3/latticeTranslationSymmetryActionProof3.tikz}}
	\end{equation}
	The proof that $U_g U_g^\dagger = \id{\SpaceH}$ goes along the same lines. 
\end{proof}
We deduce that, when $\CPStarCategory{\CategoryC}$ is $R$-probabilistic and $\mathbb{G}$ is well-pointed, the following defines the controlled unitary corresponding to the symmetry group action:
\begin{equation}\label{latticeTranslationSymmetryControlledUnitary}
	\input{pictures/chapter3/latticeTranslationSymmetryControlledUnitary.tikz}
\end{equation}

We define the \textbf{multiplicative characters} for the coherent group $\mathbb{G}$ to be the effects $\goodchi: \SpaceH \rightarrow \tensorUnit$ satisfying the following three equations:
\begin{equation}\label{multiplicativeChar}
	\input{pictures/chapter3/multiplicativeChar.tikz}
\end{equation}
When restricted to the points of $\mathbb{G}$, the three equation above mimic the defining properties of multiplicative characters for classical groups: $\goodchi(g\oplus h) = \goodchi(g) \goodchi(h)$, $\goodchi(0) = 1$, and $\goodchi(\ominus g)=\big(\goodchi(g)\big)^\dagger$ (where the last one made use of the adjoint condition for $\hbox{\input{symbols/ZbwdotSym.tex}}\!\!$-classical states); indeed, we already encountered them at the beginning of the chapter. It is easy to show that, just like in $\fdHilbCategory$, the multiplicative characters for $\mathbb{G}$ are exactly the adjoints of the $\hbox{\input{symbols/DdotSym.tex}}\!\!$-classical states (which we wish to interpret as momentum eigenstates): the first two equations in \ref{multiplicativeChar} correspond to the copy and delete conditions for the state $\goodchi^\dagger$, while the third one can be easily turned into the adjoin condition by applying the symmetric cup corresponding to $\hbox{\input{symbols/ZbwdotSym.tex}}\!\!$ (and recalling the definition of the antipode).

We now show that the $\hbox{\input{symbols/DdotSym.tex}}\!\!$-classical states are exactly the states invariant under the translation symmetry action on $\SpaceH$.
\begin{theorem}[\textbf{Momenta invariant under translation}]\label{thm_multiplicativeCharInvariance} \hfill\\
	Let $\mathbb{G} := (\hbox{\input{symbols/ZbwdotSym.tex}}\!\!,\hbox{\input{symbols/DdotSym.tex}}\!\!)$ be a coherent group on an object $\SpaceH$ of a $\dagger$-SMC $\CategoryC$, and let $G:=(\classicalStates{\hbox{\input{symbols/ZbwdotSym.tex}}\!\!},\!\hbox{\input{symbols/DmultSym.tex}}\!\!,\!\hbox{\input{symbols/DunitSym.tex}}\!\!)$. The adjoints of the multiplicative characters (i.e. the $\hbox{\input{symbols/DdotSym.tex}}\!\!$-classical states) are invariant under the translation symmetry action, up to a scalar:
	\begin{equation}\label{multiplicativeCharInvariance}
		\input{pictures/chapter3/multiplicativeCharInvariance.tikz}
	\end{equation}
	Furthermore, the scalar $\goodchi(g) := \goodchi \circ g$ satisfies $\goodchi(g)^\dagger \cdot \goodchi(g) = 1$. Finally, assume that $\mathbb{G}$ is well-pointed, and consider a state $\goodchi^\dagger$: if both (i) Equation \ref{multiplicativeCharInvariance} holds for all points $g$, and (ii) $\goodchi(\!\hbox{\input{symbols/DunitSym.tex}}\!\!)^\dagger \cdot \goodchi(\!\hbox{\input{symbols/DunitSym.tex}}\!\!) = 1$, then $\goodchi^\dagger$ must be a $\hbox{\input{symbols/DdotSym.tex}}\!\!$-classical state.
\end{theorem}
\begin{proof}
First we show that the adjoint $\goodchi^\dagger$ of a multiplicative character $\goodchi$ of the coherent group satisfies Equation \ref{multiplicativeCharInvariance}:
\begin{equation}\label{multiplicativeCharInvarianceProof}
	\input{pictures/chapter3/multiplicativeCharInvarianceProof.tikz}
\end{equation}
Then we show that the scalar $\goodchi(g)$ satisfies $\goodchi(g)^\dagger \cdot \goodchi(g) = 1$:
\begin{equation}\label{multiplicativeCharInvarianceProofScalar}
	\resizebox{\textwidth}{!}{\input{pictures/chapter3/multiplicativeCharInvarianceProofScalar.tikz}}
\end{equation}
Finally, assume that $\goodchi^\dagger$ is some state satisfying Equation \ref{multiplicativeCharInvariance}. Then we can quickly derive following three equations:
\begin{equation}\label{multiplicativeCharInvarianceProofConverse}
	\resizebox{\textwidth}{!}{\input{pictures/chapter3/multiplicativeCharInvarianceProofConverse.tikz}}
\end{equation}
The rightmost equation together with the requirement that $\goodchi(\!\hbox{\input{symbols/DunitSym.tex}}\!\!)^\dagger \cdot \goodchi(\!\hbox{\input{symbols/DunitSym.tex}}\!\!) = 1$ implies the delete condition for $\hbox{\input{symbols/DdotSym.tex}}\!\!$-classical states. Because $\mathbb{G}$ is well-pointed, the middle equation together with the delete condition implies the adjoin condition for $\hbox{\input{symbols/DdotSym.tex}}\!\!$-classical states, while the left equation implies the copy condition.
\end{proof}
We deduce that, when $\CPStarCategory{\CategoryC}$ is $R$-probabilistic, the adjoints of the multiplicative characters are invariant under the controlled unitary associated with the translation symmetry action:
\begin{equation}\label{multiplicativeCharInvarianceCPStar}
	\input{pictures/chapter3/multiplicativeCharInvarianceCPStar.tikz}
\end{equation}

Having checked that the $\hbox{\input{symbols/DdotSym.tex}}\!\!$-classical states are the invariant states for the translation symmetry action, the next task on our list is to show that they actually generate the symmetry action itself. In the quantum mechanics of wavefunctions on $\reals$, the self-adjoint momentum operator $\textbf{p}$ is traditionally obtained from the translation symmetry action $(U_x)_{x \in \reals}$ by using Stone's theorem on 1-parameter unitary groups \cite{Stone1930,Stone1932}:
\begin{equation}
U_x := e^{ix\textbf{p}} = \sum_{p} e^{i x p } \ket{p} \bra{p}
\end{equation}
We have already seen that taking infinitesimal generators does not yield a well defined self-adjoint momentum operator when working on periodic lattices, not even in the traditional quantum mechanical formalism. Hence we will aim for something like Equation \ref{eqn_latticeTranslationGroupActionDiagonalised}:
\begin{equation}\nonumber 
U_g := \frac{1}{|G|} \sum_{\goodchi} \goodchi(g) \ket{\goodchi} \bra{\goodchi}
\end{equation}
Unfortunately, we don't have the luxury of sums. We need to look for a more structural way of phrasing Equation \ref{eqn_latticeTranslationGroupActionDiagonalised}, one which we can formulate, and hopefully prove, in our more abstract framework. As it turns out, Theorem \ref{thm_latticeTranslationSymmetryAction} already provides us with such an alternative phrasing, since in quantum mechanics the following is true:
\begin{equation}\label{latticeTranslationSymmetryActionExplicit}
	\input{pictures/chapter3/latticeTranslationSymmetryActionExplicit.tikz}
\end{equation}
As a consequence, Theorem \ref{thm_latticeTranslationSymmetryAction} already proved that the momentum observable, seen as the $\dagger$-qSFA $\hbox{\input{symbols/DdotSym.tex}}\!\!$, generates the translation symmetry action.

\subsection{Positions generate boost symmetry}
\label{subsection_positionsBoostSymmetry}

Having completed our description of the connection between $\hbox{\input{symbols/DdotSym.tex}}\!\!$, our candidate momentum observable, and the translation symmetry action, we now characterise the dual relationship between the position observable, embodied by $\hbox{\input{symbols/ZbwdotSym.tex}}\!\!$, and the boost symmetry action. To begin with, we need to formalise what we mean by boost symmetry. 

In the quantum mechanical case of $\complexs[G]$, momentum eigenstates $\ket{\goodchi}$ correspond to multiplicative characters $\goodchi: G \rightarrow \complexs$. Any multiplicative character $\goodchi$ can be written as $\goodchi_h$ in the following form, for some (non-unique) $h = (h_d)_{d=1}^D \in G \isom \prod_{d=1}^{D} \integersMod{n_d}$:
\begin{equation}\label{eqn_multCharBoostForm}
\goodchi_h = g \mapsto e^{ 2 \pi \, i \;  g \cdot h}
\end{equation}
where $g \cdot h := \sum_{d=1}^D\frac{g_d \cdot h_d}{n_d}$, and each product $g_d \cdot h_d$ is taken modulo $n_d$. An element $h = (h_d)_{d=1}^D \in G$ is a vector, describing some direction and magnitude on the lattice: as a consequence we can think of $\ket{\goodchi}$ as the momentum eigenstate associated with \inlineQuote{moving with momentum $h$ on the lattice}. Contrary to the case of wavefunctions over $\reals$, there is no canonical choice for $h$ in the periodic lattice case, and hence it is best to work directly with the multiplicative character $\goodchi$. 

We've already seen that the multiplicative characters $\goodchi: G \rightarrow \complexs$ of the finite abelian group $G$ form the Pontryagin dual group $G^\wedge$ under pointwise multiplication, but what does this have to do with boosts? To see exactly what's going on, we need to observe the effect of pointwise multiplication $\goodchi_{h} \cdot \goodchi_{\delta h}$ on the explicit form given by Equation \ref{eqn_multCharBoostForm}:
\begin{equation}
\goodchi_{h} \cdot \goodchi_{\delta h} = g \mapsto e^{ 2 \pi \, i \;  g \cdot h} e^{ 2 \pi \, i \;  g \cdot \delta h} = e^{ 2 \pi \, i \;  g \cdot (h \oplus \delta h)}
\end{equation}
Hence, if we interpret $\goodchi_h$ as moving with momentum $h$ on the lattice, and $\goodchi_{\delta h}$ as moving with momentum $\delta h$ on the lattice, then $\goodchi_h \cdot \goodchi_{\delta h}$ should be interpreted as moving with momentum $h \oplus \delta h$ on the lattice. This is why the Pontryagin dual group structure on the multiplicative characters is referred to as the \textbf{boost symmetry}. 

Perhaps unsurprisingly, the Pontryagin duality between the translation symmetry group and the boost symmetry group in quantum mechanics lifts to a duality between corresponding coherent groups. Perhaps more surprisingly, the duality on the coherent group side has a much simpler characterisation than Pontryagin duality. Theorem \ref{thm_DualQuantumGroup} introduces the notion of dual coherent group, while Theorem \ref{lem_QGDualityPontryaginDuality} gives a first bout of legitimacy to the idea that duality of coherent groups generalises Pontryagin duality (the two coincide in the case of well-pointed abelian coherent groups in $\fdHilbCategory$). 

\begin{theorem}[\textbf{Dual coherent group}]\label{thm_DualQuantumGroup}\hfill\\
Let $\mathbb{G}:=(\hbox{\input{symbols/ZbwdotSym.tex}}\!\!,\hbox{\input{symbols/DdotSym.tex}}\!\!)$ be a coherent group on an object $\SpaceH$ of a $\dagger$-SMC $\CategoryC$. Then the \textbf{dual coherent group} $\mathbb{G}^\wedge:=(\hbox{\input{symbols/DdotSym.tex}}\!\!,\hbox{\input{symbols/ZbwdotSym.tex}}\!\!)$ is also a coherent group, with the adjoints $\goodchi^\dagger$ of the multiplicative character of $\mathbb{G}$ as its points. The multiplicative structure $\hbox{\input{symbols/ZbwdotSym.tex}}\!\!$ of $\mathbb{G}^\wedge$ acts by pointwise composition:
\begin{equation}\label{DualQuantumGroupArePointMultGroups}
	\input{pictures/chapter3/DualQuantumGroupArePointMultGroups.tikz}
\end{equation}
Finally, the following defines an involutive monoidal functor $\wedge:  \CoherentGroupsCategory{\CategoryC} \rightarrow  \OpCategory{\CoherentGroupsCategory{\CategoryC}}$ on coherent groups in a $\dagger$-SMC $\CategoryC$:
\begin{align}
	(\hbox{\input{symbols/ZbwdotSym.tex}}\!\!,\hbox{\input{symbols/DdotSym.tex}}\!\!)^\wedge &:= (\hbox{\input{symbols/DdotSym.tex}}\!\!,\hbox{\input{symbols/ZbwdotSym.tex}}\!\!) \nonumber \\
	f^\wedge & := f^\dagger
\end{align}
\end{theorem}
\begin{proof}
The conditions defining a coherent group $\mathbb{G}:=(\hbox{\input{symbols/ZbwdotSym.tex}}\!\!,\hbox{\input{symbols/DdotSym.tex}}\!\!)$ are symmetric in $\hbox{\input{symbols/ZbwdotSym.tex}}\!\!$ and $\hbox{\input{symbols/DdotSym.tex}}\!\!$ (see Definition \ref{def_strongComplementarity} and Remark \ref{rmrk_strongComplementarityColorSwappedEqns}), so $\mathbb{G}^\wedge:=(\hbox{\input{symbols/DdotSym.tex}}\!\!,\hbox{\input{symbols/ZbwdotSym.tex}}\!\!)$ satisfies the same conditions and is a coherent group. We have also already observed that the $\hbox{\input{symbols/DdotSym.tex}}\!\!$-classical states coincide with the multiplicative characters of $\mathbb{G}$, and the pointwise multiplication action of $\hbox{\input{symbols/ZbwdotSym.tex}}\!\!$ on them is a consequence of the copy condition for the points of $\mathbb{G}$, which are $\hbox{\input{symbols/ZbwdotSym.tex}}\!\!$-classical states. 
We now need to show that $^\wedge$ is an involutive monoidal functor. It is definitely involutive and monoidal, as long as we can show that it is a well-defined functor. We just showed that if $(\hbox{\input{symbols/ZbwdotSym.tex}}\!\!,\hbox{\input{symbols/DdotSym.tex}}\!\!)$ is a coherent group, then so is its dual $(\hbox{\input{symbols/DdotSym.tex}}\!\!,\hbox{\input{symbols/ZbwdotSym.tex}}\!\!)$, so the functor $^\wedge$ is well-defined on objects. Because the dagger is a functor, all we need to show is that if $f: (\hbox{\input{symbols/ZbwdotSym.tex}}\!\!,\hbox{\input{symbols/DdotSym.tex}}\!\!) \rightarrow (\hbox{\input{symbols/YbwdotSym.tex}}\!\!,\hbox{\input{symbols/WbwdotSym.tex}}\!)$ is a coherent group homomorphism, then $f^\dagger: (\hbox{\input{symbols/WbwdotSym.tex}}\!,\hbox{\input{symbols/YbwdotSym.tex}}\!\!) \rightarrow (\hbox{\input{symbols/DdotSym.tex}}\!\!,\hbox{\input{symbols/ZbwdotSym.tex}}\!\!)$ is also a coherent group homomorphism. Below we present the six equations that define $f$ as a coherent group homomorphism:
\begin{equation}\label{quantumGroupHomomorphism}
	\resizebox{\textwidth}{!}{\input{pictures/chapter3/quantumGroupHomomorphism.tikz}}
\end{equation}
Taking adjoints of the two leftmost equations for $f$ yields the two leftmost equations for $f^\dagger$, and similarly taking adjoints of the two rightmost equations for $f$ implies the two rightmost equations for $f^\dagger$. All we need to show is that the central two equations for $f$ imply the central two equations for $f^\dagger$. First we prove the adjoint condition for $f^\dagger$, using the central two equations for $f$:
\begin{equation}\label{quantumGroupDualHomomorphismProof1}
	\input{pictures/chapter3/quantumGroupDualHomomorphismProof1.tikz}
\end{equation}
Then we use the central two equations for $f$ and the top central equation for $f^\dagger$ we have just obtained to prove the bottom central equation for $f^\dagger$ (recall that the antipode is self-inverse):
\begin{equation}\label{quantumGroupDualHomomorphismProof2}
	\input{pictures/chapter3/quantumGroupDualHomomorphismProof2.tikz}
\end{equation}
This concludes the proof, showing that $^\wedge$ is an involutive monoidal functor.
\end{proof}

\noindent We will often refer to the underlying group $\underlyingGroup{\mathbb{G}^\wedge} = (\classicalStates{\hbox{\input{symbols/DdotSym.tex}}\!\!}, \!\hbox{\input{symbols/ZbwmultSym.tex}}\!\!, \!\hbox{\input{symbols/ZbwunitSym.tex}}\!\!)$ in terms of multiplicative characters of $\mathbb{G}$, in which case we will write its operation as $\cdot$ (\textit{pointwise multiplication}) and its unit as $\mathbb{1}$ (the \textit{trivial character}). 

\begin{lemma}\label{lem_QGDualityPontryaginDuality}
Let $\WpAbQuantumGroupsCategory{\fdHilbCategory}$ be the category of well-pointed, abelian coherent groups in $\fdHilbCategory$, a full subcategory of $\CoherentGroupsCategory{\fdHilbCategory}$. Then the functor $\underlyingGroup{\emptyArg}: \WpAbQuantumGroupsCategory{\fdHilbCategory} \rightarrow \fAbGrpCategory$ is well-defined, induces an equivalence of categories, and makes the following diagram commute:
\begin{equation}\label{pontryaginDualityCommutingSquare}
	\input{pictures/chapter3/pontryaginDualityCommutingSquare.tikz}
\end{equation}
This shows that Pontryagin duality $^\wedge$ for finite abelian groups corresponds exactly to the duality $^\wedge$ on well-pointed abelian coherent groups in $\fdHilbCategory$.
\end{lemma}
\begin{proof}
The underlying group functor $\underlyingGroup{\emptyArg}: \WpAbQuantumGroupsCategory{\fdHilbCategory} \rightarrow \fAbGrpCategory$ is full, faithful and essentially surjective because of Theorem \ref{thm_WellPointedQuantumGroupAreGroupAlgebras}. We've already observed that in $\fdHilbCategory$ the multiplicative characters of an abelian coherent group $(\hbox{\input{symbols/ZbwdotSym.tex}}\!\!,\hbox{\input{symbols/DdotSym.tex}}\!\!)$ are the multiplicative characters of the underlying group: hence $\underlyingGroup{\mathbb{G}^\wedge} = \underlyingGroup{\mathbb{G}}^{\wedge}$ in $\fdHilbCategory$, and the diagram commutes on objects. To see that it also commutes on morphisms, consider a coherent group homomorphism $f$ with $\underlyingGroup{f}: G \rightarrow H$ for some finite abelian groups $G,H$. Then we have that both $\underlyingGroup{f^\wedge}^{op}$ and $\underlyingGroup{f}^\wedge$ are morphisms $H^\wedge \rightarrow G^\wedge$, with the following explicit expressions:
\begin{align}
\underlyingGroup{f^\wedge}^{op} &= \goodchi \mapsto (f^\wedge \circ \goodchi^\dagger)^\dagger = \goodchi \circ f \nonumber \\
\underlyingGroup{f}^\wedge &= \goodchi \mapsto \goodchi \circ f
\end{align}
The two expressions coincide, showing that Diagram \ref{pontryaginDualityCommutingSquare} also commutes on morphisms. 
\end{proof}

\begin{remark}
We might try to generalise the Pontryagin duality side of Theorem \ref{lem_QGDualityPontryaginDuality} as follows. Given a group $G$ and another group $K$, we can define the \textbf{dual} $\dualGroupWRTMonoid{G}{K}$ of $G$ with respect to $K$ to be the group\footnote{Abelian when either one of $K$ or $G$ is commutative.} of homomorphisms $G \rightarrow K$ (the \textbf{$K$-valued multiplicative characters}) under pointwise multiplication (the inverse of $\goodchi: G \rightarrow K$ is given by $g \mapsto \goodchi(g^{-1})$). When $K = S^1$ and $G$ is abelian, $\dualGroupWRTMonoid{G}{S^1}$ is the usual Pontryagin dual of $G$. We can turn $^{\wedge_{M}}$ into a functor $^{\wedge_{M}}: \GrpCategory \rightarrow \GrpCategory$ by setting $f^{\wedge_{M}}:= \emptyArg \circ f$, exactly as in the usual $M = S^1$ case. 

Now consider coherent groups in a $\dagger$-SMC $\CategoryC$, and let $K$ be the group of \textbf{units} in $\CategoryC$, i.e. those scalars $x$ such that $x^\dagger x = 1$. The Diagram \ref{pontryaginDualityCommutingSquare}, where Pontryagin duality $^\wedge: \fAbGrpCategory \rightarrow \OpCategory{\fAbGrpCategory}$ is replaced by $^{\wedge_{K}}: \GrpCategory \rightarrow \GrpCategory$, still commutes on morphisms: we have $\underlyingGroup{f^\wedge} = \underlyingGroup{f}^{\wedge_{K}}$ for all coherent group homomrphisms $f$. However, the new Diagram need not commute on objects: we always have that $\underlyingGroup{\mathbb{G}^\wedge} \leq \underlyingGroup{\mathbb{G}}^{\wedge_{K}}$, but equality need not hold. Further investigation is left to future work.
\end{remark}

Recall that the group of position eigenstates under translation symmetry for the coherent group $\mathbb{G} = (\hbox{\input{symbols/ZbwdotSym.tex}}\!\!,\hbox{\input{symbols/DdotSym.tex}}\!\!)$ is given by the underlying group $G = \underlyingGroup{\mathbb{G}} = (\classicalStates{\hbox{\input{symbols/ZbwdotSym.tex}}\!\!},\!\hbox{\input{symbols/DmultSym.tex}}\!\!, \!\hbox{\input{symbols/DunitSym.tex}}\!\!)$: the discussion until this point makes it clear that the correct choice for the group of momentum eigenstates under boost symmetry should be the underlying group $\underlyingGroup{\mathbb{G}^\wedge} = (\classicalStates{\hbox{\input{symbols/DdotSym.tex}}\!\!},\!\hbox{\input{symbols/ZbwmultSym.tex}}\!\!, \!\hbox{\input{symbols/ZbwunitSym.tex}}\!\!)$ of the dual coherent group $\mathbb{G}^\wedge = (\hbox{\input{symbols/DdotSym.tex}}\!\!,\hbox{\input{symbols/ZbwdotSym.tex}}\!\!)$. 

Now that we have figured out what boost symmetry in our generalise setting should be, we need to confirm that it relates as expected to the position observable: need to show that positions generate our choice of boost symmetry (in the same sense as momenta generating translation symmetry in Theorem \ref{thm_latticeTranslationSymmetryAction}), and that they are the invariant states for boost symmetry (in the same sense as momenta being the invariant states of translation symmetry in Theorem \ref{thm_multiplicativeCharInvariance}).

\begin{theorem}[\textbf{Positions generate boosts}]\label{thm_latticeBoostSymmetryAction}\hfill\\
	Let $\mathbb{G} := (\hbox{\input{symbols/ZbwdotSym.tex}}\!\!,\hbox{\input{symbols/DdotSym.tex}}\!\!)$ be a coherent group on an object $\SpaceH$ of a $\dagger$-SMC $\CategoryC$, and let $X:=(\classicalStates{\hbox{\input{symbols/DdotSym.tex}}\!\!},\!\hbox{\input{symbols/ZbwmultSym.tex}}\!\!,\!\hbox{\input{symbols/ZbwunitSym.tex}}\!\!)$ be the underlying group for the dual $\mathbb{G}^\wedge$. Define a family $(V_{\goodchi})_{\goodchi^\dagger \in X}$ of endomorphisms of $\SpaceH$ as follows: 
	\begin{equation}\label{latticeBoostSymmetryAction}
		\input{pictures/chapter3/latticeBoostSymmetryAction.tikz}
	\end{equation}
	Then $(V_{\goodchi})_{\goodchi^\dagger \in X}$ gives a unitary action of the group $X$ on $\SpaceH$, restricting to the left regular action of the boost symmetry group $\underlyingGroup{\mathbb{G}^\wedge}$ on the points of the dual coherent group $\mathbb{G}^\wedge$. 
\end{theorem}
\begin{proof}
	This is nothing but Theorem \ref{thm_latticeTranslationSymmetryAction} applied to the dual coherent group $\mathbb{G}^\wedge$. 
\end{proof}

\begin{theorem}[\textbf{Positions invariant under boost}]\label{thm_pointsBoostInvariance}\hfill\\
	Let $\mathbb{G} := (\hbox{\input{symbols/ZbwdotSym.tex}}\!\!,\hbox{\input{symbols/DdotSym.tex}}\!\!)$ be a coherent group on an object $\SpaceH$ of a $\dagger$-SMC $\CategoryC$, and let $X:=(\classicalStates{\hbox{\input{symbols/DdotSym.tex}}\!\!},\!\hbox{\input{symbols/ZbwmultSym.tex}}\!\!,\!\hbox{\input{symbols/ZbwunitSym.tex}}\!\!)$ be the underlying group of the dual coherent group $\mathbb{G}^\wedge$. The points of $\mathbb{G}$ are invariant under the boost symmetry action, up to a scalar:
	\begin{equation}\label{pointsBoostInvariance}
		\input{pictures/chapter3/pointsBoostInvariance.tikz}
	\end{equation}
	Furthermore, the scalar $\goodchi(g) = \goodchi \circ g$ satisfies $\goodchi(g)^\dagger \cdot \goodchi(g) = 1$. Finally, assume that the dual coherent group $\mathbb{G}^\wedge$ is well-pointed, and consider a state $g$: if both (i) Equation \ref{pointsBoostInvariance} holds for all multiplicative characters $\goodchi$ of $\mathbb{G}$, and (ii) $\!\hbox{\input{symbols/ZbwcounitSym.tex}}\!\!(g)^\dagger \cdot \!\hbox{\input{symbols/ZbwcounitSym.tex}}\!\!(g) = 1$, then $g$ must be a point of $\mathbb{G}$.
\end{theorem}
\begin{proof}
	This is nothing but Theorem \ref{thm_multiplicativeCharInvariance} applied to the dual coherent group $\mathbb{G}^\wedge$.
\end{proof}

Looking at the assumptions of Theorem \ref{thm_pointsBoostInvariance}, we see a problem arise: while position eigenstates (the points of the coherent group $\mathbb{G}$) are always invariant under boost symmetry and generate it,  Theorem \ref{thm_pointsBoostInvariance} does not guarantee that they will be the \textit{only} invariant states unless the dual coherent group $\mathbb{G}^\wedge$ is itself well-pointed. This is always true for a well-pointed abelian coherent group $\mathbb{G}$ in $\fdHilbCategory$, because the points of the dual $\mathbb{G}^\wedge$ correspond exactly to the multiplicative characters of the underlying abelian group $\underlyingGroup{\mathbb{G}}$: the latter form a basis, and hence $\mathbb{G}^\wedge$ is itself well-pointed. However, it need not be true in general.

We will say that a coherent group $\mathbb{G}$ is \textbf{doubly well-pointed} if both $\mathbb{G}$ and $\mathbb{G}^\wedge$ are well-pointed. It is worth noting that a doubly well-pointed coherent group is necessarily abelian, and that the dual of a doubly well-pointed coherent group is itself doubly well-pointed. From a physical perspective, well-pointedness models the requirement that there must be enough position eigenstates to distinguish different processes from $\SpaceH$: processes are entirely determined by what they do to the position eigenstates. Double well-pointedness models the additional requirement that there must also be enough momentum eigenstates to distinguish different processes from $\SpaceH$, i.e. that processes are also entirely determined by what they do to the momentum eigenstates. In the light of these developments, we will henceforth require the coherent group modelling wavefunctions on a periodic lattice to be doubly well-pointed (which, as we mentioned, is a special case of well-pointed abelian). The same requirement will carry through to Sections \ref{section_finiteCyclic} and \ref{section_compactAbelian}.

When $\mathbb{G}$ is doubly well-pointed, we can write both the position measurement and the \textbf{momentum measurement} in $\CPStarCategory{\CategoryC}$ (which we assume to be $R$-probabilistic) as processes with output in an appropriate classical system: 
\begin{equation}\label{latticePositionMomentumMeasurements}
	\input{pictures/chapter3/latticePositionMomentumMeasurements.tikz}
\end{equation}

Before moving on to the Weyl CCRs, let's go over a brief summary of our work until this point. At the beginning of Section \ref{section_systemsWithSymmetries}, we have considered a well-pointed coherent group $\mathbb{G} = (\hbox{\input{symbols/ZbwdotSym.tex}}\!\!, \hbox{\input{symbols/DdotSym.tex}}\!\!)$ having finitely many points as an abstract model of wavefunctions on a periodic lattice. This is because the points $\classicalStates{\hbox{\input{symbols/ZbwdotSym.tex}}\!\!}$ of the coherent group form a finite abelian group $\underlyingGroup{G} = (\classicalStates{\hbox{\input{symbols/ZbwdotSym.tex}}\!\!}, \!\hbox{\input{symbols/DmultSym.tex}}\!\!, \!\hbox{\input{symbols/DunitSym.tex}}\!\!) \isom \prod_{d=1}^{D} \integersMod{n_d}$, which can be interpreted as a $D$-dimensional periodic lattice $\Lambda$ endowed with the group structure of translation symmetry.

While $\hbox{\input{symbols/ZbwdotSym.tex}}\!\!$ is the natural candidate for the position observable, identifying $\hbox{\input{symbols/DdotSym.tex}}\!\!$ with the momentum observable is more challenging. In Subsection \ref{subsection_whatWeLookForInMomentum}, we have compiled a list of operational and structural properties characterising the relationship between the position and momentum observables for wavefunctions on periodic lattices in the traditional formulation of quantum mechanics: 
\begin{enumerate}
	\item[(a)] that the momentum eigenstates are invariant under the translation symmetry action, and that they generate it; 
	\item[(b)] that the position eigenstates are invariant under the boost symmetry action, and that they generate it; 
	\item[(c)] that the Weyl Canonical Commutation Relations hold; 
	\item[(d)] that the weak form of the uncertainty principle (see Subsection \ref{subsection_whatWeLookForInMomentum}) holds.
\end{enumerate}
In Subsection~\ref{subsection_momentumTranslationSymmetry}, we have shown that the classical states $\classicalStates{\hbox{\input{symbols/DdotSym.tex}}\!\!}$ for the $\hbox{\input{symbols/DdotSym.tex}}\!\!$ observable, our putative momentum eigenstates, are exactly the invariant states for the translation symmetry action on wavefunctions, which that they furthermore generate. In Subsection \ref{subsection_positionsBoostSymmetry}, we have identified the boost symmetry in the underlying group $\underlyingGroup{\mathbb{G}^\wedge} = (\classicalStates{\hbox{\input{symbols/DdotSym.tex}}\!\!}, \!\hbox{\input{symbols/ZbwmultSym.tex}}\!\!, \!\hbox{\input{symbols/ZbwunitSym.tex}}\!\!)$ of the dual coherent group $\mathbb{G}^\wedge = (\hbox{\input{symbols/DdotSym.tex}}\!\!,\hbox{\input{symbols/ZbwdotSym.tex}}\!\!)$. We have then shown that the position eigenstates $\classicalStates{\hbox{\input{symbols/ZbwdotSym.tex}}\!\!}$ are invariant states for the boost symmetry action on wavefunctions, which they generate. In order to characterise the position eigenstates as \textit{exactly} the invariant states under the boost symmetry action, we strengthened our requirements on $\mathbb{G}$, assuming that it is doubly well-pointed.

We are half-way through: points (a) and (b) of our list are down, points (c) and (d) remain to be shown. These will be the topic of the next two Subsections.

\subsection{Weyl Canonical Commutation Relations}

We have already mentioned in Equation \ref{eqn_WeylCCRsLattice} that the Weyl CCRs for wavefunctions on periodic lattices should take the following form:
\begin{equation}\nonumber 
V_{\goodchi} U_g = \goodchi(g)\; U_g V_{\goodchi} 
\end{equation}
This is in direct analogy with the traditional Weyl CCRs from Equation \ref{eqn_WeylCCRs}:
\begin{equation}\nonumber 
V_p U_x = e^{i \hbar p \cdot x} \; U_x V_p
\end{equation}
There is a single difference between Equation \ref{eqn_WeylCCRs} and Equation \ref{eqn_WeylCCRsLattice}: in the former, the momentum eigenstates are labelled by the eigenvalues $p$ of the infinitesimal generator $\textbf{p}$, while in the latter the momentum eigenstates are labelled by the multiplicative characters $\goodchi$ of the translation symmetry group. As a consequence, the phase in Equation \ref{eqn_WeylCCRs} is written explicitly as $e^{i \hbar p \cdot x}$, while the phase in Equation \ref{eqn_WeylCCRsLattice} is obtained more naturally by evaluating the multiplicative character $\goodchi$ labelling the momentum eigenstate on the group element $g$ labelling the position eigenstate. Refer to Subsection \ref{subsection_briefDigressionObservables} for the reasons behind this choice.

Theorems \ref{thm_latticeTranslationSymmetryAction} and \ref{thm_latticeBoostSymmetryAction} already provide us with an abstract description of the unitaries $U_g$ and $V_{\goodchi}$, and of the scalar $\goodchi(g)$: all we need to show is that they respect Equation \ref{eqn_WeylCCRsLattice}.

\begin{theorem}[\textbf{Weyl Canonical Commutation Relations}]\label{thm_WeylCCRs}\hfill\\
Let $\mathbb{G}=(\hbox{\input{symbols/ZbwdotSym.tex}}\!\!,\hbox{\input{symbols/DdotSym.tex}}\!\!)$ be a coherent group in a $\dagger$-SMC $\CategoryC$, and define the families of unitaries $(U_g)_{g \in \underlyingGroup{\mathbb{G}}}$ and  $(V_{\goodchi})_{\goodchi^\dagger \in \underlyingGroup{\mathbb{G}^\wedge}}$ as in Equations \ref{latticeTranslationSymmetryAction} and \ref{latticeBoostSymmetryAction}. Then the Weyl Canonical Commutation Relations from Equation \ref{eqn_WeylCCRsLattice} hold:
\begin{equation}\label{WeylCCRs}
	\input{pictures/chapter3/WeylCCRs.tikz}
\end{equation}
\end{theorem}
\begin{proof}
The proof hinges on the bialgebra law from strong complementarity, on the definition of the antipode, and on the copy/adjoin conditions for $\hbox{\input{symbols/ZbwdotSym.tex}}\!\!$-classical and $\hbox{\input{symbols/DdotSym.tex}}\!\!$-classical states. The first equality below uses the definition of the antipode and the adjoin condition for the $\hbox{\input{symbols/DdotSym.tex}}\!\!$-classical state $\goodchi^\dagger$, while the second equality uses the bialgebra law:
\begin{equation}\label{WeylCCRsProof1}
	\resizebox{\textwidth}{!}{\input{pictures/chapter3/WeylCCRsProof1.tikz}}
\end{equation}
The first equality below uses the copy condition for the $\hbox{\input{symbols/ZbwdotSym.tex}}\!\!$-classical state $g$ and the $\hbox{\input{symbols/DdotSym.tex}}\!\!$-classical state $\hbox{\input{symbols/antipodeSym.tex}}\! \circ \goodchi$ (because the antipode is a function on $\hbox{\input{symbols/DdotSym.tex}}\!\!$-classical states), while the second equality uses again the definition of the antipode and the adjoin condition for the $\hbox{\input{symbols/DdotSym.tex}}\!\!$-classical state $\goodchi$:
\begin{equation}\label{WeylCCRsProof2}
	\resizebox{\textwidth}{!}{\input{pictures/chapter3/WeylCCRsProof2.tikz}}
\end{equation}
This completes our proof of the Weyl Canonical Commutation Relations.
\end{proof}

In the traditional presentation of quantum mechanics, the Weyl Canonical Commutation Relations make an important appearance as part of the Stone-von Neumann Theorem \cite{Stone1930,V.Neumann1931,Stone1932,V.Neumann1932}. Consider two jointly irreducible unitary representations $(U_t)_{t \in \reals}$ and $(V_s)_{s \in \reals}$ of the abelian group $(\reals,+,0)$ on some separable Hilbert Space $\SpaceH$: the Stone-von Neumann Theorem states that if they satisfy the Weyl Canonical Commutation Relations from Equation \ref{eqn_WeylCCRs} then they are jointly unitarily equivalent to the translation and boost symmetry actions on $\Ltwo{\reals}$, i.e. there is some unitary $W: \Ltwo{\reals} \rightarrow \SpaceH$ such that $W^\dagger U_t W = e^{i t \textbf{x}}$ and $W^\dagger V_s W = e^{i s \textbf{p}}$. 

We cannot expect such a direct and precise result in our case, for a variety of reasons. First and foremost, we have as many inequivalent notions of position and momentum as there are periodic lattices: Equation \ref{eqn_WeylCCRs} gives an explicit braiding relation, with $e^{i p x }$ as phase, while Equation \ref{eqn_WeylCCRsLattice} gives a braiding relation parametrized on the multiplicative characters, with $\goodchi(g)$ as phase. However, substituting an explicit form for $\goodchi(g)$ singles out a unique lattice position/momentum pair: e.g.  writing $\goodchi(g) = e^{i 2 \pi \frac{p g}{10}}$ in $\fdHilbCategory$ singles out the position/momentum pair for the 1-dimensional lattice $\integersMod{10}$, while writing $e^{i 2 \pi (\frac{p_1 g_1}{4} + \frac{p_2 g_2}{8}) }$ singles out the position/momentum pair for the 2-dimensional lattice $\integersMod{4} \times \integersMod{8}$. In fact, this is the same for the Stone-von Neumann Theorem in its modern form: the 1-dimensional $e^{i p x }$ case for the group $(\reals,+,0)$ was the first to be proven, but a straightforward generalisation exists for all the groups $(\reals^n,+,0)$, with phases given by $e^{i \sum_{j=1}^n p_j x_j}$.

There is a second, structural reason why we cannot expect a result as tight as the Stone-von Neumann Theorem in the general setting of coherent groups on $\dagger$-SMCs: both the original result and its generalisations to Mackey theory rely both on Pontryagin duality and on a considerable amount of continuous and integrable structure. In an general $\dagger$-SMC $\CategoryC$, the possible choices for a position/momentum pair are classified by the (doubly well-pointed) coherent groups, rather than the underlying groups: the functor $\underlyingGroup{\emptyArg}$ on doubly well-pointed coherent groups is faithful, but not necessarily full, and as a consequence it might not be possible to find a suitable subcategory of groups which classify the position/momentum pairs on systems of $\CategoryC$.\footnote{Contrary to $\fdHilbCategory$, where the functor $\underlyingGroup{\emptyArg}$ is full and faithful on doubly well-pointed coherent groups, and essentially surjective onto the subcategory of finite abelian groups.}

The search for a suitable extension of the Stone-von Neumann Theorem to coherent groups in arbitrary $\dagger$-SMCs is left to future work.

\subsection{Uncertainty principle}

As of this point, we are three fourths of the way to establishing that $\hbox{\input{symbols/DdotSym.tex}}\!\!$ is a suitable momentum observable in a coherent group: we have proven (a) the relationship between momentum observable and translation symmetry; (b) the relationship between position observable and boost symmetry; (c) the Weyl Canonical Commutation Relations. There is one final piece of evidence that we tasked ourselves with finding: item (d), the uncertainty principle. 

It was already remarked in Subsection \ref{subsection_whatWeLookForInMomentum} that the full Kennard-Weyl form of the uncertainty principle from Equation \ref{eqn_KennardWeylUncertaintyPrinciple} is too strong to be considered a characteristic trait of position/momentum duality: it essentially implies contextuality, an operational feature of quantum theory that we believe should not play a direct role in the abstract treatment of mechanics and dynamics. In this light, we proposed that a suitable compromise would involve restricting our attention to the position and momentum observables themselves, and show that position eigenstates have completely indeterminate momentum, and vice versa that momentum eigenstates have completely indeterminate position. But we already know this is going be true: Lemma \ref{lem_complementarityUnbiased} states that any complementary pair is mutually unbiased, and in particular so will be the pair of observables $\hbox{\input{symbols/ZbwdotSym.tex}}\!\!$ and $\hbox{\input{symbols/DdotSym.tex}}\!\!$ appearing in a coherent group (which by definition must be strongly complementary, and hence complementary).

\begin{theorem}[\textbf{Weak uncertainty principle}]\label{thm_uncertaintyPrinciple}\hfill \\
Let $\mathbb{G} = (\hbox{\input{symbols/ZbwdotSym.tex}}\!\!,\hbox{\input{symbols/DdotSym.tex}}\!\!)$ be a coherent group on an object $\SpaceH$ of a $\dagger$-SMC, and let $N_{\hbox{\input{symbols/ZbwdotSym.tex}}\!\!}$ and $N_{\hbox{\input{symbols/DdotSym.tex}}\!\!}$ be the normalisation factors of the $\dagger$-qSFA $\hbox{\input{symbols/ZbwdotSym.tex}}\!\!$ and $\hbox{\input{symbols/DdotSym.tex}}\!\!$ respectively. Then the points of $\mathbb{G}$, i.e. the position eigenstates $\classicalStates{\hbox{\input{symbols/ZbwdotSym.tex}}\!\!}$, are unbiased for the momentum observable $\hbox{\input{symbols/DdotSym.tex}}\!\!$. Conversely, the points of the dual coherent group $\mathbb{G}^\wedge$, i.e. the momentum eigenstates $\classicalStates{\hbox{\input{symbols/DdotSym.tex}}\!\!}$, are unbiased for the position observable $\hbox{\input{symbols/ZbwdotSym.tex}}\!\!$. If $\mathbb{G}$ is doubly well-pointed and \textbf{doubly finite} (by which we mean it has finitely many points and multiplicative characters), we get the following in $\CPStarCategory{\CategoryC}$ (which we assume to be $R$-probabilistic):
\begin{equation}\label{uncertaintyPrinciplePositionMeas}
	\input{pictures/chapter3/uncertaintyPrinciplePositionMeas.tikz}
\end{equation}
\begin{equation}\label{uncertaintyPrincipleMomentumMeas}
	\input{pictures/chapter3/uncertaintyPrincipleMomentumMeas.tikz}
\end{equation}
Note that the states on the RHS of the two equations above are the states of $R^{\underlyingGroup{\mathbb{G}}}$ and $R^{\underlyingGroup{\mathbb{G}^\wedge}}$ corresponding to elements $g \in \underlyingGroup{\mathbb{G}}$ and $\goodchi^\dagger \in \underlyingGroup{\mathbb{G}}$: they correspond to the normalised CPM states $\frac{1}{N_{\hbox{\input{symbols/ZbwdotSym.tex}}\!\!}}\CPMdoubled{g}$ and $\frac{1}{N_{\hbox{\input{symbols/DdotSym.tex}}\!\!}}\CPMdoubled{\goodchi^\dagger}$ respectively. 
\end{theorem}
\begin{proof}
Essentially Lemma \ref{lem_complementarityUnbiased}, taking into account the normalisation factors of the two $\dagger$-qSFAs and using the fact that they both have enough classical states (so that both $(\SpaceH,\hbox{\input{symbols/ZbwdotSym.tex}}\!\!)$ and $(\SpaceH,\hbox{\input{symbols/DdotSym.tex}}\!\!)$ are classical systems in $\CPStarCategory{\CategoryC}$).
\end{proof}

We have finally come to the end of our quest: we have shown that the point structure $\hbox{\input{symbols/ZbwdotSym.tex}}\!\!$ and group structure $\hbox{\input{symbols/DdotSym.tex}}\!\!$ in a coherent group $\mathbb{G} := (\hbox{\input{symbols/ZbwdotSym.tex}}\!\!,\hbox{\input{symbols/DdotSym.tex}}\!\!)$ possess the main operational and structural features that we would expect from a position/momentum pair. In particular, doubly well-pointed, doubly finite coherent groups  can always be interpreted as defining the position/momentum pair for wavefunctions on a periodic lattice, as made clear by the following summary of the work to this point.
\begin{enumerate}
	\item[(i)] The points $\classicalStates{\hbox{\input{symbols/ZbwdotSym.tex}}\!\!}$ of a finite coherent group are endowed with the group structure $\prod_{d=1}^{D} \integersMod{n_d}$ of some periodic lattice $\Lambda$. They can therefore be interpreted as position eigenstates wavefunctions on the lattice, and $\hbox{\input{symbols/ZbwdotSym.tex}}\!\!$ can be interpreted as the position observable.
	\item[(ii)] The processes from a well-pointed coherent group are entirely determined by their action on the position eigenstates, excluding the existence of additional underlying structure.
	\item[(iii)] The momentum observable $\hbox{\input{symbols/DdotSym.tex}}\!\!$ in a well-pointed coherent group generates the translation symmetry action on the wavefunctions, and its classical states $\classicalStates{\hbox{\input{symbols/DdotSym.tex}}\!\!}$ are the invariant states for that action.
	\item[(iv)] The position observable $\hbox{\input{symbols/ZbwdotSym.tex}}\!\!$ in a doubly well-pointed coherent group generates the boost symmetry action on the wavefunctions, and its classical states $\classicalStates{\hbox{\input{symbols/ZbwdotSym.tex}}\!\!}$ are the invariant states for that action.
	\item[(v)] The position observable $\hbox{\input{symbols/ZbwdotSym.tex}}\!\!$ and the momentum observable $\hbox{\input{symbols/DdotSym.tex}}\!\!$ in a coherent group always satisfy the Weyl Canonical Commutation Relations.
	\item[(vi)] The position and momentum observables are always mutually unbiased. In a doubly well-pointed, doubly finite coherent group we can define both a position measurement and a momentum measurement in the CP* category (assuming the latter is $R$-probabilistic), with outcomes in the classical systems $R^{\underlyingGroup{\mathbb{G}}}$ and $R^{\underlyingGroup{\mathbb{G}^\wedge}}$ respectively. Measuring the position of a (normalised) momentum eigenstate yields the uniform distribution on $R^{\underlyingGroup{\mathbb{G}}}$, proving that momentum eigenstates have completely indeterminate positions. Similarly, measuring the momentum of a (normalised) position eigenstate yields the uniform distribution on $R^{\underlyingGroup{\mathbb{G}^\wedge}}$, proving that position eigenstates have completely indeterminate momentum.      
\end{enumerate}

\newpage
\section{Systems with symmetries}
\label{section_systemsWithSymmetries}

Up until this moment, we have restricted our attention to the concrete example of wavefunctions on periodic lattices, which we have abstractly identified with doubly well-pointed, doubly finite coherent groups. However, almost none of the results we obtained requires well-pointedness or finiteness: the picture they paint is that coherent groups \textit{in general} have many of the structural properties of position/momentum pairs for wavefunctions on symmetric systems. There are some issues arising when the groups are well-pointed but not doubly well-pointed, such as in the case of non-abelian group algebras in $\fdHilbCategory$, which will be covered in future work. There are also some issues with the operational interpretation of non-finite coherent groups, which will be covered in Section \ref{section_compactAbelian}. However, the overall picture as it stands is solid enough, and throughout this Section we will interpret coherent groups as modelling a sensible notion of wavefunctions over symmetry groups.

Just like representations of classical groups yield the notion of physical systems with a classical symmetry, we expect that a suitable notion of representation for coherent groups will yield a suitable notion of physical system with a coherent symmetry. The topic of this section will be the definition and study of said representations.

\subsection{Unitary representations of coherent groups}

In the traditional formalism, a quantum system with periodic lattice symmetry is given by a unitary representation $(U_g)_{g \in G}$ of the translation symmetry group $G = \prod_{d=1}^{D} \integersMod{n_d}$ on a Hilbert space $\SpaceH$, and wavefunctions on a periodic lattice arise as the special case $\SpaceH = \complexs[G]$ with the regular action $g(\ket{h}) := \ket{g \oplus h}$. In the coherent approach, we consider unitary representations of coherent groups instead, and the physical intuition behind this choice goes as follows. Unitary representations of a group can be seen as controlled unitaries, where the controlling system is classical: in the case of periodic lattice symmetries, the controlling system is the periodic lattice itself (or, equivalently, the classical system of distributions over the lattice). In the passage from the classical to the coherent approach, we wish the states of the controlling system to be wavefunctions over the lattice instead of distributions: as our work to this point shows, this is the same as moving from a (finite abelian) group to a (doubly well-pointed, doubly finite) coherent group. 

But what should we take as a unitary representation of a coherent group $\mathbb{G}$ in a generic $\dagger$-SMC? One possible approach to figuring this out starts from the definition of representations of finite groups in $\fdHilbCategory$, and tries to replace all the classical bits and pieces with their coherent counterparts. Recall that a unitary representation $(U_g)_{g \in G}$ of a group\footnote{We denote the group multiplication by $m: G \times G \rightarrow G$, the group unit by $e: 1 \rightarrow G$, and the group inverse by $i: G \rightarrow G$.} $(G,m,e,i)$ on a finite-dimensional Hilbert space $\SpaceH$ is a function $\alpha: \SpaceH \times G \rightarrow \SpaceH$, such that $U_g = \alpha(\emptyArg,g)$ is linear for all $g \in G$ and with $\alpha$ satisfying the following three conditions:
\begin{align}
	\alpha\big(\emptyArg, m(g,h)\big) &= U_{gh} = U_h U_g = \alpha\big(\alpha(\emptyArg,h),g\big) \\
	\alpha(\emptyArg,e) &= U_e = \id{\SpaceH} \\
	\alpha(\emptyArg,i(g)) &= U_{g^{-1}} = U_g^{-1} = U_g^\dagger= \Big(\alpha(\emptyArg,g)\Big)^\dagger
\end{align}
We make the following modifications: instead of the group $G$ and function $\alpha: \SpaceH \times G \rightarrow \SpaceH$, we work with the group algebra $\complexs[G]$ (which we see as a well-pointed coherent group $\mathbb{G} := (\hbox{\input{symbols/ZbwdotSym.tex}}\!\!, \hbox{\input{symbols/DdotSym.tex}}\!\!)$) and a linear map $\alpha: \SpaceH \otimes \complexs[G] \rightarrow \SpaceH$. We replace the three conditions above with the following conditions involving the linear extensions $\!\hbox{\input{symbols/DmultSym.tex}}\!\!$, $\!\hbox{\input{symbols/DunitSym.tex}}\!\!$ and $\hbox{\input{symbols/antipodeSym.tex}}\!$ of the multiplication $m$, unit $e$ and inverse $i$:
\begin{align}
	\alpha\big(\emptyArg, \!\hbox{\input{symbols/DmultSym.tex}}\!\! \circ (\ket{g}\otimes\ket{h})\big) &= \alpha\big(\alpha(\emptyArg,\ket{g}),\ket{h}\big) \label{quantumGroupRepMultInline}\\
	\alpha(\emptyArg,\!\hbox{\input{symbols/DunitSym.tex}}\!\!) &= \id{\SpaceH} \label{quantumGroupRepUnitInline} \\
	\alpha(\emptyArg,\hbox{\input{symbols/antipodeSym.tex}}\!\ket{g}) &= \Big(\alpha(\emptyArg,\ket{g})\Big)^\dagger \label{quantumGroupRepAntipodeInline}
\end{align}
As a final step, we get rid of the evaluation over group elements (a very classical thing to do), and obtain a definition which solely involves the coherent group $\mathbb{G}$.

\begin{definition}\label{def_quantumGroupUnitaryRep}
Let $\mathbb{G} = (\hbox{\input{symbols/ZbwdotSym.tex}}\!\!,\hbox{\input{symbols/DdotSym.tex}}\!\!)$ be a coherent group on an object $\SpaceG$ of a $\dagger$-SMC $\CategoryC$, and let $\SpaceH$ be another object of $\CategoryC$. A \textbf{representation} of $\mathbb{G}$ on $\SpaceH$ is a process $\alpha: \SpaceH \otimes \SpaceG \rightarrow \SpaceH$ satisfying the following two requirements:
\begin{equation}\label{quantumGroupRepMult}
	\input{pictures/chapter3/quantumGroupRepMult.tikz}
\end{equation}
\begin{equation}\label{quantumGroupRepUnit}
	\input{pictures/chapter3/quantumGroupRepUnit.tikz}
\end{equation}
We say that a representation $\alpha$ is \textbf{unitary} if it satisfies the following additional requirement:
\begin{equation}\label{quantumGroupRepAntipode}
	\input{pictures/chapter3/quantumGroupRepAntipode.tikz}
\end{equation}
\end{definition}
Equations \ref{quantumGroupRepMult} and \ref{quantumGroupRepUnit} are straightforward graphical translations of Equations \ref{quantumGroupRepMultInline} and \ref{quantumGroupRepUnitInline}. Equation \ref{quantumGroupRepAntipode} sees, further to Equation \ref{quantumGroupRepAntipodeInline}, the introduction of the symmetric cup for the point structure. This is because in Equation \ref{quantumGroupRepAntipodeInline} we take the adjoint of the representation \textit{already evaluated} at $\ket{g}$, and hence from a compositional perspective we are taking the adjoint of $\ket{g}$ as well: this is achieved by using the symmetric cup, as prescribed by the adjoin condition for $\hbox{\input{symbols/ZbwdotSym.tex}}\!\!$-classical states, which leads to the graphical formulation of Equation \ref{quantumGroupRepAntipode}. 

To make sure that our definition of coherent symmetries is sensible, we need to check two things: (i) that our definition is consistent with the definition of coherent translation symmetry for wavefunctions on a periodic lattice, and (ii) that our definition yields back the classical symmetries by appropriate use of preparations/measurements. Just to clarify, by a (unitary) representation of a group $G$ on an object $\SpaceH$ of a $\dagger$-SMC we mean a family $(U_g)_{g \in G}$ of (unitary) processes $U_g: \SpaceH \rightarrow \SpaceH$ such that $U_{gh} = U_h U_g$ and $U_e = \id{\SpaceH}$. 

\begin{lemma}\label{lem_regularRep}
Let $\mathbb{G}:=(\hbox{\input{symbols/ZbwdotSym.tex}}\!\!,\hbox{\input{symbols/DdotSym.tex}}\!\!)$ be a coherent group on an object $\SpaceG$ of a $\dagger$-SMC. Then $\!\hbox{\input{symbols/DmultSym.tex}}\!\!: \SpaceG \otimes \SpaceG \rightarrow \SpaceG$ is a unitary representation of $\mathbb{G}$ on $\SpaceG$, which we will refer to as the \textbf{regular representation} of $\mathbb{G}$.
\end{lemma} 
\begin{proof}
The first requirement for a representation of $\mathbb{G}$ is a consequence of associative law, the second requirement for a representation of $\mathbb{G}$ is a consequence of unit law, and the additional requirement for a unitary representation of $\mathbb{G}$ is a consequence of Frobenius law and unit law (once the antipode is expanded in terms of symmetric cap/cap of the point/group structure of $\mathbb{G}$):
\begin{equation}\label{quantumGroupRegularRep}
	\resizebox{\textwidth}{!}{\input{pictures/chapter3/quantumGroupRegularRep.tikz}}
\end{equation}
This completes the proof, showing that $\!\hbox{\input{symbols/DmultSym.tex}}\!\!$ is indeed a unitary representation of the coherent group $\mathbb{G}$.
\end{proof}

\begin{theorem}[\textbf{Underlying group reps from coherent group reps}]\label{thm_classicalFromQuantumReps}\hfill\\
Let $\mathbb{G}:=(\hbox{\input{symbols/ZbwdotSym.tex}}\!\!,\hbox{\input{symbols/DdotSym.tex}}\!\!)$ be a coherent group on an object $\SpaceG$ of a $\dagger$-SMC $\CategoryC$, and let $\alpha: \SpaceH \otimes \SpaceG \rightarrow \SpaceH$ be a (unitary) representation of $\mathbb{G}$ on an system $\SpaceH$ in $\CategoryC$. Then the following defines a (unitary) representation $(U_g)_{g \in \underlyingGroup{\mathbb{G}}}$ of $\underlyingGroup{\mathbb{G}}$ on $\SpaceH$:
\begin{equation}\label{classicalFromQuantumReps}
	\input{pictures/chapter3/classicalFromQuantumReps.tikz}
\end{equation}
If $\CPStarCategory{\CategoryC}$ is $R$-probabilistic and $\mathbb{G}$ is well-pointed, then we can write the representation of the underlying group $\underlyingGroup{\mathbb{G}}$ as the following classically controlled process:
\begin{equation}\label{classicalFromQuantumRepsCPStar}
	\input{pictures/chapter3/classicalFromQuantumRepsCPStar.tikz}
\end{equation}
If $\alpha$ is a unitary representation, then the process above is in fact a controlled unitary.
\end{theorem}
\begin{proof}
The first claim follows straightforwardly from the requirements satisfied by a (unitary) representation $\alpha$ for $\mathbb{G}$ and from the definition of the underlying group $\underlyingGroup{\mathbb{G}} := (\classicalStates{\hbox{\input{symbols/ZbwdotSym.tex}}\!\!},\!\hbox{\input{symbols/DmultSym.tex}}\!\!,\!\hbox{\input{symbols/DunitSym.tex}}\!\!)$:
\begin{equation}\label{classicalFromQuantumRepsProof1}
	\input{pictures/chapter3/classicalFromQuantumRepsProof1.tikz}
\end{equation}
\begin{equation}\label{classicalFromQuantumRepsProof2}
	\input{pictures/chapter3/classicalFromQuantumRepsProof2.tikz}
\end{equation}
\begin{equation}\label{classicalFromQuantumRepsProof3}
	\input{pictures/chapter3/classicalFromQuantumRepsProof3.tikz}
\end{equation}
Recasting the representation in the form of Diagram \ref{classicalFromQuantumRepsCPStar} is also completely straightforward, but statement that it yields a controlled unitary when $\alpha$ is a unitary representation deserves graphical proof:
\begin{equation}\nonumber
	\resizebox{\textwidth}{!}{\input{pictures/chapter3/classicalFromQuantumRepsCPStarProof1.tikz}}
\end{equation}
\begin{equation}\nonumber
	\resizebox{\textwidth}{!}{\input{pictures/chapter3/classicalFromQuantumRepsCPStarProof2.tikz}}
\end{equation}
\begin{equation}\label{classicalFromQuantumRepsCPStarProof}
	\resizebox{\textwidth}{!}{\input{pictures/chapter3/classicalFromQuantumRepsCPStarProof3.tikz}}
\end{equation}
The other half of the proof of controlled unitarity goes along the exact same lines. 
\end{proof}

\subsection{The category of representations of a coherent group}
\label{subsection_EMCategory}

Our definition of a coherent group representation is a very concrete one, which we extrapolated directly from the definition of classical group representations. Instead, we would prefer a more categorical characterisation of coherent group representations. We begin by looking at the following commuting diagrams, involving a representation $\alpha$ of a coherent group $\mathbb{G}:=(\hbox{\input{symbols/ZbwdotSym.tex}}\!\!,\hbox{\input{symbols/DdotSym.tex}}\!\!)$ on a system $\SpaceH$ of a generic $\dagger$-SMC $\CategoryC$:
\begin{equation}\label{quantumGroupRepAlgebraDiagram}
	\resizebox{\textwidth}{!}{\input{pictures/chapter3/quantumGroupRepAlgebraDiagram.tikz}}
\end{equation}
To a category theorist, these two diagrams scream \inlineQuote{algebra of a monad}. This would indeed be a nice categorical definition, but which monad are we talking about? And what is the physical meaning of all of this?

When talking about a system $\SpaceH$ with periodic lattice symmetry from a coherent perspective, we are implicitly considering two systems: the system $\SpaceH$ itself, and the system $\SpaceG$ of wavefunctions on the periodic lattice that control the symmetry (i.e. we are thinking of a specific coherent group $\mathbb{G}$ on it). The concrete process of taking a system $\SpaceH$ and considering it jointly with the coherent controlling system $\SpaceG$ can be turned into a functor $T: \CategoryC \rightarrow \CategoryC$ (i.e. it can be made properly categorical) as follows:
\begin{align}
	T[\SpaceH] &:= \SpaceH \otimes \SpaceG \nonumber \\
	T[f: \SpaceH \rightarrow \SpaceH'] &:= f \otimes \id{\SpaceG} : \SpaceH \otimes \SpaceG \rightarrow \SpaceH' \otimes \SpaceG \label{eqns_quantumGroupMonadFunctor}
\end{align}
The functor $T$ can be made into a  \textit{monad} by considering the following natural transformations, known as the \textit{multiplication} $\mu: T^2 \rightarrow T$ and \textit{unit} $\eta: \id{\CategoryC} \rightarrow T$:
\begin{align}
	\mu_\SpaceH &:= \id{\SpaceH} \otimes \!\hbox{\input{symbols/DmultSym.tex}}\!\! : T[T[\SpaceH]] \rightarrow T[\SpaceH] \nonumber \\
	\eta_\SpaceH &:= \id{\SpaceH} \otimes \!\hbox{\input{symbols/DunitSym.tex}}\!\!: \SpaceH \rightarrow T[\SpaceH] \label{eqns_quantumGroupMonadMultUnit}
\end{align}
The fact that $(T,\mu,\eta)$ is a monad on $\CategoryC$ (in fact it is a \textit{commutative} monad~\cite{Kock1972}) is a direct consequence of the fact that $(\SpaceG,\!\hbox{\input{symbols/DmultSym.tex}}\!\!,\!\hbox{\input{symbols/DunitSym.tex}}\!\!)$ is an internal monoid in $\CategoryC$, and is summarised by the following graphical equations:
\begin{equation}\label{quantumGroupRepMonadDiagram}
	\resizebox{\textwidth}{!}{\input{pictures/chapter3/quantumGroupRepMonadDiagram.tikz}}
\end{equation}

Monads are the category-theoretic way of talking about abstract operations in algebra\footnote{Monads also arise in the context of functional programming \cite{Moggi1991} (albeit with a slightly different interpretation), in modal logic, and in a surprising variety of fields of mathematics.}. We can think of a monad $T$ as embodying the general principles of an algebraic structure, and of its \textit{algebras} $\alpha: T[A] \rightarrow A$ as the concrete realisations of said structure. For example, the \textit{group monad} on $\SetCategory$ sends a set $X$ to the underlying set $F[X]$ of the free group on $X$, and its algebras $\alpha: F[X] \rightarrow X$ are all the possible group structures on $X$. Similar constructions hold for a variety of algebraic theories. 

The monad $(T,\mu,\eta)$ we constructed in Equations \ref{eqns_quantumGroupMonadFunctor} and \ref{eqns_quantumGroupMonadMultUnit} turns out to embody the general structure of representations for a fixed coherent group $\mathbb{G}$ on an object. If $\SpaceH$ is an object of $\CategoryC$, an \textbf{Eilenberg-Moore algebra} (henceforth, \textbf{EM algebra}) of the monad $(T,\mu,\eta)$ is a map $\alpha: T[\SpaceH] \rightarrow \SpaceH$ such that $\alpha \circ T[\alpha] = \alpha \circ \mu_{\SpaceH}$ and $\alpha \circ \eta_{\SpaceH} = \id{\SpaceH}$. Expanding the definitions of $T$, $\mu$ and $\eta$, we see that the defining equations of an EM algebras for $(T,\mu,\eta)$ are nothing but the two commutative diagrams depicted in \ref{quantumGroupRepAlgebraDiagram}: hence the EM algebras for the monad above, which we will henceforth denote by $\CoherentGroupRepMonad{\mathbb{G}}$, are exactly the representations of the coherent group $\mathbb{G} := (\hbox{\input{symbols/ZbwdotSym.tex}}\!\!, \hbox{\input{symbols/DdotSym.tex}}\!\!)$.

Eilenberg-Moore algebras form a category, the \textbf{Eilenberg-Moore category}, which turns out to be really important for our treatment of coherent symmetries:
\begin{itemize}
	\item the objects of the Eilenberg-Moore category are the EM algebras for the monad;
	\item the morphisms $f: \alpha \rightarrow \beta$ in the Eilenberg-Moore category, where $\alpha: \SpaceH \otimes \SpaceG \rightarrow \SpaceH$ and $\beta: \SpaceH' \otimes \SpaceG \rightarrow \SpaceH'$ are EM algebras, are exactly the morphisms $f: \SpaceH \rightarrow \SpaceH'$ in $\CategoryC$ which make the following square commute:
	\begin{equation}\label{EMmorphismCommutingSquare}
		\input{pictures/chapter3/EMmorphismCommutingSquare.tikz}
	\end{equation}	
	\item composition and identities are inherited from $\CategoryC$.
\end{itemize}
The commuting square \ref{EMmorphismCommutingSquare} appearing in the definition of EM morphisms can equivalently be written as the following diagrammatic equation:
\begin{equation}\label{EMmorphism}
	\input{pictures/chapter3/EMmorphism.tikz}
\end{equation}	
But what is its physical meaning? Looking at Equation \ref{EMmorphism} it is pretty clear that EM morphisms are \textbf{equivariant maps} (if thinking of systems with symmetries), or \textbf{intertwiners} (if thinking of representations): they are the processes in the base theory $\CategoryC$ that respect the coherent symmetries that systems have been endowed with. Because of this, the Eilenberg-Moore category is the natural environment to talk about systems with coherent symmetry given by a fixed coherent group $\mathbb{G}$: since the latter were defined as the \textit{representations} of $\mathbb{G}$, we will refer to the Eilenberg-Moore category as $\RepCategory{\mathbb{G}}$.

Just like any other Eilenberg-Moore category, $\RepCategory{\mathbb{G}}$ comes with a forgetful functor $\RepCategory{\mathbb{G}} \rightarrow \CategoryC$ sending a representation (i.e. a symmetric system) $\alpha: \SpaceH \otimes \SpaceG \rightarrow \SpaceH$ to the \textbf{underlying system} (i.e. a system without symmetry) $\SpaceH$, and acting as the identity on processes. The forgetful functor comes with a left adjoint $\CategoryC \rightarrow \RepCategory{\mathbb{G}}$, sending a system $\SpaceH$ to the \textbf{free representation} $\id{\SpaceH} \otimes \!\hbox{\input{symbols/DmultSym.tex}}\!\! : (\SpaceH \otimes \SpaceG) \otimes \SpaceG \rightarrow \SpaceH \otimes \SpaceG$, and acting as the identity on morphisms. 

Because in this work we are concerned with \textit{unitary} representations of a coherent group, we will restrict our attention to the full sub-category of $\RepCategory{\mathbb{G}}$ given specified by the unitary representations: we will refer to this subcategory as the \textbf{unitary Eilenberg-Moore category}, and denote it by $\UnitaryRepCategory{\mathbb{G}}$.

\subsection{Symmetry-observable duality}

In the first part of this section, we have defined a system with coherent symmetry given by a coherent group $\mathbb{G}$ to be a representation $\alpha$ of $\mathbb{G}$. The representation $\alpha: \SpaceH \otimes \SpaceG \rightarrow \SpaceH$ generalises the coherent multiplication $\!\hbox{\input{symbols/DmultSym.tex}}\!\!: \SpaceG \otimes \SpaceG \rightarrow \SpaceG$, which endows the space of wavefunctions over the classical group $\underlyingGroup{\mathbb{G}}$ with the symmetry corresponding to the regular representation. We know that the regular representation $\!\hbox{\input{symbols/DmultSym.tex}}\!\!$ is tightly related to the momentum observable $\hbox{\input{symbols/DdotSym.tex}}\!\!$ of the coherent group $\mathbb{G}$, so a natural question arises: does every representation $\alpha: \SpaceH \otimes \SpaceG \rightarrow \SpaceH$ always yield some sort of momentum measurement for the underlying system $\SpaceH$? The answer turns out to be \inlineQuote{yes}, but first we need to understand what the coherent counterpart of a non-demolition measurement looks like.

\begin{definition}
Let $\CPStarCategory{\CategoryC}$ be an $R$-probabilistic CP* category, and let $\hbox{\input{symbols/ZbwdotSym.tex}}\!\!$ be a $\dagger$-SCFA with enough classical states, and finitely many so (i.e. $(\SpaceG,\hbox{\input{symbols/ZbwdotSym.tex}}\!\!)$ is a classical system). Then a \textbf{non-demolition measurement} on a system $\SpaceH$ with outcomes in $(\SpaceG,\hbox{\input{symbols/ZbwdotSym.tex}}\!\!)$ is a process $m$ in the following form:
\begin{equation}\label{nonDemolMeas}
	\input{pictures/chapter3/nonDemolMeas.tikz}
\end{equation}	
which satisfies the following three requirements:
\begin{equation}\label{nonDemolMeas1}
	\input{pictures/chapter3/nonDemolMeas1.tikz}
\end{equation}	
\begin{equation}\label{nonDemolMeas2}
	\input{pictures/chapter3/nonDemolMeas2.tikz}
\end{equation}	
\begin{equation}\label{nonDemolMeas3}
	\input{pictures/chapter3/nonDemolMeas3.tikz}
\end{equation}	
The associated \textbf{demolition measurement} takes the following form:
\begin{equation}\label{demolMeas}
	\input{pictures/chapter3/demolMeas.tikz}
\end{equation}	
\end{definition}

It's not hard to show that non-demolition measurements always take the familiar form of classically-indexed families of projectors.
\begin{lemma}\label{lem_NDmeasIsOrthogonalFamily}
Let $\CPStarCategory{\CategoryC}$ be an $R$-probabilistic CP* category, let $\hbox{\input{symbols/ZbwdotSym.tex}}\!\!$ be a $\dagger$-SCFA on an object $\SpaceG$ in $\CategoryC$ having enough classical points, and assume that $\classicalStates{\hbox{\input{symbols/ZbwdotSym.tex}}\!\!}$ is finite. Then the non-demolition measurements $m: \SpaceH \rightarrow \SpaceH \otimes (\SpaceG,\hbox{\input{symbols/ZbwdotSym.tex}}\!\!)$ are exactly those taking the following form:
\begin{equation}\label{nonDemolMeasExplicit}
	\input{pictures/chapter3/nonDemolMeasExplicit.tikz}
\end{equation}	
where $(P_x)_{x \in \classicalStates{\hbox{\input{symbols/ZbwdotSym.tex}}\!\!}}$ is a complete family of orthogonal projectors in $\CategoryC$ (i.e. we have $P_x P_x = P_x$, $P_x^\dagger = P_x$, $P_x P_y = 0$ for $y \neq x$ and $\sum_{x \in \classicalStates{\hbox{\input{symbols/ZbwdotSym.tex}}\!\!}} \CPMdoubled{P_x}$ is normalised\footnote{When $\CategoryC$ possesses linear structure compatible with that of $\CPMCategory{\CategoryC}$, this is the same as the familiar completeness requirement $\sum_{x} P_x = \id{\SpaceH}$.}).
\end{lemma}
\begin{proof}
We begin by defining putative projectors $P_x$ from a non-demolition measurement, and show that they indeed satisfy the requirements for a complete orthogonal family of projectors. We define $P_x$ by evaluating against an individual classical outcome $x \in \classicalStates{\hbox{\input{symbols/ZbwdotSym.tex}}\!\!}$ for the non-demolition measurement $m$:
\begin{equation}\label{nonDemolMeasExplicitProof1}
	\resizebox{\textwidth}{!}{\input{pictures/chapter3/nonDemolMeasExplicitProof1.tikz}}
\end{equation}	
The idempotence and orthogonality of projectors follow from the idempotence requirement for non-demolition measurements:
\begin{equation}\label{nonDemolMeasExplicitProof2}
	\resizebox{\textwidth}{!}{\input{pictures/chapter3/nonDemolMeasExplicitProof2.tikz}}
\end{equation}	
The completeness of the family of projectors follows from the normalisation requirement for non-demolition measurements:
\begin{equation}\label{nonDemolMeasExplicitProof3}
	\resizebox{\textwidth}{!}{\input{pictures/chapter3/nonDemolMeasExplicitProof3.tikz}}
\end{equation}	
The self-adjointness of the projectors follows from the self-adjointness requirement for non-demolition measurements:
\begin{equation}\label{nonDemolMeasExplicitProof4}
	\resizebox{\textwidth}{!}{\input{pictures/chapter3/nonDemolMeasExplicitProof4.tikz}}
\end{equation}	
The three equations above can similarly be used to show that the map on the RHS of Equation \ref{nonDemolMeas} defines a non-demolition measurement. 
\end{proof}

In order to figure out what the abstract, coherent counterpart of non-demolition measurements should be, we rewrite the three requirements \ref{nonDemolMeas1}, \ref{nonDemolMeas2} and \ref{nonDemolMeas3} so that the only non-pure maps appearing are preparation in $\hbox{\input{symbols/ZbwdotSym.tex}}\!\!$, measurements in $\hbox{\input{symbols/ZbwdotSym.tex}}\!\!$, and discarding maps:
\begin{equation}\label{nonDemolMeas1alt}
	\input{pictures/chapter3/nonDemolMeas1alt.tikz}
\end{equation}	
\begin{equation}\label{nonDemolMeas2alt}
	\input{pictures/chapter3/nonDemolMeas2alt.tikz}
\end{equation}	
\begin{equation}\label{nonDemolMeas3alt}
	\input{pictures/chapter3/nonDemolMeas3alt.tikz}
\end{equation}	
The equations above inspire the following definition of coherent non-demolition measurements.
\begin{definition}
Let $\CategoryC$ be a $\dagger$-SMC, and $\hbox{\input{symbols/ZbwdotSym.tex}}\!\!$ be a $\dagger$-qSFA on an object $\SpaceG$ of $\CategoryC$, with normalisation factor $N_{\hbox{\input{symbols/ZbwdotSym.tex}}\!\!}$. A $\hbox{\input{symbols/ZbwdotSym.tex}}\!\!$-valued \textbf{coherent non-demolition measurement} on an object $\SpaceH$ of $\CategoryC$ is a process $m: \SpaceH \rightarrow \SpaceH \otimes \SpaceG$ which satisfies the following three requirements:
\begin{equation}\label{nonDemolMeas1coherent}
	\input{pictures/chapter3/nonDemolMeas1coherent.tikz}
\end{equation}	
\begin{equation}\label{nonDemolMeas2coherent}
	\input{pictures/chapter3/nonDemolMeas2coherent.tikz}
\end{equation}	
\begin{equation}\label{nonDemolMeas3coherent}
	\input{pictures/chapter3/nonDemolMeas3coherent.tikz}
\end{equation}	
\end{definition}
\begin{remark}
Note that the isometry requirement involves the normalisation factor $N_{\hbox{\input{symbols/ZbwdotSym.tex}}\!\!}$ for the $\dagger$-qSFA $\hbox{\input{symbols/ZbwdotSym.tex}}\!\!$ on the RHS. This is because in the passage from Equation \ref{nonDemolMeas2} to Equation \ref{nonDemolMeas2coherent} we removed the measurement in $\hbox{\input{symbols/ZbwdotSym.tex}}\!\!$ from the LHS: when $\hbox{\input{symbols/ZbwdotSym.tex}}\!\!$ is a generic $\dagger$-qSCFA, rather than a $\dagger$-SCFA, this means that we also removed the scalar $\frac{1}{N_{\hbox{\input{symbols/ZbwdotSym.tex}}\!\!}}$ that comes with the measurement, and hence the RHS of Equations \ref{nonDemolMeas2alt} and \ref{nonDemolMeas2coherent} will need to carry an extra $N_{\hbox{\input{symbols/ZbwdotSym.tex}}\!\!}$ factor.
\end{remark}
\begin{lemma}\label{lem_coherentNDmeasYieldsNDmeas}
Let $\CPStarCategory{\CategoryC}$ be an $R$-probabilistic CP* category, and let $\hbox{\input{symbols/ZbwdotSym.tex}}\!\!$  be a $\dagger$-SCFA on an object $\SpaceG$ of $\CategoryC$ having enough classical states, and finitely many of them (so that $(\SpaceG,\hbox{\input{symbols/ZbwdotSym.tex}}\!\!)$ is a classical system). If $m: \SpaceH \rightarrow \SpaceH \otimes \SpaceG$ is a $\hbox{\input{symbols/ZbwdotSym.tex}}\!\!$-valued coherent non-demolition measurement on an object $\SpaceH$ of $\CategoryC$, then the following is a non-demolition measurement in $\CPStarCategory{\CategoryC}$:
\begin{equation}\label{nonDemolMeasInLemma}
	\input{pictures/chapter3/nonDemolMeas.tikz}
\end{equation}	
\end{lemma}
\begin{proof}
The proof is straightforward: (i) Equation \ref{nonDemolMeas1coherent} for the coherent non-demolition measurement implies Equation \ref{nonDemolMeas1alt}, which is equivalent to Equation \ref{nonDemolMeas1} for the non-demolition measurement; (ii) Equation \ref{nonDemolMeas2coherent} for the coherent non-demolition measurement implies Equation \ref{nonDemolMeas2alt}, which is equivalent to Equation \ref{nonDemolMeas2} for the non-demolition measurement; (iii) Equation \ref{nonDemolMeas3coherent} for the coherent non-demolition measurement implies Equation \ref{nonDemolMeas3alt}, which is equivalent to Equation \ref{nonDemolMeas3} for the non-demolition measurement.
\end{proof}
\begin{remark}
In Lemmas \ref{lem_NDmeasIsOrthogonalFamily} and \ref{lem_coherentNDmeasYieldsNDmeas} we have restricted our attention to $\dagger$-SCFAs, instead of considering more general $\dagger$-qSCFAs. This is merely for reasons of clarity, and the results hold just as well when $\hbox{\input{symbols/ZbwdotSym.tex}}\!\!$ is a $\dagger$-qSCFA (as long as preparations/measurements are appropriately normalised).
\end{remark}

Having defined coherent non-demolition measurements, we are in a position to answer our original question on the momentum measurement for a symmetric system $\alpha: \SpaceH \otimes \SpaceG \rightarrow \SpaceH$ in $\UnitaryRepCategory{\mathbb{G}}$. We wish to establish that $\alpha^\dagger$ yields the coherent momentum measurement\footnote{Henceforth, we will write \textit{coherent measurement} for \textit{coherent non-demolition measurement}, since demolition measurements are, almost by definition, not coherent (expect in trivial cases).} of a system $\alpha$ with periodic lattice symmetry (i.e. a unitary representation of a doubly well-pointed, doubly finite coherent group). We will do so by providing the following compelling evidence:
\begin{enumerate}
	\item[(i)] we will show (for all coherent groups) that $\alpha^\dagger$ is a $\hbox{\input{symbols/DdotSym.tex}}\!\!$-valued coherent measurement on the system $\SpaceH$\footnote{From this we can conclude (in the case of doubly well-pointed, doubly finite coherent groups) that the non-demolition and demolition measurements associated to $\alpha^\dagger$ have classical outcomes in the set $\classicalStates{\hbox{\input{symbols/DdotSym.tex}}\!\!}$ of possible momenta allowed by the lattice controlling the symmetry.};

	\item[(ii)] we will show (for all coherent groups) that any invariant for the symmetry must commute with $\alpha^\dagger$\footnote{The momentum measurement is indeed expected to have this property for the translation symmetry, just like it had in the case of wavefunctions on the periodic lattice.}; furthermore, we will show (for abelian coherent groups) that $\alpha^\dagger$ is itself an invariant for the symmetry \footnote{And hence so do the associated non-demolition and demolition measurements.};

	\item[(iii)] we will show (for all coherent groups) that states $\psi_{\goodchi}$ of $\SpaceH$ associated with a definite outcome $\goodchi^\dagger \in \classicalStates{\hbox{\input{symbols/DdotSym.tex}}\!\!}$ under $\alpha^\dagger$ transform as expected under the translation symmetry (i.e. translation by $g \in \classicalStates{\hbox{\input{symbols/ZbwdotSym.tex}}\!\!}$ sends $\psi_{\goodchi}$ to itself times the phase $\goodchi \circ g$).
\end{enumerate}
We will colloquially refer to $\alpha^\dagger$ as the \textbf{coherent momentum observable} on a system $\alpha$ with periodic lattice symmetry, generalising the coherent momentum observable $\hbox{\input{symbols/DdotSym.tex}}\!\!$ from the case of wavefunctions on the lattice. We exemplified things in the case of periodic lattices, in which case momentum measurement is appropriate, but when talking about a generic coherent group we will refer to $\alpha^\dagger$ as the \textbf{invariant} associated with the symmetry~$\alpha$.

The fact that the invariant for a symmetric system $\alpha$ can be obtained simply by considering the adjoint $\alpha^\dagger$ is unique to the coherent approach. Indeed, the non-demolition momentum measurement cannot be obtained from the representation of the classical lattice translation group, as they involve measurement/preparation in two complementary observables. In the coherent approach, on the other hand, all the information is preserved, and the distinction between a coherent symmetry $\alpha$ and its invariant $\alpha^\dagger$ is a mere matter of perspective.

\begin{theorem}[\textbf{Symmetry-observable duality}]\label{thm_symmetryObservableDuality}\hfill\\
Let $\mathbb{G} := (\hbox{\input{symbols/ZbwdotSym.tex}}\!\!,\hbox{\input{symbols/DdotSym.tex}}\!\!)$ be a coherent group on an object $\SpaceG$ of a $\dagger$-SMC $\CategoryC$, and let $\alpha: \SpaceH \otimes \SpaceG \rightarrow \SpaceH$ be a unitary representation of $\mathbb{G}$. Then $\alpha^\dagger: \SpaceH \rightarrow \SpaceH \otimes \SpaceG$ is a $\hbox{\input{symbols/DdotSym.tex}}\!\!$-valued coherent measurement on $\SpaceH$.
\end{theorem}
\begin{proof}
We begin by showing that the idempotence condition holds. The first equality below is by unitarity of the representation $\alpha$, while the second equality expands the antipode in its definition and uses the laws of Frobenius algebras: 
\begin{equation}\label{symmetryObservableDualityProof1}
	\resizebox{\textwidth}{!}{\input{pictures/chapter3/symmetryObservableDualityProof1.tikz}}
\end{equation}	
The rightmost diagram above is equal to the leftmost below by the multiplicativity condition for representation $\alpha$. The first equality below is again by unitarity of $\alpha$, and the second is again by definition of the antipode and the laws of Frobenius algebras:  
\begin{equation}\label{symmetryObservableDualityProof2}
	\resizebox{\textwidth}{!}{\input{pictures/chapter3/symmetryObservableDualityProof2.tikz}}
\end{equation}	
This completes the proof of the idempotence condition for $\alpha^\dagger$. We now move on to prove the isometry condition. The first equality below follows from the laws of Frobenius algebras, while the second equality below is by unitarity of the representation $\alpha$:
\begin{equation}\label{symmetryObservableDualityProof3}
	\resizebox{\textwidth}{!}{\input{pictures/chapter3/symmetryObservableDualityProof3.tikz}}
\end{equation}	
The rightmost diagram above is equal to the leftmost diagram below by the multiplicativity condition for representation $\alpha$. The first equality below follows from Hopf law, and the second equality below follows from the unit condition for representation $\alpha$ (together with the fact that $\!\hbox{\input{symbols/ZbwunitSym.tex}}\!\!$ is a $\hbox{\input{symbols/DdotSym.tex}}\!\!$-classical state, with squared norm $N_{\hbox{\input{symbols/DdotSym.tex}}\!\!}$):
\begin{equation}\label{symmetryObservableDualityProof4}
	\resizebox{\textwidth}{!}{\input{pictures/chapter3/symmetryObservableDualityProof4.tikz}}
\end{equation}	
This completes the proof of the isometry condition. Finally, we prove the self-adjointness condition. The first equality below follows from the laws of Frobenius algebras, the second equality by definition of the antipode and the third equality by unitarity of the representation $\alpha$:
\begin{equation}\label{symmetryObservableDualityProof5}
	\resizebox{\textwidth}{!}{\input{pictures/chapter3/symmetryObservableDualityProof5.tikz}}
\end{equation}	
Finally, the antipode is an involution, and hence the rightmost diagram above is nothing but $\alpha$ itself. This completes the proof of the self-adjointness condition.
\end{proof}

When $\CPStarCategory{\CategoryC}$ is $R$-probabilistic and $\mathbb{G}$ is doubly well-pointed and doubly finite, we can write the non-demolition and demolition measurements associated to $\alpha^\dagger$ as follows, with classical outputs in the set $\classicalStates{\hbox{\input{symbols/DdotSym.tex}}\!\!}$:
\begin{equation}\label{symmetryObservableDualityMeasts}
	\input{pictures/chapter3/symmetryObservableDualityMeasts.tikz}
\end{equation}

\begin{theorem}[\textbf{Symmetry-invariant duality}]\label{thm_symmetryInvariantDuality}\hfill\\
Let $\mathbb{G} := (\hbox{\input{symbols/ZbwdotSym.tex}}\!\!,\hbox{\input{symbols/DdotSym.tex}}\!\!)$ be a coherent group on an object $\SpaceG$ of a $\dagger$-SMC $\CategoryC$, and let $\alpha: \SpaceH \otimes \SpaceG \rightarrow \SpaceH$ be a unitary representation of $\mathbb{G}$. Any invariant $\Phi$ of $\alpha$ (and in particular every intertwiner $\Phi: \alpha \rightarrow \alpha$) must commute with~$\alpha^\dagger$: 
\begin{equation}\label{symmetryMaximalInvariant}
	\input{pictures/chapter3/symmetryMaximalInvariant.tikz}
\end{equation}	
Furthermore, if $\mathbb{G}$ is abelian, then $\alpha^\dagger: \SpaceH \rightarrow \SpaceH \otimes \SpaceG$ is itself an invariant for the symmetry $\alpha$:
\begin{equation}\label{symmetryInvariant}
	\input{pictures/chapter3/symmetryInvariant.tikz}
\end{equation}	
\end{theorem}
\begin{proof}
We begin by proving that any invariant $\Phi$ for $\alpha$ must commute with $\alpha^\dagger$. The first and last equalities below are by unitarity of the representation $\alpha$, while the central equality uses the hypothesis of invariance of $\Phi$:
\begin{equation}\label{symmetryMaximalInvariantProof}
	\resizebox{\textwidth}{!}{\input{pictures/chapter3/symmetryMaximalInvariantProof.tikz}}
\end{equation}	
This concludes the proof that any $\Phi$ invariant for $\alpha$ must commute with $\alpha^\dagger$. We then prove that, if $\mathbb{G}$ is an abelian coherent group (i.e. if $\!\hbox{\input{symbols/DmultSym.tex}}\!\!$ is commutative), then $\alpha^\dagger$ itself must be an invariant for the symmetry $\alpha$. The first equality below is by unitarity of the representation $\alpha$, while the second equality uses the multiplicativity condition: 
\begin{equation}\label{symmetryInvariantProof1}
	\resizebox{\textwidth}{!}{\input{pictures/chapter3/symmetryInvariantProof1.tikz}}
\end{equation}	
We use commutativity of $\!\hbox{\input{symbols/DmultSym.tex}}\!\!$ to obtain the leftmost diagram below from the topmost above. The first equality below is again by the multiplicativity condition, and the second equality is again by unitarity:
\begin{equation}\label{symmetryInvariantProof2}
	\resizebox{\textwidth}{!}{\input{pictures/chapter3/symmetryInvariantProof2.tikz}}
\end{equation}	
This concludes the proof that, when $\mathbb{G}$ is commutative, $\alpha^\dagger$ is an invariant for the symmetry $\alpha$. 
\end{proof}

The requirement that the coherent group is abelian for $\alpha^\dagger$ to be an invariant for $\alpha$ is a necessary one. Indeed, the multiplication $\!\hbox{\input{symbols/DmultSym.tex}}\!\!$ is itself a unitary representation, and we have the following consequence of Equation \ref{symmetryInvariant} holding for all $\alpha$:
\begin{equation}\label{symmetryInvariantWavefunctions}
	\input{pictures/chapter3/symmetryInvariantWavefunctions.tikz}
\end{equation}	
Using the equation above, we can prove that $\!\hbox{\input{symbols/DmultSym.tex}}\!\!$ is in fact commutative:
\begin{equation}\label{symmetryInvariantWavefunctionsProof}
	\resizebox{\textwidth}{!}{\input{pictures/chapter3/symmetryInvariantWavefunctionsProof.tikz}}
\end{equation}	

\begin{theorem}[\textbf{Invariant states}]\label{thm_invariantStates}\hfill\\
Let $\mathbb{G} := (\hbox{\input{symbols/ZbwdotSym.tex}}\!\!,\hbox{\input{symbols/DdotSym.tex}}\!\!)$ be a coherent group on an object $\SpaceG$ of a $\dagger$-SMC $\CategoryC$, and let $\alpha: \SpaceH \otimes \SpaceG \rightarrow \SpaceH$ be a unitary representation of $\mathbb{G}$. Let $\Psi$ be a state of $\SpaceH$ associated with a definite outcome $\goodchi^\dagger \in \classicalStates{\hbox{\input{symbols/DdotSym.tex}}\!\!}$ of the coherent momentum measurement $\alpha^\dagger$, i.e. one such that $\alpha^\dagger \Psi$ separates as $\Psi' \otimes \goodchi^\dagger$ for some state $\Psi'$: 
\begin{equation}\label{invariantState}
	\input{pictures/chapter3/invariantState.tikz}
\end{equation}	
Then we must necessarily have $\Psi = \Psi'$. Furthermore, $\Psi$ transforms as follows under the symmetry action $\alpha$:
\begin{equation}\label{invariantStateTransformation}
	\input{pictures/chapter3/invariantStateTransformation.tikz}
\end{equation}	
In terms of the classical action $\big(U_g := \alpha \circ (\emptyArg \otimes g)\big)_{g \in \underlyingGroup{\mathbb{G}}}$ of the underlying group, we have $U_g \Psi = \goodchi(g) \Psi$. Conversely, any state $\Psi$ satisfying the transformation law of Equation \ref{invariantStateTransformation} is associated to a definite outcome $\goodchi^\dagger$ of $\alpha^\dagger$, as in Equation \ref{invariantState}.
\end{theorem}
\begin{proof}
We begin by proving that $\Psi = \Psi'$. The first equality below is by the isometry condition for $\alpha^\dagger$, and the second equality by unitarity of $\alpha$:
\begin{equation}\label{invariantStateEqualityProof1}
	\resizebox{\textwidth}{!}{\input{pictures/chapter3/invariantStateEqualityProof1.tikz}}
\end{equation}	
The leftmost diagram below is obtained from the rightmost diagram above by the idempotence condition for $\alpha^\dagger$. The first equality below is by Equation \ref{invariantState}, the second equality is by Hopf's law, and the third equality follows because $\goodchi(\!\hbox{\input{symbols/DunitSym.tex}}\!\!) = 1$ and $\!\hbox{\input{symbols/ZbwcounitSym.tex}}\!\! \circ \!\hbox{\input{symbols/ZbwunitSym.tex}}\!\! = N_{\hbox{\input{symbols/DdotSym.tex}}\!\!}$:
\begin{equation}\label{invariantStateEqualityProof2}
	\resizebox{\textwidth}{!}{\input{pictures/chapter3/invariantStateEqualityProof2.tikz}}
\end{equation}	
This completes the proof that $\Psi = \Psi'$. To prove the transformation law of Equation \ref{invariantStateTransformation}, we use Equation \ref{invariantState} together with the fact that $\Psi = \Psi'$ and unitarity of the representation $\alpha$:
\begin{equation}\label{invariantStateTransformationProof}
	\resizebox{\textwidth}{!}{\input{pictures/chapter3/invariantStateTransformationProof.tikz}}
\end{equation}
The proof of the converse statement goes along the exact same lines. 
\end{proof}

\subsection{Stone's theorem revisited}
\label{subsection_StoneTheoremRevisited}

The standard result relating momentum to the translation symmetry of 1-dimensional wavefunctions is known as \textbf{Stone's theorem on 1-parameter unitary groups}: it states that the strongly continuous group homomorphisms $x \mapsto U_x$ (the 1-parameter unitary groups) from the additive reals $(\reals,+,0)$ to the unitary operators $\UnitaryOps{\SpaceH}$ over some separable Hilbert space $\SpaceH$ are exactly those in the form $U_x = \exp [i x \textbf{p}]$ for some (not necessarily bounded) self-adjoint operator $\textbf{p}$ on $\SpaceH$ (the traditional momentum observable). In Theorems \ref{thm_symmetryObservableDuality} and \ref{thm_symmetryInvariantDuality}, we saw that the relationship between translation and the momentum observable on periodic lattices is given, in our framework, by adjunction. Throughout this Subsection, we will work in the standard QM formalisms: our aim will be to recast Stone's Theorem for 1-dimensional wavefunctions in a form explicitly compatible with our formulation, i.e. one not involving an infinitesimal generator $\textbf{p}$.

\begin{theorem}[\textbf{Stone's Theorem} \cite{Stone1932}]\label{thm_StoneThmOneParamUGroups}\hfill\\
	Let $x \mapsto U_x$ be a strongly continuous group homomorphism $\reals \rightarrow \UnitaryOps{\SpaceH}$, where $\SpaceH$ is any Hilbert space. Then there exists a unique self-adjoint operator $\textbf{p}: \SpaceH \rightarrow \SpaceH$, not necessarily bounded, such that $U_x = \exp[i x \textbf{p} ]$ for all $x \in \reals$.
\end{theorem}

\begin{theorem}[\textbf{Spectral Theorem} \cite{Hall2013}]\label{thm_SpectralTheorem}\hfill\\
	Let $\textbf{p}: \SpaceH \rightarrow \SpaceH$ be a self-adjoint operator. Then there is a measurable space $Z$, a measure $\mu$ and a unitary isomorphism $V: \SpaceH \rightarrow \Ltwo{Z,\mu}$ such that $\textbf{p}' := V \textbf{p} V^\dagger$ is a multiplication operator:
	\begin{align*}
		\textbf{p}' : 	\Ltwo{Z,\mu} & \rightarrow 	\Ltwo{Z,\mu}\\
				\psi		 & \mapsto 	 	(z \mapsto p_z \psi(z)) \numberthis \label{eqn_SpectralTheorem} 
	\end{align*}
	We will refer to the measurable function $p: Z \rightarrow \reals$ as the \textbf{spectrum} of the operator $\textbf{p}$. If $\textbf{p}$ is bounded then $p$ is essentially bounded and we have $\Norm{}{\textbf{p}'} = \Norm{\infty}{p}$.
\end{theorem}

This is the usual way to derive the momentum spectrum for 1-dimensional wavefunctions: unfortunately, it turns out not to be canonical. This may seem a merely categorical flaw, but it is in fact related to an important physical fact: valuing momentum in the reals is necessarily subject to a choice of units of measurement.

In the course of this Section, we have established that the canonical space for the momenta associated with a symmetric space (governed by some classical abelian group symmetry $G$) is given by the Pontryagin dual $G^\wedge$: any attempt to faithfully value momentum in some other space $K$ is equivalent to a choice of group isomorphism $G^\wedge \isom K$ for some $K$. Similarly, when the translation symmetry is governed by $G = \reals$, we expect any valuation of the momentum in $K = \reals$ to be conditional on some choice of units of measurement, i.e. on fixing some isomorphism $\reals^\wedge \isom \reals$.

Units of measurement, seen as (continuous) group isomorphisms $G^\wedge \stackrel{\isom}{\rightarrow} K$, form a homogeneous space under (transitive and faithful) left regular action of the group automorphisms of $K$. The action corresponds to changing units, and for $K = \reals$ this is the usual multiplication by some non-zero real number. Thus the momentum operator and its spectrum obtained from Theorems \ref{thm_StoneThmOneParamUGroups} and \ref{thm_SpectralTheorem} are subject to an underlying choice of units of measurement $\reals^\wedge \stackrel{\isom}{\longrightarrow} \reals$: Lemma \ref{thm_StoneThmOneParamUGroups2} below make this statement precise.

\begin{lemma}\label{thm_StoneThmOneParamUGroups2}
	The continuous isomorphisms $\Isoms{\AbCategory}{\reals^\wedge}{\reals}$ form a homogeneous space under the (faithful and transitive) left regular action of $\Automs{\AbCategory}{\reals}$. Also $\Automs{\AbCategory}{\reals} \isom_{\AbCategory} (\reals^\times,\cdot,1)$, where $\alpha_c := x \mapsto c \cdot x $ is the continuous automorphism corresponding to a non-zero real $c$.
	As a consequence, the bijection of Theorem \ref{thm_StoneThmOneParamUGroups} is non-canonical, and there is instead a homogeneous space of bijections $U_x = \exp[i x \frac{1}{\hbar} \textbf{p} ]$ between strongly continuous group homomorphisms $(U_x)_{x \in \reals}$ and self-adjoint operators $\textbf{p}$, with fiber isomorphic to the homogeneous space $\Isoms{\AbCategory}{\reals^\wedge}{\reals}$ (except at the singular point $(U_x)_x = (\id{\SpaceH})_x$). Singling out one such bijection is equivalent to fixing a choice of isomorphism $\reals^\wedge \isom \reals$.
\end{lemma}
\begin{proof}
	The first two observations are standard checks. To see that the bijections form a homogeneous space, all we have to show is that there is an action of $\Automs{\AbCategory}{\reals}$ on them: the action of a $\frac{\hbar}{\hbar'} : \reals^\times$ on the space of bijections is given as follows: 
	\begin{equation}\label{eqn_StoneThmv2HomogeneousSpaceBijs}
		\frac{\hbar}{\hbar'} : U_x = \exp[i x \frac{1}{\hbar} \textbf{p} ] \mapsto U_x = \exp[i x \frac{1}{\hbar'} \textbf{p} ]
	\end{equation}
\end{proof}

Taking Theorems \ref{thm_StoneThmOneParamUGroups} and \ref{thm_SpectralTheorem} together, the \textbf{momentum spectrum} for a unitary symmetry $(U_x)_{x \in \reals}$ is usually defined to be $p: Z \rightarrow \reals$. However, it is a consequence of Lemma \ref{thm_StoneThmOneParamUGroups2} that this momentum spectrum is non-canonical, depending instead on a particular choice of unit of measurement: we will denote by $p^\hbar$ the spectrum associated with a particular bijection $U_x = \exp[i x \frac{1}{\hbar} \textbf{p} ]$. We can, however, define a canonical energy spectrum $\hat{p}: Z \rightarrow \reals^\wedge$.

\begin{theorem}[\textbf{Canonical energy spectrum}]\label{thm_CanonicalSpectrum}\hfill\\
	Let $(U_x)_{x\in\reals}$ be a strongly continuous group homomorphism $\reals \rightarrow \UnitaryOps{\SpaceH}$. Fix a bijection $U_x = \exp[i x \frac{1}{\hbar} \textbf{p} ]$, and obtain the\footnote{The decomposition is not really unique. However, the same $Z$ works for all $\hbar$, and the construction of $p^\hbar$ is contravariantly functorial with respect to the choice of $Z$. So we shall not worry about this any further.} spectral decomposition with $V: \SpaceH \rightarrow \Ltwo{Z,\mu}$ and $p^\hbar: Z \rightarrow \reals$. Define $\hat{p} : \Ltwo{Z,\mu} \rightarrow \reals^\wedge$ by:
	\begin{equation}\label{eqn_CanonicalEnergySpectrumDef}
		z  \mapsto (x \mapsto \exp[i \frac{1}{\hbar} p^\hbar_z x])
	\end{equation}
	Then $\hat{p}$ is independent of the choice of $\hbar \in \reals^\times$ (i.e. it is canonical) and we shall refer to it is as the \textbf{canonical momentum spectrum} of $(U_x)_x$.
\end{theorem}
\begin{proof}
	The action defined in Equation \ref{eqn_StoneThmv2HomogeneousSpaceBijs} sends $p^\hbar$ to $p^{\hbar'} = \frac{\hbar'}{\hbar} p^\hbar$. Thus Equation \ref{eqn_CanonicalEnergySpectrumDef} is invariant under the action of $\frac{\hbar}{\hbar'} \in \reals^\times$.
\end{proof}

\begin{remark}[\textbf{Non-demolition Momentum Measurement?}]\hfill\\
	Given a symmetric system $(U_x)_{x \in \reals}$ and its canonical momentum spectrum $\hat{p}$, we can \inlineQuote{construct} an operator $\hat{p}: \Ltwo{Z,\mu} \rightarrow \Ltwo{Z,\mu} \tensor \Ltwo{\reals^\wedge}^\star$ similar to the coherent non-demolition measurement by using delta functions:
	\begin{equation} \label{eqn_DynamicsSpectraReprise}
		\hat{p}: \int_Z a_z \ket{z} d\mu(z) \mapsto \int_Z a_z \ket{z}\tensor \ket{\hat{p}_z} d\mu(z)
	\end{equation}
 	We denoted by $\ket{z}$ the delta function at $z \in Z$ and by $\ket{\hat{p}_z}$ the delta function at $\hat{p}_z \in \reals^\wedge$. Subject to a choice $f: \reals^\wedge \stackrel{\isom}{\rightarrow} \reals$ of units of measurement, we can also \inlineQuote{recover} the momentum operator of Theorem \ref{thm_StoneThmOneParamUGroups} from the non-demolition momentum measurement \inlineQuote{constructed} above:
	\begin{equation} \label{eqn_TraditionalHamiltonian}
		V\textbf{p}V^\dagger = \left(\id{\Ltwo{Z,\mu}} \tensor \int_{\reals^\wedge} f(\chi) \bra{\chi} d\chi\right) \circ \hat{p}
	\end{equation}
	The operator $\int_{\reals^\wedge} f(\chi) \bra{\chi} d\chi$ is nothing but $f$ extended linearly on the basis of delta functions for $\reals^\wedge$.
\end{remark}

The non-demolition Hamiltonian above, however, is not fully rigorous, and we need take a different road to link Stone's Theorem with our periodic lattice symmetries. Recasting the results in terms of projection-valued measures provides a viable alternative.

\begin{lemma}\label{lemma_ProjectionValuedSpectrum}
	Let $X,Y$ be measurable spaces (with sigma-algebras $\Sigma_X$ and $\Sigma_Y$), $\mu$ a measure on $X$ and $f: X \rightarrow Y$ measurable. Then $f$ determines a projection-valued measure $\pi_f: \Sigma_Y \rightarrow \Bounded{\Ltwo{X,\mu}}$ by:
	\begin{equation}\label{eqn_ProjectionValuedSpectrum}
		\pi_f(U) = \text{projection onto subspace } \Ltwo{f^{-1}(U),\mu}
	\end{equation}
	for all $U \in \Sigma_Y$. If $V: \SpaceH \rightarrow \Ltwo{X,\mu}$ is a unitary, then $\pi_f$ can be seen (giving $V$ as understood) as a projection valued measure $\Sigma_Y \rightarrow \Bounded{\SpaceH}$ by considering $V^\dagger \pi_f V$.
\end{lemma}

\begin{theorem}[\textbf{Spectral Theorem, projection-valued}]\label{thm_SpectralTheoremProj}\hfill\\
	Let $\textbf{p}: \SpaceH \rightarrow \SpaceH$ be a self-adjoint operator. Let $V: \SpaceH \rightarrow \Ltwo{Z,\mu}$ and spectrum $p: Z \rightarrow \reals$ be given by Theorem \ref{thm_SpectralTheorem}. If $\pi_p$ is the projection-valued measure defined by Lemma \ref{lemma_ProjectionValuedSpectrum}, then we can reconstruct $\textbf{p}$ as:
	\begin{equation}
		\textbf{p} = \int_\reals \! \lambda \, d\pi_p(\lambda)
	\end{equation}
\end{theorem}

\begin{theorem}[\textbf{Stone's Theorem, projection-valued}]\label{thm_StoneThmOneParamUGroupsProjValued}\hfill\\
	Let $(U_x)_{x \in \reals}$ be a strongly continuous group homomorphism $\reals \rightarrow \UnitaryOps{\SpaceH}$. Let $V: \SpaceH \rightarrow \Ltwo{Z,\mu}$ unitary isomorphism and $\hat{p}: Z \rightarrow \reals^\wedge$ canonical momentum spectrum be given by Theorem \ref{thm_CanonicalSpectrum}. If $\pi_{\hat{p}}$ is the projection-valued measure defined by Lemma \ref{lemma_ProjectionValuedSpectrum}, then we can reconstruct $(U_x)_x$ as:
	\begin{equation}
		U_x = \int_{\reals^\wedge} \!\! \chi(t) \, d\pi_{\hat{p}}(\chi)
	\end{equation}		
\end{theorem}

Finally, the form of Stone's theorem on 1-parameter unitary groups given by Theorem \ref{thm_StoneThmOneParamUGroupsProjValued} can be extended to the periodic lattice symmetries described in this work, remembering that a symmetry $\alpha$ for a well-pointed abelian\footnote{Recall that well-pointed abelian coherent groups in $\fdHilbCategory$ are always doubly well-pointed and doubly finite.} coherent group $\mathbb{G}$ on a finite-dimensional Hilbert space $\SpaceH$ corresponds to a (necessarily strongly continuous) group homomorphisms $G \rightarrow \UnitaryOps{\SpaceH}$, where $G$ is the (finite abelian) underlying group.

\begin{theorem}[\textbf{Canonical momentum spectrum, finite abelian groups}]\label{thm_CanonicalSpectrumFinite}\hfill\\
	Let $(U_g)_{g \in G}$ be the strongly continuous group homomorphism $G \rightarrow \UnitaryOps{\SpaceH}$ corresponding to a representation $\alpha: \SpaceH \otimes \SpaceG \rightarrow \SpaceH$  of a doubly well-pointed abelian coherent group $\mathbb{G}$ in $\fdHilbCategory$, with $G := \underlyingGroup{\mathbb{G}}$ as its (finite abelian) underlying group. Let $\hat{p}:= \alpha^\dagger: \SpaceH \rightarrow \SpaceH \tensor \SpaceG$, and $Z$ be an orthonormal basis of eigenvalues for $\hat{p}$. Let $V: \SpaceH \rightarrow \Ltwo{Z,\mu}$ be the unitary corresponding to the basis, and define the canonical momentum spectrum $\hat{p}:Z \rightarrow G^\wedge$ via the multiplicative character basis of $\SpaceG$:
	\begin{equation}
		\hat{p}(\ket{z}) := \left[\left(\bra{z} \tensor \id{\SpaceG}\right)\hat{p} \ket{z}\right]^\dagger
	\end{equation} 
	Then the projection-valued measure $\pi_{\hat{p}}$ is independent of the choice of basis\footnote{Seen as having projections in $\Bounded{\SpaceH}$, its correct form would be $V^\dagger \pi_{\hat{p}} V$, where $p$ is dependent on the choice of $V$. The statement here is that the entire expression $V^\dagger \pi_{\hat{p}} V$ is independent of the choice of $V$.} and coincides with the complete family of orthogonal projectors defined by $\hat{p}$.
\end{theorem}

The measure provided by Theorem \ref{thm_CanonicalSpectrumFinite} can be extended linearly to obtain the invariant $\alpha^\dagger$ for the well-pointed abelian coherent group $\complexs[G]$, and similarly $(U_g)_{g \in G}$ can be extended linearly to obtain the representation $\alpha$. In the last section of this Chapter, we will see that this provides a direct link between the (finite abelian groups case of) Stone's Theorem and the symmetry/observable duality results for coherent groups presented in this Section.

\newpage
\section{Infinite-dimensional CQM}
\label{section_compactAbelian}

Throughout the past decade, the framework of CQM has achieved remarkable success in describing the foundations of finite-dimensional quantum theory, and the structures behind quantum information protocols and quantum computation. Unfortunately, attempts to extend the same techniques to the treatment of infinite-dimensional case have so far achieved limited success. Although the work of \cite{Abramsky2012b} on H$^\star$-algebras provides a characterisation of non-degenerate observables in arbitrary dimensions, the machinery needed to describe coherent groups for separable Hilbert spaces is inevitably lost: strongly complementary pairs do not exist in the category $\sHilbCategory$ of separable Hilbert spaces and bounded linear maps for infinite-dimensional spaces. This is a major issue for our coherent framework: it prevents us from being able to talk about one of the textbook examples of position/momentum pairs in quantum mechanics, that of 1-dimensional wavefunctions with periodic boundary conditions. The reason is simple: the translation symmetry group is the compact abelian Lie group $(\reals/{L\integers},+,0)$ (where $L$ is the length of the underlying space), while the boost symmetry groups is the infinite discrete abelian group $(\integers,+,0)$, and $\sHilbCategory$ doesn't allow us to formulate coherent groups on $\Ltwo{\integers}$ or $\Ltwo{\reals/{L\integers}}$.

In this Section, we resort to non-standard analysis \`{a} la Robinson \cite{Robinson1974} to tackle the issue of infinitesimal and infinite quantities behind unbounded operators, Dirac deltas and plane-waves: these are key ingredients of mainstream quantum mechanics which the categorical framework has thus failed to adequately capture, and we demonstrate how they can be used to recover a great deal of CQM machinery in infinite-dimensions. Applications of non-standard analysis to quantum theory already appeared in the past decades \cite{Ojima1993,Farrukh1975}, but in a different spirit and with different objectives in mind. In Subsection~\ref{subsection_NonStandardAnalysis}, we provide a basic summary of the non-standard techniques we will be using. In Subsection~\ref{subsection_StarHilbCategory}, we construct a category $\starHilbCategory$ of non-standard separable Hilbert spaces, and we relate it to the category $\sHilbCategory$ of standard separable Hilbert spaces and bounded linear maps. In Subsection~\ref{subsection_InfiniteDimCQM}, we use our newly defined category to extend CQM from finite to separable Hilbert spaces, and we treat the textbook case of position and momentum observable for 1-dimensional wavefunctions with periodic boundary conditions. 

The contents of this Section appeared as a standalone work in QPL 2016 \cite{Gogioso2016b}, and we will follow that treatment here. However, please note that the constructions presented have since been generalised by \cite{Gogioso2017} (which includes a new and extended definition of the category $\starHilbCategory$, as well as a number of explicit constructions).

\subsection{Non-standard analysis}
\label{subsection_NonStandardAnalysis}

\subsubsection{Non-standard models}

In this brief introduction to non-standard models, we follow the common lines in the presentations of the original \cite{Robinson1974} and the more recent \cite{Goldblatt1998}. Consider a (first or higher order) theory $\mathbb{T}$, with a standard model $M$: for example, we could consider the theory of natural numbers, with its standard model $\naturals$, or the theory of real numbers, with its standard model $\reals$. We now proceed to outline the \textbf{ultrapower construction}, which is used to produce a non-standard model $\nonstd{M}$ for the theory\footnote{Non-standard models are denoted by a prefix $^\star$, bearing no relation to complex conjugation.}.

Consider the set $M^\naturals$ of all sequences of elements in $M$, and extend all operations and relations of $\mathbb{T}$ to $M^\naturals$ by pointwise definition: any algebraic structure of $M$ transfers to $M^\naturals$ this way, but non-algebraic axioms in $\mathbb{T}$ (such as the existence of inverses in a field) need not transfer. For example, $\reals^\naturals$ is a commutative ring this way, but it is neither totally ordered nor a field. 

Now fix a non-principal ultrafilter $\mathcal{F}$ on the set $\naturals$, and define an equivalence relation $\equiv$ on $M^\naturals$ as follows:
\begin{equation}
(s_n)_{n \in \naturals} \equiv (t_n)_{n \in \naturals} \text{ iff } \suchthat{n \in \naturals}{s_n = t_n} \in \mathcal{F} 
\end{equation}
We can think of $\equiv$ as equating all sequences which \textit{agree almost everywhere (according to the ultrafilter $\mathcal{F}$)}, and we consider the quotient set $\nonstd{M} := M^\naturals / \equiv$ (known as the \textbf{ultrapower}). Because the equivalence relation $\equiv$ is defined in terms of pointwise equality, the operations and relations of $\mathbb{T}$---which we had already extended from $M$ to $M^\naturals$ by pointwise definition---descend to well-defined operations and relations on the quotient set $\nonstd{M}$; for example, relations in the quotient $\nonstd{M}$ hold if and only if their pointwise-defined counterparts hold in $M^\naturals$ almost everywhere (according to $\mathcal{F}$). But a lot more is true: because of the Transfer Theorem (see below), $\nonstd{M}$ is in fact a model of $\mathbb{T}$, which we refer to as the \textbf{non-standard model}. For example, $\nonstd{\integers}$ is a totally ordered ring, $\nonstd{\reals}$ is a totally ordered field, and $\nonstd{\complexs}$ is an algebraically closed field.

\begin{remark}
The non-standard model obtained via the ultrapower construction is not unique\footnote{Nor is it true that all non-standard models of $\mathbb{T}$ need arise this way.}. however, all the statements we will make and results we will prove will rely on the Transfer Theorem (see below), and they will apply to any non-standard model $\nonstd{M}$ obtained via the ultrapower construction. In fact, under the Continuum Hypothesis the choice of $\mathcal{F}$ is entirely irrelevant for the purposes of this work, as all the non-standard models of $\reals$ obtained by the ultrapower construction are isomorphic (and a similar statement applies to $\naturals$, $\integers$ and $\complexs$) \cite{Goldblatt1998}. 
\end{remark}

We can define a structure preserving map $\nonstd{\emptyArg}:M \rightarrow \nonstd{M}$ by sending $x \in M$ to the equivalence class $\nonstd{x} := [(x,x,x,....)] \in \nonstd{M}$ of the constant sequence $(x,x,x,...) \in M^\naturals$: this is a structure-preserving injective mapping (known as a \textbf{universe embedding}), and hence the standard model $M$ is embedded into the non-standard model $\nonstd{M}$. We refer to the elements of $\nonstd{M}$ in the form $\nonstd{x}$ as \textbf{standard}, and we will often freely confuse them with the corresponding elements of $M$ (i.e. we will often simply write $\nonstd{x}$ as $x$, when no confusion can arise). We refer to the elements of $\nonstd{M}$ at large as \textbf{non-standard}: if $M$ is infinite, then not all elements of $\nonstd{M}$ are in the form $\nonstd{x}$, and hence some \inlineQuote{truly non-standard} elements exist.
\begin{example}
Consider the sequence $s:=(n)_{n \in N} \in \naturals^{\naturals}$, and define $\omega := [s] \in \starNaturals$. Now take $m \in \naturals$, and consider $\nonstd{m} := [(m,m,m,....)]$: we have that the subset $\suchthat{n \in \naturals}{ m < s_n} = \suchthat{n \in \naturals}{m < n}$ is in any non-principal ultrafilter $\mathcal{F}$, and hence $m < \omega$. Thus in the non-standard model $\starNaturals$ there is an \textbf{infinite natural} $\omega$ which satisfies $m < \omega$ for all standard natural numbers $m \in \naturals$. 
\end{example}
\begin{example}
Consider the sequence $s:=\big(1/(n+1)\big)_{n \in N} \in \reals^{\naturals}$, and define $\epsilon := [s] \in \starReals$. Now take $0 <x \in \reals$, and consider $\nonstd{x} := [(x,x,x,...)]$: we have that the subset $\suchthat{n \in \naturals}{ 0 < \frac{1}{n+1} \leq x} = \suchthat{n \in \naturals}{n+1 \geq \lceil 1/x \rceil}$ is in any non-principal ultrafilter $\mathcal{F}$, and hence $0 < \epsilon \leq x$. Thus in the non-standard model $\starReals$ there is an \textbf{infinitesimal real} $\epsilon$ which satisfies $0 < \epsilon \leq x$ for all positive standard real numbers $0 < x \in \reals$.   
\end{example}

Now we consider a \textbf{standard} subset $A \subseteq M$, and we construct the a new non-standard subset $\nonstd{A} \subseteq \nonstd{M}$ as follows:
\begin{equation}
[(s_n)_{n \in \naturals}] \in \nonstd{A} \text{ iff } \suchthat{n \in \naturals}{s_n \in A} \in \mathcal{F}
\end{equation}
The set $\nonstd{A}$ contains $A$ as a subset, and we refer to it as the \textbf{enlargement} of $A$. Furthermore, functions $f: A \rightarrow B$ and relations $R \subseteq A \times B$ between standard sets extend to functions $\nonstd{f}: \nonstd{A} \rightarrow \nonstd{B}$ and relations $\nonstd{R} \subseteq \nonstd{A} \times \nonstd{B}$ between the corresponding enlargements; we refer to these as the \textbf{non-standard extensions} of the corresponding standard function $f$ and relation $R$. If $\mathcal{M} := (M,Rel_{\mathcal{M}},Fun_{\mathcal{M}})$ is the full relational structure associated with the standard model\footnote{I.e. the set $Rel_{\mathcal{M}}$ contains all finitary relations on $M$, and the set $Fun_{\mathcal{M}}$ contains all finitary (partial) functions on $M$.}, when talking about the \textbf{non-standard} model we will be considering the following structure:
\begin{equation}   
\nonstd{\mathcal{M}} := \Big(\nonstd{M},\suchthat{\nonstd{R}}{R \in Rel_{\mathcal{M}}}, \suchthat{\nonstd{f}}{f \in Fun_{\mathcal{M}}} \Big)
\end{equation}
When $M$ is infinite, the structure presented above is not full: we will refer to subsets, relations and functions appearing in $\nonstd{\mathcal{M}}$ as \textbf{internal}, and to all other subsets, relations and functions of $\nonstd{M}$ as \textbf{external}.

The fundamental result which relates the standard model $M$ to any non-standard model $\nonstd{M}$ obtained by the ultrapower construction is known as \textbf{Transfer Theorem}. The Transfer Theorem plays a central role in this work: all our proofs are carried out explicitly appealing to it, and are therefore blind to the underlying construction of non-standard models. Consider a sentence $\varphi$ in the language $\mathcal{L}_{\mathcal{M}}$ of the standard model $M$: constants, functions and relations are chosen from those of $\mathcal{M}$, and quantification is on standard subsets of $M$. Define the \textbf{$\ast$-transform} $\nonstd{\varphi}$ of $\varphi$ by replacing each constant $a \in M$ with $\nonstd{a}\in \nonstd{M}$, each relation $R \subseteq A \times B$ with $\nonstd{R} \subseteq \nonstd{A} \times \nonstd{B}$, each function $f: A \rightarrow B$ with $\nonstd{f}: \nonstd{A} \rightarrow \nonstd{B}$, and each standard set $A$ with its enlargement $\nonstd{A}$; in particular, quantification $\forall x \in A$ and $\exists x \in A$ over a standard set $A$ turns into quantification $\forall x \in \nonstd{A}$ and $\exists x \in \nonstd{A}$ over its enlargement. Then $\nonstd{\varphi}$ is a sentence in the language $\mathcal{L}_{\nonstd{\mathcal{M}}}$ of the non-standard model $\nonstd{M}$, and the following result holds.
\begin{theorem}[\textbf{Transfer Theorem}]\hfill\\
A sentence $\varphi$ holds in the standard model $M$ if and only if its $\ast$-transform $\nonstd{\varphi}$ holds in the non-standard model $\nonstd{M}$.
\end{theorem}
\noindent We now present a number of sample applications of the Transfer Theorem to the theory of non-standard naturals and reals.

\begin{example}
Consider the sentence defining predecessors in the natural numbers:
\begin{equation}
\forall n \in \naturals.\, \big[n \neq 0 \Rightarrow [\exists m \in \naturals .\, n = m+1]\big]
\end{equation}
By Transfer Theorem, the following sentence holds in the non-standard model $\starNaturals$:
\begin{equation}
\forall n \in \starNaturals.\, \big[n \neq 0 \Rightarrow [\exists m \in \starNaturals .\, n = m+1]\big]
\end{equation}
Hence all non-zero non-standard naturals have predecessors. 
\end{example}

\begin{example}
Consider the sentence defining the well-order property for the natural numbers, i.e. saying that every non-empty subset of $\naturals$ has a minimum:
\begin{equation}
	\forall A \subseteq \naturals.\, \Big[A \neq \emptyset \Rightarrow \big[ \exists m \in A. \forall a \in A. m \leq a\big] \Big]
\end{equation}
By Transfer Theorem, the following sentence holds in the non-standard model $\starNaturals$:
\begin{equation}
	\forall \nonstd{A} \subseteq \starNaturals. \, \Big[\nonstd{A} \neq \emptyset \Rightarrow \big[ \exists m \in \!\! \nonstd{\!A}. \forall a \in \!\! \nonstd{\!A}. m \leq a\big] \Big]
\end{equation}
Hence all non-empty internal subsets of $\starNaturals$ have a minimum. Now consider the subset $W \subset \! \starNaturals$ of all infinite non-standard naturals, i.e. $W := \suchthat{k \in \!\! \starNaturals}{\forall n \in \naturals . n < k}$. The subset $W$ cannot have a minimum: if $m \in W$ were such a minimum, then $m \neq 0$ and hence $m-1$ would exists (by the previous example); but then $m-1$ would be a standard natural, making $m$ itself standard and not infinite. Because $W$ is non-empty and has no minimum, we infer that it cannot be an internal subset.
\end{example}

\begin{example}
Consider the sentence defining multiplicative inverses in $\reals$:
\begin{equation}
	\forall x \in \reals.\, \big[x \neq 0 \Rightarrow [\exists y \in \reals .\, x \cdot y = 1 ] \big]
\end{equation}
By Transfer Theorem, the following sentence holds in the non-standard model $\starReals$:
\begin{equation}
	\forall x \in \!\! \starReals.\, \big[x \neq 0 \Rightarrow [\exists y \in \starReals .\, x \cdot y = 1 ] \big]
\end{equation}
Hence all non-zero non-standard reals have multiplicative inverses. This reasoning can be applied to all axioms making $\reals$ an ordered field, and hence $\starReals$ is an ordered field, with $\reals$ as a sub-field. In particular, the following holds by Transfer Theorem:
\begin{equation}
\forall x,y \in \!\!\starReals. \, [x \neq 0 \wedge y \neq 0] \Rightarrow [x < y \Rightarrow 1/x > 1/y] 
\end{equation}
Applying this to the infinitesimal real number $\epsilon$ implies that $1/ \epsilon > x$ for all $x \in \reals$, i.e. that $1/\epsilon$ is an infinite non-standard real number.
\end{example}

\begin{example}
Consider the sentence defining the sequence $s: \naturals \rightarrow \reals$ of partial sums for every sequence $f: \naturals \rightarrow \reals$ in the standard model $\reals$:
\begin{align}
\forall f: \naturals \rightarrow \reals . \exists s : \naturals \rightarrow \reals . \big[s(0) = f(0) \wedge [\forall m \in \naturals . s(m+1) = s(m) + f(m+1) ] \big] 
\end{align}
By Transfer Theorem, the following sentence holds in the non-standard model $\starReals$:
\begin{align}
\forall \nonstd{f}: \!\!\starNaturals \rightarrow \!\!\starReals . \exists \nonstd{s} : \!\!\starNaturals \rightarrow \!\!\starReals . \big[\nonstd{s}(0) = \nonstd{f}(0) \wedge [\forall m \in \!\!\starNaturals . \nonstd{s}(m+1) = \nonstd{s}(m) + \nonstd{f}(m+1) ] \big] 
\end{align}
Hence every internal sequence $\nonstd{f}: \starNaturals \rightarrow \starReals$ admits a corresponding internal sequence of partial sums $\nonstd{s}: \starNaturals \rightarrow \starReals$, i.e. the notation $\sum_{n=0}^{m} \nonstd{f}(n)$ is legitimate for all $m \in \starNaturals$. Similarly, $\nonstd{f}$ admits a corresponding internal sequence of partial products $\nonstd{p}$, i.e. the notation $\prod_{n=0}^{m} \nonstd{f}(n)$ is legitimate for all $m \in \starNaturals$. 
\end{example}

\begin{example}
For each $n \in \naturals$ we can define the lower set $n\!\downarrow := \suchthat{m \in \naturals}{m \leq n}$, and given any sequence $f: \naturals \rightarrow \reals$ we can define its truncation $f^{(n)} : n\!\!\downarrow\; \rightarrow \reals$ by setting $f^{(n)}(m) = f(m)$ when $m \leq n$ and leaving $f^{(n)}(m)$ undefined otherwise.  By Transfer Theorem, for each $\kappa \in \starNaturals$ there is a corresponding internal set $\kappa\!\downarrow := \suchthat{m \in \starNaturals}{m \leq \kappa}$, and each internal function $\nonstd{f}: \starNaturals \rightarrow \starReals$ has a corresponding internal truncation $\nonstd{f}^{(\kappa)}: \kappa\!\downarrow\; \rightarrow \starReals$. When talking about the \textbf{non-standard extension} of a standard sequence $f:=(a_n)_{n \in \naturals}$ \textbf{up to an infinite natural $\kappa$} we will mean the truncation $\nonstd{f}^{(\kappa)}: \kappa\!\downarrow\; \rightarrow \reals$, which we simply denote by $(a_n)_{n=0}^{\kappa}$.
\end{example}

\subsubsection{The structure of $\starNaturals$}

The \textbf{non-standard naturals} $\starNaturals$ form a totally ordered semiring, with the \textbf{standard naturals} $\naturals$ as an initial segment. As a totally ordered set, the non-standard naturals are order-isomorphic to $\naturals + \theta \times \integers$, where $\theta$ is a dense order with no maximum nor minimum. We refer to the standard naturals as \textbf{finite naturals}, and to the internal naturals in $\starNaturals - \naturals$ as \textbf{infinite naturals}: this is because any infinite natural $\kappa$ satisfies $\kappa > n$ for all $n \in \naturals$. We say that two non-standard naturals $n,m$ have the same \textbf{order of infinity} if they differ by a finite natural $|n-m| \in \naturals$: this gives an equivalence relation, and the set of equivalence classes is in order-preserving bijection with the totally ordered set $\Theta^+ := \{0\} + \theta$. The set $\Theta^+$ also inherits the additive monoid structure of $\starNaturals$, but not the full semiring structure.

By Transfer Theorem, many properties of $\naturals$ transfer to $\starNaturals$: for example, from the fact that every non-empty set of standard naturals has a minimum we conclude that every non-empty internal set of non-standard naturals also has a minimum, and arguments by induction can be carried out on non-empty internal subsets of $\starNaturals$. If $(a_n)_{n\in\naturals}$ is a sequence of natural numbers in the standard model, then we can consider the unique corresponding standard sequence $(a_n)_{n \in \starNaturals}$ in the non-standard model, coinciding with $(a_n)_{n \in \naturals}$ for all finite naturals. Furthermore, for any $m \in \naturals$ the naturals $s_m := \sum_{n=0}^m a_n$ and $p_m := \prod_{n=0}^m a_n$ exist in the standard model, and hence the non-standard naturals $s_m$ and $p_m$ exist in the non-standard model for all $m \in \starNaturals$.

The \textbf{non-standard integers} $\starIntegers$ similarly relate to the standard integers $\integers$: they form a totally ordered ring, with $\integers$ as a sub-ring and $\starNaturals$ as a sub-semiring. As a totally ordered set, they are order-isomorphic to $(\theta + \{0\} + \theta) \times \integers$: they contain the finite integers together two copies of the infinite naturals, one copy above all finite integers (the \textbf{positive infinities}) and one copy below all finite integers (the \textbf{negative infinities}). The set $\Theta := \theta + \{0\} + \theta$ of orders of infinity for $\starIntegers$ again inherits the total order and the additive group structure, but not the ring one.

\subsubsection{The structure of $\starReals$}
\label{subsubsection_StructureStarReals}

The \textbf{non-standard reals} $\starReals$ form an ordered field, with the \textbf{standard reals} $\reals$ as a sub-field and the non-standard integers $\starIntegers$ as a subring. They are a non-archimedean field, with a sub-ring $M_1$ of \textbf{infinitesimals}, smaller in absolute value than all positive standard reals. The non-zero infinitesimals have inverses, the \textbf{infinite reals}, larger in absolute value than all positive standard integers/reals. 

By using the finite integers $\integers \subset \starIntegers \subset \starReals$, it is possible to define the sub-ring\footnote{In fact, they form a $\reals$-vector subspace of $\starReals$.} $M_0$ of the \textbf{finite reals}, given by those $x \in \starReals$ such that $\exists \; n \in \integers \; |x| < n$. The sub-ring $M_1$ of infinitesimals is a two-sided ideal in $M_0$, and by using Dedekind cuts it is possible to show that the ring quotient $M_0 / M_1$ is isomorphic to $\reals$: we refer to the corresponding surjective ring homomorphism $\stdpart{\emptyArg}: M_0 \rightarrow \reals$ as the \textbf{standard part} (which is the identity on the subring $\reals \leq M_0$), and we denote the corresponding quotient equivalence relation on $M_0$ by $x \simeq y \iffdef |x-y| \text{ is an infinitesimal}$. The coset of $M_1$ surrounding any non-standard real $x \in \starReals$ is called the \textbf{monad} of $x$, and when $x$ is finite it contains exactly one standard real $\stdpart{x} \in \reals$.

The non-standard reals are Archimedean in a non-standard sense: by the transfer theorem, for any $x \in \starReals$ there is a unique $n \in \starNaturals$ such that $n \leq |x| < n+1$.\footnote{Equivalently, for every infinitesimal $\xi \in M_1$ there is a unique non-standard natural $n \in \starNaturals$ such that $1/(n+1) < |\xi| \leq 1/n$.} As a consequence, the non-standard rationals $\starRationals$ are a dense subfield of $\starReals$. Furthermore, the non-standard reals can be obtained in the familiar way by \inlineQuote{gluing} a copy of the (non-standard) unit interval between any two consecutive (non-standard) integers: as a totally ordered additive group, they are then isomorphic to $\Theta \times M_0$. 

Any sequence $(a_n)_{n \in \naturals}$ of reals definable in the standard model has a corresponding non-standard extension $(a_n)_{n \in \starNaturals}$ by the transfer theorem: it coincides with the original sequence on all finite naturals, but will not in general be valued in the standard reals on infinite naturals. It is possible to show that $\lim_{n \rightarrow \infty} a_n = a \in \reals$ in the standard model if and only if $a_n \simeq a$ for all infinite naturals $n$ in the non-standard model. Furthermore, $(a_n)_{n \in \naturals}$ is bounded (say by $|a_n| \leq z \in \reals^+$) in the standard model if and only if $a_n$ is a finite real (with $|a_n| \leq z$) for all infinite naturals.

Real-valued functions $f: I \rightarrow \reals$ in the standard model can similarly be extended by the transfer theorem to real-valued $f: \nonstd{I} \rightarrow \starReals$ in the non-standard model, coinciding with the original function on all standard reals in $\nonstd{I}$. Then $\lim_{x \rightarrow a} f(x) = c$ in the standard model if and only if in the non-standard model we have $f(x) \simeq c$ for all $x \simeq a$ (except perhaps at $x = a$). As a consequence, $f$ is continuous at $a \in I$ in the standard model if and only if in the non-standard model we have $f(x) \simeq f(a)$ whenever $x \simeq a$. 

The \textbf{non-standard complex numbers} $\starComplexs$ similarly extend $\complexs$ with infinitesimals and infinities: they also form a field, with both $\starReals$ and $\complexs$ as sub-fields. As an additive group, they are isomorphic to $\starReals^2$. We will transfer most notations from $\starReals$ to $\starComplexs$, when no confusion can arise.

\subsubsection{Non-standard Hilbert spaces}

The passage from standard to non-standard models has a two-fold effect on (complex) Hilbert spaces: (i) the scalars change from $\complexs$ to $\starComplexs$; (ii) the vectors change from sequences $(a_n)_{n \in \naturals^+}$ indexed by the standard naturals to sequences $(a_n)_{n \in \starNaturals^+}$ indexed by the non-standard naturals. Each standard Hilbert space $V$ has a non-standard counterpart $\nonstd{V}$: the non-standard space $\nonstd{V}$ contains all vectors of $V$, known as the \textbf{standard vectors}, as a $\complexs$-linear (but not $\starComplexs$-linear) subspace. The non-standard space $\nonstd{V}$ comes with a $\starComplexs$-valued inner product (extending the standard one on $V$), and an associated $\starReals^+$-valued norm. 

The vectors infinitesimally close to standard vectors are called \textbf{near-standard vectors}, and the vectors with infinitesimal norm are called \textbf{infinitesimal vectors}: both form $\complexs$-linear (and $M_0$-linear) subspaces $\nonstd{V_0}$ and $\nonstd{V_1}$ of $\nonstd{V}$. There is a $\complexs$-linear map $\stdpart{\emptyArg}: \nonstd{V}_0 \rightarrow V$, known as the \textbf{standard part}, which sends the near-standard vectors surjectively onto $V$, acts as the identity on standard vectors and has the infinitesimal vectors $V_1$ as kernel. The standard part defines an equivalence relation $\simeq$ on near-standard vectors, with $\ket{\psi} \simeq \ket{\phi} $ if and only if $\ket{\psi} - \ket{\phi}$ is an infinitesimal vector. 

An interesting class of non-standard vectors can be obtained by the transfer theorem. Consider a standard complex Hilbert space $V$ which is separable, i.e. comes with a complete orthonormal basis $\ket{e_{n}}_{n \in \naturals^+}$ which is countable\footnote{We index our vectors in the positive naturals $\naturals^+$ for reasons of convenience: this way a generic vector in a $d$-dimensional vector space is written cleanly as $\sum_{n=1}^d v_n \ket{e_n}$.}. If $(\psi_{n})_{n \in \naturals^+}$ is a standard sequence of complex numbers, then the vector $\ket{\psi^{(k)}} := \sum_{n=1}^{k} \psi_n \ket{e_n} \in V$ exists for all positive standard naturals $k \in \naturals^+$: by the transfer theorem, the vector $\ket{\psi^{(\kappa)}}$ exists in $\nonstd{V}$ for any infinite natural $\kappa$, where the corresponding non-standard sequence $(\psi_n)_{n \in \starNaturals^+}$ is used to provide values. In particular, the vector $\sum_{n=1}^{\kappa} \ket{e_n} \in \nonstd{V}$ exists, and has squared norm $\kappa \in \starReals^+$. 

The vectors of finite norm are known as \textbf{finite vectors} and form a $\complexs$-linear (and $M_0$-linear) subspace of $V$. However, this is where the second effect of non-standard analysis on Hilbert spaces comes into play: there exist finite vectors, such as $\ket{\phi} := \frac{1}{\sqrt{\kappa}} \sum_{n=1}^{\kappa} \ket{e_n}$, which are not near-standard. Indeed, any standard vector $\ket{\psi}$ is infinitesimally close to its truncation in the form $\ket{\psi^{(\kappa)}} := \sum_{n=1}^{\kappa} \psi_n \ket{e_n}$, where $\psi_\nu$ is infinitesimal for all infinite naturals $\nu$. We get the following lower bound for the squared norm of the difference $\ket{\phi}-\ket{\psi}$:
\begin{align}
\Big|\Big| \ket{\phi}-\ket{\psi} \Big|\Big|^2 \simeq \Big|\Big| \ket{\phi}-\ket{\psi^{(\kappa)}} \Big|\Big|^2 = &\sum_{n=1}^{\kappa} \frac{|1-\psi_n|}{\kappa}^2 \geq \sum_{n\geq\kappa/M} \frac{|1-\psi_n|^2}{\kappa} \nonumber \\
\geq &\sum_{n\geq\kappa/M} \frac{1-\epsilon}{\kappa} = (1-\epsilon)(1-\frac{1}{M}) \nonumber
\end{align}
for all $M \in \naturals^{+}$ and $\epsilon \in (0,1)$. This means that $\stdpart{\big|\big| \ket{\phi}-\ket{\psi^{(\kappa)}} \big|\big|} \geq 1$ for all standard vectors $\ket{\psi}$, and hence the vector $\ket{\phi}$ is finite but not near-standard. Finite vectors which are not near-standard are genuinely new, and can be used to do genuinely new things. This is what makes the non-standard approach to quantum mechanics so powerful: in $\starReals$ and $\starComplexs$ finite numbers are all near-standard, and correspond to standard numbers under infinitesimal equivalence, while in a non-standard Hilbert space one gets new things for free, such as normalised plane-waves in $\Ltwo{\integers}$ and Dirac-deltas in $\Ltwo{\reals/(L\integers)}$. These will be the fundamental building blocks of our work.

The transfer theorem can similarly be used to define non-standard linear operators (not necessarily continuous/bounded): if $(a_{nm})_{n, m \in \naturals^+}$ is a doubly-indexed sequence (a.k.a. a matrix) of complex numbers, then the linear operator $\sum_{m,n=0}^{\kappa} a_{mn} \, \ket{e_m}\bra{e_n} : \nonstd{V} \,\rightarrow \nonstd{V}$ exists for any infinite natural $\kappa$ (where $(a_{nm})_{n, m \in \starNaturals^+}$ is the unique internal non-standard sequence given by the transfer theorem). This is a remarkable result, but it comes with some tricky limitations which will be presented in the next section.

\subsection{The category $\starHilbCategory$}
\label{subsection_StarHilbCategory}
The main idea behind our construction is to legitimise, through non-standard analysis, notations such as $\sum_{n\in \naturals^+} \ket{e_n}\bra{e_n}$ for the identity operator, $\sum_{n\in\naturals^+} \ket{e_n}$ for the unit of an infinite-dimensional Frobenius algebra, $\sum_{n,m \in \naturals^+}\ket{e_n}a_{nm}\bra{a_m}$ for a general matrix $(a_{nm})_{n,m \in \naturals^+}$. The transfer theorem doesn't allow us to conclude the existence of sums strictly over $\naturals^+$ (nor over the entirety of $\starNaturals$), but it does allow us to sum up to some infinite natural $\kappa$: the sums $\sum_{n=1}^\kappa \ket{e_n}\bra{e_n}$, $\sum_{n=1}^\kappa \ket{e_n}$ and $\sum_{n,m=1}^\kappa \ket{e_n} a_{nm} \bra{e_m}$ all describe well-defined internal linear maps of non-standard Hilbert spaces. Unfortunately,  $P_\kappa := \sum_{n=1}^\kappa \ket{e_n}\bra{e_n}$ does not behave like the identity over the space of all internal linear maps, but rather it is as a subspace projector: in order to turn these projectors into identities, we use a construction similar to that of the Cauchy/idempotent\footnote{Projectors are self-adjoint idempotents.} completion. As it turns out, this procedure preserves all standard bounded operators, and enough non-standard ones to do many of the things we care about in categorical quantum mechanics.

\subsubsection{Definition of the category}
\label{subsubsection_StarHilbCategorydef}

We proceed to define the \textbf{category of non-standard separable Hilbert spaces}\footnote{We have complex Hilbert spaces in mind, but the construction is identical for real Hilbert spaces.}, which we will denote by $\starHilbCategory$. All proofs of results in this and future sections can be found in the Appendix. As objects we take separable (standard) Hilbert spaces together with a witness of separability, i.e. pairs $\SpaceH := \big(V, \ket{e_{n}}_{n=1}^\kappa\big)$ of a standard separable Hilbert space $V$ and a family of vectors $\ket{e_{n}}_{n=1}^\kappa$ defined as follows (for some non-standard natural $\kappa\in \starNaturals$).
\begin{enumerate}
	\item[(i)] If $V$ is finite-dimensional: we consider a finite orthonormal basis $\ket{e_{n}}_{n=1}^\kappa$, where $\kappa := \dim{V} \in \naturals$.
	\item[(ii)] If $V$ is infinite-dimensional: we fix some infinite natural $\kappa \in \,\starNaturals$ (meant to be the non-standard dimension), and we consider the unique extension (by the transfer theorem) up to $\kappa$ of a complete orthonormal basis $\ket{e_{n}}_{n\in \naturals^+}$ for $V$. 
\end{enumerate} 

\noindent For each object $\SpaceH := \big(V, \ket{e_{n}}_{n=1}^\kappa\big)$, let the \textbf{truncating projector} $P_\SpaceH: \SpaceH \rightarrow \SpaceH$ be the following internal linear map $\nonstd{V} \rightarrow \nonstd{V}$, where we refer to $\dim{\SpaceH} := \kappa \in \starNaturals$ as the \textbf{dimension} of object $\SpaceH$:
\begin{equation}\label{eqn_TruncatingProjector}
	P_\SpaceH := \sum_{n=1}^{\dim{\SpaceH}} \ket{e_n} \bra{e_n}.
\end{equation}

\noindent We also use notation $|\SpaceH| := V$ to refer to the standard separable Hilbert space underlying an object $\SpaceH$ of $\starHilbCategory$. The morphisms in the category $\starHilbCategory$ are then defined as follows:
\begin{equation}\label{eqn_Homset}
	\Hom{\starHilbCategory}{\SpaceH}{\SpaceG} := \suchthat{\;P_\SpaceG \circ F \circ P_\SpaceH\;}{\;F:\nonstd{|\SpaceH|} \,\rightarrow\, \nonstd{|\SpaceG|} \text{ internal linear map}}.
\end{equation}

\noindent Because the truncating projectors for $\SpaceH$ and $\SpaceG$ are internal linear maps, the composite $P_\SpaceG \circ F \circ P_\SpaceH$ is an internal linear map $\nonstd{|\SpaceH|} \, \rightarrow \nonstd{|\SpaceG|}$, which we shall denote by $\truncate{F}$. Composition of morphisms in $\starHilbCategory$ is simply composition of internal linear maps
\begin{equation}\label{eqn_Composition}
	\truncate{G} \cdot \truncate{F} := \truncate{G} \circ \truncate{F} = (P_\SpaceG \circ G \circ P_\SpaceH) \circ (P_\SpaceH \circ F  \circ P_\SpaceK) = P_\SpaceG \circ (G \circ  P_\SpaceH \circ F)  \circ P_\SpaceK,
\end{equation}
where we used associativity of composition and idempotence of truncating projectors. Idempotence of the projectors, in particular, means that they provide suitable identity morphisms. Indeed if we define
\begin{equation}
    \id{\SpaceH} := P_\SpaceH \circ \id{\nonstd{|\SpaceH|}} \circ P_\SpaceH = P_\SpaceH \circ P_\SpaceH = P_\SpaceH,
\end{equation}
it is straightforward to check that $\id{\SpaceG} \cdot  \truncate{F} = P_\SpaceG \circ P_\SpaceG \circ F \circ P_\SpaceH = P_\SpaceG \circ F \circ P_\SpaceH = \truncate{F}$, and similarly for $\truncate{F} \cdot \id{\SpaceH}$.

\noindent Now consider two naturals $\kappa, \nu \in \starNaturals$, and define the internal map
\begin{equation}\label{eqn_varsigma}
	\varsigma_{\kappa,\nu}(n,m) := (n-1)\nu + m,
\end{equation}
which is an internal bijection between $\{1,...,\kappa\}\times\{1,...,\nu\}$ and $\{1,...,\kappa \nu\}$. Also, we will simply write $\varsigma(n,m)$ when no confusion can arise. A tensor product can be defined on the objects of $\starHilbCategory$ as follows, with tensor unit $(\complexs,1)$:
\begin{equation}\label{eqn_TensorObjects}
	\Big(V, \ket{e_{n}}_{n=1}^\kappa\Big)
	\otimes \Big(W, \ket{f_{m}}_{m=1}^\nu\Big) 
	:= \Big(V \otimes W, \big(\ket{e_{n}} \otimes \ket{f_{m}}\big)_{\varsigma(n,m)=1}^{\kappa\nu}\Big).
\end{equation}

\noindent In order to define the tensor product on morphisms, we need to first note that morphisms $\truncate{F} : \SpaceH \rightarrow \SpaceG$ in $\starHilbCategory$ are uniquely determined by certain matrices $\indexSet{\SpaceG} \times \indexSet{\SpaceH} \rightarrow\, \starComplexs$:
\begin{equation}\label{eqn_MatrixRepresentation}
	\truncate{F} = P_\SpaceG \circ F \circ P_\SpaceH = 
	\sum_{m=1}^{\dim{\SpaceG}} \sum_{n=1}^{\dim{\SpaceH}} 
	\ket{f_{m}} \Big( \bra{f_{m}} F \ket{e_{n}} \Big) \bra{e_{n}}.
\end{equation}

\noindent We introduce the notation $\truncate{F}_{mn} := \bra{f_m} F \ket{e_n}$, and define the tensor product of two morphisms $\truncate{F} : \SpaceH \rightarrow \SpaceG$ and $\truncate{G} : \SpaceH' \rightarrow \SpaceG'$ to be the familiar tensor product of matrices:
\begin{equation}\label{eqn_TensorMorphisms}
	\truncate{F} \otimes \truncate{G} := 
	\sum_{\varsigma(m,m')=1}^{\dim{\SpaceG}\dim{\SpaceG'}}
	\sum_{\varsigma(n,n')=1}^{\dim{\SpaceH}\dim{\SpaceH'}}
	\ket{f_{m}} \otimes \ket{f'_{m'}}  \; \truncate{F}_{mn} \truncate{G}_{m'n'} \; \bra{e_{n}} \otimes \bra{e'_{n'}}.
\end{equation}

\noindent The map $\truncate{F} \otimes \truncate{G}$ is an internal linear map $\nonstd{|\SpaceH|} \,\otimes\, \nonstd{|\SpaceH'|} \rightarrow \nonstd{|\SpaceG|} \,\otimes\, \nonstd{|\SpaceG'|}$ by the transfer theorem. Also we have that $P_\SpaceH \otimes P_{\SpaceH'} = P_{\SpaceH \otimes \SpaceH'}$, and that $\truncate{F} \otimes \truncate{G} = P_{\SpaceG \otimes \SpaceG'} \circ \big(\truncate{F} \otimes \truncate{G}\big) \circ P_{\SpaceH \otimes \SpaceH'}$. Hence, $\truncate{F} \otimes \truncate{G}$ is a genuine morphism $\SpaceH \otimes \SpaceH' \rightarrow \SpaceG \otimes \SpaceG'$. It is straightforward to check that this results in a well defined tensor product, and the following braiding operator turns $\starHilbCategory$ into a symmetric monoidal category (SMC):
\begin{equation}\label{eqn_Braiding}
	\sigma_{\SpaceH \SpaceG} :=
	\sum_{\varsigma(n,m)=1}^{\dim{\SpaceH}\dim{\SpaceG}}
	\ket{f_m} \otimes \ket{e_n} \; \bra{e_n}\otimes \bra{f_m}.
\end{equation}

\noindent Finally, one can define a dagger on morphisms by taking the conjugate transpose on the \textbf{matrix representation} given by (\ref{eqn_MatrixRepresentation}), obtaining the following morphism (by the transfer theorem):
\begin{equation}\label{eqn_Dagger}
	(\truncate{F})^\dagger := 
	\sum_{n=1}^{\dim{\SpaceH}} 
	\sum_{m=1}^{\dim{\SpaceG}} 
	\ket{e_{n}} 
	\truncate{F}_{mn}^\star
	\bra{f_{m}}.
\end{equation}
It is straightforward to check that $(\truncate{F})^\dagger$ is a morphism $\SpaceG \rightarrow \SpaceH$ whenever $\truncate{F}$ is a morphism $\SpaceH \rightarrow \SpaceG$, that the dagger is functorial and that it satisfies all the compatibility requirements with the monoidal structure. The content of this section can thus be summarised by the following result.

\begin{theorem}\label{thm_starHilbSMC}
	The category $\starHilbCategory$ is a $\dagger$-SMC, with tensor product and dagger defined by (\ref{eqn_TensorObjects}, \ref{eqn_TensorMorphisms}, \ref{eqn_Dagger}).
\end{theorem} 

\subsubsection{Standard bounded linear maps in $\starHilbCategory$}
\label{subsubsection_StandardBoundedMaps}

In order to do categorical quantum mechanics in $\starHilbCategory$, we have to first establish its relationship with the more traditional arena of standard Hilbert spaces and bounded linear maps. By construction, we don't expect $\starHilbCategory$ to contain all of $\HilbCategory$, as the objects were explicitly chosen to be separable (rather than arbitrary) Hilbert spaces. We expect, however, that the full subcategory $\sHilbCategory$ of separable Hilbert spaces and bounded linear maps will be faithfully embedded in it. 

We will refer to morphisms $\ket{\psi} :\equiv \sum_{n=1}^{\dim{\SpaceH}} \psi_n \ket{e_n}: \starComplexs \rightarrow \SpaceH$ as \textbf{vectors} or \textbf{states} in $\SpaceH$, and the $\starComplexs$-valued inner product induced by the dagger can be written as $\braket{\phi}{\psi} = \sum_{n=1}^{\dim{\SpaceH}} \phi_n^\star \psi_n$. We will refer to vectors $\ket{\psi}$ having finite squared norm $\braket{\psi}{\psi}$ as \textbf{finite vectors}, and to vectors having infinitesimal squared norm as \textbf{infinitesimal vectors}. Difference by infinitesimal vectors gives rise to the following equivalence relation, corresponding to the notion of convergence of vectors in norm:
\begin{equation}
	\ket{\phi} \simeq \ket{\psi} \iffdef \ket{\phi}-\ket{\psi} \text{ is infinitesimal}.
\end{equation}
We will say that a morphism $\truncate{F}: \SpaceH \rightarrow \SpaceG$ in $\starHilbCategory$ is \textbf{continuous} if for any $\ket{\psi_\kappa},\ket{\phi_\kappa} :\, \starComplexs \rightarrow \SpaceH$ satisfying $\ket{\psi_\kappa} \simeq \ket{\phi_\kappa}$  we have $\truncate{F} \ket{\psi_\kappa} \simeq \truncate{F} \ket{\phi_\kappa}$. Furthermore, the \textbf{operator norm} on some homset $\Hom{\starHilbCategory}{\SpaceH}{\SpaceG}$ can be defined as follows\footnote{Both the $\sup$ and the square root are simply extended from $\reals^+$ to $\starReals^+$ by the transfer theorem, as usual. The definition of the operator norm is independent of the equivalence relation $\simeq$.}:
\begin{equation}
	|| \truncate{F} ||_{op} := \sup_{\braket{\psi}{\psi} = 1} \sqrt{\bra{\psi}\truncate{F}^\dagger \truncate{F} \ket{\psi} }.
\end{equation}
We will say that a morphism $\truncate{F}: \SpaceH \rightarrow \SpaceG$ is \textbf{bounded} if its operator norm $||\truncate{F}||_{op}$ is finite. Just as it happens in the case of standard Hilbert spaces, throughout this work we will confuse bounded and continuous, thanks to the following result.

\begin{lemma}\label{thmNS_Continuity}
	Let $\truncate{F}: \SpaceH \rightarrow \SpaceG$ be a morphism in $\starHilbCategory$. The following are equivalent:
	\begin{enumerate} 
		\item[(i)] the operator norm $||\truncate{F}||_{op}$ is finite;
		\item[(ii)] $\truncate{F} \ket{\xi_\kappa}:\, \starComplexs \rightarrow \SpaceG$ is infinitesimal whenever $\ket{\xi_\kappa} :\, \starComplexs \rightarrow \SpaceH$ is infinitesimal;
		\item[(iii)] if $\ket{\psi_\kappa},\ket{\phi_\kappa} :\, \starComplexs \rightarrow \SpaceH$ satisfy $\ket{\psi_\kappa} \simeq \ket{\phi_\kappa}$, then we have $\truncate{F} \ket{\psi_\kappa} \simeq \truncate{F} \ket{\phi_\kappa}$.
	\end{enumerate}
\end{lemma}
\begin{proof}
	\textit{(i) implies (ii)}: let $\zeta := \braket{\xi_\kappa}{\xi_\kappa}$ be an infinitesimal; then we have $\bra{\xi_\kappa} \truncate{F}^\dagger \truncate{F} \ket{\xi_\kappa} \leq \zeta ||\truncate{F}||_{op}$, which is infinitesimal since $||\truncate{F}||_{op}$ is finite. \textit{(ii) implies (i)}: if $||\truncate{F}||_{op}$ is infinite, then for some $\ket{\psi_\kappa}$ of unit norm we have $\bra{\psi_\kappa}\truncate{F}^\dagger \truncate{F} \ket{\psi_\kappa} = \theta$ infinite; but then $\bra{\psi_\kappa}\frac{1}{\sqrt{\theta}}\truncate{F}^\dagger \truncate{F} \frac{1}{\sqrt{\theta}}\ket{\psi_\kappa} = 1$ is not infinitesimal, with $\frac{1}{\sqrt{\theta}}\ket{\psi_\kappa}$ infinitesimal. \textit{(ii) equivalent to (iii)}: by linearity of $\truncate{F}$.
\end{proof}

\noindent The following equivalence relation embodies the notion of convergence in operator norm:
\begin{equation}\label{eqn_InfinitesimalOpEquiv}
	\truncate{F} \sim \truncate{F'} \iffdef ||\truncate{F} - \truncate{F'}||_{op} \text{ is infinitesimal}.
\end{equation}
This equivalence relation is $\complexs$-linear, by triangle inequality, and it commutes with the dagger. It also commutes with composition and tensor product, as long as we restrict ourselves to continuous operators.
\begin{lemma}\label{thmNS_InfinitesimalOpEquiv}
	Suppose that $\truncate{F}$, $\truncate{F'}$, $\truncate{G}$ and $\truncate{G'}$ are all continuous. Then the following statements hold:
	\begin{align}
		\truncate{G} \cdot \truncate{F} \sim \truncate{G'} \cdot \truncate{F'} \text{ whenever both } \truncate{F} \sim \truncate{F'} \text{ and } \truncate{G} \sim \truncate{G'}, \nonumber \\
		\truncate{G} \otimes \truncate{F} \sim \truncate{G'} \otimes \truncate{F'} \text{ whenever both } \truncate{F} \sim \truncate{F'} \text{ and } \truncate{G} \sim \truncate{G'}. 
	\end{align}
\end{lemma}
\begin{proof}
	Bi-linearity of composition and tensor product, together with the triangle inequality, imply that the only statements we need to prove are the following:
	\begin{align}
		||\truncate{G} \cdot \xi_\kappa||_{op} &\text{ infinitesimal whenever } \truncate{G} \text{ continuous and } \xi_\kappa \text{ infinitesimal;} \nonumber \\
		||\zeta_\kappa \cdot \truncate{F}||_{op} &\text{ infinitesimal whenever } \truncate{F} \text{ continuous and } ||\zeta_\kappa||_{op} \text{ infinitesimal;} \nonumber \\
		||\truncate{G} \otimes \xi_\kappa||_{op} &\text{ infinitesimal whenever } \truncate{G} \text{ continuous and } ||\xi_\kappa||_{op} \text{ infinitesimal.}
	\end{align}
	The first statement follows from the fact that $||\truncate{G} \xi_\kappa||_{op} \leq ||\truncate{G}||_{op} ||\xi_\kappa||_{op}$, which is infinitesimal because $||\truncate{G}||_{op}$ is finite. The second statement goes similarly. The third statement is slightly trickier. Let $\ket{\psi_\kappa}$ be unit norm, and write $\ket{\psi_\kappa} = \sum_{n} \ket{\phi_\kappa^{(n)}} \ket{e_{n}}$ (where $(\ket{e_{n}})_{n}$ is the chosen orthonormal basis for the domain of $\xi_\kappa$). Then we have the following
	\begin{equation}
		\bra{\psi_\kappa} (\truncate{G} \otimes \xi_\kappa)^\dagger (\truncate{G} \otimes \xi_\kappa) \ket{\psi_\kappa} \leq \sum_{n'} \sum_{n} \braket{\phi_\kappa^{(n)}}{\phi_\kappa^{(n)}} ||\truncate{G}||_{op} |(\xi_\kappa)_{n'n}|^2 \leq ||\truncate{G}||_{op} ||\xi_\kappa||_{op},
	\end{equation}
	where the last product is infinitesimal because $||\truncate{G}||_{op}$ is finite.
\end{proof}

We say that a morphism $\truncate{G}$ is \textbf{near-standard} (in the operator norm) if it satisfies $\truncate{G} \sim \truncate{f}$ for some standard bounded linear map $f$. From now on, we will always use lowercase letters to denote standard bounded linear maps. Near-standard morphisms are in particular continuous, and form a sub-$\dagger$-SMC of $\starHilbCategory$, which we shall denote by $\starHilbCategoryNearStd$. If $\omega$ is some infinite natural, we denote by $\starHilbCategoryNearStd_\omega$ the full sub-category of $\starHilbCategoryNearStd$ having objects which are either finite-dimensional or have dimension $\omega$. By Lemma \ref{thmNS_InfinitesimalOpEquiv} both $\starHilbCategoryNearStd$ and $\starHilbCategoryNearStd_\omega$ can be enriched to become a strict $\dagger$-symmetric monoidal 2-categories. This observation finally allows us to relate our newly introduced category $\starHilbCategory$ to the more familiar $\sHilbCategory$. 

\noindent We define a strict \textbf{standard part} functor $\stdpart{\emptyArg}:\, \starHilbCategoryNearStd \rightarrow \sHilbCategory$ as follows:
\begin{enumerate}
	\item[(i)] $\stdpart{V,\ket{e_{n}}_{n=1}^\kappa} := V$;
	\item[(ii)] $\stdpart{\truncate{F}} := $ the unique $f$ such that $f$ is standard bounded and $\truncate{F} \sim \truncate{f}$.
\end{enumerate}
\noindent We fix an infinite non-standard natural $\omega$, and define a weak \textbf{lifting functor}, denoted by $\liftSym{\omega}: \sHilbCategory \rightarrow\, \starHilbCategoryNearStd$, as follows (with functoriality only up to $\sim$):
\begin{enumerate}
	\item[(i)] $\lift{V}{\omega} := (V,(\ket{e_{n}})_{n})$, where the orthonormal bases are chosen in such a way as to respect tensor product of $\starHilbCategoryNearStd$ (see the \cite{Gogioso2016b} for details);
	\item[(ii)] we have that $\lift{f}{\omega} := \truncate{f}$ on morphisms, and Lemma \ref{thmNS_InfinitesimalOpEquiv} guarantees that $\lift{ G \cdot F}{\omega} \sim \lift{G}{\omega} \cdot \lift{F}{\omega}$.
\end{enumerate}
For any fixed infinite natural $\omega$, the standard part functor restricts to a functor $\stdpart{\emptyArg}:\, \starHilbCategoryNearStd_\omega \rightarrow \sHilbCategory$, and the lifting functor restricts to a well-defined weak functor $\liftSym{\omega}: \sHilbCategory \rightarrow\, \starHilbCategoryNearStd_\omega$.
\begin{theorem}\label{thmNS_SeparableInStarHilb}
	The following results relate $\starHilbCategoryNearStd_\omega$ and $\sHilbCategory$:
	\begin{enumerate}
		\item[(i)] $\stdpart{\emptyArg}$ is a strict full functor of $\dagger$-SMCs, which is surjective on objects; 
		\item[(ii)] $\liftSym{\omega}$ is a weak faithful functor from a $\dagger$-SMC to a $\dagger$-symmetric monoidal 2-category, which is essentially surjective on objects; its restriction to the subcategory $\fdHilbCategory$ is strictly functorial;
		\item[(iii)] $\stdpart{\lift{f}{\omega}} = f$, for all standard bounded morphisms $f$;
		\item[(iv)] $\stdpart{\lift{V}{\omega}} = V$, for all objects $V$ of $\sHilbCategory$;
		\item[(v)] For all objects $\SpaceH$ of $\starHilbCategoryNearStd_\omega$, there is a (unique) standard unitary $\truncate{u}_\SpaceH : \SpaceH \rightarrow \lift{\stdpart{\SpaceH}}{\omega}$ such that $\stdpart{\truncate{u}_\SpaceH} = \id{\stdpart{\SpaceH}}$.
		\item[(vi)] $\truncate{u}_\SpaceG^\dagger \lift{\stdpart{\truncate{F}}}{\omega} \truncate{u}_\SpaceH \sim \truncate{F}$ for all morphisms $\truncate{F}: \SpaceH \rightarrow \SpaceG$ in $\starHilbCategoryNearStd_\omega$
	\end{enumerate}
\end{theorem}
\begin{proof}
\textit{Existence and uniqueness of definition of $\stdpart{\truncate{F}}$.}
	By definition, if $\truncate{F}$ is near-standard, at least one standard bounded linear map $f'$ exists such that $\truncate{F} \sim \truncate{f'}$. Now take two such standard bounded linear maps $f'$ and $f''$: by transitivity we get that $f' \sim f''$, i.e. that $||\truncate{f'}-\truncate{f''}||_{op}$ is infinitesimal; define $g := f'-f''$, standard bounded linear map. By transfer theorem (both directions), $\sqrt{\bra{\psi}g^\dagger g\ket{\psi}}$ is bounded above (by a standard constant $c \in \reals^+$, for all standard $\ket{\psi}$ satisfying $\braket{\psi}{\psi}=1$), if and only if $\sqrt{\bra{\psi}g^\dagger g\ket{\psi}}$ is also bounded above (by the same standard constant $c$, for all internal $\ket{\psi}$ such that $\braket{\psi}{\psi}=1$). Because $g$ is standard and bounded, $\sqrt{\bra{\psi}g^\dagger g\ket{\psi}}$ and $\sqrt{\bra{\psi}\truncate{g}^\dagger \truncate{g}\ket{\psi}}$ are infinitesimally close: as a consequence, if $||\truncate{f'}-\truncate{f''}||_{op}$ is infinitesimal, then it is bounded above by all standard reals $c>0$, and hence by the transfer theorem so is $||f'-f''||_{op}$. This proves that $||f'-f''||_{op}=0$, and we conclude that $f'=f''$.

\textit{Choice of orthonormal bases for $\liftSym{\omega}$.}
	Up to equivalence of categories, we can consider $\sHilbCategory$ as having objects given by all finite (possibly empty) tensor products of the following basic objects: the finite-dimensional Hilbert spaces $\complexs^p$ for all primes $p$, and the separable space $\ltwo{\naturals^+}$. Choose any orthonormal basis for each of the basic objects; denote them by $\ket{e^{(p)}_n}_{n=1}^p$ and $\ket{e^{(\infty)}_n}_{n=1}^{\infty}$. On basic objects, define $\lift{\complexs^p}{\omega} := (\complexs^p,\ket{e^{(p)}_n}_{n=1}^p)$ and $\lift{\ltwo{\naturals^+}}{\omega} := (\ltwo{\naturals^+},\ket{e^{(\infty)}_n}_{n=1}^\omega)$. Extend the definition to finite tensor products by using the tensor product of $\starHilbCategory$ (or, equivalently, by using the bijection $\varsigma$ from Equation (\ref{eqn_varsigma}) to explicitly construct a basis). 

\textit{Proof that $f \mapsto \truncate{f}$ is an injection.}
	Let $f,g$ be standard bounded linear maps, defined by matrices $(a_{nm})_{n,m\in \naturals^+}$ and $(b_{nm})_{n,m \in \naturals^+}$ respectively. The matrices can be extended by the transfer theorem to non-standard indices, and $\truncate{f}$ and $\truncate{g}$ have matrices $(a_{nm})_{\varsigma(n,m)=1}^{\kappa\nu}$ and $(b_{nm})_{\varsigma(n,m)=1}^{\kappa\nu}$. If $\truncate{f}=\truncate{g}$, then we have that $(a_{nm})_{\varsigma(n,m)=1}^{\kappa\nu}=(b_{nm})_{\varsigma(n,m)=1}^{\kappa\nu}$ as matrices, and in particular $a_{nm}=b_{nm}$ for all $n,m \in \naturals^+$, proving that $f=g$ in the first place. 

\textit{Weak functoriality of $\liftSym{\omega}$.}
	We begin by covering weak functoriality of $\liftSym{\omega}$, as it makes an interesting point by itself. Note that $\lift{g}{\omega} \cdot \lift{f}{\omega} = P_\SpaceG \circ g \circ P_\SpaceH \circ f \circ P_\SpaceK$, and that $\lift{g\cdot f}{\omega} = P_\SpaceG \circ g \circ f \circ P_\SpaceK$. In the infinite-dimensional case, if $f,g$ are standard bounded linear maps, then the standard series $a_{ln} := \sum_{m=0}^\infty g_{lm}f_{mn}$ converges for all fixed $l,n$, and the non-standard complex number $\sum_{m=0}^\kappa g_{lm}f_{mn}$ is infinitesimally close to the standard complex number $a_{ln}$. Hence $g \circ P_\SpaceH \circ g \sim g \circ f$, when seen as internal morphisms of non-standard Hilbert spaces. In the finite-dimensional case, there is no issue of truncation, and $\liftSym{\omega}$ is strictly functorial.

\textit{Proof of the main results.} 
\textit{Proof of (i).} The map $\stdpart{\emptyArg}$ is well-defined and monoidally functorial by Lemma \ref{thmNS_InfinitesimalOpEquiv}. It is full by the proof of existence/uniqueness given above, and surjective on objects by construction of $\starHilbCategory$ and $\sHilbCategory$. 
\textit{Proof of (ii).} The map $\lift{\emptyArg}{\omega}$ is weakly functorial by the argument given at the beginning of this proof (strictly functorial when restricted to $\fdHilbCategory$), and monoidally so by Lemma \ref{thmNS_InfinitesimalOpEquiv} and the choice of orthonormal bases presented above. Faithfulness was proven above (by showing that $f \mapsto \truncate{f}$ is an injection), and essential surjectivity follows from point (v) below. 
\textit{Proof of (iii).} We know from above that $\liftSym{\omega}$ is faithful, i.e. that $f \mapsto \truncate{f}$ is an injection. If $f$ is a standard bounded linear map, then the morphism $\stdpart{\lift{f}{\omega}}$ of $\sHilbCategory$ is the unique standard bounded linear map which is infinitesimally close (in operator norm) to $\truncate{f}$, i.e. it is $f$ itself. 
\textit{Proof of (iv).} By definition of the two functors.
\textit{Proof of (v).} By (iii), one such standard unitary $\truncate{u}_\SpaceH$ exists, namely by taking $u := \id{\SpaceH}$. Uniqueness follows because any such unitary must be infinitesimally close to the standard unitary $\truncate{u}_\SpaceH$ define above, and at most one such standard linear map exists.
\textit{Proof of (vi).} The morphism $\truncate{u}_{\SpaceG}^\dagger \lift{\stdpart{\truncate{F}}}{\omega} \truncate{u}_\SpaceH$ is infinitesimally close to its image under $\stdpart{\emptyArg}$, which is $\stdpart{\truncate{F}}$ by points (iii) and (v) above. Similarly, $\truncate{F}$ is infinitesimally close to its image under $\stdpart{\emptyArg}$, which is also $\stdpart{\truncate{F}}$. We conclude by transitivity/symmetry of $\sim$. 
\end{proof}

\subsection{How to use $\starHilbCategory$ for standard purposes}

\noindent The essence of Theorem \ref{thmNS_SeparableInStarHilb} is that $\sHilbCategory$ is equivalent to the subcategory $\starHilbCategoryNearStd$ of $\starHilbCategory$ given by near-standard morphisms in the operator norm, as long as we take care to equate morphisms which are infinitesimally close. The equivalence allows one to prove results about $\sHilbCategory$ by working in $\starHilbCategory$ and taking advantage of the CQM machinery introduced in the next Section. In a typical scenario, one might follow the following procedure, which is conceptually akin to using the two directions of the transfer theorem to prove results of standard analysis using non-standard methods:
\begin{enumerate}
	\item[(i)] start from $\sHilbCategory$; 
	\item[(ii)] lift to $\starHilbCategoryNearStd$ via the lifting functor;
	\item[(iii)] work in $\starHilbCategory$ to obtain a near-standard result (living again in $\starHilbCategoryNearStd$); 
	\item[(iv)] descend again to $\sHilbCategory$ via the standard part functor. 
\end{enumerate}
When proving equalities in $\sHilbCategory$, it is in fact sufficient to lift both sides via $\liftSym{\omega}$, and prove the equality in $\starHilbCategory$ without further constraints (this is because both sides will necessarily be lifted to $\starHilbCategoryNearStd$).

The arbitrary choice of infinite natural $\omega$ in the \inlineQuote{lifting phase} might seem unnatural at first, as the objects $\lift{V}{\omega}$ and $\lift{V}{\omega'}$ are not isomorphic in $\starHilbCategory$ for different infinite naturals $\omega \neq \omega'$. However, this is not actually an issue: from the perspective of $\sHilbCategory$, the two spaces are equivalent for all intents and purposes, because any proof that can be performed in one space can also be performed in the other. The following result makes this statement categorically precise.

\begin{lemma}
\label{lem_infDimEquivalenceFromStdViewpoint}
Let $\omega,\omega' \in \starNaturals$ be infinite natural numbers. Then the categories $\starHilbCategoryNearStd_\omega$ and $\starHilbCategoryNearStd_{\omega'}$ are weakly equivalent over $\sHilbCategory$, in the sense that there is a weak functor $\Phi_{\omega,\omega'}: \starHilbCategoryNearStd_\omega \rightarrow \starHilbCategoryNearStd_{\omega'}$ such that:
\begin{align}
\Big(\Phi_{\omega',\omega} \circ \Phi_{\omega,\omega'}\Big)(\SpaceH) = \SpaceH &\text{ for all objects } \SpaceH \text{ of } \starHilbCategoryNearStd_\omega \nonumber \\
\Big(\Phi_{\omega',\omega} \circ \Phi_{\omega,\omega'}\Big)(\truncate{F}) \sim \truncate{F} &\text{ for all morphisms } \truncate{F} \text{ of } \starHilbCategoryNearStd_\omega \nonumber
\end{align}
In particular, from the standard point of view of $\sHilbCategory$ we have that $\stdpartSym \circ\; \Phi_{\omega,\omega'} = \stdpartSym$, so that the categories $\starHilbCategoryNearStd_\omega$ and $\starHilbCategoryNearStd_{\omega'}$ are indistinguishable for standard purposes.
\end{lemma}
\begin{proof}
Define the functor $\Phi_{\omega,\omega'}$ as follows: 
\begin{equation}
\Phi_{\omega,\omega'}(V,\ket{e_n}_{n=1}^{\omega}) := (V,\ket{e_n}_{n=1}^{\omega'}) \hspace{2cm} \Phi_{\omega,\omega'}(\truncate{F}) := \truncate{F}
\end{equation}
Because all morphisms $\truncate{F}$ of $\starHilbCategoryNearStd_{\omega}$ are near-standard, this is clearly a weak functor of symmetric monoidal 2-categories: it respects composition and tensor product of morphisms only up to infinitesimal equivalence $\sim$, because the truncation $\truncate{F}$ is performed with different truncating projectors in $\starHilbCategoryNearStd_{\omega}$ and $\starHilbCategoryNearStd_{\omega'}$. By their very definition, $\Phi_{\omega,\omega'}$ and $\Phi_{\omega',\omega}$ establish a weak equivalence between $\starHilbCategoryNearStd_{\omega}$ and $\starHilbCategoryNearStd_{\omega'}$, and the equation $\stdpartSym \circ\; \Phi_{\omega,\omega'} = \stdpartSym$ is also a straightforward check.
\end{proof}
\newpage
\noindent There are two main kinds of proofs that can be performed using $\starHilbCategory$.
\begin{itemize}
	\item In one kind of proof, we have standard maps which can be expressed as compositions of non-standard maps with nicer algebraic/diagrammatic properties. This is the case, for example, of the proof of the Weyl Canonical Commutation Relations: the time-translation unitary (standard) is expressed as composition of the Frobenius algebra multiplication for the momentum observable (standard) and a position eigenstate (not near-standard), while the the momentum-boost unitary (standard) is expressed as composition of the Frobenius algebra multiplication for the position observable (not near-standard) and a momentum eigenstate (standard). This is also the case in the proof of Stone's Theorem on 1-parameter unitary groups (in the case of continuous periodic dynamics, where a symmetric cup is used to turn the unitary dynamics into the observable associated with their invariant.
	\item In the other kind of proof, we have equalities between standard maps which involve limits: these are lifted to equalities between non-standard maps where the limits have been absorbed into appropriate limiting objects (not necessarily near-standard), allowing the proof to be carried out algebraically. This is the case, for example, of the proofs of von Neumann's Mean Ergodic Theorem and Stone's Theorem on 1-parameter unitary groups (in the case of continuous periodic dynamics) : in both cases, the limit of a sum of standard maps is replaced by composition with the (not near-standard) counit of a Frobenius algebra, so that the proof can be carried out algebraically.
\end{itemize}
\noindent Despite the remarks above, one should not necessarily discount $\starHilbCategory$ as just being a category of handy mathematical tricks: a number of objects of concrete interest in the everyday practice of quantum mechanics (such as the position/momentum observables and eigenstates) are native to that richer environment, and help confer it its own independent dignity. Of course, the existence of multiple inequivalent choices of infinite natural dimension raises concerns with the physical interpretation of these non-standard objects, but Lemma \ref{lem_infDimEquivalenceFromStdViewpoint} guarantees an essential equivalence of different choices of infinite natural dimension $\omega$ from the point of view of standard quantum theory. The same limiting objects (e.g. Diract deltas or plane-waves) for different choices of $\omega$ should effectively be treated as incarnations of the same conceptual objects corresponding to different choices of \inlineQuote{infinite cutoff}. This point of view is more evident in the recent work of \cite{Gogioso2017}.

\subsection{Infinite-dimensional categorical quantum mechanics}
\label{subsection_InfiniteDimCQM}

The main motivation behind our use of non-standard analysis comes from the work of \cite{Abramsky2012b} on commutative $H^\star$-algebras, a particular class of non-unital special commutative $\dagger$-Frobenius algebras\footnote{In \cite{Abramsky2012b}, non-unital special commutative $\dagger$-Frobenius algebras are simply referred to as \textit{Frobenius algebras}. We refer to them in full as special commutative $\dagger$-Frobenius algebras, and we will specify \textit{non-unital} or \textit{unital} explicitly.} (non-unital $\dagger$-SCFAs, in short). It is an established result that approximate units for the algebras exist in separable Hilbert spaces: we will show that, in our non-standard framework, they can be made truly unital.

\begin{theorem}[From \cite{Abramsky2012b}]
	A non-unital $\dagger$-SCFA $(\!\hbox{\input{symbols/ZbwcomultSym.tex}}\!\!,\!\hbox{\input{symbols/ZbwmultSym.tex}}\!\!\!)$ on an object $V$ of $\sHilbCategory$ is an $H^\star$-algebra if and only if it corresponds to an orthonormal basis $\ket{e_{n}}_{n\in \naturals^+}$ of $V$ such that $\!\hbox{\input{symbols/ZbwcomultSym.tex}}\!\! \circ \ket{e_n} = \ket{e_n}\ket{e_n}$.
\end{theorem}

\begin{theorem}[From \cite{Abramsky2012b,Ambrose1945}]
	A non-unital $\dagger$-SCFA $(\!\hbox{\input{symbols/ZbwcomultSym.tex}}\!\!,\!\hbox{\input{symbols/ZbwmultSym.tex}}\!\!\!)$ on an object $V$ of $\sHilbCategory$ is an $H^\star$-algebra if and only if there is a sequence $\ket{E_n}_{n \in \naturals^+}$ such that for all $\ket{a} : \complexs \rightarrow V$ we have:
	\begin{enumerate}
		\item[(i)] $\!\hbox{\input{symbols/ZbwmultSym.tex}}\!\! \circ (\ket{E_n} \otimes \ket{a})$ converges to $\ket{a}$;
		\item[(ii)] $(\id{V} \otimes \bra{a}) \circ \!\hbox{\input{symbols/ZbwcomultSym.tex}}\!\! \circ \ket{E_n}$ converges.
	\end{enumerate}
	If this is so, then we can take $\ket{E_{n}} =: \sum_{n'\leq n} \ket{e_{n'}}$.
\end{theorem}

\noindent The sequence $\ket{E_{n}}_{n \in \naturals^+}$ itself doesn't converge in $\sHilbCategory$, because the state $\sum_{n \in \naturals^+} \ket{e_{n}}$ would have infinite norm. In our non-standard context, however, the state $\sum_{n=1}^{\kappa} \ket{e_n}$ is a well-defined, internal state for $\SpaceH = (V,\ket{e_n}_{n=1}^\kappa)$. This opens the way to the definition of unital $\dagger$-SCFAs on all objects of $\starHilbCategory$.

\begin{theorem}\label{thmNS_ClassicalStructures}
	Let $\SpaceH = (V,\ket{e_n}_{n=1}^\kappa)$ be an object in $\starHilbCategory$, and $\ket{f_{n}}_{n \in \naturals^+}$ be a standard orthonormal basis for $V$. Then the following comultiplication and counit define a \textbf{weakly unital}, \textbf{weakly special} commutative $\dagger$-Frobenius algebra on $\SpaceH$ (i.e. one where the Unit and Speciality laws hold only up to $\sim$):
	\begin{equation}\label{nonStdComultCounit}
		\input{pictures/chapter3/nonStdComultCounit.tikz}
	\end{equation}	
	We refer to it as the \textbf{classical structure}\footnote{The terminology \textit{classical structure}, in the context of $\starHilbCategory$, will refer to weakly unital, weakly special, commutative $\dagger$-Frobenius algebras. This is in accordance with the weak functoriality of $\liftSym{\omega}$ seen in the previous section.} 
	for $\ket{f_n}_n$. When $\ket{f_n}_{n}$ is  the \textbf{chosen orthonormal basis} $\ket{e_n}_n$ for $\SpaceH$, the algebra is strictly unital and strictly special, i.e. a unital $\dagger$-SCFA.
\end{theorem}
\begin{proof}
	Associativity and Frobenius laws hold with strict equalities (not up to $\sim$, despite involving composition of standard bounded linear maps), exactly as shown in \cite{Abramsky2012b}. Commutativity also holds with strict equality. The only things left to check are a Unit law and the Speciality law.
	\begin{align}
		\!\hbox{\input{symbols/DrightcounitLawSym.tex}}\!\! &=(\id{\SpaceH} \otimes \sum\limits_{m=1}^\kappa\bra{f_m})\cdot \sum\limits_{n=1}^\kappa \ket{f_n}\otimes\ket{f_n}\;\bra{f_n} \nonumber \\
		& \sim \sum\limits_{m=1}^\kappa\sum\limits_{n=1}^\kappa \ket{f_n}\braket{f_m}{f_n}\bra{f_n} = \sum\limits_{n=1}^\kappa \ket{f_n}\bra{f_n} \sim \id{\SpaceH}
	\end{align}
	\begin{align}
		\!\hbox{\input{symbols/DspecialtyLawSym.tex}}\!\! &=(\sum\limits_{m=1}^\kappa\ket{f_m}\;\bra{f_m}\otimes\bra{f_m})\cdot (\sum\limits_{n=1}^\kappa \ket{f_n}\ket{f_n}\bra{f_n}) \nonumber \\
		&\sim \sum\limits_{m=1}^\kappa\sum\limits_{n=1}^\kappa \ket{f_m}\braket{f_m}{f_n}^2\bra{f_n} = \sum\limits_{n=1}^\kappa \ket{f_n}\bra{f_n} \sim \id{\SpaceH}
	\end{align}
	Finally, if $\ket{f_n}_n$ is the chosen orthonormal basis $\ket{e_n}_n$ for $\SpaceH$, then the $\sim$ in the previous equations are in fact $=$, and the classical structure is a strictly unital, strictly special commutative $\dagger$-Frobenius algebra\footnote{\label{footnote_SimIntoEq}The $\sim$ become $=$ because the identity takes the exact form $\id{\SpaceH} = \sum_{n=1}^\kappa \ket{f_n}\bra{f_n}$, rather than the approximate form $\id{\SpaceH} \sim \sum_{n=1}^\kappa \ket{f_n}\bra{f_n}$, when $\ket{f_n}_{n=1}^\kappa$ is the chosen ort'l basis.}.
\end{proof}

\noindent In fact, it is not hard to show that $\starHilbCategory$ is a dagger compact category.
\begin{theorem}\label{thmNS_CompactClosed}
	The category $\starHilbCategory$ is compact closed. The \textbf{dual} of an object $\SpaceH = (V,\ket{e_n}_{n=1}^\kappa)$ is defined by $\SpaceH^\ast := (V^\ast,\ket{\xi_n}_{n=1}^\kappa)$, where $\ket{\xi_n}$ is the adjoint of $\ket{e_n}$ seen as a state of $V^\ast$. The \textbf{cap} and \textbf{cup} on $\SpaceH$ are defined as follows:
	\begin{equation}\label{nonStdCupCap}
		\input{pictures/chapter3/nonStdCupCap.tikz}
	\end{equation}	
	More in general, any classical structure in $\starHilbCategory$ can be used to define a \textbf{(weak) symmetric cap} and a \textbf{(weak) symmetric cup}, satisfying (weak)\footnote{By a \textbf{weak} equation we will henceforth mean one which is satisfied only up to $\sim$.} yanking equations.
\end{theorem}
\begin{proof}
	Weak yanking equations follow from the Frobenius law and weak Unit laws of any classical structure in $\starHilbCategory$. When the classical structure is that of the chosen orthonormal basis, the strict Unit laws result in strict yanking equations, yielding legitimate cups and caps (again because of the exact resolution of the identity into $\id{\SpaceH} = \sum_{n=1}^\kappa \ket{f_n}\bra{f_n}$).  
\end{proof}

\noindent The compact closed structure gives rise to a \textbf{trace} in the usual way:
\begin{equation}\label{nonStdTrace}
	\input{pictures/chapter3/nonStdTrace.tikz}
\end{equation}	

\noindent In particular, we see that the notation $\dim{\SpaceH} := \kappa$ for $\SpaceH = (V,\ket{e_n}_{n=1}^\kappa)$ was well chosen: $\Trace{\id{\SpaceH}} = \sum_{n=1}^\kappa 1 = \kappa = \dim{\SpaceH}$. The trace can also be used to endow the homset $\Hom{\starHilbCategory}{\SpaceH}{\SpaceG}$, which we have already seen to be a $\starComplexs$-vector space, with the following $\starComplexs$-valued \textbf{Hilbert-Schmidt inner product}:
\begin{equation}
	\innerprod{\truncate{G}}{\truncate{F}} := \Trace{\truncate{G}^\dagger \truncate{F}} = \sum_{m=1}^{\dim{\SpaceG}} \sum_{n=1}^{\dim{\SpaceH}} \truncate{G}_{mn}^\star \truncate{F}_{mn}.
\end{equation}
This is exactly the inner product that one would get by enriching the category $\starHilbCategory$ in itself via compact closure.

\subsection{Wavefunctions with periodic boundaries}
\label{subsubsection_WavefunctionsPeriodic}

As a sample application of the structures presented above, we cover the theory of wavefunctions on a 1-dimensional space with periodic boundary conditions: these live in $\Ltwo{\reals/(L \integers)} \isom \Ltwo{S^1}$, where $L$ is the length of the underlying space. The \textbf{momentum eigenstates}, or \textbf{plane-waves}, form a countable orthogonal basis for $\Ltwo{\reals/(L \integers)}$, indexed by $n \in \integers$ (in this section, $n,m,k,h$ will range over integers):
\begin{equation}
	\chi_n :=  x \mapsto  e^{-i (2\pi/L)nx}.
\end{equation}
The plane-wave $\ket{\chi_n}$ is the eigenstate of momentum $n \hbar$. Let $\theta(n) := |2n| + \frac{1-\operatorname{sign}(n)}{2}$ (with $\operatorname{sign}(0):=-1$) be a bijection $\integers \rightarrow \naturals^+$. We can obtain a countable orthonormal basis $\ket{e_{l}}_{l \in \naturals^+}$ for $\Ltwo{\reals/(L\integers)}$ as follows: 
\begin{equation}
	\ket{e_{\theta(n)}} := \frac{1}{\sqrt{L}} \ket{\chi_{n}} \text{ for all } n \in \integers.
\end{equation}
Now we shift our attention to the object $(\Ltwo{\reals/(L\integers)}, \ket{e_l}_{l=1}^{\kappa})$ of $\starHilbCategory$, with $\kappa = 2 \omega + 1$ some odd infinite natural\footnote{The notions of oddness and evenness extend from $\naturals$ to $\starNaturals$ by the transfer theorem, and by saying that some infinite non-standard natural $\kappa \in \starNaturals$ is odd we mean exactly that $\kappa = 2 \omega + 1$ for some (necessarily infinite) non-standard natural $\omega \in \starNaturals$. Note that the infinite natural $\omega$ here has nothing to do with the ordinal $\omega$ from set theory.}. As a shorthand for $\sum_{l=1}^{\kappa} \ket{\chi_{\theta^{-1}(l)}}$, and other cases where the index is bijected to the integers, we will simply re-index over the non-standard integers $\{-\omega,...,+\omega\}$ (such as in $\sum_{n=-\omega}^{+\omega} \ket{\chi_n}$). In particular, we will write our chosen object as $(\Ltwo{\reals/(L\integers)},\frac{1}{\sqrt{L}}\ket{\chi_n}_{n=-\omega}^{+\omega})$, or simply $\Ltwo{\reals/(L\integers)}$ \vspace{-1mm} when no confusion can arise. Now that we established the role of momentum eigenstates in our framework, it is time to turn our attention to position eigenstates. On a continuous space, position eigenstates are given by Dirac delta functions, and as a consequence are not associated with well-defined standard vectors. Here, we will define them in terms of the basis of momentum eigenstates, and then show that they coincide with their more traditional formulation in terms of Dirac deltas. Let $x_0 \in \nonstd{\big(\reals/(L\integers)\big)}$ be a point of the underlying space, then we define the \textbf{position eigenstate} at $x_0$ to be the following non-standard state:
\begin{equation}
	\ket{\delta_{x_0}} := \frac{1}{\sqrt{L}}\sum_{n=-\omega}^{+\omega}  \chi_{n}(x_0)^\ast \frac{1}{\sqrt{L}}\ket{\chi_n}.
\end{equation}  

\begin{theorem}\label{thmNS_PositionEigenstates}
	The position eigenstates are weakly orthogonal at standard points. Furthermore, they behave as \textbf{Dirac deltas}, i.e. they satisfy $\braket{\delta_{x_0}}{f} \simeq f(0)$ for all standard smooth $f \in \Ltwo{\reals/(L\integers)}$ and all standard points $x_0 \in \reals/(L\integers)$. The position eigenstates are also unbiased with respect to the momentum eigenstates, in the sense that $|\braket{\delta_{x_0}}{\chi_n}|=1$ independently of $n$ or $x_0$.
\end{theorem} 
\begin{proof}
	The proof that the state $\ket{\delta_{x_0}}$ satisfies $\braket{\delta_{x_0}}{f} \simeq f(0)$ for all standard smooth $f \in \Ltwo{\reals/L\integers}$ hinges on the transfer theorem, together with the following standard result from Fourier theory:
	\begin{equation}
		\frac{1}{\sqrt{L}}\sum_{n=-N}^{N} e^{-i(2\pi/L)x_0n}\frac{1}{\sqrt{L}}\braket{\chi_n}{f} = \frac{1}{L}\int_{\reals/L\integers} \Big(\sum_{n=-N}^{N} e^{i(2\pi/L)(x-x_0)n}\Big) f(x) dx\; \stackrel{N \rightarrow \infty}{\longrightarrow}\; f(x_0).
	\end{equation}
	To show orthogonality, we repeat the reasoning above in the special case of $\ket{f} := \sum_{m=-M}^{M}\big(\frac{1}{L}e^{i(2\pi/L)x_1 m}\big)\ket{\chi_m}$, for some $x_1 \neq x_0$. Two limits and two applications of the transfer theorem yield the desired result (we cannot do $\braket{\delta_{x_0}}{\delta_{x_1}} \simeq 0$ directly because $\ket{\delta_{x_1}}$ is not a standard smooth function):
	\begin{align}
		&\frac{1}{\sqrt{L}}\sum_{n=-N}^{N}\frac{1}{\sqrt{L}}\sum_{m=-M}^{M} \big(e^{-i(2\pi/L)x_0n}\frac{1}{\sqrt{L}}\big)\braket{\chi_n}{\chi_m}\big(e^{i(2\pi/L)x_1 m} \frac{1}{\sqrt{L}}\big) =\nonumber \\
		= &\frac{1}{L^2}\int_{\reals/L\integers} \Big(\sum_{n=-N}^{N}\sum_{m=-M}^{M} e^{i(2\pi/L)\big((x-x_0)n+(x-x_1)m\big)}\Big) dx\; \stackrel{N,M \rightarrow \infty}{\longrightarrow}\; 0.
	\end{align}
	Finally, the position eigenstates are clearly unbiased for the momentum eigenstates: by the first part of this proof, any given position eigenstate $\ket{\delta_{x_0}}$ satisfies $|\braket{\delta_{x_0}}{\chi_n}|^2 \simeq 1$ for all momentum eigenstates $\ket{\chi_n}$, independently of $n$.
\end{proof}

In the first part of this Chapter, we have defined coherent groups starting from the position observable $\hbox{\input{symbols/ZbwdotSym.tex}}\!\!$ and translation symmetry, and we have proven that $\hbox{\input{symbols/DdotSym.tex}}\!\!$ is the momentum observable. In this Section, we will take the opposite approach: we will start from the momentum observable $\hbox{\input{symbols/DdotSym.tex}}\!\!$ given by the plane-waves, then define the boost symmetry on momentum eigenstates, and finally prove that there is a corresponding position observable $\hbox{\input{symbols/ZbwdotSym.tex}}\!\!$ given by the Dirac deltas defined above (strongly complementary to the momentum observable).

We begin by showing explicitly that momenta generate the translation action of $\nonstd{\big(\reals/(L\integers)\big)}$ on the Dirac deltas.

\begin{theorem}\label{thmNS_MomentaGenerateTranslations}
	Let $(\!\hbox{\input{symbols/DcomultSym.tex}}\!\!,\!\hbox{\input{symbols/DcounitSym.tex}}\!\!,\!\hbox{\input{symbols/DmultSym.tex}}\!\!,\!\hbox{\input{symbols/DunitSym.tex}}\!\!)$ be the classical structure for the chosen orthonormal basis of normalised momentum eigenstates. Then the monoid $(\!\hbox{\input{symbols/DmultSym.tex}}\!\!,\!\hbox{\input{symbols/DunitSym.tex}}\!\!)$ endows the set $\suchthat{\sqrt{L}\ket{\delta_x}}{x \in \nonstd{\big(\reals/(L\integers)\big)}}$ of position eigenstates with the abelian group structure of position-space translation $\Big(\nonstd{\big(\reals/(L\integers)\big)},\oplus,0\Big)$:
	\begin{equation}\label{nonStdMomentaGenerateTranslations}
		\input{pictures/chapter3/nonStdMomentaGenerateTranslations.tikz}
	\end{equation}
	This can be equivalently written in the following form:
	\begin{equation}\label{eqn_MomentaGenerateTranslationsExplicit}
		\ket{\delta_{x \oplus y}} = \Big[\frac{1}{\sqrt{L}^2}\sum_{n=-\omega}^{+\omega} \chi_{n}(x)^\ast \ket{\chi_n}\bra{\chi_n} \Big] \ket{\delta_y},
	\end{equation}
	Note that $\chi_{n}(x)^\ast = \braket{\chi_n}{\delta_x} $ is nothing but $ \exp[i \frac{x\,p}{\hbar}]$ for a given (quantised) momentum eigenvalue $p = (n \hbar)/L$ and corresponding momentum eigenstate $\ket{\chi_n}$. 
\end{theorem}
\begin{proof}
	Using the definition of the classical structure for momentum eigenstates, together with the definition of the position eigenstates, we obtain the desired equalities:
	\begin{align}
		\!\hbox{\input{symbols/DmultSym.tex}}\!\! \circ (\sqrt{L}\ket{\delta_x} \sqrt{L}\ket{\delta_y}) &= \Big[\frac{1}{\sqrt{L}^3}\sum_{n=-\omega}^{+\omega} \ket{\chi_n}\bra{\chi_n}\bra{\chi_n} \Big]\sqrt{L}\ket{\delta_x}\sqrt{L}\ket{\delta_y} \nonumber \\
		&= \Big[\frac{1}{\sqrt{L}^2}\sum_{n=-\omega}^{+\omega} \chi_{n}(x)^\ast \ket{\chi_n}\bra{\chi_n} \Big] \sqrt{L}\ket{\delta_y} \nonumber \\
		&= \frac{1}{\sqrt{L}}\sum_{n=-\omega}^{+\omega} \chi_{n}(x)^\ast \chi_{n}(y)^\ast \ket{\chi_n} \nonumber \\
		&= \frac{1}{\sqrt{L}}\sum_{n=-\omega}^{+\omega} \chi_{n}(x \oplus y)^\ast \ket{\chi_n} = \sqrt{L}\ket{\delta_{x \oplus y}}.
	\end{align}
\end{proof}

While the momentum observable is embedded in our very definition of the object $\Big(\Ltwo{\reals/(L\integers)},\ket{e_l}_{l=1}^\kappa\Big)$, the definition of a position observable is not as straightforward, because the position eigenstates don't form a \textit{countable} orthonormal basis. Instead of defining the observable directly, we appeal to our understanding of symmetry-observable duality in coherent groups: we first define the boost symmetry $(\classicalStates{\hbox{\input{symbols/DdotSym.tex}}\!\!},\!\hbox{\input{symbols/ZbwmultSym.tex}}\!\!,\!\hbox{\input{symbols/ZbwunitSym.tex}}\!\!)$ on momentum eigestates, and only then we show that it is a unital $\dagger$-qSCFA behaving as expected from the position observable. 

Consider the binary function $a \oplus b := \modclass{a+b}{2N+1}$, where representatives for the $2N+1$ remainder classes are chosen in the set $\{-N,...,+N\}$: for every $N$, this function is defined in the standard theory of $\integers$, and endows $\{-N,...,+N\}$ with the group structure of $\integersMod{2N+1}$. By transfer theorem, a similar group operation exists on the internal set $\{-\omega, ..., +\omega\}$ of $\starIntegers$, endowing it with the group structure of $\starIntegersMod{2\omega+1}$. Remarkably, for any two finite integers $n,m \in \integers$ we have $n \oplus m = n + m$ (because $n,m < \omega$ implies $n+m < \omega$, so no modular reduction occurs). Now consider the following morphisms of $\starHilbCategory$:
\begin{equation}\label{nonStdBoostStruct}
	\input{pictures/chapter3/nonStdBoostStruct.tikz}
\end{equation}

\begin{theorem}\label{thmNS_GroupAlgebra}
	$(\!\hbox{\input{symbols/ZbwcomultSym.tex}}\!\!,\!\hbox{\input{symbols/ZbwcounitSym.tex}}\!\!,\!\hbox{\input{symbols/ZbwmultSym.tex}}\!\!,\!\hbox{\input{symbols/ZbwunitSym.tex}}\!\!)$ is a unital commutative $\dagger$-Frobenius algebra, the \textbf{group algebra} of $\starIntegersMod{2\omega+1}$. It is quasi-special, with normalisation factor $N_{\hbox{\input{symbols/ZbwdotSym.tex}}\!\!}=(2\omega + 1)$. Furthermore, it coherently copies, adjoins and deletes the (rescaled) position eigenstates, as long as the position $x$ takes the form $x = j \frac{L}{2\omega+1} \in \stdpart{\reals/(L\integers)}$ for some $j \in \starIntegersMod{2\omega+1}$ (i.e. we have that $x \in \frac{L}{2\omega+1}\starIntegersMod{2\omega+1}$)\footnote{Note the very interesting duality which emerges between the large-scale cutoff on the momentum $k$ (which has magnitude bounded above by $\omega$) and the small-scale cutoff on the position $x$ (which must be an integer multiple of $\frac{L}{2\omega+1}$).}:
	\begin{equation}\label{nonStdPositionEigenstatesCopy}
		\input{pictures/chapter3/nonStdPositionEigenstatesCopy.tikz}
	\end{equation}	
	\begin{equation}\label{nonStdPositionEigenstatesAdj}
		\input{pictures/chapter3/nonStdPositionEigenstatesAdj.tikz}
	\end{equation}	
	\begin{equation}\label{nonStdPositionEigenstatesDelete}
		\input{pictures/chapter3/nonStdPositionEigenstatesDelete.tikz}
	\end{equation}	
	As a consequence, we will also refer to it as the \textbf{classical structure for position eigenstates}, or as the \textbf{position observable}.
\end{theorem}
\begin{proof}
	Commutative, Associative and Unit laws can be proven on the monoid using the corresponding laws for $(\oplus,0)$. Frobenius law follows from the following re-indexing, with $k' := k\oplus n$:
	\begin{align}
		(\!\hbox{\input{symbols/ZbwmultSym.tex}}\!\! \otimes \id{})\circ(\id{} \otimes \!\hbox{\input{symbols/ZbwcomultSym.tex}}\!\!) &= \frac{1}{\sqrt{L}^4}\sum_{n=-\omega}^{+\omega} \sum_{m=-\omega}^{+\omega} \Big[\;\,\sum_{k\oplus h = m} \ket{\chi_{n\oplus k}}\otimes\ket{\chi_{h}}\;\bra{\chi_n}\otimes\bra{\chi_m}\Big] = \nonumber \\
		&= \frac{1}{\sqrt{L}^4}\sum_{n=-\omega}^{+\omega} \sum_{m=-\omega}^{+\omega}\Big[\sum_{k'\oplus h = n \oplus m} \ket{\chi_{k'}}\otimes\ket{\chi_{h}}\;\bra{\chi_n}\otimes\bra{\chi_m}\Big] \nonumber \\
		& = \!\hbox{\input{symbols/ZbwcomultSym.tex}}\!\! \circ \!\hbox{\input{symbols/ZbwmultSym.tex}}\!\!
	\end{align}
	The algebra is obviously a group algebra, and hence it is quasi-special with normalisation factor $(2 \omega + 1)$.
	The fact that position eigenstates are copied is a straightforward check, with a re-indexing $n' := n \ominus k$ in the second-to-last step:
	\begin{align}
		\!\hbox{\input{symbols/ZbwcomultSym.tex}}\!\! \circ \big(\sqrt{L} \ket{\delta_x}\big) &= \frac{1}{\sqrt{L}^2}\sum_{n=-\omega}^{+\omega} \sum_{k=-\omega}^{+\omega} \ket{\chi_k}\otimes \ket{\chi_{n\ominus k}}\; \braket{\chi_n}{\delta_x} = \nonumber \\
		&= \frac{1}{\sqrt{L}^2}\sum_{n=-\omega}^{+\omega} \sum_{k=-\omega}^{+\omega} \ket{\chi_k}\otimes \ket{\chi_{n\ominus k}}\; \chi_{n}(x)^\ast = \hspace{2cm}\nonumber \\
		\hspace{2cm}&= \frac{1}{\sqrt{L}^2}\sum_{n=-\omega}^{+\omega} \sum_{k=-\omega}^{+\omega} \ket{\chi_k} \otimes\ket{\chi_{n\ominus k}}\; \chi_{k}(x)^\ast \chi_{n\ominus k}(x)^\ast e^{i2\pi \frac{2\omega+1}{L} sx} = \nonumber \\
		&= \Big[\frac{1}{\sqrt{L}}\sum_{n'=-\omega}^{+\omega} \chi_{n'}(x)^\ast \ket{\chi_{n'}}\Big] \otimes \Big[\frac{1}{\sqrt{L}}\sum_{k=-\omega}^{+\omega}  \chi_{k}(x)^\ast\ket{\chi_k}\Big]= \nonumber \\
		&= \sqrt{L}\ket{\delta_x} \otimes \sqrt{L} \ket{\delta_x}.
	\end{align}
	In the third line, the extra phase $e^{i2\pi \frac{2\omega+1}{L} s_{n,k}x}$ appears because $\chi_{n}$ is a character of $\integers$, not of $\starIntegersMod{2\omega+1}$: the value of $s_{n,k} \in \{-1,0,+1\}$ keeps track of whether some modular reduction was necessary to go from $k \oplus (n\ominus k)$ to $n$. It is cancelled out if and only if we require $x$ to be in the form $x = j \frac{L}{2\omega+1}$, for some $j \in \starIntegersMod{2\omega+1}$: hence a duality between the large-scale cutoff of momentum and the small-scale cutoff of position arises as a consequence of a purely algebraic requirement in the non-standard framework. The adjoint and delete conditions have proofs that go along similar lines. 
\end{proof}

\begin{remark} 
Because the position eigenstates act as Dirac deltas on the smooth standard functions, the delete condition above means that the rescaled counit $\sqrt{L}\!\hbox{\input{symbols/ZbwcounitSym.tex}}\!\!$ defines the \textbf{integral operator}: $\sqrt{L}\!\hbox{\input{symbols/ZbwcounitSym.tex}}\!\! = \ket{\truncate{f}} \mapsto \int_{\reals/(L\integers)} f(x) dx$. Furthermore, the position eigenstates are actually orthogonal (and not only for standard points): this is because they are the classical states of a quasi-special commutative $\dagger$-Frobenius algebra in a SMC with scalars forming a field \cite{Coecke2013b}.
\end{remark}

\begin{theorem}\label{thmNS_StrongComplementarity}
The position and momentum observables defined above form a doubly well-pointed coherent group $(\hbox{\input{symbols/ZbwdotSym.tex}}\!\!,\hbox{\input{symbols/DdotSym.tex}}\!\!)$. 
\end{theorem}
\begin{proof}
We begin by observing that both $\hbox{\input{symbols/DdotSym.tex}}\!\!$ and $\hbox{\input{symbols/ZbwdotSym.tex}}\!\!$ have enough classical states: the $\dagger$-SCFA $\hbox{\input{symbols/DdotSym.tex}}\!\!$ has enough classical states by construction, while the $\dagger$-qSCFA $\hbox{\input{symbols/ZbwdotSym.tex}}\!\!$ can be seen to have enough classical states because the position eigenstates are enough to distinguish all momentum eigenstates. Then the laws of complementarity and strong complementarity follow immediately from the fact that: (i) the momentum and position eigenstates are mutually unbiased (by Theorem \ref{thmNS_PositionEigenstates}), (ii) the momentum eigenstates form group under $(\!\hbox{\input{symbols/ZbwmultSym.tex}}\!\!,\!\hbox{\input{symbols/ZbwunitSym.tex}}\!\!)$ (by the very definition of the group algebra for $\starIntegersMod{2\omega+1}$), and (iii) position eigenstates form group under $(\!\hbox{\input{symbols/DmultSym.tex}}\!\!,\!\hbox{\input{symbols/DunitSym.tex}}\!\!)$ (by Thm \ref{thmNS_MomentaGenerateTranslations}).
\end{proof}

\begin{corollary}[\textbf{Weyl CCRs}]\hfill\\
For all $x \in \reals/(L \integers)$, let $U_x$ be the unitary on $\Ltwo{\reals/(L \integers)}$ corresponding to space-translation of wavefunctions (with periodic boundary conditions) by $x$. For all $k \in \integers$, let $V_k$ be the unitary corresponding to momentum-boost by $k \hbar$. Then the following braiding relations hold between the two unitaries:
\begin{equation}\nonumber 
V_k U_x = e^{i \frac{2 \pi}{L} k \cdot x} \; U_x V_k
\end{equation}
\end{corollary}
\begin{proof}
The proof is a straightforward consequences of Theorems \ref{thm_WeylCCRs} and \ref{thmNS_StrongComplementarity}, and exemplifies an application of tools from $\starHilbCategory$ to obtain a simple algebraic proof of an iconic result of standard quantum mechanics. Indeed, if $x' := j\frac{L}{2\omega+1} \in \nonstd{\reals/(L\integers)}$ is any near-standard point such that $x' \simeq x$, then the following holds:
\begin{equation}\label{WeylCCRProofWavefunctionsBoundaryConditions}
	\resizebox{\textwidth}{!}{\input{pictures/chapter3/WeylCCRProofWavefunctionsBoundaryConditions.tikz}}
\end{equation}
\end{proof}

\begin{remark}
The methods presented here can be extended to the case of wavefunctions on spaces with a compact or discrete abelian group of translations (such as tori or lattices). A further extension is possible to certain locally compact groups, such as $(\reals,+,0)$, albeit requiring some additional finesse. These developments are detailed in the recent \cite{Gogioso2017}.
\end{remark}








\newpage 
\setcounter{section}{5} 
\section{Quantum dynamics}
\label{section_finiteCyclic}

\subsection{A traditional perspective on quantum dynamics}

In the traditional formulation of quantum mechanics, a quantum dynamical system is a quantum system $\SpaceH$ equipped with a prescribed Hamiltonian $\textbf{H}$. By Stone's Theorem, this is the same as saying that it is a quantum system equipped with a 1-parameter unitary group $(U_t)_{t \in \reals}$: as such, dynamics can be treated as a special case of symmetry, where the symmetry group is chosen to be $(\reals,+,0)$. 

More general kinds of dynamics can be studied by considering different symmetry groups. The ones generally deemed of interest in physics and computer science are: 
\begin{enumerate}
	\item[(i)] \textbf{continuous} dynamics, corresponding to symmetry group $(\reals,+,0)$;
	\item[(ii)] \textbf{continuous periodic} dynamics, corresponding to symmetry group $(\reals/(T\integers),\oplus,0)$; 
	\item[(iii)] \textbf{discrete} dynamics, corresponding to symmetry group $(\integers,+,0)$; 
	\item[(iv)] \textbf{discrete periodic} dynamics, corresponding to symmetry group $(\integersMod{T},\oplus,0)$. 
\end{enumerate}
In this opening Subsection, we will briefly recap the basics of all four notions of dynamics in the traditional formulation of finite-dimensional quantum theory: when making general statements about all four notions, we will used $(G,\oplus,0)$ to denote the generic time-translation symmetry group.

\subsubsection{Quantum dynamical systems}

A quantum dynamical system is a finite-dimensional Hilbert space $\SpaceH$ endowed with a unitary representation $(U_t)_{t \in G}$ of the time-translation symmetry group $(G,\oplus,0)$, where $(U_t)_{t \in G}$ is a strongly continuous family (a condition which is trivially satisfied in the case of discrete dynamics). As mentioned above and discussed in further detail below, this perspective is entirely equivalent to the perspective involving Hamiltonians and Schr\"{o}dinger's equation. 

When talking about the dynamics of a system $\SpaceH$, we are often interested in the evolution of an initial state $\psi_0$ under time-translation. Classically, we can look at the trajectory of $\psi_0$ in $\SpaceH$ as the function $\Psi: G \rightarrow \SpaceH$ which traces the history of the initial state as it evolves in time, i.e. the one defined by $\Phi:= t \mapsto U_t \ket{\psi_0}$. 

The classical trajectory of $\psi_0$ under $\integersMod{T}$ is not just a function $\integersMod{T} \rightarrow \SpaceH$: it is an equivariant function, by which we mean that $U_{\delta t} \Phi(t) = \Phi(t+\delta t)$, and hence it is a structurally sound way of seeing the time-translation symmetry group (a dynamical system itself, under the regular action) into the dynamical system $\SpaceH$.

\subsubsection{Hamiltonian and energy measurement}

In the continuous case, the Hamiltonian is the unique self-adjoint operator, given by Stone's Theorem on 1-parameter unitary groups, such that $U_t = e^{-i\frac{\textbf{H}}{\hbar}t}$. In the previous Chapter, we have discussed a number of issues with the identification of quantum mechanical observables with self-adjoint operators: as a consequence, we will instead take the Hamiltonian to be a PVM (Projector-Valued Measure) $\big(\pi({\goodchi})\big)_{\goodchi \in \reals^\wedge}$, or equivalently a PVM $\big(\pi(E)\big)_{E \in \reals}$, where we have fixed an isomorphism $\reals^\wedge \isom \reals$ (by choosing a constant $\hbar$). Stone's Theorem in its PVM version takes the following form:
\begin{equation}
U_t = \int_{\reals^\wedge} \goodchi(t) d\pi(\goodchi) = \int_{\reals} e^{-i\frac{E}{\hbar}t} d\pi(E)
\end{equation}
The PVM $(\pi(E))_{E \in \reals}$ itself specifies the energy measurement for the system: if the quantum system is in state $\rho$ (pure or mixed) and $S \subseteq \reals$ is any measurable subset, then the probability of an energy measurement resulting in an outcome in $S$ is given by $\Trace{\!\pi(S)\rho}$ (where $\pi(S)$ is the projector on the subspace of pure states spanned by the energy eigenstates $\ket{\psi_E}$ with $E \in S$, as specified by the PVM).

Once we let go of the self-adjoint operator point of view on the Hamiltonian, all four cases of dynamics enumerated in the introduction to this Section can be tackled uniformly. The Hamiltonian is always PVM $\big(\pi({\goodchi})\big)_{\goodchi \in G^\wedge}$: for the continuous case, $G \isom \reals$ and $G^\wedge \isom \reals$; for the continuous periodic case, $G \isom \reals/(T\integers)$ and $G^\wedge \isom \integers$; for the discrete case, $G \isom \integers$ and $G^\wedge \isom \reals/\integers$; for the discrete periodic case, $G \isom \integersMod{T}$ and $G^\wedge \isom \integersMod{T}$. The PVM version of Stone's Theorem takes the same form for all four cases (in the periodic cases we can think of $E = n h$):
\begin{align}
U_t =& \int_{\reals^\wedge} \goodchi(t) d\pi(\goodchi) = \int_{\reals} e^{-i2\pi\frac{Et}{h}} d\pi(E) \text{ for all } t \in \reals\\
U_t =& \int_{(\reals/(T\integers))^\wedge} \hspace{-1.05cm}\goodchi(t) d\pi(\goodchi) = \sum_{n \in \integers} e^{-i2\pi\frac{nt}{T}} \pi(n) \text{ for all } t \in \reals/(T\integers)\\
U_t =& \int_{\integers^\wedge} \goodchi(t) d\pi(\goodchi) = \int_{\reals/\integers} e^{-i2\pi\frac{Et}{h}} d\pi(E) \text{ for all } t \in \integers\\
U_t =& \int_{\integersMod{T}^\wedge} \goodchi(t) d\pi(\goodchi) = \sum_{n \in \integersMod{T}} e^{-i2\pi\frac{nt}{T}} \pi(n) \text{ for all } t \in \integersMod{T}
\end{align}
The energy measurements can similarly be expressed in the same form for all four cases (the top expression is for the two continuous cases\footnote{There are some (weak) caveats on the PVM in order for the integral expression to apply.}, while the bottom expression applies to the two discrete cases, where again we think of $E = n h$):
\begin{align}
	\mathbb{P}[E \in S \vert \rho] =& \Trace{\!\pi(S)\rho} \stackrel{*}{=} \int_{S} \Trace{[\pi'(E)\rho]} dE\\
	\mathbb{P}[n \in S \vert \rho] =& \Trace{\!\pi(S)\rho} = \sum_{n \in S} \Trace{\pi(n)\rho} 
\end{align}

\subsubsection{Schr\"{o}dinger's Equation}

In its traditional formulation for continuous quantum dynamics, the \textbf{time-independent Schr\"{o}dinger Equation} for an energy eigenstate $\ket{\psi_E}$ of a quantum dynamical system can be written as follows, where $\textbf{H}$ is the traditional Hamiltonian observable (the self-adjoint operator given by Stone's Theorem) and $\ket{\psi_E}$ is an energy eigenstate for energy level $E \in \reals$:
\begin{equation}\label{traiditionalSchrodingerEquation}
\textbf{H} \ket{\psi_E} = E \ket{\psi_E}
\end{equation}
The full \textbf{Schr\"{o}dinger Equation} takes the following differential form:
\begin{equation}\label{traiditionalSchrodingerEquationTimeDep}
i\hbar \frac{d}{dt}\ket{\psi(t)} = \textbf{H} \ket{\psi(t)}
\end{equation}
Because we do not want to work with self-adjoint operators and differentiation, we instead consider the following, equivalent formulation of Equations \ref{traiditionalSchrodingerEquation} and \ref{traiditionalSchrodingerEquationTimeDep}, known as the \textbf{exponentiated Schr\"{o}dinger Equation} (here $h = 2 \pi \hbar$):
\begin{equation}\label{traiditionalExponentiatedSchrodingerEquation}
U_t \ket{\psi_E} = e^{-i2\pi\frac{Et}{h}} \ket{\psi_E}
\end{equation}
To a practising physicist, the exponentiated formulation may feel further from the spirit of dynamics than the traditional formulation, because it replaces instantaneous states and differential evolution with a global, non-differential description in terms of a 1-parameter unitary group. In this work, however, we choose to adopt a more holistic point of view: the two equations are after all mathematically equivalent, and preferring one formulation over the other depends on the specific application and on the mathematical tools available to solve the equation itself. Because this work is concerned with the study of dynamics as a symmetry, and not with the solution of the equations of motion for some specific dynamical system, the exponentiated formulation in terms of 1-parameter unitary groups will be the undisputed favourite.

Aside from conceptual stances, the exponentiated version of Schr\"{o}dinger Equation has another, clear-cut advantage over the differential one when it comes to this work: it has a direct translation to those dynamical systems, such as the discrete periodic ones that will occupy large parts of this section, which don't admit infinitesimal generators for their dynamics. Below we write the exponentiated Schr\"{o}dinger Equation for all four cases of dynamics considered in the introduction to this Section:
\begin{align}
U_t \ket{\psi_{\goodchi}} =& \goodchi(t) \ket{\psi_{\goodchi}} = e^{-i2\pi\frac{Et}{h}} \ket{\psi_{\goodchi}} \text{ for all } \goodchi \in \reals^\wedge \leftrightarrow E/h \in \reals\\
U_t \ket{\psi_{\goodchi}} =& \goodchi(t) \ket{\psi_{\goodchi}}  = e^{-i2\pi\frac{nt}{T}} \ket{\psi_{\goodchi}} \text{ for all } \goodchi \in (\reals/(T\integers))^\wedge \leftrightarrow n=E/h \in \integers\\
U_t \ket{\psi_{\goodchi}} =& \goodchi(t) \ket{\psi_{\goodchi}}  = e^{-i2\pi\frac{Et}{h}Et} \ket{\psi_{\goodchi}} \text{ for all } \goodchi \in \integers^\wedge \leftrightarrow E/h \in \reals/\integers\\
U_t \ket{\psi_{\goodchi}}  =& \goodchi(t) \ket{\psi_{\goodchi}} = e^{-i2\pi\frac{nt}{T}} \ket{\psi_{\goodchi}} \text{ for all } \goodchi \in \integersMod{T}^\wedge \leftrightarrow n \in \integersMod{T} \label{discretePeriodicExponentiatedSchrodingerEquation}
\end{align}
In all four cases, we have explicitly fixed a correspondence between the canonical energy levels $\goodchi \in G^\wedge$ and non-canonical values of more direct physical significance.

\subsubsection{von Neumann's mean ergodic theorem}

von Neumann's Mean Ergodic Theorem is a cornerstone result in quantum dynamics, and its generalisation provides a statement which is dual, in a very specific sense, to the PVM formulation of Stone's Theorem for discrete dynamics. 
\begin{theorem}[\textbf{von Neumann's Mean Ergodic Theorem (discrete)} \cite{V.Neumann1932a}]\label{thm_meanErgodicDiscrete}\hfill\\
Let $U : \SpaceH \rightarrow \SpaceH$ be a unitary operator on a Hilbert space $\SpaceH$, and let $(U_t)_{t \in \integers}$ be the discrete dynamics it generates, i.e. $U_t := U^t$. Let $P : \SpaceH \rightarrow \SpaceH$ be the orthogonal projector on the invariant subspace for $U$, i.e. the subspace given by those vectors $\ket{\phi}$ such that $U \ket{\phi} = \ket{\phi}$. Then the following limit holds in the strong operator topology:
\begin{equation}
\lim_{T \rightarrow \infty} \frac{1}{T}\sum_{t=0}^{T-1} U_t = P_1
\end{equation} 
\end{theorem}

\begin{corollary}\label{cor_meanErgodicDiscreteGeneralised} Let $U : \SpaceH \rightarrow \SpaceH$ be a unitary operator on a Hilbert space $\SpaceH$, and let $(U_t)_{t \in \integers}$ be the discrete dynamics it generates, i.e. $U_t := U^t$. For each $\goodchi \in \integers^\wedge$, write $P_{\goodchi} : \SpaceH \rightarrow \SpaceH$ be the orthogonal projector on the subspace given by those vectors $\ket{\phi}$ such that $U \ket{\phi} = \goodchi(1)\ket{\phi}$.\footnote{Note that the $\goodchi \in \integers^\wedge$ are exactly those in the form $\goodchi := t \mapsto \zeta^t$ for some complex phase $\zeta$, so that the values $\goodchi(1)$ cover exactly all the possible eigenvalues for unitary operators.} Then the following limit holds in the strong operator topology:
\begin{equation}
\lim_{T \rightarrow \infty} \frac{1}{T}\sum_{t=0}^{T-1} \goodchi(t)^\ast U_t = P_{\goodchi}
\end{equation} 
\end{corollary}
\begin{proof}
Apply Theorem \ref{thm_meanErgodicDiscrete} to $U' := \goodchi(1)^\ast U$, observing that $\goodchi(t) = \goodchi(1)^t$, and that the condition $U' \ket{\psi} = \ket{\psi}$ used in Theorem \ref{thm_meanErgodicDiscrete} is equivalent to the condition $U \ket{\psi} = \goodchi(1) \ket{\psi}$ used in this Corollary.
\end{proof}

From Corollary \ref{cor_meanErgodicDiscreteGeneralised}, we can see how von Neumann's Mean Ergodic Theorem is dual to Stone's Theorem on discrete dynamics. Stone's Theorem, in its PVM formulation for discrete dynamics, shows that unitary dynamics can be reconstructed by integrating the projectors of the Hamiltonian observable multiplied by phases given by the canonical energy spectrum; von Neumann's Theorem, in its generalised formulation, shows that the projectors of the Hamiltonian observable can be reconstructed by averaging the unitary dynamics across time, multiplied by phases given by the canonical energy spectrum. This suggests a somewhat different point of view on quantum dynamical ergodicity: rather than being about invariant measures and the coincidence of time and space averages, as is the case in classical dynamical systems, the ergodic theorem in quantum dynamical systems is a manifestation of symmetry-observable duality, proving how time-translation symmetry and the Hamiltonian observable can be reconstructed one from the other.

A generalised version of von Neumann's Mean Ergodic Theorem can be formulated for continuous dynamics \cite{V.Neumann1932a}, continuous periodic dynamics and discrete periodic dynamics (as corollaries), and is summarised below.
\begin{theorem}[\textbf{Generalised von Neumann's Mean Ergodic Theorem}]\label{thm_meanErgodicGeneralised}\hfill\\
Let $(G,\oplus,0)$ be one of $(\reals,+,0)$, $(\reals/(T\reals),\oplus,0)$, $(\integers,+,0)$, or $(\integersMod{T},\oplus,0)$. Let $(U_t)_{t \in G}$ be a unitary dynamic on a Hilbert space $\SpaceH$ (i.e. a strongly continuous 1-parameter unitary group). For all $\goodchi \in G^\wedge$, let $P_{\goodchi} : \SpaceH \rightarrow \SpaceH$ be the orthogonal projector on the invariant subspace for $U$, i.e. the subspace given by those vectors $\ket{\phi}$ such that $U_t \ket{\phi} = \goodchi(t)\ket{\phi}$. Then the following limit holds in the strong operator topology:
\begin{equation}\label{eqn_meanErgodicGeneralised}
\lim_{T \rightarrow \infty} \frac{1}{T}\int\limits_{t=0}^{T} \goodchi(t)^\ast U_t dt = P_{\goodchi}
\end{equation} 
\end{theorem}
\noindent It is immediately clear from Equation \ref{eqn_meanErgodicGeneralised} that the four generalised versions of von Neumann's Theorem---for continuous, continuous periodic, discrete and discrete periodic dynamics, respectively---provide exact duals for the four versions of Stone's Theorem presented earlier on.

\subsubsection{The issue with time observables}

The problem of time observables is a long standing open problem in the philosophy of quantum theory. The history of time observables is turbulent, and extremely interesting: we will only mention some of the headlines in the coming paragraphs, and we refer the interested reader to \cite{Hilgevoord2005a,Pashby2015a,Roberts2012a,Butterfield2014}. 

Our story begins in 1926, when Dirac introduces time in quantum mechanics as a dynamical variable $t$, with associated \inlineQuote{conjugate momentum} $W$ satisfying the Kennard-Weyl form $tW - Wt = -i \hbar$ of the CCRs (technically, the conjugate momentum is $-W$). Heisenberg follows in 1927 with the time-energy uncertainty principle $Et - tE = - i \hbar$ (note the sign!) for Stern-Gerlach experiments, and Bohr proposes in 1928 the uncertainty relation $\Delta t \Delta E \geq h$ for wave-packets (although he talks of complementarity, rather than uncertainty relations). Both Dirac and Heisenberg later revise their position, and start treating $t$ as a parameter.

The first real issues with the notion of time observable are raised by Schr\"{o}dinger in 1931: he posits that a quantum time observable $t$ would be measured by observing an ideal quantum clock, and concludes that the resulting state of the system would be \inlineQuote{physically meaningless}, as it would have completely uncertain energy. Pauli in 1933 makes this claim rigorous, in what would become known as \inlineQuote{Pauli's Theorem}\footnote{His remarks are not really a theorem: he originally used the Kennard-Weyl form of the CCR, while the Stone-von Neumann Theorem requires the Weyl form.}. Using the Stone-von Neumann Theorem, one deduces that a time observable $t$ and a Hamiltonian $H$ satisfying the Weyl CCRs for the group $(\reals,+,0)$ would force $H$ to have continuous, unbounded-below spectrum. Since this contradicts real-world observations, Pauli concludes time observables to be physically meaningless. 

Putting together Schr\"{o}dinger's remarks and Pauli's result, we immediately spot a problem: what they are measuring is the \textit{clock} time observable and the \textit{system} energy. Looking at a synchronised clock-system state, it is easy to see that the energy measurement on the system, obtained from the coherent Hamiltonian, always commutes with the clock time measurement. What \textit{really} comes out of Schr\"{o}dinger's and Pauli's arguments is an issue with the quantum clocks themselves, not with the quantum dynamical systems. 

When you think a little more about what the clock does, however, things are not as ludicrous as they might seem. It is true that if you \inlineQuote{freeze} the clock in a definite clock time state (by measuring it) you lose all information about the the clock energy. However, we have seen in the previous Subsection that the clock energy is only relevant for the dynamical systems governed by the quantum clock, not for the clock itself: if one wishes to see the clock as a dynamical system governed by some notion of time other than the one it is ticking itself, then the resulting Hamiltonian for the clock need not have anything to do with the clock energy observable. Indeed, this is exactly what happens when we try to apply Schr\"{o}dinger's and Pauli's arguments to the quantum clocks themselves, and the physical absurdity stems from the incorrect identification between two different \inlineQuote{energy} observables:
\begin{enumerate} 
	\item[(i)] the clock energy observable, which is only relevant for the quantum dynamical systems governed by the clock, and which becomes completely undetermined upon measurement in the clock time observable; 
	\item[(ii)] the physical Hamiltonian, which instead corresponds to seeing the quantum clock as a dynamical system governed by some other, larger clock, and which need not be affected at all by the measurement of the clock time observable.
\end{enumerate}

\noindent It would be tempting to conclude the discussion above by positing the existence of some large, universal quantum clock ticking time (with time-translation group $\reals$) for all quantum clocks, but attempts to consider such an object have so far fallen back into the usual routine: either (i) the clock is an accessible physical system, in which it also governs itself and has an unbounded below, physically meaningless Hamiltonian, or (ii) it is external to the theory, in which case time is an external parameter and we get to the same conclusion of Schr\"{o}dinger, Pauli and many quantum physicists after them. In fact, even if we accepted the existence of Hamiltonians unbounded below, the first option would be physically meaningless: there would exists a system which we can measure and which freezes time for the entire universe. What nonsense! In Subsection \ref{subsection_timeObservables} below, we will approach this problem from a different direction: we will develop the tool to define a notion of quantum dynamics based on hierarchies of (locally) synchronised quantum clocks, which can be made inaccessible without the need for time to reduce to an external parameter. We will argue that this can be used to construct a (toy?) model of emergent global time without any need for an accessible universal quantum clock, but we will leave the detailed construction to future work.

\subsubsection{Quantum dynamics within the coherent group framework}

The techniques we have developed in the previous Sections allow us to treat three special cases of quantum dynamics: those which are discrete, periodic, or both. These are associated, respectively, with the symmetry groups $\integers$, $\reals/T\integers$ (where time flows continuously, with period $T$), and $\integersMod{T}$ (where time ticks at regular discrete intervals, with period $T$). Unfortunately, it will not be possible to cover the continuous aperiodic case of dynamics governed by $\reals$ in this work: the treatment of $\Ltwo{\reals}$ in the non-standard framework was introduced only recently \cite{Gogioso2017}, and the techniques developed here will be extended to continuous dynamics in the near future. 

\paragraph{Discrete periodic case.} Discrete periodic quantum dynamics are unitary representations of the finite cyclic groups $\integersMod{T}$ (where $T$ is the period). In our coherent framework, they are the representations of those doubly well-pointed, doubly finite coherent groups $\mathbb{G} = (\hbox{\input{symbols/ZbwdotSym.tex}}\!\!,\hbox{\input{symbols/DdotSym.tex}}\!\!)$ in $\fdHilbCategory$ which have $\underlyingGroup{G} \isom \integersMod{T}$. 

\paragraph{Discrete case.} Discrete quantum dynamics are unitary representations of the group $\integers$. In order to deal with them within our coherent framework, we consider the object $\SpaceG := \Big(\reals/(T\integers),\frac{1}{\sqrt{T}}\ket{\goodchi_n}_{n=-\omega}^{\omega}\Big)$ of $\starHilbCategory$ constructed in Section \ref{section_compactAbelian}, together with the doubly well-pointed coherent group $\GroupG := (\hbox{\input{symbols/ZbwdotSym.tex}}\!\!,\hbox{\input{symbols/DdotSym.tex}}\!\!)$ on $\SpaceG$ corresponding to the momentum/position pair for wavefunctions in a $1$-dimensional box of side $T > 0$ with periodic boundary conditions. The underlying group $\underlyingGroup{\GroupG}$ for $\GroupG$ is $\starIntegersMod{2\omega+1}$, which has $\integers$ as its subgroup of standard elements. 

In order to talk about standard discrete quantum dynamics within the non-standard framework, we will be interested in those unitary representations $\alpha: \SpaceH \otimes \SpaceG \rightarrow \SpaceH$ of the coherent group $\GroupG$ such that $\bar{U}_t := \alpha \circ (\id{\SpaceH}\otimes\ket{t})$ is near-standard for all $t \in \integers$ (it is in fact enough to ask for $\bar{U}_1$ to be near-standard, as $\bar{U}_t = (\bar{U}_1)^t$). Given a standard discrete unitary dynamics $(U_{t})_{t \in \integers}$ on some separable Hilbert space $V$, the following result shows how to find an $\alpha$ such that $\stdpart{\bar{U}_t} = U_t$ for all $t \in \integers$.
\begin{theorem}
	Let $\SpaceH := (V,\ket{e_d}_{d=1}^{D})$ be a space in $\starHilbCategory$, with dimension $D \in \starNaturals$. Consider a discrete dynamic $(U_{t})_{t \in \integers}$ on the separable standard Hilbert space $V$, and denote by $(U_{t})_{t \in \starIntegers}$ its non-standard extension. Then any near-standard unitary  $W \sim \id{\SpaceH}$ such that $(W U_{1})^{2\omega+1} = \id{\SpaceH}$ induces a unitary representation $\alpha: \SpaceH \otimes \SpaceG \rightarrow \SpaceH$ of the coherent group $\GroupG$ as follows:
	\begin{equation}
		\input{pictures/chapter3/dynamics/discreteDynCoherentFromStd.tikz}
	\end{equation}
	The unitary representations $\alpha$ of $\GroupG$ which arise this way are exactly those such that $\alpha \circ (\id{\SpaceH}\otimes\ket{1})$ is near-standard. Equivalently, $\alpha \circ (\id{\SpaceH}\otimes\ket{t})$ is near-standard for all $t \in \integers$, and the original standard discrete dynamics is recovered as $U_t := \stdpart{\alpha \circ (\id{\SpaceH}\otimes\ket{t})}$.
\end{theorem}
\begin{proof}
	Any discrete standard dynamic $(U_t)_{t \in \integers}$ satisfies $U_t = (U_1)^t$ for all $t \in \integers$, and hence by Transfer Theorem its non-standard extension must satisfy  $U_t = (U_1)^t$ for all $t \in \{-\omega,...,+\omega\}$. But the family $(U_{t})_{t =-\omega}^{+\omega}$ is not in general a representation of $\starIntegersMod{2\omega+1}$: it need not satisfy $U_{2\omega+1} = \id{\SpaceH}$. We now construct some near-standard unitary $W \sim \id{V}$ commuting with $U_1$ and such that $W^{2\omega+1} = U_{-2\omega-1}$. Diagonalise the unitary $U_1$ and let its eigenvalues be $(e^{i2\pi\,\alpha_d})_{d=1}^D$: then $U_{-2\omega-1}$ is necessarily near-standard, with eigenvalues $(e^{i2\pi\,\beta_d})_{d=1}^D$, where $\beta_d$ is the unique non-standard real $0\leq\beta_d<1$ satisfying $\beta_d = \modclass{(2\omega+1)\alpha_d}{1}$. Some unitaries $W$ satisfying the requirements above are the ones with the same eigenvectors as $U_1$ and with eigenvalues $(e^{i2\pi\,\gamma_d})_{d=1}^D$, where $\gamma_d := (\beta_d+k)/(2\omega+1)$ and $k \in \starIntegers$ is such that $k/(2\omega+1)$ is infinitesimal. If we now let $U'_t := (W U_1)^t$ for one such $W$, then $(U'_t)_{t \in \starIntegersMod{2\omega+1}}$ is a representation of $\starIntegersMod{2\omega+1}$, because the following equation now holds:
	\begin{equation}
		U'_{2\omega+1} = (W U_{1})^{2\omega+1} = \id{\SpaceH}
	\end{equation}
	Now consider a unitary representation of $\GroupG$, write $U'_1 := \alpha \circ (\id{\SpaceH}\otimes\ket{1})$ and let $U$ be a standard unitary such that $U'_1 \sim U$. If we define $W := U'_1U^{-1}$, then $W$ is by definition a near-standard unitary such that $W \sim \id{\SpaceH}$ and satisfying:
	\begin{equation}
		(WU)^{2\omega+1} = (U'_1)^{2\omega+1} = \id{\SpaceH}
	\end{equation}
	Hence $\alpha$ is in the required form, for a $W$ satisfying the required conidtions. Finally, note that $\alpha \circ (\id{\SpaceH}\otimes\ket{t}) \sim U^t$ for finite integers $t \in \integers$, and hence letting $U_t := \stdpart{\alpha \circ (\id{\SpaceH}\otimes\ket{t})} = U^t$ defines a unitary representation of $\integers$ on $V$. 
\end{proof}

\paragraph{Continuous periodic case.} Continuous periodic quantum dynamics are unitary representations of the group $\reals/(T\integers)$. In order to deal with them within our coherent framework, we consider the object $\SpaceG := \Big(\reals/(T\integers),\frac{1}{\sqrt{T}}\ket{\goodchi_n}_{n=-\omega}^{\omega}\Big)$ of $\starHilbCategory$ constructed in Section \ref{section_compactAbelian}, together with the doubly well-pointed coherent group $\GroupG := (\hbox{\input{symbols/DdotSym.tex}}\!\!,\hbox{\input{symbols/ZbwdotSym.tex}}\!\!)$ on $\SpaceG$ corresponding to the position/momentum pair for wavefunctions in a $1$-dimensional box of side $T > 0$ with periodic boundary conditions. 

The underlying group $\underlyingGroup{\GroupG}$ for $\GroupG$ is $\frac{T}{2\omega+1}\starIntegersMod{2\omega+1}$, with elements in the form $x = \frac{kT}{2 \omega +1}$ for $k \in \starIntegersMod{2\omega+1}$, and taking the standard part corresponds to a quotient group homomorphism $\stdpartSym: \frac{T}{2\omega+1}\starIntegersMod{2\omega+1} \rightarrow \reals/(T\integers)$. This means that every element of $\reals/(T\integers)$ can be approximated to within infinitesimal distance by elements of $\underlyingGroup{\GroupG}$, with the elements $\stdpartSym^{-1}(y) \subset \underlyingGroup{\GroupG}$ approximating a given $y \in \reals/(T \integers)$ forming a coset of the elements $\stdpartSym^{-1}(0) \subset \underlyingGroup{\GroupG}$ approximating the group unit $0 \in \reals/(T \integers)$. 

In order to talk about standard continuous periodic quantum dynamics within the non-standard framework, we will be interested in those unitary representations $\alpha: \SpaceH \otimes \SpaceG \rightarrow \SpaceH$ of the coherent group $\GroupG$ such that $\bar{U}_t := \alpha \circ (\id{\SpaceH}\otimes\ket{t})$ is near-standard for all $t \in \frac{T}{2\omega+1}\starIntegersMod{2\omega+1}$. Given a standard continuous periodic unitary dynamics $(U_{t})_{t \in \integers}$ on some separable Hilbert space $V$, the following result shows how to find an $\alpha$ such that $\stdpart{\bar{U}_{t}} = U_{\stdpart{t}}$ for all $t \in \frac{T}{2\omega+1}\starIntegersMod{2\omega+1}$, where we used again the notation $\bar{U}_t := \alpha \circ (\id{\SpaceH}\otimes\ket{t})$.
\begin{theorem}
	Consider a continuous periodic dynamic $(U_{t})_{t \in \reals/(T\integers)}$ on a separable standard Hilbert space $V$, let $(P_n)_{n \in \integers}$ be the complete family of orthogonal projectors for the Hamiltonian observable, and let  $(P_n)_{n \in \starIntegers}$ be its non-standard extension. Consider the space $\SpaceH := (V,\ket{e_d}_{d=1}^{D})$ in $\starHilbCategory$, where without loss of generality we have picked the basis $\ket{e_d}_{d=1}^{D}$ to consist of energy eigenstates, and we have chosen the dimension $D \in \starNaturals$ such that $\sum_{n=-\omega}^{+\omega} P_n = \sum_{d=1}^D \ket{e_d}\bra{e_d}$. Then the following is a unitary representation $\alpha: \SpaceH \otimes \SpaceG \rightarrow \SpaceG$ of the coherent group $\GroupG$:
	\begin{equation}
		\input{pictures/chapter3/dynamics/contPeriodicDynCoherentFromStd.tikz}
	\end{equation}
	Furthermore, we have that $\stdpart{\bar{U}_t} = U_{\stdpart{t}}$ for all $t \in \frac{T}{2\omega+1}\starIntegersMod{2\omega+1}$.
\end{theorem}
\begin{proof}
	For each given $t \in \reals/(T\integers)$, by PVM version of Stone's Theorem we have that:
	\begin{equation}
		U_t = \sum_{n \in \integers} e^{i2\pi\frac{nt}{T}} P_n 	
	\end{equation} 
	As a consequence, it should not be too surprising that for a given ${t} \in \frac{T}{2\omega+1}\starIntegersMod{2\omega+1}$ we have defined:
	\begin{equation}
		\bar{U}_{{t}} := \sum_{n=-\omega}^{+\omega} e^{i2\pi\frac{n{t}}{T}} P_n
	\end{equation} 
	Because $(P_n)_{n\in\integers}$ is a complete family of orthogonal projectors, then so is $(P_n)_{n\in\starIntegers}$. As a consequence, we have that:
	\begin{align}
		\bar{U}_0 =& \sum_{n=-\omega}^{+\omega} P_n = \id{\SpaceH} \\
		\bar{U}_t \bar{U}_s =& \sum_{n=-\omega}^{+\omega}\sum_{m=-\omega}^{+\omega} e^{i2\pi\frac{nt+ms}{T}} P_n P_m = \sum_{n \in \integers} e^{i2\pi\frac{n(t+s)}{T}} P_n = \bar{U}_{t\oplus s}
	\end{align}
	Furthermore, for any infinite $m \in \starIntegers$ we must have that $\stdpart{P_m} = 0$, because $P_mP_n = 0$ for all finite $n \in \integers$. As a consequence, for any ${t} \in \frac{T}{2\omega+1}\starIntegersMod{2\omega+1}$ we have:
	\begin{equation}
		\stdpart{\bar{U}_{{t}}} = \sum_{n=-\omega}^{+\omega} e^{i2\pi\frac{n\stdpart{t}}{T}} \stdpart{P_n} =  \sum_{n \in \integers} e^{i2\pi\frac{n\stdpart{t}}{T}} {P_n} = U_{\stdpart{t}}
	\end{equation}
\end{proof}

\paragraph{The way forward.} For the remainder of this Section, we will try to deal with quantum dynamics in full generality, by working with some generic doubly well-pointed coherent group $\GroupG := (\hbox{\input{symbols/ZbwdotSym.tex}}\!\!,\hbox{\input{symbols/DdotSym.tex}}\!\!)$ which we interpret as encoding time-translation symmetry. We will take $\underlyingGroup{\GroupG} = (\classicalStates{\hbox{\input{symbols/ZbwdotSym.tex}}\!\!},\!\hbox{\input{symbols/DmultSym.tex}}\!\!,\!\hbox{\input{symbols/DunitSym.tex}}\!\!)$ to mark the states of definite time, and deduce that $(\classicalStates{\hbox{\input{symbols/DdotSym.tex}}\!\!},\!\hbox{\input{symbols/ZbwmultSym.tex}}\!\!,\!\hbox{\input{symbols/ZbwunitSym.tex}}\!\!)$ marks the states of definite energy, but for the most part we will not be concerned with the specific structure of $\GroupG$: as a consequence, our results will apply to all those dynamics which can be modelled by coherent groups within our framework\footnote{At present, this includes discrete, continuous periodic and discrete periodic dynamics. Continuous dynamics will be added to this list in the near future, thanks to the recent work of \cite{Gogioso2017}.}. We will make it explicitly clear when a specific underlying group structure is used to derive some result, or at certain points in the discussion.

\subsection{Quantum clocks}

We consider a doubly well-pointed coherent group $\mathbb{G} = (\hbox{\input{symbols/ZbwdotSym.tex}}\!\!,\hbox{\input{symbols/DdotSym.tex}}\!\!)$, and we interpret the underlying group $\underlyingGroup{\GroupG}$ to be the time-translation symmetry group of a classical \textbf{clock} governing the dynamical systems which we are interested in. From an operational perspective, when saying that a clock \inlineQuote{governs} a dynamical system we will merely mean that the two are \inlineQuote{perfectly synchronised}: this perspective will be covered in detail later in this Section.

If we understand the underlying group $\underlyingGroup{\GroupG}$ as a classical clock, then the coherent group $\mathbb{G}$ is exactly what Schr\"{o}dinger would refer to as a \textbf{quantum clock} \cite{Hilgevoord2005a}: the quantum system of wavefunctions over the classical clock states, together with the appropriate time-translation symmetry structure. By construction, the point structure $\hbox{\input{symbols/ZbwdotSym.tex}}\!\!$ of a quantum clock is associated with the \textbf{clock time} observable, and the group structure endows the set $\classicalStates{\hbox{\input{symbols/ZbwdotSym.tex}}\!\!}$ of \textbf{clock time states} with the relevant time-translation structure: after all, the quantum clock always governs its own dynamics, as it is necessarily synchronised with itself.  what is the physical meaning of the observable associated with the group structure~$\hbox{\input{symbols/DdotSym.tex}}\!\!$? Is it energy, in a certain sense? To understand its role, we need to look at the dynamical systems governed by the quantum clock.

\subsection{Quantum dynamical systems}

Because we understand dynamics as time-translation symmetry, a \textbf{quantum dynamical system} governed by a quantum clock $\mathbb{G}$ is simply a unitary representation $\alpha$ of $\mathbb{G}$. As a consequence, the unitary Eilenberg-Moore category $\UnitaryRepCategory{\mathbb{G}}$ is the category of quantum dynamical systems governed by $\mathbb{G}$. 

In the coherent perspective, the evolution of an initial state $\psi_0$ under time-translation is given by its \textbf{coherent history} in the quantum dynamical system $\alpha: \SpaceH \otimes \SpaceG \rightarrow \SpaceH$, which is defined to be the following process $\Psi: \SpaceG \rightarrow \SpaceH$:
\begin{equation}\label{coherentHistory}
	\input{pictures/chapter3/dynamics/coherentHistory.tikz}
\end{equation}
A similar construction can be done for arbitrary coherent groups, not necessarily with dynamical semantics, in which case we will say that $\Psi$ is the \textbf{coherent orbit} of $\psi_0$ in the symmetric system $\alpha$.

Just like classical trajectories can be characterised as certain equivariant functions (e.g. $\reals \rightarrow \SpaceH$, in the continuous case), so coherent histories can be characterised as certain Eilenberg-Moore morphisms. 
\begin{theorem}[\textbf{Coherent orbits are EM morphisms}]\label{thm_coherentOrbitsEMmorphisms}\hfill\\
Let $\mathbb{G} = (\hbox{\input{symbols/ZbwdotSym.tex}}\!\!,\hbox{\input{symbols/DdotSym.tex}}\!\!)$ be a coherent group on a system $\SpaceG$ of a $\dagger$-SMC, let $\alpha: \SpaceH \otimes \SpaceG \rightarrow \SpaceH$ be a unitary representation of $\mathbb{G}$. If $\Psi$ is the coherent orbit of an initial state $\psi_0$ in $\alpha$, then it is also an Eilenberg-Moore morphism $\Psi: \!\hbox{\input{symbols/DmultSym.tex}}\!\! \rightarrow \alpha$ from the coherent group (seen as the regular representation) to $\alpha$:
\begin{equation}\label{coherentOrbitIsEMmorphism}
	\input{pictures/chapter3/dynamics/coherentOrbitIsEMmorphism.tikz}
\end{equation}
Conversely, if $\Psi: \!\hbox{\input{symbols/DmultSym.tex}}\!\! \rightarrow \alpha$ is an Eilenberg-Moore morphism, then it is the coherent orbit of the following initial state $\psi_0$:
\begin{equation}\label{EMmorphismIsCoherentOrbit}
	\input{pictures/chapter3/dynamics/EMmorphismIsCoherentOrbit.tikz}
\end{equation}
\end{theorem}
\begin{proof}
First we prove that the coherent orbit of an initial state $\psi_0$ in $\alpha$ is an EM morphism:
\begin{equation}\label{coherentOrbitIsEMmorphismProof}
	\resizebox{\textwidth}{!}{\input{pictures/chapter3/dynamics/coherentOrbitIsEMmorphismProof.tikz}}
\end{equation}
Conversely, we prove that an EM morphism is the coherent orbit of the initial state $\psi_0$ specified by Equation \ref{EMmorphismIsCoherentOrbit}:
\begin{equation}\label{EMmorphismIsCoherentOrbitProof}
	\input{pictures/chapter3/dynamics/EMmorphismIsCoherentOrbitProof.tikz}
\end{equation}
\end{proof}

\subsection{The coherent Hamiltonian}

In order to understand the role of the group structure $\hbox{\input{symbols/DdotSym.tex}}\!\!$ in the quantum clock, we turn our attention to the states which are invariant under the coherent dynamics: from Theorem \ref{thm_invariantStates}, we know that an invariant state of a quantum dynamical system $\alpha$ is associated with a definite outcome $\goodchi^\dagger \in \classicalStates{\hbox{\input{symbols/DdotSym.tex}}\!\!}$ of the coherent measurement $\alpha^\dagger$, and that the phase at time $t \in \integersMod{T}$ of its evolution is given by the scalar $\goodchi \circ t$. But this is exactly what happens with energy eigenstates in traditional quantum mechanics!

The admissible energy levels for a generic continuous quantum dynamical system $(U_t)_{t \in \reals}$, are traditionally labelled by the real numbers, and the phase acquired over time $t \in \reals$ by an eigenstate $\ket{\psi_E}$ of energy $E \in \reals$ is given by $\goodchi_{E/h}(t) := e^{i2\pi\frac{Et}{h}}$. 
The admissible energy levels for a continuous periodic dynamical system $(U_t)_{t \in \reals/(T\integers)}$ are discretised by periodicity, and are traditionally labelled by $n h$, where $n \in \integers$. The phase acquired over time $t \in \reals/(T\integers)$ by an eigenstate $\ket{\psi_{n\hbar}}$ of energy $n h$ is given by $\goodchi_{n}(t) := e^{i2\pi\frac{nt}{T}}$.
The admissible energy levels for a discrete quantum dynamical system $(U_t)_{t \in \integers}$ are continuous, but they are made periodic by the discrete nature of the dynamics. If we label the energy levels as $E \in \reals/(h\integers)$, then the phase acquired over time $t \in \integers$ by an eigenstate $\ket{\psi_E}$ of energy $E$ is given by $\goodchi_{E/h}(t) := e^{i2\pi\frac{Et}{h}}$. 
The admissible energy levels for a discrete periodic quantum dynamical system $(U_t)_{t \in \integersMod{T}}$ are both discrete and periodic: we can label them by $n h$ as in the discrete case, but with $n \in \integersMod{T}$ in this case. The phase acquired over time $t \in \integersMod{T}$ by an eigenstate $\ket{\psi_{n h}}$ of energy $nh$ is given by $\goodchi_{n}(t) := e^{i2\pi\frac{nt}{T}}$. 

In all four cases above, we could equivalently label the energy levels for dynamics governed by a time-translation group $G$ in a canonical way by using the multiplicative characters in $G^\wedge$. Indeed, an energy level is always uniquely identified with the time evolution of phases for its eigenstates: the two notions can be made to coincide, and in doing so we obtain a labelling of energy levels which is independent of choices of units of measurement for energy (and in particular of Planck constant $h$).

From the discussion above, it is clear that the $\hbox{\input{symbols/DdotSym.tex}}\!\!$-classical states can be identified with the admissible energy levels for quantum dynamical systems governed by the given quantum clock $(\hbox{\input{symbols/ZbwdotSym.tex}}\!\!,\hbox{\input{symbols/DdotSym.tex}}\!\!)$. As a consequence, we will refer to $\hbox{\input{symbols/DdotSym.tex}}\!\!$ as the \textbf{clock energy} observable, and to $\alpha^\dagger$ as the \textbf{coherent Hamiltonian} of the quantum dynamical system $\alpha$. Hence, the non-demolition and demolition measurements associated with $\alpha^\dagger$ correspond to the non-demolition and demolition measurements for the energy of the quantum dynamical system $\alpha$, where $P_{\goodchi}$ is the projector for energy level $\goodchi$:
\begin{equation}
	\input{pictures/chapter3/dynamics/coherentHamiltonian.tikz}
\end{equation}

\subsection{Schr\"{o}dinger's Equation}

Just as a representation of a coherent group carries a lot more information than the corresponding representation of the underlying classical group, the coherent history of a state in a quantum dynamical system $\alpha$ carries a lot more information than the corresponding history under the classical clock $\underlyingGroup{\GroupG}$. When evaluating the coherent history $\Psi: \SpaceG \rightarrow \SpaceH$ of an initial state $\psi_0$ at a clock time state $t$, we obtain the state $\psi_t$ corresponding to the evolution of the system at that time:
\begin{equation}\label{coherentHistoryTimeState}
	\input{pictures/chapter3/dynamics/coherentHistoryTimeState.tikz}
\end{equation}
Thanks to the coherent approach, however, we could instead choose to evaluate the coherent history at a clock energy state $\goodchi^\dagger$. As the following result shows, this yields the component of $\psi_0$ corresponding to energy level $\goodchi^\dagger$.
\begin{lemma}
Let $\mathbb{G} = (\hbox{\input{symbols/ZbwdotSym.tex}}\!\!,\hbox{\input{symbols/DdotSym.tex}}\!\!)$ be a coherent group on a system $\SpaceG$ of a $\dagger$-SMC, and let $\alpha: \SpaceH \otimes \SpaceG \rightarrow \SpaceH$ be a unitary representation of $\mathbb{G}$. Let $\psi_0$ be a state of $\SpaceH$, and let $\Psi : \SpaceG \rightarrow \SpaceH$ be the associated coherent history. If $\goodchi^\dagger \in \hbox{\input{symbols/DdotSym.tex}}\!\!$, then the following holds:
\begin{equation}\label{coherentHistoryEnergyState}
	\input{pictures/chapter3/dynamics/coherentHistoryEnergyState.tikz}
\end{equation}
\end{lemma}
\begin{proof}
The proof is straightforward, by unpacking the definition of $\Psi$ and using idempotence of $\alpha$ and $\hbox{\input{symbols/DdotSym.tex}}\!\!$-classicality of $\goodchi^\dagger$:
	\begin{equation}\label{coherentHistoryEnergyStateProof}
		\resizebox{\textwidth}{!}{\input{pictures/chapter3/dynamics/coherentHistoryEnergyStateProof.tikz}}
	\end{equation}
\end{proof}

The same idea---using coherence to evaluate something that is classically a function of clock time states on a clock energy state instead---can be used to derive Schr\"{o}dinger's Equation for a quantum dynamical system $\alpha$ from the defining equation of Eilenberg-Moore morphisms $\!\hbox{\input{symbols/DmultSym.tex}}\!\! \rightarrow \alpha$. From a categorical perspective, this is an extremely neat result: the fundamental equation of traditional quantum dynamics finds its natural counterpart in the fundamental equation defining evolution of states within the categorical framework (see Theorem \ref{thm_coherentOrbitsEMmorphisms}).

\begin{theorem}[\textbf{Schr\"{o}dinger Equation}]\label{thm_schrodingersEquation}\hfill\\
Let $\mathbb{G} = (\hbox{\input{symbols/ZbwdotSym.tex}}\!\!,\hbox{\input{symbols/DdotSym.tex}}\!\!)$ be a coherent group on a system $\SpaceG$ of a $\dagger$-SMC, and let $\alpha: \SpaceH \otimes \SpaceG \rightarrow \SpaceH$ be a unitary representation of $\mathbb{G}$. Suppose that $\psi_{\goodchi}$ is an energy eigenstate of $\alpha$ corresponding to clock energy level $\goodchi^\dagger \in \classicalStates{\hbox{\input{symbols/DdotSym.tex}}\!\!}$:
\begin{equation}\label{energyEigenstate}
	\input{pictures/chapter3/dynamics/energyEigenstate.tikz}
\end{equation}
Then $\psi_{\goodchi}$ satisfies the following Equation :
\begin{equation}\label{schrodingersEquationEnergyEigenstateNaked}
	\input{pictures/chapter3/dynamics/schrodingersEquationEnergyEigenstateNaked.tikz}
\end{equation}
When evaluated on clock time states, Equation \ref{schrodingersEquationEnergyEigenstate} is easily seen to be an abstract counterpart to Equation \ref{traiditionalExponentiatedSchrodingerEquation}:
\begin{equation}\label{schrodingersEquationEnergyEigenstate}
	\input{pictures/chapter3/dynamics/schrodingersEquationEnergyEigenstate.tikz}
\end{equation}
From this point onwards, we will refer to Equation \ref{schrodingersEquationEnergyEigenstateNaked} as \textbf{Schr\"{o}dinger's Equation} in our framework. Now assume that $\mathbb{G}$ is doubly well-pointed, and consider any process $\Psi: \SpaceG \rightarrow \SpaceH$. Then the following two conditions are equivalent:
\begin{itemize}
	\item the states $\psi_{\goodchi} := \Psi \circ \goodchi^\dagger$ satisfy Schr\"{o}dinger's Equation for all $\goodchi^\dagger \in \classicalStates{\hbox{\input{symbols/DdotSym.tex}}\!\!}$:
		\begin{equation}\label{schrodingersEquationEnergyComponentPsi}
			\input{pictures/chapter3/dynamics/schrodingersEquationEnergyComponentPsi.tikz}
		\end{equation}
	\item the process $\Psi$ is a coherent history, i.e. it satisfies the defining equation for Eilenberg-Moore morphisms $\!\hbox{\input{symbols/DmultSym.tex}}\!\! \rightarrow \alpha$:		
		\begin{equation}\label{PsiEMequation}
			\input{pictures/chapter3/dynamics/coherentOrbitIsEMmorphism.tikz}
		\end{equation}
\end{itemize}
\end{theorem}
\begin{proof}
Proving that Schr\"{o}dinger's Equation (Equation \ref{schrodingersEquationEnergyEigenstateNaked}) is satisfied by energy eigenstates is a straightforward application of unitarity for $\alpha$ and $\hbox{\input{symbols/DdotSym.tex}}\!\!$-classicality of $\goodchi^\dagger$. Now we want to prove that Equation \ref{schrodingersEquationEnergyEigenstateNaked} holding for all $\psi_{\goodchi}$ is equivalent to $\Psi$ satisfying the defining equation for EM algebras. Because the coherent group is doubly well-pointed, the defining equation for EM algebras holds if and only if it holds when evaluated on all $\goodchi^\dagger \in \classicalStates{\hbox{\input{symbols/DdotSym.tex}}\!\!}$:
\begin{equation}\label{PsiEMequationProof}
	\input{pictures/chapter3/dynamics/PsiEMequationProof.tikz}
\end{equation}
If Equation \ref{PsiEMequationProof} holds for all $\goodchi^\dagger$, then so does Equation  \ref{schrodingersEquationEnergyEigenstateNaked}:
\begin{equation}\label{PsiEMequationProofb}
	\input{pictures/chapter3/dynamics/PsiEMequationProof2a.tikz}
\end{equation}
Conversely, if Equation \ref{schrodingersEquationEnergyEigenstateNaked} holds for all $\goodchi^\dagger$, then so does Equation  \ref{PsiEMequationProof}:
\begin{equation}\label{PsiEMequationProof2b}
	\input{pictures/chapter3/dynamics/PsiEMequationProof2b.tikz}
\end{equation}
\end{proof}

\subsection{von Neumann's mean ergodic theorem}
\label{subsection_MeanErgodicTheorem}

We will now use symmetry-observable duality for coherent quantum dynamics to provide concise proofs of von Neumann's mean ergodic theorem in the discrete periodic, discrete and continuous periodic cases (using the coherent groups we introduced at the beginning of this Section). The same proof method applies---essentially unchanged---to the continuous case, using the coherent group on $\Ltwo{\reals}$ introduced in \cite{Gogioso2017}; however, a fully detailed treatment of the continuous case is left to future work.

\begin{theorem}[\textbf{Mean Ergodic Theorem (discrete periodic)}]\label{thm_vNMETdp}\hfill\\
Let $(U_t)_{t \in \integersMod{T}}$ be a unitary representation of $\integersMod{T}$ on a finite-dimensional Hilbert space $\SpaceH$, and let $P_{\goodchi} : \SpaceH \rightarrow \SpaceH$ be the orthogonal projector on the energy eigenspace corresponding to energy level $\goodchi \in \big(\integersMod{T}\big)^\wedge$. Then the following equality holds:
\begin{equation}
\frac{1}{T}\sum_{t=0}^{T-1} \goodchi(t)^\ast U_t = P_{\goodchi}
\end{equation} 
\end{theorem}
\begin{proof}
Symmetry-observable duality for systems with coherent symmetries can be invoked to obtain the following one-line proof:
\begin{equation}\label{meanErgodicOneLineProofDiscretePeriodic}
	\resizebox{\textwidth}{!}{\input{pictures/chapter3/dynamics/meanErgodicOneLineProofDiscretePeriodic.tikz}}
\end{equation}
\end{proof}

\begin{theorem}[\textbf{Mean Ergodic Theorem (discrete)}]\label{thm_vNMETd}\hfill\\
Let $(U_t)_{t \in \integers}$ be a unitary representation of $\integers$ on a separable Hilbert space $V$, and let $P_{\goodchi} : V \rightarrow V$ be the orthogonal projector on the energy eigenspace corresponding to energy level $\goodchi \in \integers^\wedge$. Then the following equality holds:
\begin{equation}
\lim_{T \rightarrow \infty}\frac{1}{T}\sum_{t=0}^{T-1} \goodchi(t)^\ast U_t = P_{\goodchi}
\end{equation} 
\end{theorem}
\begin{proof}
Symmetry-observable duality for systems with coherent symmetries can be invoked to obtain the following chain of equations:
\begin{equation}\label{meanErgodicOneLineProofDiscrete}
	\resizebox{\textwidth}{!}{\input{pictures/chapter3/dynamics/meanErgodicOneLineProofDiscrete.tikz}}
\end{equation}
Because $\omega$ is an arbitrary infinite integers, we have that $\lim_{T \rightarrow \infty}\frac{1}{T}\sum_{t=0}^{T-1} \goodchi(t)^\ast U_t \sim \frac{1}{2\omega+1}\sum_{t=-\omega}^{+\omega} \goodchi(t)^\ast U_t$, and by taking the standard part we obtain the desired result.
\end{proof}

\begin{theorem}[\textbf{Mean Ergodic Theorem (continuous periodic)}]\label{thm_vNMETcp}\hfill\\
Let $(U_t)_{t \in \reals/(T\integers)}$ be a unitary representation of $\reals/(T\integers)$ on a separable Hilbert space $V$, and let $P_{\goodchi} : V \rightarrow V$ be the orthogonal projector on the energy eigenspace corresponding to energy level $\goodchi \in \big(\reals/(T\integers)\big)^\wedge$. Then the following equality holds:
\begin{equation}
\frac{1}{T}\int\limits_{t=0}^{T} \goodchi(t)^\ast U_t dt = P_{\goodchi}
\end{equation} 
\end{theorem}
\begin{proof}
Symmetry-observable duality for systems with coherent symmetries can be invoked to obtain the following chain of equations, where we used the non-standard extension $(U_t)_{t \in \nonstd{(\reals/(T\integers))}}$ and we had defined the shorthand $t:= \frac{kT}{2\omega+1}$ in the two leftmost expressions:
\begin{equation}\label{meanErgodicOneLineProofContinuousPeriodic}
	\resizebox{\textwidth}{!}{\input{pictures/chapter3/dynamics/meanErgodicOneLineProofContinuousPeriodic.tikz}}
\end{equation}
It is a standard result of non-standard analysis that the integral $\int_{t=0}^{T} \goodchi(t)^\ast U_t dt $ can be approximated, up to infinitesimals, by the infinite sum $\sum_{k=-\omega}^{+\omega} \big(\goodchi(t)^\ast U_{t} \frac{T}{2\omega+1}\big)$: hence the chain of equations above reads $\frac{1}{T}\int_{t=0}^{T} \goodchi(t)^\ast U_t dt \sim P_{\goodchi}$, and by taking the standard part we obtain the desired result.
\end{proof}

\subsection{Stone's Theorem}
\label{subsection_StoneTheorem}

We will use symmetry-observable duality for coherent quantum dynamics once more, this time to provide concise proofs of Stone's Theorem in the discrete periodic, discrete and continuous periodic cases (using the coherent groups we introduced at the beginning of this Section). These proofs are essentially the duals of the proofs for von Neumann's Mean Ergodic Theorem presented above, but they're presented in full for instructive reasons. Again, he same proof method applies---essentially unchanged---to the continuous case, using the coherent group on $\Ltwo{\reals}$ introduced in \cite{Gogioso2017}; however, a fully detailed treatment of the continuous case is left to future work.

\begin{theorem}[\textbf{Stone's Theorem (discrete periodic)}]\label{thm_STdp}\hfill\\
Let $(U_t)_{t \in \integersMod{T}}$ be a unitary representation of $\integersMod{T}$ on a finite-dimensional Hilbert space $\SpaceH$, and let $(P_{\goodchi})_{\goodchi \in (\integersMod{T})^\wedge}$ be the complete family of orthogonal projectors associated to the Hamiltonian observable. Then the following equality holds:
\begin{equation}
	U_t = \sum_{\goodchi \in (\integersMod{T})^\wedge} \goodchi(t) P_{\goodchi}
\end{equation} 
\end{theorem}
\begin{proof}
	Symmetry-observable duality for systems with coherent symmetries can be invoked to obtain the following on-line proof:
	\begin{equation}\label{eqn_stoneTheoremDiscretePeriodic}
	\resizebox{\textwidth}{!}{\input{pictures/chapter3/dynamics/stoneTheoremDiscretePeriodic.tikz}}
	\end{equation}
\end{proof}

\begin{theorem}[\textbf{Stone's Theorem (discrete)}]\label{thm_STd}\hfill\\
Let $(U_t)_{t \in \integers}$ be a unitary representation of $\integers$ on a separable Hilbert space $V$, and let $\big(\pi(S)\big)_{S \subseteq \integers^\wedge}$ be the PVM associated to the Hamiltonian observable\footnote{Note that $\integers^\wedge \isom \reals/\integers$, so we need to consider a PVM instead of a complete family of orthogonal projectors as we do in the other cases.}. Then the following equality holds:
\begin{equation}
	U_t = \int_{\integers^\wedge} \goodchi(t) d\pi(\goodchi)
\end{equation} 
\end{theorem}
\begin{proof}
	We have $\integers^\wedge \isom \reals/\integers$, which corresponds to $\starIntegersMod{2\omega+1}^\wedge \isom \frac{1}{2\omega+1}\starIntegersMod{2\omega+1}$ in the non-standard framework. For each $\frac{k}{2\omega+1} \in \frac{1}{2\omega+1}\starIntegersMod{2\omega+1}$, we write $\goodchi_k$ for the corresponding element of $\starIntegersMod{2\omega+1}^\wedge$. Symmetry-observable duality for systems with coherent symmetries can then be invoked to obtain the following chain of equations:
	\begin{equation}\label{eqn_stoneTheoremDiscrete}
	\resizebox{\textwidth}{!}{\input{pictures/chapter3/dynamics/stoneTheoremDiscrete.tikz}}
	\end{equation}
	The projectors $P_{\goodchi_k}$ are infinitesimal, i.e. they can be seen to satisfy the following property: if $x,y \in \reals/\integers$ and $h,k \in \starIntegersMod{2\omega+1}$ are such that $x \simeq \frac{k_x}{2\omega+2}$ and $y \simeq \frac{k_y}{2\omega+2}$, then we have $\int_x^y d\pi(\goodchi) \sim \sum_{k=k_x}^{k_y} P_{\goodchi_k}$. As a consequence, we have $\int_{\integers^\wedge} \goodchi(t) d\pi(\goodchi) \sim \sum\limits_{k=-\omega}^{+\omega} \goodchi(t) P_{\goodchi_k}$, and taking the standard part completes our proof.
\end{proof}

\begin{theorem}[\textbf{Stone's Theorem (continuous periodic)}]\label{thm_STcp}\hfill\\
Let $(U_t)_{t \in \reals/(T\integers)}$ be a unitary representation of $\reals/(T\integers)$ on a separable Hilbert space $V$, and let $(P_{\goodchi})_{\goodchi \in (\reals/(T\integers))^\wedge}$ be the complete family of orthogonal projectors associated to the Hamiltonian observable\footnote{Note that $\big(\reals/(T\integers)\big)^\wedge \isom \integers$, so we can work directly with a complete family of orthogonal projectors in this case.}. Then the following equality holds:
\begin{equation}
	U_t = \sum_{\goodchi \in (\reals/(T\integers))^\wedge} \goodchi(t) P_{\goodchi}
\end{equation} 
\end{theorem}
\begin{proof}
	Symmetry-observable duality for systems with coherent symmetries can be invoked to obtain the following chain of equations, where we have used the non-standard extension $(U_t)_{t \in \nonstd{(\reals/(T\integers))}}$ and the shorthand $t := \frac{kT}{2\omega+1} \in \frac{T}{2\omega+1}\starIntegersMod{2\omega+1}$:
	\begin{equation}\label{eqn_stoneTheoremProofContinuousPeriodic}
	\resizebox{\textwidth}{!}{\input{pictures/chapter3/dynamics/stoneTheoremProofContinuousPeriodic.tikz}}
	\end{equation}
	We obtain our desired result by taking the standard part of the leftmost and rightmost expression, by observing that $\stdpart{U_t} = U_{\stdpart{t}}$ and that $\stdpart{P_{\goodchi_n}} = 0$ for all infinite non-standard integers $n$.
\end{proof}

\subsection{Feynman's clock}

Given a quantum circuit composed of unitary gates, the Feynman clock construction \cite{Feynman1982,Feynman1986} provides a Hamiltonian with ground states characterising the entire computation. More precisely, if $(V^{(t)})_{t=0,...,n}$ is some finite sequence of unitary gates on a quantum system $\SpaceH$, then the construction produces a Hamiltonian with the following ground states:
\begin{equation}\label{FeynmanClockHistoryStates}
	\left[\sum\limits_{t=0,...,n} \ket{\psi_t} \tensor \ket{t} \right] \text{ s.t. } V^{(t)} \ket{\psi_t} = \ket{\psi_{t+1}}
\end{equation}
The problem of performing the quantum computation is then reduced to the problem of finding a ground state for the Hamiltonian. This construction can be straightforwardly applied to the parallel-in-time simulation of discrete quantum dynamics (i.e. the one-step computation of a coherent history for the system) by seeing $U^{(t)}$ as the time evolution operator from time $t$ to time $t+1$ \cite{McClean2013}.

The relation between the Feynman clock construction and discrete periodic dynamics comes from the following observation: any linear circuit $(V^{(t)})_{t=0,...,T-1}$ can be turned into an appropriate cyclic circuit $(U^{(t)})_{t \in \integersMod{2T}}$ by setting $U^{(t)} := V^{(t)}$ for all $t=0,...,T-1$ and $U^{(t)} := V^{(2T-t-1)}$ for $t=T,...,2T-1$, and the cyclic circuit can be seen as a discrete periodic quantum dynamical system possessing a time-dependent Hamiltonian. The problem of finding the ground energy state for the original linear circuit is evidently equivalent to the problem of finding the ground energy state for the cyclic circuit, and hence the two circuits can be used interchangeably for the purposes of the Feynman clock construction. We will henceforth be considering a generic cyclic circuit, i.e. some family $(U^{(t)})_{t \in \integersMod{T}}$ of unitaries such that $\prod_{t=0}^{T-1} U^{(t)} = \id{}$.

The main obstacle to treating the Feynman clock construction within our framework would appear to be that a generic cyclic circuit $(U^{(t)})_{t \in \integersMod{T}}$ need not correspond to discrete periodic dynamics, in the sense used in this work up to this moment: in a representation $(U_t)_{t \in \integersMod{T}}$ of the finite cyclic group $\integersMod{T}$ we have $U_t = (U_1)^t$, while the operators $U^{(t)}$ are completely arbitrary. From a physical perspective, our symmetry approach models the the time-translation symmetry of quantum dynamical systems with a time-independent Hamiltonian, while the Feynman clock construction allows for potentially different time evolution operators $U^{(t)}$ at each different time $t$. 

However, this does not turn out to be such a mighty obstacle after all, because time-dependent dynamics can be easily accommodated in the time-independent symmetry perspective. Instead of a representation of $\integersMod{T}$ on $\SpaceH$, we consider the following representation $(W_t)_{t \in \integersMod{T}}$ of $\integersMod{T}$ on $\SpaceH \otimes \complexs[\integersMod{T}]$:
\begin{equation}
	W_{\delta t}\big(\ket{\psi} \tensor \ket{t}\big) := \left[\left(\prod\limits_{j=t}^{t+\delta t-1}U^{(j)}\right)\ket{\psi} \right] \tensor \ket{t \oplus \delta t}
\end{equation}
where the product is expanded to the left. This representation is essentially the \textit{propagator} for the discrete quantum dynamical system: given an interval of time $\delta t$, the time-translation action of the propagator evolves the state $\psi_t$ at time $t$ to the corresponding state $\psi_{t \oplus \delta t}$ at time $t \oplus \delta t$. If we interpret $\SpaceH$ as the quantum system of wavefunctions over some space, then $\SpaceH \otimes \SpaceG$ (here $\SpaceG = \complexs[\integersMod{T}]$) can be interpreted as the quantum system of wavefunctions over the corresponding (non-relativistic) space-time. 

With a little work to formalise that expanding product, we can turn this iterated product construction into a general result about unitary representations of coherent groups, and as a consequence we will be able to model quantum dynamical systems with time-dependent Hamiltonians within our framework.
\begin{theorem}[\textbf{Propagators}]\label{thm_propagators}\hfill\\
Let $\mathbb{G} := (\hbox{\input{symbols/ZbwdotSym.tex}}\!\!,\hbox{\input{symbols/DdotSym.tex}}\!\!)$ be a coherent group on an system $\SpaceG$ of a $\dagger$-SMC $\CategoryC$. Consider a process $\Pi U: (\SpaceH \otimes \SpaceG) \otimes \SpaceG \rightarrow \SpaceG$, and construct a process $\beta: (\SpaceH \otimes \SpaceG) \otimes \SpaceG \rightarrow \SpaceH \otimes \SpaceG$ as follows:
\begin{equation}\label{propagatorRepresentation}
	\input{pictures/chapter3/dynamics/propagatorRepresentation.tikz}
\end{equation}
Then $\beta$ is a unitary representation of the coherent group $\mathbb{G}$ on the composite system $\SpaceH \otimes \SpaceG$ if and only if the process $\Pi U$ satisfies the following requirements:
\begin{equation}\label{requirementsCyclicFamilyofUnitaries1}
	\input{pictures/chapter3/dynamics/requirementsCyclicFamilyofUnitaries1.tikz}
\end{equation}
\begin{equation}\label{requirementsCyclicFamilyofUnitaries2}
	\input{pictures/chapter3/dynamics/requirementsCyclicFamilyofUnitaries2.tikz}
\end{equation}
\begin{equation}\label{requirementsCyclicFamilyofUnitaries3}
	\input{pictures/chapter3/dynamics/requirementsCyclicFamilyofUnitaries3.tikz}
\end{equation}
When  $\beta$ is a unitary representation the form above, we refer to it as a \textbf{propagator}.
\end{theorem}
\begin{proof}
The entire proof essentially depends on the fact that the process $\Pi U$ can be recovered from the propagator $\beta$ by coherently deleting the time output of the latter:
\begin{equation}\label{propagatorRepresentation2}
	\resizebox{\textwidth}{!}{\input{pictures/chapter3/dynamics/propagatorRepresentation2.tikz}}
\end{equation}
Checking the various implications (six in total) is a tedious but entirely straightforward application of the laws of strong complementarity and Frobenius algebras. 
\end{proof}
\noindent In $\fdHilbCategory$, the process $\Pi U$ captures all the possible iterated products of unitaries in the following way: 
\begin{equation}\label{propagatorRepresentationExplained}
	\input{pictures/chapter3/dynamics/propagatorRepresentationExplained.tikz}
\end{equation}
Equation \ref{requirementsCyclicFamilyofUnitaries1} (similarly evaluated on clock time states $t$ and $\delta t$) corresponds to the following property of the iterated products:
\begin{equation}
\prod_{j=t}^{t+\delta t + \delta t'-1} U^{(j)} =\Big( \prod_{j=t+\delta t}^{t+\delta t + \delta t'-1} U^{(j)}\Big)\Big( \prod_{j=t}^{t+\delta t-1} U^{(j)}\Big)
\end{equation}
Equation \ref{requirementsCyclicFamilyofUnitaries2} corresponds to the following property of the iterated products:
\begin{equation}
\prod_{j=t}^{t-1} U^{(j)} = \id{}
\end{equation}
Equation \ref{requirementsCyclicFamilyofUnitaries3} defines what it means to take an iterated product \inlineQuote{going backwards}:
\begin{equation}
\prod_{j=t}^{t-\delta t-1} U^{(j)} := \Big( \prod_{j=t-\delta t  }^{t-1} U^{(j)} \Big)^\dagger = \big(U^{(t-\delta t)}\big)^\dagger \circ \big(U^{(t-\delta t + 1)}\big)^\dagger ... \circ \big(U^{(t-2)}\big)^\dagger \circ \big(U^{(t-1)}\big)^\dagger
\end{equation}
Equation \ref{requirementsCyclicFamilyofUnitaries3} furthermore proves that the product is in fact a product of unitaries. Theorem \ref{thm_propagators} is stated for a generic coherent group, but in the special case of discrete periodic dynamics Equation \ref{requirementsCyclicFamilyofUnitaries3} also proves that the family of unitaries involved in the product is in fact a cyclic circuit.

\begin{corollary}[\textbf{Time-translationally invariant propagators}]\hfill\\
Let $\mathbb{G} := (\hbox{\input{symbols/ZbwdotSym.tex}}\!\!,\hbox{\input{symbols/DdotSym.tex}}\!\!)$ be a coherent group on an system $\SpaceG$ of a $\dagger$-SMC $\CategoryC$, and let $\alpha$ be a unitary representation of $\mathbb{G}$ on $\SpaceH$. Then the following process $\beta$ is a propagator:
\begin{equation}\label{timeTranslInvariantPropagator}
	\input{pictures/chapter3/dynamics/timeTranslInvariantPropagator.tikz}
\end{equation}
We refer to these as \textbf{time-translationally invariant propagators}.
\end{corollary}
\begin{proof}
The proof involves another series of straightforward checks using the laws of strong complementarity. For example, the following chain of equalities proves the multiplication condition for $\beta$ using that for $\alpha$:
\begin{equation}\label{timeTranslInvariantPropagatorProof}
	\resizebox{\textwidth}{!}{\input{pictures/chapter3/dynamics/timeTranslInvariantPropagatorProof1.tikz}}
\end{equation}
\end{proof}

We are now in a position to prove correctness of the Feynman Clock construction within our framework. The next result is proven for general coherent groups, but in the special case where $\mathbb{G}$ is a quantum clock we have that $\beta^\dagger$ is the coherent Hamiltonian, and that the states $\Psi$ of Equation \ref{FeynmanClockHistoryStates} are the ground energy eigenstates. The result can be understood by observing that Equation \ref{FeynmanClockHistoryStatesStationaryProperty} is the abstract version of the following generalisation of the condition appearing in Equation \ref{FeynmanClockHistoryStates}:
\begin{equation}
\forall t. \forall \delta t. \; \Big(\prod_{j=t}^{t+\delta t - 1} U^{(j)} \Big) \ket{\psi_t} = \ket{\psi_{t+\delta t}}
\end{equation}
Hence Theorem \ref{thm_FeynmanClock} yields, in the special case of discrete periodic quantum dynamics, a proof of correctness for the traditional Feynman clock construction.
\begin{theorem}[\textbf{Feynman's Clock}]\label{thm_FeynmanClock}\hfill\\
Let $\mathbb{G} := (\hbox{\input{symbols/ZbwdotSym.tex}}\!\!,\hbox{\input{symbols/DdotSym.tex}}\!\!)$ be a coherent group on an system $\SpaceG$, and let $\beta: (\SpaceH \otimes \SpaceG) \otimes \SpaceG \rightarrow \SpaceH \otimes \SpaceG$ be a propagator for $\mathbb{G}$. Then the eigenstates of $\beta^\dagger$ corresponding to the definite outcome $\!\hbox{\input{symbols/ZbwunitSym.tex}}\!\! \in \classicalStates{\hbox{\input{symbols/DdotSym.tex}}\!\!}$ are exactly the states $\Psi$ of $\SpaceH \otimes \SpaceG$ satisfying the following condition:
\begin{equation}\label{FeynmanClockHistoryStatesStationaryProperty}
	\input{pictures/chapter3/dynamics/FeynmanClockHistoryStatesStationaryProperty.tikz}
\end{equation}
\end{theorem}
\begin{proof}
If $\Psi$ is an eigenstate corresponding to definite outcome $\!\hbox{\input{symbols/ZbwunitSym.tex}}\!\!$, then we have the following equality (by the definition of $\beta$ as a propagator):
\begin{equation}\label{FeynmanClockProof1}
	\input{pictures/chapter3/dynamics/FeynmanClockProof1.tikz}
\end{equation}
As a consequence we also have the following chain of equalities, where the rightmost equality is obtained by applying the laws of strong complementarity (central rule of the bottom row) and Hopf's law:
\begin{equation}\label{FeynmanClockProof2}
	\resizebox{\textwidth}{!}{\input{pictures/chapter3/dynamics/FeynmanClockProof2.tikz}}
\end{equation}
\end{proof}

\subsection{Clock-system synchronisation}

Up to this point, we have described quantum dynamical systems in terms of unitary representations. Although appealing from an algebraic and categorical perspective, this formulation lacks an immediate physical interpretation. To fix this, we shift point of view from the \textit{action} of a quantum clock on a quantum system, to the \textit{synchronisation} of the quantum clock and quantum dynamical system. The algebraic perspective corresponds to saying that a clock time state $\ket{\delta t}$ sends state $\ket{\psi_t}$ of a quantum dynamical system $\alpha$ to the corresponding evolved state $\ket{\psi_{t \oplus \delta t}}$. The synchronisation perspective corresponds to saying that whenever the clock is measured to be in clock time state $\ket{t}$, the quantum dynamical system is necessarily in state $\ket{\psi_t}$.

When talking about a \textbf{synchronised clock-system state} for a unitary representation $\alpha$ of coherent group $\mathbb{G} = (\hbox{\input{symbols/ZbwdotSym.tex}}\!\!,\hbox{\input{symbols/DdotSym.tex}}\!\!)$, we will mean a state in the following form, and we will refer to $\psi_0$ as the \textbf{initial state} for $\alpha$:
\begin{equation}\label{syncClockSystemPair}
	\input{pictures/chapter3/dynamics/syncClockSystemPair.tikz}
\end{equation}
Note that a synchronised clock-system state for $\alpha$ is a stationary state for the time-translationally invariant propagator $\beta$ associated with $\alpha$ (the one given by Equation \ref{timeTranslInvariantPropagator}). There is an obvious generalisation to multiple quantum dynamical systems $\alpha^{(1)},...,\alpha^{(N)}$ governed by $\mathbb{G}$ and mutually synchronised:
\begin{equation}\label{syncClockSystemPairMultipleSystems}
	\input{pictures/chapter3/dynamics/syncClockSystemPairMultipleSystems.tikz}
\end{equation}
In fact, this is not really a generalisation of the notion of synchronised clock-system state, but rather a special case of it. Indeed, we can obtain Diagram \ref{syncClockSystemPairMultipleSystems} from Diagram \ref{syncClockSystemPair} by choosing $\alpha$ to be the following \textbf{joint dynamical system} of $\alpha^{(1)},...,\alpha^{(N)}$:
\begin{equation}\label{jointDynamicalSystem}
	\input{pictures/chapter3/dynamics/jointDynamicalSystem.tikz}
\end{equation}

A measurement of the quantum clock $\mathbb{G}$ in the clock time observable results in each systems $\alpha^{(j)}$ collapsing to state $\psi^{(j)}_t$, as one would expect:
\begin{equation}\label{syncClockSystemPairTime}
	\input{pictures/chapter3/dynamics/syncClockSystemPairTime.tikz}
\end{equation}
One should note that Equation \ref{syncClockSystemPairTime} is a post-selection on a definite clock time state, and does not necessarily reflect the intuition of looking at a clock to find out what time it is. In short, the situation can be summarised as follows: in the real world, both the quantum system and the quantum clock can be thought to be in turn synchronised with some inaccessible quantum clock ticking time for both of them. Real world clocks, for example, are finite: they model the discrete periodic time of $\integersMod{T}$, ticked at regular intervals and starting again from zero when the clock has gone through all its $T$ time states. With respect to that same regular interval, we could think of an inaccessible external quantum clock as ticking the discrete time of $\integers$, so that synchronisation between the clock and the external clock corresponds to the action $\integers \times \integersMod{T} \rightarrow \integersMod{T}$ given by $(\delta t, t_0) \mapsto t_0 \oplus q(\delta t)$, where $q: \integers \rightarrow \integersMod{T}$ is the group quotient homomorphism defined by $q(1) = \modclass{1}{T}$. Similarly, the continuous case would involve finite clocks ticking $\reals/(T\integers)$ time and the external clock ticking $\reals$ time, with synchronisation between the two given by the corresponding quotient group homomorphism $q:\reals \rightarrow \reals/(T\integers)$.

When thinking of a real world synchronised clock-system scenarios, we sometimes have to explicitly consider the inaccessible external clock. For example, consider the following situation (e.g. with the clock ticking $\integersMod{T}$ and the external clock ticking $\integers$ as before). At some point in the external clock time, the (internal) clock is measured to be in clock time state $t_0$, so that the synchronised system is inferred to be in state $\ket{\psi_{t_0}}$. A certain amount $\delta t$ of external clock time is then allowed to pass, and the clock is measured again: in a correct modelling of this situation, the clock should be found in clock time state $t_0 \oplus q(\delta t)$, and the system should be inferred to be in state $\ket{\psi_{t_0\oplus q(\delta t)}}$. This scenario cannot be modelled simply by post-selecting clock time states, because post-selection is a static process and the clock needs to evolve between successive measurements. The correct modelling of this situation goes as follows, where $\gamma$ is the quantum dynamical system corresponding to the discrete periodic clock (described by the action of $\integers$ on $\integersMod{T}$ above) and $\alpha$ is the quantum dynamical system synchronised with it:
\begin{equation}
	\input{pictures/chapter3/dynamics/systemClockUniclockEvol.tikz}
\end{equation}

We have given an interpretation to measurement of the quantum clock $\mathbb{G}$ in clock time states. But what if instead we measured a quantum clock, synchronised with one or many systems, in the clock energy observable? We claim that this results in the synchronised systems finding themselves in  a global state of definite total energy $\goodchi_{tot}$ given by the outcome of the clock energy measurement:
\begin{equation}\label{syncClockSystemPairEnergy}
	\input{pictures/chapter3/dynamics/syncClockSystemPairEnergy.tikz}
\end{equation}
To simplify our life in proving that this interpretation is sound in general, we will (this time only) assume that the energy levels are orthogonal. If we perform a Hamiltonian measurement on systems $\alpha^{(1)},...,\alpha^{(N)}$ and obtain energy levels $\goodchi^{(1)},...,\goodchi^{(N)}$, then a global state in the form of \ref{syncClockSystemPairEnergy} imposes the constraint $\goodchi^{(1)}\oplus ... \oplus \goodchi^{(N)} = \goodchi_{tot}$, proving that $\goodchi_{tot}$ behaves exactly like we would expect the total energy of the global dynamical system to behave:
\begin{equation}\label{syncClockSystemPairEnergyProof}
	\resizebox{\textwidth}{!}{\input{pictures/chapter3/dynamics/syncClockSystemPairEnergyProof.tikz}}
\end{equation}
In $\fdHilbCategory$, the state of definite total energy given by Equation \ref{syncClockSystemPairEnergy} is a superposition of all possible combinations of states of definite energies $\goodchi^{(1)},...,\goodchi^{(N)}$ for the individual systems, exactly as would be expected:
\begin{equation}
	\input{pictures/chapter3/dynamics/totalEnergyStateSuperposition.tikz}
\end{equation}

This whole business of measuring a quantum clock in the clock energy observable also answer a pending question about the inaccessible external clocks we have talked so much about: what is a good way to denote their inaccessibility in the diagrammatic formalism? The answer turns out to lie in the clock energy measurement. For example, we should not have access to the time state of a universal clock (a particularly extreme case of external clock), but we sure can certainly impose the total energy that systems governed by said universal clock should have. Hence the act of making an external clock inaccessible coincides with the act of setting the total energy $\goodchi_{tot}$ for a group of synchronised dynamical systems, as done in Equation \ref{syncClockSystemPairEnergy}. Indeed, the time translations required to model the incessant marching of universal time can still be performed under post-selection on a total energy state, as long as we are willing to ignore the ensuing global phase:
\begin{equation}
	\input{pictures/chapter3/dynamics/systemClockUniclockEvolTimeTransl.tikz}
\end{equation}
This shows that there is a third way to address the issue of universal clocks: it may well be inevitable that they be made inaccessible in the modelling of any operational scenario, but the discussion above shows that this can be achieved without necessarily turning time into an external classical parameter.

\subsection{Time observables}
\label{subsection_timeObservables}

The introduction of synchronised clock-system states as modelling the relationship between a quantum dynamical system and the quantum clock governing its dynamics can be related to the problem of time observables. We have seen in the beginning of this Section that positing the existence of a universal quantum clock poses severe issues from both a philosophical and a physical perspective In the previous Subsection we have discussed how such a problem may be solved within our coherent framework, by positing the existence of inaccessible quantum clocks which govern the joint dynamics of the quantum dynamical systems in the various scenarios we might be interested in.

This approach leaves one important question open: how do quantum clocks emerge in the first place? How are they related between themselves? To answer these questions, we will show that certain quantum dynamical systems possess an \inlineQuote{internal} time observables, strongly complementary (in a suitable sense) to their Hamiltonian, and that these systems can be turned into quantum clocks governing all other systems in the global synchronised state. 

To begin with, if $\alpha: \SpaceH \otimes \SpaceG \rightarrow \SpaceG$ is a unitary representation of a coherent group $\mathbb{G}$ on an object $\SpaceH$, we say that a symmetric $\dagger$-qSFA $\hbox{\input{symbols/WbwdotSym.tex}}\!$ on $\SpaceH$ \textbf{internalises} $\alpha^\dagger$ if there is a $\hbox{\input{symbols/WbwdotSym.tex}}\!$-to-$\hbox{\input{symbols/DdotSym.tex}}\!\!$ classical process $s: \SpaceH \rightarrow \SpaceG$ such that:
\begin{equation}\label{internalisingCoherentHamiltonian}
	\input{pictures/chapter3/dynamics/internalisingCoherentHamiltonian.tikz}
\end{equation}
In fact, if one such $s$ exists then it is necessarily unique:
\begin{equation}\label{internalisingCoherentHamiltonianUnique}
	\input{pictures/chapter3/dynamics/internalisingCoherentHamiltonianUnique.tikz}
\end{equation}
In the context of quantum dynamical systems, $\hbox{\input{symbols/WbwdotSym.tex}}\!$ will called be an \textbf{internal Hamiltonian observable}, or \textbf{internal energy observable}: in $\fdHilbCategory$, this $\dagger$-qSFA corresponds to the traditional Hamiltonian observable. If $s$ is an isometry, then we will refer to $\hbox{\input{symbols/WbwdotSym.tex}}\!$ as a \textbf{non-degenerate} internal energy observable, because the isometry condition means that the quantum dynamical system has non-degenerate energy eigenspaces. The process $s$ simply maps the energy eigenstates, which act as internal labels for the energy levels of the dynamical system, to the corresponding clock energy states, which are the canonical labels for the energy levels of the dynamical system. 

Now that we have candidate clock energy observable for the dynamical system $\alpha$, we need to find an appropriate clock time observable to match it. 

\begin{theorem}[\textbf{Internal time observables}]\label{thm_internalTimeObservables}\hfill\\
Let $\mathbb{G} := (\hbox{\input{symbols/ZbwdotSym.tex}}\!\!,\hbox{\input{symbols/DdotSym.tex}}\!\!)$ be a doubly well-pointed coherent group on a system $\SpaceG$ of a $\dagger$-SMC $\CategoryC$, and let $\alpha$ be a unitary representation of $\mathbb{G}$ on a system $\SpaceH$. Assume that there is a non-degenerate internal energy observable $\hbox{\input{symbols/WbwdotSym.tex}}\!$ having enough classical states. Then the following implications both hold.
\begin{enumerate}
	\item[(i)] If the function $s : \classicalStates{\hbox{\input{symbols/WbwdotSym.tex}}\!} \rightarrow \classicalStates{\hbox{\input{symbols/DdotSym.tex}}\!\!}$ has image which is a subgroup $H$ of $(\classicalStates{\hbox{\input{symbols/DdotSym.tex}}\!\!},\!\hbox{\input{symbols/ZbwmultSym.tex}}\!\!,\!\hbox{\input{symbols/ZbwunitSym.tex}}\!\!)$, then there is a symmetric $\dagger$-qSCFA $\hbox{\input{symbols/YbwdotSym.tex}}\!\!$ on $\SpaceH$ such that $\mathbb{H}=(\hbox{\input{symbols/YbwdotSym.tex}}\!\!,\hbox{\input{symbols/WbwdotSym.tex}}\!)$ is a doubly well-pointed coherent group on system $\SpaceH$ and $s^\dagger$ is a coherent group homomorphism $s^\dagger:(\hbox{\input{symbols/ZbwdotSym.tex}}\!\!,\hbox{\input{symbols/DdotSym.tex}}\!\!) \rightarrow (\hbox{\input{symbols/YbwdotSym.tex}}\!\!,\hbox{\input{symbols/WbwdotSym.tex}}\!)$.
	\item[(ii)] Conversely, if  $\mathbb{H}=(\hbox{\input{symbols/YbwdotSym.tex}}\!\!,\hbox{\input{symbols/WbwdotSym.tex}}\!)$ is a doubly well-pointed coherent group on system $\SpaceH$ such that $s^\dagger$ is a coherent group homomorphism $s^\dagger:(\hbox{\input{symbols/ZbwdotSym.tex}}\!\!,\hbox{\input{symbols/DdotSym.tex}}\!\!) \rightarrow (\hbox{\input{symbols/YbwdotSym.tex}}\!\!,\hbox{\input{symbols/WbwdotSym.tex}}\!)$, then $s : \classicalStates{\hbox{\input{symbols/WbwdotSym.tex}}\!} \rightarrow \classicalStates{\hbox{\input{symbols/DdotSym.tex}}\!\!}$ has image which is a subgroup $H$ of $(\classicalStates{\hbox{\input{symbols/DdotSym.tex}}\!\!},\!\hbox{\input{symbols/ZbwmultSym.tex}}\!\!,\!\hbox{\input{symbols/ZbwunitSym.tex}}\!\!)$.
\end{enumerate}
In both cases, $s^\dagger$ restricts to a quotient group homomorphism $s^\dagger: (\classicalStates{\hbox{\input{symbols/ZbwdotSym.tex}}\!\!},\!\hbox{\input{symbols/DmultSym.tex}}\!\!,\!\hbox{\input{symbols/DunitSym.tex}}\!\!) \rightarrow (\classicalStates{\hbox{\input{symbols/YbwdotSym.tex}}\!\!},\hbox{\input{symbols/WbwmultSym.tex}}\!,\hbox{\input{symbols/WbwunitSym.tex}}\!)$, showing that $(\classicalStates{\hbox{\input{symbols/YbwdotSym.tex}}\!\!},\hbox{\input{symbols/WbwmultSym.tex}}\!,\hbox{\input{symbols/WbwunitSym.tex}}\!) \isom (\classicalStates{\hbox{\input{symbols/ZbwdotSym.tex}}\!\!},\!\hbox{\input{symbols/DmultSym.tex}}\!\!,\!\hbox{\input{symbols/DunitSym.tex}}\!\!)/H^\wedge$. 
\end{theorem}
\begin{proof}
We begin by proving implication (i). Define $\!\hbox{\input{symbols/YbwcomultSym.tex}}\!\!$ and $\!\hbox{\input{symbols/YbwcounitSym.tex}}\!\!$ as follows:
\begin{equation}\label{internalTimeObservable}
	\input{pictures/chapter3/dynamics/internalTimeObservable.tikz}
\end{equation}
The $\dagger$-qSCFA $\hbox{\input{symbols/WbwdotSym.tex}}\!$ has enough classical states, $s$ is a classical injection on $\hbox{\input{symbols/WbwdotSym.tex}}\!$-classical states (because it is an isometry) and it has a subgroup $H$ of $(\classicalStates{\hbox{\input{symbols/DdotSym.tex}}\!\!},\!\hbox{\input{symbols/ZbwmultSym.tex}}\!\!,\!\hbox{\input{symbols/ZbwunitSym.tex}}\!\!)$ as its image. As a consequence, it is immediate to check that  $\!\hbox{\input{symbols/YbwcomultSym.tex}}\!\!$ and $\!\hbox{\input{symbols/YbwcounitSym.tex}}\!\!$ form, together with their adjoints, a symmetric $\dagger$-qSFA $\hbox{\input{symbols/YbwdotSym.tex}}\!\!$: one evaluates the equations on $\hbox{\input{symbols/WbwdotSym.tex}}\!$-classical states, pushes the states through the classical injection, and checks the validity of the equations for $\hbox{\input{symbols/ZbwdotSym.tex}}\!\!$, which satisfies all of them because it is a $\dagger$-qSCFA. 

Since the image of the classical map $s$ is a subgroup $H$ of $(\classicalStates{\hbox{\input{symbols/DdotSym.tex}}\!\!},\!\hbox{\input{symbols/ZbwmultSym.tex}}\!\!,\!\hbox{\input{symbols/ZbwunitSym.tex}}\!\!)$, it is also immediate to check that $(\!\hbox{\input{symbols/YbwmultSym.tex}}\!\!,\!\hbox{\input{symbols/YbwunitSym.tex}}\!\!)$ endows $\classicalStates{\hbox{\input{symbols/WbwdotSym.tex}}\!}$ with the structure of a group: but $\hbox{\input{symbols/WbwdotSym.tex}}\!$ has enough classical states, and hence $(\hbox{\input{symbols/YbwdotSym.tex}}\!\!,\hbox{\input{symbols/WbwdotSym.tex}}\!)$ is a coherent group. A similar argument can be used to prove that $s$ is a coherent group homomorphism $s:(\hbox{\input{symbols/WbwdotSym.tex}}\!,\hbox{\input{symbols/YbwdotSym.tex}}\!\!) \rightarrow (\hbox{\input{symbols/DdotSym.tex}}\!\!,\hbox{\input{symbols/ZbwdotSym.tex}}\!\!)$, and as a consequence $s^\dagger$ is a coherent group homomorphism $s^\dagger:(\hbox{\input{symbols/ZbwdotSym.tex}}\!\!,\hbox{\input{symbols/DdotSym.tex}}\!\!) \rightarrow (\hbox{\input{symbols/YbwdotSym.tex}}\!\!,\hbox{\input{symbols/WbwdotSym.tex}}\!)$. Finally, $\hbox{\input{symbols/YbwdotSym.tex}}\!\!$ has enough classical states because: (a) $s^\dagger$ is $\hbox{\input{symbols/ZbwdotSym.tex}}\!\!$-to-$\hbox{\input{symbols/YbwdotSym.tex}}\!\!$ classical and surjective on $\hbox{\input{symbols/YbwdotSym.tex}}\!\!$-classical states; (b) $\hbox{\input{symbols/ZbwdotSym.tex}}\!\!$ has enough classical states; (c) $s$ is an isometry.

The proof of implication (ii) goes along similar lines: if  $(\hbox{\input{symbols/YbwdotSym.tex}}\!\!,\hbox{\input{symbols/WbwdotSym.tex}}\!)$ is a coherent group on $\SpaceH$ and $s^\dagger$ is a coherent group homomorphism $s^\dagger:(\hbox{\input{symbols/ZbwdotSym.tex}}\!\!,\hbox{\input{symbols/DdotSym.tex}}\!\!) \rightarrow (\hbox{\input{symbols/YbwdotSym.tex}}\!\!,\hbox{\input{symbols/WbwdotSym.tex}}\!)$, then $s$ must be a coherent group homomorphism $s:(\hbox{\input{symbols/WbwdotSym.tex}}\!,\hbox{\input{symbols/YbwdotSym.tex}}\!\!) \rightarrow (\hbox{\input{symbols/DdotSym.tex}}\!\!,\hbox{\input{symbols/ZbwdotSym.tex}}\!\!)$, and hence the image of $s$ must be a subgroup $H$ of $(\classicalStates{\hbox{\input{symbols/DdotSym.tex}}\!\!},\!\hbox{\input{symbols/ZbwmultSym.tex}}\!\!,\!\hbox{\input{symbols/ZbwunitSym.tex}}\!\!)$.

In both cases, $s^\dagger$ restricted to $\hbox{\input{symbols/ZbwdotSym.tex}}\!\!$-classical states must be a surjective group homomorphism $s^\dagger: (\classicalStates{\hbox{\input{symbols/ZbwdotSym.tex}}\!\!},\!\hbox{\input{symbols/DmultSym.tex}}\!\!,\!\hbox{\input{symbols/DunitSym.tex}}\!\!) \rightarrow (\classicalStates{\hbox{\input{symbols/YbwdotSym.tex}}\!\!},\hbox{\input{symbols/WbwmultSym.tex}}\!,\hbox{\input{symbols/WbwunitSym.tex}}\!)$, hence proving that  $(\classicalStates{\hbox{\input{symbols/YbwdotSym.tex}}\!\!},\hbox{\input{symbols/WbwmultSym.tex}}\!,\hbox{\input{symbols/WbwunitSym.tex}}\!) \isom (\classicalStates{\hbox{\input{symbols/ZbwdotSym.tex}}\!\!},\!\hbox{\input{symbols/DmultSym.tex}}\!\!,\!\hbox{\input{symbols/DunitSym.tex}}\!\!)/H^\wedge$.
\end{proof}

Theorem \ref{thm_internalTimeObservables} is an extremely important result for this framework: it gives a characterisation of certain quantum dynamical systems which can be taken to behave as quantum clocks, i.e. which can be endowed with the structure of a coherent group which is compatible (via the coherent group homomorphism $s^\dagger$) with the original dynamics. We work out the details of the most general example of this phenomenon in $\fdHilbCategory$, for discrete periodic dynamics. 

Consider a discrete periodic quantum clock $\GroupG$ in $\fdHilbCategory$, given in its most general form by the group algebra $\SpaceG := \complexs[\integersMod{T}]$ for some $T$, and let $\alpha: \SpaceH \otimes \SpaceG \rightarrow \SpaceH$ be a quantum dynamical system governed by $\GroupG$, associated to a unitary representation $(U_t)_{t \in \integersMod{T}}$ of $\integersMod{T}$ on $\SpaceH$. A non-degenerate internal energy observable $\hbox{\input{symbols/WbwdotSym.tex}}\!$ (a $\dagger$-qSCFA with normalisation factor $N_{\hbox{\input{symbols/WbwdotSym.tex}}\!} = |\integersMod{T}|=|H||H^\wedge|$ fixed by the requirement that $s^\dagger$ be an isometry) exists if we have the following decomposition for the representation, where $H \subseteq \integersMod{T}^\wedge$ is any non-empty subset and $(\ket{\psi_{\goodchi}})_{\goodchi\in H}$ is an orthogonal basis (the classical states for $\hbox{\input{symbols/WbwdotSym.tex}}\!$):
\begin{equation}
	U_t := \sum_{\goodchi \in H} \goodchi(t) \frac{1}{|H||H^\wedge|}\ket{\psi_{\goodchi}}\bra{\psi_{\goodchi}}
\end{equation}
When $H$ is a subgroup, we can consider the quotient $G' := \integersMod{T}/H^\wedge$, and we label the elements of the quotient by the cosets $t \oplus H^\wedge$ of $H^\wedge$ in $\integersMod{T}$. We can then consider the following family $(\ket{t\oplus H^\wedge})_{(t\oplus H^\wedge) \in G'}$ of states:
\begin{equation}
		\ket{t\oplus H^\wedge} := \sum_{\goodchi \in H} \goodchi(t) \frac{1}{|H||H^\wedge|} \ket{\psi_{\goodchi}}
\end{equation}
It is not hard to check that the family is an orthogonal basis corresponding to a $\dagger$-qSCFA $\hbox{\input{symbols/YbwdotSym.tex}}\!\!$ on $\SpaceH$ (with normalisation factor $N_{\hbox{\input{symbols/YbwdotSym.tex}}\!\!} = 1/|H^\wedge|$ fixed by the requirement that $s^\dagger$ be an isometry):
\begin{align}
		\braket{s \oplus H^\wedge}{t\oplus H^\wedge} &=  \sum_{\goodchi,\goodchi' \in H} (\goodchi')^\ast(s)\goodchi(t) \frac{1 }{|H||H^\wedge|}\frac{1}{N_{\hbox{\input{symbols/WbwdotSym.tex}}\!}}\braket{\psi_{\goodchi'}}{\psi_{\goodchi}} = \nonumber\\
		&= \frac{1}{|H||H^\wedge|}\sum_{\goodchi \in H} \goodchi(t\ominus s) = 
		\begin{cases}
			\frac{1}{|H^\wedge|} \text{ if } s \oplus H = t \oplus H \\
			0 \text{ otherwise}
		\end{cases}
\end{align}
We want the unit $\hbox{\input{symbols/WbwunitSym.tex}}\!$ to be the state $\ket{0\oplus H^\wedge}$, and indeed our choice of normalisation factors yields the following equality:
\begin{equation}
	|H^\wedge|\hbox{\input{symbols/WbwcounitSym.tex}}\! \circ \ket{t\oplus H^\wedge} = \frac{1}{|H|} \sum_{\goodchi \in H} \goodchi(t)  = 
	\begin{cases}
		1 \text{ if } t = 0 \\
		0 \text{ otherwise}
	\end{cases}
\end{equation}
The multiplication $\hbox{\input{symbols/WbwmultSym.tex}}\!$ acts as the group multiplication of $G'$ on the family:
\begin{align}
	\hbox{\input{symbols/WbwmultSym.tex}}\! \circ (\ket{t\oplus H^\wedge} \otimes \ket{s\oplus H^\wedge}) &= \frac{1}{|H||H^\wedge|}\sum_{\goodchi \in H}  \goodchi(s\oplus t) \ket{\psi_{\goodchi}} = \ket{(s\oplus t)\oplus H^\wedge}
\end{align}
As a consequence, the pair $\GroupG':=(\hbox{\input{symbols/YbwdotSym.tex}}\!\!,\hbox{\input{symbols/WbwdotSym.tex}}\!)$ is a doubly well-pointed coherent group on $\SpaceH$, with underlying group $\underlyingGroup{\GroupG'} = G'$. The map $s^\dagger$ specified by Equation \ref{internalisingCoherentHamiltonianUnique} is given explicitly as follows, and restricts to the quotient group homomorphism $\integersMod{T} \rightarrow \integersMod{T}/H^\wedge$ when evaluated on $\hbox{\input{symbols/ZbwdotSym.tex}}\!\!$-classical states:
\begin{align}
	s^\dagger &= \frac{1}{N_{\hbox{\input{symbols/WbwdotSym.tex}}\!}}\sum_{t \in \integersMod{T}}\sum_{\goodchi \in H} \goodchi(t) \ket{\psi_{\goodchi}}\bra{t} 
	= \frac{1}{ N_{\hbox{\input{symbols/WbwdotSym.tex}}\!}}\sum_{(t \oplus H^\wedge) \in G'}\sum_{t' \in (t \oplus H^\wedge)}\sum_{\goodchi \in H} \goodchi(t') \ket{\psi_{\goodchi}}\bra{t'} = \nonumber \\
	&= \sum_{(t \oplus H^\wedge) \in G'}\sum_{\goodchi \in H} \goodchi(t) \frac{1}{|H||H^\wedge|}\ket{\psi_{\goodchi}} \hspace{-3mm}\sum_{t' \in (t \oplus H^\wedge)}\hspace{-2mm} \bra{t'} \hspace{2mm}= \sum_{(t \oplus H^\wedge) \in G'}\hspace{-2mm}\ket{t \oplus H^\wedge}\Big(\sum_{t' \in (t \oplus H^\wedge)}\bra{t'}\Big)
\end{align}

Theorem \ref{thm_internalTimeObservables} is a rather general result, but by itself it does not cover all interesting cases of internal time observables: the relationship between external and internal time only involves a quotient, with no space for any kind of \inlineQuote{coarsening} of the time being ticket. Indeed, consider consider the following setup with two synchronised clocks: one is a wall clock, ticking 12 hours in intervals of one minute, and one is a chronograph, ticking 24 hours in intervals of 1/100 of a second. It is sensible to say that the wall clock is a dynamical system governed by the chronograph in some appropriate sense, but the relationship between wall clock (internal) time and chronograph (external) time is not simply given by a quotient: first one needs to discretise the chronograph to a digital clock, ticking 24 hours in intervals of one second, and only at that point a quotient can be taken. Mathematically, this amounts to first considering the subgroup $\integersMod{86,400} \normalSubgroup \integersMod{8,640,000}$ (going from $8,640,000$ 1/100 seconds in 24h to $86,400$ seconds in 24 hours), and then considering the quotient $\integersMod{86,400} \rightarrow \integersMod{43,200}$ (going from  $86,400$ seconds in 24 hours to $43,200$ seconds in 12 hours). 

Consider two coherent groups $\GroupG,\GroupG'$ on objects $\SpaceG,\SpaceG'$ of a $\dagger$-SMC, and a coherent group homomorphism $f : \GroupG' \rightarrow \GroupG'$. If $\alpha:\SpaceH \otimes \SpaceG \rightarrow \SpaceH$ is a unitary representation of $\GroupG$ on $\SpaceH$, then it is easy to check that $\gamma:=\alpha \circ (\id{\SpaceH} \otimes f)$ is a unitary representation of $\GroupG'$ on $\SpaceH$. It is possible, therefore, that the assumptions of Theorem \ref{thm_internalTimeObservables} might not apply to $\alpha$ and $\GroupG$, because $H$ is not a subgroup of $\underlyingGroup{\GroupG}$, but that the assumptions do apply once we move to $\gamma$ and $\GroupG'$: in the wall-clock vs chronograph example above, $\alpha$ is the wall clock seen as a system governed by the chronograph $\GroupG$, and $\gamma$ is the wall clock seen as a system governed by the digital clock $\GroupG'$, a coarsening of the chronograph. We work out the details of this phenomenon in $\fdHilbCategory$, for discrete periodic dynamics. 

Consider again a discrete periodic doubly well-pointed quantum clock $\GroupG$ in $\fdHilbCategory$, given in its most general form by the group algebra $\SpaceG := \complexs[\integersMod{T}]$ for some $T$, and let $\alpha: \SpaceH \otimes \SpaceG \rightarrow \SpaceH$ be a quantum dynamical system governed by $\GroupG$, associated to a unitary representation $(U_t)_{t \in \integersMod{T}}$ of $\integersMod{T}$ on $\SpaceH$. Consider again a non-degenerate internal energy observable $\hbox{\input{symbols/WbwdotSym.tex}}\!$ (a $\dagger$-qSCFA with normalisation factor $N_{\hbox{\input{symbols/WbwdotSym.tex}}\!}$) with $H \subseteq \integersMod{T}^\wedge$ any non-empty subset and  $(\ket{\psi_{\goodchi}})_{\goodchi\in H}$ the orthogonal basis of classical states for $\hbox{\input{symbols/WbwdotSym.tex}}\!$:
\begin{equation}
	U_t := \sum_{\goodchi \in H} \goodchi(t) \frac{1}{N_{\hbox{\input{symbols/WbwdotSym.tex}}\!}}\ket{\psi_{\goodchi}}\bra{\psi_{\goodchi}}
\end{equation}
In this more general case, $H$ is not necessarily a subgroup of $\integersMod{T}^\wedge$. Assume, however, that there is a subgroup injection $i: \integersMod{T'} \rightarrow \integersMod{T}$ (i.e. $T = m T'$ for some $m \in \naturals^+$), and assume that $H':=\suchthat{\goodchi \circ i}{\goodchi \in H}$ is a subgroup of $\integersMod{T'}^\wedge$, where we have:
\begin{equation}
	\big(\goodchi \circ i\big)(t') = \goodchi(m\,t')
\end{equation}
In this context, we could apply Theorem \ref{thm_internalTimeObservables} to the quantum dynamical system $\gamma:=\alpha \circ (\id{\SpaceH} \otimes i)$ governed by the quantum clock $\GroupG'$ specified by the group algebra $\complexs[\integersMod{T'}]$ (we have written $i:\GroupG'\rightarrow\GroupG$ for the coherent group homomorphism specified by the subgroup injection $i: \integersMod{T'} \rightarrow \integersMod{T}$). 

As a concrete example, we consider the quantum clocks associated with the wall-clock, chronograph and digital clock examples given above. The chronograph corresponds to a quantum clock $\GroupG$ with underlying group $\integersMod{T}$ for $T=8,640,000$, while the digital clock corresponds to a quantum clock $\GroupG'$ with underlying group $\integersMod{T'}$ for $T'=86,400$. The wall clock as a quantum dynamical system $\alpha$ governed by the chronograph is given by the following unitary representation $(U_t)_{t \in \integersMod{T}}$ (where $(\ket{\psi_k})_{k=0}^{43,200-1}$):
\begin{equation}
	U_t := \sum_{k=0}^{43,200-1} e^{i2\pi \frac{2k t}{8,640,000}} \ket{\psi_k}\bra{\psi_k}
\end{equation}
The subset $H:=\suchthat{t \mapsto e^{i2\pi \frac{2k t}{8,640,000}}}{k=0,...,43,200-1}$ is not a subgroup of $\integersMod{T}$. The wall clock as a quantum dynamical system $\beta$ governed by the chronograph is given by the following unitary representation $(V_{t'})_{t' \in \integersMod{T'}}$:
\begin{equation}
	V_{t'} := \sum_{k=0}^{43,200-1} e^{i2\pi \frac{2k t'}{86,400}} \ket{\psi_k}\bra{\psi_k}
\end{equation}
The subset $H':=\suchthat{t \mapsto e^{i2\pi \frac{2k t}{86,400}}}{k=0,...,43,200-1}$ is a subgroup of $\integersMod{T'}$, with $(H')^\wedge \isom \integersMod{2}$. We can apply Theorem \ref{thm_internalTimeObservables} to the wall clock $\gamma$ governed by the digital clock $\GroupG'$, and we conclude that the wall clock is a quantum clock itself, with underlying group $\integersMod{86,400}/\integersMod{2}\isom \integersMod{43,200}$.

We have established above that under certain conditions quantum dynamical systems are quantum clocks, but what use is a clock if it is not synchronised with other dynamical systems? In other words: we have established that certain quantum dynamical systems \textit{have the algebraic structure of} quantum clocks, but we have not yet shown that they \textit{behave operationally as quantum clocks}. Consider the general scenario of Equation \ref{syncClockSystemPairEnergy}, where quantum dynamical systems $\alpha^{(1)},...,\alpha^{(N)}$ are synchronised between themselves and governed by some inaccessible clock, which has been forgotten by setting a total energy $\goodchi_{tot}$ for the systems. If one of the dynamical systems, say $\alpha^{(N)}$, can be turned into a quantum clock by virtue of Theorem \ref{thm_internalTimeObservables}, then we would expect it to govern the remaining dynamical systems $\alpha^{(1)},...,\alpha^{(N-1)}$ (we can treat the latter as a single quantum dynamical system $\beta$, and we will simply write $\alpha$ for the quantum dynamical system $\alpha^{(N)}$ being promoted to the role of clock). Theorem \ref{thm_emergingQuantumClocks} below shows that this is indeed the case, and hence completes the picture on the emergence of quantum clocks: quantum clocks begin their lives as quantum dynamical systems, and rise to the challenge when the quantum clock originally governing their dynamics is forgotten (by imposing a total energy constraint). 
\begin{equation}
	\resizebox{\textwidth}{!}{\input{pictures/chapter3/dynamics/emergenceOfQuantumClocks.tikz}}
\end{equation}
In order for this to happen, however, one must first ensure that the quantum dynamical system $\beta$ is restricted to energy levels which are compatible with the internal clock. This is achieved by the introduction of a projector $P$ onto the subspace spanned by the energy eigenstates for $\beta$ corresponding to energy levels of the internal clock $\alpha$ (seen as a subgroup of the energy levels of the external clock):
\begin{equation}\label{emergingClockRepProjectorExplained}
	\input{pictures/chapter3/dynamics/emergingClockRepProjectorExplained.tikz}
\end{equation}
In order to treat the constraint enforced by the projector $P$ in a categorical fashion, we work in the \textbf{$\dagger$-Karoubi envelope} $\DaggerKaroubiEnvelope{\CategoryC}$ of our original $\dagger$-SMC $\CategoryC$: 
\begin{enumerate}
	\item objects are now pairs $(\SpaceH,p)$ where $\SpaceH$ is an object of the original category $\CategoryC$ and $p : \SpaceH \rightarrow \SpaceH$ is a self-adjoint idempotent (i.e. a projector);
	\item morphisms $f: (\SpaceH,p) \rightarrow (\SpaceK,q)$ are exactly the morphisms $f:\SpaceH \rightarrow\SpaceK$ in $\CategoryC$ which are invariant under the projectors, i.e. those satisfying $f = q \circ f \circ p$.
\end{enumerate}
The $\dagger$-Karoubi envelope is itself a $\dagger$-SMC, and we can see $\CategoryC$ canonically as the full sub-$\dagger$-SMC of $\DaggerKaroubiEnvelope{\CategoryC}$ given by objects in the form $(\SpaceH,\id{\SpaceH})$, which we will simply write as $\SpaceH$ when no confusion can arise. In particular, quantum dynamical systems and quantum clocks in $\CategoryC$ are also quantum dynamical systems and quantum clocks in $\DaggerKaroubiEnvelope{\CategoryC}$. Working in $\DaggerKaroubiEnvelope{\CategoryC}$ is the same as working in $\CategoryC$ while additionally enforcing constraints on states (or effects, or processes) specified by the projectors. 

\begin{theorem}[\textbf{Emerging quantum clocks}]\label{thm_emergingQuantumClocks}\hfill\\
Assume that the \inlineQuote{external} quantum clock $\mathbb{G} := (\hbox{\input{symbols/ZbwdotSym.tex}}\!\!,\hbox{\input{symbols/DdotSym.tex}}\!\!)$, the quantum dynamical system $\alpha$ and the associated \inlineQuote{internal} quantum clock $\mathbb{H} := (\hbox{\input{symbols/YbwdotSym.tex}}\!\!,\hbox{\input{symbols/WbwdotSym.tex}}\!)$ are as in Theorem \ref{thm_internalTimeObservables}. Let $\beta: \SpaceK \otimes \SpaceG \rightarrow \SpaceK$ be another quantum dynamical system governed by $\mathbb{G}$, and let $P:\SpaceH \rightarrow \SpaceH$ be the map defined by Diagram \ref{emergingClockRepProjectorExplained}. Then $P$ is a self-adjoint idempotent (i.e. a projector), which commutes with $\beta$. Furthermore, the following is a quantum dynamical system governed by $\mathbb{H}$ on the object $(\SpaceK,P)$ of the $\dagger$-Karoubi envelope, for all possible choices of total energy $\goodchi^\dagger \in \classicalStates{\hbox{\input{symbols/DdotSym.tex}}\!\!}$.
\begin{equation}\label{emergingClockRep}
	\input{pictures/chapter3/dynamics/emergingClockRep.tikz}
\end{equation}
\end{theorem}
\begin{proof}
We being by proving the following multiplicativity result:
\begin{equation}\label{emergingClockRepProof1mult}
	\resizebox{\textwidth}{!}{\input{pictures/chapter3/dynamics/emergingClockRepProof1mult.tikz}}
\end{equation}
\begin{equation}\label{emergingClockRepProof2mult}
	\resizebox{\textwidth}{!}{\input{pictures/chapter3/dynamics/emergingClockRepProof2mult.tikz}}
\end{equation}
Similarly, we can prove the following inversion result:
\begin{equation}
	\resizebox{\textwidth}{!}{\input{pictures/chapter3/dynamics/emergingClockRepProofInverse.tikz}}
\end{equation}
Using the results above (and the unit laws for $\hbox{\input{symbols/WbwdotSym.tex}}\!$), the map $P$ defined in Diagram \ref{emergingClockRepProjectorExplained} is seen to be a self-adjoint idempotent (i.e. a projector). Similarly, the projector $P$ can be seen to commute with the representation $\beta$, in the following sense:
\begin{equation}
	\input{pictures/chapter3/dynamics/emergingClockRepProofProjectorCommute.tikz}
\end{equation}
Putting the results above together, we conclude that the map of Diagram \ref{emergingClockRep} is a unitary representation of the quantum group $\mathbb{H}$ on the object $(\SpaceK,P)$ of the $\dagger$-Karoubi envelope.
\end{proof}

Once again, we work out the details of this phenomenon in $\fdHilbCategory$, for discrete periodic dynamics. Consider a discrete periodic quantum clock $\GroupG:=(\hbox{\input{symbols/ZbwdotSym.tex}}\!\!,\hbox{\input{symbols/DdotSym.tex}}\!\!)$ in $\fdHilbCategory$, given in its most general form by the group algebra $\SpaceG := \complexs[\integersMod{T}]$ for some $T$. Let $\alpha: \SpaceH \otimes \SpaceG \rightarrow \SpaceH$ be a quantum dynamical system governed by $\GroupG$, which is itself a quantum clock $\GroupH := (\hbox{\input{symbols/YbwdotSym.tex}}\!\!,\hbox{\input{symbols/WbwdotSym.tex}}\!)$ given by the group algebra $\SpaceH = \complexs[\integersMod{T'}]$ for $T=mT'$, and is related to $\GroupG$ by the quotient group homomorphism $\integersMod{T} \rightarrow \integersMod{T'}$. Let $\beta: \SpaceK \otimes \SpaceG \rightarrow \SpaceK$ be another quantum dynamical system governed by $\GroupG$, associated to a unitary representation $(U_t)_{t \in \integersMod{T}}$ which we write in its most general form as follows:
\begin{equation}
	U_t := \sum_{\goodchi \in \integersMod{T}^\wedge} \goodchi(t) P_{\goodchi}
\end{equation}
Now we look at the quantum dynamical system of Diagram \ref{emergingClockRep}, governed by the quantum clock $\mathbb{H}$; we set total energy $\goodchi_{tot} \in \integersMod{T}^\wedge$. In terms of energy eigenstates, the map $s: \SpaceH \rightarrow \SpaceG$  takes the following form:
\begin{equation}
	s := \frac{1}{T}\sum_{\goodchi' \in \integersMod{T'}^\wedge} \ket{\goodchi' \circ q}\bra{\goodchi}
\end{equation}
Seen another way, $s: \classicalStates{\hbox{\input{symbols/WbwdotSym.tex}}\!} \rightarrow \classicalStates{\hbox{\input{symbols/DdotSym.tex}}\!\!}$ corresponds to the injective group homomorphism $q^\wedge: \integersMod{T'}^\wedge \rightarrow \integersMod{T}^\wedge$ given by $q^\wedge := \goodchi' \mapsto \goodchi' \circ q$, where  $q: \integersMod{T} \rightarrow \integersMod{T'}$ is the quotient group homomorphism given by $q:=t \mapsto (\modclass{t}{T'})$ (recall that the normalisation factor for $\classicalStates{\hbox{\input{symbols/WbwdotSym.tex}}\!}$ is fixed to $N_{\hbox{\input{symbols/WbwdotSym.tex}}\!} = T$, so that $\frac{1}{T}\braket{\goodchi'}{\goodchi''} = \delta_{\goodchi',\goodchi''}$). 

Without loss of generality and to simplify the discussion, we will take $\goodchi_{tot}$ to be the ground energy level (the general case simply involves a translation $\goodchi' \circ q  \mapsto \goodchi' \circ q \oplus \goodchi_{tot}$ of the energy levels for $\beta$ by $\goodchi_{tot}$). Consider those energy level $\goodchi' \circ q$ of the external quantum clock $\GroupG$ which are compatible with some energy level $\goodchi' \in \integersMod{T'}^\wedge$ of the inner quantum clock $\mathbb{H}$. We have the following expression for the projectors onto the eigenspaces of the quantum dynamical system $\beta$ corresponding to those energy levels:
\begin{equation}
	\input{pictures/chapter3/dynamics/emergentQuantumClockExampleProjectors.tikz}
\end{equation}
As a consequence, it is immediate to see that the projector $P$ defined by Diagram \ref{emergingClockRepProjectorExplained} corresponds to the subspace spanned by all the eigenstates of $\beta$ corresponding to energy levels compatible with the inner quantum clock $\mathbb{H}$:
\begin{equation}
	\resizebox{\textwidth}{!}{\input{pictures/chapter3/dynamics/emergentQuantumClockExampleWholeProjector.tikz}}
\end{equation}
Now write $(V_{t'})_{t' \in \integersMod{T'}}$ for the unitary representation of $\integersMod{T'}$ on $\SpaceK$ corresponding to the quantum group representation given by Diagram \ref{emergingClockRep}:
\begin{equation}
	\input{pictures/chapter3/dynamics/emergentQuantumClockExampleUnitaries.tikz}
\end{equation}
Then we get the following explicit expression for the unitary representation $(V_{t'})_{t' \in \integersMod{T'}^\wedge}$ describing the quantum dynamical system as governed by the internal quantum clock:
\begin{equation}
	\input{pictures/chapter3/dynamics/emergentQuantumClockExampleUnitariesExplicit.tikz}
\end{equation}
Luckily, this is exactly what we would have expected from Stone's Theorem!

We believe that Theorems \ref{thm_internalTimeObservables} and \ref{thm_emergingQuantumClocks}, together with the worked out examples that follow them, provide an interesting new perspective on time observables in quantum theory: they show a sense in which time observables do indeed exist, and how quantum clocks can be seen to emerge from synchronised quantum dynamical systems under appropriate conditions. In particular, we believe to have shed some light on  Schr\"{o}dinger's 1931 conundrum: when a quantum clock $\mathbb{H}$ arises by synchronisation with a (\inlineQuote{forgotten}, or \inlineQuote{inaccessible}) external quantum clock $\mathbb{G}$, the dynamics of $\mathbb{H}$ are governed by the external clock energy observable on $\mathbb{G}$, which commutes with the internal time observable on $\mathbb{H}$. In particular, joint eigenstates for the two observables exist, so that the quantum clock $\mathbb{H}$ can be in a definite external clock energy state and a definite internal time state at the same time, exactly as Schr\"{o}dinger desired.

In describing the mechanism of emergence of quantum clocks, we have provided ways to relate synchronised quantum clocks by increased periodicity and increased discretisation. In our exemplification on finite-dimensional quantum systems with discrete periodic dynamics, increased periodicity corresponds to a group quotient $q:\integersMod{T} \rightarrow \integersMod{T'}$, while increased discretisation corresponds to a subgroup injection $i: \integersMod{T'} \rightarrow \integersMod{T}$. The techniques and results we have proven are fully general, and can be straightforwardly extended to other notions of dynamics (e.g. using the non-standard framework for infinite-dimensional CQM presented earlier in this chapter).

\begin{remark}
As the very last remark to this Section and Chapter, we sketch a brief argument in favour of the construction of a toy model of emergent quantum time for finite-dimensional quantum systems based on discrete periodic dynamics alone: we do so in the hope that it will provide further evidence that the discrete periodic case is worthy of study in itself. The detailed construction of this toy model, including its abstraction and categorification, will be the subject of future work. 

Consider a $d$-dimensional quantum dynamical system $\alpha$ governed by continuous dynamics, corresponding to a 1-parameter unitary group $(U_t)_{t \in \reals}$. Use Stone's Theorem to write the unitary group in the following form, for some $\omega_1,...,\omega_J \in \reals$ and a complete family of orthogonal projectors $(P_j)_{j=1}^J$:
\begin{equation}
	U_t = \sum_{j=1}^J e^{i2\pi \omega_j t} P_j
\end{equation}
Now consider two strictly increasing sequences of non-zero natural numbers $(Q^{(k)})_{k \in \naturals}$ and $(R^{(k)})_{k \in \naturals}$, such that $Q^{(k)} | Q^{(k+1)}$ and $R^{(k)} | R^{(k+1)}$ for all $k \in \naturals$, and define the following coefficients:
\begin{equation}
	a_j^{(k)} := \lfloor \omega_j Q^{(k)} \rfloor
\end{equation}
Now, consider the following families of unitary representations $(U_t^{(k)})_{t \in \integersMod{T^{(k)}}}$, $(V_t^{(k)})_{t \in \integersMod{S^{(k)}}}$ and  $(W_t^{(k)})_{t \in \integersMod{S^{(k)}}}$, where $T^{(k)} := Q^{(k)}R^{(k)}$ and $S^{(k)} := Q^{(k+1)}R^{(k)}$:
\begin{align}
	U_t^{(k)} :=  \sum_{j=1}^J e^{i2\pi \frac{a_j^{(k)}}{Q^{(k)}} \frac{t}{R^{(k)}} } P_j \\
	V_t^{(k)} :=  \sum_{j=1}^J e^{i2\pi \frac{a_j^{(k+1)}}{Q^{(k+1)}} \frac{t}{R^{(k)}} } P_j \\
	W_t^{(k)} :=  \sum_{j=1}^J e^{i2\pi \frac{a_j^{(k)}}{Q^{(k)}} \frac{t}{R^{(k)}} } P_j
\end{align}
Designate by $\alpha_U^{(k)}$, $\alpha_V^{(k)}$ and $\alpha_W^{(k)}$ the corresponding quantum dynamical system, governed by the quantum clocks $\complexs[\integersMod{T^{(k)}}]$ and $\complexs[\integersMod{S^{(k)}}]$. Then the quantum dynamical systems $\alpha_U^{(k)}$ are converging approximations of the quantum dynamical system $\alpha$, in the sense that as $k\rightarrow \infty$ and $t^{(k)}/R^{(k)} \rightarrow t \in \reals$ we get that $U_{t^{(k)}}^{(k)} \rightarrow U_t$ in operator norm.

Consider a scenario in which the quantum clock $\complexs[\integersMod{S^{(k)}}]$ is related to the quantum clock  $\complexs[\integersMod{T^{(k+1)}}]$ by increased discretisation, via the subgroup injection $\integersMod{ Q^{(k+1)}R^{(k)}} \rightarrow \integersMod{Q^{(k+1)}R^{(k+1)}}$. From the discussion following Theorems \ref{thm_internalTimeObservables} and \ref{thm_emergingQuantumClocks}, it is easy to check that quantum dynamical system $\alpha_U^{(k+1)}$, governed by quantum clock $\complexs[\integersMod{T^{(k+1)}}]$, descends to the quantum dynamical system $\alpha_V^{(k)}$, governed by the quantum clock $\complexs[\integersMod{S^{(k)}}]$. 

For each value of $t \in \integersMod{S^{(k)}}$, the operator norm distance between $V_t^{(k)}$ and $W_t^{(k)}$ is bounded above by $\epsilon^{(k)} := d\frac{1}{Q^{(k)}}$, and tends to zero as $k$ diverges. Hence we can think of $\alpha_W^{(k)}$ as an increasingly precise approximation of $\alpha_V^{(k)}$. Now consider a scenario in which the quantum clock $\complexs[\integersMod{T^{(k)}}]$ is related to the quantum clock  $\complexs[\integersMod{S^{(k)}}]$ by increased periodicity, via the group quotient $\integersMod{ Q^{(k+1)}R^{(k)}} \rightarrow \integersMod{Q^{(k)}R^{(k)}}$. Again from the discussion following Theorems \ref{thm_internalTimeObservables} and \ref{thm_emergingQuantumClocks}, it is easy to check that  quantum dynamical system $\alpha_W^{(k)}$, governed by the quantum clock $\complexs[\integersMod{S^{(k)}}]$ and $\epsilon^{(k)}$-close to $\alpha_V^{(k)}$, descends to quantum dynamical system $\alpha_U^{(k)}$ governed by quantum clock $\complexs[\integersMod{T^{(k)}}]$.

In conclusion, we have seen that---up to an error in operator norm which vanishes exponentially fast as $k$ diverges---the quantum dynamical systems $\alpha^{(k)}$ form a hierarchy of approximations to the original quantum dynamical system $\alpha$, governed by a hierarchy $\complexs[\integersMod{T^{(k)}}]$ of quantum clocks related by progressive increase in discretisation and periodicity, as described by Theorems \ref{thm_internalTimeObservables} and \ref{thm_emergingQuantumClocks}.
\end{remark}







\chapter{Strong Complementarity in Quantum Algorithms}
\label{chapter_algos}


In the previous Chapter, we have explored the relevance of strong complementarity and the coherent treatment of group and representation theory to the foundations of quantum mechanics; more specifically, we have focused our attention on the symmetry-observable duality following from strong complementarity. In this Chapter, we will develop two new facets of this versatile algebraic property, in their applications to quantum algorithms and protocols.

In Section \ref{section_algosHSP}, taken from \cite{Gogioso2016d}, we put the connection between strong complementarity and the quantum Fourier transform to work in the first fully diagrammatic, theory-independent proof of correctness for the quantum subroutine of the algorithm solving the Hidden Subgroup Problem (HSP). The abstract nature of our proof allows us to interpret it directly in categories other than $\fdHilbCategory$, and our results will provide compelling evidence that strong complementarity is the structural feature powering the quantum subroutine for the abelian HSP. We also obtain interesting new results by interpreting our proof in real quantum theory, hyperbolic quantum theory, finite-field quantum theory and non-standard infinite-dimensional quantum theory.

In Section \ref{section_algosMermin}, we investigate the special standing of the points $\classicalStates{\hbox{\input{symbols/DdotSym.tex}}\!\!}$ of a well-pointed coherent group $(\hbox{\input{symbols/DdotSym.tex}}\!\!,\hbox{\input{symbols/ZbwdotSym.tex}}\!\!)$ within the larger set of unbiased states for the group structure $\hbox{\input{symbols/ZbwdotSym.tex}}\!\!$. We put this to work in a broad generalisation of Mermin's non-locality argument for GHZ states, and we provide an exact characterisation of the connection between non-locality and phase groups, bringing the programme started by \cite{Coecke2012c,Coecke2010a} to a close. We relate our findings to the framework of All-vs-Nothing arguments \cite{Abramsky2015} (a different generalisation of Mermin's argument), and we formulate a non-trivial extension of the quantum-classical secret sharing scheme of Hillery, Bu\v{z}ek and Berthiaume \cite{Hillery1999} (together with novel device-independent guarantees).

\section{Hidden Subgroup Problem}
\label{section_algosHSP}


The advent of quantum computing promises to solve a number number of problems which have until now proven intractable for classical computers. Amongst these, one of the most famous is Shor's algorithm \cite{Shor1995,Ekert1996}: implemented on a quantum computer, it allows for an efficient solution of the integer factorisation problem and the discrete logarithm problem, the hardness of which underlies many of the cryptographic algorithms which we currently entrust with our digital security (such as RSA and Diffie-Hellman Key Exchange). Integer factorisation and the discrete logarithm, together with Simon's problem \cite{Simon1997}, Deutsch original algorithm and a number of other number-theoretic questions, turn out to be special cases of the much more general abelian Hidden Subgroup Problem (HSP) \cite{Jozsa2001}, and can all be tackled by quantum computers using the same strategy.

The reformulation of Shor's algorithm as a special case of the abelian HSP \cite{Jozsa2001} makes the core issue of order-finding pop out as a group-theoretic question, and highlights the role played by the quantum Fourier transform in solving it \cite{Jozsa1997}. However, it is only with the compelling diagrammatic work of \cite{Vicary2012a} that the structures and information flow behind the quantum solution to the HSP become apparent: the unitary oracle used in the algorithm is decomposed into its algebraic building blocks, namely certain $\dagger$-Frobenius algebras, providing a clear topological account of why the procedure works. In this Section, taken from \cite{Gogioso2016d}, we present (Theorem \ref{thm_AbelianHSP}) the first\footnote{The diagrammatics in the proof of \cite{Vicary2012a} are nothing but straightforward graphical transcriptions of results obtained via traditional representation theory.} fully diagrammatic\footnote{Technically, our proof involves a classically-indexed family of diagrams, but this is an accepted standard for fully-diagrammatic treatments of quantum protocols \cite{Coecke2016a}.} proof of correctness for the quantum algorithm solving the abelian HSP, providing compelling evidence that strong complementarity is indeed the structural feature powering the quantum algorithm.

Furthermore, we exploit the theory independent formulation of our proof to obtain new results in theories other than finite-dimensional quantum theory. We show that Simon's Problem can be efficiently solved in real quantum theory and in hyperbolic quantum theory: the latter result is perhaps more surprising than the former, as hyperbolic quantum theory is a local quasi-probabilistic theory. Theorem \ref{thm_integersHSP} uses the non-standard infinite-dimensional CQM framework to show that the infinite abelian HSP for $\integers^N$ can be efficiently solved (under suitable assumptions).

\subsection{The Hidden Subgroup Problem}
\label{section_HSP}

The \textbf{Hidden Subgroup Problem} (HSP) can be phrased as follows: 
\begin{enumerate}
\item[(i)] a finite group $G$ is fixed;
\item[(ii)] we are given an oracle implementing a \textbf{subgroup hiding function} $f : G \rightarrow \integersMod{2}^N$, which associates to each element of $G$ a \textbf{label} in the form of an $N$-bit string;
\item[(iii)] we are promised that the function is constant on (left) cosets of some subgroup $H \leq G$, and associates different labels to different cosets; equivalently, we are promised that $f$ factorises as follows for some injective function $s$ and the quotient group homomorphism $q$ (we refer to this as the \textbf{factorisation promise}):
\begin{equation}\label{diagram_quotientFactorisation}
\input{pictures/chapter4/hsp/quotientFactorisation.tikz}
\end{equation}
\item[(iv)] we are asked to find the \textbf{hidden subgroup} $H$.
\end{enumerate}
In the \textbf{abelian} HSP we are also promised that $G$ is abelian, while in the more general \textbf{normal} HSP we are promised that $H$ is a normal subgroup (a fact which always holds in the abelian HSP). In order for a quantum treatment to be possible at all, one imposes additional requirement on the oracle encoding the subgroup hiding function: 
\begin{enumerate}
	\item[(e)] the oracle is given \textit{coherently}, as the following unitary $U_f \in \UnitaryOps{\complexs[G] \otimes \complexs[\integersMod{2}^N]}$:
		\begin{equation}\label{eqn_unitaryOracle}
			U_f := \ket{g} \otimes \ket{t} \mapsto \ket{g} \otimes \ket{f(g) \oplus t}
		\end{equation}
	where by $\oplus$ we denoted the bit-wise XOR operation on $N$-bit strings.
\end{enumerate}

\noindent A number of important problems arise as special instances of the abelian HSP. In the Discrete Logarithm problem, one is given a prime number $p$, a primitive root $g$ mod $p$, and a number $a$ such that $a = \modclass{g^b}{p}$ for some unknown $b$ to be found. This is an instance of the abelian HSP with group $G = \integersMod{p-1} \times \integersMod{p-1}$, hidden subgroup $H = \integersMod{p-1} \cdot (b,1) \mathrel{\unlhd} G$ and subgroup hiding function $f(x,y) = \modclass{g^x a^{-y}}{p} $.

In the Integer Factorisation problem, one is given a composite number $N$ and is asked to provide a non-trivial factorisation for it. Shor's algorithm solves the problem efficiently on quantum computers: its core is the order-finding subroutine, which considers an integer $a$ coprime\footnote{If $a \in \{2,...,N-1\}$ is not coprime with $N$, then we already have a non-trivial factorisation of $N$.} with $N$, and asks for the order of $a$ as a multiplicative unit modulo $N$. The order-finding subroutine is an instance of the abelian HSP with group $G = \integersMod{N}^\times$ (the abelian group of multiplicative units modulo $N$)\footnote{In practice one uses $\integersMod{2^M}$: if $M \gg \log_2 N$, the errors due to the inexact period of $a$ will be small.}, hidden subgroup $H = \langle a \rangle \isom \integersMod{\operatorname{ord}(a)}$ and subgroup hiding function $f(x) = \modclass{a^x}{N}$.

In Simon's problem, one is given a function $f : \integersMod{2}^N \rightarrow \integersMod{2}^N$ with the promise that the stabilizer subgroup for $f$ has order $2$: there is a unique non-zero string $z \in \integersMod{2}^N$, which we are asked to find, such that for any two $N$-bit strings $x,y \in \integersMod{2}^N$ we have that $f(x) = f(y)$ if and only if $x = y$ or $x = y \oplus z$. The importance of Simon's problem in the  complexity of quantum computing lies in a result \cite{Bernstein1997} stating that, relative to oracles with the promise above, Simon's problem separates BQP (the class of bounded-error quantum polynomial time problems) from BPP (the class of bounded-error classical polynomial time problems). Simon's problem is clearly an instance of the abelian HSP, with $G = \integersMod{2}^N$, hidden subgroup $H = \langle z \rangle = \{0,z\}$ and subgroup hiding function $f$.

\subsection{The Quantum Algorithm}
\label{section_quantumAlgo}

For the fully diagrammatic version of the proof, we replace the concrete Hilbert space setting with four assumptions about the $\dagger$-SMC $\CategoryC$ we want to implement the protocol in, and the strongly complementary pairs that it possesses.

\begin{enumerate}
	\item[(a)] There exist strongly complementary pairs encoding the four relevant finite abelian groups: the group $G$, the hidden subgroup $H$, the quotient group $G/H$ and the group of $N$-bit strings $\integersMod{2}^N$. That is, there exists a strongly complementary pair $(\hbox{\input{symbols/ZbwdotSym.tex}}\!\!_K,\hbox{\input{symbols/DdotSym.tex}}\!\!_K)$ on object $\SpaceH_K$ such that $K \isom (\classicalStates{\hbox{\input{symbols/ZbwdotSym.tex}}\!\!_K},\!\hbox{\input{symbols/DmultSym.tex}}\!\!_K,\!\hbox{\input{symbols/DunitSym.tex}}\!\!_K)$, for each $K=G,H,G/H,\integersMod{2}^N$.
	\item[(b)] The $\dagger$-SMC we are working with has an absorbing scalar $0$, i.e. we can define a sensible notion of impossibility in it.
	\item[(c)] $\hbox{\input{symbols/ZbwdotSym.tex}}\!\!_{H}$ and $\hbox{\input{symbols/ZbwdotSym.tex}}\!\!_{G/H}$ have enough classical states.
	\item[(d)] $\hbox{\input{symbols/DdotSym.tex}}\!\!_{G}$ and $\hbox{\input{symbols/ZbwdotSym.tex}}\!\!_{\integersMod{2}^N}$ have enough classical states, and their classical states are orthogonal---this is so that measurement in either observable can be properly interpreted as a process with classical output in the CP* category $\CPStarCategory{\CategoryC}$ modelling the full quantum-classical theory\footnote{Because the procedure results in uniform sampling, we don't really need to ask for any specific semiring structure on the space of measurement outcomes.}.
\end{enumerate}
As a matter of convenience, we also assume the point structures $\hbox{\input{symbols/ZbwdotSym.tex}}\!\!_K$ to be special, though this is not crucial to the proof. We denote by $\xi_K^\dagger\xi_K$ the normalisation factor of the group structures $\hbox{\input{symbols/DdotSym.tex}}\!\!_K$ (which are quasi-special).

\newpage
Much of the proof relies on the fact that the quotient map $q : G \rightarrow G/H$ is a very specific group homomorphism, defined by the following three properties:
\begin{enumerate}
	\item[(a)] $q$ identifies enough elements of $G$ to send all of $H$ to the group unit;
	\item[(b)] $q$ does not send any elements other than those in $H$ to the group unit;
	\item[(c)] $q$ is surjective.
\end{enumerate}
As a consequence, we require that there exists a morphism $q : \SpaceH_G \rightarrow \SpaceH_{G/H}$ satisfying the three graphical properties below, where $r: \SpaceH_{G/H} \rightarrow \SpaceH_{G}$ is a  $\hbox{\input{symbols/ZbwdotSym.tex}}\!\!_{G/H}$-to-$\hbox{\input{symbols/ZbwdotSym.tex}}\!\!_{G}$ classical isometry (a section for $q$, witnessing its surjectivity), and $i_H : \SpaceH_{H} \rightarrow \SpaceH_{G}$ is a $\hbox{\input{symbols/ZbwdotSym.tex}}\!\!_{H}$-to-$\hbox{\input{symbols/ZbwdotSym.tex}}\!\!_{G}$ classical isometry (modelling the group homomorphism injecting $H$ into $G$):
\begin{equation}\label{eqns_quotientMap}
\input{pictures/chapter4/hsp/quotientMap.tikz}
\end{equation}
Again as a matter of notational convenience, we will henceforth choose coset representatives so that the map $r$ sends the $\hbox{\input{symbols/ZbwdotSym.tex}}\!\!_{G/H}$-classical state corresponding to coset $g_b H \in G/H$ to the $\hbox{\input{symbols/ZbwdotSym.tex}}\!\!_{G}$-classical state corresponding to element $g_b \in G$.

We begin by constructing an abstract version of the unitary oracle $U_f$ given in Equation \ref{eqn_unitaryOracle}. We replace the subgroup hiding function $f: G \rightarrow \integersMod{2}^N$ (or, to be precise, its linear extension $f: \complexs[G] \rightarrow \complexs[\integersMod{2}^N]$) with a $\hbox{\input{symbols/ZbwdotSym.tex}}\!\!_G$-to-$\hbox{\input{symbols/ZbwdotSym.tex}}\!\!_{\integersMod{2}^N}$-classical map $f: \SpaceH_G \rightarrow \SpaceH_{\integersMod{2}^N}$, which is required to satisfy an appropriate factorisation promise, where $q$ is the quotient map defined above and $s: \SpaceH_{G/H} \rightarrow \SpaceH_{\integersMod{2}^N}$ is a  $\hbox{\input{symbols/ZbwdotSym.tex}}\!\!_{G/H}$-to-$\hbox{\input{symbols/ZbwdotSym.tex}}\!\!_{\integersMod{2}^N}$-classical isometry (modelling the subgroup labelling function, which was originally injective)
\begin{equation}
\input{pictures/chapter4/hsp/quotientFactorisationCoherent.tikz}
\end{equation}
The unitary oracle $U_f$ can then be decomposed as follows, in terms of a coherent copy operations for $G$, a coherent multiplication operations for $\integersMod{2}^N$, and the coherent subgroup hiding function $f$:
\begin{equation}\label{eqn_unitaryOracle1}
\input{pictures/chapter4/hsp/unitaryOracle.tikz}
\end{equation}
The process $U_f$ defined above is always unitary, and on finite-dimensional quantum systems it coincides with the oracle we explicitly defined in Equation \ref{eqn_unitaryOracle} \cite{Vicary2012a}.

\newpage
The following diagram presents the quantum subroutine in its entirety: the initial state is prepared, the unitary oracle is applied, and two outcomes $b \in \integersMod{2}^N$ and $\goodchi \in G^\wedge$ are obtained from the measurements performed on the two parts of the resulting state:
\begin{equation}\label{eqn_quantumHSPsubroutine}
\input{pictures/chapter4/hsp/quantumHSPsubroutine.tikz}
\end{equation}
The diagram given above is a scalar $c_{b,\goodchi}$, and we interpret its square absolute value as the probability $\mathbb{P}(b,\goodchi) = c_{b,\goodchi}^\dagger c_{b,\goodchi}$ of obtaining the joint measurement outcome $(b,\goodchi)$. We will now provide a fully diagrammatic proof that the probability must be zero if $b \notin \im{s}$ or if $\goodchi \notin  \Annihil{H}$, and it must otherwise be non-zero and independent of $b$ and $\goodchi$. In other words, we wish to show that the procedure produces a uniformly random sampling of the annihilator of $H$.

\begin{theorem}[\textbf{Abelian HSP in $\dagger$-SMCs}]\hfill\\
\label{thm_AbelianHSP}
The quantum subroutine for the (abelian) HSP can be carried out in any $\dagger$-SMC satisfying the assumptions presented above:
\begin{equation}
	\input{pictures/chapter4/hsp/quantumHSPsubroutineValue.tikz}
\end{equation}
In the case of finite-dimensional quantum systems, we have $\xi_K^\dagger \xi_K = |K|$ for any finite group $K$, and hence the scalar appearing on the RHS is $|H|^2/|G|^2$.\footnote{ This is what we expect: there are $|G/H| = |G| / |H|$ distinct $b$ in the image of $s:G/H \rightarrow \integersMod{2}^N$ (because $s$ is injective), and the annihilator of $H$ itself has size $|G|/|H|$, leading to a total of $|G|^2/|H|^2$ possible joint measurement outcomes $(b,\goodchi)$.}
\end{theorem}
\begin{proof}
We divide the proof into several steps: (i) we use the factorisation promise to break $f$ into its constituent components $q$ and $s$, and we eliminate the coset label $b \in \integersMod{2}^N$ by assuming it is into the image of $s$; (ii) we show that non-annihilator outcomes $\goodchi \notin \Annihil{H}$ are impossible; (iii) in the case of annihilator outcomes $\goodchi \in \Annihil{H}$, we explicitly introduce the abstract equivalents of the coset states $\ket{\goodchi} := \sum_{g \in G} \goodchi(g) \ket{g}$ appearing in the original quantum version; (iv) we annihilate the coset states and obtain the final probabilities.

\noindent \textbf{Using the Factorisation Promise.}
As our first manipulation step, we can substitute the promised factorisation of $f$ into $s \circ q$, and use the unit law to remove the $\dagger$-qSCFA $\hbox{\input{symbols/DdotSym.tex}}\!\!_{\integersMod{2}^N}$ from the diagram:
\begin{equation}\label{eqn_quantumHSPsubroutine1}
\input{pictures/chapter4/hsp/quantumHSPsubroutine1.tikz}
\end{equation}
The property of process $s: \SpaceH_{G/H} \rightarrow \SpaceH_{\integersMod{2}^N}$ being an isometry can be readily formulated diagrammatically as follows:
\begin{equation}
\input{pictures/chapter4/hsp/isometry.tikz}
\end{equation}
Because $s$ is a $\hbox{\input{symbols/ZbwdotSym.tex}}\!\!_{G/H}$-to-$\hbox{\input{symbols/ZbwdotSym.tex}}\!\!_{\integersMod{2}^N}$ classical map, and because $\hbox{\input{symbols/ZbwdotSym.tex}}\!\!_{G/H}$ has enough classical states and the classical states of $\hbox{\input{symbols/ZbwdotSym.tex}}\!\!_{\integersMod{2}^N}$ are orthogonal, then we have the following: a $\hbox{\input{symbols/ZbwdotSym.tex}}\!\!_{\integersMod{2}^N}$-classical state $b$ is either in the image of $s$, i.e. $b = s \circ (g_b H)$ for some classical state $g_b H$, or we have that $b^\dagger \circ s = 0$ is the impossible process, i.e. $b$ is never observed as outcome. Our second manipulation step then assumes that the outcome $b \in \integersMod{2}^N$ is in the image of $s$, and uses the isometry property to remove $s$ from the diagram altogether:
\begin{equation}\label{eqn_quantumHSPsubroutine2}
\input{pictures/chapter4/hsp/quantumHSPsubroutine2.tikz}
\end{equation}

\noindent\textbf{Excluding the Non-Annihilator Outcomes.}
As our next step, we want to show that the diagram evaluates to $0$ whenever $\goodchi$ is not in the annihilator of $H$. To do so, we first need a graphical definition of what it means for a character $\goodchi$ to annihilate $H$:
\begin{equation}\label{eqn_annihilator}
\input{pictures/chapter4/hsp/annihilator.tikz}
\end{equation}
We begin by moving $q$ from the lower to the upper branch, by using the fact that it is $\hbox{\input{symbols/ZbwdotSym.tex}}\!\!_{G}$-to-$\hbox{\input{symbols/ZbwdotSym.tex}}\!\!_{G/H}$ classical:
\begin{equation}\label{eqn_quantumHSPsubroutine3a}
\input{pictures/chapter4/hsp/quantumHSPsubroutine3a.tikz}
\end{equation}
We then proceed to show, by case analysis, that either $\goodchi^\dagger \circ q^\dagger = 0$ (and hence that the entire diagram vanishes), or that the character $\goodchi$ is in the annihilator of $H$ (according to the graphical definition of Equation \ref{eqn_annihilator}): 
\begin{equation}\label{eqn_killingNonAnnihilator}
\input{pictures/chapter4/hsp/killingNonAnnihilator.tikz}
\end{equation}
The first equality is a consequence of the following equivalent reformulation of a defining property of the quotient map $q$:
\begin{equation}\label{eqn_shortExactPropertyAlt}
\input{pictures/chapter4/hsp/shortExactPropertyAlt.tikz}
\end{equation}
The equivalence between the two versions can be proven by using the same general technique which is used in Equation \ref{eqns_cosetStatesDecompositionProof2} below.

\noindent\textbf{Introducing the Coset States.}
Having excluded the case where $\goodchi \notin \Annihil{H}$ (a fact which won't really be needed until the next subsection), our third manipulation step goes as follows:
\begin{equation}\label{eqn_quantumHSPsubroutine3b}
\input{pictures/chapter4/hsp/quantumHSPsubroutine3b.tikz}
\end{equation}
We removed both $q^\dagger$ and the state $g_b H$ of $\SpaceH_{G/H}$ from the diagram, and replaced them with an abstract version of the \textbf{coset state} $\sum_{h \in H} \ket{g_b \cdot h}$ of $\SpaceH_{G}$, by using the following result:
\begin{equation}\label{eqn_cosetStatesDecompositionStatement}
\input{pictures/chapter4/hsp/cosetStatesDecompositionStatement.tikz}
\end{equation}
The equality above can be proven diagrammatically using the defining properties of $q$: 
\begin{equation}
\input{pictures/chapter4/hsp/cosetStatesDecompositionProof1.tikz}
\end{equation}
More in detail, the second equation in the chain is proven by using the fact that $q$ is a $(\hbox{\input{symbols/ZbwdotSym.tex}}\!\!_G,\hbox{\input{symbols/DdotSym.tex}}\!\!_G)$-to-$(\hbox{\input{symbols/ZbwdotSym.tex}}\!\!_{G/H},\hbox{\input{symbols/DdotSym.tex}}\!\!_{G/H})$ homomorphism, by replacing the antipode with its definition, and by appealing to the fact that the antipode is self-inverse:
\begin{equation}\label{eqns_cosetStatesDecompositionProof2}
\resizebox{\textwidth}{!}{\input{pictures/chapter4/hsp/cosetStatesDecompositionProof2.tikz}}
\end{equation}

\noindent\textbf{Annihilating the Coset States.}
We are now in the situation where $\goodchi$ is in the annihilator of $H$, and we have rewritten our diagram explicitly in terms of coset states. As our fourth manipulation step, we turn the character around to obtain (the adjoint of) a diagram involving a character evaluated on a coset state:
\begin{equation}\label{eqn_quantumHSPsubroutine4}
\input{pictures/chapter4/hsp/quantumHSPsubroutine4.tikz}
\end{equation}
Because the character is multiplicative (its adjoint is a $\hbox{\input{symbols/DdotSym.tex}}\!\!_G$-classical state), we can copy it through~$\!\hbox{\input{symbols/DcomultSym.tex}}\!\!_G$. Evaluating against $g_b^{-1}$ removes the first copy to give some phase $\goodchi(g)$, satisfying $\goodchi(g)^\dagger \goodchi(g) = 1$, while the definition of the annihilator removes the second copy together with $i_H^\dagger$:
\begin{equation}\label{eqn_quantumHSPsubroutine5}
\input{pictures/chapter4/hsp/quantumHSPsubroutine5.tikz}
\end{equation}
We are left with a bunch of explicit scalars, and we can finally evaluate the square absolute value of Diagram \ref{eqn_quantumHSPsubroutine} to obtain our desired result: 
\begin{equation}
\resizebox{\textwidth}{!}{\input{pictures/chapter4/hsp/quantumHSPsubroutineValueFinal.tikz}}
\end{equation}
The evaluation of the square norm $\!\hbox{\input{symbols/ZbwunitSym.tex}}\!\!_H^\dagger\circ\!\hbox{\input{symbols/ZbwunitSym.tex}}\!\!_H $ to $\xi_H^\dagger \xi_H$ comes from the fact that $\!\hbox{\input{symbols/ZbwunitSym.tex}}\!\!_H$ is $\hbox{\input{symbols/DdotSym.tex}}\!\!_H$-classical, and the latter has $\xi_H^\dagger\xi_H$ as normalisation factor.

\end{proof}

\subsection{Non-abelian HSP}
\label{section_nonAbelian}

Quantum algorithms to solve the HSP have been studied beyond the abelian case. An extension of the efficient quantum solution to the case of normal subgroups of non-abelian groups is given by \cite{Hallgren2000}, while \cite{Moore2008,Hallgren2010} provide a no-go theorem showing that the same techniques cannot be used to formulate an efficient quantum solution to the general non-abelian case. The general non-abelian case is important because two interesting problems of classical computational complexity arise as special cases: the Graph Isomorphism Problem arises as a special case of the HSP on symmetric groups \cite{Hallgren2000}, while the Unique Shortest Vector Problem (uSVP) arises as a special case of the HSP on dihedral groups \cite{Regev2004}. The latter forms the basis of a public key cryptosystem \cite{Regev2004a} which, subject to quantum intractability of the HSP on dihedral groups, is a candidate to replace RSA in post-quantum cryptography (as are many other lattice-based cryptographic algorithms).

Nowhere in our proof above we have explicitly used commutativity of the $\dagger$-qSCFAs (equivalently, the fact that $G$ and $H$ are abelian), and our approach naturally generalises to the case where $G$ is a finite group and $H$ a normal subgroup (a necessary requirement in this approach, which explicitly uses a group structure on $G/H$). For the sake of simplicity, and because no hard result will be proven, we will stick to the case of finite-dimensional Hilbert spaces for the remainder of this Section. 

Going from commutative to general quasi-special $\dagger$-Frobenius algebras ($\dagger$-qSFA) has the following implication: the classical states are still the multiplicative characters, and they are still orthogonal, but they no longer form a basis. Instead, the $\dagger$-qSFA is now associated with a potentially degenerate observable: sampling it will produce, as classical output, the character $\goodchi_\rho$ of an irreducible representation $\rho$ of $G$, with the following probability (where $d_\rho$ is the dimension of representation $\rho$):
\begin{equation}
	\mathbb{P}[b,\goodchi_\rho] = 
	\begin{cases}
		\frac{|H|^2}{|G|^2} d_\rho^2  & \text{ if } \rho(h) = 0 \text{ for all } h \in H \\
		0 & \text{ otherwise }
	\end{cases}
\end{equation}
For our graphical proof to go through, Diagram \ref{eqn_quantumHSPsubroutine} needs to be modified as follows:
\begin{equation}\label{eqn_quantumHSPsubroutineNonAbelian}
\input{pictures/chapter4/hsp/quantumHSPsubroutineNonAbelian.tikz}
\end{equation}
where we used the dagger-compact structure to take the trace of the irreducible representation $\rho: \complexs[G] \rightarrow V_\rho^\ast \otimes V_\rho$ (technically, its linear extension from $G$ to $\complexs[G]$).
The defining properties of multiplicative characters generalise to representations, as shown by \cite{Vicary2012a}:
\begin{equation}\label{eqn_irrep}
\resizebox{\textwidth}{!}{\input{pictures/chapter4/hsp/irrep.tikz}}
\end{equation} 
However, the generalisation from the abelian to the non-abelian case encounters a much bigger hurdle in the classical post-processing: the logarithmic dependency of the number of generators on the size of the group only need to hold in the abelian case, and the number of samples required is in general linear in the size of the group (and hence exponential in the size of its description) \cite{Hallgren2010}. This is a separate problem, interesting in its own right, and is beyond the scope of this work.

\subsection{Toy quantum theories}

The fully diagrammatic, abstract character of our approach means that our results can be directly applied to theories other than quantum theory, as long as they feature the relevant algebraic structure. Specifically, we consider theories of wavefunctions valued in some commutative involutive semirings $S$ (generalising the complex-valued wavefunctions of ordinary quantum theory). This is a large family, and a number of concrete examples were given in Chapter \ref{chapter_CQM}.

In order for the quantum HSP algorithm for some finite group $G$ to be implementable in $\RMatCategory{S}$, we need two conditions to be satisfied:
\begin{enumerate} 
	\item[(i)] we need $G$, $H$, $G/H$ and $\integersMod{2}^N$ to all admit strongly complementary pairs in $\RMatCategory{S}$, with enough points in the case of $H$, $G/H$ and $\integersMod{2}^N$; 
	\item[(ii)]  we need $\hbox{\input{symbols/DdotSym.tex}}\!\!_G$ to have enough classical states: this is a non-trivial condition, depending entirely on the structure of $S$-valued multiplicative characters for the group $G$ (it is still necessary for $G$ to be abelian, but no longer sufficient). 
\end{enumerate}
Even when the quantum algorithm can be implemented, it is worth noting that the $S$-valued multiplicative characters arising in $\RMatCategory{S}$ may be very different from the complex-valued ones arising in the traditional implementation, and it is in general non-trivial to check that the classical post-processing part of the algorithm will go through as expected.

\subsubsection{Real quantum theory}

Real quantum theory is the theory $\RMatCategory{\reals}$ of real-valued wavefunctions, where $\reals$ is equipped with the identity as its involution. Because $\reals$ is a field, every finite group $K$ admits a strongly complementary pair $(\hbox{\input{symbols/ZbwdotSym.tex}}\!\!_K,\hbox{\input{symbols/DdotSym.tex}}\!\!_K)$ in $\RMatCategory{\reals}$ with enough points. The real-valued characters of a group $G$ are a subset of the complex valued ones, an observation which has two consequences: (i) the finite abelian groups $G$ which admit an implementation of the quantum algorithm to solve the HSP are those possessing only real-valued multiplicative characters, i.e. those in the form $G \isom \integersMod{2}^N$; (ii) the classical post-processing goes through as in the traditional implementation, without additional issues. 

Hence real quantum theory admits efficient solutions to the HSP on $\integersMod{2}^N$, and in particular to Simon's problem. This furthermore implies that the class BQP$_\reals$ (by which we mean BQP for Real Quantum Theory) is separated from BPP relative to oracles with the appropriate promise (see \cite{DeBeaudrap2014} for a detailed study of computational complexity in some toy quantum theories modelled by $\RMatCategory{S}$ constructions).

\subsubsection{Hyperbolic quantum theory}

Hyperbolic quantum theory is the theory $\RMatCategory{\splitComplexs}$ of wavefunctions valued in the \textit{split complex numbers} $\splitComplexs := \reals[X]/(X^2-1)$, a two-dimensional real algebra. Split complex numbers take the form $(x+j y)$, where $x,y \in \reals$ and $j^2 = 1$; in particular, they have non-trivial zero-divisors in the form $a(1\pm j)$, because $(1+j)(1-j)=1-j^2 = 0$. They come with the involution $(x + j y)^\ast := x - jy$, making Hyperbolic Quantum Theory a quasi-probabilistic theory (i.e. it has signed probabilities) and a local theory (because its scalars form a field).

The split complex numbers contain $\reals$ as a subfield fixed by the involution, and hence all protocols which can be implemented in real quantum theory can also be implemented in hyperbolic quantum theory. In particular, the HSP on $\integersMod{2}^N$ can be efficiently solved in hyperbolic quantum theory. There are no other finite abelian groups for which the HSP can be solved efficiently; however, hyperbolic quantum theory admits---similarly to ordinary quantum theory and in contrast to real quantum theory---implementations of the HSP for the infinite abelian groups $\integers^N$.

\subsubsection{Finite-field quantum theory}

If $\finiteField{p^n}$ is a finite field (with $p$ odd) and $\varepsilon$ is a primitive element, then we can consider the ring $\finiteField{p^n}[\sqrt{\varepsilon}] := \finiteField{p^n}[X^2-\varepsilon]$, equipped with the involution $(x+y\sqrt{\varepsilon})^\ast := (x-y\sqrt{\varepsilon})$. Because $\varepsilon$ is a primitive element, $\finiteField{p^n}(\sqrt{\varepsilon}) \isom \finiteField{p^{2n}}$ is a field. We are thus working with the quadratic extension of fields $\finiteField{p^n}(\sqrt{\varepsilon}) / \finiteField{p^n}$, equipped with the usual involution and (squared) norm from Galois theory:
\begin{equation}
\big|x+y\sqrt{\varepsilon}\big|^2 = (x-y\sqrt{\varepsilon})(x+y\sqrt{\varepsilon}) = x^2 - \varepsilon y^2
\end{equation}
The finite abelian groups $K$ admitting a strongly complementary pair $(\hbox{\input{symbols/ZbwdotSym.tex}}\!\!_K,\hbox{\input{symbols/DdotSym.tex}}\!\!_K)$ with enough points in $\RMatCategory{\finiteField{p^n}(\sqrt{\varepsilon})}$ are exactly those with order not divisible by $p$. Furthermore, the group of phases in finite-field quantum theory is isomorphic to the finite abelian group $\integersMod{p^n+1}$: as a consequence, the finite abelian groups $G$ such that $\hbox{\input{symbols/DdotSym.tex}}\!\!_G$ has enough classical states are exactly those in the form $G \isom \prod_{k=1}^{K} \integersMod{p_k^{e_k}}$, with $p_k^{e_k} | p^n+1$ for all $k=1,...,K$ (which, in particular, have order not divisible by $p$). Finally, the $\finiteField{p^n}(\sqrt{\varepsilon})$-valued multiplicative characters can be easily interpreted as complex-valued multiplicative characters (using the subgroup $\integersMod{p^n+1}$ of the circle group given by the $(p^n+1)$-th roots of unity), and the classical post-processing phase of the algorithm goes through without additional issues.

\subsubsection{p-adic quantum theory}

The phases in $p$-adic quantum theory form a multiplicative abelian group $C_\varepsilon$ isomorphic to the additive group $\integersMod{p+1} \times p Z_p$, where $(\integersMod{p+1},+,0)$ are the integers modulo $p+1$, while $(p Z_p, +, 0)$ is the additive subgroup of $Z_p$ formed by those $p$-adic integers which are divisible by $p$. As a consequence, a finite abelian group $G$ gives rise to a group structure $\hbox{\input{symbols/DdotSym.tex}}\!\!_G$ with enough classical states if and only if $G \isom \prod_{k=1}^{K} \integersMod{p_k^{e_k}}$, with $p_k^{e_k} | p+1$ for all $k=1,...,K$---just as in the finite-field quantum theory $\RMatCategory{\finiteField{p}(\sqrt{\varepsilon})}$. Similarly, the $Q_p(\sqrt{\varepsilon})$-valued multiplicative characters can be easily interpreted as complex-valued, and the classical post-processing of the algorithm for the HSP goes through as in the traditional case.

\subsubsection{Relational quantum theory}

Relational quantum theory is the theory of boolean-valued wavefunctions, or more generally of wavefunctions valued in a locale $\Omega$ (with the identity as involution). It is modelled by the dagger compact category $\RMatCategory{\Omega}$, and in the boolean case it coincides with the category $\fRelCategory$ of finite sets and relations. For any finite group $K$, the scalar $|K|$ in Relational Quantum Theory is simply the scalar $1$ (because $|K|=1+1+1+....+1$, and $1$ is additively idempotent in any locale), and hence all finite groups admit a strongly complementary pair with enough points. However, the phase group in relational quantum theory is the trivial group $\{1\} \isom \integersMod{1}$, and the only group $K$ such that $\hbox{\input{symbols/DdotSym.tex}}\!\!_K$ has enough classical states is the trivial group $\integersMod{1}$ itself. Hence there are no non-trivial implementations of the quantum algorithm for the HSP in relational quantum theory. However, this does not necessarily mean that no efficient solution to the HSP can be obtained in relational quantum theory, and the study of relational formulations of quantum algorithms is a busy open field (especially in the context of Spekkens' Toy Model \cite{Spekkens2007,Coecke2012a,Backens2015,Zeng2015,Disilvestro2016,Catani2017}).

\subsubsection{Infinite-dimensional HSP}
\label{section_infiniteHSP}

The category $\HilbCategory$ of infinite-dimensional Hilbert spaces and bounded linear maps does not admit Frobenius algebras, and as a consequence it cannot be used to extend our abstract setup to infinite abelian groups. However, we have seen in Chapter \ref{chapter_CQM} that tools from non-standard analysis can be used to construct a well-defined category $\starHilbCategory$ of infinite-dimensional separable Hilbert spaces, including both bounded and unbounded linear maps, as well as a number of commonplace features of quantum mechanics (such as Dirac deltas and plane-waves). This opens the way to infinite generalisations of the abelian HSP.

\begin{theorem}[\textbf{HSP for $\integers^N$ with a particle in a box}]\label{thm_integersHSP}\hfill\\
Consider a particle in an $N$-dimensional box with periodic boundary conditions. Assume that preparations and measurements in the position observable can be performed with arbitrarily high precision. Then it is possible to efficiently solve the HSP for the infinite abelian group $\integers^N$.
\end{theorem} 
\begin{proof}
In Chapter \ref{chapter_CQM} we have seen that that the category $\starHilbCategory$ possesses suitable strongly complementary pairs corresponding to the discrete groups $\integers^N$ of translations of lattices and the compact groups $\torusGroup{N}$ of translations of tori; all the observables concerned are quasi-special or special, commutative, and have enough classical states. These observables have direct physical relevance, as they correspond to the momentum/position observable pairs for particles in $N$-dimensional boxes with periodic boundary conditions. As a consequence, our scheme straightforwardly extends to a quantum subroutine for the HSP on the infinite abelian groups $G = \integers^N$. The classical subroutine requires no adjustment for the case of $\integers^N$, because all the possible quotients $G/H$ are in the form $\integers^J \times \prod_{k=1}^K \integersMod{n_k}$: hence all the possible annihilators $\Annihil{H}$ are in the form $\torusGroup{J} \times \prod_{k=1}^K \integersMod{n_k}$, and can be efficiently sampled (infinite precision notwithstanding). 

For the sake of physical implementation, this setup corresponds to fixing the computational basis $\hbox{\input{symbols/ZbwdotSym.tex}}\!\!_G$ to be the basis of momentum eigenstates (valued in $\integers^N$) of a particle in an $N$-dimensional box with periodic boundary conditions, and performing the $\hbox{\input{symbols/DdotSym.tex}}\!\!_G$ measurement corresponding to its position observable (valued in $T^N$). The oracle is a standard unitary, but the particle needs to be prepared in the exact position eigenstate $\frac{\sqrt{L}^N}{\sqrt{(2\omega+1)^N}}\!\hbox{\input{symbols/ZbwunitSym.tex}}\!\!_G$, and then measured in the position observable with infinite precision (i.e. with continuous-valued outcomes in $\torusGroup{N}$). Hence we can efficiently solve HSPs on the infinite abelian groups $\integers^N$ as long as we can perform exact preparations and measurements in the position observable.

In fact, being able to perform preparations and measurements in the position observable with arbitrary finite precision turns out to be sufficient. Indeed, consider the family of processes, indexed by $t \in \{1,...,\omega\}$, which involves preparation in the following approximate position eigenstate:
\begin{equation}
	\frac{\sqrt{L}^N}{\sqrt{(2\omega+1)^N}}\ket{\delta_{0}^{(t)}} := \frac{1}{\sqrt{(2\omega+1)^N}}\sum_{k_1=-t}^{t}...\sum_{k_N=-t}^{t} \frac{1}{\sqrt{L}^N}\ket{\chi_{\underline{k}}}
\end{equation}  
This results in a sequence of conditional probability distributions $\mathbb{P}_t(x|b)$ on position measurement outcomes $x \in \frac{1}{(2\omega+1)}\starIntegersModPow{2\omega+1}{N}$, indexed by the parameter $t\in\{1,...,\omega\}$. The choice of infinite natural $\omega$ is arbitrary, and hence the main proof of this Section shows that $\mathbb{P}_t(x|b) \simeq 0$ for all $t$ infinite and all  $x \notin \Annihil{H}$. As a consequence, we must have have the following standard result, where now we restrict our attention to $t \in \naturals^+$:
\begin{equation}
	\lim_{t \rightarrow \infty}\mathbb{P}_t(\stdpart{x}|b) = 0 \text{ for all } x \notin \Annihil{H} 
\end{equation}
This shows that arbitrary precision position preparations plus exact position measurements are enough. However, all quotients of $\integers^N$ take the form $\torusGroup{J} \times \prod_{k=1}^K \integersMod{n_j} \leq \torusGroup{N}$, and as a consequence approximate position measurements with sufficiently high precision always suffice. b
\end{proof}

\newpage
\section{Mermin-type non-locality scenarios}
\label{section_algosMermin}


Non-locality is a defining feature of quantum mechanics, and its connection to the structure of phase groups is a key foundational question. A particularly crisp example of this connection is given by Mermin's argument for qubit GHZ states \cite{Mermin1990}, which finds practical application in the HBB quantum secret sharing protocol. 

In Mermin's argument, $N$ qubits are prepared in a Pauli $Z$ GHZ state, then a controlled phase gate is applied to each, followed by measurement in the Pauli $X$ observable. Even though the $N$ outcomes (valued in $\integersMod{2}$) are probabilistic, their parity turns out to satisfy certain deterministic equations. Mermin shows that the existence of a local hidden variable model would imply a joint solution for the equations: however, the latter form an inconsistent system, and Mermin concludes that the scenario is non-local. Mermin's argument has sparked a number of lines of enquiry, and this work is concerned with two in particular: one leading to All-vs-Nothing arguments, and the other investigating the role played by strong complementarity. All-vs-Nothing arguments \cite{Abramsky2015} arise in the context of the sheaf-theoretic framework for non-locality and contextuality \cite{Abramsky2011}, and generalise the idea of a system of equations which is locally consistent but globally inconsistent. The second line of research is brought forward within the framework of categorical quantum mechanics, and it focuses on the algebraic characterisation of the structures involved.

A detailed analysis of Mermin's argument shows that the special relationship between the Pauli $X$ and Pauli $Z$ observables powering the argument is nothing but strong complementarity. A pair of complementary observables corresponds to mutually unbiased orthonormal bases: for example, both Pauli $X$ and Pauli $Y$ are complementary to Pauli $Z$. Strong complementarity amounts to a strictly stronger requirement: if one observable is taken as the computational basis, the other must correspond to the Fourier basis for some finite abelian group. Pauli $X$ fits the bill, for the abelian group $\integersMod{2}$, but Pauli $Y$ doesn't (in fact, Pauli $X$ is the only one for qubits).

In \cite{Coecke2012c}, Mermin's argument is completely reformulated in terms of strongly complementary observables ($\dagger$-Frobenius algebras) and abstract phase gates. It can therefore be tested on theories different from quantum mechanics, to better understand the connection between non-locality and the structure of phase groups. A particularly insightful comparison is given by qubit stabiliser quantum mechanics \cite{Coecke2011,Backens2014} vs Spekkens' toy model \cite{Spekkens2007,Coecke2012a}: both theories sport very similar operational and algebraic features, but the difference in phase groups ($\integersMod{4}$ for the former vs $\integersMod{2} \times \integersMod{2}$ for the latter) results in the former being non-local and the latter being local (both models have $\integersMod{2}$ as group of measurement outcomes, like Mermin's original argument). The picture arising from comparing qubit stabiliser quantum mechanics and Spekkens' toy model is iconic, and provides a first real glimpse into the connection between phase groups and non-locality~\cite{Coecke2010a}. 

While presenting an extremely compelling case for stabiliser qubits and Spekkens' toy qubits, the work of \cite{Coecke2012c,Coecke2010a} does not treat the general case (i.e. beyond $\integersMod{2}$ as group of measurement outcomes), nor does it provide a complete algebraic characterisation of the conditions guaranteeing non-locality. In this Section, taken from \cite{Gogioso2016e}, we fully generalise Mermin's arguments (Definition \ref{def_generalisedMerminArgument}) from $\integersMod{2}$ to arbitrary finite abelian groups, in arbitrary theories and for arbitrary phase groups (we will refer to these as \textbf{generalised Mermin-type arguments}). We also provide exact algebraic conditions for non-locality to be exhibited by our generalised Mermin-type arguments (Theorem \ref{thm_contextuality}), thus completing the line of research on the connection of phases and non-locality initiated by \cite{Coecke2012b,Coecke2010a}.

We proceed to make contact with the All-vs-Nothing line of enquiry \cite{Abramsky2015}, showing that the non-local generalised Mermin-type arguments yield a new hierarchy of quantum-realisable All-vs-Nothing empirical models, and hence that they are strongly contextual (Theorem \ref{thm_quantumRealisability} proves that the arguments are quantum realisable, while Theorem \ref{thm_AvNMermin} proves the non-local arguments are All-vs-Nothing). As a consequence, we show that the hierarchy of quantum-realisable All-vs-Nothing models over finite fields does not collapse (Theorem \ref{thm_nonCollapsingHierarchy}).

Mermin's argument for the qubit GHZ states also finds practical application in the quantum-classical secret sharing scheme of Hillery, Bu\v{z}ek and Berthiaume \cite{Hillery1999}, and we provide a non-trivial extension of the scheme to our generalised Mermin-type arguments. We also use the strong contextuality deriving from our All-vs-Nothing characterisation to provide some device-independent security guarantees (Theorem \ref{thm_HBBdisec}), which apply to the original HBB scheme as a special case.

The results in this Section are also related to the recent work of \cite{Zukowski2013}, and previous work by \cite{Lee2006,Cerf2002,Kaszlikowski2002,Zukowski1999}. The construction adopted in \cite{Zukowski2013} is similar to the one we will derive in this Section for the special case of cyclic groups $K := \integersMod{D}$, and the relationship between the dimension $D$ of the quantum systems and the allowed numbers $N$ of parties are the same in both works (i.e. $\gcd(N,D) = 1$). The References presented above feature explicit constructions, but are not concerned with investigating the general connection between Mermin-type non-locality and the structure of phase groups (which is the stated aim of the line of research of \cite{Coecke2012b,Coecke2010a}, brought forward by this Section).

\subsection{Mermin's Original Argument}
\label{section_merminOriginalArg}

\subsubsection{The parity argument} 

\noindent In the original \cite{Mermin1990}, Mermin considers a 3-qubit GHZ state in the computational basis, the basis of eigenstates for the single-qubit Pauli $Z$ observable, together with the following four joint measurements\footnote{Where $X_j$ and $Y_j$ are the single-qubit Pauli $X$ and $Y$ observables on qubit $j$, for $j=1,2,3$.}:
\begin{enumerate}[\indent(a)]
	\item the GHZ state is measured in the observable $X_1 \tensor X_2 \tensor X_3$;
	\item the GHZ state is measured in the observable $\;Y_1 \tensor \;Y_2 \tensor X_3$;
	\item the GHZ state is measured in the observable $\;Y_1 \tensor X_2 \tensor \;Y_3$;
	\item the GHZ state is measured in the observable $X_1 \tensor \;Y_2 \tensor \;Y_3$.
\end{enumerate}
We will denote the eigenstates of the Pauli $Z$ observable by $\ket{z_0},\ket{z_1}$, the eigenstates of the Pauli $X$ observable by $\ket{\pm} := \frac{1}{\sqrt{2}}(\ket{z_0} \pm \ket{z_1})$ and the eigenstates of the Pauli $Y$ observable by $\ket{\pm i} := \frac{1}{\sqrt{2}}(\ket{z_0} \pm i \ket{z_1})$. Mermin's argument is a parity argument, where measurement outcomes are valued in the abelian group $\integersMod{2} = \{0,1\}$ according to the following bijections:
\begin{enumerate}[\indent(i)]
	\item for the $X$ observable, $\ket{+} \mapsto 0$ and $\ket{-} \mapsto 1$
	\item for the $Y$ observable, $\ket{+i} \mapsto 0$ and $\ket{-i} \mapsto 1$
\end{enumerate}

\noindent The argument then proceeds as follows. While the joint measurement outcomes are probabilistic, the $\integersMod{2}$ sum of the three outcomes turns out to be deterministic, yielding the following system of equations ($\oplus$ here denotes the sum in $\integersMod{2}$):
\begin{equation}
\label{eqn_MerminSystemZ2Equations}
\begin{cases}
X_1 \oplus X_2 \oplus X_3 &= 0 \\
\;Y_1 \oplus \;Y_2 \oplus X_3 &= 1 \\
\;Y_1 \oplus X_2 \oplus \;Y_3 &= 1 \\
X_1 \oplus \;Y_2 \oplus \;Y_3 &= 1 
\end{cases}
\end{equation}
If there was a non-contextual assignment of outcomes for all measurements (i.e. $X_1,X_2,X_3$, $Y_1,Y_2$ and $Y_3$), i.e. if there existed a non-contextual hidden variable model, then System \ref{eqn_MerminSystemZ2Equations} would have a solution in $\integersMod{2}$, and in particular it would have to be consistent. However, the sum of the left hand sides yields $0$ in $\integersMod{2}$:
\begin{equation}
	\label{eqn_MerminSystemZ2EquationsLHSSum}
	2 X_1 \oplus 2 X_2 \oplus ... \oplus 2 Y_3 = 0 X_1 \oplus ... \oplus 0 Y_3 = 0
\end{equation}
while the sum of the right hand sides yields $0 \oplus 1 \oplus 1 \oplus 1 = 3 = 1$ in $\integersMod{2}$. This shows the system to be inconsistent. Equivalently, one could observe that the sum of the LHSs from Equation \ref{eqn_MerminSystemZ2EquationsLHSSum} can be written as $2(Y_1 \oplus Y_2 \oplus Y_3)$, and that inconsistency of the system is witnessed by the fact that the equation $2 y = 1$ has no solution in~$\integersMod{2}$. 

The first point of view, where contextuality is witnessed by an inconsistent system where each equation individually admits a solution, is behind the generalisation of Mermin's argument to All-vs-Nothing arguments, presented in \cite{Abramsky2015}. The second point of view, where contextuality is witnessed by the single unsatisfiable equation $2 y = 1$, will inspire the generalisation presented in this work.

\subsubsection{The role of phases}

To understand the role played by the equation $2 y = 1$ in the original Mermin argument, we need to take a step back. First of all, we observe that the Pauli $Y$ measurement can be equivalently obtained as a Pauli $X$ measurement preceded by an appropriate unitary. A single-qubit \textbf{phase gate}, in the computational basis (the Pauli $Z$ observable), is a unitary transformation in the following form:
\begin{equation}
	\label{eqn_Z2PhaseGates}
	\phasegate{\alpha} := \left(
	\begin{array}{cc}
		1 & 0 \\
		0 & e^{i \alpha}
	\end{array}
	\right)
\end{equation}
where we eliminated global phases by setting the first diagonal element to 1. Measuring in the single-qubit $Y$ observable is equivalent to first applying the single-qubit phase gate $\phasegate{\frac{\pi}{2}}$, and then measuring in the Pauli $X$ observable.

Because they pairwise commute, phase gates come with a natural abelian group structure given by composition, resulting in an isomorphism $\alpha \mapsto P_\alpha$ between them and the abelian group\footnote{The abelian group $\reals/(2\pi\integers)$ is isomorphic to the circle group $S^1$. We prefer the former because of its additive notation, as opposed to the traditionally multiplicative notation of the latter (which is a subgroup of the non-zero multiplicative complex numbers $\complexs^\times$).} $\reals/(2\pi\integers)$. Of all the phase gates, $\phasegate{0}$ (the identity element of the group) and $\phasegate{\pi}$ stand out because of their well-defined action on the (unnormalised) eigenstates of the Pauli $X$ observable:
\begin{align}
	\label{eqn_PzeroPiAction}
	\phasegate{0} & = \ket{\pm} \mapsto \ket{\pm} \nonumber\\
	\phasegate{\pi} & = \ket{\pm} \mapsto \ket{\mp} 
\end{align}
If we see $\ket{\pm}$ as the subgroup\footnote{Corresponding to $\{\pm 1\} < S^1$ in the circle group.} $\{0,\pi\} < \reals / (2\pi\integers)$, then Equation \ref{eqn_PzeroPiAction} looks a lot like the regular action of $\{0,\pi\}$ on itself. This is not a coincidence. Each phase gate $\phasegate{\alpha}$ can be (faithfully) associated the unique \textbf{phase state} $\ket{\alpha} := \ket{z_0} + e^{i \alpha} \ket{z_1}$ obtained from its diagonal, and these phase states can be abstractly characterised in terms of the Pauli $Z$ observable, with no reference to the phase gates they came from (See Subsection \ref{section_phaseGroup}). The phase states inherit the abelian group structure of the phase gates, and their regular action coincides with the action of the group of phase gates on them. In particular, the phase gates $\phasegate{0}$ and $\phasegate{\pi}$ have orthogonal eigenstates of the Pauli $X$ observable as their associated phase states $\ket{0}$ and $\ket{\pi}$, which coincide with $\sqrt{2}\ket{+}$ and $\sqrt{2}\ket{-}$ respectively: this endows the outcomes of Pauli $X$ measurements with the natural $\integersMod{2}$ abelian group structure arising\footnote{Natural because there is a unique isomorphism $\integersMod{2} \isom \{0,\pi\}$.} from the inclusion $\{0,\pi\} < \reals/(2\pi\integers)$. We will henceforth refer to the group of phase states as the \textbf{group of $Z$-phase states}, and to the subgroup $\{0,\pi\}$ as the \textbf{subgroup of $X$-classical states}; the latter will also be used to label the corresponding measurement outcomes.

In order to pave the way to our generalisation, we now proceed to show how Mermin's original argument can be re-constructed from the following statement: 
\begin{displayquote}
	\textit{the equation $2 y = \pi$ has no solution in the subgroup $\{0,\pi\}$ of $X$-classical states, but a solution\footnote{Corresponding to $y = e^{i\frac{\pi}{2}} = +i$ in the circle group $S^1$.} $y = \frac{\pi}{2}$ can be found in the larger group $\reals/(2\pi\integers)$ of $Z$-phase states.}
\end{displayquote}
We begin by observing that tripartite qubit GHZ state used in Mermin's argument has a special property when it comes to the application of phase gates followed by measurements in the Pauli $X$ observable.
\begin{lemma}[\cite{Coecke2012c}]\label{lemma_qubitGHZphaseSum} If $\alpha_j \in \reals/(2\pi\integers)$, denote by $X_j^{\alpha_j}$ the measurement outcome on qubit $j$ obtained by first applying phase gate $\phasegate{\alpha_j}$, and then measuring in the Pauli $X$ observable. If $\alpha_1 \oplus \alpha_2 \oplus \alpha_3 = \modclass{0\text{ or }\pi}{2\pi}$, then $X_1^{\alpha_1} \oplus X_2^{\alpha_2} \oplus X_3^{\alpha_3} = \modclass{0 \text{ or }\pi}{2\pi}$ respectively.
\end{lemma}
\noindent Now consider System \ref{eqn_MerminSystemZ2Equations} again, with values on the the RHS now obtained by applying Lemma~\ref{lemma_qubitGHZphaseSum} to $X_j := X_j^{0}$ and $Y_j := X_j^{\frac{\pi}{2}}$ (and valued in $\{0,\pi\}$ instead of $\integersMod{2}$):
\begin{equation}
	\label{eqn_MerminSystemZ2EquationsPM}
	\begin{cases}
	X^{0}_1\; \oplus X^{0}_2\; \oplus X^{0}_3\; &= 0 \text{, the control}\\
	X^{\frac{\pi}{2}}_1 \oplus X^{\frac{\pi}{2}}_2 \oplus X^{0}_3\; &= \pi \text{, the first variation}\\
	X^{\frac{\pi}{2}}_1 \oplus X^{0}_2\; \oplus X^{\frac{\pi}{2}}_3 &= \pi \text{, the second variation}\\
	X^{0}_1\; \oplus X^{\frac{\pi}{2}}_2 \oplus X^{\frac{\pi}{2}}_3 &= \pi \text{, the third variation}
	\end{cases}
\end{equation}
There are two complementary parts to the Mermin non-locality argument: (i) System \ref{eqn_MerminSystemZ2EquationsPM} above must be inconsistent, to rule out the existence of a non-contextual hidden variable model, and (ii) joint measurements yielding the individual equations must be possible (in quantum theory). For the first part, inconsistency of the system is witnessed by the fact that the equation $2 y = \pi$ has no solution in the subgroup of $X$-classical states. For the second part, notice that only measurements in the $Y$ observable contribute to the sum for each equation, as measurements in the $X$ observable are associated with the group unit $0$ of the group of $Z$-phase states. As a consequence, the existence of measurements implementing each individual equation reduces to the existence of a $Z$-phase state $\ket{y}$ satisfying equation $2 y = \pi$: the $Y$ observable is chosen exactly because $y = \pi/2$ gives one such $Z$-phase state.

The following steps summarise the skeleton of the argument, and open the way to our generalisation: 
\begin{enumerate}[\indent 1.]
	\item consider a non-degenerate observable, call it $Z$, on an arbitrary quantum system;
	\item consider a non-degenerate observable, call it $X$, such that the $X$-classical states are a subgroup (call it $K$) of the abelian group of $Z$-phase states (call it $P$);
	\item consider an equation in the following form, generalising $2 y = \pi$:
	\begin{equation}\label{eqn_generalisedInconsistencyEquation}
		n_1 y_1 \oplus ... \oplus n_M y_M = a
	\end{equation}
	(here $a \in K$, $n_1,...,n_M$ are integers\footnote{This is a general equation in abelian groups, seen equivalently as $\integers$-modules.}, and $\oplus$ is the group addition in $P$);
	\item construct an appropriate system of equations, generalising System \ref{eqn_MerminSystemZ2EquationsPM}, with inconsistency witnessed by non-existence of solutions for  Equation \ref{eqn_generalisedInconsistencyEquation} in $K$, and consistency of the individual equations witnessed by the existence of solutions in $P$;
	\item a measurement scenario is implementable if and only if a solution exists in $P$;
	\item a measurement scenario is contextual if and only if no solutions exist in $K$.
\end{enumerate} 

\noindent To give a first example of how such an appropriate system of equations might be constructed, we consider the simple generalisation of the argument from a $3$-partite to an $N$-partite GHZ state, for appropriate values of $N \geq 2$. Our requirements are: 
\begin{enumerate}[\indent(i)]
	\item we want the phases in the control to sum to $0$, and hence we will take them all to be $0$ (i.e. $X$ measurements), just as in the original argument; 
	\item we also want the phases in each variation to sum to $\pi$, and hence we will take two measurements in each variation to be with phase $\pi/2$ (i.e. measurements in the $Y$ observable), and all the other ones to be with phase $0$; 
	\item we want an odd number $V$ of variations, in order to ensure that the RHSs will sum to $0 \oplus V \pi = \pi$;
	\item we want the LHSs to sum to an even multiple of $X_1^{\frac{\pi}{2}} \oplus ... \oplus X_N^{\frac{\pi}{2}}$;
\end{enumerate}
An appropriate choice is given by the following system of equations, where $V := N$ and all  variations are cyclic permutations of the first one:
\begin{equation}
	\begin{cases}
	X_1^{0}\; \oplus X_2^{0}\; \oplus X_3^{0}\; \oplus\;\;\;\, ...\;\;\;\hspace{0.2mm} \oplus X_{N-1}^{0}\; \oplus X_N^{0}\; &= 0 \text{, the control}\\
	X_1^{\frac{\pi}{2}} \oplus X_2^{\frac{\pi}{2}} \oplus X_3^{0}\; \oplus\;\;\;\, ...\;\;\;\hspace{0.4mm} \oplus X_{N-1}^{0}\; \oplus X_N^{0}\; &= \pi \text{, the $1^{st}$ variation}\\
	X_1^{\frac{\pi}{2}} \oplus X_2^{0}\; \oplus\;\, ...\;\hspace{0.2mm} \oplus X_{N-2}^{0}\; \oplus X_{N-1}^{0}\; \oplus X_N^{\frac{\pi}{2}} &= \pi \text{, the $2^{nd}$ variation}\\
	\hspace{29.6mm}\vdots \\
	X_1^{0}\; \oplus X_2^{\frac{\pi}{2}} \oplus X_3^{\frac{\pi}{2}} \oplus \;\;\;X_4^{0}\;\;\hspace{0.5mm} \oplus \;\;\;\, ...\;\;\;\hspace{0.3mm} \oplus X_{N}^{0}\; &= \pi \text{, the $N^{th}$ variation}\\
	\end{cases}
\end{equation}
As long as $N = \modclass{1}{k}$, where $k=2$ is the exponent\footnote{The smallest positive integer such that $k x = 0$ for all $x \in K$.} of $K$, the RHSs will sum to $\pi$ in $K$. Having \vspace{-1mm} chosen our variations by cyclic permutation also makes for the desired sum of the LHSs, since each $X_j^{\pi/2}$ will be counted exactly twice:
\vspace{3mm}
\begin{center}
\begin{tabular}{ccccc}
$\big(X_1^{0} \oplus ... \oplus X_N^{0}\big)$ & $\oplus$ & $2 \cdot \big(X_1^{\frac{\pi}{2}} \oplus ... \oplus X_N^{\frac{\pi}{2}})$ & $\oplus$ & $(N-2)\cdot \big(X_1^{0} \oplus ... \oplus X_N^{0})$ \\
control & & $X_j^{\frac{\pi}{2}}$s from the variations & &  $X_j^{0}$s from the variations\\ 
\end{tabular}
\end{center}
\vspace{3mm}
Writing $x$ for $X_1^{0} \oplus ... \oplus X_N^{0}$ and $y$ for $X_1^{\frac{\pi}{2}} \oplus ... \oplus X_N^{\frac{\pi}{2}}$, the sum above can be rearranged to take the form $(N-1) x \oplus 2 y$, which is equal to $2 y$ in $K$ (since $(N-1) = \modclass{0}{k}$)\footnote{In this specific case, it is also true that $2 = \modclass{0}{k}$, but this is not key to the argument.}. Hence summing all the LHSs and RHSs leaves us with the equation $2 y = \pi$, which we know to be unsatisfiable in $K$.

\subsection{The phase group}
\label{section_phaseGroup}

Mermin's parity argument is fundamentally group-theoretic, and it depends almost entirely on the special relationship between the Pauli $Z$ and Pauli $X$ observables. Fixing the eigenstates of the Pauli $Z$ observable as the computational basis, the requirement that the $X$-classical states are $Z$-phase states is satisfied by the Pauli $X$ observable, but also by the Pauli $Y$: in fact, the $Z$-phase states are exactly the \textbf{unbiased states} for the Pauli $Z$ observables, the states lying on the equator of the Bloch sphere, and hence any observable \textbf{complementary}, or \textbf{mutually unbiased}, to Pauli $Z$ would do the trick; because their eigenstates lie on the equator of the Bloch sphere, we refer to these as \textbf{equatorial observables}. 

Complementarity however, is not sufficient for Mermin's argument: Lemma \ref{lemma_qubitGHZphaseSum} only holds if we measure the GHZ state in the Pauli $X$ observable, not in any other equatorial observable. The algebraic relationship between the Pauli $X$, $Y$ and $Z$ observables is vividly captured by the ZX calculus \cite{Coecke2011}: there, the special property relating the Pauli $Z$ and $X$ observables is axiomatised under the name of \textbf{strong complementarity}, to distinguish it from the complementarity of Pauli $Z$ and any other equatorial observable (such as Pauli $Y$). Strong complementarity is behind the proof of Lemma \ref{lemma_qubitGHZphaseSum}, which lies at core of the fully diagrammatic treatment of Mermin's original argument appearing in \cite{Coecke2012c}. 

\begin{remark}\label{rmrk_XYstrongComplementarityConundrum}
The Pauli $X$ and $Y$ observables on physical qubits are physically indistinguishable, as they can be turned into one another by a unitary which fixes the Pauli $Z$ observable (i.e. a Pauli $Z$ phase gate). As a consequence, it seems somewhat disturbing that $X$ and $Y$ could be distinguished by an abstract, basis-independent property such as strong complementarity. 

An extremely detailed description of the (inexact) correspondence between quantum observables, orthonormal bases of vectors and classical structures is given in Section 5 of \cite{Coecke2011}, where the concept of strong complementarity was originally introduced. In particular, the discussion explains why strong complementarity picks Pauli $X$ and not Pauli $Y$. The point is that the classical structure $\hbox{\input{symbols/ZbwdotSym.tex}}\!\!$ for Pauli $Z$ picks out two vector representatives $\ket{z_0}, \ket{z_1}$ for the Pauli $Z$ eigenstates (which are complex projective lines), and in doing so it imposes a specific group structure $(\!\hbox{\input{symbols/ZbwmultSym.tex}}\!\!,\!\hbox{\input{symbols/ZbwunitSym.tex}}\!\!)$ on the set of equatorial states. 

Physically, Pauli $X$ can be turned into Pauli $Y$ by a rotation of $\pi/2$ around the positive $Z$ axis, and this transformation leaves the Pauli $Z$ physical observable invariant. However, it turns out to change the Pauli $Z$ classical structure, in the following way: keeping $\ket{z_0}$ fixed (without loss of generality), it sends $\ket{z_1}$ to $i\ket{z_1}$ (same physical state, different vector), so that the unit $\!\hbox{\input{symbols/ZbwunitSym.tex}}\!\!$ changes from the Pauli $X$ $+1$-eigenstate $\ket{+} := \ket{z_0} + \ket{z_1}$ to the Pauli $Y$ $+1$-eigenstate $\ket{+i} := \ket{z_0} + i \ket{z_1}$; the multiplication changes as well, to reflect the fact that the unit of the group of equatorial states has rotated from $\ket{+}$ to $\ket{+i}$.

\end{remark}

The importance of strong complementarity for the Hidden Subgroup Problem lied in its connection to the quantum Fourier transform. The situation with Mermin-type arguments, however, is different: the relevant facet of strong complementarity will be the special relationship between $\hbox{\input{symbols/DdotSym.tex}}\!\!$-classical points and $\hbox{\input{symbols/ZbwdotSym.tex}}\!\!$-phase states, explored in detail in this Subsection. If strong complementarity is the fundamental algebraic property at work in Mermin's argument, phase gates and GHZ states are the operational components key to its implementation. Phase gates arise in the context of quantum-to-classical transitions, where they provide a characterisation, in the spirit of groups and symmetries, of how much information is lost by performing a (demolition) measurement in a non-degenerate observable.
\begin{definition}
Let $\hbox{\input{symbols/ZbwdotSym.tex}}\!\!$ be a $\dagger$-qSFA on an object $\SpaceH$ of a dagger compact category. Then the \textbf{$\hbox{\input{symbols/ZbwdotSym.tex}}\!\!$-phase gates} are the unitaries $U: \SpaceH \rightarrow \SpaceH$ which are annihilated by the measurement in the $\hbox{\input{symbols/ZbwdotSym.tex}}\!\!$ observable:
\begin{equation}\label{eqn_ZphaseGateDef}
	\input{pictures/chapter4/mermin/ZphaseGateDef.tikz}
\end{equation} 
Equation \ref{eqn_ZphaseGateDef} can be unfolded into the following equivalent definition, which extends to an arbitrary $\dagger$-SMC:
\begin{equation}\label{eqn_ZphaseGateDefExplained}
	\input{pictures/chapter4/mermin/ZphaseGateDefExplained.tikz}
\end{equation}
\end{definition}

A simpler algebraic characterisation of phase gates is given by the following two equations, which are equivalent to Equation \ref{eqn_ZphaseGateDefExplained} (because $U$ is assumed to be unitary):
\begin{equation}\label{eqn_ZphaseGateConsequencesComonoid}
	\input{pictures/chapter4/mermin/ZphaseGateConsequencesComonoid.tikz}
\end{equation}
\begin{equation}\label{eqn_ZphaseGateConsequencesMonoid}
	\input{pictures/chapter4/mermin/ZphaseGateConsequencesMonoid.tikz}
\end{equation}
Both equations will play a pivotal role in this section: Equation \ref{eqn_ZphaseGateConsequencesComonoid} will features shortly in Lemma \ref{lem_GHZphaseGates}, the result relating phase gates and GHZ states, while Equation \ref{eqn_ZphaseGateConsequencesMonoid} will feature in Theorem \ref{thm_PhaseGatesUnbiasedStates}, the result relating phase gates and unbiased states. 

From Equation \ref{eqn_ZphaseGateDef}, it is not hard to see that $\hbox{\input{symbols/ZbwdotSym.tex}}\!\!$-phase gates form a group: we will refer to this as the \textbf{$\hbox{\input{symbols/ZbwdotSym.tex}}\!\!$-phase group}, and we will denote it by $\phaseGroup{\hbox{\input{symbols/ZbwdotSym.tex}}\!\!}$. If $\hbox{\input{symbols/ZbwdotSym.tex}}\!\!$ is a symmetric $\dagger$-SFA on a finite-dimensional Hilbert space $\SpaceH$, associated with a direct sum decomposition $\SpaceH = \bigoplus_j \SpaceH_j$, then the phase group $\phaseGroup{\hbox{\input{symbols/ZbwdotSym.tex}}\!\!}$ is given by the corresponding direct sum of unitary groups, modulo a global phase: 
\begin{equation}
\phaseGroup{\hbox{\input{symbols/ZbwdotSym.tex}}\!\!} = \Big(\bigoplus_j U(\SpaceH_j)\Big) \big/ S^1
\end{equation}
In the special case where $\hbox{\input{symbols/ZbwdotSym.tex}}\!\!$ is a $\dagger$-SCFA on $\SpaceH$, i.e. when all $\SpaceH_j$ subspaces are 1-dimensional, the phase group is abelian, the translation group of a torus: 
\begin{equation}
\phaseGroup{\hbox{\input{symbols/ZbwdotSym.tex}}\!\!} = \Big(\bigoplus_{j=1}^{\dim{\SpaceH}}U(1) \Big) \big/ S^1 \isom T^{\dim{\SpaceH}-1}
\end{equation}  
The connection between abelian phase groups and commutative Frobenius algebras generalises from $\fdHilbCategory$ to arbitrary dagger compact categories. The following result shows that the phase group of a commutative Frobenius algebra is always abelian, while the converse will be proven later on in Corollary \ref{cor_ZphaseGroupAbelianIff} (conditional to the existence of enough unbiased states)
\begin{lemma}\label{lem_ZphaseGroupAbelianProof}
Let $\hbox{\input{symbols/ZbwdotSym.tex}}\!\!$ be a $\dagger$-qSFA on an object $\SpaceH$ of a dagger compact category. If $\hbox{\input{symbols/ZbwdotSym.tex}}\!\!$ is commutative, then the $\hbox{\input{symbols/ZbwdotSym.tex}}\!\!$-phase group $\phaseGroup{\hbox{\input{symbols/ZbwdotSym.tex}}\!\!}$ is abelian.
\end{lemma}
\begin{proof} 
\begin{equation}
\input{pictures/chapter4/mermin/ZphaseGroupAbelianProof.tikz}
\end{equation}
The first equality is by unit law for $\hbox{\input{symbols/ZbwdotSym.tex}}\!\!$; the second equality is by Equation \ref{eqn_ZphaseGateConsequencesComonoid}; the third equality is some topological manipulation; the fourth equality (top right to bottom left) is by commutativity of $\hbox{\input{symbols/ZbwdotSym.tex}}\!\!$; the fifth equality is by Equation \ref{eqn_ZphaseGateConsequencesComonoid}; the sixth equality is commutativity of $\hbox{\input{symbols/ZbwdotSym.tex}}\!\!$; the seventh and last equality is by Equation \ref{eqn_ZphaseGateConsequencesComonoid}, followed by unit law for $\hbox{\input{symbols/ZbwdotSym.tex}}\!\!$. \end{proof}

Having defined the phase group and proven Lemma \ref{lem_ZphaseGroupAbelianProof}, we are now in a position to state the first important result of this Subsection. Lemma \ref{lem_GHZphaseGates} below will characterise the states that can be obtained by application of phases gates to a GHZ state: in the context of our generalised Mermin-type arguments, it will play the same role that Lemma \ref{lemma_qubitGHZphaseSum} played in Mermin's original argument.     
\begin{definition}\label{def_GHZ}
If $\hbox{\input{symbols/ZbwdotSym.tex}}\!\!$ is a $\dagger$-qSFA on an object $\SpaceH$ of a dagger compact category, the \textbf{$N$-partite $\hbox{\input{symbols/ZbwdotSym.tex}}\!\!$-GHZ state} is the following state of $\SpaceH^{\otimes N}$:
\begin{equation}
\input{pictures/chapter4/mermin/GHZstate.tikz}
\end{equation}
\end{definition}
\begin{lemma}\label{lem_GHZphaseGates}
Let $\hbox{\input{symbols/ZbwdotSym.tex}}\!\!$ be a $\dagger$-qSCFA on an object $\SpaceH$ of a dagger compact category. Then the state obtained by applying $\hbox{\input{symbols/ZbwdotSym.tex}}\!\!$-phase gates $U_1,...,U_N$ to the $N$-partite $\hbox{\input{symbols/ZbwdotSym.tex}}\!\!$-GHZ state only depends on the composition $U_1 \cdot ...\cdot U_N$ of the phase gates:
\begin{equation}
\input{pictures/chapter4/mermin/GHZstatePhaseGates.tikz}
\end{equation}
\end{lemma}
\begin{proof} Each $\hbox{\input{symbols/ZbwdotSym.tex}}\!\!$-phase gate is pushed down by using Equation \ref{eqn_ZphaseGateConsequencesComonoid} and commutativity of $\hbox{\input{symbols/ZbwdotSym.tex}}\!\!$. Formally, the proof is by induction, with inductive step given by the following equality:
\begin{equation}
\resizebox{\textwidth}{!}{\input{pictures/chapter4/mermin/GHZstatePhaseGatesProof.tikz}}
\end{equation}
\end{proof}


We have remarked before that the phase gates in Mermin's original argument are associated to certain phase states, extracted from their diagonalisation, which are also unbiased states for the relevant observable. As Theorem \ref{thm_PhaseGatesUnbiasedStates} below shows, the connection between $\hbox{\input{symbols/ZbwdotSym.tex}}\!\!$-phase gates and $\hbox{\input{symbols/ZbwdotSym.tex}}\!\!$-unbiased states holds true in full generality, and as a consequence we will also refer to $\hbox{\input{symbols/ZbwdotSym.tex}}\!\!$-unbiased states as \textbf{$\hbox{\input{symbols/ZbwdotSym.tex}}\!\!$-phase states}. In the case of $\fdHilbCategory$, the decomposition of a $\hbox{\input{symbols/ZbwdotSym.tex}}\!\!$-phase gate $U$ given by Equation \ref{eqn_ZphaseGateZphaseStateStatement} below for a $\dagger$-SCFA $\hbox{\input{symbols/ZbwdotSym.tex}}\!\!$ is equivalent to saying that $U$ is diagonal in the orthonormal basis $(\ket{x})_x$ associated with $\hbox{\input{symbols/ZbwdotSym.tex}}\!\!$, and has diagonal encoded by $\hbox{\input{symbols/ZbwdotSym.tex}}\!\!$-phase state $\ket{u}$ as $U_{xx} = \braket{x}{u}$.
\begin{theorem}[\textbf{Phase gates, phase states}]\label{thm_PhaseGatesUnbiasedStates}\hfill\\
Let $\hbox{\input{symbols/ZbwdotSym.tex}}\!\!$ be a $\dagger$-qSFA on an object $\SpaceH$ of a dagger compact category. Then the $\hbox{\input{symbols/ZbwdotSym.tex}}\!\!$-phase gates are exactly the maps $\phasegate{u}: \SpaceH \rightarrow \SpaceH$ taking the following form for some $\hbox{\input{symbols/ZbwdotSym.tex}}\!\!$-unbiased state $\psi_u$:
\begin{equation}\label{eqn_ZphaseGateZphaseStateStatement}
	\input{pictures/chapter4/mermin/ZphaseGateZphaseStateStatement.tikz}
\end{equation}
\end{theorem}
\begin{proof} First we prove that any phase gate $U$ takes the form above, for some $\hbox{\input{symbols/ZbwdotSym.tex}}\!\!$-unbiased state $\psi_{u}$.
An appropriate state $\psi_{u}$ can then be obtained by unit law for $\hbox{\input{symbols/ZbwdotSym.tex}}\!\!$: 
\begin{equation}
	\input{pictures/chapter4/mermin/ZphaseGateZphaseStateProof1.tikz}
\end{equation}
By using Equation \ref{eqn_ZphaseGateDefExplained}, we can prove that the state we obtained is $\hbox{\input{symbols/ZbwdotSym.tex}}\!\!$-unbiased:
\begin{equation}
	\input{pictures/chapter4/mermin/ZphaseGateZphaseStateProof2.tikz}
\end{equation}
Then we prove that any $U$ in the form above with $\psi_{u}$ a $\hbox{\input{symbols/ZbwdotSym.tex}}\!\!$-unbiased state is a unitary:
\begin{equation}
	\input{pictures/chapter4/mermin/ZphaseStateZphaseGateUnitaryProof.tikz}
\end{equation}
Finally, we prove that any unitary $U$ in the form above with $\psi_{u}$ a $\hbox{\input{symbols/ZbwdotSym.tex}}\!\!$-unbiased state is a $\hbox{\input{symbols/ZbwdotSym.tex}}\!\!$-phase gate:
\begin{equation}
	\input{pictures/chapter4/mermin/ZphaseStateZphaseGateProof.tikz}
\end{equation}
\end{proof}
\noindent Because of the correspondence above, we will adopt a uniform notation for phase gates and phase states, known in the literature as \textbf{decorated spider} notation \cite{Coecke2011,Coecke2016a}:
\begin{equation}
	\input{pictures/chapter4/mermin/ZphaseGatesStatesDecoratedSpiders.tikz}
\end{equation}

\begin{corollary}\label{cor_GHZphaseStates}
Let $\hbox{\input{symbols/ZbwdotSym.tex}}\!\!$ be a $\dagger$-qSCFA on an object $\SpaceH$ of a dagger compact category. Then the state obtained by applying $\hbox{\input{symbols/ZbwdotSym.tex}}\!\!$-phase gates $\phasegate{u_1},...,\phasegate{u_N}$ to the $N$-partite $\hbox{\input{symbols/ZbwdotSym.tex}}\!\!$-GHZ state takes the following form in terms of the corresponding $\hbox{\input{symbols/ZbwdotSym.tex}}\!\!$-phase states $\psi_{u_1},...,\psi_{u_N}$:
\begin{equation}
\input{pictures/chapter4/mermin/GHZstatePhaseStates.tikz}
\end{equation}
That is, the states that can be obtained by applying $\hbox{\input{symbols/ZbwdotSym.tex}}\!\!$-phase gates to the $N$-partite $\hbox{\input{symbols/ZbwdotSym.tex}}\!\!$-GHZ state are exactly those obtained by comultiplying $N$-times some $\hbox{\input{symbols/ZbwdotSym.tex}}\!\!$-unbiased state $u$ (specifically, above we have $u = u_1 \cdot ... \cdot u_N$, and all $\hbox{\input{symbols/ZbwdotSym.tex}}\!\!$-unbiased states can be obtained this way). 
\end{corollary}
\begin{proof} From Lemma \ref{lem_GHZphaseGates}, by re-writing each $\hbox{\input{symbols/ZbwdotSym.tex}}\!\!$-phase gate in terms of the corresponding $\hbox{\input{symbols/ZbwdotSym.tex}}\!\!$-phase state using Theorem \ref{thm_PhaseGatesUnbiasedStates}, and then using associativity to group the $\hbox{\input{symbols/ZbwdotSym.tex}}\!\!$-phase states together.
\end{proof}

The group structure of phase gates transfers to unbiased states via the correspondence given by Theorem \ref{thm_PhaseGatesUnbiasedStates}. Albeit not surprising, this result plays an important role in our generalisation of Mermin-type arguments, where it connects the operational side of phase gates and GHZ states to the algebraic side of strong complementarity (see Theorem \ref{thm_characterisingCandSC} below).
\begin{lemma}
Let $\hbox{\input{symbols/ZbwdotSym.tex}}\!\!$ be a $\dagger$-qSFA on an object $\SpaceH$ of a dagger compact category. Then $(\!\hbox{\input{symbols/ZbwmultSym.tex}}\!\!,\!\hbox{\input{symbols/ZbwunitSym.tex}}\!\!)$ endows the set of $\hbox{\input{symbols/ZbwdotSym.tex}}\!\!$-unbiased states with the structure of $\phaseGroup{\hbox{\input{symbols/ZbwdotSym.tex}}\!\!}$.
\end{lemma}
\begin{proof} The $\hbox{\input{symbols/ZbwdotSym.tex}}\!\!$-phase gate corresponding to the $\hbox{\input{symbols/ZbwdotSym.tex}}\!\!$-unbiased state $\!\hbox{\input{symbols/ZbwunitSym.tex}}\!\!$ is the identity, the unit of $\phaseGroup{\hbox{\input{symbols/ZbwdotSym.tex}}\!\!}$, so all we need to show is that composition of phase gates is the same as multiplication under $\!\hbox{\input{symbols/ZbwmultSym.tex}}\!\!$ of the corresponding $\hbox{\input{symbols/ZbwdotSym.tex}}\!\!$-unbiased states:
\begin{equation}
	\input{pictures/chapter4/mermin/ZphaseGroupProof.tikz}
\end{equation}
\end{proof}

\noindent As a bonus, the correspondence between the $\hbox{\input{symbols/ZbwdotSym.tex}}\!\!$-phase group and the group structure on $\hbox{\input{symbols/ZbwdotSym.tex}}\!\!$-unbiased states can be used to prove a converse to Lemma \ref{lem_ZphaseGroupAbelianProof}.
\begin{corollary}\label{cor_ZphaseGroupAbelianIff}
Let $\hbox{\input{symbols/ZbwdotSym.tex}}\!\!$ be a $\dagger$-qSFA on an object $\SpaceH$ of a dagger compact category, and assume that $\hbox{\input{symbols/ZbwdotSym.tex}}\!\!$ has \textbf{enough unbiased states}\footnote{Two morphisms $F,G: \SpaceH \rightarrow \SpaceK$ are equal whenever $F\circ \psi = G \circ \psi$ for all $\hbox{\input{symbols/ZbwdotSym.tex}}\!\!$-unbiased states $\psi$.}. Then $\hbox{\input{symbols/ZbwdotSym.tex}}\!\!$ is commutative iff $\phaseGroup{\hbox{\input{symbols/ZbwdotSym.tex}}\!\!}$ is abelian.
\end{corollary}
\begin{proof} We already know from Lemma \ref{lem_ZphaseGroupAbelianProof} that if $\hbox{\input{symbols/ZbwdotSym.tex}}\!\!$ is commutative then the $\hbox{\input{symbols/ZbwdotSym.tex}}\!\!$-phase group $\phaseGroup{\hbox{\input{symbols/ZbwdotSym.tex}}\!\!}$ must be abelian. Conversely, if $\phaseGroup{\hbox{\input{symbols/ZbwdotSym.tex}}\!\!}$ is abelian then so is the group structure induced by $(\!\hbox{\input{symbols/ZbwmultSym.tex}}\!\!,\!\hbox{\input{symbols/ZbwunitSym.tex}}\!\!)$ on the $\hbox{\input{symbols/ZbwdotSym.tex}}\!\!$-unbiased states. In particular, this means that $\!\hbox{\input{symbols/ZbwmultSym.tex}}\!\!$ is commutative whenever it is applied to $\hbox{\input{symbols/ZbwdotSym.tex}}\!\!$-unbiased states, and the existence of enough unbiased states allows us to conclude that $\hbox{\input{symbols/ZbwdotSym.tex}}\!\!$ is always commutative.
\end{proof}

With Theorem \ref{thm_PhaseGatesUnbiasedStates} we have proven a general correspondence between phase gates and unbiased states, while with Lemma \ref{lem_GHZphaseGates} and Corollary \ref{cor_GHZphaseStates} we have characterised the states that can be obtained by applying phase gates to GHZ states. Phase gates and the GHZ state for the Pauli $Z$ observable are the key operational ingredients for Mermin's original argument. However, just as important is the special algebraic standing of those phase gates derived from the eigenstates of the Pauli $X$ observable (an observable strongly complementary to Pauli $Z$), as opposed to the phase gates derived from other equatorial states (the eigenstates of observables complementary to Pauli $Z$). The last result of this section, Theorem \ref{thm_characterisingCandSC}, provides a general characterisation of complementarity and strong complementarity in terms of the relation between classical states of one observable and unbiased states of the other. Together with Theorem \ref{thm_PhaseGatesUnbiasedStates} and Corollary \ref{cor_GHZphaseStates}, it will form the basis for the formulation of our generalised Mermin-type arguments in the next Subsection.

\begin{theorem}[\textbf{Strong complementarity and phase groups}]\label{thm_characterisingCandSC}\hfill\\
Let $\hbox{\input{symbols/ZbwdotSym.tex}}\!\!$ and $\hbox{\input{symbols/DdotSym.tex}}\!\!$ be symmetric $\dagger$-qSFA on an object $\SpaceH$ of a $\dagger$-SMC. The following implications always hold:
\begin{enumerate}[(i)]
\item if $\hbox{\input{symbols/ZbwdotSym.tex}}\!\!$ and $\hbox{\input{symbols/DdotSym.tex}}\!\!$ are complementary, then the $\hbox{\input{symbols/DdotSym.tex}}\!\!$-classical states form a subset of the $\hbox{\input{symbols/ZbwdotSym.tex}}\!\!$-unbiased states, and viceversa; 
\item if $\hbox{\input{symbols/ZbwdotSym.tex}}\!\!$ and $\hbox{\input{symbols/DdotSym.tex}}\!\!$ are strongly complementary, then the $\hbox{\input{symbols/DdotSym.tex}}\!\!$-classical states form a subgroup of the $\hbox{\input{symbols/ZbwdotSym.tex}}\!\!$-unbiased states, and viceversa. 
\end{enumerate}
The converse implications hold if $\hbox{\input{symbols/DdotSym.tex}}\!\!$ has enough classical states: 
\begin{enumerate}[(i)]
\setcounter{enumii}{2}
\item if the $\hbox{\input{symbols/DdotSym.tex}}\!\!$-classical states form a subset of the $\hbox{\input{symbols/ZbwdotSym.tex}}\!\!$-unbiased states, then $\hbox{\input{symbols/ZbwdotSym.tex}}\!\!$ and $\hbox{\input{symbols/DdotSym.tex}}\!\!$ are complementary; 
\item if the $\hbox{\input{symbols/DdotSym.tex}}\!\!$-classical states form a subgroup of the $\hbox{\input{symbols/ZbwdotSym.tex}}\!\!$-unbiased states, then $\hbox{\input{symbols/ZbwdotSym.tex}}\!\!$ and $\hbox{\input{symbols/DdotSym.tex}}\!\!$ are strongly complementary.
\end{enumerate}
Note that the existence of enough $\hbox{\input{symbols/DdotSym.tex}}\!\!$-classical states implies the existence of enough $\hbox{\input{symbols/ZbwdotSym.tex}}\!\!$-unbiased states when the former are a subset/subgroup of the latter.
\end{theorem}
\begin{proof} 

Implication (i) is the statement of Lemma \ref{lem_complementarityUnbiased}, implication (ii) is the statement of Theorem \ref{thm_QuantumGroupAreGroupsOnPoints}, and implication (iii) is the statement of Lemma \ref{lem_complementarityUnbiasedConverse}. To prove implication (iv) we use the fact that $\hbox{\input{symbols/DdotSym.tex}}\!\!$ has enough classical states by hypothesis, and we work with the colour-swapped versions of the defining equation of strong complementarity (which imply the usual ones, see Remark \ref{rmrk_strongComplementarityColorSwappedEqns}). The colour-swapped top row of Equations \ref{eqn_strongComplementarityAltTopRow} simply states that the unit $\!\hbox{\input{symbols/ZbwunitSym.tex}}\!\!$ is a $\hbox{\input{symbols/DdotSym.tex}}\!\!$-classical state, something which is true when the $\hbox{\input{symbols/DdotSym.tex}}\!\!$-states are a subgroup of the $\hbox{\input{symbols/ZbwdotSym.tex}}\!\!$-unbiased states:
\begin{equation}\label{eqn_strongComplementarityAltTopRow}
	\input{pictures/chapter4/mermin/strongComplementarityAltTopRow.tikz}
\end{equation}
The colour-swapped bottom row of Equations \ref{eqn_strongComplementarityAltTopRow} holds applied to two $\hbox{\input{symbols/DdotSym.tex}}\!\!$-classical states if and only if the multiplication under $\!\hbox{\input{symbols/ZbwmultSym.tex}}\!\!$ of two $\hbox{\input{symbols/DdotSym.tex}}\!\!$-classical states is a $\hbox{\input{symbols/DdotSym.tex}}\!\!$-classical state, something which is always true when the $\hbox{\input{symbols/DdotSym.tex}}\!\!$-states are a subgroup of the $\hbox{\input{symbols/ZbwdotSym.tex}}\!\!$-unbiased states
\begin{equation}\label{eqn_strongComplementarityAltBottomRowApplied}
	\resizebox{\textwidth}{!}{\input{pictures/chapter4/mermin/strongComplementarityAltBottomRowApplied.tikz}}
\end{equation}
Conditional to $(\!\hbox{\input{symbols/ZbwmultSym.tex}}\!\!,\!\hbox{\input{symbols/ZbwunitSym.tex}}\!\!)$ endowing the $\hbox{\input{symbols/DdotSym.tex}}\!\!$-classical states with the structure of a monoid, Hopf's law applied to a $\hbox{\input{symbols/DdotSym.tex}}\!\!$-classical state is equivalent to the antipode acting as group inverse on $\hbox{\input{symbols/DdotSym.tex}}\!\!$-classical states. 
\begin{equation}\label{eqn_HopfLawApplied}
	\input{pictures/chapter4/mermin/HopfLawApplied.tikz}
\end{equation}
This concludes the proof of implication (iv).
\end{proof}

\subsection{Generalised Mermin-type Arguments}
\label{section_generalisedMerminArg}

Armed with the necessary results relating the classical and unbiased states of strongly complementary observables, we are now in a position to formulate our generalised Mermin-type arguments. To do so, we first review the ingredients of Mermin's original parity argument:
\begin{enumerate}[(a)]
\item a 3-partite qubit GHZ state for the Pauli $Z$ observable;
\item the abelian group $\phaseGroup{Z} \isom \reals/(2\pi\integers)$ of phase states for the Pauli $Z$ observable;
\item the finite subgroup $\{0,\pi\} \isom \integersMod{2}$ given by the eigenstates of the Pauli $X$ observable;
\item an equation $2 x = 1$ with no solution in the subgroup $\{0,\pi\}$ given by the Pauli $X$ eigenstates, but with a solution $\pi/2$ in the group $\reals/(2\pi\integers)$ of Pauli $Z$ phase states;
\item measurements in the Pauli $X$ observable.
\end{enumerate}
Similarly, our generalised Mermin-type arguments will involve the following ingredients:
\begin{enumerate}[(a)]
\item an $N$-partite GHZ state for a $\dagger$-qSCFA $\hbox{\input{symbols/ZbwdotSym.tex}}\!\!$;
\item the abelian group $(\phaseGroup{\hbox{\input{symbols/ZbwdotSym.tex}}\!\!}, \oplus, 0)$ of $\hbox{\input{symbols/ZbwdotSym.tex}}\!\!$-phase states\footnote{Isomorphic, by Theorem \ref{thm_PhaseGatesUnbiasedStates}, to the $\hbox{\input{symbols/ZbwdotSym.tex}}\!\!$-phase group, which we will denote by $(\phaseGroup{\hbox{\input{symbols/ZbwdotSym.tex}}\!\!}, \cdot, \id{})$.}; 
\item the subgroup $(\classicalStates{\hbox{\input{symbols/DdotSym.tex}}\!\!},\oplus,0)$, assumed to be finite, of $\hbox{\input{symbols/DdotSym.tex}}\!\!$-classical states for a symmetric $\dagger$-qSFA $\hbox{\input{symbols/DdotSym.tex}}\!\!$ strongly complementary to $\hbox{\input{symbols/ZbwdotSym.tex}}\!\!$;
\item a finite system of $\integers$-module equations, together with a solution in the group $\phaseGroup{\hbox{\input{symbols/ZbwdotSym.tex}}\!\!}$;
\item measurements in the $\hbox{\input{symbols/DdotSym.tex}}\!\!$ observable.
\end{enumerate}
The non-existence of a solution in the subgroup $\classicalStates{\hbox{\input{symbols/DdotSym.tex}}\!\!}$ of $\hbox{\input{symbols/DdotSym.tex}}\!\!$-classical states is not part of our generalised setup: it will be explicitly characterised as the necessary and sufficient condition for contextuality. Also, $N$ will not be a free parameter, being instead determined by the exponent of the finite abelian group $\classicalStates{\hbox{\input{symbols/DdotSym.tex}}\!\!}$. 

\begin{definition}
\label{def_generalisedMerminArgument}
Consider an $R$-probabilistic CPM~category~$\CategoryC$. A \textbf{generalised Mermin-type argument}~in~$\CategoryC$ is specified by the following data:
\begin{enumerate}[(i)]
	\item a strongly complementary pair $(\hbox{\input{symbols/ZbwdotSym.tex}}\!\!,\hbox{\input{symbols/DdotSym.tex}}\!\!)$ of a canonical $\dagger$-qSCFA $\hbox{\input{symbols/ZbwdotSym.tex}}\!\!$ and a canonical $\dagger$-SCFA~$\hbox{\input{symbols/DdotSym.tex}}\!\!$ on some object $\SpaceH$ of $\CategoryC$, such that $\hbox{\input{symbols/DdotSym.tex}}\!\!$ has enough classical states; we furthermore assume that the set $\classicalStates{\hbox{\input{symbols/DdotSym.tex}}\!\!}$ of $\hbox{\input{symbols/DdotSym.tex}}\!\!$-classical states is finite\footnote{This, together with commutativity of $\hbox{\input{symbols/ZbwdotSym.tex}}\!\!$, means that $(\classicalStates{\hbox{\input{symbols/DdotSym.tex}}\!\!},\oplus,0)$ is a finite abelian group.}, and that $|\classicalStates{\hbox{\input{symbols/DdotSym.tex}}\!\!}|$ is invertible as an element of the semiring $R$ of scalars of $\CategoryC$;
	\item a finite system $\mathcal{S}$ of $\integers$-module equations\footnote{I.e. equations with integer coefficients  $n_r^s \in \integers$ and valued in abelian groups (aka $\integers$-modules).}, with $a^1,...,a^S \in \classicalStates{\hbox{\input{symbols/DdotSym.tex}}\!\!}$:
	\begin{equation}\label{eqn_system}
	\mathcal{S} = \begin{cases}
		\bigoplus_{r=1}^{M} n^1_r \, y_r = a^1 \\
		\hspace{5mm}\vdots\\
		\bigoplus_{r=1}^{M} n^S_r \, y_r = a^S 
	\end{cases}
	\end{equation}  
	\item a given solution $(y_r := \beta_r)_{r=1}^M$ in the abelian group $\phaseGroup{\hbox{\input{symbols/ZbwdotSym.tex}}\!\!}$ of $\hbox{\input{symbols/ZbwdotSym.tex}}\!\!$-phase states;
	\item a positive integer $N$ such that $N \geq \sum_{r=1}^M n_r^s$ for all $s=1,...,S$, and satisfying $\gcd(N,\exp[\classicalStates{\hbox{\input{symbols/DdotSym.tex}}\!\!}]) = 1$, where $\exp[\classicalStates{\hbox{\input{symbols/DdotSym.tex}}\!\!}]$ is the exponent\footnote{The smallest positive integer $e$ such that $e \cdot g = 0$ for all $g \in \classicalStates{\hbox{\input{symbols/DdotSym.tex}}\!\!}$.} of $\classicalStates{\hbox{\input{symbols/DdotSym.tex}}\!\!}$.
\end{enumerate}
Therefore a generalised Mermin-type argument is specified by a quintuple $(\hbox{\input{symbols/ZbwdotSym.tex}}\!\!,\hbox{\input{symbols/DdotSym.tex}}\!\!, \mathcal{S}, \beta, N)$. 
\end{definition} 

\noindent The quintuple $(\hbox{\input{symbols/ZbwdotSym.tex}}\!\!,\hbox{\input{symbols/DdotSym.tex}}\!\!, \mathcal{S}, \beta, N)$ contains all the algebraic and operational ingredients we need to formulate a measurement scenario, which sees $N$ no-signalling parties sharing an $N$-partied $\hbox{\input{symbols/ZbwdotSym.tex}}\!\!$-GHZ state. Each party makes a measurement choice $m_j \in \{0,1,...,M\}$, applies the phase gate $\phasegate{\beta_{m_j}}$ to her system, and measures it in the $\hbox{\input{symbols/DdotSym.tex}}\!\!$ observable (i.e. measurement outcomes are valued in the set $\classicalStates{\hbox{\input{symbols/DdotSym.tex}}\!\!}$ of $\hbox{\input{symbols/DdotSym.tex}}\!\!$-classical states). 

Not all combinations of measurement choices are needed for the argument, and the measurement contexts will be determined by System \ref{eqn_system}. We begin by zero-padding the system as follows, so that exactly $N$ phase states are involved in each equation:
\begin{equation}\label{eqn_systemAlgExt}
\begin{cases}
	n^0_0 \,y_0 \oplus \;\;0 \,y_1 ... \oplus \;\;\;\,0 \,y_M = 0 \\
	n^1_0 \,y_0 \oplus n^1_1 \,y_1 ... \oplus n^1_M \,y_M = a^1 \\
	\hspace{10mm}\vdots\\
	n^S_0 \,y_0 \hspace{-0.5mm} \oplus n^S_1 \,y_1 ...\hspace{-0.5mm} \oplus n^S_M \,y_M = a^S 
\end{cases}
\end{equation}  
where we have defined $a^0 := 0$, $n_0^s := N - \sum_{r=1}^M n_r^s$ for all $s=1,...,S$, $n_0^0 := N$ and $n_r^0 := 0$ for all $r=1,...,M$; we will also extend the given solution by setting $\beta_0 := 0$. The first equation in System \ref{eqn_systemAlgExt} (which we will refer to by the special value $s=0$ of the parameter $s$) will contribute to a single measurement context, the \textbf{control}; each further equation (i.e. for each value $s=1,....,S$ of the parameter $s$) will give rise to $N$ measurement contexts, the \textbf{variations}, for a total of $1+S \cdot N$ measurement contexts involved in the scenario.

In the control, all parties choose $m_j^0 = 0$, i.e. perform no phase gate before measuring. They obtain the following global state (where $1 / |\classicalStates{\hbox{\input{symbols/DdotSym.tex}}\!\!}|^{N-1}$ is the normalisation factor required to obtain a $R$-distribution):
\begin{equation}\label{eqn_MerminControl}
	\input{pictures/chapter4/mermin/MerminControl.tikz}
\end{equation}
The first variation for each value $s=1,...,S$ is specified by the corresponding equation in System \ref{eqn_systemAlgExt}: the first $n^s_0$ parties choose $m_j^s=0$, the next $n^s_1$ parties choose $m_j^s = 1$, the next $n^s_2$ parties choose $m_j^s = 2$ and so on, until the last $n^s_M$ parties choose $m_j^s = M$:
\begin{equation}\label{eqn_measurementChoices}
	m_j^s := \text{the largest $m \in \{0,...,M\}$ such that }  j \geq \sum_{r=0}^{m-1} n^s_r
\end{equation}
They obtain the following global state, where the equality results from an application of Corollary \ref{cor_GHZphaseStates}, using the relevant equation from System \ref{eqn_systemAlgExt}:
\begin{equation}\label{eqn_MerminVariation1}
	\input{pictures/chapter4/mermin/MerminVariation1.tikz}
\end{equation}
For each fixed value of $s$, the next $N-1$ variations are cyclic permutations of the first. The measurement choice for the $j^{th}$ party at the $k^{th}$ variation of a given $s$ is $m_{j+(k-1)}^s$, where the sum $j+(k-1)$ is taken modulo $N$: 
\begin{equation}\label{system_variations}
\begin{array}{c|ccccc}
	\text{Parties:} & 1 & 2 & ... & N-1 & N \\
	\hline
	\text{ $1^{st}$ variation for } s & m_1^s & m_2^s & ... & m_{N-1}^s & m_N^s \\
	\text{ $2^{nd}$ variation for } s & m_2^s & m_3^s & ... & m_N^s & m_1^s \\
	\text{ $3^{rd}$ variation for } s & m_3^s & m_4^s & ... & m_1^s & m_2^s \\
	\vdots & \vdots & \vdots &  & \vdots & \vdots \\
	\text{ $N^{th}$ variation for } s  & m_N^s & m_1^s & ... & m_{N-2}^s & m_{N-1}^s
\end{array}
\end{equation} 
Because $\hbox{\input{symbols/ZbwdotSym.tex}}\!\!$ is commutative, the global state obtained is the same as that for the first variation for that value of $s$ (shown on the RHS of Equation \ref{eqn_MerminVariation1}).

By using strong complementarity and Theorem \ref{thm_characterisingCandSC}, we rewrite the global state obtained by the $N$ parties in the control and variations, obtaining an explicit $R$-distribution over the set $\classicalStates{\hbox{\input{symbols/DdotSym.tex}}\!\!}^N$ of joint measurement outcomes (from now on, the parameter $s$ can take any value in $\{0,1,...,S\}$, unless otherwise specified). 
\begin{lemma}
\begin{equation}\label{eqn_MerminVariationDistrib}
	\input{pictures/chapter4/mermin/MerminVariationDistrib.tikz}
\end{equation}
\end{lemma}
\begin{proof} 
Strong complementarity can be used to swap $\hbox{\input{symbols/ZbwdotSym.tex}}\!\!$ and $\hbox{\input{symbols/DdotSym.tex}}\!\!$, as shown in Corollary 4.1 of \cite{Coecke2012c}, and then $a^s$ can be pushed through because it is a $\hbox{\input{symbols/DdotSym.tex}}\!\!$-classical state (we have left normalisation aside, and we use $a^0:= 0$ to treat control and variations uniformly): 
\begin{equation}\label{eqn_MerminVariationSCDistrib1}
	\input{pictures/chapter4/mermin/MerminVariationSCDistrib1.tikz}
\end{equation}
Using fact that $\hbox{\input{symbols/DdotSym.tex}}\!\!$ has enough classical states, and recalling from Theorem \ref{thm_characterisingCandSC} that $(\!\hbox{\input{symbols/ZbwmultSym.tex}}\!\!,\!\hbox{\input{symbols/ZbwunitSym.tex}}\!\!)$ acts as the group multiplication of $\classicalStates{\hbox{\input{symbols/DdotSym.tex}}\!\!}$ when restricted to the $\hbox{\input{symbols/DdotSym.tex}}\!\!$-classical states, we can further decompose the state on the RHS of Equation \ref{eqn_MerminVariationSCDistrib1} into an $R$-distribution over the set $\classicalStates{\hbox{\input{symbols/DdotSym.tex}}\!\!}^N$:
\begin{equation}\label{eqn_MerminVariationSCDistrib2}
	\input{pictures/chapter4/mermin/MerminVariationSCDistrib2.tikz}
\end{equation}
\end{proof}

The joint outcome of measurements for the control is uniformly distributed over the subgroup $H_0 \normalSubgroup \classicalStates{\hbox{\input{symbols/DdotSym.tex}}\!\!}^N$ specified by $H_0 := \suchthat{(g_1,...,g_N)}{g_1 \oplus ... \oplus g_N = 0}$, while the joint outcome of any of the $N$ variations for each specific value of $s$ is uniformly distributed over the coset $H_{a_s} := (a^s,0,...,0) \oplus H_0$. For each $s,s' \in \{0,1,...,S\}$, the cosets $H_{a_s}$ and $H_{a_{s'}}$ are disjoint if and only if $a^s \neq a^{s'}$. All in all, we get the following empirical model for the generalised Mermin-type argument:
\begin{align}
\mathbb{P}[(g_1,...,g_N) | \text{control}] &= 
	\begin{cases}
		\frac{1}{|\classicalStates{\hbox{\input{symbols/DdotSym.tex}}\!\!}|^{N-1}} & \text{ if } g_1 \oplus ... \oplus g_N = 0 \\
		\hfill 0 \hfill & \text{ otherwise }
	\end{cases}\label{eqn_empiricalModelControl}\\
\mathbb{P}[(g_1,...,g_N) | \text{$k^{th}$ variation for $s$}] &= 
	\begin{cases}
		\frac{1}{|\classicalStates{\hbox{\input{symbols/DdotSym.tex}}\!\!}|^{N-1}} & \text{ if } g_1 \oplus ... \oplus g_N = a^s \\
		\hfill 0 \hfill & \text{ otherwise }
	\end{cases}\label{eqn_empiricalModelVariations}
\end{align} 

One of the catchy features of Mermin's original argument is that it is entirely deterministic: instead of relying on the violation of some probabilistic inequality, the proof of contextuality shows that the existence of a local hidden variable (LHV) model would lead to the existence of solutions to an unsatisfiable parity equation (i.e. one which doesn't admit solutions in the finite abelian group $\integersMod{2}$). The proof of contextuality for our generalised Mermin-type arguments goes by similar lines, showing that the existence of a LHV model is equivalent to System \ref{eqn_system} admitting solutions in the finite abelian group $\classicalStates{\hbox{\input{symbols/DdotSym.tex}}\!\!}$. 

\begin{theorem}[\textbf{Mermin-type contextuality}]\label{thm_contextuality}\hfill\\
Consider an $R$-probabilistic CPM~category~$\CategoryC$, and let  $(\hbox{\input{symbols/ZbwdotSym.tex}}\!\!,\hbox{\input{symbols/DdotSym.tex}}\!\!, \mathcal{S}, \beta, N)$ be a generalised Mermin-type argument in it. If the associated empirical model is contextual, then the system $\mathcal{S}$ admits no solution in the finite abelian group $\classicalStates{\hbox{\input{symbols/DdotSym.tex}}\!\!}$. Conversely, if the system $\mathcal{S}$ admits no solution in $\classicalStates{\hbox{\input{symbols/DdotSym.tex}}\!\!}$ and $R$ is a positive semiring, then the empirical model is contextual.
\end{theorem}
\begin{proof} 

The proof comes in two parts: ($\Rightarrow$) we show that any solution in $\classicalStates{\hbox{\input{symbols/DdotSym.tex}}\!\!}$ can be turned into a LHV model; ($\Leftarrow$) we show that, as long as $R$ is a positive semiring, any LHV model can be turned into a solution in $\classicalStates{\hbox{\input{symbols/DdotSym.tex}}\!\!}$. 

\noindent \textbf{Proof of ($\Rightarrow$).} Assume that the system $\mathcal{S}$ (in the form of System \ref{eqn_system}) admits a solution $(y_r := b_r)_{r=1}^{M}$, and define $b_0 := 0$. A LHV model can be obtained as follows:
\begin{enumerate}[(i)]
	\item the uniform $R$-distribution on $H_0 \normalSubgroup \classicalStates{\hbox{\input{symbols/DdotSym.tex}}\!\!}^N$ is taken as a shared classical state amongst the $N$ parties:
		\begin{equation}\label{eqn_MerminLHVDistrib1}
			\input{pictures/chapter4/mermin/MerminLHVDistrib1.tikz}
		\end{equation}
	\item upon measurement choice $m_j \in \{0,1,...,M\}$ for the $j^{th}$ party, a translation by $b_{m_j}$ in the group $\classicalStates{\hbox{\input{symbols/DdotSym.tex}}\!\!}$ is applied to the respective classical subsystem, independently of the measurement choices of the other parties:
		\begin{equation}\label{eqn_MerminLHVDistrib2}
			\input{pictures/chapter4/mermin/MerminLHVDistrib2.tikz}
		\end{equation}
\end{enumerate} 
All we need to show is that the procedure above produces the same $R$-distributions on $\classicalStates{\hbox{\input{symbols/DdotSym.tex}}\!\!}^N$ as those given by the empirical model of Equations \ref{eqn_empiricalModelControl} and \ref{eqn_empiricalModelVariations}. To do so, we simply observe that the global state obtained with the procedure above is the same as the global states obtained in the control \ref{eqn_MerminControl} and in the variations \ref{eqn_MerminVariation1} (which we treat uniformly by considering $s=0,1,...,S$), because $b_0,b_1,...,b_N$ satisfy the same equations satisfied by the phases $\beta_0,\beta_1,...,\beta_N$:
\begin{equation}\label{eqn_MerminLHVDistrib}
	\input{pictures/chapter4/mermin/MerminLHVDistrib.tikz}
\end{equation}

\noindent \textbf{Proof of ($\Leftarrow$).} Now assume that $R$ is a positive semiring, and that the scenario admits a LHV model:
\begin{enumerate}[(i)]
	\item there is a some finite set $\Lambda$, the set of values for the hidden variable, coming with an $R$-distribution $p : \Lambda \rightarrow R$;
	\item for each possible measurement choice $r=0,1,...,M$ that each party $i=1,...,N$ can make, there is a family $(c_{r}^{i,\lambda})_{\lambda \in \Lambda}$ of $\hbox{\input{symbols/DdotSym.tex}}\!\!$-classical states, the deterministic local outcomes for each value of the hidden variable;
	\item for each measurement context (either $s=0$, $k=1$ for the control, or $(s,k)\in \{1,...,S\}\times\{1,...,N\}$ for the $N \cdot S$ variations), a definite $\hbox{\input{symbols/DdotSym.tex}}\!\!$-classical outcome $d_{s,k}^{i,\lambda}$ is obtained by each party $i=1,...,N$ at each definite value $\lambda \in \Lambda$ of the hidden variable:
		\begin{equation}
			d_{s,k}^{i,\lambda} := c_{m_{i+(k-1)}^s}^{i,\lambda}
		\end{equation}
	\item if these definite $\hbox{\input{symbols/DdotSym.tex}}\!\!$-classical global states are weighted based on the $R$-distribution $p$ on $\Lambda$, one obtains the same $R$-distribution on joint measurement outcomes that would be expected from the measurement context:
		\begin{equation}\label{eqn_MerminLHV}
			\input{pictures/chapter4/mermin/MerminLHV.tikz}
		\end{equation} 
\end{enumerate}
Given a LHV model, we can sum up all $N$ outcomes of each side of Equation \ref{eqn_MerminLHV} in $(\classicalStates{\hbox{\input{symbols/DdotSym.tex}}\!\!},\oplus,0)$ to obtain an equation between $R$-distribution over $\classicalStates{\hbox{\input{symbols/DdotSym.tex}}\!\!}$:
\begin{equation}\label{eqn_MerminLHVsummed}
	\input{pictures/chapter4/mermin/MerminLHVsummed.tikz}
\end{equation}
The last equation used the fact that $\hbox{\input{symbols/DdotSym.tex}}\!\!$ was chosen to be special\footnote{The special $\hbox{\input{symbols/DdotSym.tex}}\!\!$ could have been replaced by a more general $\dagger$-qSCFA, but at the price of an additional normalisation factor in all global states.}, and hence the normalisation factor for the $\dagger$-qSCFA $\hbox{\input{symbols/ZbwdotSym.tex}}\!\!$ is $|\classicalStates{\hbox{\input{symbols/DdotSym.tex}}\!\!}|$ (because $\hbox{\input{symbols/DdotSym.tex}}\!\!$ has enough classical states)\footnote{The normalisation factor $|\classicalStates{\hbox{\input{symbols/DdotSym.tex}}\!\!}|$ refers to two wires: each additional wire is an additional copy of $|\classicalStates{\hbox{\input{symbols/DdotSym.tex}}\!\!}|$, for a total of $|\classicalStates{\hbox{\input{symbols/DdotSym.tex}}\!\!}|^{N-1}$ in the $N$-wire case here.}. Equation \ref{eqn_MerminLHVsummed} can be turned into the following conditions on the LHV: 
\begin{equation}
	\sum_{\lambda \text{ s.t. }\bigoplus_{i=1}^N d_{s,k}^{i,\lambda} = a^s} \hspace{-0.75cm} p(\lambda) = 1 \hspace{2cm}
	\sum_{\lambda \text{ s.t. }\bigoplus_{i=1}^N d_{s,k}^{i,\lambda} \neq a^s} \hspace{-0.75cm} p(\lambda) = 0
\end{equation}
Because $R$ is a positive semiring, $p(\lambda) = 0$ for any $\lambda$ such that $\oplus_{i=1}^N d_{s,k}^{i,\lambda} \neq a^s$ for some $s$. Conversely, picking any $\lambda_+$ such that $p(\lambda_+) > 0$ (and at least one such $\lambda_+$ exists, because $p$ is an $R$-distribution) yields a family $(d_{s,k}^{i,\lambda_+})_{s,k,i}$ such that $\oplus_{i=1}^N d_{s,k}^{i,\lambda_+} = a^s$ for all $s$ and $k$. For the control ($s=0$ and $k=1$), we obtain the following equation: 
\begin{equation}\label{eqn_LHVstatesSummedUpControl}
	\oplus_{i=1}^N c_{0}^{i,\lambda_+} = 0
\end{equation}
For each variation $(s,k) \in \{1,...,S\}\times\{1,...,N\}$, we obtain the following equation:
\begin{equation}
	\oplus_{i=1}^N c_{m_{i+(k-1)}^s}^{i,\lambda_+} = a^s
\end{equation}
If $c_r^{i,\lambda_+}$ were independent of the party $i$ for all $r=1,...,M$, this equation would yield a solution to system $\mathcal{S}$ in the form of $b_r := c_r^{i,\lambda_+}$ for any $i$; unfortunately, this need not be the case. This is where our cyclic definition of the $N$ variations for each value of $s$ comes into play. For each fixed value of $s$, we add up the $N$ equations for $k=1,...,N$:
\begin{equation}\label{eqn_LHVstatesSummedUp}
	\oplus_{k=1}^{N}\oplus_{i=1}^N c_{m_{i+(k-1)}^s}^{i,\lambda_+} = N a^s
\end{equation}
Because $\gcd(N, \exp[\classicalStates{\hbox{\input{symbols/DdotSym.tex}}\!\!}]) = 1$, the equation above has solutions if and only if the equation below does:
\begin{equation}\label{eqn_LHVstatesSummedUpEquiv}
	\oplus_{k=1}^{N}\oplus_{i=1}^N c_{m_{i+(k-1)}^s}^{i,\lambda_+} = a^s
\end{equation}
Now refer to the Table \ref{system_variations} defining the $N$ variations for $s$ and to the Equation \ref{eqn_measurementChoices} defining the measurement choices. The LHS of Equation \ref{eqn_LHVstatesSummedUp} is a sum by rows of the $N^2$ measurement choices in Table \ref{system_variations}: each $r=0,1,...,M$ appears $n_r^s$ times in each row, but the changing value of $i$ along each row stops us from turning it into a solution to system $\mathcal{S}$. However, we can switch the summations in Equation \ref{eqn_LHVstatesSummedUp} to obtain a sum by columns of the table, where each $r=0,1,...,M$ still appears $n_r^s$ times in each column (by the cyclic definition), but now $i$ is constant along each column:
\begin{equation}\label{eqn_LHVstatesSummedUp2}
	\oplus_{i=1}^N \oplus_{k=1}^{N} c_{m_{i+(k-1)}^s}^{i,\lambda_+} = \oplus_{i=1}^N \oplus_{r=0}^M n_r^s c_r^{i,\lambda_+}
\end{equation}
We can then sum up all $(c_r^{i,\lambda_+})_{i=1}^N$ for each $r=0,1,...,M$, and use Equation \ref{eqn_LHVstatesSummedUp} (together with Equation \ref{eqn_LHVstatesSummedUpControl} to cancel out the contribution from $r=0$) to finally obtain the desired solution $(b_r)_{r=1}^M$ to system $\mathcal{S}$:
\begin{equation}
	\input{pictures/chapter4/mermin/MerminLHVsolution.tikz}
\end{equation}
\end{proof}

\subsection{Quantum realisability}
\label{section_quantumRealisab}

In quantum theory, i.e. in the $\reals^+$-probabilistic CPM category $\CPMCategory{\fdHilbCategory}$, many of the requirements of generalised Mermin-type arguments are automatically satisfied: canonical $\dagger$-SCFA in $\CPMCategory{\fdHilbCategory}$ (i.e. $\dagger$-SCFA in $\fdHilbCategory$) always have enough classical states (and finitely many so), the semiring $\reals^+$ of scalars is positive, and any non-zero integer is invertible in it. Hence, only strong complementarity is required in point (i) of the definition generalised Mermin-type arguments, and Theorem \ref{thm_contextuality} establishes an unconditional equivalence between contextuality of a generalised Mermin-type argument $(\hbox{\input{symbols/ZbwdotSym.tex}}\!\!,\hbox{\input{symbols/DdotSym.tex}}\!\!, \mathcal{S}, \beta, N)$ and the existence of solutions to system $\mathcal{S}$ in the finite abelian group $\classicalStates{\hbox{\input{symbols/DdotSym.tex}}\!\!}$ of $\hbox{\input{symbols/DdotSym.tex}}\!\!$-classical states. 

The remarks above show that the correspondence between systems of equations in finite abelian groups and generalised Mermin-type arguments is particularly tight in the case of quantum theory, but an important question remains unanswered: which systems of $\integers$-module equations lead to arguments which can be realised in quantum theory? As it turns out, all of them (but an obvious caveat applies).

\begin{theorem}[\textbf{Quantum Realisability}]\label{thm_quantumRealisability}\hfill\\
Let $(K,\oplus,0)$ be a finite abelian group, and $\mathcal{S}$ be a finite system of $\integers$-module equations in the following form, with $a^1,...,a^S \in K$:
\begin{equation}\label{eqn_systemQuantumRealisability}
\mathcal{S} = \begin{cases}
	\bigoplus_{r=1}^{M} n^1_r \, y_r = a^1 \\
	\hspace{5mm}\vdots\\
	\bigoplus_{r=1}^{M} n^S_r \, y_r = a^S 
\end{cases}
\end{equation}  
Assume that the system is \textbf{consistent} in the following sense, where by $\underline{n}^s \in \integers^{M}$ we denoted the row vectors of System \ref{eqn_systemQuantumRealisability}:
\begin{equation}
	\bigoplus_{s=1}^{S} c_s \cdot \underline{n}^{s} =_{\integers^M} \underline{0} \implies \bigoplus_{s=1}^{S} c_s \cdot a^{s} =_{K} 0,
\end{equation}
Then for every $|K|$-dimensional quantum system $\SpaceH$ and every $\dagger$-qSCFA $\hbox{\input{symbols/ZbwdotSym.tex}}\!\!$ on $\SpaceH$ with normalisation factor $|K|$, there exists a generalised Mermin-type argument $(\hbox{\input{symbols/ZbwdotSym.tex}}\!\!,\hbox{\input{symbols/DdotSym.tex}}\!\!, \mathcal{S}, \beta, N)$ corresponding to System \ref{eqn_systemQuantumRealisability}, i.e. we can always find:
\begin{enumerate}[(i)]
	\item a $\dagger$-SCFA $\hbox{\input{symbols/DdotSym.tex}}\!\!$, strongly complementary to $\hbox{\input{symbols/ZbwdotSym.tex}}\!\!$, such that $(\classicalStates{\hbox{\input{symbols/DdotSym.tex}}\!\!},\!\hbox{\input{symbols/ZbwmultSym.tex}}\!\!,\!\hbox{\input{symbols/ZbwunitSym.tex}}\!\!) \isom (K,\oplus,0)$; 
	\item a solution $(y_r := \beta_r)_{r=1}^M$ to $\mathcal{S}$ in $\phaseGroup{\hbox{\input{symbols/ZbwdotSym.tex}}\!\!} \isom T^{|K|-1}$;
	\item a positive integer $N$ (infinitely many, in fact) such that $N \geq \sum_{r=1}^{M} n_r^s$ for all $s=1,...,S$, and such that $\gcd(N,\exp[\classicalStates{\hbox{\input{symbols/DdotSym.tex}}\!\!}]) = 1$. 
\end{enumerate}
\end{theorem}
\begin{proof} 
Point (iii) is trivial: there are infinitely many positive integers $N$ such that $\gcd(N,\exp[K])=1$, and hence we can always find one such that $N \geq \sum_{r=1}^{M} n_r^s$ for all $s=1,...,S$. Point (i) is more interesting, and relies on the characterisation of strong complementarity in $\fdHilbCategory$ and Pontryagin duality for finite abelian groups. Point (ii) is perhaps the most interesting, and relies on the possibility of solving consistent systems of $\integers$-module equations in the torus $T^{|K|-1}$.

\noindent \textbf{Proof of point (i).} Because $\hbox{\input{symbols/ZbwdotSym.tex}}\!\!$ is a $\dagger$-qSCFA with normalisation factor $|K|$ on a $|K|$-dimensional Hilbert space $\SpaceH$, it is associated with a basis of $|K|$ vectors, each having norm $\sqrt{|K|}$. Label the basis vectors by the $|K|$ multiplicative characters $\goodchi \in K^\wedge$ of the finite abelian group $K$, and construct an orthonormal basis by using the multiplicative characters $\tau \in (K^\wedge)^\wedge$ of the finite abelian group $K^\wedge$:
\begin{equation}\label{eqn_doubleDualBasis}
	\ket{\tau} := \frac{1}{|K|} \sum_{\goodchi \in K^\wedge} \tau(\goodchi) \ket{\goodchi}
\end{equation}
By Pontryagin duality, there is a canonical isomorphism $(K^\wedge)^\wedge \isom K$, so that the new orthonormal basis given by Equation \ref{eqn_doubleDualBasis} is canonically labelled by elements of $K$. Consider the $\dagger$-SCFA $\hbox{\input{symbols/DdotSym.tex}}\!\!$ associated to the orthonormal basis thus defined to obtain the desired $(\classicalStates{\hbox{\input{symbols/DdotSym.tex}}\!\!},\!\hbox{\input{symbols/ZbwmultSym.tex}}\!\!,\!\hbox{\input{symbols/ZbwunitSym.tex}}\!\!) \isom (K,\oplus,0)$.

\noindent \textbf{Proof of point (ii).} The phase group $\phaseGroup{\hbox{\input{symbols/ZbwdotSym.tex}}\!\!}$ for a canonical $\dagger$-qSCFA on a $|K|$-dimensional Hilbert space in $\CPMCategory{\fdHilbCategory}$ is isomorphic to the $(|K|-1)$-dimensional torus, an abelian Lie group. To find a solution $(y_r := \beta_r)_{r=1}^M$ to System \ref{eqn_systemQuantumRealisability}, we will show that one can always find solutions to arbitrary consistent systems of $\integers$-module equations in a torus.

While all $K$-valued systems with solutions in some super-group of $K$ must necessarily be consistent, the converse is not true in general: given a super-group $P$ of $K$ there may be consistent systems with no solutions in $P$. Certainly if $P$ is finite then at least one such system exists (because of the finite exponent), and certainly if $P=\rationals^d$ then no such system exists; in fact, every divisible torsion-free abelian group $P$ is canonically a $\rationals$-vector space, and thus every consistent system of $\integers$-modules equations  (and, in fact, of $\rationals$-vector space equations) valued in a divisible torsion-free abelian group $P$ has solutions in $P$ (e.g. by Gaussian elimination over the field $\rationals$). Unfortunately, while tori are divisible, they are not torsion-free, and in particular not $\rationals$-vector spaces: as a consequence, the reasoning above does not apply. 

However, a more general argument can be used to show that any consistent system of equations can be solved in any divisible abelian group, regardless of whether the group is torsion-free or not \cite{fuchs2015abelian} (although uniqueness of solution need not hold for systems with linearly independent row vectors). As tori are divisible abelian groups, all consistent systems of $\integers$-module equations can be solved in them, and in particular we can find our solution $(y_r := \beta_r)_{r=1}^M$ to System \ref{eqn_systemQuantumRealisability}.
\end{proof}

\newcommand{\eqnIndex}[1]{\operatorname{index}(#1)}
\newcommand{\RlinearTheory}[2]{\mathbb{T}_{#1}(#2)}
\newcommand{\AvN}[2]{\operatorname{AvN}_{#1,#2}}
\newcommand{\AvNring}[1]{\operatorname{AvN}_{#1}}

\subsection{All-vs-Nothing Arguments}
\label{section_AvN}

Strong contextuality can be reformulated directly in terms of the supports of the distributions. The supports of the global sections, i.e. the $d \in \presheafOfDistributions{\mathbb{B}}{\mathcal{X}}$ satisfying Equation \ref{eqn_StrongContextualityCondition} form a (possibly empty) lattice, and thus a probabilistic empirical model is strongly contextual iff the following set is empty:
\begin{equation}
	\supportSubpresheaf{\mathcal{X}} := \Big\{ s \in \sheafOfEvents{\mathcal{X}} \Big\vert \restrict{s}{C} \in \support{\zeta_C} \text{ for all } C \in \mathcal{M}\Big\}
\end{equation}
For a possibilistic (no-signalling) empirical model $(\zeta_C)_{C \in \mathcal{M}}$, we can define \cite{Abramsky2015} a \textbf{support subpresheaf} $\supportSubpresheafSym \subseteq \sheafOfEventsSym$ by setting:
\begin{equation}
	\supportSubpresheaf{U} := \suchthat{s \in \sheafOfEvents{U}}{ \restrict{s}{C \cap U} \in \support{\restrict{\zeta_C}{U \cap C}} \text{ for all } C \in \mathcal{M}}
\end{equation}
Then a possibilistic empirical model is strongly contextual if and only if $\supportSubpresheaf{\mathcal{X}} = \emptyset$.

The fundamental observation behind the \textbf{All-vs-Nothing arguments} of \cite{Abramsky2015} is that contextuality of Mermin's original argument follows from the existence of the system of $\integersMod{2}$ equations which has no global solution (corresponding to $\supportSubpresheaf{\mathcal{X}} = \emptyset$ in the sheaf-theoretic framework for contextuality \cite{Abramsky2011} we have previously summarised), but where each equation admits a solution (i.e. we have $\supportSubpresheaf{C} \neq \emptyset$ for the measurement context $C$ associated to each equation). In this Subsection we summarise the basic framework of All-vs-Nothing arguments from \cite{Abramsky2015}, taking the liberty of slightly generalising the definitions therein, from rings to modules over rings.

Let $\mathcal{R}$ be a commutative ring with unit: we will denote by $+$ the addition in the ring $\mathcal{R}$, and by $\oplus$ the addition in $\mathcal{R}$-modules. The ring $\mathcal{R}$ should not be confused with the semiring $R$ over which the distributions are taken (i.e. the semiring of scalars of the $R$-probabilistic CPM category which the arguments take place in). If $G$ is some $\mathcal{R}$-module, we will define an \textbf{$\mathcal{R}$-linear equation valued in $G$} to be a triple $\phi = (C,n,b)$ where:
\begin{enumerate}
	\item[(i)] $C$ is some finite set, and we define $\eqnIndex{\phi} := C$;
	\item[(ii)] $n: C \rightarrow \mathcal{R}$ is any function;
	\item[(iii)] $b \in G$ is a given element of $G$.
\end{enumerate}
If $\phi = (C,n,b)$ is an $\mathcal{R}$-linear equation valued in $G$, we will say that a function $s: C \rightarrow G$ (henceforth an \textbf{assignment}) \textbf{satisfies} $\phi$, written $s \models \phi$, if and only if the following equation holds in $G$:
\begin{equation}
	\bigoplus_{m \in C} n_m s_m = b
\end{equation}
where we denoted $n_m := n(m)$ and $s_m := s(m)$. Any set $W$ of assignments $C \rightarrow G$ can be associated a corresponding set $\RlinearTheory{\mathcal{R}}{W}$ of satisfied equations, which is itself an $\mathcal{R}$-module\footnote{This gives rise to some interesting results on affine closures, see \cite{Abramsky2015}.}:
\begin{equation}
	\label{eqn_RlinearTheory}
	\RlinearTheory{\mathcal{R}}{W} := \suchthat{\phi}{s \models \phi \text{ for all }s \in W}
\end{equation} 

Let $(\zeta_C)_{C \in \mathcal{M}}$ be a possibilistic empirical model for a measurement scenario $(\sheafOfEventsSym,\mathcal{M})$, such that all measurements have the same $\mathcal{R}$-module $G$ as their set of outcomes (for example we had $G = \integersMod{2}$, a $\integers$-module, for Mermin's original argument). Let $\supportSubpresheafSym \subseteq \sheafOfEventsSym$ be the support subpresheaf for the empirical model and define its \textbf{$\textbf{R}$-linear theory}:
\begin{equation}
	\RlinearTheory{\mathcal{R}}{\supportSubpresheafSym} := \bigcup_{C \in \mathcal{M}}  \RlinearTheory{\mathcal{R}}{\supportSubpresheaf{C}}
\end{equation} 
We say that a possibilistic empirical model is \textbf{All-vs-Nothing} with respect to ring $\mathcal{R}$ and $\mathcal{R}$-module $G$, written $\AvN{\mathcal{R}}{G}$, iff the $\mathcal{R}$-linear theory admits no solution in $G$, i.e. iff there exists no global assignment $s: \mathcal{X} \rightarrow G$ such that:
\begin{equation}
	\label{eqn_AvNDefinition}
	\restrict{s}{C} \models \phi \text{ for all } C \in \mathcal{M} \text{ and all } \phi \in \RlinearTheory{\mathcal{R}}{\supportSubpresheaf{C}}
\end{equation}
To connect back with the notation in \cite{Abramsky2015}, we will simply write $\AvNring{\mathcal{R}}$ for $\AvN{\mathcal{R}}{\mathcal{R}}$. 

A straightforward generalisation (from rings to modules) of a result by \cite{Abramsky2015} proves that any possibilistic empirical model which is $\AvN{\mathcal{R}}{G}$ for some ring $\mathcal{R}$ and some $\mathcal{R}$-module $G$ is strongly contextual: if the model weren't strongly contextual, then there would be some global section $s \in \supportSubpresheaf{\mathcal{X}}$, and this would imply $\restrict{s}{C} \in \supportSubpresheaf{C}$ for all $C \in \mathcal{M}$, which in turn would prove that global assignment $s$ satisfies Equation \ref{eqn_AvNDefinition} (by appealing to Equation \ref{eqn_RlinearTheory}). 

A result by \cite{Abramsky2011} shows that a probabilistic empirical model is strongly contextual if and only if it is maximally contextual, i.e. if and only if it lies on a face of the no-signalling polytope with no local vertices. As a consequence, showing that our generalised Mermin-type arguments are $\AvN{\mathcal{R}}{G}$ is a particularly neat way of proving that they are maximally contextual, a highly desirable property for the device-independent security of the quantum-classical secret sharing protocol we will present in the next Subs.

\begin{theorem}[\textbf{Mermin-type contextuality is AvN}]\label{thm_AvNMermin}\hfill\\
Consider a $R$-probabilistic CPM category $\CategoryC$, and let $(\hbox{\input{symbols/ZbwdotSym.tex}}\!\!,\hbox{\input{symbols/DdotSym.tex}}\!\!, \mathcal{S}, \beta, N)$ be a generalised Mermin-type argument in it. If the associated empirical model is contextual, then it is $\AvN{\integers}{K}$.
\end{theorem}
\begin{proof}
The associated probabilistic empirical model is given by Equations \ref{eqn_empiricalModelControl} and \ref{eqn_empiricalModelVariations}: the only scalars appearing are $0$ and the invertible $\frac{1}{|\classicalStates{\hbox{\input{symbols/DdotSym.tex}}\!\!}|}$, which are (necessarily) sent to $0$ and $1$ respectively in the passage to the possibilistic empirical model. The possibilistic empirical model is as follows:
\begin{align}
\mathbb{P}[(g_1,...,g_N) | \text{control}] &= 
	\begin{cases}
		1 & \text{ if } g_1 \oplus ... \oplus g_N = 0 \\
		0 & \text{ otherwise }
	\end{cases}\label{eqn_empiricalModelControlBool}\\
\mathbb{P}[(g_1,...,g_N) | \text{$k^{th}$ variation for $s$}] &= 
	\begin{cases}
		1 & \text{ if } g_1 \oplus ... \oplus g_N = a^s \\
		0 & \text{ otherwise }
	\end{cases}\label{eqn_empiricalModelVariationsBool}
\end{align} 
The possibilistic empirical model has the following support subpresheaf $\supportSubpresheafSym \subseteq \sheafOfEventsSym$:
\begin{align}
	\supportSubpresheaf{\text{control}} &= \suchthat{(c^{i}_{m^0_{i}})_{i=1}^N \in K^N}{\oplus_{i=1}^N c^{i}_{m^0_{i}} =_K 0} \\
	\supportSubpresheaf{\text{$k^{th}$ variation for $s$}} &= \suchthat{(c^{i}_{m^s_{i+(k-1)}})_{i=1}^N \in K^N}{\oplus_{i=1}^N c^{i}_{m^s_{i+(k-1)}} =_K a^s} 
\end{align} 
Amongst the (many) equations in $\RlinearTheory{\integers}{\supportSubpresheafSym}$ we can find the following $1+N\cdot S$ equations:
\begin{align}
	\bigoplus_{m} s_m &= 0 \text{, satisfied by all } s \in \supportSubpresheaf{\text{control}}\\
	\bigoplus_{m} s_m &= a^s \text{, satisfied by all } s \in \supportSubpresheaf{\text{$k^{th}$ variation for $s$}} 
\end{align} 
Any global assignment satisfying all equations in $\RlinearTheory{\integers}{\supportSubpresheafSym}$ would in particular satisfy the $1+N\cdot S$ equations above, and hence provide a solution in $K$ to the system $\mathcal{S}$, as shown in the proof of Theorem \ref{thm_contextuality}. If the empirical model is contextual, then by Theorem \ref{thm_contextuality} no such solution can exist: hence there can be no global assignment satisfying all equations in $\RlinearTheory{\integers}{\supportSubpresheafSym}$, proving that the model is in particular $\AvN{\integers}{K}$.
\end{proof}

\begin{corollary} The generalised Mermin-type arguments provide an infinite family of quantum realisable $\AvN{\integers}{K}$ empirical models, indexed by all finite abelian groups $K$ and all finite consistent systems $\mathcal{S}$ of $\integers$-module equations valued in $K$ which admit no solution in $K$. Furthermore, all $\AvN{\integers}{K}$ arguments for some fixed $K$ are equivalently $\AvN{\integersMod{n}}{K}$ for any positive integer $n$ divisible by the exponent of $K$: as a consequence, there are generalised Mermin-type arguments providing quantum realisable $\AvNring{\integersMod{n}}$ models for all positive integers $n \geq 2$.
\end{corollary}
\begin{proof} 
The first part is a straightforward consequence of Theorems \ref{thm_contextuality}, \ref{thm_quantumRealisability}, \ref{thm_AvNMermin} and \ref{thm_nonCollapsingHierarchy} below. The second part is a consequence of the fact that any $\integers$-module equation valued in a finite abelian group $K$ is equivalent to a $\integersMod{\exp[K]}$-module equation (by taking remainders modulo $\exp[K]$ of all coefficients), and hence also to a $\integersMod{n}$-module equation for any $n$ divisible by the exponent $\exp[K]$ (by taking reminders modulo $n$ of all coefficients). The last part is the special case where we consider the finite abelian group $K=\integersMod{n}$ as a module over the ring $\mathcal{R} = \integersMod{n}$.
\end{proof}

One open question about All-vs-Nothing arguments asks whether all quantum realisable $\AvNring{\integers}$ models are in fact $\AvNring{\integersMod{2}}$. The following result answers the question negatively, showing that the infinite family of $\AvNring{\integers}$ models provided by the previous corollary form a non-collapsing hierarchy of $\AvNring{\integersMod{p}}$ models for all $n \geq 2$.

\begin{theorem}[\textbf{Non-collapsing AvN hierarchy over finite fields}]\label{thm_nonCollapsingHierarchy} \hfill\\
For each $n \geq 2$, there is a quantum realisable $\AvNring{\integersMod{n}}$ (and hence also $\AvN{\integers}{\integersMod{n}}$) empirical model which is not $\AvN{\integersMod{m}}{K'}$ for any $m \geq 2$ coprime with $n$ and any non-trivial abelian group $K'$ with exponent dividing $m$; in particular, it is not $\AvNring{\integersMod{m}}$.
\end{theorem}
\begin{proof}
The next Section fully works out the example of $K := \integersMod{n}$ with the system $\mathcal{S}$ consisting of a single $\integers$-module equation $t y = 1$. If we pick a $t \in \{2,...,n-1\}$ which divides $n$, the equation cannot be satisfied for $K = \integersMod{n}$, giving rise to a model which is both $\AvN{\integers}{\integersMod{n}}$ and $\AvNring{\integersMod{n}}$ (because the equation can be replaced by an equivalent $\integersMod{n}$-module equation). 
Now consider some $m$ coprime with $n$, and some abelian group $K'$ with exponent dividing $m$. Then the equation has solutions in $K'$, giving rise to a model which is not $\AvN{\integers}{K'}$ nor $\AvN{\integersMod{m}}{K'}$ (nor $\AvNring{\integersMod{m}}$, in the case $K' := \integersMod{m}$). Indeed, we must have $K' \isom \prod_{l=1}^{L} \integersMod{p_l^{e_l}}$ for some primes $p_l$ not dividing $n$ and some exponents $e_l \geq 1$, and the equation has solutions in $\integersMod{p_l^{e_l}}$ for all $l$ (because $t$ has the same prime factors of $n$, and hence no $p_l$ can divide $t$). 
\end{proof}

\subsection{A fully worked-out example}
\label{section_example}

In this Section, we fully work out a generalised Mermin-type argument, for the group $K := \integersMod{d}$ and the system $\mathcal{S}$ consisting of a single $\integers$-module equation $t y = 1$ (i.e. we have $S=M=1$), where $d \geq 2$ and $t \in \{1,...,d-1\}$. This can equivalently be seen as a $\integersMod{d}$-module equation $t y = \modclass{1}{d}$. We will go through the following stages: (i) we will present the measurement scenario and empirical model explicitly; (ii) we will characterise local hidden variable models; (iii) we will discuss the equations turning the model into an All-vs-Nothing argument; (iv) we will give a concrete realisation in terms of GHZ states and phase gates on qudits (i.e. $d$-dimensional quantum systems). 

\subsubsection{Measurement scenario.} 
Firstly, the exponent of $\integersMod{d}$ is $k := d$, and we fix a number of parties $N = 1 \mod{d}$ (e.g. $N = d+1$). Each party $i=1,...,N$ can make a measurement choice $m_i$ in the set $\{0,1\}$, and the measurement contexts take the following form. In the control, all parties make measurement choice $0$, while the variations are $N$ cyclic permutations, each one featuring $N-t$ contiguous parties making measurement choice $0$ and $t$ parties making measurement choice $1$:
\begin{equation}
\begin{array}{c|ccccccccc}
	\text{Party:} & 1 & 2 & ... & N-t-1 & N-t & N-t+1 &... & N-1 & N \\
	\hline
	\text{ control} 				& 0 & 0 & ... & 0 & 0 & 0 & ...  & 0 & 0 \\
	\text{ $1^{st}$ variation} 		& 0 & 0 & ... & 0 & 0 & 1 & ...  & 1 & 1 \\
	\text{ $2^{nd}$ variation} 		& 0 & 0 & ... & 0 & 1 & 1 & ...  & 1 & 0 \\
	\text{ $3^{rd}$ variation} 		& 0 & 0 & ... & 1 & 1 & 1 & ...  & 0 & 0 \\
	\vdots & \vdots & \vdots &  & \vdots & \vdots & \vdots & & \vdots & \vdots \\
	\text{ $N^{th}$ variation}  & 1 & 0 & ... & 0 & 0 & 0 & ...  & 1 & 1
\end{array}
\end{equation}

\subsubsection{Empirical model.}
The joint measurement outcomes $(g_1,...,g_N)$ for the $N$ parties are valued in $\integersMod{d}^N$, and the generalised Mermin-type argument is associated with the following probabilistic empirical model:
\begin{equation}\label{explicitEmpiricalModel}
\begin{array}{c||c|c|c}
	 & g_1\oplus...\oplus g_N = 0  & g_1 \oplus ... \oplus g_N = 1 & g_1 \oplus ... \oplus g_N \neq 0,1\\
	\hline
	\text{ control} 				& \frac{1}{d^{N-1}} & 0 & 0 \\
	\text{ $1^{st}$ variation} 		& 0 & \frac{1}{d^{N-1}} & 0 \\
	\text{ $2^{nd}$ variation} 		& 0 & \frac{1}{d^{N-1}} & 0 \\
	\text{ $3^{rd}$ variation} 		& 0 & \frac{1}{d^{N-1}} & 0 \\
	\vdots & \vdots & \vdots & \vdots  \\
	\text{ $N^{th}$ variation}  & 0 & \frac{1}{d^{N-1}} & 0
\end{array}
\end{equation}

\subsubsection{Local hidden variable models.}
When $t$ and $d$ are coprime, the equation $t y = \modclass{1}{d}$ has a (unique) solution $y :=  \modclass{t^{-1}}{d}$, and a local hidden variable model for the empirical model \ref{explicitEmpiricalModel} can be obtained as follows. 

Consider the set $\Lambda$ of all the $(g_1,...,g_N) \in \integersMod{d}^n$ such that $g_1\oplus...\oplus g_N = 0$, together with the uniform probability distribution $p: \Lambda \rightarrow \reals^+$ on $\Lambda$ (i.e. $p(g_1,...,g_N) = \frac{1}{d^{N-1}}$). Also, consider deterministic local outcomes for each fixed value $\underline{g} \in \Lambda$ of the hidden variable such that, upon measurement choice $m_i$ for party $i$, the measurement outcome is $g_i$ whenever $m_i = 0$ and $g_i \oplus t^{-1}$ whenever $m_i = 1$. 

In the control, all parties $i=1,...,N$ will choose $m_i = 0$, and the joint measurement outcome will be uniformly distributed over the subgroup $\Lambda \subset \integersMod{d}^N$. In any variation, $t$ parties will choose $m_i = 1$ and $N-t$ parties will choose $m_i = 0$, and the joint measurement outcome will be uniformly distributed over the coset  $(1,0,...,0) \oplus \Lambda \subset \integersMod{d}^N$ (using the fact that $t \cdot t^{-1} = 1$ in $\integersMod{d}$). Hence this really defines a local hidden variable model for the empirical model \ref{explicitEmpiricalModel} associated with the generalised Mermin-type argument.

\subsubsection{All-vs-Nothing arguments.}
When $t$ and $d$ are not coprime, the equation $t y = \modclass{1}{d}$ cannot have solutions in $K = \integersMod{d}$ (by a standard argument from number theory). The possibilistic empirical model associated with the argument has the following support subpresheaf $\supportSubpresheafSym \subseteq \sheafOfEventsSym$ (the control and the first three variations are shown here, to exemplify the pattern):
\begin{align}
	\supportSubpresheaf{\text{control}} &= \text{ the set of all } (g^{1}_{0},g^{2}_{0},...,g^{N-t-1}_{0},g^{N-t}_{0},g^{N-t+1}_{0},...,g^{N-1}_{0},g^{N}_{0}) \in \integersMod{d}^N\nonumber\\
	& \text{ such that } \bigoplus_{i=1}^{N} g^{i}_{0} = 0 \label{AvNequationControl}
\end{align}
\begin{align}
	\supportSubpresheaf{\text{$1^{st}$ var'n}} &= \text{ the set of all } (g^{1}_{0},g^{2}_{0},...,g^{N-t-1}_{0},g^{N-t}_{0},g^{N-t+1}_{1},...,g^{N-1}_{1},g^{N}_{1}) \in \integersMod{d}^N\nonumber\\
	& \text{ such that } \Big(\bigoplus_{i=1}^{N-t} g^{i}_{0}\Big) \oplus \Big(\bigoplus_{i=N-t+1}^{N} g^{i}_{1}\Big)  = 1 \label{AvNequationVar1}
\end{align}
\begin{align}
	\supportSubpresheaf{\text{$2^{nd}$ var'n}} &= \text{ the set of all } (g^{1}_{0},g^{2}_{0},...,g^{N-t-1}_{0},g^{N-t}_{1},g^{N-t+1}_{1},...,g^{N-1}_{1},g^{N}_{0}) \in \integersMod{d}^N\nonumber\\
	& \text{ such that } \Big(g^{N}_{0} \oplus \bigoplus_{i=1}^{N-t-1} g^{i}_{0}\Big) \oplus \Big(\bigoplus_{i=N-t}^{N-1} g^{i}_{1}\Big)  = 1 \label{AvNequationVar2}
\end{align}
\begin{align}
	\supportSubpresheaf{\text{$3^{rd}$ var'n}} &= \text{ the set of all } (g^{1}_{0},g^{2}_{0},...,g^{N-t-1}_{1},g^{N-t}_{1},g^{N-t+1}_{1},...,g^{N-1}_{0},g^{N}_{0}) \in \integersMod{d}^N\nonumber\\
	& \text{ such that } \Big(g^{N-1}_{0} \oplus g^{N}_{0} \oplus \bigoplus_{i=1}^{N-t-2} g^{i}_{0}\Big) \oplus \Big(\bigoplus_{i=N-t-1}^{N-2} g^{i}_{1}\Big)  = 1 \label{AvNequationVar3}
\end{align}
Amongst the (many) equations in $\RlinearTheory{\integers}{\supportSubpresheafSym}$ we can find the $N+1$ equations equations above, one for the control (Equation \ref{AvNequationControl}) and $N$ for the variations (Equations \ref{AvNequationVar1}, \ref{AvNequationVar2} and \ref{AvNequationVar3}, corresponding to the first three variations, exemplify the pattern), and any global assignment $(g^{i}_{r})^{i=1,...,N}_{r=0,1}$ which satisfies all equations in  $\RlinearTheory{\integers}{\supportSubpresheafSym}$ would in particular satisfy those $N+1$ equations. However, adding up the $N$ equations corresponding to the variations yields, after a bit of rearranging, the following equation (recall that $N= \modclass{1}{d}$):
\begin{equation}
(N-t)\Big(\bigoplus_{i=1}^{N} g^{i}_{0}\Big) \oplus t\Big(\bigoplus_{i=1}^{N} g^{i}_{1}\Big)  = N \cdot 1 = 1
\end{equation}
Taking this together with the equation $\bigoplus_{i=1}^{N} g^{i}_{0} = 0$ associated with the control then results in the following equation:
\begin{equation}
t\Big(\bigoplus_{i=1}^{N} g^{i}_{1}\Big) = 1
\end{equation}
But this means that setting $y := \bigoplus_{i=1}^{N} g^{i}_{1}$ would yield a solution to the equation $t y = 1$ in $\integersMod{d}$, which we assumed not to exist. Hence we cannot have any global assignment satisfying all equations in $\RlinearTheory{\integers}{\supportSubpresheafSym}$, and the model is both $\AvN{\integers}{\integersMod{n}}$ and $\AvNring{\integersMod{n}}$.

\subsubsection{Quantum realisation}

We now give a concrete realisation of this generalised Mermin-type argument in quantum-like theories of wavefunctions over commutative involutive semirings $S$, i.e. in $R$-probabilistic CP* categories $\CPStarCategory{\RMatCategory{S}}$ (where $R$ is the sub-semiring of positive elements in $S$). Let $(P,\cdot,1):= \suchthat{x \in S}{x^\ast x = 1}$ be the multiplicative group of phases in $S$, and let $\hbox{\input{symbols/ZbwdotSym.tex}}\!\!$ be the $\dagger$-SCFA corresponding to the standard orthonormal basis $(\ket{j})_{j\in \integersMod{d}}$ of the quantum system $S^{\integersMod{d}}$: 
\begin{equation}
\!\hbox{\input{symbols/ZbwcomultSym.tex}}\!\! := \sum_{j \in \integersMod{d}}\ket{j}\ket{j} \bra{j} \hspace{2cm} \!\hbox{\input{symbols/ZbwcounitSym.tex}}\!\! := \sum_{j \in \integersMod{d}} \bra{j}
\end{equation}
The $\hbox{\input{symbols/ZbwdotSym.tex}}\!\!$-phase states take the form $\ket{\alpha} := \sum_{j \in \integersMod{d}} \alpha_j \ket{j}$ for $\alpha_j \in P$, where without loss of generality we can set $\alpha_0 := 1$, and hence the group $(\phaseGroup{\hbox{\input{symbols/ZbwdotSym.tex}}\!\!},\!\hbox{\input{symbols/ZbwmultSym.tex}}\!\!,\!\hbox{\input{symbols/ZbwunitSym.tex}}\!\!)$ of $\hbox{\input{symbols/ZbwdotSym.tex}}\!\!$-phase gates is isomorphic to $P^{d-1}$ (we will write its elements as $(\alpha_1,...,\alpha_{d-1})$). 

\textbf{Assumption (i): the scalar $d := |\integersMod{d}|$ is invertible in $R$.} If $|\integersMod{d}|$ is invertible in $R$, then the following defines a $\dagger$-qSCFA on $S^{\integersMod{d}}$: 
\begin{equation}
\!\hbox{\input{symbols/DmultSym.tex}}\!\! := \sum_{i,j \in \integersMod{d}}\ket{i \oplus j}\bra{i} \bra{j} \hspace{2cm} \!\hbox{\input{symbols/DunitSym.tex}}\!\! := \ket{0}
\end{equation}
Then $(\hbox{\input{symbols/ZbwdotSym.tex}}\!\!,\hbox{\input{symbols/DdotSym.tex}}\!\!)$ form a strongly complementary pair, with $(\classicalStates{\hbox{\input{symbols/ZbwdotSym.tex}}\!\!},\!\hbox{\input{symbols/DmultSym.tex}}\!\!,\!\hbox{\input{symbols/DunitSym.tex}}\!\!) \isom \integersMod{d}$.

\textbf{Assumption (ii): the group $P$ contains some element $\zeta$ of order $d$.} If an element $\zeta \in P$ of order $d$ exists, we can define the following group homomorphisms $\goodchi_k: \integersMod{d} \rightarrow P$ for all $k \in \integersMod{d}$:
\begin{equation}
\goodchi_k: j \mapsto \zeta^{jk}
\end{equation}
Then these are exactly the $S$-valued multiplicative characters of $\integersMod{d}$, and correspond to the $\hbox{\input{symbols/DdotSym.tex}}\!\!$-classical states $\ket{\goodchi_k} := \sum_{j \in } \zeta^{jk} \ket{j}$. There is a group isomorphism between the group of $S$-valued multiplicative characters and the group of complex multiplicative characters (e.g. given by $\zeta \leftrightarrow e^{i \frac{2 \pi}{d}}$): as a consequence the $S$-valued multiplicative characters are enough to discriminate between elements of $\integersMod{d}$, and thus the observable $\hbox{\input{symbols/DdotSym.tex}}\!\!$ has enough classical states (and we can legitimately measure in it).

\textbf{Assumption (iii): the group $P$ contains some element $\xi$ of order $dt$, and we picked $\zeta := \xi^t$ to satisfy Assumption (ii) above.} The $\hbox{\input{symbols/ZbwdotSym.tex}}\!\!$-classical states form the subgroup $(\classicalStates{\hbox{\input{symbols/DdotSym.tex}}\!\!},\!\hbox{\input{symbols/ZbwmultSym.tex}}\!\!,\!\hbox{\input{symbols/ZbwunitSym.tex}}\!\!) \isom \integersMod{d}$ of the group of $\hbox{\input{symbols/ZbwdotSym.tex}}\!\!$-phase gates, having elements in the form $\goodchi_k \equiv (\zeta^{k},\zeta^{2k},...,\zeta^{(d-1)k})$. In this subgroup, the equation $t y = \goodchi_1 = (\zeta,\zeta^2,...,\zeta^{d-1})$ does not admit any solution, because it doesn't in $\integersMod{d}$. However, a solution $y := \beta$ exists in the larger group $(\phaseGroup{\hbox{\input{symbols/ZbwdotSym.tex}}\!\!},\!\hbox{\input{symbols/ZbwmultSym.tex}}\!\!,\!\hbox{\input{symbols/ZbwunitSym.tex}}\!\!)$ of $\hbox{\input{symbols/ZbwdotSym.tex}}\!\!$-phase states, in the form $\beta := (\xi,\xi^2,...,\xi^{d-1})$. The corresponding $\hbox{\input{symbols/ZbwdotSym.tex}}\!\!$-phase gate $P_{\beta} := \sum_{j \in \integersMod{d}} \xi^{j} \ket{j}\bra{j}$ can be then used to implement our generalised Mermin-type argument in $\CPStarCategory{\RMatCategory{S}}$. 

\noindent Now we cover some specific quantum-like theories of interest:

\begin{enumerate}

\item[(i)] In the case of ordinary quantum theory, $d$ is always invertible. We have that $P = S^1$, and we can always take $\xi:=e^{i \frac{2\pi}{dt}}$ and $\zeta := e^{i \frac{2\pi}{d}}$. 

\item[(ii) ]In the case of real quantum theory, $d$ is always invertible. However, we have that $P = \{\pm 1\}$, and hence the only argument allowed is the trivial one with $\integersMod{2}$ and the equation $1 \cdot y = 1$ (which has solution $y := 1$ in $\integersMod{2}$). 

\item[(iii)] In the case of hyperbolic quantum theory, $d$ is always invertible. However, we have $P=SO(1,1) \isom \integersMod{2} \times \reals$, and again the only argument allowed is the trivial one with $\integersMod{2}$ and the equation $1 \cdot y = 1$. Contrary to real quantum theory, non-trivial arguments would be allowed in hyperbolic quantum theory for infinite groups such as $\integers$; their implementation in the non-standard framework is left to future work.

\item[(iv)] In the case of finite-field quantum theory, the phase group takes the form $P \isom \integersMod{p^n+1}$: an element $\zeta$ of order $d$ exists if and only if $d | p^n+1$, and an element $\xi$ of order $dt$ exists if and only if $dt | p^n+1$. When this is the case, $d$ is necessarily an invertible scalar (because $d$ divides $p^n+1$, we cannot have that $p$ divides $d$). 

\item[(v)] In the case of relational quantum theory, parity quantum theory and tropical quantum theory we have $P=\{1\}$, and no value of $d$ is admissible.

\end{enumerate}

\subsection{Quantum-classical Secret Sharing}
\label{section_QSS}

In contrast to other information security protocols, classical secret sharing comes with the intrinsic assumption that some participants cannot, to some extent, be trusted. A \textit{dealer} is interested in sharing some \textit{secret} with a number of \textit{players}, with the caveat that the secret be revealed to the players only when all players agree to cooperate\footnote{More in general, a minimum number of cooperating players can be specified.}. Integrity and availability of communications is guaranteed by the existence of authenticated classical channels between dealer and players, and the protocol is only concerned with confidentiality, defined as the impossibility of recovering the secret unless all players cooperate.

The quantum-classical scheme of Hillery, Bu\v{z}ek and Berthiaume \cite{Hillery1999} introduces a new layer of security to secret sharing, employing entangled states and non-commuting observables to detect eavesdropping. The HBB scheme is based on the same measurement contexts of Mermin's original parity argument: a dealer and $N-1$ players share $N$ qubits in a GHZ state (with respect to the computational basis associated with the Pauli $Z$ observable), and randomly choose to measure their qubit in either of the mutually unbiased Pauli $X$ or Pauli $Y$ observables. It can be shown \cite{Zamdzhiev2012} that confidentiality is an immediate consequence of strong complementarity of the Pauli $Z$ and $X$ observables, while eavesdropping detection follows from mutual unbias of the Pauli $X$ and $Y$ observables. 

We extend the HBB scheme from Mermin's original parity argument to our generalised Mermin-type arguments, and we use our result on contextuality to provide a number of device-independent security guarantees. For the remainder of this section, we will consider a generalised Mermin-type argument $(\hbox{\input{symbols/ZbwdotSym.tex}}\!\!,\hbox{\input{symbols/DdotSym.tex}}\!\!, \mathcal{S}, \beta, N)$, on an object $\SpaceH$ of a $R$-probabilistic CP* category.

Consider a \textbf{dealer}, call her Alice, who wishes to share a \textbf{secret} with $N'$ \textit{players}, where $2 \leq N' < N$. As the owner of the secret, Alice is always a trusted party, the \textit{only} trusted party in the protocol. The secret is assumed to take the form of a string of elements of $\classicalStates{\hbox{\input{symbols/DdotSym.tex}}\!\!}$, the \textbf{plaintext} (at most one element of $\classicalStates{\hbox{\input{symbols/DdotSym.tex}}\!\!}$, the \textbf{round plaintext}, transmitted for each round of the protocol). We wish to ensure that the plaintext can be decoded from the information Alice sends, the \textbf{cyphertext}, if and only if all players agree to cooperate (by which we mean that they all reveal their secret keys to some party in possession of the cyphertext). Alice and the players are given $N$ devices (one per player, and $N-N'$ for Alice): at each round $w$, each device $B_j$ is fed an \textbf{input} $m_j^w \in \{0,1,...,M\}$ and returns an \textbf{output} $g_j^w \in \classicalStates{\hbox{\input{symbols/DdotSym.tex}}\!\!}$ (we also refer to the outputs $g_1^w,...,g_{N'}^w$ as the \textbf{secret keys} of the players for round $w$).
We furthermore assume the following \textbf{security conditions} to hold. 
\begin{enumerate}
	\item[(i)] Alice and the players share an authenticated classical channel, ensuring integrity and availability of all classical communications involved in the protocol.
	\item[(iia)] Alice and the players are in possession of $N$ secure independent classical sources of randomness, to generate independent inputs at each round which are uniformly distributed in $\{0,1,...,M\}$.
	\item[(iib)] Alice is in possession of a secure classical source of randomness, independent from all other, to decide which rounds will be \textbf{secret rounds} (with probability $(1-\tau)>0$) and which rounds will be \textbf{test rounds} (with probability $\tau > 0$).
	\item[(iii)] During step 2 of the protocol below, no signalling is possible between distinct parties/devices\footnote{This can be achieved, for example, by ensuring the devices are operated in conditions controlled by Alice (trusted laboratories, synchronized time-stamp servers, etc).}.
	\item[(iv)] We will assume that in step 3 Alice is communicated the measurement choices faithfully\footnote{This can be achieved by entrusting the laboratory setup with the communication of the random measurement choices to Alice, the player and the device.}.
\end{enumerate}
Because tampering can only be determined after the protocol has ended and the entirety (or an otherwise significant portion) of the plaintext has been transmitted, we distinguish between the \textbf{plaintext}, the data that can be decoded using the secret keys, and the actual \textbf{secret} that Alice wants the players to share. Before the protocol begins, Alice will obtain the plaintext by encrypting the secret with a secure symmetric encryption protocol\footnote{If the secret is in the form of a string of elements of $\classicalStates{\hbox{\input{symbols/DdotSym.tex}}\!\!}$, the natural choice for this protocol, then the plaintext can be obtained by generating a string of uniformly random $k^w$ elements of $\classicalStates{\hbox{\input{symbols/DdotSym.tex}}\!\!}$, obtaining the round plaintext $p^w$ from the corresponding \inlineQuote{round secret} $q^w$ as $p^w = q^w \oplus k^w$. Once the string of random elements is broadcast, upon successful completion of the protocol, the secret can be recovered from the decoded plaintext as $q^w = p^w \ominus k^w$.}, using a freshly generated ephemeral key which she will broadcast only if the protocol is successful. If the protocol fails, the random key will not be broadcast and the secret will be unrecoverable even if the plaintext is decoded. 

\begin{figure}
	\input{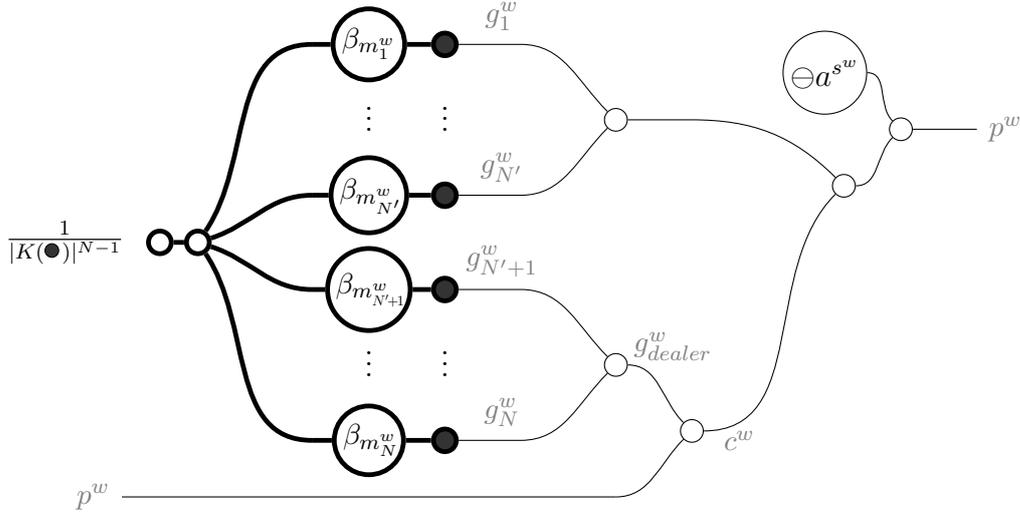}
	\caption{Graphical presentation of a noiseless, trusted implementation. 
	\label{fig_trustedProtocol}}
\end{figure}

The quantum-classical secret sharing protocol then proceeds as follows for each round $w=1,...,W$, until the entire secret has been transmitted. An individual round for a noiseless, trusted implementation is presented in Figure \ref{fig_trustedProtocol}. Throughout the protocol, Alice keeps a count of occurrences of joint outputs $g_1,...,g_N$ conditional to each joint input $m_1,...,m_N$ that she observes in test rounds.
\begin{enumerate}
	\item[1.] Alice and the players share $N$ subsystems of a state $\rho$: each player has an individual subsystem and Alice keeps the remaining $N-N'$ subsystems. In a noiseless, trusted implementation, $\rho$ is the $N$-partite $\hbox{\input{symbols/ZbwdotSym.tex}}\!\!$-GHZ state. For the purposes of a device-independent security analysis, $\rho$ can be potentially any state (pure or mixed).
	\item[2.] Alice and the players each sample their classical source of randomness and obtain inputs $m_1^w,...,m_N^w$ which are passed to the devices $B_1,...,B_N$ and result in outputs $g_1^w,...,g_{N'}^w \in \classicalStates{\hbox{\input{symbols/DdotSym.tex}}\!\!}$ for the players (the secret keys for the round) and $g_{N'+1}^w,...,g_{N}^w \in \classicalStates{\hbox{\input{symbols/DdotSym.tex}}\!\!}$ for Alice. In a noiseless, trusted implementation, $B_j$ with input $m_j^w$ applies the phase gate $\phasegate{\beta_{m_j^w}}$ to the subsystem $j$ and then measures it in the $\hbox{\input{symbols/DdotSym.tex}}\!\!$ observable.
	\item[3.] The inputs for the players are communicated to Alice. She checks that $m_1^w,...,m_N^w$ define a valid \textbf{measurement context} (either the control ($s=0$) or a variation for some $s=1,...,S$).
	\item[4.] Alice samples her source of randomness to decide whether the round will be a test round or a secret round.
	\item[4a.] If the round is a test round, Alice requests all players to communicate their secret keys, and she increases the occurrence count for joint output $(g_1^w,...,g_N^w)$ conditional to joint input $(m_1^w,...,m_N^w)$.
	\item[4b.] If the round is a secret round, Alice computes $g_{dealer}^w := \bigoplus_{j=N'+1}^N g_j^w$ and broadcasts the \textbf{round ciphertext} $c^w:= p^w \oplus g_{dealer}^w$ to the players, where the \textbf{round plaintext} $p^w$ is the next element of the plaintext to be sent. She also broadcasts the relevant value $s^w \in \{0,1,...,S\}$ obtained from the joint inputs $m_1^w,...,m_N^w$.
	\item[5.] Anyone in possession of $s^w$, the round ciphertext $c^w$, and all secret keys $g_1^w,...,g_{N'}^w$ can obtain the round plaintext $p^w$ by computing $p^w = (c^w\, \oplus g_1^w \oplus ... \oplus g_{N'}^w)\ominus a^{s^w}$, where $s^w$ is the value broadcast in Step 3. 
\end{enumerate}
The chosen generalised Mermin-type argument determines the following \textbf{promised conditional distribution} $\mathbb{P}_{promised}\big[\,\underline{g}\,\big\vert\, \underline{m}\,\big]$, the one which Alice and the players expect to observe (asymptotically) in a trusted noiseless implementation (we use the more compact notation $\underline{g} := (g_1,...,g_N)$ for the joint output and $\underline{m} := (m_1,...,m_N)$ for the joint input): 
\begin{equation}\label{Ppromised}
\mathbb{P}_{promised}\big[\,\underline{g}\,\big\vert\, \underline{m}\,\big] = 
\begin{cases}
	\frac{1}{|\classicalStates{\hbox{\input{symbols/DdotSym.tex}}\!\!}|^{N-1}} &\text{ if } g_1 \oplus ... \oplus g_N = \beta_{m_1} \oplus ... \oplus \beta_{m_N} \\
	0 &\text{ otherwise}
\end{cases}
\end{equation}
At the end of the protocol, Alice normalises her joint output counts for each joint input to obtain the \textbf{observed conditional distribution} $\mathbb{P}_{observed}\big[\,\underline{g}\,\big\vert\, \underline{m}\,\big]$ (which need not be no-signalling). She then computes the \textbf{noise parameter} $\epsilon$ as follows: 
\begin{equation}
\epsilon := 1 - |\classicalStates{\hbox{\input{symbols/DdotSym.tex}}\!\!}|^{N-1} \min \Big\{ \mathbb{P}_{observed}\big[\,\underline{g}\,\big\vert\, \underline{m}\,\big] \Big\vert g_1 \oplus ... \oplus g_N = \beta_{m_1} \oplus ... \oplus \beta_{m_N} \Big\} 
\end{equation}
The error parameter as defined above is the smallest $\epsilon \in [0,1]$ such that the observed conditional distribution can be decomposed as the following convex combination of  promised conditional distribution and some \textbf{noise conditional distribution} $\mathbb{P}_{noise}\big[\,\underline{g}\,\big\vert\, \underline{m}\,\big]$:
\begin{equation}
\mathbb{P}_{observed}\big[\,\underline{g}\,\big\vert\, \underline{m}\,\big] = (1 - \epsilon) \; \mathbb{P}_{promised}\big[\,\underline{g}\,\big\vert\, \underline{m}\,\big] + \epsilon \;\mathbb{P}_{noise}\big[\,\underline{g}\,\big\vert\, \underline{m}\,\big]
\end{equation} 
Before a run of the protocol begins, Alice sets a maximum $\epsilon_{max}$ that she is going to accept for the noise parameter. Alice chooses as low an $\epsilon_{max}$ as possible compatibly with the specifications of the device provider (and any other beliefs she might have) on the amount of noise she should expect from the devices and states in the absence of any tampering from Eve. At the end of the protocol run, Alice compares the noise parameter $\epsilon$ she computed with the maximum $\epsilon_{max}$ she decided to accept: if $\epsilon \leq \epsilon_{max}$, she declares the protocol run a success and broadcasts the ephemeral key she used to encode the secret into the plaintext; if $\epsilon > \epsilon_{max}$, she declares the protocol run a failure and she destroys the ephemeral key, rendering the secret unrecoverable even if the plaintext is at some point obtained by the players or by Eve.

The HBB quantum-classical secret sharing protocol comes with two security guarantees: (i) ignorance about any one secret key for a round denies knowledge about the plaintext for that round; (ii) successful, undetected eavesdropping has low probability. It can be shown \cite{Zamdzhiev2012} that in a noiseless and trusted implementation the first guarantee follows abstractly from strong complementarity of the Pauli $Z$ and $X$ observables, and the proof straightforwardly transfers to the strongly complementary pairs $(\hbox{\input{symbols/ZbwdotSym.tex}}\!\!,\hbox{\input{symbols/DdotSym.tex}}\!\!)$ appearing in our generalised protocol. Instead of treating eavesdropping directly, we will present a more general, device-independent proof of security, based solely on contextuality of the generalised Mermin-type argument used by the protocol. 

Works on device-independent security (such as \cite{Barrett2005,Vazirani2014} on quantum key distribution) usually posit Eve to be an adversary who can arbitrarily tamper with the shared state and measurement devices, and is only bound in her attempts by the physical theory under consideration\footnote{Eve is often assumed to be bound by the laws of quantum theory, but sometimes super-quantum attackers are also considered, bound only by causality and no-signalling.} and by the security conditions explicitly enforced by the protocol (including no-signalling). Examples of things that the Eve can to do include:
\begin{enumerate}[(i)]
\item the measurement outcomes broadcast at a test round can reveal to Eve information about measurement outcomes in previous secret rounds;
\item Eve can keep a subsystem of the shared state to herself, which she can optimally measure, once all inputs and test round outputs have been broadcast, to obtain information about the secret keys.
\end{enumerate}
Our choice of a device-independent setting comes from the more modest desire to show that the security guarantees follow from contextuality of the generalised Mermin-type argument, regardless of the specific implementation; as a consequence, we will be content with a more restricted model of attack. We assume that Alice and the players might be provided with noisy or imperfect states and devices, which might give Eve a variety of security loopholes to exploit. However, we assume that the device provider shows no malice: 
\begin{enumerate}[(i)]
\item the devices are memoryless and operate independently at each round;
\item the states used at different rounds are independent and identical; 
\item the states are not entangled with any additional system.
\end{enumerate}
However, Eve might possess classical information about the states which is unavailable to the players (such as information leaked through noise or side channels, information acquired via eavesdropping, etc).

\begin{figure}
	\input{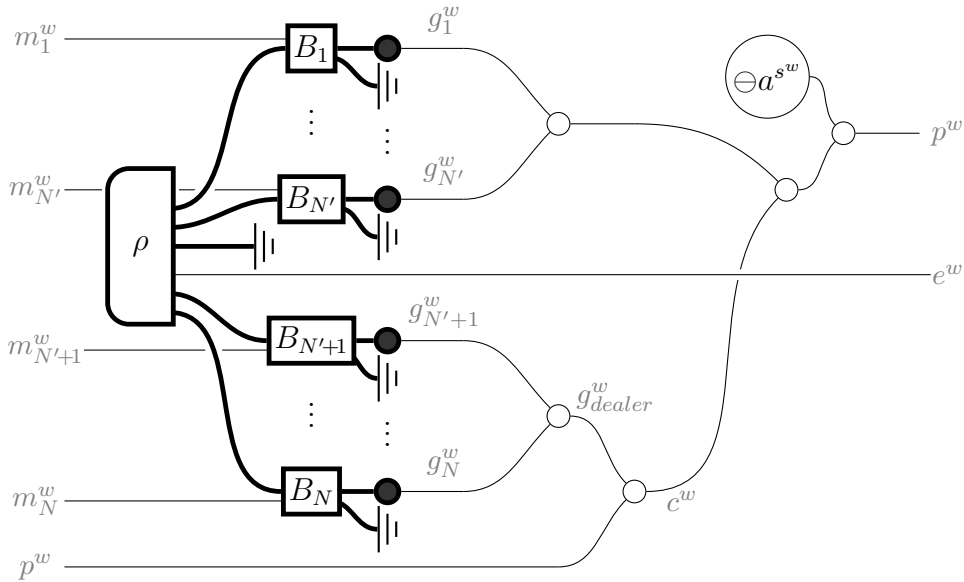}
	\caption{Graphical presentation of a generic, untrusted implementation at a single round of the protocol. Eve might have some classical information $e^w$ about the states which Alice and the players don't know. The classical side of the protocol is entirely in the hands of Alice and the players, and proceeds as in the trusted noiseless case.
	\label{fig_untrustedProtocol}}
\end{figure}

Although not fully general, this setup subsumes a variety of more specialised security scenarios that are of interest in classical and quantum cryptography:
\begin{enumerate}[(i)]
\item Real-world implementations are unavoidably noisy, and one should consider any noise as a potential source of cryptophthora\footnote{Secret degradation, usually due to side-channel leakage.}. Our setup allows for the possibility that both the shared state and the measurement devices be noisy, with no dependence on a specific model of noise; it also allows for the possibility that what looks like random noise to Alice and the players might actually carry side-channel information to Eve.
\item Eavesdropping detection is a typical desideratum in quantum cryptography, where Eve intercepts the local state of a player\footnote{In our secret sharing protocol, a single player's secret key is all that Eve needs to break confidentiality, as we may freely assume that the remaining players are colluding with Eve.}, measures it in some basis to obtain classical information, and forwards the resulting collapsed state to the player. Our setup allows for the possibility of eavesdropping\footnote{However, it does not cover a more advanced attack in which Eve sends through a subsystem of an entangled state, keeping the rest of the state to herself and measuring it in the future to obtain more information about the player's outcome.}: the classical information that Eve possesses about the state can be used to model the information she acquired by eavesdropping. Our security proof then has eavesdropping detection as a special case of protocol failure.
\end{enumerate} 

\noindent Figure \ref{fig_untrustedProtocol} displays a single round $w$ of the protocol in a generic, untrusted implementation. An $N$-partite state $\rho$ is shared between Alice and the players at a given round of the protocol, with no additional subsystem accessible to Eve (who might however be in possession of classical information $e^w$ about it). The measurement devices $B_1,...,B_N$ operate independently at each round, with no memory or shared resource other than the state $\rho$. At each round $w$, device $B_j$ takes measurement choice $m_j^w$ as a classical input and returns measurement outcome $g_j^w$ as a classical output. The rest of the protocol is entirely in the hands of Alice and the players, and proceeds as in the trusted noiseless case.

Our first result shows that lack of contextuality implies the existence of a scenario in which a perfect undetectable attack may take place. In fact, the scenario is not particularly remote: it might well happen happen that the device provider inadvertently chose phase states $\beta_1,...,\beta_M$ which happen to be $\hbox{\input{symbols/DdotSym.tex}}\!\!$-classical states (maybe she did not notice, maybe she was tricked by Eve into choosing them), and that the GHZ state decoheres (spontaneously or with a malicious helping hand) in the $\hbox{\input{symbols/DdotSym.tex}}\!\!$ observable. In that case, Alice and the players will notice nothing wrong with their protocol, and Eve will obtain the entirety of the secret all by herself.

\begin{theorem}[\textbf{Perfect undetectable attack}]\hfill\\
Consider a quantum-classical secret sharing protocol based on a generalised Mermin-type argument $(\hbox{\input{symbols/ZbwdotSym.tex}}\!\!,\hbox{\input{symbols/DdotSym.tex}}\!\!, \mathcal{S}, \beta, N)$, in a $R$-probabilistic CP* category with a positive semiring $R$ of scalars. If the associated empirical model is non-contextual, then there is a shared state $\rho$ and measurement devices $B_1,...,B_N$ such that test rounds will succeed with certainty, and Eve will always know all the secret keys.
\end{theorem}
\begin{proof}
By Theorem \ref{thm_contextuality}, if the empirical model is non-contextual then there exists a solution $(y_r := b_r)_{r=1}^M$ in $\classicalStates{\hbox{\input{symbols/DdotSym.tex}}\!\!}$ to the system $\mathcal{S}$ (which we take to be in the form of System \ref{eqn_system}). For each round $w$, Eve samples a random variable uniformly distributed over the following set:
\begin{equation}
\suchthat{(h_1^w,...,h_N^w) \in \classicalStates{\hbox{\input{symbols/DdotSym.tex}}\!\!}^N}{h_1^w \oplus ... \oplus h_N^w = 0}
\end{equation}
Now assume that the separable pure state $\ket{h_1^w} \otimes ... \otimes \ket{h_N^w}$ is given in input to the measurement devices $B_1,...,B_N$ at round $w$, and that the devices are designed so that $B_j$ returns $g_j^w := h_j^w \oplus b_{m_j^w}$ upon measurement choice $m_j^w$ (i.e. applies a phase $b_{m_j^w}$ which happens to be $\hbox{\input{symbols/DdotSym.tex}}\!\!$-classical). The state seen by Alice and the players is following round-independent mixed state $\rho$, but Eve at each round has additional information $e^w$ which helps her identify which pure component of $\rho$ will actually be sent to the parties at that specific round:
\begin{equation}
\rho := \sum_{h_1 \oplus ... \oplus h_N = 0} \frac{1}{|\classicalStates{\hbox{\input{symbols/DdotSym.tex}}\!\!}|^{N-1}} \ket{h_1}\bra{h_1} \otimes ... \otimes \ket{h_N}\bra{h_1}  
\end{equation}
Once the measurement $(m_j^w)_{j=1}^{N'}$ choices for the players are broadcast, Eve can compute all the secret keys $(g_j^w)_{j=1}^{N'}$. Furthermore, since $(b_r)_{r=1}^{M}$ is a solution to $\mathcal{S}$, the measurement outcomes obtained from this setup will have the same distribution as the ones from a noiseless trusted implementation, and all test rounds will succeed with certainty.
\end{proof}

Our second result is restricted to probabilistic theories, i.e. distributively $\CMonCategory$-enriched CPM categories having $\reals^{+}$ as their semiring of scalars. Consider the no-signalling polytope associated with the measurement scenario of a contextual generalised Mermin-type argument $(\hbox{\input{symbols/ZbwdotSym.tex}}\!\!,\hbox{\input{symbols/DdotSym.tex}}\!\!, \mathcal{S}, \beta, N)$, and let $F$ be the face of the polytope specified by the support of the empirical model (the one defined by Equation \ref{Ppromised}). For each vertex $v \in F$ of that face, corresponding to empirical model $\mathbb{P}_{v}\big[\,\underline{g}\,\big\vert\,\underline{m}\,\big]$, let $H_v$ be the average entropy across all measurement contexts:
\begin{equation}
H_v := \frac{1}{1+N \cdot S} \sum_{\underline{m} \in \mathcal{M}} H\Big[\, \mathbb{P}_{v}\big[\,\emptyArg\,\big\vert\,\underline{m}\,\big] \,\Big]
\end{equation}
Let $H_{promised}^{(min)} := \min_{v \in F} H_v$ be the minimum average entropy across all vertices of the face: because the generalised Mermin-type argument is strongly contextual, the face cannot contain any local vertices, and hence the minimum average entropy $H_{promised}^{(min)}$ is always strictly positive; a tighter estimation of this quantity is left to future work. Call $\eta := \Big(1-\frac{H_{promised}^{(min)}}{|\classicalStates{\hbox{\input{symbols/DdotSym.tex}}\!\!}|^{N-1}}\Big) \in [0,1)$ the \textbf{information leakage fraction} for the face: it is the maximum fraction of plaintexts that Eve can expect to decipher when the empirical model she sees lies on face $F$.

We will now show that protocols based on contextual generalised Mermin-type arguments always provide a certain amount of security: for observed noise parameter $\epsilon$ small enough, the maximum expected fraction of plaintexts that Eve can expect to decipher is sharply peaked somewhere between $\eta$ and $c \cdot \epsilon$, where $c$ is some constant depending on the geometry of the no-signalling polytope. In one extreme, we may have $\eta = 0$, i.e. all empirical model on the face carry the same maximal amount of entropy. In this case, Eve's chances of learning some parts of the secret rely entirely on the noise parameter $\epsilon$: in her best case scenario, she observes a deterministic empirical model for some fraction $\epsilon$ of rounds, in which case she can gain complete knowledge about the round plaintext. In the other extreme, we have $\eta \gg \epsilon$, i.e. there are empirical models on the face $F$ which might lead to more leakage of plaintext information than any number of deterministic model which might be lurking in the noise $\epsilon$. In this case, Eve's best bet might just be to exploit the empirical models on the face $F$ itself.

\begin{theorem}[\textbf{Device-independent security}]\hfill\\
\label{thm_HBBdisec}
Consider a quantum-classical secret sharing protocol based on a generalised Mermin-type argument $(\hbox{\input{symbols/ZbwdotSym.tex}}\!\!,\hbox{\input{symbols/DdotSym.tex}}\!\!, \mathcal{S}, \beta, N)$, in a probabilistic CP* category $\CPStarCategory{\CategoryC}$ (with $\reals^{+}$ as its positive semiring of scalars). Consider a run of the protocol with a large number $W$ of rounds, of which $P$ secret rounds and $T$ test rounds (with $P \rightarrow (1-\tau)W$ and $T \rightarrow \tau W$ almost certainly as $W \rightarrow \infty$). Let $\epsilon$ be the noise parameter observed by Alice at the end (a random variable), and let $P_{Eve}$ be maximum number of round plaintexts that Eve expects to successfully decipher (another random variable). Then the maximum fraction of plaintexts $P_{Eve}/P$ that Eve expects to successfully decipher is sharply peaked around some value between $\eta$ and $O(\epsilon)$, with variance bounded above by $O(\frac{\tau(1-\tau)}{W})$ almost certainly for $W \rightarrow \infty$ (where the big-$O$ notation hides a constant depending on the geometry of the polytope alone).
\end{theorem}
\begin{proof}
As part of this proof, a number of different conditional distributions will be considered: 
\begin{enumerate}[(i)]
\item the no-signalling conditional distribution $\mathbb{P}_{true}(e)\big[\,\underline{g}\,\big\vert\, \underline{m}\,\big]$ determined by $\rho$ and the devices $B_1,...,B_N$ conditional to Eve obtaining information $e$ (this is the conditional distribution as seen from Eve's vantage point);
\item the no-signalling conditional distribution $\mathbb{P}_{true}\big[\,\underline{g}\,\big\vert\, \underline{m}\,\big] := \sum_{e}\mathbb{P}[e] \cdot \mathbb{P}_{true}(e)\big[\,\underline{g}\,\big\vert\, \underline{m}\,\big]$ determined by $\rho$ and the devices $B_1,...,B_N$, averaged over Eve's information (this is the \textbf{true conditional distribution} as seen from Alice's vantage point, which her tests will estimate);
\item the no-signalling conditional distribution $\mathbb{P}_{promised}\big[\,\underline{g}\,\big\vert\, \underline{m}\,\big]$ derived from the generalised Mermin-type argument (this is what Alice would expect to estimate in the absence of any noise or tampering);
\item the conditional distribution $\mathbb{P}_{observed}\big[\,\underline{g}\,\big\vert\, \underline{m}\,\big]$ estimated by Alice.
\end{enumerate}
Alice's estimate of the true conditional distribution $\mathbb{P}_{true}\big[\,\underline{g}\,\big\vert\, \underline{m}\,\big]$ can be modelled by considering the vector-valued random variables $\underline{X}^w := \big(X_{(\underline{g},\underline{m})}^w\big)$ for all test rounds $w$, where $X_{(\underline{g},\underline{m})}^w$ is the real-valued random variable defined as follows (note that $\underline{g}^w$ is a random element of $\classicalStates{\hbox{\input{symbols/DdotSym.tex}}\!\!}^N$, and $\underline{m}^w$ is a uniformly random element of the set of $1+NS$ measurement contexts):
\begin{equation}
X_{(\underline{g},\underline{m})}^w = 
\begin{cases}
1 & \text{ if } \underline{g} = \underline{g}^w \text{ and } \underline{m} = \underline{m}^w\\
0 & \text{ otherwise}
\end{cases}
\end{equation} 
The vector $\underline{X}^w$ takes the value $1$ over the joint input/joint output pair  recorded by Alice for round $w$, and $0$ everywhere else: Alice's estimate of the true conditional distribution is then obtained from the average random variable $\frac{1}{T}\!\!\sum\limits_{w \text{ test}}\!\! \underline{X}^w$. By the central limit theorem, Alice's estimate $\mathbb{P}_{observed}\big[\,\underline{g}\,\big\vert\, \underline{m}\,\big]$ will be normally distributed around the true conditional distribution, with variance $O(\frac{1}{T})$; because the noise parameter $\epsilon$ observed by Alice is obtained from this estimate, it will similarly be distributed around the true noise parameter $\epsilon_{true}$ defined below, with variance bounded above by $O(\frac{1}{T})$ (almost certainly for $T \rightarrow \infty$).

We define the \textbf{true noise parameter} $\epsilon_{true}$ to be obtained from the conditional distribution $\mathbb{P}_{true}\big[\,\underline{g}\,\big\vert\, \underline{m}\,\big]$ in the same way that $\epsilon$ is obtained from the conditional distribution $\mathbb{P}_{observed}\big[\,\underline{g}\,\big\vert\, \underline{m}\,\big]$. This means $\epsilon_{true}$ is the largest such that $\mathbb{P}_{true}\big[\,\underline{g}\,\big\vert\, \underline{m}\,\big]$ decomposes as follows, for some conditional distribution $\mathbb{P}_{true,noise}\big[\,\underline{g}\,\big\vert\, \underline{m}\,\big]$:
\begin{equation}
(1-\epsilon_{true}) \, \mathbb{P}_{promised}\big[\,\underline{g}\,\big\vert\, \underline{m}\,\big] + (\epsilon_{true}) \mathbb{P}_{true,noise}\big[\,\underline{g}\,\big\vert\, \underline{m}\,\big]
\end{equation}
For each value $e \in E$ that Eve's information can take, we define the parameter $\xi(e) \in [0,1]$ to be the smallest possible such that the conditional distribution $\mathbb{P}_{true}(e)\big[\,\underline{g}\,\big\vert\, \underline{m}\,\big]$ decomposes as follows:
\begin{equation}
(1-\xi(e))\mathbb{P}_{F}(e)\big[\,\underline{g}\,\big\vert\, \underline{m}\,\big] + \xi(e) \mathbb{P}_{F,noise}(e)\big[\,\underline{g}\,\big\vert\, \underline{m}\,\big]
\end{equation}
for some distribution $\mathbb{P}_{F}(e)\big[\,\underline{g}\,\big\vert\, \underline{m}\,\big]$ lying on the face $F$ and some distribution $\mathbb{P}_{F,noise}(e)\big[\,\underline{g}\,\big\vert\, \underline{m}\,\big]$ lying outside of face $F$. To Eve, in possession of information $e$, the conditional distribution $\mathbb{P}_{true}(e)\big[\,\underline{g}\,\big\vert\, \underline{m}\,\big]$ looks like a biased coin deciding between the two following scenarios:
\begin{enumerate}
\item[(a)] with probability $(1-\xi(e))$, she observes a distribution $\mathbb{P}_{F}(e)\big[\,\underline{g}\,\big\vert\, \underline{m}\,\big]$ lying on face $F$, which means that the fraction of the round plaintext that she expects to learn is bounded above by $\eta$;  
\item[(b)] with probability $\xi(e)$, she observes some other distribution $\mathbb{P}_{F,noise}(e)\big[\,\underline{g}\,\big\vert\, \underline{m}\,\big]$, which in the best case scenario could give her full knowledge of the round plaintext.
\end{enumerate}
Because marginalising over Eve's knowledge\footnote{I.e. taking the convex combination of the conditional distributions $\mathbb{P}_{true}(e)\big[\,\underline{g}\,\big\vert\, \underline{m}\,\big]$ with respect to the probability distribution $\mathbb{P}[e]$ of Eve's side-channel information.} must result in the distribution $\mathbb{P}_{true}\big[\,\underline{g}\,\big\vert\, \underline{m}\,\big]$, the geometry of the polytope implies that the convex combination $\sum_e \mathbb{P}[e] \xi(e)$ must go to zero as $O(\epsilon_{true})$ (i.e. there must be some constant $c > 0$ such that $\sum_e \mathbb{P}[e] \xi(e) \leq c \cdot \epsilon_{true}$).

It should be noted that the information $e$ obtained by Eve is random to Eve herself: sometimes she will obtain information giving her better guessing probability, sometimes she will obtain information giving her worse guessing probability.  When the distribution of $e$ is taken into account, the fraction of round plaintexts that Eve can expect to decipher is bounded above by the following value, falling somewhere between $\eta$ and $O(\epsilon_{true})$:
\begin{equation}
\sum_{e} \mathbb{P}[e]\,\Big((1-\xi(e)) \eta + \xi(e)\Big) 
\end{equation}
Again by central limit theorem, the maximum fraction $P_{Eve}/P$ of round plaintexts that Eve expects to successfully decipher is normally distributed around the value above, with variance $O(\frac{1}{P})$ (almost certainly for $P \rightarrow \infty$).

Finally, because $P_{Eve}/P$ is sharply peaked around some value between $\eta$ and $O(\epsilon_{true})$, with variance $O(\frac{1}{P})$, and because $\epsilon$ is sharply peaked around $\epsilon_{true}$, with variance bounded above by $O(\frac{1}{T})$, we can conclude that $P_{Eve}/P$ is sharply peaked around some value between $\eta$ and $O(\epsilon)$ , with variance bounded above by $O(\frac{1}{T}+\frac{1}{P})$ (which tends to $O(\frac{1}{\tau (1-\tau) W})$ almost certainly as $W \rightarrow \infty$).
\end{proof}







\chapter*{Conclusions and future work}  
\label{section_conclusions}

\section*{Categorical Quantum Dynamics}

Throughout Chapter \ref{chapter_CQD}, we have seen how strong complementarity can be used to provide a compelling abstract description of the fundamental structural and operational features of quantum symmetries and dynamics. 

We have started our journey from the familiar case of wavefunctions on periodic lattices, where we have identified the potential for strong complementarity to provide an abstract description of the relationship between the position and momentum observables. In line with our proposed coherent approach to group theory and quantum symmetries, we have defined a new notion of quantum group. Having proven a minimal set of result relating quantum groups to their classical counterparts, we have gone back to wavefunctions on periodic lattices, and we have embarked on a quest to prove that the strongly complementary observables of a quantum group truly model a sensible notion of position-momentum duality; we have shown that momentum eigenstates generate the translation symmetry, and dually that position eigenstates generate the boost symmetry; we have shown that the bialgebra law yields the Weyl form of the Canonical Commutation Relations; we have shown that putative position-momentum pair satisfies a suitably weak version of the uncertainty principle. Although narrated through the lens of periodic lattices, the results we obtained are fully general, and apply to all quantum groups.

Satisfied with our description of quantum groups as position-momentum pairs, we have shifted our attention towards more general symmetric systems. We have defined a notion of unitary representations for quantum groups, as the coherent counterparts of unitary symmetries for classical groups. Just like a classical group can be though of as a physical system exerting classical control over the symmetric system, a quantum group can be though of as a physical system exerting coherent control. We have characterised representations of quantum groups categorically as the algebras in the Eilenberg-Moore category for a certain monad, with equivariant maps as Eilenberg-Moore morphisms. We have extended our results on symmetry-observable duality to unitary representations, and we have provided a suitable reformulation of Stone's Theorem to match them.

In order to treat the textbook case of 1-dimensional wavefunctions with periodic boundary conditions, we have introduced a new approach to infinite-dimensional separable Hilbert spaces based on non-standard analysis. Contrary to previous approaches, the category $\starHilbCategory$ of separable Hilbert spaces we introduced is compact closed, and has unital $\dagger$-Frobenius algebras. We have then proceeded to construct a doubly well-pointed quantum group corresponding to the position-momentum pair for 1-dimensional wavefunctions with periodic boundary conditions. We have also remarked that our methods extend to all compact and discrete abelian groups.

In the final section of the Chapter, we have applied the tools developed in the remainder of the chapter to the coherent treatment of quantum dynamics. Armed with all the necessary results, we have quickly ploughed through quantum clocks and dynamical systems, we have identified a suitable coherent Hamiltonian, and we have shown that Schr\"{o}dinger's Equation corresponds exactly to the defining equation for Eilenberg-Moore algebras. Using our previous results on symmetry-observable duality, we have provided simple diagrammatic proof for Stone's Theorem on 1-parameter unitary groups and von Neumann's Mean Ergodic Theorem, in the case of discrete periodic, discrete and continuous periodic dynamics. We have provided an abstract characterisation of the Feynman clock construction, and proven its validity in our framework (for arbitrary quantum groups). Finally, we have tackled the issue of synchronisation of dynamical systems, provided conditions for the existence of internal time observables, and proven sufficient conditions for the emergence of quantum clocks amongst synchronised systems. 

Chapter \ref{chapter_CQD} sure contains a lot of material, but a lot of work remains to be done. To begin with, we don't have a satisfactory characterisation of non-well-pointed quantum groups in $\fdHilbCategory$, other than \inlineQuote{they sort of look like other definitions of quantum groups}. A structural theorem, akin to the one for well-pointed quantum groups, would make for a rounder picture, especially in connection with non-commutative geometry.

As far as the characterisation of quantum groups as position-momentum pairs is concerned, the desirable results are all there, with the possible exception of the uncertainty principle. While it is true that the full uncertainty principle is undesirably strong, the version we have proven might be seen as excessively weak, and a middle ground could perhaps be reached. 

The state of symmetry-observable duality for general symmetric systems is also pretty satisfactory, but their categorical characterisation as Eilenberg-Moore algebras is open territory. Some additional results on the monadic approach to dynamics has been obtained in \cite{Gogioso2015c}, but have not yet been adapted to the quantum group framework presented in this work. 

Infinite-dimensional categorical quantum mechanics is perhaps the youngest addition here, and certainly requires more work and thought. While the techniques we exemplified extend straightforwardly to other compact and discrete abelian groups, it would greatly benefit the have a number of other examples of interest fully worked out. Extensions of the framework to locally compact symmetries and quantum field theory are currently in the making.

Finally, three main avenues of research are currently open in the applications to dynamics. Firstly, one would like to extend the results to the real-world case of continuous dynamics, governed by the symmetry group $(\reals,+,0)$. The main challenge, the derivation of a suitable coherent group, has already been solved in recent work, so this is mostly a matter of adapting the results where necessary, and reap the rewards. Secondly, our results on internal time observable have already answered some questions in the context of time observables in quantum theory, and we expect that techniques and ideas derived from them will provide a significant contribution to the debate in the near future. Thirdly, the very last results in the chapter point towards the possibility of formulating a toy model for emergent time in quantum theory based solely on hierarchies of mutually synchronised discrete periodic quantum clocks: a brief argument in favour of this construction has already been sketched, but the full development of such a model is left to future work.

\section*{Hidden Subgroup Problem}

The abelian Hidden Subgroup Problem comprises many of the problems successfully tackled by quantum algorithms as special instances, but the traditional presentation of the quantum solution is too heavily algebraic to clearly show the key structures at work. In Chapter \ref{chapter_algos}, we improved upon previous work by presenting the first fully graphical proof of correctness for the algorithm, proving that strong complementarity is the key algebraic feature behind the quantum advantage in the abelian HSP.

We have remarked that our diagrammatic treatment naturally extends to the non-abelian case, and that the known intractability of the problem is more a matter of classical post-processing than an issue with the quantum part itself. We have also remarked that our approach immediately transfers to other theories possessing the required algebraic structures, and as a corollary of our work we have shown that Simon's Problem can be efficiently solved in Real Quantum Theory.

A number of questions remain open. Firstly, the group theoretic nature of the Hidden Subgroup Problem begs the question of whether strong complementarity is somehow also a necessary condition for the implementation of a suitable quantum subroutine. Secondly, it would be interesting to look at concrete implementations of our results in other theories, such as Fermionic Quantum Theory or Spekkens' Toy Model. Finally, the relationship between strong complementarity and the quantum Fourier transform prompts further investigation of the role that these algebraic structures might be playing in a number of other quantum algorithms and protocols.

One might think that a similar physical setup, with position and momentum swapped, could be used to tackle the $G = T^N$ case. However, the annihilators $\Annihil{H} \leq \integers^N$ are all infinite sub-lattices of $\integers^N$, and the classical post-processing is left with the daunting task of reconstructing one such lattice in polynomial time from polynomially many random samples. This seems to be sufficiently close to the Shortest Independent Vectors Problem---a known hard lattice problem \cite{Blomer1999}, related to other quantum-resistant lattice problems \cite{Regev2004,Regev2004a}---to suggest that solving the HSP for compact Lie subgroups of $\torusGroup{N}$ might be beyond current quantum approaches; however, a thorough investigation of this issue is left to future work.

Another research direction for the infinite abelian HSP using non-standard methods lies in its application to infinite-dimensional hyperbolic quantum theory, a non-standard model of which can be easily constructed on the same lines of $\starHilbCategory$. We remarked that hyperbolic quantum theory does not have admit enough multiplicative characters for finite abelian groups other than $\integersMod{2}^M$. However, it does admit enough multiplicative characters for the infinite abelian groups $\integers^N$, and this indicates that there could be a fully local toy model of infinite-dimensional (separable) quantum theory in which the HSP for $\integers^N$ can be efficiently solved without the requirement of non-locality. However, reasoning about non-locality in the infinite-dimensional setting is likely to be trickier than it might seem at first glance, and further pursuit of this observation is left to future work.

One might think that a physical setup similar to the one used for $G = \integers^N$, but with position and momentum swapped, could be used to tackle the $G = \torusGroup{N}$ case. However, the annihilators $\Annihil{H} \leq \integers^N$ are all infinite sub-lattices of $\integers^N$, and the classical post-processing is left with the daunting task of reconstructing one such lattice in polynomial time from polynomially many random samples. This seems to be sufficiently close to the Shortest Independent Vectors Problem---a known hard lattice problem \cite{Blomer1999}, related to other quantum-resistant lattice problems \cite{Regev2004,Regev2004a}---to suggest that solving the HSP for compact Lie subgroups of $\torusGroup{N}$ might be beyond current quantum approaches; however, a thorough investigation of this issue is left to future work.

\section*{Generalised Mermin-type non-locality}

Using phase groups and strongly complementary observables, we have fully generalised Mermin-type non-locality arguments in Chapter \ref{chapter_algos}, and we have provided the exact group-theoretic conditions required for non-locality to arise. Our results complete the line of enquiry on the connection between phase groups and non-locality started in \cite{Coecke2010a,Coecke2012c}. We have furthermore shown that all our generalised arguments can be realised in quantum mechanics, using GHZ states and appropriate phase gates.  

We have then proceeded to investigate the empirical models arising from our generalised arguments, using the sheaf-theoretic framework for non-locality and contextuality. We have shown the models to provide new instances of All-vs-Nothing arguments, and in particular to be strongly contextual. As a consequence, we have shown that the hierarchy of quantum-realisable All-vs-Nothing arguments over finite fields does not collapse.

Finally, our generalisations lead us to an extension of the quantum-classical secret sharing scheme of Hillery, Bu\v{z}ek and Berthiaume, which was originally based on Mermin's non-locality argument for qubit GHZ states. Using our results on strong contextuality, we have been able to provide device-independent security guarantees for our generalised protocol (and for the original HBB scheme as a special case).

A number of questions are left open to future investigation. Firstly, our generalised arguments are formulated for finite abelian groups, encoded by an orthonormal basis of unbiased states: an extension to arbitrary finite groups will be of interest, and more general subsets of the phase group could be considered. 

Secondly, we have restricted ourselves to the case in which one structure is commutative and has enough points. Treatment of the more general case, where both structures are allowed to be possibly non-commutative, would extend our result from traditional groups to certain quantum groups.

Thirdly, we have shown that our generalised Mermin-type arguments are All-vs-Nothing, but the converse is not true in general. It would be interesting to investigate which modifications would be necessary to extend our techniques to other families of All-vs-Nothing arguments.

Finally, the model of attack we used to provide device-independent security guarantees is somewhat more restricted than the gold standard employed in device-independent quantum cryptography. A more complete proof of security should be a priority for future developments.





\chapter*{Acknowledgements} 

The writing of this Thesis was made possible by the loving, unconditional support of my wife Sukrita, my mum and dad, my little brother Niccol\`{o} and my best friend Nicol\`{o}. The content and presentation have greatly benefited from my many conversation with Bob Coecke, Aleks Kissinger, Dominic Horsman, Fabrizio Romano Genovese, Carlo Maria Scandolo, William Zeng, Ross Duncan, Chris Heunen, Dan Marsden, Amar Hadzihasanovic, David Reutter, Samson Abramsky and Paolo Perinotti. The very possibility of pursuing this line of research was entirely thanks to Bob Coecke---who took me in despite my loud, braggadocious Italian character---and to the generous funding provided by EPSRC and Trinity College's Williams Scholarship.


\newpage

\addcontentsline{toc}{chapter}{Bibliography}

\bibliographystyle{alpha}
\newcommand{\etalchar}[1]{$^{#1}$}

\end{document}